\documentclass[10pt,a4paper]{article}
\usepackage[utf8]{inputenc}
\usepackage{amsmath, amsthm,mathrsfs}
\usepackage{amsfonts}
\usepackage{amssymb}
\usepackage[normalem]{ulem}
\usepackage{xcolor}

\usepackage{mathtools}
\mathtoolsset{showonlyrefs}
\newcommand{\Par}{\lambda}
\newcommand{\zt}{\zeta_{2,B}(\omega)}
\newcommand{\kk}{\pi \tilde{k}}
\newcommand{\omegac}{\breve{\omega}}
\newcommand{\Vs}{V^{*}}
\newcommand{\MM}{\mathcal{M}}
\newcommand{\kkp}{\pi \tilde{k}-\phi_3(\omega)}
\newcommand{\kkq}{\pi \breve{k}}
\usepackage{bbm}
\newcommand{\yy}{Y}
\usepackage{hyperref}
\newcommand{\K}{\kappa_{+}}
\newcommand{\AC}{\mathcal{A}}

\newcommand{\DD}{\mathbf{e}}

\newcommand{\xx}{X}

\newcommand{\p}{\partial}
\newcommand{\ep}{\epsilon}
\newcommand{\VV}{V}

\newcommand{\LL}{\mathbb{L}}

\newcommand{\uAi}{\tilde{K}_{-}}

\newcommand{\uBi}{\tilde{K}_+}
\newcommand{\uAii}{u_{Ai_2}}
\newcommand{\uBii}{u_{Bi_2}}

\newtheorem{theo}{Theorem}

\usepackage{slashed}
\usepackage[left=2cm,right=2cm,top=2cm,bottom=2cm]{geometry}

\usepackage{authblk}
\usepackage{enumerate}

\usepackage{enumerate}

\theoremstyle{plain}
\newtheorem{thm}{Theorem}[section]

\newtheorem{lemma}[thm]{Lemma}

\newtheorem{prop}[thm]{Proposition}

\newtheorem{cor}[thm]{Corollary}

\newtheorem{conjecture}[thm]{Conjecture}

\theoremstyle{remark}
\newtheorem{rmk}{Remark}[section]

\theoremstyle{definition}

\newtheorem{defn}{Definition}[section]

\newcommand{\R}{\mathbb{R}}

\newcommand{\rd}{\partial}
\newcommand{\ls}{\lesssim}

\theoremstyle{plain}

\theoremstyle{remark}

\theoremstyle{definition}

\usepackage{bm}

\usepackage{color}

\usepackage{float}

\addtocounter{tocdepth}{-2}
\usepackage{graphicx}
\usepackage{authblk}
 
\numberwithin{equation}{section}

\begin{document}
	
	\title{Polynomial time decay for 
		solutions of the Klein--Gordon equation\\ on a subextremal Reissner--Nordstr\"{o}m black hole}

	\author[1]{Yakov Shlapentokh-Rothman}
	\author[2]{Maxime Van de Moortel}

	\affil[1]{\small  Department of Mathematics, University of Toronto,  40 St.~George~Street, Toronto, ON, Canada} 
	\affil[1]{\small Department of Mathematical and Computational Sciences, 
		University of Toronto Mississauga, 3359 Mississauga Road, Mississauga, ON, Canada}
	\affil[2]{\small Department of Mathematics, Rutgers University, Hill Center, 110 Frelinghuysen Road, Piscataway, NJ, USA}

	\maketitle
	
	\abstract{We consider the massive scalar field equation $\Box_{g_{RN}} \phi = m^2 \phi$ on any subextremal Reissner--Nordstr\"{o}m exterior metric $g_{RN}$. 
		We prove that solutions  with  localized initial data  decay pointwise-in-time  at the polynomial rate  $t^{-\frac{5}{6}+\delta}$ in any spatially compact region (including the event horizon), for some small  $
		\delta\leq \frac{1}{23} $. \\ Moreover, assuming the validity of the  Exponent Pair Conjecture on exponential  sums in Number Theory,  our result implies that  decay upper bounds  hold at the rate $t^{-\frac{5}{6}+\epsilon}$, for any arbitrarily small $\epsilon>0$.

		In our previous work, we  proved that  each fixed angular mode  decays at the exact rate $t^{-\frac{5}{6}}$, thus the upper bound $t^{-\frac{5}{6}+\epsilon}$ is sharp, up to a $t^{\ep}$ loss. Without  the restriction to a fixed angular mode,  	 the solution turns out to have an unbounded Fourier transform  due to discrete  frequencies associated to quasimodes, and caused by the occurrence of stable timelike trapping. Our analysis nonetheless  shows that   inverse-polynomial  asymptotics in $t$ still hold after  summing over all angular modes.

		\section{Introduction}
		
		We are interested in the large time asymptotic behavior of  the massive Klein--Gordon equation on the exterior region of a Schwarzschild or a Reissner--Nordstr\"{o}m  black hole in the full sub-extremal range:  \begin{equation}\label{KG.intro}\begin{split}
				&\Box_{g_{RN}} \phi = m^2 \phi, \\ & g_{RN}=-\Big(1-\frac{2M}{r}+\frac{\DD^2}{r^2}\Big) dt^2 + \Big(1-\frac{2M}{r}+\frac{\DD^2}{r^2}\Big)^{-1} dr^2 + r^2\left( d\theta^2+ \sin^2(\theta) d\varphi^2\right),
			\end{split}
		\end{equation} where $0\leq |\DD|< M$ (sub-extremality condition) and $r\in[ r_+(M,\DD),+\infty)$, where $r=r_+(M,\DD)>0$ is the largest root of $\Big(1-\frac{2M}{r}+\frac{\DD^2}{r^2}\Big)$  corresponding to the event horizon of the Reissner--Nordstr\"{o}m black hole. The Klein--Gordon equation \eqref{KG.intro} describes the evolution of a massive scalar field on a curved spacetime,  arguably  
		the simplest model for  massive matter in the relativistic setting. In the black hole case, the late-time asymptotics  of \eqref{KG.intro}  have been the subject  of many heuristic and numerical works in the  Physics literature \cite{massive1,massive2,Konoplya.Zhidenko.num,KonoplyaZhidenko,Dilaton1,Dilaton2,BurkoKhanna,Burko,HodPiran.mass,massive0}. However, the problem of decay-in-time for solutions of \eqref{KG.intro} has remained open.
		
		\paragraph{Decay and late-time asymptotics for fixed angular mode} 	In our previous work \cite{KGSchw1}, we proved that solutions $\phi$ of \eqref{KG.intro} (with sufficiently regular and localized data)  which  are supported on a fixed spherical harmonic $Y_L(\theta,\varphi)$ decay at the sharp rate $(t^{*})^{-\frac{5}{6}}$ as $t^{*}\rightarrow+\infty$ on any compact $r$-region, where $t^{*}(t,r)$ is a horizon-penetrating time coordinate (comparable to $t$ for any fixed $r>r_+(M,e)$, but also capturing advanced-time on the event horizon $\mathcal{H}^+=\{r=r_+(M,e)\}$, see \cite{KGSchw1}, Remark 1.1 and Section~\ref{Setup.section}). \begin{thm}[Late-time asymptotics for fixed angular momentum $L$, \cite{KGSchw1}]\label{thm.old}
			For any $L\in \mathbb{N}\cup\{0\}$ and smooth compactly supported  $(\phi_0,\phi_1)$, there exists   an explicit profile $\mathcal{F}_L[\phi_0,\phi_1](t^{*},r)$  such that the solution $\phi_L= r^{-1}\psi_{L}(t^{*},r) Y_{L}(\theta,\varphi)$ of \eqref{KG.intro}  with  initial data $(\phi_0,\phi_1)$ at $\{t^{*}=0\}$ satisfies for all $r_+\leq r \leq R_0$, $t^{*}\geq 1$:
			\begin{align}\label{decay.intro.old}\begin{split}
					&\psi_{L}(t^{*},r)= \mathcal{F}_L[\phi_0,\phi_1](t^{*},r) \cdot (t^{*})^{-\frac{5}{6}}+ O_{R_0,L,\phi_0,\phi_1}(  \left(t^{*}\right)^{-1+\ep}) 
					\text{ for any } \ep>0,\\ &  \sup_{t^{*} \geq 0} \bigl|\mathcal{F}_L[\phi_0,\phi_1]\bigr|(t^{*},r) <\infty \text{ for all } r \geq r_+. \end{split}
			\end{align} 
		\end{thm}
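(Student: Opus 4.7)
The plan is to carry out a Fourier--resolvent analysis of the reduced radial problem for a fixed angular mode, and to extract the $(t^{*})^{-5/6}$ asymptotics from a careful expansion at the threshold frequencies $\omega = \pm m$.

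First, I would separate variables by writing $\phi_L = r^{-1}\psi_L(t,r)\,Y_L(\theta,\varphi)$ and pass to the tortoise coordinate $r^{*}$ defined by $\frac{dr^{*}}{dr} = (1 - 2M/r + \DD^2/r^2)^{-1}$. Equation \eqref{KG.intro} then reduces to the one-dimensional wave-type equation
\[
\partial_t^2 \psi_L - \partial_{r^{*}}^2 \psi_L + V_L(r)\,\psi_L = 0,
\]
whose effective potential $V_L$ vanishes exponentially as $r^{*}\to-\infty$ (the event horizon) and satisfies $V_L(r) = m^2 - \frac{2Mm^2}{r} + O(r^{-2})$ as $r\to\infty$. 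The $-2Mm^2/r$ tail is a long-range Coulomb-type correction and is what drives the anomalous decay rate.

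Second, I would Fourier-transform in $t$ and construct the Green's function of $-\partial_{r^{*}}^2 + V_L - \omega^2$ from two Jost solutions: $u_H(\omega,r)$, determined by being purely outgoing at the horizon, and $u_\infty^{+}(\omega,r)$, matched at infinity to the Coulomb-modified outgoing profile $e^{ikr^{*} + i\kappa(\omega)\log(2kr)}$, with $k(\omega) := \sqrt{\omega^2 - m^2}$ and Coulomb parameter $\kappa(\omega) := -Mm^2/k(\omega)$. Realising $u_\infty^{+}$ explicitly via Whittaker functions then provides uniform control as $\omega$ approaches the threshold. The solution $\psi_L(t^{*},r)$ is represented as a Fourier integral in $\omega$ whose kernel is built from $u_H$, $u_\infty^{+}$ and their Wronskian, the passage to the horizon-penetrating coordinate $t^{*}$ being absorbed in a regular gauge transformation of $u_H$ across $\mathcal{H}^{+}$.

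Third, the late-time behaviour is extracted by deforming the Fourier contour to the thresholds $\omega = \pm m$, where the oscillatory character of $u_\infty^{+}$ degenerates. The scattering denominator $W[u_H, u_\infty^{+}]$ and the numerator of the Green's function must be expanded in $k$, retaining both the algebraic branch $(\omega \mp m)^{1/2}$ and the Coulomb log-phase $e^{i\kappa\log(2kr)}$. A stationary-phase analysis of the resulting integral then reveals the anomalous rate: because $\partial_\omega[\kappa\log k]$ blows up like $k^{-3}$ near threshold, the critical point of the total phase is pushed to the scale $\omega - m \sim (t^{*})^{-2/3}$ rather than the Minkowski scale $(t^{*})^{-1}$, at which the second derivative of the phase is of size $(t^{*})^{5/3}$; standard stationary phase then produces exactly $(t^{*})^{-5/6}$, and the profile $\mathcal{F}_L[\phi_0,\phi_1]$ emerges as the contraction of the Jost representation of the initial data against the critical-point amplitude. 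The error $O((t^{*})^{-1+\ep})$ is controlled by the next-order terms in the Whittaker expansion together with away-from-threshold stationary-phase bounds obtained by integration-by-parts in $\omega$.

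The main obstacle I expect is the third step: the Coulomb tail forces uniform control of the Jost solutions \emph{across} the threshold $\omega = \pm m$ in a complex neighbourhood, a precision beyond the reach of standard WKB or comparison arguments. Routing through the explicit Whittaker/confluent-hypergeometric representations and their connection formulae appears to be the only practical path to the required estimates, and pinning down the precise form of $\mathcal{F}_L$ requires carrying this expansion through to the first nontrivial subleading order.
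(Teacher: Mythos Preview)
This theorem is cited from \cite{KGSchw1} and not proved in the present paper, so there is no full proof here to compare against; however, the introduction and Sections~\ref{regimeB2.section}--\ref{fourier.section} recapitulate the mechanism in enough detail to assess your proposal.

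Your architecture (Fourier--resolvent representation via Jost solutions, threshold analysis at $\omega=\pm m$, stationary phase) matches \cite{KGSchw1}, and your identification of the scale $m-\omega\sim (t^*)^{-2/3}$ with phase Hessian of size $(t^*)^{5/3}$ giving $(t^*)^{-5/6}$ is correct. But the mechanism is misplaced in a way that would prevent you from recovering the profile $\mathcal F_L$. You locate the relevant singularity in the Coulomb log-phase $\kappa\log(2kr)$ of the \emph{outgoing} Jost solution and expect $\mathcal F_L$ to be a single critical-point amplitude. In fact the $\omega>m$ contribution is $O((t^*)^{-1})$; the leading $(t^*)^{-5/6}$ comes entirely from $0<m-\omega\ll 1$, where the resolvent denominator takes the form $W[u_H,u_I]\propto 1+\Gamma(\omega)\, e^{2i\pi\tilde k}$ with $\tilde k\sim Mm^2(m^2-\omega^2)^{-1/2}$ and $|\Gamma|<1$ fixed (for fixed $L$). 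This oscillating denominator has near-zeros at every integer $\tilde k$ (the hydrogen-like quasi-bound states, pushed off the real axis by the horizon); its geometric expansion $\sum_q(-\Gamma)^q e^{2i\pi\tilde kq}$ turns the inverse Fourier transform into an \emph{infinite sum} of stationary-phase integrals indexed by $q$, and the profile $\mathcal F_L$ is precisely the resulting oscillating sum \eqref{first.osc.sum}, not a single amplitude. In your Whittaker language this structure resides in the Gamma-function poles of the connection formula between $W_{\kappa,L+1/2}$ and $M_{\pm\kappa,L+1/2}$ at $\kappa=L+1,L+2,\ldots$, not in the logarithmic phase of the outgoing solution. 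Your outline as written would recover the exponent $-5/6$ but not the bounded, oscillating, non-decaying profile.
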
  In what follows, we will generalize Theorem~\ref{thm.old} to solutions of \eqref{KG.intro} that are not supported on a fixed spherical harmonic $Y_L$: In particular, we will examine whether \eqref{decay.intro.old} also holds uniformly in $L$.
		We actually proved in \cite{KGSchw1} that the explicit profile $\mathcal{F}_L[\phi_0,\phi_1](t^{*},r)$ is proportional to the following exponential sum: \begin{equation}\label{first.osc.sum}\sum_{q=1}^{+\infty} (-\gamma_m)^{q-1}q^{\frac{1}{3}}\cos(mt^{*}  -\frac{3}{2} [2\pi M]^{\frac{2}{3}} m q^{\frac{2}{3}} (t^{*})^{\frac{1}{3}}     +O_{q}((t^{*})^{-\frac{1}{3}})+ \varphi(r)),\end{equation} with $0<|\gamma_m|(M,e,m^2,L)<1$. The specific dependence in $L$ of exponential sums of the form \eqref{first.osc.sum}  will be crucial in proving quantitative decay in time \emph{uniformly} in $L$,
		as we will explain in the paragraphs below.
		
		\paragraph{Decay without a rate for general localized solutions   and absence of bound states}
		From Theorem~\ref{thm.old}, one can also softly infer \emph{decay-without-a-rate for general solutions} of \eqref{KG.intro} which are not necessarily supported on a fixed spherical harmonic, i.e.\ $\underset{t^{*} \rightarrow +\infty}{\lim} \phi(t^{*},r,\theta,\varphi)=0$, for any $r\geq r_+(M,e)$. In particular, Theorem~\ref{thm.old} shows that (localized, regular) \emph{bound states solutions of \eqref{KG.intro} do not exist}. Such a statement may come as a surprise, since \eqref{KG.intro}  has an effective \emph{attractive long-range} (Coulomb-like) potential\footnote{In contrast, for the massless wave equation (\eqref{KG.intro} with $m^2=0$), the effective potential $V(r)=\frac{L(L+1)}{r^2} + O(r^{-3})$ is a short-range perturbation of the  wave equation on Minkowski spacetime.}
		$V(r)$, similar to the well-known hydrogen atom: \begin{equation} \label{V.intro}
			V(r)=\Big(1-\frac{2M}{r}+\frac{\DD^2}{r^2}\Big)\Big(m^2 + \frac{L (L+1)}{r^2} +\frac{2rM -2\DD^2}{r^4}\Big)= m^2 -\frac{2Mm^2}{r} +...
		\end{equation}  Equations $\Box \phi= V(r) \phi$ on flat Minkowski space are known to admit bound states \cite{Ethan2}: however, in our case \eqref{KG.intro}, the presence of a coercive energy at the event horizon $\mathcal{H}^+=\{r=r_+(M,e)\}$ is crucial in proving that \emph{bound states do not exist}, allowing for time decay (see \cite{KGSchw1} and the discussion in Section~\ref{intro.Kerr.section}).

		\paragraph{Quasimodes and stable trapping}	The absence of bound states and the decay-in-$t^{*}$ without a rate is however compatible with the existence of so-called \emph{quasimodes}, namely approximate solutions of \eqref{KG.intro} corresponding to ``almost''-bound states, which typically arise in the presence of \emph{stable trapping}. 
		On the Schwarzschild/Kerr-Anti-de-Sitter black holes, 
		the stable trapping phenomenon  indeed gives rise to a slow $[\log(t^{*})]^{-p}$ decay \cite{decayads,Gannot}. 
		Stable trapping may also arise for wave equations in the presence of obstacles, giving   rise to quasimodes and a slow decay in time at a logarithmic rate, see e.g.\ \cite{Burq,cardosopopov}.
		
		Returning to the Klein--Gordon equation \eqref{KG.intro} on a Reissner--Nordstr\"{o}m  black hole, we note that a  hydrogen-like potential as  in \eqref{V.intro} admits  a minimum
		(see Figure~\ref{Potential}), which implies the existence of \emph{stable timelike trapping}
		in the high angular frequency  (large $L$)  regime\footnote{In close connection to stable trapping, we  show that the  local integrated decay estimate with a source \emph{fails}, see Theorem~\ref{quasimode.thm}.}. This suggests  the existence of quasimodes localized near $r\approx L^2$ approximating more accurately the solution as $L \rightarrow+\infty$. In fact, the following related scenario has  been proposed in the Physics literature \cite{SantosWCC}:  the decay in time of solutions of \eqref{KG.intro} is of order $[\log(\log(t^{*}))]^{-1}$.

		\paragraph{Quantitative inverse-polynomial decay for general solutions}	
		We prove, however, that solutions of \eqref{KG.intro} with localized and regular initial data \emph{decay at an inverse polynomial rate} $(t^{*})^{-\frac{5}{6}+\delta}$ for some $\delta>0$ in any compact $r$-region \emph{despite the presence of stable timelike trapping}. The following Theorem~\ref{thm.intro} can be seen as the analogue of Theorem~\ref{thm.old} with no restriction  to finitely many spherical harmonics, which is related to obtaining  uniform-in-$L$ decay estimates. Theorem~\ref{thm.intro} consists of two statements: an unconditional decay $O((t^{*})^{-\frac{5}{6}+\delta})$ result for $\delta=\frac{1}{23}$, and a conditional decay result for $\delta>0$  arbitrarily small,  assuming the validity of the so-called Exponent Pair Conjecture  in Number Theory (see the discussion below), which predicts upper bounds on exponential sums such as \eqref{first.osc.sum} (see also Remark~\ref{rmk.general} for extensions of Theorem~\ref{thm.intro}).

		\begin{thm}[Late-time polynomial decay for general solutions]\label{thm.intro}
			
			Let $\phi$ solve the Klein-Gordon equation~\eqref{KG.intro} with smooth compactly supported initial data $\left(\phi_0,\phi_1\right) = \left(\phi|_{\left\{t^* = 0\right\}},n_{\{t^*=0\}}\phi|_{\left\{t^* = 0\right\}}\right)$ along $\{t^* = 0\}$. Then, there exists $N > 0$ so that for every $R > 0$, $\delta > 0$, and $t^* \geq 1$ we have
			\begin{equation} \label{cor.eq}
				\sup_{r \in [r_+,R]}\left|\phi\left(t^*,r,\theta,\varphi\right)\right| \lesssim_{R,\delta} [t^{*}]^{-\frac{5}{6}+\delta}\left(\left\vert\left\vert \phi_0\right\vert\right\vert_{H^N\left(\{t^* = 0\}\right)}+\left\vert\left\vert \phi_1\right\vert\right\vert_{H^{N-1}\left(\{t^* = 0\}\right)} \right),
			\end{equation}
			where \begin{enumerate}[I.]
				\item \label{thm.intro.I} (Unconditional decay result) 	we can unconditionally choose $\delta = \frac{1}{23}$.

				\item  \label{thm.intro.II} (Conditional decay result) assuming that the Exponent Pair Conjecture is true, we may take $\delta=\ep$ for any arbitrarily small $\ep>0$. 
			\end{enumerate}

			Moreover, \eqref{cor.eq} also holds  replacing  $\phi$ by any derivative $\rd^{\alpha} \phi$ in $(t^{*},r,\theta,\varphi)$ coordinates.
		\end{thm}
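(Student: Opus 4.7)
The plan is to perform a spherical harmonic decomposition $\phi = \sum_{L \geq 0} \phi_L$, split at an $L$-threshold $L_* = L_*(t^*)$ which grows as a small power of $t^*$, and treat the two regimes separately. For the high-angular-frequency tail $L > L_*$, no oscillatory cancellation is needed: commuting the equation \eqref{KG.intro} with the spherical Laplacian, using elliptic regularity on $\s^2$ and Sobolev embedding, and absorbing $N$ derivatives into the norm of $(\phi_0,\phi_1)$, one gets a trivial bound $\sup_{r\in[r_+,R]}|\phi_{>L_*}| \lesssim L_*^{-(N-c)}\big(\|\phi_0\|_{H^N}+\|\phi_1\|_{H^{N-1}}\big)$, which is acceptable once $N$ is large enough that $L_*^{-(N-c)}\lesssim (t^*)^{-5/6+\delta}$.

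For the low-angular-frequency part, I would apply the modewise expansion \eqref{decay.intro.old} of Theorem~\ref{thm.old} to each $\phi_L$ with $L\leq L_*$. This requires a quantitative refinement of Theorem~\ref{thm.old} in which the $L$-dependence of all constants is tracked polynomially in $L$: going back through the argument of \cite{KGSchw1} and keeping explicit control of each estimate (in particular on the cutoff parameters used to isolate the stable-trapping frequency) should yield such a refinement, and would produce a total contribution $\lesssim L_*^{c'}(t^*)^{-1+\ep}\big(\|\phi_0\|_{H^N}+\|\phi_1\|_{H^{N-1}}\big)$ from the remainder terms, also acceptable for suitable $L_*$.

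The heart of the argument is then the leading-order contribution $(t^*)^{-5/6}\sum_{L\leq L_*}\mathcal{F}_L(t^*,r)\,Y_L(\theta,\varphi)$: each profile $\mathcal{F}_L$ is, up to an oscillatory prefactor, the exponential sum \eqref{first.osc.sum}, whose naive absolute-value bound $\sum_q q^{1/3}|\gamma_m|^{q-1}$ diverges like $(1-|\gamma_m(L)|)^{-4/3}$ as $|\gamma_m(L)|\to 1^-$ in the high-$L$ stable-trapping regime. To beat this blow-up one must exploit the oscillation in the phase $q\mapsto \frac{3}{2}[2\pi M]^{2/3}\,m\,q^{2/3}(t^*)^{1/3}$. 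After an Abel summation removing the algebraic weight $q^{1/3}$, the problem reduces to bounding partial sums $\sum_{q\leq Q}(-\gamma_m)^{q-1}e^{-ic\,m\,q^{2/3}(t^*)^{1/3}}$ by van der Corput/Weyl-type estimates, where the decisive quantity is an admissible exponent pair $(k,\ell)$ for the phase $q^{2/3}$. Using the current unconditional state of the art (Bombieri--Iwaniec--Huxley technology) yields a saving which, after optimizing in $L_*$ and balancing against the tail and remainder contributions above, translates into $\delta=\frac{1}{23}$ in \eqref{cor.eq}. Assuming the Exponent Pair Conjecture, one may use the pair $(\ep,\frac{1}{2}+\ep)$ and obtain arbitrarily small $\delta>0$.

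The main obstacle is precisely this last step: controlling \eqref{first.osc.sum} uniformly in $L$ and $t^*$ in the regime where $|\gamma_m(L)|$ is asymptotically close to $1$. This is where quasimodes and stable timelike trapping leave their imprint on the problem and where tools from analytic number theory become indispensable; in particular one cannot afford to lose any $L$-dependent constant during the van der Corput process, so a careful book-keeping of how the geometric weight $|\gamma_m|^{q-1}$ interacts with Weyl differencing is required. Once that is done, the extension to derivatives $\partial^\alpha \phi$ follows from commuting the corresponding $(t^*,r,\theta,\varphi)$-vector fields through the argument, at the cost of a few extra derivatives in the norm of the initial data.
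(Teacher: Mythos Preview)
Your endgame is right: the decisive step is indeed an exponent-pair bound on the partial sums $\sum_{q\leq Q}e^{ic q^{2/3}(t^*)^{1/3}}$, and this is exactly what the paper does in Lemma~\ref{weusethistocontrolthesum} and Appendix~\ref{appendix.conj}. The Abel-summation reduction and the role of the $(1-|\gamma_m|)$ factor are also correctly identified.

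The gap is in the reduction step. You propose to use Theorem~\ref{thm.old} as a black box after ``going back through the argument of \cite{KGSchw1}'' to upgrade the implicit constant $O_{R_0,L,\phi_0,\phi_1}$ in the error term to something polynomial in $L$. This is not possible by the methods of \cite{KGSchw1}: that paper passes through $L^\infty_\omega$ bounds on the resolvent $\hat\phi_L$, and Theorem~\ref{quasimode.thm} of the present paper shows precisely that $\|\hat\phi_L(\cdot,r)\|_{L^\infty_\omega}\gtrsim e^{CL}$ at the stable-trapping frequencies where $\pi\tilde k$ is near an integer. So the constants hidden in the $O_L((t^*)^{-1+\epsilon})$ error of Theorem~\ref{thm.old} are genuinely exponential in $L$, and your polynomial-loss budget $L_*^{c'}$ cannot absorb them. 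Relatedly, the threshold $L_*\sim (t^*)^\alpha$ is not the correct dichotomy; the natural split, visible in Lemmas~\ref{uerror1smalltime} and~\ref{mainsmalltfourier}, is $t^*\lessgtr e^{\tilde\delta\,\LL\log\LL}$, i.e.\ $L\sim \log t^*/\log\log t^*$.

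What the paper does instead is a complete re-analysis of the ODE~\eqref{eq:mainVrhs} for each $(\omega,L)$, with explicit $L$-tracking throughout (Sections~\ref{regimeA.section}--\ref{regimeB2.section}). The key new ingredient, absent from \cite{KGSchw1}, is the observation that although the resolvent blows up like $e^{O(L)}$ at the bad frequencies, the $\omega$-measure of the bad set is $O(e^{-4\LL\log\LL})$; this is what makes the inverse Fourier integral summable in $L$ despite the pointwise blow-up. Concretely this shows up as the prefactor $\exp(-4\LL\log\LL+D_I\LL)$ in front of the exponential sum in Proposition~\ref{estfourtransofumainprop}, coming from the WKB barrier-penetration estimate between the first two turning points (Section~\ref{connection12.section}) combined with the energy identity of Section~\ref{energy.section}. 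None of this structure is visible if one treats the fixed-$L$ result as a black box.
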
 
		\begin{rmk}\label{rmk.general}
			For the simplicity of the exposition, we have focused on proving time decay in the region $r \in [r_+,R]$ under the assumption of compactly supported initial data. Our methods, however, can also be extended to address    more general situations, in which we either relax the compact support assumption of the initial data, or we consider time-decay upper bounds  which are uniform in $r$.  \begin{enumerate} [A.]
				\item \label{exp.statement} The assumption of compact support of the initial data $(\phi_0,\phi_1)$ can easily be replaced by a decay assumption with suitable spatial exponential weights in $r$ and Theorem~\ref{thm.intro} still holds identically.
				\item   \label{polydecaytime} The method of proof behind Theorem~\ref{thm.intro} moreover allows to establish polynomial time decay in the compact region  $r \in [r_+,R]$ for any  solution whose initial data only decay polynomially in $r$. 
				\item  \label{uniformrdecaytime} Moreover, the same methods also allow one to establish polynomial time decay uniformly in $r$ for solutions with compactly supported (or  exponentially decaying) 
				initial data. We leave the systematic study of this to a future work.
				
				\item \label{logdecaypoly}It is interesting to consider the question of determining decay rates in time which are uniform in $r$ in the situation when the initial data decay only polynomially. In this case, in view of the presence of quasimodes which can concentrate in regions where $r\sim \log^p(t^*)$,  we can only hope to achieve logarithmic decay in time: see Theorem~\ref{quasimode.thm}  and the associated  discussion of quasimodes. 
				
				\item 	We note, however,  that $r\phi$ is uniformly bounded over the spacetime, as a straightforward consequence of Sobolev inequalities, the boundedness of energy, and commutation with angular momentum operators. Polynomial time decay for $\phi$ is therefore trivially retrieved in regions of the form $\{ r\gtrsim (t^*)^{q}\}$ for $q>0$.
			\end{enumerate}
		\end{rmk}

		\begin{rmk}	Statement~\ref{thm.intro.II} of Theorem~\ref{thm.intro} in fact holds under the weaker assumption that $\left(3\epsilon,\frac{2}{3}+\epsilon\right)$ is an exponent pair for any $\epsilon>0$ (a statement that follows from the Exponent Pair Conjecture), see Appendix~\ref{appendix.conj}.\end{rmk}

		\paragraph{Oscillating sums and connections to Number Theory} Our analysis leading to the proof of Theorem~\ref{thm.intro} first shows that    the solution $\phi=r^{-1}\underset{L\in \mathbb{N}}{\sum} \psi_L Y_L$ of \eqref{KG.intro} takes the following schematic form involving an exponential sum, where $|\Gamma_L| \approx 1-e^{-4L \log L + O(L)}$ (compare with \eqref{first.osc.sum}): for any $\ep>0$: \begin{equation}\label{first.osc.sum2}\psi_L(t^{*},r) \approx (t^{*})^{-\frac{5}{6}}\left|\left|\Gamma_L\right|-1\right|\sum_{q=1}^{\lceil (t^{*})^{1/2} \rceil} (-\Gamma_L)^{q-1}q^{\frac{1}{3}}\cos(mt^{*}  -\frac{3}{2} [2\pi M]^{\frac{2}{3}} m q^{\frac{2}{3}} (t^{*})^{\frac{1}{3}}     +...)+ \overbrace{\text{errors,}}^{=O([t^{*}]^{-\frac{5}{6}+\ep})}\end{equation}
		
		\begin{rmk}
			With slightly more work, we expect that the errors in \eqref{first.osc.sum2}  can be upgraded to $O(  (t^{*})^{-\frac{5}{6}-\eta})$ for  $\eta>0$, so  the main contribution to the decay of $\phi$ in Theorem~\ref{thm.intro} is  given by the exponential sum in \eqref{first.osc.sum2}. We, however, do not  pursue this, in view of the difficulties in estimating exponential sums (see below).
		\end{rmk}

		To prove Theorem~\ref{thm.intro}, 	we must also control the exponential sum of \eqref{first.osc.sum2} uniformly in $L$, which turns out to be challenging, since  $|\Gamma_L|\rightarrow 1$ as $L\rightarrow \infty$.  \eqref{first.osc.sum2} immediately implies  a  \emph{$O([t^{*}])^{-\frac{2}{3}}$ pointwise upper bound} (uniformly in $L$), which is clearly not sharp. To prove the sharper decay rates claimed in Theorem~\ref{thm.intro}, we use results obtained in the context  of Number Theory to estimate exponential sums such as the one  in \eqref{first.osc.sum2}.

		It turns out that the current techniques in Analytic Number Theory do not allow us to prove the $(t^{*})^{-\frac{5}{6}+\epsilon}$ conjectured decay from the exponential sum in \eqref{first.osc.sum2}. Nonetheless, such a decay estimate follows from the celebrated \emph{Exponent Pair Conjecture} on exponential sums with a monomial phase (see also Appendix~\ref{appendix.conj}): 
		\begin{conjecture}[Exponent Pair Conjecture \cite{NT3,GrahamK,Montgomery}]\label{exp.conj.intro}
			Let  $f:[1,+\infty)\rightarrow \R$ be a  smooth function behaving approximately like  the monomial $Tx^{-\sigma}$ at any order, namely:  there exists constants $\sigma>0$, $T>0$ such that for any $\delta \in (0,1)$ and $M \in \mathbb{N}$, there exists $\eta(\delta,M)>0$ such that for all $x \leq \eta(\delta,M) T^{1/\sigma}$:  \begin{equation}\label{f.condition.intro}
				\bigl| f^{(j+1)}(x)-  T  \frac{d^j}{dx^j}[ x^{-\sigma}] \bigr|  \leq  \delta \cdot T \bigl| \frac{d^j}{dx^j}[ x^{-\sigma}] \bigr| \text{ for all } j \in[1,M].
			\end{equation}
			Then, for any $\epsilon>0$,  any function $f$ satisfying \eqref{f.condition.intro} and for any $0 \leq N \leq N' \leq 2N$, we have for all $T\gtrsim N^{\sigma}$: \begin{equation}\label{exp.pair.est}
				\bigl|	\sum_{n=N}^{N'} \exp( i f(n)) \bigr| \ls (\frac{T}{N^{\sigma}})^{\epsilon} N^{\frac{1}{2}+\epsilon}. 
			\end{equation} 
		\end{conjecture}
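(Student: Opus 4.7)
The plan is to reduce the estimate \eqref{exp.pair.est} to the classical theory of exponent pairs for one-dimensional exponential sums with monomial-type phase. First, I would apply a smooth dyadic partition of unity and rescale so that on each block of length comparable to $N$ the phase $f$ satisfies the standard van der Corput conditions $|f^{(j)}(x)| \approx T N^{-\sigma-j+1}$ for all relevant $j$, which is exactly what the derivative hypothesis \eqref{f.condition.intro} provides. Modulo the harmless $(T/N^\sigma)^\epsilon$ factor, the task becomes bounding $\sum_{n \sim N} \exp(i f(n))$ by $N^{1/2+\epsilon}$ for monomial-like $f$, which is precisely the statement that $(\epsilon, 1/2 + \epsilon)$ is an exponent pair.

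Next, I would try to access such an exponent pair by iterating the two classical transformations of van der Corput theory. The $A$-process (Weyl differencing) takes $(k,l)$ to $\bigl( \tfrac{k}{2k+2}, \tfrac{k+l+1}{2k+2} \bigr)$ while the $B$-process (Poisson summation) swaps $(k,l)$ to $\bigl(l - \tfrac{1}{2},\, k + \tfrac{1}{2}\bigr)$; alternating these starting from the trivial pair $(0,1)$ generates the whole van der Corput family. I would try to show that finite iterations give a sequence of exponent pairs $(\epsilon_n, 1/2+\epsilon_n)$ with $\epsilon_n \downarrow 0$, which would already suffice for any fixed $\epsilon>0$ in \eqref{exp.pair.est}.

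The main obstacle, and the reason this is a famous open conjecture, is that the $A$/$B$ iteration provably cannot attain the conjectured endpoint $(0, 1/2)$: its closure strictly misses the target line. To push beyond the van der Corput barrier one would need genuine improvements such as the Bombieri--Iwaniec method, Huxley's determinant-based refinements, or the more recent $\ell^2$-decoupling inequality of Bourgain--Demeter, which together have yielded the current record exponent pairs but still fall short of the conjectured range. I expect the decisive and presently insurmountable step to be the extraction of cancellation in the intermediate frequency scales where none of these techniques are sharp, and it is precisely this gap that forces the dichotomy between the unconditional $\delta=\tfrac{1}{23}$ in Statement~\ref{thm.intro.I} of Theorem~\ref{thm.intro} and the conditional $\delta=\epsilon$ in Statement~\ref{thm.intro.II}.
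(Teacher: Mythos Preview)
Your analysis is correct, but note that the statement in question is a \emph{conjecture}, not a theorem: the paper does not prove it and provides no proof to compare against. Conjecture~\ref{exp.conj.intro} is stated in the introduction (and restated more formally as Conjecture~\ref{exp.conj} in Appendix~\ref{appendix.conj}) purely as a conditional hypothesis for Statement~\ref{thm.intro.II} of Theorem~\ref{thm.intro}. You correctly recognize this, explicitly calling it ``a famous open conjecture'' and identifying the van der Corput barrier as the obstruction; your summary of the $A$/$B$ processes, the Bombieri--Iwaniec and decoupling improvements, and the reason these still fall short is accurate. There is nothing further to compare: the paper treats this as a black-box assumption and instead invokes known exponent pairs (Bourgain's $(13/84+\epsilon,55/84+\epsilon)$ and the pair from \cite{NT1}) to obtain the unconditional $\delta=1/23$ in Lemma~\ref{weusethistocontrolthesum} and the proposition following Lemma~\ref{usenumbertheory}.
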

		
		Exponent pairs  play an important role in Number Theory: 
		for example,   Conjecture~\ref{exp.conj.intro} implies the validity of the celebrated Lindel\"{o}f Hypothesis \cite{lindelof,lindelof2}, a statement postulating  $O(T^{\ep})$ upper bounds on $\zeta(\frac{1}{2}+iT)$ as $T\rightarrow +\infty$, and  which is closely related to the Riemann Hypothesis.  An abundant literature has since been developed on exponent pairs, see e.g.\ \cite{NT3,GrahamK,Montgomery} and references therein, and \cite{NT5} for a recent survey. The sharpest result is due to Bourgain \cite{NT2}  and shows that \eqref{exp.pair.est} holds for any $\epsilon>\frac{13}{84}$. 
		Combining our analysis and these partial results on Conjecture~\ref{exp.conj.intro}, we derive the (unconditional) Statement~\ref{thm.intro.I} of Theorem~\ref{thm.intro}.  Our second (conditional) Statement~\ref{thm.intro.II} in contrast follows from assuming the validity of Conjecture~\ref{exp.conj.intro}.

		\paragraph{Unbounded Fourier transform of the solution and failure of integrated decay} We now briefly discuss the quantitative behavior in both physical and frequency space of the solution of \eqref{KG.intro} with localized data as in Theorem~\ref{thm.intro}. Writing $\phi$ in temporal Fourier space and using spherical harmonics $Y_L$ as \begin{equation}\label{FT.intro}
			\phi(t^{*},r,\theta,\varphi) = \sum_{L \in \mathbb{N}} Y_L(\theta,\varphi) \int_{\omega=-\infty}^{+\infty} e^{i\omega t^{*}} \hat{\phi}_L(\omega,r)d\omega
		\end{equation}  allows to recast  \eqref{KG.intro} into the following Schr\"{o}dinger equation, with a potential $V(r)$ given in  \eqref{V.intro}: \begin{equation}\label{eq.intro}
			(1-\frac{2M}{r}+\frac{\DD^2}{r^2})\frac{d}{dr}\left( (1-\frac{2M}{r}+\frac{\DD^2}{r^2})\frac{d}{dr}(r \hat{\phi}_L) \right)=  \left[-\omega^2+ V(r) \right] r\hat{\phi}_L+\hat{H}_L(\omega,r),
		\end{equation} where $\hat{H}_L$ is a source term (which comes from the initial data $\phi_0,\phi_1$).
		In this paragraph, we discuss   the \emph{obstructions to decay} due of the shape of the potential $-\omega^2+V(r)$  (see Figure~\ref{Potential}). In the regime, $m-L^{-2}\ll|\omega|<m$ (with $L\gg 1$), note the presence of two turning points  delimiting a classically allowed region $\{ L^2 \lesssim r \lesssim [m^2-\omega^2]^{-1}\}$  containing a  minimum of the potential, which leads to \emph{stable trapping} occurring in the low frequencies $|\omega| \approx m$.  We show that schematically, $\hat{\phi}_L$ takes the following form  for $|\omega|<m$:  \begin{equation}\label{resolvent.intro}\begin{split}
				&	\hat{\phi}_L(\omega,r)=  \exp\big( L \cdot f_{data}(r)\big) \cdot  \ep_L^2\cdot \frac{	\cos(\pi \tilde{k} )+  e^{O(L)}\cdot \sin(\pi \tilde{k})}{ 	\ep_{L}^2 \cdot \cos(\pi \tilde{k} )+ \sin(\pi \tilde{k})}	 + ...\\ &  \kk  \approx (m^2-\omega^2)^{-1/2}+O(1),  \\ & \ep_L \approx \exp(-2L \log L + O(L)).\end{split}
		\end{equation}
		
		\eqref{resolvent.intro} schematically reflects an amplification of size $\exp\left( L \cdot f_{data}(r)\right)$ (the rate $f_{data}(r)=e^{O_{R_0}(1)}$ depends on the  initial data and $r$, but is bounded away from $0$ in any compact $r$-region $\{r\leq R_0\}$) when $ \pi \tilde{k}$ is an integer\footnote{Note:  for $\kk$ away from integers,  there is no amplification in \eqref{resolvent.intro} since $\ep_L^2 \cdot \exp\left( L \cdot f_{data}(r)\right)=  \exp\left( -4L \log L+O(L)\right) \ll 1$.}, which occurs infinitely often in the limit $|\omega|\rightarrow_{|\omega|<m} m$  (see  Theorem~\ref{TPsection.mainprop} for the complete and rigorous expression). This exponential amplification is the main quantitative consequence of stable trapping and leads to a pointwise-unbounded Fourier transform and the corresponding failure of standard Morawetz (namely: integrated local decay) estimates  with a source. It is also related to the other main consequence of stable timelike trapping: namely, the existence of quasimodes in large-$r$ regions, as follows.
		
		\begin{figure}\label{Potential}
			
			\begin{center}
				
				\includegraphics[width=100 mm, height=50 mm]{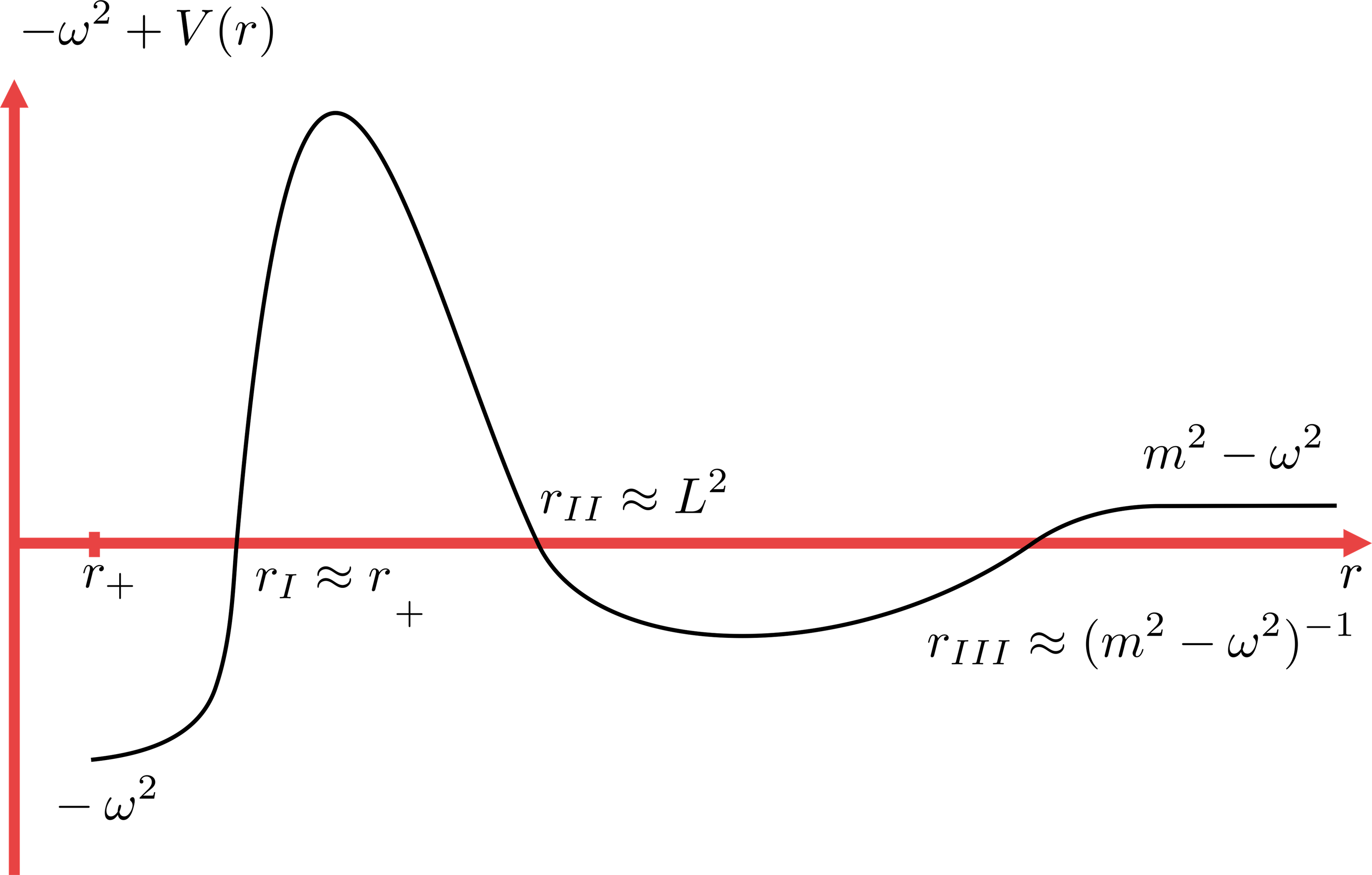}
				
			\end{center}
			\caption{The  effective potential $-\omega^2 + V(r)$, when $m-L^{-2}\ll|\omega|<m$, where $V(r)$ is as in \eqref{V.intro}.}

		\end{figure}

		\begin{thm} \label{quasimode.thm}~
			\begin{enumerate}[I.]
				\item \label{blowup}(Blow-up of the Fourier transform in $L^{\infty}$ norm).  Let $R>r_+$. There exists  smooth $\left(\phi_0,\phi_1\right)$ with support in $[r_+,R]$ such that the solution $\phi$ of \eqref{KG.intro} with  initial data   	$ \left(\phi|_{\left\{t^* = 0\right\}},n_{\{t^*=0\}}\phi|_{\left\{t^* = 0\right\}}\right)=\left(\phi_0,\phi_1\right)$ along $\{t^* = 0\}$ satisfies the following estimates:  \begin{equation}\label{fourier.blow-up.intro}\begin{split}
						&\text{for all } L\in\mathbb{N}\cup\{0\}\text{: }\sup_{r\in[r_+,R]}\ \|\hat{\phi}_L(\cdot,r)\|_{L^{\infty}_{\omega}(m-L^{-2},m)}\ \gtrsim_{R}\  e^{C L},\\&	\text{ therefore: }	\sup_{r\in[r_+,R]}\ \|	\hat{\phi}\left(\cdot,r\right)\|_{L^{\infty}\left((m-L^{-2},m)\times \mathbb{S}^2 \right)} =\infty.\end{split}
				\end{equation} 
				\item \label{failure} (Failure of Morawetz estimate  with a source $F$). We  now consider solutions of the inhomogeneous Klein--Gordon equation with zero initial data: \begin{equation}\label{KG.inh.intro}
					\Box_{g_{RN}} \phi = m^2 \phi + F.
				\end{equation}
				
				For any $p\in\mathbb{N},\ q\in\mathbb{N},\  N \in\mathbb{N}$, $R_0>r_+$, for all large $\Delta>0$, there exists a smooth compactly source function $F_{\Delta}$   such that the solution $\phi$ of \eqref{KG.inh.intro} satisfies: \begin{equation}\label{morawetz.fail}
					\int_{t^{*}=0}^{+\infty}     \sum_{|\alpha|\leq 1 }\| r^{-p}\rd_{t}^{\alpha_t} \rd_{i}^{\alpha_i} \phi\|^2_{L^2(B(0,R_0))}(t^{*})dt^{*} \geq  \Delta  \int_{t^{*}=0}^{+\infty}   \sum_{|\alpha|\leq N } \| r^{q}\rd_{t}^{\alpha_t} \rd_{i}^{\alpha_i} F_{\Delta}\|^2_{L^2(\R^3)}(t^{*})dt^{*}.
				\end{equation}
				
				\item\label{quasimode.statement}  (Quasimode construction and logarithmic lower bounds for initial data with polynomial decay).  Let $p>2$ be large enough. There exists a sequence $\Phi_L(t^{*},r,\theta,\varphi)$ for $L \in \mathbb{N}$  and $L\gg 1$  which 
				
				\begin{enumerate}[a.]
					\item \label{quasi0}  are time-periodic:  $\Phi_L(t^{*},r,\theta,\varphi) = e^{-i \omega_L t^{*}} F_L(r,\theta)$, where $F_L$ is smooth and $\omega_L =m - O(L^{-p}) \in \R$.
					\item \label{quasi1} are supported in a region $L^2 \lesssim r \lesssim L^p$ and satisfy $\left\vert\left\vert F_L\right\vert\right\vert_{L^2(\R^3)} = 1$.
					\item  \label{quasi2} are solutions  of \eqref{KG.intro} up to small errors: $\left|\mathcal{E}_L\right|=  \left|\left[\Box_{g_{RN}} - m^2 \right] \Phi_L \right| \lesssim e^{-DL}$, for some constant $D>0$ independent of $L$.

				\end{enumerate}
				Therefore, 	 let $N>0$, $M>0$ be a large enough numbers and let $\mathcal{E}(M,N)$ be the set of solutions $\phi$ of \eqref{KG.intro} with initial data $ \left(\phi|_{\left\{t^* = 0\right\}},n_{\{t^*=0\}}\phi|_{\left\{t^* = 0\right\}}\right)=\left(\phi_0,\phi_1\right)$ along $\{t^* = 0\}$, with regularity $r^{M}\left(\phi_0,\phi_1\right) \in H^N(\{t^{*}=0\}) \times H^{N-1}(\{t^{*}=0\}) $. There exists constants $P\left(M,N\right) > 0$ and  $C(M,N)>0$ such that \begin{equation*}
					\limsup_{t^{*}\rightarrow +\infty}\sup_{\phi\in \mathcal{E}(M,N)}\  [\log(1+t^{*})]^{P}\frac{ \underset{ L^2 \ls r \ls L^p,\ \omega \in \mathbb{S}^2}{\sup}|\phi|(t^{*},r,\omega)}{ \| r^{M} \phi_0 \|_{H^N(\{t^{*}=0\})}+ \| r^{M}\phi_1 \|_{H^{N-1}(\{t^{*}=0\})}}  \geq C(M,N) >0.
				\end{equation*}
				
			\end{enumerate}
			
		\end{thm}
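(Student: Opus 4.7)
The plan is to treat the three parts in the order~\ref{quasimode.statement}, \ref{blowup}, \ref{failure}, since the quasimode construction is the structural core on which both \ref{blowup} and \ref{failure} rest. For properties~\ref{quasi0}--\ref{quasi2}, the starting point is a semiclassical analysis of the radial ODE~\eqref{eq.intro} in the large-$L$ regime, exploiting that the effective potential $V(r)$ in~\eqref{V.intro} has, for $L \gg 1$, a local minimum near $r_{\min} \sim L^2/(Mm^2)$ with $V(r_{\min}) \approx m^2 - M^2m^4/L^2$, a forbidden barrier of height $\sim L^2$ separating the well from the event horizon at $r_+$, and an asymptote to $m^2$ from below at infinity. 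A harmonic-oscillator-type quantization in the well produces approximate eigenvalues $\omega_L = m - O(L^{-p})$ for any fixed $p \in (2,3)$, with inner turning point at $r \sim L^2$ and outer turning point at $r \sim L^p$. I would construct $F_L$ as a Gaussian-like profile centered at $r_{\min}$, matched via WKB to decaying solutions across each turning point and smoothly cut off outside $\{L^2 \ls r \ls L^p\}$, then $L^2(\R^3)$-normalized to produce~\ref{quasi1}. The outer cutoff generates $O(L^{-\infty})$ errors; the dominant contribution to $\mathcal{E}_L$ comes from the imperfect WKB match across the inner barrier between $r_+$ and $r \sim L^2$, whose tunneling integral $\int \sqrt{V-\omega_L^2}\, dr^*$ is of size $\sim L\log L$, yielding $|\mathcal{E}_L| \ls e^{-DL}$ for any fixed $D > 0$ as claimed in~\ref{quasi2}.

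For part~\ref{blowup}, I would directly exploit the schematic resolvent formula~\eqref{resolvent.intro}. Choosing generic smooth compactly supported $\left(\phi_0,\phi_1\right)$ so that the prefactor $e^{Lf_{data}(r)}$ is nonzero at some $r \in [r_+, R]$, I would apply the intermediate value theorem to $\omega \mapsto \pi\tilde{k}(\omega) \sim (m^2-\omega^2)^{-1/2}$, which is monotone and diverges as $\omega \to m^-$; this produces infinitely many frequencies $\omega_n^{(L)} \in (m - L^{-2}, m)$ at which $\pi\tilde{k}$ lies in $\pi\mathbb{Z}$. At such frequencies the denominator $\ep_L^2 \cos(\pi\tilde{k}) + \sin(\pi\tilde{k})$ has size $O(\ep_L^2) = e^{-4L\log L + O(L)}$ while the numerator is of unit order, so combined with the amplification $e^{L f_{data}(r)}$ one reads off $|\hat{\phi}_L(\omega_n^{(L)},r)| \geq e^{CL}$ for $L$ large; sending $L \to \infty$ then yields the $L^{\infty}$ blow-up of the full Fourier transform in~\eqref{fourier.blow-up.intro}.

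For part~\ref{failure}, I would leverage the quasimode to build a resonant source. The existence of quasimodes at frequency $\omega_L$ with error $e^{-DL}$ is equivalent to saying that the spatial resolvent $R(\omega) = (A_{\mathrm{spatial}} - \omega^2)^{-1}$, which governs the Duhamel formula in temporal Fourier space, has an approximate pole at $\omega = \omega_L$ of residue-like size $e^{DL}$. Given $\Delta$ large, choose $L$ large and set $F_{\Delta}(t^*, r, \theta, \varphi) := \chi_T(t^*)\, e^{-i\omega_L t^*}\, g_L(r, \theta)$, with $\chi_T$ a smooth time cutoff on $[0,T]$ and $g_L$ a compactly supported profile placed at the outer turning point $r \sim L^p$ to minimize the $r^q$-weighted cost on the right-hand side of~\eqref{morawetz.fail}. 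A spectral/Plancherel expansion of Duhamel then shows that the $L^2_{t^*} L^2_x(B(0,R_0))$-norm of the solution $\phi$ is amplified by powers of $e^{DL}$ relative to the weighted norms of $F_\Delta$, dominating any polynomial loss from the spatial weights $r^{\pm}$ and the derivative counts in~\eqref{morawetz.fail} once $L, T$ are taken sufficiently large.

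Finally, for the lower-bound tail of~\ref{quasimode.statement}, I would fix large $t^*$, set $L := D^{-1}\log(1+t^*)$, and consider the solution $\phi$ with initial data $\left(\phi_0,\phi_1\right) := \left(\Phi_L, \rd_{t^*}\Phi_L\right)\big|_{t^* = 0}$, which lies in $\mathcal{E}(M,N)$ for all $(M,N)$. A standard energy estimate combined with~\ref{quasi2} gives $\|\phi - \Phi_L\|_{\mathrm{energy}} \ls t^* e^{-DL} = O(1)$, so that $\phi(t^*, \cdot)$ inherits from $\Phi_L$ an amplitude polynomial in $L^{-1}$ on $\{L^2 \ls r \ls L^p\}$. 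Dividing by the weighted Sobolev norm $\|r^M(\phi_0,\phi_1)\|_{H^N \times H^{N-1}}$ of this data, which grows polynomially in $L$ (since the quasimode is supported at $r \sim L^p$ and its derivatives introduce at most polynomial powers of $L$), yields the $[\log(1+t^*)]^{-P(M,N)}$ lower bound along a suitable sequence $t^*_k \to \infty$, giving the required $\limsup$. The main obstacle throughout is the quasimode construction in~\ref{quasi0}--\ref{quasi2}: obtaining uniform-in-$L$ WKB/Langer matching across the two turning points on the Reissner--Nordstr\"om geometry, and verifying that the exponent $D$ in $e^{-DL}$ is a genuine positive constant independent of $L$, will require a careful treatment of Airy-type transition regions, mirroring the fixed-$L$ analysis of \cite{KGSchw1} but now tracked uniformly in the angular parameter $L$.
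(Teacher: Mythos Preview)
Your overall strategy is workable and close in spirit to the paper's, but the paper takes a more economical route. Rather than building the quasimode from scratch and then using it to drive Parts~\ref{blowup} and~\ref{failure}, the paper treats all three statements as direct corollaries of the Green's-formula machinery (Theorem~\ref{TPsection.mainprop} and Proposition~\ref{together.prop}) already established for the main decay theorem. The quasimode $\Phi_L$ is not constructed via a fresh WKB/Gaussian ansatz: it is simply a spatial cutoff of the exact Jost solution $u_I(\omega_L,\cdot)$ on $\eta L^2\leq s\leq \eta^{-1}L^p$, and the $e^{-DL}$ error is read off from the size of $u_I$ at the cutoff locations, already controlled by the connection coefficients of Proposition~\ref{connection23.prop} and Section~\ref{connection12.section}. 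Parts~\ref{blowup} and~\ref{failure} are likewise extracted directly from the explicit resolvent amplification at frequencies where $\breve{k}\in\mathbb{Z}$ in Proposition~\ref{together.prop}, without passing through the quasimode. Your route---independent semiclassical quasimode, then resonant-source Duhamel arguments---buys self-containedness but duplicates the turning-point analysis of Section~\ref{regimeB2.section}.

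There is, however, a genuine confusion in your error estimate for~\ref{quasi2}. You attribute the dominant contribution to $\mathcal{E}_L$ to ``the imperfect WKB match across the inner barrier between $r_+$ and $r\sim L^2$,'' with tunneling integral $\sim L\log L$, concluding $|\mathcal{E}_L|\lesssim e^{-DL}$ ``for any fixed $D>0$.'' But the quasimode is identically zero near $r_+$---its support is $\{L^2\lesssim r\lesssim L^p\}$---so no matching across the full barrier occurs. The error is supported only where the cutoffs act; at the inner cutoff $s\sim\eta L^2$ it is governed by $\exp\bigl(-\int_{\eta L^2}^{s_{II}}|V|^{1/2}\,ds\bigr)\sim e^{-c(\eta)L}$, not $e^{-cL\log L}$. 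The full-barrier integral $\int_{s_I}^{s_{II}}|V|^{1/2}\sim 2L\log L$ controls the resonance width $\ep_L^2$ (the quantity entering your Part~\ref{blowup} argument), which is a different object from the quasimode residual. Hence ``for any $D$'' is too strong: only some $D=D(\eta)$ works, which is precisely what the theorem asserts. Two smaller points: ``harmonic-oscillator-type quantization'' is a misnomer, since states with $m-\omega_L\sim L^{-p}$ for $p>2$ have Bohr--Sommerfeld quantum number $n\sim L^{p/2}$ and sit near the top of the well; and your restriction to $p\in(2,3)$ is unnecessary.
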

		
		\begin{rmk} We emphasize that a  key point in the failure of the Morawetz estimate with a source is that we are not allowed for these estimates to measure the source in a norm that contains growing weights in $t^{*}$. Moreover, we note in contrast that the analogue of   Statement~\ref{failure} for the initial value problem (with no source) is not true: indeed,  as a consequence of  the decay estimates of   Theorem~\ref{thm.intro}, solutions arising from compactly supported initial data are square integrable in time uniformly along any fixed compact set in $r$:

			\begin{equation}\label{log.lower}
				\int_{t^{*}=0}^{+\infty}     \sum_{|\alpha|\leq 1 }\| \rd_{t}^{\alpha_t} \rd_{i}^{\alpha_i} \phi\|^2_{L^2(\left\{r \leq R_0\right\})}(t^{*})dt^{*} \leq  C_{R_0}  \sum_{|\alpha|\leq N } \| \rd_{t}^{\alpha_t} \rd_{i}^{\alpha_i} (\phi_0,\phi_1)\|^2_{L^2(\R^3)}. 
			\end{equation}

		\end{rmk}
		
		\begin{rmk}
			We emphasize that Statement~\ref{quasimode.statement} of Theorem~\ref{quasimode.thm} showing logarithmic-in-time lower bounds  in large $r$-regions only applies to solutions with initial data that merely decay polynomially. Indeed, Theorem~\ref{thm.intro} (and Remark~\ref{rmk.general}, Statement~\ref{uniformrdecaytime}) show, in sharp contrast, polynomial decay in time   for solutions with compactly supported (or exponentially decaying) initial data.
		\end{rmk}
		
		We would like to point out that analogous phenomena to Statement~\ref{blowup} and Statement~\ref{failure}  in Theorem~\ref{quasimode.thm} also occur in the other known settings where stable trapping manifests itself \cite{Benomio,Gannot,decayads,quasimodeads} (see Section~\ref{intro.trapping.section} and the discussion immediately below). We will  also discuss the  comparison between Statement~\ref{quasimode.statement} and other stable trapping situations in the next paragraph below. However, we emphasize that the analogue of Theorem~\ref{quasimode.thm} is false for the (massless) wave equations on a Schwarzschild or Kerr black hole on which both integrated local energy decay and polynomial decay in $t^{*}$ hold when the initial data lies in a suitable polynomially weighted Sobolev space, see Section~\ref{intro.massless.section}.

		\paragraph{The localization of our quasimodes in far-away regions}
		
		We recall that so-called \emph{classically forbidden regions} (at fixed $\omega$) are  regions  $\{r,\ - \omega^2 + V(r)<0\}$, where the celebrated Agmon estimates \cite{Agmon} lead us to expect the solution to be of size $\exp(-O(L))$ ($L\gg1$ being the angular momentum, recalling that $L^{-1}:=\hbar$ plays the role of a semi-classical parameter): From Figure~\ref{Potential}, we see the existence of a forbidden region on which $ r \ls L^2$. However,  $ L^2 \ls r \ls (m^2-\omega^2)^{-1}$ is an allowed region in which the quasimodes of Theorem~\ref{quasimode.thm}, Statement~\ref{quasimode.statement} are supported, in the regime where  $(m^2-\omega^2)^{-1} \approx L^p$ for a large $p>2$.

		A key conceptual point is these quasimodes do not end up causing an obstruction to polynomial decay for compactly supported (or exponentially decaying, recalling Remark~\ref{rmk.general}, Statement~\ref{exp.statement}) initial data because their  support moves towards $r = \infty$ as $L \to \infty$.   The concrete manifestation of this is that in the formula  \eqref{resolvent.intro}   the \emph{amplification when $\tilde{k}$ is an integer only occurs on a very small $\tilde{k}$-interval of size $O(\exp(-4L\log L+ O(L))$}, which we exploit when computing the inverse Fourier transform of $\hat{\phi}_L$.
		
		On the other hand, if one considers initial data with only (sufficiently fast) polynomial decay towards $r = \infty$, then the construction of quasimodes
		in  Theorem~\ref{quasimode.thm}, Statement~\ref{quasimode.statement} allows one to infer   that there \emph{cannot} exist a uniform polynomial decay in time bound in the region of spacetime $ \log^2(t^{*})\ls r \ls \log^p(t^{*})$ for any sufficiently large $p>2$,   see also Remark~\ref{rmk.general}, statement~\ref{logdecaypoly}. (Note, however, that polynomial decay in time still holds in any compact region $r\in [r_+,R]$, as mentioned in Remark~\ref{rmk.general}, statement~\ref{polydecaytime}).

		Quasimodes also occur on Schwarzshild/Kerr-AdS black holes \cite{Gannot,decayads,quasimodeads}, but in contrast are supported on  a fixed compact region (with respect to $L$) and thus lead to slow logarithmic decay lower bounds in $t^{*}$, even when restricting the solution to a spatially compact region} (in contrast to  Theorem~\ref{thm.intro}). A similar phenomenon occurs on black strings/rings \cite{Benomio} (see  Section~\ref{intro.trapping.section}). \begin{rmk}We also note that the Klein--Gordon quasimodes for \eqref{KG.intro} are related to stable timelike trapping, while the Anti-de-Sitter black holes quasimodes arise from null   trapping. Null trapping, however, does not create any major obstruction to decay in $t^{*}$ for \eqref{KG.intro}, as we show in (the proof of) Theorem~\ref{thm.intro} (see the discussion in Section~\ref{intro.proof.section}).	\end{rmk}
	\paragraph{Outline of the rest of the introduction}
	
	In Section~\ref{intro.massless.section},  we discuss the massless wave equation (i.e. \eqref{KG.intro} with $m^2=0$) on black hole spacetimes, and compare with the massive case $m^2>0$. In Section~\ref{intro.Kerr.section}, we compare \eqref{KG.intro} to the Klein--Gordon equation on Minkowski-like spacetimes, and on the Kerr black hole. In Section~\ref{intro.trapping.section}, we discuss the phenomenon of stable trapping and the presence of quasimodes. In Section~\ref{numerics.intro}, we discuss previous numerical and heuristic works on late-time asymptotics for \eqref{KG.intro}, and related Physics literature. Finally in Section~\ref{intro.proof.section}, we give a short summary of the proof of Theorem~\ref{thm.intro}.
	
	\subsection{The wave equation on black holes and local integrated  decay}\label{intro.massless.section}
	The (massless) wave equation corresponds to \eqref{KG.intro} with $m^2=0$, and behaves qualitatively differently from the Klein--Gordon case $m^2>0$. Notably, in contrast to \eqref{KG.intro} where \eqref{V.intro} is a long-range potential, the effective potential for the wave equation  on a Reissner--Nordstr\"{o}m black hole is  repulsive and short-range:
	\begin{equation} \label{V.intro.shortrange}
		V(r)=\Big(1-\frac{2M}{r}+\frac{\DD^2}{r^2}\Big)\Big(\frac{2M }{r^3}
		-\frac{2\DD^2}{r^4}\Big) \sim  \frac{2M}{r^3}  \text{ as  r } \rightarrow +\infty.
	\end{equation} 
	
	\paragraph{Price's law} Comparing to \eqref{KG.intro} with $m^2\neq 0$ (Theorem~\ref{thm.old}): In the wave case, the decay is faster $O( (t^{*})^{-3})$  and higher spherical harmonic  $ \phi_L$ decay even faster and do not oscillate in $t^{*}$, following Price's law \cite{Pricepaper}: \begin{equation}\label{Price.law}
		\sum_{L \geq L_0}\phi_L(t^{*},r)Y_L(\theta,\phi) \sim F_{L_0}(r)Y_{L_0}(\theta,\phi)\cdot (t^{*})^{-3-2L_0},
	\end{equation} see \cite{AAG1,Hintz} for the proof of \eqref{Price.law} on a Schwarzschild/Reissner--Nordstr\"{o}m black hole,   \cite{PriceLaw,Schlag2,Schlag1,AAG1} for earlier works on  versions of \eqref{Price.law}, \cite{Tataru,Tataru2,Hintz,AAG2} for analogues of \eqref{Price.law} on a Kerr black hole, and \cite{millet,Ma1,Ma3,Ma4,Ma5} for non-zero spin wave equations.  
	
	\paragraph{Integrated  local energy decay} Stepping away from specific asymptotics such as \eqref{Price.law}, we discuss    local integrated decay estimates. We note that for both the wave equation and~\eqref{KG.intro} the null geodesic flow is related to the high-frequency behavior of solutions, but only in the massive case~\eqref{KG.intro} is the timelike geodesic flow also relevant for high frequencies. In sharp contrast to Theorem~\ref{quasimode.thm}, Morawetz estimates with a source are true for the wave equation on Schwarzschild   \cite{BlueSoffer1,Red}, i.e.\ 
	for solutions of \eqref{KG.inh.intro} with $m^2=0$: for  $C>0$, \begin{equation}\label{morawetz}
		\int_{t^{*}=0}^{+\infty}     \sum_{|\alpha|\leq 1 }\| r^{-1}\rd_{t}^{\alpha_t} \rd_{i}^{\alpha_i} \phi\|^2_{L^2(\R^3)}(t^{*})dt^{*} \leq    C \int_{t^{*}=0}^{+\infty}   \sum_{|\alpha|\leq 2 } \| r\ \rd_{t}^{\alpha_t} \rd_{i}^{\alpha_i} F\|^2_{L^2(\R^3)}(t^{*})dt^{*}.
	\end{equation} Analogues of \eqref{morawetz} also hold  for  spin-weighted wave equations  \cite{pasqualotto2019spin,MihalisStabExt,Maxwell3,Maxwell5}.  We note that various techniques have been developed for wave equations which take in as a ``black box'' assumption that \eqref{morawetz} holds, and then  derive polynomial point-wise decay estimates via physical space methods, see e.g.\ \cite{rp,AAG1,TataruMaxwell}.

	On a Kerr black hole, the presence of \emph{superradiance} (non-positivity of the usual energy) creates an important obstruction to local decay: nonetheless analogues of
	\eqref{morawetz} hold for  (spin-weighted) wave equations on Kerr \cite{Blue,Red,claylecturenotes,Tataru,KerrDaf,hidden,SRTdC2020boundedness,SRTdC2023boundedness2,GKS,Giorgi.RN,BlueMaxwell,TeukolskyDHR,Maspin2,Maspin1}. These integrated local energy decay estimates play a crucial role in nonlinear stability results~\cite{rp,blackbox,SchwarzschildStab,KS.polarized,klainerman2021kerr,stabilitykerrformalism,GCM,effectiveuniform,ShenGCM}.

	\subsection{Bound states and growing modes for Klein--Gordon equations}\label{intro.Kerr.section}

	It is well-known that sufficiently regular solutions of \eqref{KG.intro}  are uniformly bounded in time: This follows from the energy identity, the redshift effect~\cite{Red}, commuted versions thereof, and Sobolev inequalities. The main asymptotic question is thus whether  decay-in-time holds, or not. In this section, we will compare \eqref{KG.intro} (in which we show quantitative polynomial time decay) to other situations.

	\paragraph{Bound states for the Klein--Gordon equation on static non-black-hole spacetimes}
	It is well-known that equations of the form $\Box \phi = V(r) \phi$ for an attractive long-range potential $V$ admit \emph{bound state solutions}, see e.g.\ \cite{reedsimonIV}: a celebrated example is when $V(r)=-\frac{1}{r}$ is  the hydrogen atom potential. Klein--Gordon equations of the form $\Box_g \phi = m^2 \phi$, where $g$ is a static, asymptotically Schwarzschild spacetime, but \emph{with no event horizon} (excluding the black hole case) can also be addressed with a similar formalism and bounds states are known to exist  \cite{Ethan2}. We note that such (linear) bounds states are also  related to the existence of non-trivial stationary non-black hole spacetimes  (solving the nonlinear Einstein--Klein--Gordon equations), so-called ``boson stars'' \cite{Boson,boson2,boson3}.

	\paragraph{Absence of bound states for the Klein--Gordon equation on the Reissner--Nordstr\"{o}m black hole} In contrast, we proved in \cite{KGSchw1} that solutions of \eqref{KG.intro}   on   the Reissner--Nordstr\"{o}m black hole \emph{admit no bound states}, and in fact decay-in-time: at a specific $[t^{*}]^{-5/6}$ rate for each angular mode, and without a rate for the general solution (Theorem~\ref{thm.old}). One of the key features we exploited in \cite{KGSchw1} is the presence of a \emph{positive flux of energy} on the event horizon (which is absent on a non-black-hole spacetime) -- a phenomenon which plays an important quantitative role in our proof of Theorem~\ref{thm.intro}.  
	\paragraph{Existence of bound states and growing modes for the Klein--Gordon equation on the Kerr black hole} On the Kerr black hole, the flux of energy on the event horizon is no longer necessarily positive, due to \emph{superradiance}. In \cite{Yakov}, the first author showed the existence of exponentially growing-in-time solutions (therefore, boundedness fails, in contrast with \eqref{KG.intro}) which have a negative energy flux along the horizon, and  of  exactly time-periodic solutions which have a vanishing flux along the horizon. 
	These bound states lead to the construction of hairy black holes \cite{chodosh-sr1,chodosh-sr2} for the Einstein--Klein--Gordon equations.  Theorem~\ref{thm.intro} of course cannot hold on Kerr spacetimes which possess these non-decaying solutions; however,  it is natural to speculate that our analysis leading to expressions of the form \eqref{first.osc.sum2} remains relevant, after projecting away from the bound states/exponentially-growing modes. We   leave this question to future works.

	\subsection{Stable trapping, quasimodes and quasinormal modes}   \label{intro.trapping.section}

	\paragraph{Stable trapping and logarithmic decay for asymptotically AdS black holes}
	It is well-known that the Schwarzschild-Anti-de-Sitter spacetime $g_{SAdS}$ --the analogue of the Schwarzschild black hole in the presence of a negative cosmological constant $\Lambda<0$ -- admits stably trapped null geodesics. Exploiting this stable trapping, Holzegel--Smulevici \cite{quasimodeads} constructed quasimode solutions of the analogue of \eqref{KG.intro}:
	
	\begin{equation}\label{wave.ADS}\begin{split}
			&\Box_{g_{SAdS}} \phi = -\alpha\ \phi , \text{ with } \alpha \text{ satisfying the Breitenlohner-Freedman condition } \frac{3\alpha}{|\Lambda|} \in (\frac{3}{2},\frac{9}{4}) , \\ & g_{SAdS}=-\Big(1-\frac{2M}{r}+ \frac{-\Lambda}{3} r^2\Big) dt^2 + \Big(1-\frac{2M}{r}+ \frac{-\Lambda}{3} r^2\Big)^{-1} dr^2 + r^2\left( d\theta^2+ \sin^2(\theta) d\varphi^2\right),
		\end{split}
	\end{equation}   satisfying the quasimode conditions \ref{quasi0}, \ref{quasi2} from Theorem~\ref{quasimode.thm} but with compact support in tortoise coordinate $r^{*}$ (thus, not satisfying the analogue of condition~\ref{quasi1}); they deduced that   solutions with Sobolev initial data (even after restricting them to a compact region in $r$) in general decay no faster than \begin{equation}\label{log.decay}
		\frac{1}{\log(t^{*})} \text{ as } t^{*} \rightarrow+\infty,
	\end{equation}
	The slow logarithmic decay \eqref{log.decay} is due to \emph{the superposition of angular  modes}: Indeed,  fixed angular mode solutions of  \eqref{wave.ADS}  decay exponentially \cite{SchwAds,decayads}		 in $t^{*}$. The drastically different asymptotic behavior in time for \eqref{wave.ADS} between fixed angular modes and their superposition is to be contrasted with the situation for \eqref{KG.intro} (compare Theorem~\ref{thm.old} for the fixed angular mode and Theorem~\ref{thm.intro} for their superposition). The key difference is that $g_{SAdS}$ can be seen as a manifold with a (conformal) timelike boundary, thus the quasimodes are  contained in a \emph{fixed compact set}, whereas, in contrast, for \eqref{KG.intro} the analogous quasimodes are contained in $\{r \gtrsim L^2\}$ sets escaping to infinity as $L \rightarrow+\infty$ (where $L$ is the angular frequency), recalling Theorem~\ref{quasimode.thm}.

	\paragraph{Stable trapping and logarithmic decay on black strings/rings} The Schwarzschild black string  is a 5-dimensional analogue of the Schwarzschild black hole, obtained by adding a trivial $\mathbb{S}^1_z$ factor \cite{strings}, and   taking the  following form: 
	
	\begin{equation}\label{wave.string}\begin{split}
			g_{BS}=-\Big(1-\frac{2M}{r}\Big) dt^2 + \Big(1-\frac{2M}{r}\Big)^{-1} dr^2 + r^2\left( d\theta^2+ \sin^2(\theta) d\varphi^2\right)+ dz^2.
		\end{split}
	\end{equation}
	A function $\Phi = e^{i m z}\phi (t,r,\theta,\phi)$ will satisfy $\Box_{g_{BS}}\Phi  = 0$ if and only if $\phi$ solves the Klein--Gordon equation \eqref{KG.intro} of mass $m^2$ on the Schwarzschild black hole. 
	Otherwise said, the (4+1)-dimensional wave equation $\Box_{g_{BS}}\Phi = 0$  can be viewed as a (3+1)-dimensional Klein--Gordon equation \eqref{KG.intro}, but with an arbitrarily large Klein--Gordon mass $m^2$. 
	
	By, in particular, exploiting a relation between ``infinitely boosted'' black strings (and their Kerr analogues) and so-called black rings (see \cite{blackrings}),
	in \cite{Benomio}, Benomio showed that a large class of $5$-dimensional spacetimes comprising of   black strings \eqref{wave.string}, the Kerr analogues of black strings, and  black rings, all possess quasimodes at specific  frequencies $\omega=\omega_m \approx m $ for $ m \gg1$, which are all supported in a fixed compact set  independent of $m$: these quasimodes, in turn, lead to  logarithmic decay lower bounds \eqref{log.decay} (even after restricting  the solution to a spatially compact region) for  solutions with Sobolev initial data. Comparing  the large parameter $m$ of \cite{Benomio} to our large parameter $L$, we note that in contrast, our quasimodes $\phi_L$ are supported on a set $\{r\gtrsim L^2\}$ escaping to infinity for $L \gg 1$. We also emphasize that in the black string/ring context, the $m^2$ parameter can be  arbitrarily large, whereas the Klein--Gordon mass $m^2$ in \eqref{KG.intro} is fixed.
	
	\paragraph{Connection between quasimodes and quasinormal modes/resonances} 
	
	\eqref{log.decay}  was also obtained by Gannot   \cite{Gannot} using the  framework of resonances  (see  \cite{res_final} and references therein for precise definitions). For any resonance $\lambda \in \mathbb{C}$, one can  construct a solution $\phi_{\lambda}$  called \emph{quasinormal mode}, which will play a role in the discussion of Section~\ref{numerics.intro}.  We finally mention the close connection between quasimodes and   the existence of  infinitely  many resonances approaching the real axis, see e.g.\ \cite{quasimode4,quasimode5,quasimode1,quasimode2,quasimode3}.

	\subsection{Numerics, heuristics, and other massive matter models}\label{numerics.intro}

	\paragraph{Decay for the Klein--Gordon equation and other massive matter models on a black hole for fixed angular mode}
	The late-time behavior of \eqref{KG.intro} is a classical subject in the Physics literature, starting with the study of its quasinormal modes \cite{massive1,massive2,QNM4,QNM5} for $L=0$, or $L=1$. Late-time tails were first studied heuristically in \cite{KoyamaTomimatsu,KoyamaTomimatsu2,KoyamaTomimatsu3} claiming a $O( [t^*]^{-5/6})$ rate  \emph{for each fixed angular mode}, consistently with Theorem~\ref{thm.old}. Numerical studies \cite{BurkoKhanna,HodPiran.mass} have later confirmed these asymptotics. We  mention analogous works for dilaton massive fields, \cite{Dilaton1,Dilaton2},  Dirac massive fields \cite{massiveDirac,massivedirac2}, and vector-valued massive (a.k.a.\ Proca) fields \cite{QNM1,Proca2}, where late-time tails were also found to be   $O( [t^*]^{-5/6})$. For \eqref{KG.intro} on a  Kerr  black hole, there is no decay for general solutions due to exponentially growing modes 	(see \cite{massive0,super1,super2} for numerics); However,  $O( [t^*]^{-5/6})$ late-time were found numerically  \cite{Burko,Konoplya.Zhidenko.num,KonoplyaZhidenko},  for solutions (presumably) belonging to a sub-space orthogonal to the exponentially growing solutions.
	It would be interesting to apply the techniques behind the proof of Theorem~\ref{thm.intro}  in the context of these other massive matter fields, a problem we leave to future works.

	\paragraph{Stable trapping}   We, however, emphasize that the above works focus on fixed angular modes. To the best of our knowledge, the only previous work discussing the full solution (i.e.\ the superposition of angular modes) is \cite{SantosWCC} and its announced companion paper \cite{Santos.unpublished}. The first part of the scenario put forth in \cite{SantosWCC} asserts the presence of  quasinormal modes $\phi_L$ approaching the real axis at the rate $O(e^{-S_0 L})$ for  $S_0>0$, as $L\rightarrow +\infty$ (due to stable trapping), which is consistent with  Theorem~\ref{quasimode.thm}. The second part of the \cite{SantosWCC} scenario leads to the claim of very slow decay of the solution of \eqref{KG.intro} at a rate \begin{equation}\label{loglog}
		\frac{1}{\log( \log(t^{*}))} \text{ as } t^{*} \rightarrow+\infty.
	\end{equation}
	
	Theorem~\ref{thm.intro} (with Remark~\ref{rmk.general}, Statement~\ref{uniformrdecaytime}) disproves decay at the rate \eqref{loglog} for solutions of \eqref{KG.intro} with compactly supported initial data (and also exponentially decaying data, recalling Remark~\ref{rmk.general}, Statement~\ref{exp.statement}). However, asymptotics of the form \eqref{loglog} (in a large $r$ region) for solutions arising from initial data with polynomial decay are in principle consistent with  the aforementioned quasimode construction from Theorem~\ref{quasimode.thm}, Statement~\ref{quasimode.statement}, which provides logarithmic lower bounds of the following form $$ \frac{1}{[\log(t^{*})]^q}  \text{ as } t^{*} \rightarrow+\infty \text{ for some } q>0$$ in a far-away $\{ \log^2(t^{*}) \ls r \ls \log^{p}(t^{*})\}$ region, for some $p>2$.

	\paragraph{The Schwarzschild black hole as a possible end-point of the superradiant instability}
	Given the existence of  superradiant instabilities  for the Klein--Gordon equation on (certain) rotating Kerr spacetimes, it is natural to ask what is the nonlinear endpoint of the instability for the Einstein--Klein--Gordon system. In the early works~\cite{Superradiance} and~\cite{Detweiler} it was speculated that the massive scalar field would cause the black hole to eventually eject its angular momentum via superradiance and settle down to a non-rotating Schwarzschild black hole. Other possible endstates could include the later discovered hairy black hole solutions~\cite{numerics.Kerr,chodosh-sr1,chodosh-sr2}, or, in principle, certain special rotating Kerr black holes which are immune to the superradiant instability associated to a given Klein--Gordon mass. However, the works~\cite{hair.unstable,SantosWCC} present numerical and heuristic arguments that both the hairy black hole solutions and \emph{all} rotating Kerr spacetimes are superradiantly unstable. Since these works also argued that the scalar field decayed  very slowly on a Schwarzschild black hole (see the discussion above~\eqref{loglog}) they speculated that nonlinear evolution may drive the solution towards a naked singularity.
	
	Our main result Theorem~\ref{thm.intro} is, of course,  insufficient to address the nonlinear problem. However, insofar as it (and Remark~\ref{rmk.general}, Statement~\ref{uniformrdecaytime}) invalidates \eqref{loglog} for compactly supported (and exponentially decaying) initial data, it suggests that, in the case of sufficiently localized initial data, one should not discard so lightly   the possibility that the Schwarzschild black hole plays an important role in the long-term  nonlinear evolution of the superradiant instability.\footnote{Of course, in vacuum, the Schwarzschild black hole is ``unstable'' to perturbations to a nearby Kerr, so any sense in which Schwarzschild is to be an ``endstate'' for the Einstein--Klein--Gordon system would have to involve nonlinear considerations and a genericity assumption which leads to a non-vanishing scalar field.}

	\subsection{Short summary of the proof}\label{intro.proof.section} As already discussed above, our proof will exploit the complete separability  of \eqref{KG.intro}, allowing to  reduce the problem to the  equation \eqref{eq.intro} for any fixed time frequency $\omega \in \R$ and angular frequency $L\in \mathbb{N}$. Denoting $u_L= r\hat{\phi}_L$, \eqref{eq.intro} takes the following  Schr\"{o}dinger form, introducing the coordinate $\frac{ds}{dr} = (1-\frac{2M}{r}+ \frac{\DD^2}{r^2})$:\begin{align}\label{Schrodinger2}
		\frac{d^2  u_L}{ds^2} =  \left(-\omega^2  +  \underbrace{ \Big(1-\frac{2M}{r}+\frac{\DD^2}{r^2}\Big)\Big(m^2 + \frac{L (L+1)}{r^2} +\frac{2rM -2\DD^2}{r^4}\Big)}_{=V(r)}\right) u_L+ H_L.
	\end{align} Our   strategy is to analyze \eqref{Schrodinger2} at fixed $\omega \in \R$, and characterize its singularities in $\omega$ to then  use \eqref{FT.intro} (with a stationary phase argument) to obtain time-decay for the solution $\phi$ of \eqref{KG.intro}.
	\paragraph{The geometry of the Reissner--Nordstr\"{o}m metric and the potential $V(r)$} Note that $s(r) \sim r $ as $r\rightarrow+\infty$. It is also well-known that  $s \in \R$ behaves in the following way: there exists a constant $\kappa_+(M,e)>0$ (the surface gravity of $\mathcal{H}^+$) such that near the black hole event horizon $\mathcal{H}^+=\{r=r_+(M,e)\}$: \begin{equation}
		e^{ 2\kappa_+ s } \approx (r-r_+) \text{ if r } \approx r_+. \text{ In particular, } \mathcal{H}^+=\{s=-\infty\} .
	\end{equation} Thus,  the coordinate $s$ turns \eqref{eq.intro} 
	into the scattering problem \eqref{Schrodinger2} on the real-line $s\in (-\infty,+\infty)$. It is easy to see that the potential $V(r)$ has the following asymptotics near $s=-\infty$ and $s=+\infty$ respectively \begin{align}\label{pot.asymp}
		&V(s) \approx e^{2\kappa_+ s} r_+^{-2} L^2 \text{ as } s \rightarrow -\infty,\\   &V(s) \approx m^2 -\frac{2Mm^2}{r(s)}+\frac{L^2}{r^2(s)} +O(s^{-2}) \text{ as } s \rightarrow +\infty, 
	\end{align} where in the above (as in the rest of this section) we treat $L \gg 1$ as a large parameter. \paragraph{The three frequency-regimes} \eqref{Schrodinger2} indicates that the frequencies at which $u_L$ is the most singular are $\omega =\pm m$: most of our analysis indeed focuses on the vicinity of $\omega =\pm m$. We distinguish  three regimes: \begin{enumerate}[A.]
		\item \label{A}$|\omega|>m$: This is the most favorable regime, in which it is in principle possible to establish uniform bounds for an arbitrary number of $\omega$ derivatives of $u$, see \cite{Ethan1}. The key realization is that the main part of the effective potential $m^2-\omega^2-\frac{2Mm^2}{r}$ (for large $r$) keeps a  strictly negative sign as $\omega \underset{|\omega|>m}{\to} m$. This allows one to introduce a  ``WKB-coordinate'' and  use a multiplier  in the new coordinate system. 
		
		\item \label{B}$|\omega|<m$: In the low-frequency regime, we note by \eqref{pot.asymp} that $-\omega^2 +V(r)$ admits 3 \emph{turning points}: \begin{itemize}
			\item (Horizon turning point) At  $s\approx\frac{\log(\frac{ |\omega|}{L})}{2\kappa_+}\ll -1 $: As can be seen from its location   near the event horizon $\mathcal{H}^+=\{s=-\infty\}$, this turning point is unaffected by how close $|\omega|$ is to $m$, and can be addressed using the same techniques as for the massless wave equation $m^2=0$: we indeed rely on the previous literature \cite{Schlag_exp,Schlag2} (using classical analysis) and \cite{Red,KerrDaf} (with microlocal multipliers) in this region, exploiting the exponentially decaying character (red-shift) of   $V(s)$ as $s\rightarrow -\infty$.
			
			\item  (``Compact region'' turning point, if $0<m^2 - \omega^2 \ls  L^{-2}$) At $s\approx L^2 \gg 1$ (assuming $L\gg 1$): Note from Figure~\ref{Potential} that to the left of this turning point ($s\ls L^2$) lies a classically forbidden region.
			
			\item (Near-infinity turning point, if $0<m^2 - \omega^2 \ls  L^{-2}$) At $s\approx (m^2-\omega^2)^{-1}\gtrsim L^2 $:  Note from Figure~\ref{Potential} that to the right of this turning point ($s \gtrsim (m^2-\omega^2)^{-1}$) lies another classically-forbidden region. 
		\end{itemize}

		Now, we discuss the two sub-regimes of the $|\omega|<m$ regime.
		\begin{enumerate}[1.]
			\item \label{i}  $|\omega| < m - L^{-p}$, for a large $p>2$. 
			We show this frequency regime is not the main contributor to the $O( [t^{*}]^{-5/6})$ decay as it produces a term decaying as $O( [t^{*}]^{-1+})$. Note  that  this regime includes a subregime when $m^2 - \omega^2 \approx L^{-2}$ corresponding to the coalescence of the two ``large $s$'' turning points, and also the regime  $m^2 - \omega^2 \gtrsim  L^{-2}$, where the two ``large $s$'' turning points disappear.
			
			\item \label{ii}  $m - L^{-p}<|\omega| < m$. This regime turns out to be the most singular, and the slowest contributor to the  $O( [t^{*}]^{-\frac{5}{6}})$ decay.
		\end{enumerate}

	\end{enumerate}  In what follows, we briefly discuss our methods of proof for each regime separately.
	
	\paragraph{Large frequencies (Regime~\ref{A})} On the technical level, our proof in the $|\omega|> m$ regime relies on the introduction of  a WKB coordinate $\zeta(s)$ (see \cite{olver}), recalling that $|V(r)-\omega^2| \approx \omega^2-m^2 + \frac{2Mm^2}{r}$ for $r\gg1$: \begin{align}\label{WKB.intro}
		&\frac{d\zeta}{ds} = \sqrt{ \omega^2-m^2+\frac{2Mm^2}{r(s)}} \text{ for large  } s\gg 1,\\ & v_L= | \omega^2-m^2+\frac{2Mm^2}{r(s)}|^{-1/4} u_L \text{ for large  } s\gg 1, \label{WKB.intro2}
	\end{align} The terminology comes from the (approximately) $\sqrt{|V-\omega^2|}$ factor in \eqref{WKB.intro} (and the $|V-\omega^2|^{-1/4}$ factor in  \eqref{WKB.intro2}) reminiscent of the classical WKB analysis. 
	The ODE satisfied by $v_L$ in $\zeta$ coordinates yields an effective potential $W(\zeta)\approx 1$   near infinity (see \cite{olver}), which gets rid of  singular behavior near infinity.
	Exploiting the simple form of the ODE, we use a microlocal multiplier method to prove that $v_L \in W^{1,2-}_{\omega}((m,2m])$, which  gives  $O([t^{*}]^{-1})$ decay for its  inverse-Fourier transform. It is likely that smoothness in $\omega$ of $v_L$ can also be obtained using our microlocal multiplier method (cf.~\cite{Ethan1} who obtained smoothness for the cutoff resolvent $R\left(E\pm i0\right)$ in the $E \to 0^+$ limit for Schr\"{o}dinger operators  with Coulomb-like potentials on general asymptotically conic manifolds using different methods), though we do not pursue this as it would not improve our final result as stated in Theorem~\ref{thm.intro}.

	\paragraph{Low-frequencies away from $m$ (Regime~\ref{i})} The main delicate analysis in the $|\omega |< m - L^{-p}$  regime  takes place when   $L^{-p}< m-|\omega |\ls  L^{-2}$ (presence of two ``large $s$'' turning points). Our strategy relies on the analysis near infinity of a hydrogen-atom like elliptic operator of the following form \begin{equation}\label{elliptic.intro}
		-\frac{d^2}{dr^2} -\frac{2M m^2}{r^2} + \frac{L(L+1)}{r^2} + O(r^{-2}), \text{ with eigenvalues } \lambda_n = \frac{(Mm^2)^2}{n+L^2} + O\left(e^{-DL}\right),\ n\in \mathbb{N}.
	\end{equation} 
	
	We show point-wise  regularity estimates in $\omega$ for $v_L$ \emph{away from these eigenvalues}, more precisely if \begin{equation}\label{close.intro}
		|m^2-\omega^2 -\lambda_n |\gtrsim e^{-\frac{L}{2} \log L},
	\end{equation} using microlocal multipliers. When \eqref{close.intro} is violated, we obtain weaker point-wise regularity estimates in $\omega$; however, because the $\omega$-size of these intervals  is $O( e^{-\frac{L}{2} \log L} )$ small, we get \emph{improved integrated} regularity in $\omega$, which allows us to show  $O([ t^{*}]^{-1})$ decay of the corresponding inverse-Fourier transform (after applying a  stationary phase argument).
	However, in the regime $m-|\omega |\ls e^{-C L}$, \eqref{elliptic.intro} is a poor approximation of the actual elliptic operator in \eqref{Schrodinger2}, due to $\exp( O(L))$ errors (see already Remark~\ref{approx.poor}). Therefore, our multiplier method breaks down, which will warrant a more tailored analysis in the regime $m-|\omega|\ll L^{-p}$.
	
	\paragraph{Low-frequencies close to  $m$ (Regime~\ref{ii})} To address the regime $0<m-|\omega| \ls L^{-p}$, we revert to classical analysis, based on the WKB method and its extensions (see e.g. \cite{olver}). Our goal is to show, up to errors, that $u_L$ has the schematic form given by \eqref{resolvent.intro}. The main ideas of the WKB-analysis are as follows: \begin{itemize}
		\item the oscillating terms $\cos(\pi k)$, $\sin(\pi k)$ in \eqref{resolvent.intro} come from the transition between the second turning point  ($s\approx L^2$), and the third turning point at ($s\approx k^2$)  occurring in a classically allowed region (with  oscillatory behavior of the WKB solutions, see \cite{olver}). The key quantity in this WKB-transition is \begin{equation}
			\pi k(\omega) \approx \pi (m^2-\omega^2)^{-1/2} \approx \int_{L^2}^{k^2} |V|^{1/2}(s )ds \gg 1.
		\end{equation}
		\item The small constant  $\ep_L \approx \exp(-2L \log L + O(L)) $ in \eqref{resolvent.intro} comes from the transition between the second turning point ($s\approx L^2$) and the compact region ($s\ls 1$) on which the initial data is supported  occurring  in a classically forbidden region. This context is that of the wave penetration through a barrier (see \cite{olver}), with exponential behavior of the WKB solutions.  The key quantity in this WKB-transition is \begin{equation} 
			\int_{1}^{L^2} |V|^{1/2}(s )ds \approx \int_{-\infty}^{L^2} |V|^{1/2}(s )ds \approx L \log  L + O(L), \text{ with } L \gg 1.
		\end{equation}We also emphasize that the \emph{energy identity} is crucial to get the lower bound  $\ep_L\gtrsim   \exp(-2L \log L + O(L)) $. 
		
		\item The  $\exp(L \cdot f_{data}(r))$ term in \eqref{resolvent.intro} comes from the fundamental solutions of \eqref{Schrodinger2}  behaving as $$ r^{-L}=\exp(- L \log r)\ \& \ r^{L+1}=\exp( (L+1) \log r) \text{ in the compact } r \text{ region.}$$

	\end{itemize}
	The goal is  to take the inverse-Fourier transform of \eqref{resolvent.intro} to deduce time-decay,  using stationary phase. \begin{itemize}
		\item In \eqref{resolvent.intro}, if $\pi k$ is  far from an integer, \eqref{resolvent.intro} is of size $\exp(-4L \log L + O(L))$, which is summable in $L$.
		\item If $\pi k\in \mathbb{N}$,  \eqref{resolvent.intro} is of size $\exp(L \cdot f_{data}(r))$, which blows-up as $L\rightarrow +\infty$. However, \eqref{resolvent.intro} is pointwise large only if $|\pi k(\omega) - N| \ls \exp( -  L \log L)$, which is a small $\omega$-size region: thus, we show that the integration-in-$\omega$ from inverse-Fourier transform \emph{yields an estimate that it is summable in $L$}.
	\end{itemize}
	Our main result in the range $0<m-|\omega|\ls L^{-p}$ (Regime~\ref{ii}) is obtained in Theorem~\ref{TPsection.mainprop}, see   Section~\ref{regimeB2.section}. 	A detailed guide to the proof of Theorem~\ref{TPsection.mainprop}  is also given in subsection~\ref{strategy.section}.
	\subsection{Outline of the rest of the paper} In Section~\ref{Setup.section}, we provide the equations/transformations we will be using throughout the paper.
	The Fourier transform at fixed $\omega$ is estimated using ODEs: in Section~\ref{regimeA.section} (high frequencies $|\omega|>m$, Regime~\ref{A});  in Section~\ref{regimeB1.section} (low frequencies away from $m$: $ m-|\omega|\gtrsim L^{-p}$, Regime~\ref{i});  in Section~\ref{regimeB2.section} (low frequencies near $m$: $ m-|\omega|\ls L^{-p}$, Regime~\ref{ii}). In particular, the proof of Theorem~\ref{quasimode.thm} follows from the analysis in Section~\ref{regimeB2.section}.  In Section~\ref{fourier.section}, we take the inverse-Fourier transform of the solution previously analyzed in Sections~\ref{regimeA.section}, ~\ref{regimeB1.section}, and \ref{regimeB2.section} and deduce decay-in-time, concluding the proof of Theorem~\ref{thm.intro}.  Appendix~\ref{bessel.section} contains  a discussion of Bessel functions,  Appendix~\ref{volterra.section} of Volterra equations   and  Appendix~\ref{appendix.conj} of the Exponent Pair Conjecture.
	
	\subsection{Acknowledgments} MVdM thanks Sagun Chanillo and Henryk Iwaniec for insightful conversations on exponential sums and the exponent pair conjecture. 
	YS acknowledges support from an Alfred P. Sloan Fellowship in Mathematics and from NSERC discovery grants RGPIN-2021-02562 and DGECR-2021-00093.	MVdM gratefully acknowledges support from the NSF Grant	 DMS-2247376.
	\section{Preliminaries}\label{Setup.section}
	\paragraph{Equations in physical space}	To be as general as possible, we consider \eqref{KG.intro} with a  source $F$, i.e.\ \begin{equation}\label{KG.prelim}
		\Box_{g_{RN}} \phi =m^2 \phi  + F.
	\end{equation} \eqref{KG.prelim} in  $(t,r,\theta,\varphi)$ coordinates gives, defining $\psi=r\phi=  \sum_{L \in \mathbb{N}} \psi_L(t,r) Y_L(\theta,\varphi)$, where $Y_L(\theta,\varphi)$ is the spherical harmonics on $\mathbb{S}^2$: writing also  $F=\sum_{L \in \mathbb{N}} F_L(t,r) Y_L(\theta,\varphi)$, we  obtain
	\begin{equation}\label{eq}\begin{split}
			&	- \p_t^2 \psi_L + 		(1-\frac{2M}{r}+\frac{\DD}{r^2})\frac{d}{dr}\left( (1-\frac{2M}{r}+\frac{\DD}{r^2})^{-1}\frac{d}{dr} \psi_L \right)  \\ = & \Big(1-\frac{2M}{r}+\frac{\DD^2}{r^2}\Big)\Big(m^2 + \frac{L (L+1)}{r^2} +\frac{2rM -2\DD^2}{r^4}\Big)\psi + \underbrace{H_L}_{:= 	(1-\frac{2M}{r}+\frac{\DD}{r^2}) F_L}.\end{split}
	\end{equation} Introducing a coordinate $s \in (-\infty,+\infty)$ satisfying \begin{equation}\label{s.definition}
		\frac{ds}{dr}=  1-\frac{2M}{r}+\frac{\DD}{r^2},
	\end{equation} \eqref{eq} becomes

	\begin{equation}\label{eq:main}\begin{split}
			- \p_t^2 \psi_L + \p^2_{s} \psi_L   =  \Big(1-\frac{2M}{r}+\frac{\DD^2}{r^2}\Big)\Big(m^2 + \frac{L (L+1)}{r^2} +\frac{2rM -2\DD^2}{r^4}\Big)\psi_L + H_L.\end{split}		\end{equation}
	
	\paragraph{Fourier transform}	Now, letting $u_L(\omega,s) = \int_{\R} e^{i\omega t} \psi_L(t,s) dt$, we obtain that $u_L$ satisfies the equation

	\begin{equation}\label{eq:mainVrhs}
		\p^2_{s} u -V(s) u = H, 
	\end{equation} where $H= H_L$, or 	\begin{equation}\label{eq:mainV}
		\p^2_{s} u_L =V(s) u_L 
	\end{equation} without the source $H_L$, where \begin{equation}
		V(s) = -\omega^2+ \Big(1-\frac{2M}{r}+\frac{\DD^2}{r^2}\Big)\Big(m^2 + \frac{L (L+1)}{r^2} +\frac{2rM -2\DD^2}{r^4}\Big).
	\end{equation} 
	When $\omega > m$, we say a solution $u$ to~\eqref{eq:mainVrhs} satisfies outgoing boundary conditions if 
	\begin{equation}\label{outgoingboundary}
		u \sim e^{-i\omega s}\text{ as }s\to -\infty,\qquad u\sim r^{\frac{iMm^2}{\sqrt{\omega^2-m^2}}}e^{i\sqrt{\omega^2-m^2}s}\text{ as }s\to \infty,
	\end{equation}
	where, as usual, the boundary condition as $s\to\infty$ must be interpreted as an asymptotic expansion in $r$.
	
	When $m > \omega$,  we say a solution $u$ to~\eqref{eq:mainVrhs} satisfies outgoing boundary conditions if 
	\begin{equation}\label{outgoingboundary2}
		u \sim e^{-i\omega s}\text{ as }s\to -\infty,\qquad u\sim r^{\frac{Mm^2}{\sqrt{m^2-\omega^2}}}e^{-\sqrt{m^2-\omega^2}s}\text{ as }s\to \infty,
	\end{equation}
	
	For two solutions $f(s)$ and $h(s)$ to the equation~\eqref{eq:mainV}, we define the corresponding Wronskian by
	\[W\left(f,h\right) \doteq \partial_sf h - f\partial_sh.\]
	Since $\partial_sW = 0$, the Wronskian is a constant.
	\paragraph{Reissner--Nordstr\"{o}m black hole geometry} We define $r_{\pm}(M,e)$ as the two roots of the polynomial $r^2(1-\frac{2M}{r}+\frac{\DD^2}{r^2})$ under the assumption $0\leq |e|<M$: they are both positive, distinct, and the largest $r_+$ correpsonds to the radius of the event horizon $\mathcal{H}^+=\{r=r_+(M,e)\}$:
	\begin{equation}\label{thisisrplus}
		r_{\pm}+ \doteq M \pm+ \sqrt{M^2 - \mathbf{e}^2}.
	\end{equation}

	Noting that \eqref{s.definition} only defines $s$ up to a constant, we fix this constant and define define \begin{equation}\label{s.def}
		s= r+ (2\K)^{-1} \log( [r-r_-][r- r_+]),
	\end{equation}
	where $\kappa_+>0$ is the surface gravity of the event horizon defined as:
	\[\kappa_+ \doteq \frac{1}{2}\frac{d}{dr}\left(1-\frac{2M}{r} + \frac{e^2}{r^2}\right)|_{r=r_+}.\]
	Note that for any $q\in \mathbb{N}$: \begin{equation}\label{s.r.diff}
		r^{-q} - s^{-q} = \frac{ q}{\K} s^{-q-1} \log(s) + O( s^{-q-2}\log(s)) 
	\end{equation}
	We also note that \begin{equation}\label{C+.def}
		1-\frac{2M}{r}+\frac{\DD^2}{r^2} \sim  \underbrace{\frac{e^{-2\K r_+}}{r_+^2}}_{:= C_+>0} e^{2\K s} \text{ as } s \rightarrow -\infty.
	\end{equation}
	
	\paragraph{Horizon-penetrating coordinates} Since the $t$ coordinate degenerates at the event horizon, we introduce the $\left(t^*,r,\theta,\varphi\right)$ coordinate system just as in Section 2.1 of \cite{KGSchw1}. We will have
	\begin{equation}\label{thisisps}		
		t^* = t + p(s),
	\end{equation}
	for a suitable function $p(s)$,
	See~\cite{KGSchw1} for the specific formulas.
	
	\paragraph{Conventions}	We introduce the following two conventions throughout the paper:
	\begin{enumerate}
		\item Unless said otherwise, all implied constants may depend on $m$, $M$, or $\mathbf{e}$. In particular, we use the notation $A \ls B$ to imply that there exists a constant $C(m,M,\DD)>0$ such that $A \leq C B$.
		
		\item We also use the notation $A\approx B$ if there exist constants $C^{\pm}(m,M,\DD)>0$ such that $ C^- B \leq A \leq C^+ B$.
		\item Unless said otherwise, the spherical harmonic number $L$ may be assumed sufficiently large relative to any other constants which are introduced.
	\end{enumerate}

	\section{The high-frequencies}\label{regimeA.section}
	In this section we will study solutions to~\eqref{eq:mainVrhs} satisfying outgoing boundary conditions~\eqref{outgoingboundary} under the additional assumptions that

	\begin{equation}\label{omegabiggerthanm}
		\omega - m > 0,
	\end{equation}
	\begin{equation}\label{Llarge}
		L \gg 1.
	\end{equation}

	We have two main results for this section. The first is the following:
	\begin{prop}\label{almosthereexcepthorizon}
		Let $u$ be a solution to \eqref{eq:mainVrhs} satisfying outgoing boundary conditions~\eqref{outgoingboundary} under the additional assumptions that~\eqref{omegabiggerthanm} and~\eqref{Llarge} hold. Furthermore, assume that there exists a large positive constant $R_0$ so that $H$ vanishes for $r \geq R_0$. Then, for any $\mathring{s} > 0$, we have that 
		\begin{align}\label{toprovewhenomegaissortofbig}
			&\sum_{j=0}^1\sup_{s \in [-\mathring{s},\mathring{s}]}\left|\partial^j_{\omega}u\right|^2 
			\\ \nonumber &\qquad \lesssim_{\mathring{s},R_0}
			L^6 \left(\omega^8+\left(\omega^2-m^2\right)^{-1/2-5\delta}\right)\int_{-\infty}^{\infty}\left(1+|s|\right)^{1+\delta}\left[\left|H\right|^2 +   [1+(-s)_+^2]\left|\partial_sH\right|^2 + \left|\partial_{\omega}H\right|^2\right]\, ds
		\end{align}
	\end{prop}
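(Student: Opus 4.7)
My approach is to combine a variation-of-parameters formula against two outgoing fundamental solutions of the homogeneous equation~\eqref{eq:mainV} with the microlocal WKB analysis sketched in Section~\ref{intro.proof.section}. I plan to construct $u_-$ characterized by $u_-\sim e^{-i\omega s}$ as $s\to-\infty$ (via the classical exponential-weight Volterra construction based on the redshift decay~\eqref{C+.def} of $V-(-\omega^2)$ as $s\to-\infty$, cf.~\cite{Schlag_exp,Schlag2}), and $u_+$ characterized by the outgoing asymptotics~\eqref{outgoingboundary} as $s\to+\infty$, and then express the solution $u$ to~\eqref{eq:mainVrhs} by
\begin{equation*}
u(s) = W(u_-,u_+)^{-1}\Bigl(u_+(s)\int_{-\infty}^{s}u_-(s')H(s')\,ds' + u_-(s)\int_{s}^{+\infty}u_+(s')H(s')\,ds'\Bigr).
\end{equation*}
Since $W(u_-,u_+)$ is $s$-independent, I may evaluate it at $s\to-\infty$ where $u_-$ is explicit; the resulting lower bound on $|W|$ is polynomial in $L$ and in a negative power of $\sqrt{\omega^2-m^2}$, and accounts both for the $L^6$ factor and for part of the $(\omega^2-m^2)^{-1/2-5\delta}$ loss in~\eqref{toprovewhenomegaissortofbig}. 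A single integration by parts in $s$ in the near-horizon Volterra representation exchanges $H$ for $\partial_sH$ at the price of a linear-in-$s$ factor coming from $u_-\sim e^{-i\omega s}$, which is what produces the $(-s)_+^2$ weight on the $\partial_sH$ term on the right-hand side.

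The crux is the construction and uniform-in-$(\omega,L)$ estimation of $u_+$ in the large-$s$ region $s\geq R_0+1$, where $V(s)\approx-\omega^2+m^2-2Mm^2/r+O(L^2 r^{-2})$. Here I would introduce the WKB coordinate $\zeta(s)$ and rescaled unknown $v(\zeta)$ as in~\eqref{WKB.intro}--\eqref{WKB.intro2}, so that the equation for $v$ is schematically $\partial_\zeta^2 v-v = \widetilde W(\zeta)v$ with $\widetilde W(\zeta)=O(\zeta^{-2})$ uniformly as $\zeta\to\infty$, and the outgoing boundary condition reduces to $v\sim e^{i\zeta}$ with no singular prefactor. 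The normalized form of this equation lets me deploy a microlocal energy multiplier $\chi(\zeta)\partial_\zeta v\,\overline v+\text{c.c.}$ with weight $\chi(\zeta)=(1+\zeta)^{-1-\delta}$, yielding a weighted $H^1_\zeta$ bound on $v$ in terms of the source. Changing variables back to $s$ costs a factor $d\zeta/ds\to\sqrt{\omega^2-m^2}$ at infinity, and integrating against the physical-space weight $(1+|s|)^{1+\delta}$ is precisely what produces the exponent $-\tfrac12-5\delta$ in~\eqref{toprovewhenomegaissortofbig} (the half-integer from the single $\zeta$-derivative in the multiplier, the $5\delta$ losses from repeated weight absorptions). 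A Sobolev embedding $W^{1,2}_\zeta\hookrightarrow L^\infty_\zeta$ on the compact region upgrades the resulting $L^2$ bound to the pointwise estimate claimed on the left-hand side.

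For the $\partial_\omega u$ contribution to~\eqref{toprovewhenomegaissortofbig}, I would differentiate the variation-of-parameters identity in $\omega$. The $\partial_\omega H$ term arises from differentiating the integrands against $H$, while $\partial_\omega u_\pm$ solves $(\partial_s^2-V)\partial_\omega u_\pm = -2\omega u_\pm$ with $\omega$-differentiated outgoing data, which I would analyze by the same WKB+Volterra recipe using the already-obtained bounds on $u_\pm$ as the source; this is where the $\omega^8$ factor enters, via the squared polynomial-in-$\omega$ cost of the two fundamental solutions combined with the $-2\omega u_\pm$ forcing. The main technical obstacle is controlling $\partial_\omega u_+$ uniformly as $\omega\to m^+$: a naive differentiation of~\eqref{outgoingboundary} produces the prefactor $r^{iMm^2/\sqrt{\omega^2-m^2}}$, whose $\omega$-derivative grows like $(\omega^2-m^2)^{-3/2}\log r$, much worse than the claimed exponent. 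To sidestep this I would differentiate the WKB-renormalized unknown $v$ (whose outgoing condition $v\sim e^{i\zeta}$ carries no singular prefactor) rather than $u$ itself, convert back through $u=|V-\omega^2|^{-1/4}v$ only at the end, and carefully bookkeep the $\zeta\leftrightarrow s$ change of variables. This renormalization, combined with the weighted multiplier estimate above, is what reconciles the outcome with the exponent $-\tfrac12-5\delta$ claimed in~\eqref{toprovewhenomegaissortofbig}.
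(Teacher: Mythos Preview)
Your approach via a variation-of-parameters formula against fundamental solutions $u_\pm$ is genuinely different from the paper's, which is purely multiplier-based and never constructs $u_\pm$ or appeals to a Wronskian lower bound. After passing to the WKB coordinate $\xi$ and setting $v=P^{1/4}u$ (Lemma~\ref{neweqninxicoord}), the paper establishes weighted Morawetz estimates for $v$ directly (Lemmas~\ref{mora1} and~\ref{mora2}), then commutes with the modified vector field $-\chi\tfrac{\xi}{\omega}\partial_\xi+\partial_{\tilde\omega}$ to produce $\mathring v$, bounds the new right-hand side $\mathring H$ using the already-established Morawetz bound on $v$ together with $|\partial_{\tilde\omega}W|\lesssim L^2\xi^{-1-10\delta}(\omega^2-m^2)^{-1/2-5\delta}$ (Lemma~\ref{iguessthisisgoodenough}), reapplies Morawetz to $\mathring v$, and Sobolev-embeds. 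The $L$-powers and the exponent $-\tfrac12-5\delta$ come entirely from this commutation step, and the $[1+(-s)_+^2]$ weight on $\partial_sH$ comes from the $-\chi\tfrac{\xi}{\omega}\partial_\xi\check H$ contribution to $\mathring H$, not from a near-horizon integration by parts.

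There is a genuine gap in your proposal. First, the energy identity actually gives $|W(u_-,u_+)|\gtrsim\sqrt{\omega}\,(\omega^2-m^2)^{1/4}$ with \emph{no} $L$-dependence, so your attribution of the $L^6$ loss to the Wronskian is incorrect. More importantly, you analyze only the region $s\geq R_0+1$ and the near-horizon region, and say nothing about the finite-$s$ region containing $r_{\rm crit}$. In the regime $m<\omega\lesssim L$ the potential $V$ is positive on a large interval there (unstable null trapping), and each of $u_\pm$ individually grows by a factor $\exp\bigl(\int\sqrt V\,ds\bigr)$ across it; the Green's function $W^{-1}u_+(s_>)u_-(s_<)$ stays bounded only by cancellation, and showing that this cancellation survives $\partial_\omega$ with merely polynomial-in-$L$ loss is precisely the nontrivial content of the estimate in that frequency range. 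Your WKB renormalization is inactive there (since $\xi=s$ for $s\leq S_\xi$), and you offer no substitute mechanism. The paper handles this regime with the $f$-multiplier of Lemma~\ref{mora2} (vanishing at $r_{\rm crit}$, exploiting Lemma~\ref{critv}) together with the interpolation Lemma~\ref{interpolatetrap}; a Green's-function argument would need an equally careful treatment of the barrier, which your proposal does not supply.
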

	
	Our second estimate establishes an estimate which is uniformly up to $r=r_+$.
	\begin{prop}\label{highfreqest} Let $u$ be a solution to \eqref{eq:mainVrhs} satisfying outgoing boundary conditions~\eqref{outgoingboundary} under the additional assumptions that~\eqref{omegabiggerthanm} and~\eqref{Llarge} hold. Furthermore, assume that there exists a large positive constant $R_0$ so that $H$ vanishes for $r \geq R_0$. Then 
		
		\begin{align*}
			&\sum_{j=0}^1\sup_{s \leq 0}\left|\partial^j_{\omega}\left(e^{i\omega s}u\right)\right|
			\\ \nonumber &\qquad \lesssim_{R_0}  L^6\left(\omega^8+\left(\omega^2-m^2\right)^{-1/2-5\delta}\right)\int_{-\infty}^{\infty}\left(1+|s|\right)^{1+\delta}\left[\left|H\right|^2 + [1+(-s)_+^2]\left|\partial_sH\right|^2 + \left|\partial_{\omega}H\right|^2\right]\, ds
			\\ \nonumber &\qquad \qquad +   L^4 \omega^8\int_{-\infty}^{\infty}\left[\left|\partial_{\omega}\left(e^{i\omega s}{H}\right)\right|^2 + \left|{H}\right|^2\right](r-r_+)^{-1}\, ds.
		\end{align*}
	\end{prop}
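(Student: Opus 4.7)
The plan is to combine Proposition~\ref{almosthereexcepthorizon}, which already handles the bounded interval $[-\mathring s,0]$, with a separate near-horizon ODE analysis on $(-\infty,-\mathring s]$; the first line of the right-hand side is then nothing but the estimate of Proposition~\ref{almosthereexcepthorizon} applied at $s=-\mathring s$, and only the second line requires additional work. The key quantitative input is that, by~\eqref{pot.asymp}--\eqref{C+.def}, the ``horizon-normalized'' potential $\tilde V(s):=V(s)+\omega^2$ satisfies $|\tilde V(s)|\ls L^2 e^{2\K s}$. Hence on $(-\infty,-\mathring s]$ the homogeneous ODE in~\eqref{eq:mainVrhs} is an exponentially small perturbation of $u''+\omega^2 u=0$, whose fundamental solutions are $e^{\pm i\omega s}$, and a Volterra iteration will converge once $\mathring s$ is large enough.

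\textbf{The $j=0$ estimate via a Volterra formulation.} Setting $w:=(\rd_s+i\omega)u$, the outgoing boundary condition~\eqref{outgoingboundary} is equivalent to $w(s)\to 0$ as $s\to-\infty$, and $w$ solves $(\rd_s-i\omega)w=\tilde V u+H$; integration from $-\infty$ gives
\[w(s)=\int_{-\infty}^{s}e^{i\omega(s-s')}\bigl(\tilde V u+H\bigr)(s')\,ds',\qquad \rd_s\bigl(e^{i\omega s}u\bigr)=e^{i\omega s}w,\]
so integrating the second identity down from $s=-\mathring s$ expresses $e^{i\omega s}u(s)$ in terms of $u(-\mathring s)$ and a double integral of $\tilde V u+H$. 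Using $|e^{i\omega s}u|=|u|$, the bound $\int_{-\infty}^{0}|\tilde V|\,ds\ls L^2$, and the Gronwall/Volterra lemma of Appendix~\ref{volterra.section}, I would obtain a pointwise bound on $|e^{i\omega s}u(s)|$ for $s\leq-\mathring s$. The boundary value $|u(-\mathring s)|$ is absorbed via Proposition~\ref{almosthereexcepthorizon}, accounting for the first line of the right-hand side, while the weighted $L^2$-norm of $H$ feeds into the second line after one Cauchy--Schwarz in $s'$, which is exactly what forces the horizon-blowing-up weight $(r-r_+)^{-1}\sim e^{-2\K s}$.

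\textbf{The $\omega$-derivative.} Differentiating the Volterra formula for $w$ in $\omega$ and using $\rd_\omega\tilde V=0$ yields
\[\rd_\omega w(s)=\int_{-\infty}^{s}e^{i\omega(s-s')}\Bigl[i(s-s')\bigl(\tilde V u+H\bigr)+\tilde V\,\rd_\omega u+\rd_\omega H\Bigr](s')\,ds',\]
which, together with $\rd_\omega(e^{i\omega s}u)=is\,e^{i\omega s}u+e^{i\omega s}\rd_\omega u$, closes a coupled Volterra system for $\bigl(\rd_\omega(e^{i\omega s}u),\rd_\omega w\bigr)$. The factor $i(s-s')$ in front of $H$ is precisely $\rd_\omega e^{i\omega(s-s')}$, so regrouping the $H$-contributions produces in the kernel the quantity $e^{i\omega s}\rd_\omega(e^{-i\omega s'}H(s'))$; Cauchy--Schwarz in $s'$ on $(-\infty,s]$ then forces the weight $e^{-2\K s'}\sim(r(s')-r_+)^{-1}$ needed to make the corresponding $L^2_{s'}$-norm of $\rd_\omega(e^{i\omega s}H)$ finite, which is the origin of the second line of the estimate. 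The individually linearly-growing-in-$|s|$ pieces $is\,e^{i\omega s}u$ and $e^{i\omega s}\rd_\omega u$ (each unbounded as $s\to-\infty$, since $u\sim C(\omega)e^{-i\omega s}$ gives $\rd_\omega u\sim C'(\omega)e^{-i\omega s}-isC(\omega)e^{-i\omega s}$) cancel in the sum, consistent with $\rd_\omega(e^{i\omega s}u)\to C'(\omega)$ being bounded.

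\textbf{Main obstacle.} The delicate point is executing this cancellation quantitatively \emph{through} the coupled Volterra system, rather than merely at leading order as $s\to-\infty$, while keeping the $L$- and $\omega$-dependence of the constants sharp enough to match the claimed powers $L^4\omega^8$ on the second line and, crucially, to ensure that the singular factor $(\omega^2-m^2)^{-1/2-5\delta}$ inherited from Proposition~\ref{almosthereexcepthorizon} (which shows up in the interior contribution absorbed from $|u(-\mathring s)|$ and $|\rd_\omega u(-\mathring s)|$) does \emph{not} reappear on the horizon line: near the threshold $\omega\to m^+$, the cross term $\tilde V\,\rd_\omega u$ in the Volterra equation for $\rd_\omega w$ is the main source of potential loss and must be bounded using the exponential smallness of $\tilde V$ rather than any inverse powers of $\omega^2-m^2$.
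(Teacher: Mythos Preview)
Your approach is viable but the paper takes a different route that sidesteps precisely the ``main obstacle'' you flag. Instead of a Volterra iteration, the paper first applies the redshift multiplier~\eqref{redmicro} with $z\sim -\bigl(L^{-2}(\omega^2+W)\bigr)^{-1}\bigl(1+(-\xi)^{-\delta}\bigr)$ near the horizon to obtain the weighted integrated estimate (recall $\xi=s$ for $s\le0$)
\[
\int_{-\infty}^0\Bigl[(r-r_+)^{-1}\bigl|\partial_\xi v+i\omega v\bigr|^2 + L^2(1+|\xi|)^{-1-\delta}|v|^2\Bigr]\,d\xi \;\lesssim\; (1+\omega^2)\int_{-\infty}^\infty (r-r_+)^{-1}|\check H|^2\,d\xi.
\]
It then works directly with the already-cancelled quantity $\tilde v := e^{-i\omega\xi}\partial_{\tilde\omega}\bigl(e^{i\omega\xi}v\bigr)$, which (using $\partial_{\tilde\omega}W=-2\omega$ for $\xi\le0$, exactly your observation that $\partial_\omega\tilde V=0$) satisfies
\[
\partial_\xi^2\tilde v - W\tilde v \;=\; e^{-i\omega\xi}\partial_{\tilde\omega}\bigl(e^{i\omega\xi}\check H\bigr) + 2i\bigl(\partial_\xi v+i\omega v\bigr).
\]
The second right-hand term is precisely your $w$, and its $(r-r_+)^{-1}$-weighted $L^2$ norm has just been controlled. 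Applying the \emph{same} redshift estimate to $\tilde v$'s equation (after a cut-off), and then using $\partial_\xi\tilde v+i\omega\tilde v = e^{-i\omega\xi}\partial_\xi\bigl(\partial_{\tilde\omega}(e^{i\omega\xi}v)\bigr)$ together with the fundamental theorem of calculus, yields the desired pointwise bound on $\sup_{s\le0}|\partial_\omega(e^{i\omega s}u)|$; the interior boundary term near $\xi=0$ is absorbed via Proposition~\ref{almosthereexcepthorizon}.

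What each approach buys: the paper's commutation makes the cancellation you worry about \emph{automatic} --- $\tilde v$ \emph{is} the cancelled combination from the outset, so there is no coupled system to disentangle --- and the two-pass redshift structure (first for $v$, then for $\tilde v$) is exactly what guarantees that the singular factor $(\omega^2-m^2)^{-1/2-5\delta}$ does not contaminate the horizon line. Your Volterra scheme is more elementary in that it avoids the multiplier machinery, but it needs $\mathring s\gtrsim\log L$ to force the iteration to contract (since $\int_{-\infty}^{-\mathring s}|\tilde V|\,ds\lesssim L^2e^{-2\kappa_+\mathring s}$), which feeds mild $\mathring s$-dependence back through Proposition~\ref{almosthereexcepthorizon}, and still leaves you with the coupled bookkeeping you correctly identify as delicate.
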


	\subsection{Critical points of the potential}\label{acriticalsection}
	It is useful to split the potential $V$ into three parts
	\[V = -\left(\omega^2-m^2\right) + L(L+1)V_{\rm main} + V_{\rm junk},\]
	\[V_{\rm main} \doteq r^{-2}\left(1-\frac{2M}{r} + \frac{e^2}{r^2}\right),\]
	\[V_{\rm junk} \doteq \left(-\frac{2M}{r}+ \frac{\mathbf{e}^2}{r^2}\right)m^2 + \left(1-\frac{2M}{r} + \frac{e^2}{r^2}\right)\left(\frac{2rM - 2\mathbf{e}^2}{r^4}\right).\]
	The point of this decomposition is that $V_{\rm junk}$ does not contain any parameters which may be large. (For large $r \gg 1$ however, we have $V_{\rm junk} \sim -\frac{2Mm^2}{r}$ which,  when $r \gg L$m will dominate $V_{\rm main}$.)
	
	The key properties of $V_{\rm main}$ are captured by the following lemma whose straightforward proof we omit.
	\begin{lemma}\label{critv}The function $V_{\rm main}(r) : [r_+,\infty) \to (0,\infty)$ has a unique critical point at 
		\[r_{\rm crit} \doteq  \frac{3M + \sqrt{9M^2-8Q^2}}{2},\]
		where it has a non-degenerate maximum.
		
		For $r \in [r_+,r_{\rm crit}]$ we have the lower bound
		\[\frac{dV_{\rm main}}{dr} \gtrsim \left|r-r_{\rm crit}\right|,\]
		and for $r \in [r_{\rm crit},\infty)$ we have the lower bound
		\[-\frac{dV_{\rm main}}{dr} \gtrsim r^{-4}\left|r-r_{\rm crit}\right|\]
	\end{lemma}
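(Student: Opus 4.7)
\bigskip

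The plan is to compute $\frac{dV_{\rm main}}{dr}$ explicitly, factor it, and read off all the claimed properties. Writing $V_{\rm main}(r) = r^{-2} - 2Mr^{-3} + e^2 r^{-4}$, differentiation gives
\[
\frac{dV_{\rm main}}{dr} = -\frac{2}{r^5}\bigl(r^2 - 3Mr + 2e^2\bigr),
\]
so the critical points of $V_{\rm main}$ on $(0,\infty)$ are the two roots $r_{\rm crit}^{\pm} = \frac{1}{2}\bigl(3M \pm \sqrt{9M^2 - 8e^2}\bigr)$ of the quadratic in parentheses, which are real and positive thanks to the sub-extremality $|e|<M$. The first thing I would check is that the smaller root $r_{\rm crit}^-$ lies strictly below the horizon $r_+ = M + \sqrt{M^2-e^2}$; after clearing the square roots, this reduces to the elementary inequality $-M \leq \sqrt{M^2-e^2}$, which is trivially true. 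Thus $r_{\rm crit}=r_{\rm crit}^+$ is the unique critical point in $[r_+,\infty)$.

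Next I would rewrite the derivative as
\[
\frac{dV_{\rm main}}{dr} = -\frac{2}{r^5}(r - r_{\rm crit})(r - r_{\rm crit}^-),
\]
from which the sign analysis is transparent: on $(r_{\rm crit}^-, r_{\rm crit})$ the derivative is positive, and on $(r_{\rm crit},\infty)$ it is negative, so $r_{\rm crit}$ is a local maximum on $[r_+,\infty)$. Non-degeneracy follows because the zero of the quadratic $r^2 - 3Mr + 2e^2$ at $r_{\rm crit}$ is simple whenever $9M^2 - 8e^2 > 0$, which is guaranteed by sub-extremality, so $V_{\rm main}''(r_{\rm crit}) \neq 0$.

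For the quantitative lower bounds, on $[r_+, r_{\rm crit}]$ I would bound the factors individually: $r - r_{\rm crit}^- \geq r_+ - r_{\rm crit}^- > 0$ (a constant depending only on $M,e$) and $r^{-5} \geq r_{\rm crit}^{-5}$, so
\[
\frac{dV_{\rm main}}{dr} = \frac{2}{r^5}(r_{\rm crit}-r)(r-r_{\rm crit}^-) \gtrsim (r_{\rm crit} - r) = |r-r_{\rm crit}|.
\]
On $[r_{\rm crit},\infty)$ the key observation is that $r - r_{\rm crit}^- = r\bigl(1 - r_{\rm crit}^-/r\bigr) \geq r \bigl(1 - r_{\rm crit}^-/r_{\rm crit}\bigr) \gtrsim r$, so
\[
-\frac{dV_{\rm main}}{dr} = \frac{2}{r^5}(r - r_{\rm crit})(r - r_{\rm crit}^-) \gtrsim \frac{1}{r^4}\,|r-r_{\rm crit}|,
\]
which is the stated bound.

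There is no real obstacle here: the entire lemma reduces to factoring a quadratic in $r$ and comparing its roots to $r_+$. The only mildly delicate point is confirming $r_{\rm crit}^- \leq r_+$ in the full sub-extremal range, which is a one-line computation after squaring. Everything else is bookkeeping of the signs and sizes of the two linear factors $(r-r_{\rm crit})$ and $(r-r_{\rm crit}^-)$, together with the monotonicity of $r^{-5}$.
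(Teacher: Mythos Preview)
Your proof is correct and complete. The paper itself omits the proof entirely, calling it ``straightforward,'' so your explicit computation---factoring $\frac{dV_{\rm main}}{dr} = -\frac{2}{r^5}(r-r_{\rm crit})(r-r_{\rm crit}^-)$ and reading off the sign and size of each linear factor---is exactly the kind of argument the authors had in mind.
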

	As it well known, the structure of the critical points of $V_{\rm main}$ is  connected to the fact that all trapped null geodesics orbits are unstable.
	\subsection{The WKB coordinate $\xi$}
	It will be convenient to modify the $s$ coordinate when $r$ is large via a standard WKB approximation. We start by defining a preliminary function $P(\omega,s)$
	\begin{defn}\label{thisisP}Let $S_{\rm \xi}$ be a sufficiently large constant to be fixed later. Then we define the function $P\left(\omega,s\right)$ by
		\begin{enumerate}
			\item For $s \leq S_{\xi}$ we set $P = 1$.
			\item For $s \in [2 S_{\xi},\infty)$ we set $P = \omega^2-m^2+\frac{2Mm^2}{r}$. 
			\item For $s \in [S_{\xi},2  S_{\xi}]$ we just require that $P \geq \frac{1}{2}{\rm min}\left(1,\omega^2-m^2+\frac{2Mm^2}{r}\right)$, $P$ extends to a smooth function of $(s,\omega) \in [S_{\xi},2 S_{\xi}] \times (m-\delta_0,\infty)$ for some $\delta_0 > 0$ (possibly depending on $S_{\xi}$), $\left|\partial_s\log P\right| \lesssim S_{\xi}^{-1}\log(\omega)\omega$, and, $-\partial_s \log P \lesssim \omega^{-2}r^{-2}$. We also require that in this region $\left|\partial_{\omega}P\right| \lesssim \omega$.
		\end{enumerate}
	\end{defn}
	
	Next we use the function $P$ to define our modified coordinate $\xi$.
	\begin{defn}\label{thisisXi}Let $P$ be as in Definition~\ref{thisisP}. We then define $\xi\left(\omega,s\right)$ by
		\[\frac{\partial \xi}{\partial s} = \sqrt{P},\qquad \xi|_{s=0} = 0.\]
	\end{defn}
	
	In the next lemma we transform the equation~\eqref{eq:mainVrhs} into a convenient form in the $\xi$-coordinate.
	\begin{lemma}\label{neweqninxicoord}Let $u$ solve~\eqref{eq:mainVrhs} and that~\eqref{omegabiggerthanm} holds, and $P$ and $\xi$ be as in Definitions~\ref{thisisP} and~\ref{thisisXi}. Then, if we define
		\[v \doteq P^{1/4}u,\]
		we have that
		\[\partial_{\xi}^2v -Wv = P^{-3/4}H,\]
		where
		\[W \doteq W_{\rm main} + W_{\rm error},\]
		\[W_{\rm main} \doteq P^{-1}V,\]
		\[W_{\rm error} \doteq - P^{-3/4}\partial_s^2\left(P^{-1/4}\right).\]
	\end{lemma}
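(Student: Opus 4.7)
The plan is to prove Lemma \ref{neweqninxicoord} by a direct Liouville substitution followed by a change of independent variable, exploiting the fact that the exponent $1/4$ in the definition $v = P^{1/4}u$ is the classical choice that eliminates all first-order derivative terms. Since $P$ is smooth in $s$ (by Definition \ref{thisisP}) and strictly positive under assumption \eqref{omegabiggerthanm}, all the manipulations below are legitimate.

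First I would express $\partial_s u$ and $\partial_s^2 u$ in terms of $v$ via the product rule, obtaining
\[\partial_s^2 u = P^{-1/4}\partial_s^2 v + 2\,\partial_s(P^{-1/4})\,\partial_s v + \partial_s^2(P^{-1/4})\,v,\]
and substitute this into \eqref{eq:mainVrhs}. Next I would convert every $\partial_s$ to $\partial_{\xi}$ using Definition \ref{thisisXi}, namely $\partial_s = \sqrt{P}\,\partial_{\xi}$ and $\partial_s^2 = P\,\partial_{\xi}^2 + \tfrac{\partial_s P}{2\sqrt{P}}\,\partial_{\xi}$. Collecting terms, the coefficient of $\partial_{\xi} v$ becomes
\[P^{-1/4}\cdot\frac{\partial_s P}{2\sqrt{P}} + 2\sqrt{P}\,\partial_s(P^{-1/4}) \;=\; \frac{P^{-3/4}\partial_s P}{2} - \frac{P^{-3/4}\partial_s P}{2} \;=\; 0,\]
which is the key cancellation making the $1/4$ exponent work. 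What remains is
\[P^{3/4}\partial_{\xi}^2 v + \partial_s^2(P^{-1/4})\,v \;=\; V P^{-1/4}\,v + H.\]

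Finally I would divide by $P^{3/4}$ and rearrange, which yields exactly
\[\partial_{\xi}^2 v \;-\; \bigl[P^{-1}V \;-\; P^{-3/4}\partial_s^2(P^{-1/4})\bigr]\,v \;=\; P^{-3/4}H,\]
matching the claimed form with $W_{\rm main} = P^{-1}V$ and $W_{\rm error} = -P^{-3/4}\partial_s^2(P^{-1/4})$. There is no serious obstacle here: the result is a bookkeeping exercise in the chain rule, and the only point requiring any care is verifying the vanishing of the first-order coefficient, which is precisely what singles out the Liouville exponent $1/4$. The analytic content of the lemma is therefore not in its proof but in the subsequent use of the decomposition $W = W_{\rm main} + W_{\rm error}$ to exploit that $W_{\rm main} \to 1$ near infinity while $W_{\rm error}$ is a harmless lower-order correction.
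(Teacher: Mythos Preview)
Your proof is correct and is exactly the straightforward calculation the paper alludes to; you have simply written out the Liouville substitution that the paper omits. One inconsequential slip in your closing commentary: for large $s$ one has $V \approx -(\omega^2 - m^2 + 2Mm^2/r) = -P$, so $W_{\rm main} \to -1$ rather than $+1$, but this does not affect the proof of the lemma itself.
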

	\begin{proof}This is a straightforward calculation.
	\end{proof}

	It is convenient to introduce a new definition of outgoing boundary conditions for the $\xi$-variable which simply translates the original outgoing condition~\eqref{outgoingboundary} into the $\xi$-coordinate.
	\begin{defn}\label{outxi} We say that $v\left(\xi,\omega\right)$ is an outgoing solution if
		\begin{equation}\label{outgoingeqnforv}
			\partial_{\xi}^2v - Wv = \check{H},
		\end{equation}
		for $W$ as in Lemma~\ref{neweqninxicoord} and any function $\check{H}$, and if $v \sim e^{i\xi}$ as $\xi \to \infty$ and if $v \sim e^{-i\omega \xi}$ as $\xi \to -\infty$.
	\end{defn}
	
	In the following lemma, we provide a useful estimate for $\xi$.
	\begin{lemma}\label{somestuffaboutthexicoor}When $s \geq 2S_{\xi}$ there exists constants $c_0 > 0$ and $c_1 > 0$ and functions $d_0(\omega)$ and $d_1(\omega)$ so that 
		\[c_0r\left(\omega^2-m^2 + \frac{2Mm^2}{r}\right)^{1/2} + S_{\xi}d_0(\omega) \leq \xi \leq c_1r\left(\omega^2-m^2 + \frac{2Mm^2}{r}\right)^{1/2} + S_{\xi}d_1(\omega),\]
		\[\left|d_0\right| + \left|d_1\right| \lesssim \omega.\]
		The bounds for the constants $c_0$ and $c_1$ and the functions $d_0$ and $d_1$ are independent of $S_{\xi}$.
	\end{lemma}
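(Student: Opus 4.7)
The plan is to split the defining integral $\xi(s) = \int_0^s \sqrt{P(\omega,s')}\,ds'$ at $s = 2S_{\xi}$, bound the small-$s$ contribution directly from Definition~\ref{thisisP}, and then compare the remaining integral to the scalar quantity $\phi(r) \doteq r\sqrt{\omega^2 - m^2 + 2Mm^2/r}$ by working in the $r$-variable.

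First, write $\xi(s) = I_1 + I_2$ where $I_1 = \int_0^{2S_{\xi}}\sqrt{P}\,ds'$ and $I_2 = \int_{2S_{\xi}}^{s}\sqrt{\omega^2 - m^2 + 2Mm^2/r(s')}\,ds'$. From Definition~\ref{thisisP} we have $P \ls \max(1, \omega^2 - m^2 + 2Mm^2/r) \ls \omega^2$ (using $\omega > m$), so $0 \leq I_1 \ls S_{\xi}\omega$, and this contribution can be absorbed into the $S_\xi d_i(\omega)$ terms. For $I_2$, change variables from $s'$ to $r'$ via $ds' = (1 - 2M/r' + e^2/r'^2)\,dr'$; since $s \sim r$ as $r\to\infty$, by requiring $S_\xi$ large enough we can arrange that $r_0 \doteq r(2S_\xi)$ is as large as needed, in particular so that $(1 - 2M/r' + e^2/r'^2) \in [\tfrac12, 1]$ on $[r_0,\infty)$. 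Consequently $I_2$ is comparable, up to two universal constants, to $J \doteq \int_{r_0}^{r(s)}\sqrt{\omega^2 - m^2 + 2Mm^2/r'}\,dr'$.

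The heart of the argument is a direct comparison of $J$ with $\phi$. A computation gives
\[
\phi'(r) = \frac{\omega^2 - m^2 + Mm^2/r}{\sqrt{\omega^2 - m^2 + 2Mm^2/r}}.
\]
Since $\omega^2 - m^2 \geq 0$, one checks the elementary inequality $\tfrac12(\omega^2 - m^2 + 2Mm^2/r) \leq \omega^2 - m^2 + Mm^2/r \leq \omega^2 - m^2 + 2Mm^2/r$, which immediately yields
\[
\tfrac12\sqrt{\omega^2 - m^2 + 2Mm^2/r} \;\leq\; \phi'(r) \;\leq\; \sqrt{\omega^2 - m^2 + 2Mm^2/r}.
\]
Integrating on $[r_0, r(s)]$ gives $\phi(r(s)) - \phi(r_0) \leq J \leq 2\bigl(\phi(r(s)) - \phi(r_0)\bigr)$. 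The boundary piece is estimated by $\phi(r_0) \leq r_0\omega + \sqrt{2Mm^2 r_0} \ls S_\xi \omega$, using $r_0 \ls S_\xi$ and that $\omega$ is bounded below by $m$.

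Putting the pieces together, $\xi(s) = c\,\phi(r(s)) + O(S_\xi \omega)$, with a composite constant $c$ lying in a universal interval determined solely by the two comparisons above (the factor from $ds/dr$ and the factor from $\phi'(r) \asymp \sqrt{P}$). This yields the lemma with $c_0, c_1 > 0$ depending neither on $S_\xi$ nor on $\omega$, and with $|d_0(\omega)|, |d_1(\omega)| \ls \omega$, as required. There is no real obstacle beyond careful bookkeeping of the constants; the only genuine computation is the algebraic comparison of $\phi'(r)$ with $\sqrt{\omega^2 - m^2 + 2Mm^2/r}$, which is immediate from the positivity of $\omega^2 - m^2$.
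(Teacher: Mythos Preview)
Your proof is correct and takes a cleaner route than the paper's. The paper splits into two regimes according to the size of $r(\omega^2-m^2)$: when $r(\omega^2-m^2) \lesssim 1$ it bounds the integrand above and below by constant multiples of $R^{-1/2}$ and integrates explicitly; when $r(\omega^2-m^2) \gg 1$ it changes variables to $X = R(\omega^2-m^2)$ and estimates the resulting integral. Your argument avoids this case split entirely by differentiating the target expression $\phi(r) = r\sqrt{\omega^2-m^2+2Mm^2/r}$ directly and observing the elementary two-sided bound $\tfrac12\sqrt{P} \le \phi'(r) \le \sqrt{P}$, which follows from $\omega^2 - m^2 \ge 0$. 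This is both shorter and more transparent, since it shows in one stroke why the quantity $r\sqrt{\omega^2-m^2+2Mm^2/r}$ is the right comparison for $\xi$ uniformly across both regimes. One small remark: the Jacobian you invoke should be $ds'/dr' = (1-2M/r'+\mathbf{e}^2/r'^2)^{-1}$ for the tortoise coordinate, but since this factor lies in $[1,2]$ for $r' \ge r_0$ large, the conclusion is unaffected.
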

	\begin{proof}We have
		\[\xi\left(s,\omega\right) = \int_{2S_{\xi}}^s\left(\omega^2-m^2 + \frac{2Mm^2}{R}\right)^{1/2}\, dS + D\left(\omega\right),\]
		for a function $D\left(\omega\right)$ which satisfies
		\[\left|D\left(\omega\right)\right| \lesssim S_{\xi}\omega.\]
		We then have
		\[\left(2Mm^2\right)^{1/2}\int_{2S_{\xi}}^sR^{-1/2}\, dS + D(\omega) \leq \xi\left(s,\omega\right) \leq  (2Mm^2)^{1/2}\left(\frac{r(\omega^2-m^2)}{2Mm^2}+1\right)^{1/2}\int_{2 S_{\xi}}^s R^{-1/2}\, dS + D(\omega) \Rightarrow\]
		\[\left(2Mm^2\right)^{1/2}\left(s^{1/2} + O\left(S_{\xi}^{1/2}\right)\right)+ D(\omega) \leq \xi\left(s,\omega\right) \leq  (2Mm^2)^{1/2}\left(\frac{r(\omega^2-m^2)}{2Mm^2}+1\right)^{1/2}\left(s^{1/2} + O\left(S_{\xi}^{1/2}\right)\right) + D(\omega).\]
		This establishes the lemma in the case when $r\left(\omega^2-m^2\right) \lesssim 1$.
		
		Now let's consider the case when $r\left(\omega^2-m^2\right) \gg 1$. In this case we start by observing that 
		\begin{align*}
			&\frac{1}{2}\left(\omega^2-m^2\right)^{1/2}\int_{r\left(2S_{\xi}\right)}^r\left(1+\frac{2Mm^2}{R\left(\omega^2-m^2\right)}\right)^{1/2}\, dR + D(\omega)
			\\ \nonumber &\qquad \leq \xi\left(s,\omega\right) \leq 2\left(\omega^2-m^2\right)^{1/2}\int_{r\left(2 S_{\xi}\right)}^r \left(1+\frac{2Mm^2}{R\left(\omega^2-m^2\right)}\right)^{1/2}\, dR + D(\omega).
		\end{align*}
		Now we do a change of variables $X = R\left(\omega^2-m^2\right)$ in the integral and use that $r(\omega^2-m^2) \gg 1$ to obtain that
		\[\frac{1}{10}r\left(\omega^2-m^2\right)^{1/2} + D(\omega) \leq \xi \leq 10 r\left(\omega^2-m^2\right)^{1/2} + D(\omega).\]
		This is easily seen to complete the proof of the lemma.
	\end{proof}
	
	Next we provide an estimate for the terms $W_{\rm error}$ and $W_{\rm main}$ from Lemma~\ref{neweqninxicoord}.
	\begin{lemma}Let $W_{\rm error}$ be as in Lemma~\ref{neweqninxicoord}. Then, for $s \geq 2S_{\xi}$, we have that
		\begin{equation}\label{fristestforwerror1}
			\left|W_{\rm error}\right| \lesssim {\rm min}\left(r^{-2}\left(\omega^2-m^2\right)^{-2},1\right)r^{-1},
		\end{equation}
		\begin{equation}\label{fristestforwerror2}
			\left|W_{\rm error}\right| \lesssim \xi^{-2},
		\end{equation}
		\begin{equation}\label{fristestforwerror3}
			\left|\partial_{\xi}W_{\rm error}\right| \lesssim {\rm min}\left(r^{-5/2}\left(\omega^2-m^2\right)^{-5/2},1\right)r^{-3/2},
		\end{equation}
		\begin{equation}\label{fristestforwerror4}
			\left|\partial_{\xi}W_{\rm error}\right| \lesssim \xi^{-3}.
		\end{equation}
		
		For $W_{\rm main}$ we have that $s \geq 2S_{\xi}$ implies that
		\begin{equation}\label{lowerboundformainw}
			-\partial_{\xi}W_{\rm main} \gtrsim L^2\xi^{-3}.
		\end{equation}
	\end{lemma}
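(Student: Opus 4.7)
The plan is to carry out the explicit calculation on the region $s \geq 2S_{\xi}$ where $P = \omega^2 - m^2 + 2Mm^2/r$, and then convert the resulting $(r,P)$-bounds into $\xi$-bounds via Lemma~\ref{somestuffaboutthexicoor}. First I would compute $W_{\rm error}$ by hand using $\partial_s(P^{-1/4}) = -\tfrac14 P^{-5/4}\partial_s P$, yielding
\begin{equation*}
W_{\rm error} = \tfrac{1}{4}P^{-2}\partial_s^2 P - \tfrac{5}{16}P^{-3}(\partial_s P)^2.
\end{equation*}
Since $\partial_s r = 1 - 2M/r + e^2/r^2 = O(1)$ and $\partial_r P = -2Mm^2/r^2$, a straightforward computation yields $\partial_s P = O(r^{-2})$ and $\partial_s^2 P = O(r^{-3})$, hence $|W_{\rm error}| \ls r^{-3}P^{-2} + r^{-4}P^{-3}$. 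The key observation is the uniform lower bound $rP = r(\omega^2-m^2) + 2Mm^2 \geq 2Mm^2 \gtrsim 1$, which controls $(rP)^{-k}$.

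To prove \eqref{fristestforwerror1} I would split into two regimes: when $r(\omega^2-m^2) \gtrsim 1$, $P \approx \omega^2 - m^2$ and the dominant term is $r^{-3}(\omega^2-m^2)^{-2}$; when $r(\omega^2-m^2) \ls 1$, $rP \approx 2Mm^2$ and both terms are $O(r^{-1})$. Taking the minimum gives the claim. For \eqref{fristestforwerror2}, Lemma~\ref{somestuffaboutthexicoor} yields $\xi^2 \approx r^2 P$ and
\begin{equation*}
r^{-3}P^{-2} + r^{-4}P^{-3} = (r^2 P)^{-1}\bigl[(rP)^{-1} + (rP)^{-2}\bigr] \ls (r^2 P)^{-1} \approx \xi^{-2}.
\end{equation*}
The bounds \eqref{fristestforwerror3} and \eqref{fristestforwerror4} follow by the same scheme: using $\partial_\xi = P^{-1/2}\partial_s$ and one further differentiation gives $|\partial_\xi W_{\rm error}| \ls r^{-4}P^{-5/2} + r^{-5}P^{-7/2} + r^{-6}P^{-9/2}$; the two-regime split on $r(\omega^2-m^2)$ produces \eqref{fristestforwerror3}, while factoring out $(r^3 P^{3/2})^{-1}$ and using $rP \gtrsim 1$ together with $\xi^3 \approx r^3 P^{3/2}$ produces \eqref{fristestforwerror4}.

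For \eqref{lowerboundformainw}, I would decompose $V = -P + L(L+1)V_{\rm main} + V'_{\rm junk}$, where $V'_{\rm junk}$ collects the terms not involving $L^2$, so $W_{\rm main} = -1 + L(L+1) V_{\rm main}/P + V'_{\rm junk}/P$. The main computation is
\begin{equation*}
\partial_s\!\Bigl(\frac{V_{\rm main}}{P}\Bigr) = \frac{\partial_s V_{\rm main}}{P} - \frac{V_{\rm main}\,\partial_s P}{P^2},
\end{equation*}
and using $\partial_s V_{\rm main} = -2 r^{-3} + O(r^{-4})$ (which follows directly from Lemma~\ref{critv} or direct differentiation) together with $\partial_s P = -2Mm^2 r^{-2} + O(r^{-3})$ leads to
\begin{equation*}
\partial_s\!\Bigl(\frac{V_{\rm main}}{P}\Bigr) = -\frac{2}{r^3 P}\Bigl(1 - \frac{Mm^2}{rP}\Bigr) + \text{lower order}.
\end{equation*}
The main obstacle is the sign confrontation between the two summands inside the parenthesis. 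This is resolved by the same inequality $rP \geq 2Mm^2$, which gives $Mm^2/(rP) \leq 1/2$ so that the bracket is bounded below by $1/2$. The $V'_{\rm junk}/P$ contribution produces terms with no factor of $L^2$ and extra powers of $1/r$, which can be absorbed by choosing $S_\xi$ sufficiently large relative to the $L$-independent constants. Multiplying by $L(L+1)$ and by $-P^{-1/2}$ finally gives $-\partial_\xi W_{\rm main} \gtrsim L^2 r^{-3} P^{-3/2} \approx L^2 \xi^{-3}$ by Lemma~\ref{somestuffaboutthexicoor}.
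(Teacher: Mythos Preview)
Your proof is correct and follows essentially the same approach as the paper: compute $W_{\rm error}$ in terms of $P$ and its $s$-derivatives, use the bounds $|\partial_s^j P| \lesssim r^{-1-j}$ together with the key inequality $rP \geq 2Mm^2$ (equivalently, $P^{-1} \lesssim \min((\omega^2-m^2)^{-1},r)$), and then convert to $\xi$-bounds via Lemma~\ref{somestuffaboutthexicoor}. Your treatment of \eqref{lowerboundformainw} is in fact more explicit than the paper's, which simply asserts that the bound ``is straightforward to obtain'' by similar reasoning; your sign argument via $Mm^2/(rP) \leq 1/2$ is exactly the right observation (and the junk terms can alternatively be absorbed using the assumption $L \gg 1$ rather than the largeness of $S_\xi$).
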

	\begin{proof}We have
		\[\left|W_{\rm error}\right| \lesssim P^{-3}\left(\partial_sP\right)^2 + P^{-2}\left|\partial_s^2P\right|,\]
		and also the estimates
		\[\left|\partial_sP\right| \lesssim r^{-2},\qquad \left|\partial_s^2P\right| \lesssim r^{-3},\qquad \left|P\right|^{-1} \lesssim {\rm min}\left(\left(\omega^2-m^2\right)^{-1},r\right).\]
		This establishes~\eqref{fristestforwerror1}. The estimate~\eqref{fristestforwerror2} then follows from Lemma~\ref{somestuffaboutthexicoor}.
		
		Next, keeping in mind that $\partial_{\xi} = P^{-1/2}\partial_s$, we note that
		\begin{align*}
			\left|\partial_{\xi}W_{error}\right| &\lesssim P^{-4}\left|\partial_{\xi}P\right|\left|\partial_sP\right|^2 + P^{-3}\left|\partial_{\xi}P\right|\left|\partial_s^2P\right| + P^{-3}\left|\partial_sP\right|\left|\partial^2_{\xi s}P\right| + P^{-2}\left|\partial_{ss\xi}^3P\right|
			\\ \nonumber &\lesssim P^{-9/2}\left|\partial_sP\right|^3 + P^{-7/2}\left|\partial_sP\right|\left|\partial_s^2P\right| + P^{-5/2}\left|\partial_s^3P\right|.
		\end{align*}
		One then easily obtains~\eqref{fristestforwerror3} and~\eqref{fristestforwerror4}.
		
		Finally, arguing in a similar fashion,~\eqref{lowerboundformainw} is straightforward to obtain.
	\end{proof}
	
	\subsection{Basic multipliers}
	In this section we present the building blocks for our main estimate. 
	
	We start with an energy estimate, which will be later used to control various boundary terms.
	\begin{lemma}\label{agoodenergyestimateiguess}Suppose that $v$ is an outgoing solution as in Definition~\ref{outxi}. Then 
		\[\left|v(\infty)\right|^2 + \omega \left|v(-\infty)\right|^2 = \int_{-\infty}^{\infty} \Im\left(\check{H}\overline{v}\right)\, d\xi.\]
	\end{lemma}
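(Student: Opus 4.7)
The plan is to run the standard energy identity obtained by multiplying the equation by $\overline{v}$, integrating by parts in $\xi$, and taking the imaginary part; the outgoing boundary conditions at $\xi=\pm\infty$ will produce exactly the two boundary terms appearing on the left-hand side.

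Concretely, starting from $\partial_{\xi}^2v - Wv = \check{H}$, I multiply by $\overline{v}$ and integrate over $\xi\in\R$. Integration by parts in the first term yields
\[
\bigl[(\partial_{\xi}v)\,\overline{v}\bigr]_{-\infty}^{\infty} - \int_{-\infty}^{\infty}|\partial_{\xi}v|^2\,d\xi - \int_{-\infty}^{\infty} W|v|^2\,d\xi = \int_{-\infty}^{\infty}\check{H}\,\overline{v}\,d\xi.
\]
Observing that $W = W_{\rm main} + W_{\rm error} = P^{-1}V - P^{-3/4}\partial_s^2(P^{-1/4})$ is real-valued (since both $P$ and $V$ are real under the assumption $\omega>m$), and that $|\partial_{\xi}v|^2$ is manifestly real, taking the imaginary part of the identity collapses the bulk to zero and leaves only the boundary contribution and the right-hand side.

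It then remains to compute the two boundary terms using the outgoing conditions of Definition~\ref{outxi}. As $\xi\to+\infty$, $v\sim e^{i\xi}$ gives $(\partial_{\xi}v)\overline{v}\to i$, contributing $|v(\infty)|^2$ to the imaginary part (after accounting for the fact that any lower-order corrections to the asymptotic expansion are of higher order in $\xi^{-1}$ and hence drop out). As $\xi\to-\infty$, which corresponds to $s\to-\infty$ where $P\equiv 1$ so $\xi=s+\text{const}$, the condition $v\sim e^{-i\omega \xi}$ yields $(\partial_{\xi}v)\overline{v}\to -i\omega$, producing the contribution $\omega|v(-\infty)|^2$ (with the correct sign since $\omega>m>0$ and one subtracts the lower endpoint). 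Adding these gives exactly the left-hand side of the claimed identity.

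I do not foresee a genuine obstacle: the only point needing care is justifying that the asymptotic expansions in~\eqref{outgoingboundary} may be differentiated term by term to extract the leading $e^{\pm i\omega s}$ phase when computing $(\partial_{\xi}v)\overline{v}$ at infinity, and that the subleading tails in these expansions decay fast enough for the limit to exist — both of which follow from the outgoing boundary conditions encoded in Definition~\ref{outxi} together with the asymptotics of $\xi$ in $s$ provided by Lemma~\ref{somestuffaboutthexicoor}.
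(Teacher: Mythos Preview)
Your proof is correct and is essentially the same as the paper's. The paper phrases it as the differential identity $\partial_{\xi}\Im(\partial_{\xi}v\,\overline{v}) = \Im(\check{H}\,\overline{v})$ and integrates, while you multiply the equation by $\overline{v}$, integrate by parts, and then take imaginary parts; these are the same computation, and your boundary-term evaluation matches the paper's.
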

	\begin{proof}We note that
		\[\partial_{\xi}\Im\left(\partial_{\xi}v\overline{v}\right) = \Im\left(\check{H}\overline{v}\right).\]
		Then we integrate this identity over $\xi \in (-\infty,\infty)$ and use the outgoing boundary conditions.
	\end{proof}
	
	The following multipliers will be used to produce integrated estimates for the solution $v$.
	\begin{lemma}\label{bunchofmultipliers}Suppose that $v$ is an outgoing solution as in Definition~\ref{outxi}.
		\begin{enumerate}
			\item If $f(\xi)$ is a $C^2$ function which is piecewise $C^3$ and which is constant for sufficiently large $|s|$, then we have that 
			\begin{align}\label{fmulteqn}
				&\int_{-\infty}^{\infty}\left[2\partial_{\xi} f \left|\partial_{\xi}v\right|^2 - \left(f \partial_{\xi}W + \frac{1}{2}\partial_{\xi}^3f\right)\left|v\right|^2 \right]\, d\xi =
				\\ \nonumber &\qquad f(\infty)|v(\infty)|^2 - f(-\infty)\omega^2 \left|v\left(-\infty\right)\right|^2 - \int_{-\infty}^{\infty} \left(2f\Re\left(\check{H}\overline{\partial_{\xi}v}\right) + \partial_{\xi}f \Re\left(\check{H}\overline{v}\right)\right)\, d\xi.
			\end{align}
			\item If $y(\xi)$ is a $C^0$ function which is piecewise $C^1$ and satisfies that the limits $y(\pm\infty) \doteq \lim_{\xi\to\pm\infty}y(s)$ exist, then
			\begin{align}\label{ymulteqn}
				&\int_{-\infty}^{\infty}\left[\partial_{\xi}y \left|\partial_{\xi}v\right|^2 - \partial_{\xi}\left(y W\right) \left|v\right|^2\right]\, d\xi =
				\\ \nonumber &\qquad y(\infty)\left|v(\infty)\right|^2 - y(-\infty)\omega^2\left|v(-\infty)\right|^2 + \int_{-\infty}^{\infty}2y\Re\left(\check{H}\overline{\partial_{\xi}v}\right)\, d\xi.
			\end{align}
			\item If $z(s) $ is a $C^0$ function which is piecewise $C^1$, vanishes for large $r$, and is such that $lim_{r\to r_+}(r-r_+)z$ exists, then 
			\begin{align}\label{redmicro}
				&\int_{-\infty}^{\infty}\partial_{\xi}z \left|\partial_{\xi}v + i\omega v\right|^2 - \partial_{\xi}\left(z(W+\omega^2)\right)\left|v\right|^2\, d\xi = 
				\\ \nonumber &\qquad -2\int_{-\infty}^{\infty}z\Re\left(\check{H}\left(\overline{\partial_{\xi}v + i\omega v}\right)\right)\, d\xi + \left(\lim_{r \to r_+}z\left(W+\omega^2\right)\right)\left|v(-\infty)\right|^2.
			\end{align}

		\end{enumerate}
	\end{lemma}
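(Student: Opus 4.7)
All three claims are instances of the standard multiplier framework for the ODE $\partial_\xi^2 v - Wv = \check{H}$: one contracts the equation against a well-chosen first-order expression in $v$ (linear in $v$ and $\partial_\xi v$, with coefficients depending on the multiplier), obtains a divergence structure $\partial_\xi Q$ whose bulk part reproduces the claimed integrand and whose source part reproduces the claimed $\check{H}$-contribution, and then integrates on $(-\infty,+\infty)$ and reads off the boundary contributions at $\pm\infty$ using the outgoing conditions. Two scalar identities, both immediate from the ODE and the product rule, are used throughout:
\[ \partial_\xi|\partial_\xi v|^2 = W\,\partial_\xi|v|^2 + 2\Re\!\bigl(\check{H}\,\overline{\partial_\xi v}\bigr),\qquad \partial_\xi\Re(\partial_\xi v\,\bar v) = |\partial_\xi v|^2 + W|v|^2 + \Re(\check{H}\,\bar v). \]
In the limits $\xi\to\pm\infty$, the outgoing conditions give $|v|^2\to 1$ with $|\partial_\xi v|^2\to 1$ and $W\to -1$ at $+\infty$, and $|\partial_\xi v|^2\to\omega^2$ and $W\to -\omega^2$ at $-\infty$.

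For \eqref{fmulteqn} I would introduce the higher-order current
\[ Q_f := f|\partial_\xi v|^2 - fW|v|^2 + \partial_\xi f\,\Re(\partial_\xi v\,\bar v) - \tfrac12 \partial_\xi^2 f\,|v|^2. \]
Substituting the two scalar identities above and collecting terms, a direct computation gives
\[ \partial_\xi Q_f = 2\partial_\xi f\,|\partial_\xi v|^2 - \bigl(f\,\partial_\xi W + \tfrac12\partial_\xi^3 f\bigr)|v|^2 + 2f\Re\!\bigl(\check{H}\,\overline{\partial_\xi v}\bigr) + \partial_\xi f\,\Re(\check{H}\,\bar v), \]
which is exactly the integrand of \eqref{fmulteqn}. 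Since $f$ is constant for large $|\xi|$, the $\partial_\xi f$- and $\partial_\xi^2 f$-pieces of $Q_f$ drop in the limit, so the boundary term reduces to $\bigl[f(|\partial_\xi v|^2 - W|v|^2)\bigr]_{-\infty}^{+\infty}$, which the outgoing asymptotics evaluate in closed form to the boundary contribution claimed. For \eqref{ymulteqn} the same strategy works with the simpler current $Q_y := y|\partial_\xi v|^2 - yW|v|^2$, whose derivative is a one-line computation yielding $\partial_\xi y\,|\partial_\xi v|^2 - \partial_\xi(yW)\,|v|^2 + 2y\Re(\check{H}\,\overline{\partial_\xi v})$.

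The identity \eqref{redmicro} needs a red-shift modification: since $v \sim e^{-i\omega\xi}$ at $-\infty$, the combination $\partial_\xi v + i\omega v$ vanishes there, and that must be built into the current. I would take $Q_z := z\,|\partial_\xi v + i\omega v|^2$ and derive the algebraic identity
\[ \partial_\xi|\partial_\xi v + i\omega v|^2 = (W+\omega^2)\,\partial_\xi|v|^2 + 2\Re\!\bigl(\check{H}\,\overline{\partial_\xi v + i\omega v}\bigr) \]
by direct expansion using the ODE (the $\omega\,\Im(Wv\bar v)$ cross-term vanishes because $W|v|^2$ is real). Then $\partial_\xi Q_z = \partial_\xi z\,|\partial_\xi v + i\omega v|^2 + z(W+\omega^2)\,\partial_\xi|v|^2 + 2z\Re(\check{H}\,\overline{\partial_\xi v + i\omega v})$, and integrating by parts the middle piece produces the $-\partial_\xi\bigl(z(W+\omega^2)\bigr)|v|^2$ term in \eqref{redmicro}. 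At $+\infty$ the entire current vanishes because $z\equiv 0$ for large $r$; at $-\infty$ the $z|\partial_\xi v + i\omega v|^2$ piece vanishes by the outgoing condition, leaving only $\lim_{r\to r_+} z(W+\omega^2)\cdot|v(-\infty)|^2$, which is finite because $W+\omega^2$ vanishes to first order in $(r-r_+)$ at the horizon while $(r-r_+)z$ has a limit by hypothesis.

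The only technical subtlety is that $f,y,z$ are only piecewise smooth, so the top-order multiplier derivatives appearing in the calculation ($\partial_\xi^3 f$ in \eqref{fmulteqn}, $\partial_\xi y$ in \eqref{ymulteqn}, $\partial_\xi z$ in \eqref{redmicro}) exist only almost everywhere. The imposed continuity of the next lower derivative ($f\in C^2$; $y,z\in C^0$) ensures that no Dirac-mass contributions arise from the integration-by-parts steps, so the computation can be carried out on each smooth subinterval with the interior boundary contributions cancelling in pairs. I do not anticipate any genuine analytical obstacle: the heart of the proof is simply the careful algebraic expansion of $\partial_\xi Q_f,\partial_\xi Q_y,\partial_\xi Q_z$ and the boundary bookkeeping sketched above.
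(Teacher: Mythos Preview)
Your proposal is correct and follows essentially the same route as the paper: the paper likewise writes down the currents $Q_f = f|\partial_\xi v|^2 - fW|v|^2 + \partial_\xi f\,\Re(\partial_\xi v\,\bar v) - \tfrac12\partial_\xi^2 f\,|v|^2$, $Q_y = y|\partial_\xi v|^2 - yW|v|^2$, and (for the third identity) $z|\partial_\xi v + i\omega v|^2 - z(W+\omega^2)|v|^2$, computes their $\xi$-derivatives, and applies the fundamental theorem of calculus with the outgoing boundary conditions. The only cosmetic difference is that for \eqref{redmicro} you take $Q_z = z|\partial_\xi v + i\omega v|^2$ and reach the $-\partial_\xi\bigl(z(W+\omega^2)\bigr)|v|^2$ term via a subsequent integration by parts, whereas the paper absorbs $-z(W+\omega^2)|v|^2$ directly into the current; the two organizations are equivalent.
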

	\begin{proof}
		The first identity follows from the fundamental theorem of calculus and the identity
		\begin{align*}
			&\partial_{\xi}\left(f\left|\partial_{\xi}v\right|^2 - fW\left|v\right|^2 + \partial_{\xi}f \Re\left(\partial_{\xi}v\overline{v}\right) - \frac{1}{2}\partial_{\xi}^2f \left|v\right|^2\right) = 
			\\ \nonumber &\qquad 2\partial_{\xi} f \left|\partial_{\xi}v\right|^2 - \left(f \partial_{\xi}W + \frac{1}{2}\partial_{\xi}^3f\right)\left|v\right|^2 + 2f\Re\left(\check{H}\overline{\partial_{\xi}v}\right) + \partial_{\xi}f\Re\left(\check{H}\overline{v}\right).
		\end{align*}
		
		Similarly, the second identity follows from 
		\begin{align*}
			&\partial_{\xi}\left(y\left|\partial_{\xi}v\right|^2 - yW\left|v\right|^2\right) = \partial_{\xi}y\left|\partial_{\xi}v\right|^2 - \partial_{\xi}\left(yW\right)\left|v\right|^2 + 2y\Re\left(\check{H}\overline{\partial_{\xi}v}\right).
		\end{align*}
		
		The third identity follows from
		\begin{align*}
			&\partial_{\xi}\left(z\left|\partial_{\xi}v + i\omega v\right|^2 - z\left(W+\omega^2\right)\left|v\right|^2\right) = 
			\\ \nonumber &\qquad \partial_{\xi}z \left|\partial_{\xi}v + i\omega v\right|^2 - \partial_{\xi}\left(z(W+\omega^2)\right)\left|v\right|^2 + 2z\Re\left(\check{H}\overline{\left(\partial_{\xi}v + i\omega v\right)}\right).
		\end{align*}

	\end{proof}
	\subsection{The basic Morawetz estimate for $v$.}
	In this section we will prove a weighted $L^2_{\xi}$ estimate for $v$ which covers the whole range of frequencies~\eqref{omegabiggerthanm} and~\eqref{Llarge}.

	The following lemma will be useful to control certain terms near $r = r_{\rm crit}$ where certain of our estimates degenerate.
	\begin{lemma}\label{interpolatetrap}Let $h(x) : (-1,1) \to \mathbb{C}$ be a smooth function and $A \gg 1$ be a large constant not depending on $h$. Then 
		\begin{equation}\label{interpolatetoprove}
			A\int_{-1}^1\left|h\right|^2\, dx \lesssim \int_{-1}^1\left[\left|\partial_xh\right|^2 + A^2x^2\left|h\right|^2\right]\, dx,
		\end{equation}
		where the implied constant is independent of $A$ and $h$.
	\end{lemma}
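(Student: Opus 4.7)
The plan is to establish this as an uncertainty-principle-type inequality via an integration-by-parts identity with the multiplier $\phi(x) = Ax$, combined with a spatial localization. On the whole line, the inequality $\|g\|_{L^2}^2 \leq \varepsilon\|\partial_x g\|_{L^2}^2 + \varepsilon^{-1}\|xg\|_{L^2}^2$ (applied with $\varepsilon = 1/A$) is the well-known uncertainty bound for the harmonic oscillator; the only real work here is to adapt it to the bounded interval $(-1,1)$ without picking up uncontrolled boundary terms.

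First I would dispatch the exterior region $\{|x| \geq 1/2\}$ pointwise: since $A \gg 1$,
\[
A \leq 4A^2 x^2 \quad \text{whenever } |x| \geq 1/2,
\]
so $A\int_{|x|\geq 1/2}|h|^2\,dx \leq 4\int_{-1}^{1}A^2x^2|h|^2\,dx$, which is part of the right-hand side of \eqref{interpolatetoprove}. This reduces the problem to establishing the estimate on $\{|x| \leq 1/2\}$.

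For the interior region I would introduce a fixed cutoff $\chi \in C^\infty_c((-1,1))$ with $\chi \equiv 1$ on $[-1/2,1/2]$ and $\mathrm{supp}\,\chi \subset [-3/4,3/4]$, and set $g \doteq \chi h$. Since $g$ is compactly supported in $(-1,1)$, integration by parts gives
\[
A\int_{-1}^{1}|g|^2\,dx \;=\; \int_{-1}^{1}\partial_x(Ax)\,|g|^2\,dx \;=\; -2\int_{-1}^{1}Ax\,\Re\!\left(g\,\overline{\partial_x g}\right)\,dx,
\]
with no boundary contribution. Cauchy--Schwarz on the right-hand side yields
\[
A\int_{-1}^{1}|g|^2\,dx \;\leq\; \int_{-1}^{1}|\partial_x g|^2\,dx \;+\; \int_{-1}^{1}A^2 x^2 |g|^2\,dx.
\]
Expanding $\partial_x g = \chi\,\partial_x h + (\partial_x \chi)\,h$ and using $|g| \leq |h|$ recovers the desired main terms; the cross error $\int |\partial_x \chi|^2 |h|^2\,dx$ is supported in $\{1/2 \leq |x| \leq 3/4\}$, where $A^2 x^2 \geq A^2/4$, so it is bounded by $C A^{-2}\int A^2 x^2 |h|^2\,dx$ and can be absorbed into the second term of the right-hand side of \eqref{interpolatetoprove} since $A \gg 1$.

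Summing the interior and exterior estimates then produces \eqref{interpolatetoprove} with an implied constant independent of $A$ and $h$, as required. The only subtlety is that the cutoff error term $(\partial_x \chi) h$ must be controllable by the $A^2 x^2 |h|^2$ weight; this is exactly what the separation between $\{\chi \equiv 1\}$ and $\mathrm{supp}(\partial_x \chi)$ ensures, so there is no real obstacle beyond bookkeeping.
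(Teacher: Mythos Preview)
Your proof is correct and follows essentially the same approach as the paper: the paper also reduces to the case where $h$ vanishes at the endpoints (justified by the observation that the right-hand side only degenerates at $x=0$) and then applies the identical integration-by-parts identity with the multiplier $x$. Your cutoff argument simply makes that reduction explicit, so the two proofs differ only in the level of detail.
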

	\begin{proof}Since the right hand side of~\eqref{interpolatetoprove} only degenerates at $x = 0$, it suffices to consider the case when $h$ vanishes at $x = -1$ and $x = 1$. In this case one has
		\[A\int_{-1}^1\left|h\right|^2\, dx = A\int_{-1}^1
		\frac{d}{dx}\left(x\right)\left|h\right|^2\, dx = 2A\int_{-1}^1x\Re\left(\partial_xh \overline{h}\right)\, dx \leq \int_{-1}^1\left[\left|\partial_xh\right|^2 + 4A^2x^2\left|h\right|^2\right]\, dx.\]
	\end{proof}

	It is convenient to split the proof of the Morawetz estimate into two different regimes: $\{\omega \gg L\}$ and $\{L \lesssim \omega\}$. We start with $\omega \gg L$.
	\begin{lemma}\label{mora1}Suppose that $v$ is an outgoing solution as in Definition~\ref{outxi} and suppose that in addition to~\eqref{Llarge}, we have that 
		\begin{equation}\label{omegaiswaybiggerthanL}
			\omega \gg L.
		\end{equation}
		Then
		\[\int_{-\infty}^{\infty}\left[ \left(1+|\xi|\right)^{-1-\delta}\left|\partial_{\xi}v\right|^2 + \left(1+|\xi|\right)^{-1-\delta}\left|v\right|^2\right]\, d\xi \lesssim_{\delta} \omega^2\int_{-\infty}^{\infty}\left|\check{H}\right|^2 \left(1+|\xi|\right)^{1+\delta}\, d\xi,\]
		where  $\delta \in (0,1)$ is arbitrary.
	\end{lemma}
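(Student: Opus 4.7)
The plan is to establish the Morawetz estimate by combining two multiplier identities from Lemma~\ref{bunchofmultipliers}: the $f$-multiplier identity~\eqref{fmulteqn} to produce the weighted $|\partial_\xi v|^2$ term on the left-hand side, and a direct weighted $L^2$ estimate obtained by pairing the equation with $g\bar v$, exploiting the coercivity $-W \gtrsim 1$ that holds in the regime $\omega \gg L \gg 1$.

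First I would choose a smooth bounded weight $f$ that is essentially $\delta^{-1}\operatorname{sgn}(\xi)\bigl(1-(1+|\xi|)^{-\delta}\bigr)$, so that $f'(\xi)$ is a positive multiple of $(1+|\xi|)^{-1-\delta}$ and $|f^{(k)}(\xi)| \lesssim (1+|\xi|)^{-k-\delta}$ for $k\geq 1$, with $f(\pm\infty) = \pm\delta^{-1}$. Substituting into~\eqref{fmulteqn} produces the favorable bulk term $2\int (1+|\xi|)^{-1-\delta}|\partial_\xi v|^2\, d\xi$. Because $-\partial_\xi W_{\rm main} \gtrsim L^2\xi^{-3}$ decays faster than $(1+|\xi|)^{-1-\delta}$ at infinity, the resulting $|v|^2$ coefficient does not give positivity on its own. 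So I would additionally pair the equation $\partial_\xi^2 v - Wv = \check H$ with $g\bar v$, $g=(1+|\xi|)^{-1-\delta}$, and integrate by parts to get
\[\int g(-W)|v|^2\, d\xi = \int g|\partial_\xi v|^2\, d\xi + \Re\int g'\bar v\, \partial_\xi v\, d\xi + \Re\int g\check H\bar v\, d\xi - (\mathrm{bdry}).\]
In the regime $\omega \gg L$ one has $-W \geq c > 0$ uniformly in $\xi$: for $s\leq S_{\xi}$ we have $P=1$ and $-W = \omega^2-m^2-L(L+1)V_{\rm main}-V_{\rm junk} \geq \omega^2/2$; for $s\geq 2S_{\xi}$, $-W_{\rm main} = 1 - L(L+1)V_{\rm main}/P$ with $P\geq \omega^2-m^2$ and $V_{\rm main}\lesssim r^{-2}$, so $-W \geq 1/2$; and in between $P\geq 1/2$ ensures $-W$ remains uniformly positive. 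The cross term $\int g'\bar v\partial_\xi v$ is absorbed using $|g'|\lesssim g$. This yields $\int g|v|^2\, d\xi \lesssim \int g|\partial_\xi v|^2\, d\xi + \mathrm{boundary} + \mathrm{source}$.

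Combining the two estimates, the boundary contributions from both identities are $|v(\infty)|^2$ and $\omega^2|v(-\infty)|^2$, which I would bound by the energy identity (Lemma~\ref{agoodenergyestimateiguess}): $|v(\infty)|^2 + \omega|v(-\infty)|^2 \leq \int|\check H||v|\, d\xi$, giving in particular $\omega^2|v(-\infty)|^2 \leq \omega\int|\check H||v|\, d\xi$. Weighted Cauchy--Schwarz yields
\[\omega\int|\check H||v|\, d\xi \leq \epsilon\int g|v|^2\, d\xi + C_\epsilon\, \omega^2\int (1+|\xi|)^{1+\delta}|\check H|^2\, d\xi,\]
and the first term is absorbed on the left. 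The source terms $\int f\check H\overline{\partial_\xi v}$, $\int f'\check H\bar v$, and $\int g\check H\bar v$ appearing in the two identities are handled by the same weighted Cauchy--Schwarz split, always producing the weight $(1+|\xi|)^{1+\delta}|\check H|^2$ on the right-hand side.

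The main obstacle is to keep the $\omega$-dependence sharp so that exactly a factor $\omega^2$ (and no higher power) appears in front of the source integral. This works precisely because the only boundary term proportional to $\omega^2$, namely $\omega^2|v(-\infty)|^2$, appears paired with $\omega|v(-\infty)|^2$ in the energy identity; one power of $\omega$ is lost upon the Cauchy--Schwarz split, and the bulk coercivity $-W\gtrsim 1$ itself contributes no power of $\omega$. The remaining checks - that $W_{\rm error}$ is too small in the transition region to spoil the pointwise lower bound on $-W$, that $|g'|/g$ and $|f'''|/f'$ decay fast enough to be absorbed into the $|\partial_\xi v|^2$ term, and that the intermediate region $s\in[S_\xi, 2S_\xi]$ contributes only $O(1)$ errors independent of $\omega$ and $L$ - are routine given the bounds in Definition~\ref{thisisP} and Lemma~\ref{somestuffaboutthexicoor}.
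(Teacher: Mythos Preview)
Your approach differs from the paper's and has a genuine gap in the transition region $s\in[S_\xi,2S_\xi]$.

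The issue is the bulk term $-\int f\,\partial_\xi W\,|v|^2$ coming from the $f$-identity~\eqref{fmulteqn}. In the transition region, $W_{\rm main}=P^{-1}V$ rises from $\sim -\omega^2$ (where $P\sim 1$) to $\sim -1$ (where $P\sim\omega^2$); consequently $|\partial_\xi W_{\rm main}|$ is large. Concretely, near $s=S_\xi$ (so $\xi\sim S_\xi$, $P\sim 1$) one has $|\partial_\xi\log(-W_{\rm main})|\approx|\partial_\xi\log P|=P^{-1/2}|\partial_s\log P|\gtrsim S_\xi^{-1}\log\omega$, since $\log P$ must increase by $2\log\omega$ over an $s$-interval of length $S_\xi$. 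Thus the ratio of the bad coefficient to your coercive weight satisfies
\[
\frac{|f\,\partial_\xi W|}{g(-W)}\;\approx\;|f|\,(1+|\xi|)^{1+\delta}\,|\partial_\xi\log(-W)|\;\gtrsim\;S_\xi^{\delta}\log\omega,
\]
which diverges as $\omega\to\infty$. No fixed linear combination of your $f$-estimate and your $g\bar v$ pairing can then close, so your assertion that the intermediate region ``contributes only $O(1)$ errors independent of $\omega$ and $L$'' is incorrect.

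The paper avoids this by using the single $y$-multiplier~\eqref{ymulteqn} with the specific choice $y(\xi)=P(\xi)\exp\bigl(A\int_{-\infty}^\xi(1+|\xi'|)^{-1-\delta}\,d\xi'\bigr)$. The factor $P$ is essential: because $W_{\rm main}=P^{-1}V$, one has $yW_{\rm main}=V\exp(\cdots)$, so that
\[
-\partial_\xi(yW_{\rm main})=e^{A\int g}\bigl[Ag(-V)-\partial_\xi V\bigr],
\]
and the potentially large $\partial_\xi P$ contributions cancel exactly. What remains is $|\partial_\xi V|=P^{-1/2}|\partial_s V|\lesssim L^2\ll Ag\,\omega^2$, so positivity of $-\partial_\xi(yW)$ holds throughout the transition region with no loss in $\omega$. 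This structural cancellation is precisely what your $f$-multiplier lacks.
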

	\begin{proof}We first note that one consequence of~\eqref{omegaiswaybiggerthanL} is that $W \lesssim -\omega^2$ for all $s \leq S_{\xi}$ and $W \lesssim -1$ for all $s \geq S_{\xi}$. 
		
		Now define a function $y(\xi)$ by
		\[y\left(\xi\right) \doteq P\left(\xi\right)\exp\left(A\int_{-\infty}^{\xi}\left(1+|\xi'|\right)^{-1-\delta}\, d\xi'\right),\]
		for a sufficiently large constant $A$ which is independent of $\omega$ (but may depend on $S_{\xi}$).
		
		When $s \leq S_{\xi}$, then $P = 1$ and we have that
		\[\partial_{\xi}y =A\left(1+|\xi|\right)^{-1-\delta}y > 0.\]
		When $s \in [S_{\xi},2S_{\xi}]$, we have that 
		\[\partial_{\xi}y = A\left(1+|\xi|\right)^{-1-\delta}y + (\partial_{\xi}\log P) y.\]
		In view of the bound $-\partial_s\log P \lesssim \omega^{-2} r^{-2}$, we have that this expression is positive. When $s \in [2S_{\xi},\infty)$, we have that 
		\[\partial_{\xi}y = A\left(1+|\xi|\right)^{-1-\delta}y + P^{-3/2}\left(\frac{-2Mm^2}{r^2}\right) \frac{dr}{ds} y,\]
		which is also positive. Thus $\partial_{\xi}y (\xi)\gtrsim (1+|\xi|)^{-1-\delta}$ is everywhere.
		
		Next we analyze $-\partial_{\xi}\left(yW\right)$. When $s \leq S_{\xi}$ we have that 
		\[-\partial_{\xi}\left(yW\right) = -A\left(1+|\xi|\right)^{-1-\delta}y W - y \partial_{\xi}W.\]
		Since $W \lesssim -\omega^2$, and $\left|\partial_sW\right| \lesssim L^2(r-r_+)r^{-3} + (r-r_+)r^{-2}$ this term is clearly positive. When $s \in [S_{\xi},2S_{\xi}]$, we  have  
		\[-\partial_{\xi}\left(yW\right) = \left(-A\left(1+|\xi|\right)^{-1-\delta}-\partial_{\xi}\log P\right)y W - y \partial_{\xi}W.\]
		In view of the bound $-\partial_s\log P \lesssim \omega^{-2} r^{-2}$, we have that this expression is positive. Finally, in the region $s \in [2S_{\xi},\infty)$ we have that 
		\[-\partial_{\xi}\left(yW\right) \gtrsim \frac{A}{\xi^{1+\delta}}.\]
		The lemma then follows from an application of the multiplier~\eqref{ymulteqn} and Lemma~\ref{agoodenergyestimateiguess}.
	\end{proof}
	
	Next we consider the case when $\omega \lesssim L$.
	\begin{lemma}\label{mora2}Suppose that $v$ that $v$ is an outgoing solution as in Definition~\ref{outxi} and suppose that in addition to~\eqref{Llarge}, we have that 
		\begin{equation}\label{omegaisnottoobig}
			\omega \lesssim L.
		\end{equation}
		
		\begin{align*}
			&\int_{-\infty}^{\infty}\left[\left(1+|\xi|\right)^{-1-\delta}\left|\partial_{\xi}v\right|^2 + \left(1+|\xi|\right)^{-1-\delta}\left|v\right|^2\right] 
			\lesssim_{\delta} \left(1+\omega^2\right)\int_{-\infty}^{\infty}\left|\check{H}\right|^2\left(1+|\xi|\right)^{1+\delta}\, d\xi,
		\end{align*}
	\end{lemma}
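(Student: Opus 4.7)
The main obstruction to repeating Lemma~\ref{mora1} in the present regime is that when $\omega\lesssim L$ the potential $W=P^{-1}V$ is no longer uniformly dominated by $-\omega^2$: the term $L(L+1)V_{\rm main}$ produces a bump of height $\sim L^2$ with a non-degenerate maximum at $r=r_{\rm crit}$ (the photon sphere), corresponding to unstable null trapping, so a monotone $y$-multiplier alone cannot give positivity. My plan is to combine the $f$-multiplier~\eqref{fmulteqn}, tailored to vanish at the trapping set $\xi_*:=\xi(s(r_{\rm crit}))$, with a $y$-multiplier of the type used in Lemma~\ref{mora1} in the region where $-W$ is coercive, closing the degeneracy at $\xi_*$ via the interpolation bound of Lemma~\ref{interpolatetrap} and handling boundary terms via the energy identity of Lemma~\ref{agoodenergyestimateiguess}.

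Since $r_{\rm crit}$ lies in the region where $P\equiv 1$, $\xi_*$ coincides with $s(r_{\rm crit})$. I would take a bounded $C^3$ function $f:\mathbb{R}\to\mathbb{R}$ with $f(\xi_*)=0$, strictly negative to the left and strictly positive to the right of $\xi_*$, and satisfying $\partial_\xi f\gtrsim(1+|\xi|)^{-1-\delta}$ throughout; the concrete choice $f(\xi)=\int_{\xi_*}^{\xi}(1+|\xi'|)^{-1-\delta}\,d\xi'$ works. By Lemma~\ref{critv}, $\partial_\xi W_{\rm main}$ has sign opposite to $f$ everywhere (dominating the subleading $\partial_\xi V_{\rm junk}$ and the transition correction $-\partial_s\log P$ once $L$ is large), so $-f\,\partial_\xi W\geq 0$ on all of $\mathbb{R}$. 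Near $\xi_*$ this yields $-f\,\partial_\xi W\gtrsim L^2(\xi-\xi_*)^2$, which degenerates at $\xi_*$; applying Lemma~\ref{interpolatetrap} with $A=L$ on a fixed neighborhood of $\xi_*$ converts the already-controlled $\int|\partial_\xi v|^2$ and $\int L^2(\xi-\xi_*)^2|v|^2$ into the nondegenerate bound $L\int|v|^2$ on that neighborhood, which is far stronger than the target weight. In the far classically allowed region, where $W<0$ with $|W|\gtrsim 1$, the bulk $-f\,\partial_\xi W$ decays like $L^2\xi^{-3}$ and eventually drops below $(1+|\xi|)^{-1-\delta}$; I would compensate by adding a $y$-multiplier with $y=P\exp\bigl(A\int^{\xi}(1+|\xi'|)^{-1-\delta}\,d\xi'\bigr)$ cut off to the region $W<0$, so that $-\partial_\xi y\cdot W\geq 0$ supplies the missing $|v|^2$-coercivity exactly as in Lemma~\ref{mora1}.

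The boundary contributions $f(\infty)|v(\infty)|^2+|f(-\infty)|\omega^2|v(-\infty)|^2$ (and their $y$-analogues) are controlled by Lemma~\ref{agoodenergyestimateiguess}: using $|v(\infty)|^2+\omega|v(-\infty)|^2\leq\int|\check H||v|\,d\xi$ and Cauchy--Schwarz against the weight $(1+|\xi|)^{1+\delta}$ produces $\omega^2\int|\check H|^2(1+|\xi|)^{1+\delta}\,d\xi$ plus an absorbable $\int|v|^2(1+|\xi|)^{-1-\delta}$ term, which accounts for the $(1+\omega^2)$ factor in the stated estimate; the $\check H$-interaction terms on the right-hand sides of~\eqref{fmulteqn} and~\eqref{ymulteqn} are dispatched by Cauchy--Schwarz with the same weight. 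The main technical difficulty is coordinating the positivity properties across four qualitatively different regions --- the horizon, the trapping set $\xi_*$, the outer turning point where $W$ crosses zero, and the asymptotic region --- since neither $f$ nor $y$ is coercive everywhere and the cutoff mediating between them must be tuned so that the negative contribution $-\partial_\xi y\cdot W$ coming from the side on which $W>0$ does not spoil the positivity of $-f\,\partial_\xi W$; the trapping degeneracy at $\xi_*$ is the other crucial point, and it is handled precisely by Lemma~\ref{interpolatetrap}.
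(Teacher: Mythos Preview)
Your overall strategy is exactly the paper's: an $f$-multiplier vanishing at the trapping radius $r_{\rm crit}$ to exploit the sign of $\partial_\xi W$ on either side, the interpolation Lemma~\ref{interpolatetrap} to remove the resulting degeneracy at $\xi_*$, a supplementary $y$-multiplier to obtain the $(1+|\xi|)^{-1-\delta}$ weight on $|v|^2$ in the far regions, and Lemma~\ref{agoodenergyestimateiguess} for the boundary terms.

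The one substantive divergence is your choice of $y$. You propose the positive exponential $y=P\exp\bigl(A\int^\xi(1+|\xi'|)^{-1-\delta}\,d\xi'\bigr)$ from Lemma~\ref{mora1}, cut off to the set $\{W<0\}$. Since in the present regime $\{W>0\}$ is typically a nonempty interval around $r_{\rm crit}$, cutting a strictly positive function to zero there and back destroys monotonicity, and the term $-(\partial_\xi y)W$ picks up the wrong sign on one side of the cutoff --- precisely the difficulty you flag at the end. The paper sidesteps this by taking a \emph{sign-changing} monotone $y$: equal to $-1+\alpha(-\xi)^{-\delta}$ for $\xi\ll -1$, equal to $1-\alpha\xi^{-\delta}$ for $\xi\gg 1$, identically zero on a fixed neighborhood of $r_{\rm crit}$, and $\partial_\xi y\ge 0$ throughout. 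Because $y$ vanishes near trapping there is nothing to control there, and in the far regions one checks directly that $-\partial_\xi(yW)\gtrsim L^2\xi^{-3}+\alpha(1+|\xi|)^{-1-\delta}$. The paper then adds only a \emph{sufficiently small} multiple of this $y$-identity to the already-established $f$-estimate~\eqref{thefirstfestimate}, so that whatever non-positive contributions arise in the transition zones are absorbed by the good $L^2|r-r_{\rm crit}|^2(r-r_+)|v|^2$ term from the $f$-current. Correspondingly, the paper's $f$ is taken constant ($\equiv\pm 1$) outside a compact set rather than with $\partial_\xi f$ decaying everywhere; this makes $-f\partial_\xi W$ equal simply to $\mp\partial_\xi W$ in the far regions (giving the $L^2\xi^{-3}$ and $L^2(r-r_+)$ terms in~\eqref{thefirstfestimate}), and the $(1+|\xi|)^{-1-\delta}$ weight is supplied entirely by the $y$-current. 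Your variant with $\partial_\xi f\gtrsim(1+|\xi|)^{-1-\delta}$ could also be made to work, but the bookkeeping is cleaner the paper's way.
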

	where  $\delta \in (0,1)$ is arbitrary.
	\begin{proof}

		Choose $s_0 < s_1 < S_{\xi}$ so that $s_0$ is sufficiently negative and $s_1$ is sufficiently positive, to be fixed later. We apply~\eqref{fmulteqn} with a function $f$ which is identically $-1$ when $s < s_0$, is identically $1$ when $s > s_1$, satisfies $\partial_sf \geq 0$ an $\partial_sf > 0$ for $s \in [s_0+1,s_1-1]$, and so that $f$ vanishes at $r = r_{\rm crit}$. We obtain using  Lemma~\ref{critv}, Lemma~\ref{interpolatetrap},~\eqref{lowerboundformainw}, and Lemma~\ref{agoodenergyestimateiguess} the estimate
		\begin{align}\label{thefirstfestimate}
			&\int_{\xi(s_0)}^{\xi(s_1)}\left[(\partial_{\xi}f\right)\left|\partial_{\xi}v\right|^2 + [1+L^2\left|r-r_{\rm crit}\right|^2]\left|r-r_+\right|\left|v\right|^2]\, d\xi 
			\\ \nonumber &\qquad + \int_{\xi \geq \xi(s_1)}L^2\xi^{-3}\left|v\right|^2\, d\xi+\int_{\xi \leq \xi(s_0)}L^2(r-r_+)\left|v\right|^2\, d\xi \lesssim 
			\\ \nonumber &\qquad \int_{-\infty}^{\infty}\left[\left|f\right|\left|\partial_{\xi}v\right| + \left|\partial_{\xi}f\right|\left|v\right| + \omega\left|v\right|\right]\left|\check{H}\right|\, d\xi.
		\end{align}
		
		We next observe that, for any $\alpha > 0$ sufficiently small then for all $\xi \gg 1$, using the fact that $-W(\xi)\gtrsim 1-\frac{L^2}{\xi^2}$:
		\[ -\partial_{\xi}\left(\left(1-\alpha \xi^{-\delta}\right)W \right)\gtrsim \frac{L^2}{\xi^3} + \frac{\alpha}{\xi^{1+\delta}}. \]
		Similarly, if $-\xi \gg 1$, then 
		\[ -\partial_{\xi}\left(\left(-1+\alpha (-\xi)^{-\delta}\right)W \right)\gtrsim L^2 [r-r_+]+ \frac{\alpha}{|\xi|^{1+\delta}}. \]
		Now we let $y(\xi)$ be a function which is equal to $1-\alpha \xi^{-\delta}$ when $s \geq s_1$, is equal to $-1+\alpha(-\xi)^{-\delta}$,  $y\equiv 0$ on $[\xi(r_{crit}-1),\xi(r_{crit}+1)]$ and always satisfies $\partial_{\xi}y \geq 0$. Adding a sufficiently small multiple of~\eqref{ymulteqn} to~\eqref{thefirstfestimate}, applying Lemma~\ref{agoodenergyestimateiguess}, and then applying Cauchy-Schwarz then completes the proof.
		
	\end{proof}
	
	\subsection{Putting everything together}
	We are almost ready to give the proof of  Proposition~\ref{almosthereexcepthorizon} and Proposition~\ref{highfreqest}.
	
	We will need to carry out derivatives with respect to $\omega$ in the $\xi$-coordinate. In order to reduce the potential for confusion we define $\partial_{\tilde{\omega}}$ to be derivative with respect to $\omega$ in the $\left(\xi,\omega\right)$ coordinate system. In the next lemma we estimate $\partial_{\tilde{\omega}}W$.
	\begin{lemma}\label{iguessthisisgoodenough}For $s \geq 2S_{\xi}$ and $|\omega| \lesssim 1$, we have, for any $\delta > 0$ sufficiently small,
		\begin{equation}\label{whenomegaisnottoolargetilde}
			\left|\partial_{\tilde{\omega}}W\right| \lesssim \frac{L^2}{\left(r\left(\omega^2-m^2\right)+1\right)^2} \lesssim L^2 \xi^{-1-10\delta}\left(\omega^2-m^2\right)^{-1/2-5\delta}.
		\end{equation}
		
		When $\omega - m \gtrsim 1$, then we have
		\begin{equation}\label{whenomegaistoolargetilde}
			\left|\partial_{\tilde{\omega}}W\right| \lesssim L^2 \omega^{-3}r^{-2} \lesssim L^2 \omega^{-1}\xi^{-2}.
		\end{equation}
	\end{lemma}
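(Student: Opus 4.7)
The strategy is to apply the chain rule for the change of coordinates $\xi=\xi(s,\omega)$, writing $\partial_{\tilde\omega} W = \partial_\omega W|_s + (\partial_\omega s)|_\xi\, \partial_s W$ with $(\partial_\omega s)|_\xi = -(\partial_\omega\xi|_s)/\sqrt{P}$ and $\partial_\omega\xi|_s = \int_0^s (\partial_\omega P)/(2\sqrt{P})\,ds'$, then estimate the two contributions separately and convert the resulting $r$-weights to $\xi$-weights using Lemma~\ref{somestuffaboutthexicoor}.

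For the direct term $\partial_\omega W|_s$, I would split $W=W_{\rm main}+W_{\rm error}$ as in Lemma~\ref{neweqninxicoord}. In $W_{\rm main}=P^{-1}V$, only $P$ and $V$ depend on $\omega$ at fixed $s$, with $\partial_\omega P=2\omega$ and $\partial_\omega V=-2\omega$, so a short computation gives $\partial_\omega W_{\rm main}|_s = -2\omega P^{-1}(1+W_{\rm main})$. The key algebraic observation is the cancellation $1+W_{\rm main} = P^{-1}(P+V) = P^{-1}\bigl(L(L+1)V_{\rm main} + V_{\rm junk} + 2Mm^2/r\bigr)$, in which the leading $-2Mm^2/r$ term of $V_{\rm junk}$ exactly cancels the corresponding term of $P$, leaving an $O(L^2r^{-2})$ remainder. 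Combined with $|\omega|\lesssim 1$ and $rP=r(\omega^2-m^2)+2Mm^2\gtrsim r(\omega^2-m^2)+1$, this yields the target $L^2/(r(\omega^2-m^2)+1)^2$ for $W_{\rm main}$. For $W_{\rm error}$, direct differentiation of the formula in Lemma~\ref{neweqninxicoord} using the bounds $|\partial_sP|\lesssim r^{-2}$, $|\partial_s^2P|\lesssim r^{-3}$, $|\partial_\omega P|\lesssim \omega$ gives $|\partial_\omega W_{\rm error}|_s|\lesssim(rP)^{-3}$, a strictly lower-order contribution.

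For the change-of-variables term $(\partial_\omega s)|_\xi\,\partial_s W$, I would use $\partial_s=\sqrt{P}\,\partial_\xi$ together with the upper bound $|\partial_\xi W_{\rm main}|\lesssim L^2\xi^{-3}$ (established by the same computation as the lower bound~\eqref{lowerboundformainw}, carried out as an equality) and~\eqref{fristestforwerror4}, plus $\xi\gtrsim r\sqrt{P}$ from Lemma~\ref{somestuffaboutthexicoor}, to get $|\partial_s W|\lesssim L^2/(Pr^3)$; and I would evaluate $\partial_\omega\xi|_s$ by an explicit change of variables $ds=(1+O(r^{-1}))dr$ and split on $r(\omega^2-m^2)\lesssim 1$ versus $\gtrsim 1$ to obtain $|(\partial_\omega s)|_\xi|\lesssim \omega\min\bigl(r^2,\,r/(\omega^2-m^2)\bigr)$. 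A direct multiplication in each regime verifies that the product is bounded by $L^2/(r(\omega^2-m^2)+1)^2$. The second inequality in~\eqref{whenomegaisnottoolargetilde} then follows from the upper bound $\xi\lesssim r\sqrt{P}$ in Lemma~\ref{somestuffaboutthexicoor} combined with $P\geq \omega^2-m^2$ and a case-split on the size of $r(\omega^2-m^2)$; the regime $\omega-m\gtrsim 1$ of~\eqref{whenomegaistoolargetilde} is treated identically but is considerably simpler because $P\gtrsim 1$ uniformly, with the $\xi$-conversion using $\xi\lesssim r\omega$. The main obstacle of the argument is the change-of-variables correction: a priori $\partial_sW$ is larger than $\partial_\omega W|_s$ by a factor $\sqrt{P}$, so the bound succeeds only because $(\partial_\omega s)|_\xi$ carries a compensating $P^{-1/2}$ together with the precise $r$-growth of $\partial_\omega\xi|_s$ that the explicit $r$-integration and the case analysis on $r(\omega^2-m^2)$ are designed to capture.
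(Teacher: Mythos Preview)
Your proof is correct and follows essentially the same chain-rule strategy as the paper, which instead bounds $|\partial_{\tilde\omega}r|\lesssim r^2$ first and then differentiates $P^{-1}V$ and $W_{\rm error}$ directly in $(\xi,\tilde\omega)$ coordinates. Your explicit identification of the cancellation $P+V=L(L+1)V_{\rm main}+V_{\rm junk}+2Mm^2/r=O(L^2/r^2)$ makes transparent what the paper's terse ``similarly'' elides, and your reuse of the already-established bound $|\partial_\xi W|\lesssim L^2\xi^{-3}$ for the coordinate-shift piece is a slightly more economical reorganization of the same computation.
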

	\begin{proof}When $\left|\omega\right| \lesssim 1$, we have 
		\[\left|\partial_{\tilde{\omega}}r\right| = \left|\left(\partial_{\xi}r\right)\left(\partial_{\omega}\xi\right)\right| \lesssim P^{-1/2}\left(\int_{2S_{\xi}}^sP^{-1/2}\, dS + 1\right) \lesssim r^2.\]
		
		Then we have that 
		\[\left|\partial_{\tilde{\omega}}W\right| \lesssim \left|\partial_{\tilde{\omega}}\left(P^{-1}V\right)\right| + \left|\partial_{\tilde{\omega}}W_{\rm error}\right|.\]
		We start with $W_{\rm error}$. We have
		\begin{align*}
			\left|\partial_{\tilde{\omega}}W_{\rm error}\right| &\lesssim P^{-4}\left|\partial_{\tilde{\omega}}P\right|\left|\partial_sP\right|^2 + P^{-3}\left|\partial_sP\right|\left|\partial^2_{\tilde{\omega}s}P\right| + P^{-3}\left|\partial_{\tilde{\omega}}P\right|\left|\partial_s^2P\right| + P^{-2}\left|\partial^3_{ss\tilde{\omega}}P\right|
			\\ \nonumber &\lesssim \frac{1}{\left(r\left(\omega^2-m^2\right)+1\right)^2}.
		\end{align*}
		Similarly, we obtain
		\[\left|\partial_{\tilde{\omega}}\left(P^{-1}V\right)\right| \lesssim \frac{L^2}{\left(r\left(\omega^2-m^2\right)+1\right)^2}.\]
		This suffices to establish~\eqref{whenomegaisnottoolargetilde}.
		
		For $\omega - m \gtrsim 1$, we have that 
		\[\left|\partial_{\tilde{\omega}}r\right| = \left|\partial_{\xi}r\right|\left|\partial_{\omega}\xi\right| \lesssim r \omega^{-1},  \]
		\begin{align*}
			\left|\partial_{\tilde{\omega}}W_{\rm error}\right| &\lesssim P^{-4}\left|\partial_{\tilde{\omega}}P\right|\left|\partial_sP\right|^2 + P^{-3}\left|\partial_sP\right|\left|\partial^2_{\tilde{\omega}s}P\right| + P^{-3}\left|\partial_{\tilde{\omega}}P\right|\left|\partial_s^2P\right| + P^{-2}\left|\partial^3_{ss\tilde{\omega}}P\right|
			\\ \nonumber &\lesssim r^{-3}\omega^{-5},
		\end{align*}
		\[\left|\partial_{\tilde{\omega}}\left(P^{-1}V\right)\right| \lesssim L^2 r^{-2}\omega^{-3}.\]
		
	\end{proof}

	Now we give the proof of Proposition~\ref{almosthereexcepthorizon}.
	\begin{proof} Let $\chi(\xi)$ be a smooth cutoff function which is $1$ for $\xi \leq -1$ and vanishes for $\xi \geq 0$. Then we set 
		\[\mathring{v} \doteq \left(-\chi \frac{\xi}{\omega}\partial_{\xi} + \partial_{\tilde{\omega}}\right)v.\]
		A computation indicates that $\mathring{v}$ will satisfy the following equation:
		\begin{equation}\label{eqnformathringv}
			\partial_{\xi}^2\mathring{v} - W\mathring{v} = \mathring{H},
		\end{equation}
		where
		\[\mathring{H} \doteq \left(-\chi \frac{\xi}{\omega}\partial_{\xi} + \partial_{\tilde{\omega}}\right)\check{H} + 2\left(-\chi' \frac{\xi}{\omega} - \frac{\chi}{\omega}\right)\left(Wv + \check{H}\right) + \left(\partial_{\tilde{\omega}}W-\frac{\chi \xi}{\omega} \rd_{\xi}W\right)v + \left(-\chi'' \frac{\xi}{\omega} - \frac{\chi'}{\omega}\right)\partial_{\xi}v.\]
		Since $-2\omega^{-1}W + \partial_{\tilde{\omega}}W = O(r-r_+)$ as $r\to r_+$, it is straightforward to see that, in view of Lemma~\ref{mora1}, Lemma~\ref{mora2}, and Lemma~\ref{iguessthisisgoodenough} that, for any $\delta > 0$ sufficiently small,
		\[\int_{-\infty}^{\infty}\left(1+|\xi|\right)^{1+\delta}\left|\mathring{H}\right|^2\, d\xi \lesssim_{\delta}  L^2\left(\omega^4+\left(\omega^2-m^2\right)^{-1/2-5\delta}\right)\int_{-\infty}^{\infty}\left(1+|\xi|\right)^{1+\delta}\left[\left|\check{H}\right|^2 + [1+(-\xi)_+^2]\left|\partial_{\xi}\check{H}\right|^2 + \left|\partial_{\tilde{\omega}}\check{H}\right|^2\right]\, d\xi.\]
		
		Since $\mathring{v}$ will still be an outgoing solution, we may apply Lemmas~\ref{mora1} and~\ref{mora2} to obtain that
		\begin{align*}
			&\int_{-\infty}^{\infty}\left(1+|\xi|\right)^{-1-\delta}\left[\left|\partial_{\xi}\mathring{v}\right|^2+\left|\mathring{v}\right|^2\right]\, d\xi 
			\\ \nonumber &\qquad \lesssim_{\delta}  L^2 (1+\omega^2)\left(\omega^4+\left(\omega^2-m^2\right)^{-1/2-5\delta}\right)\int_{-\infty}^{\infty}\left(1+|\xi|\right)^{1+\delta}\left[\left|\check{H}\right|^2 + [1+(-\xi)_+^2] \left|\partial_{\xi}\check{H}\right|^2 + \left|\partial_{\tilde{\omega}}\check{H}\right|^2\right]\, d\xi
		\end{align*}
		Using directly the equation~\eqref{eqnformathringv}, we can also obtain 
		\begin{align*}
			&\int_{-\infty}^{\infty}\left(1+|\xi|\right)^{-1-\delta}\left[\left|\partial_{\xi}^2\mathring{v}\right|^2+\left|\partial_{\xi}\mathring{v}\right|^2+\left|\mathring{v}\right|^2\right]\, d\xi 
			\\ \nonumber &\qquad \lesssim_{\delta} \underbrace{L^4 (1+\omega^2)\left(\omega^4+\left(\omega^2-m^2\right)^{-1/2-5\delta}\right)\int_{-\infty}^{\infty}\left(1+|\xi|\right)^{1+\delta}\left[\left|\check{H}\right|^2 + [1+(-\xi)_+^2]\left|\partial_{\xi}\check{H}\right|^2 + \left|\partial_{\tilde{\omega}}\check{H}\right|^2\right]\, d\xi}_{\mathcal{E}}.
		\end{align*}
		We can re-express in terms of $\partial_{\tilde{\omega}}v$ to obtain that for any $0 < \xi_0 < \infty$:
		\[\sum_{j=0}^2\int_{-\xi_0}^{\xi_0}\left|\partial^j_{\xi}\partial_{\tilde{\omega}}v\right|^2\, d\xi \lesssim_{\xi_0} \mathcal{E} \Rightarrow \sum_{j=0}^1\sup_{\xi \in [-\xi_0,\xi_0]}\left|\partial_{\xi}^j\partial_{\tilde{\omega}}v\right|^2 \lesssim \mathcal{E}.\]
		It is then immediate that~\eqref{toprovewhenomegaissortofbig} follows.
	\end{proof}
	
	Finally we come to the proof of Proposition~\ref{highfreqest}.
	\begin{proof}We start with a new Morawetz estimate with an improved weight as $r\to r_+$. Observe that $\omega^2+W = O\left(L^2(r-r_+)\right)$ as $r\to r_+$, and, moreover, $\frac{d}{dr}|_{r=r_+}\left(\omega^2+W\right) \gtrsim L^2$. Now let $z(\xi)$ be any smooth function of $\xi$ which vanishes for large $\xi$ and which is equal to $-\left(L^{-2}\left(\omega^2+W\right)\right)^{-1}\left(1+(-\xi)^{-\delta}\right)$ for $\xi$ sufficiently negative. Applying~\eqref{redmicro} and using Lemmas~\ref{mora1} and~\ref{mora2} leads to 
		\begin{equation}\label{redshiftyay}
			\int_{-\infty}^0\left[\left(r-r_+\right)^{-1}\left|\partial_{\xi}v+i\omega v\right|^2 + L^2\left(1+|\xi|\right)^{-1-\delta}|v|^2\right]\, d\xi \lesssim_{\delta,R_0} \left(1+\omega^2\right)\int_{-\infty}^{\infty}\left(r-r_+\right)^{-1} \left|\check{H}\right|^2\, d\xi.
		\end{equation}
		
		Now we set $\tilde{v} \doteq e^{-i\omega \xi}\partial_{\tilde{\omega}}\left(e^{i\omega \xi}v\right)$. A computation yields the following equation for $\tilde{v}$:
		\begin{equation}\label{tildeveqn}
			\partial_{\xi}^2\tilde{v} - W\tilde{v} = \tilde{H},
		\end{equation}
		where 
		\[\tilde{H} = e^{-i\omega \chi}\partial_{\tilde{\omega}}\left(e^{i\omega \xi}\check{H}\right) + 2i\left(\partial_{\xi}v+i\omega v\right).\]
		In particular,
		\[\int_{-\infty}^0\left|\tilde{H}\right|^2(r-r_+)^{-1} d\xi \lesssim \left(1+\omega^2\right)\int_{-\infty}^{\infty}\left[\left|\partial_{\tilde{\omega}}\left(e^{i\omega \xi}\check{H}\right)\right|^2 + \left|\check{H}\right|^2\right](r-r_+)^{-1}\, d\xi.\]
		Thus, applying the estimate~\eqref{redshiftyay} to~\eqref{tildeveqn} with a suitable cut-off yields
		\begin{align}\label{okthisisnice}
			&\int_{-\infty}^0\left[\left(r-r_+\right)^{-1}\left|\partial_{\xi}\tilde{v}+i\omega \tilde{v}\right|^2 + \left(1+|\xi|\right)^{-1-\delta}|\tilde{v}|^2\right]\, d\xi
			\\ \nonumber &\qquad \lesssim \left(1+\omega^2\right)^2\int_{-\infty}^{\infty}\left[\left|\partial_{\tilde{\omega}}\left(e^{i\omega \xi}\check{H}\right)\right|^2 + \left|\check{H}\right|^2\right](r-r_+)^{-1}\, d\xi + \int_{-1}^1\left[\left|\tilde{v}\right|^2 + \left|\partial_{\xi}\tilde{v}\right|^2\right]\, d\xi.
		\end{align}
		
		Now we observe that $\partial_{\xi}\tilde{v} + i\omega \tilde{v} = e^{-i\xi\omega}\partial_{\xi}\left(\partial_{\tilde\omega}\left(e^{i\omega \xi}v\right)\right)$. In particular, we may conclude the proof of the proposition by applying the fundamental theorem of calculus to control $\partial_{\tilde\omega}\left(e^{i\omega \xi}v\right)$ from the estimate~\eqref{okthisisnice} and using Proposition~\ref{almosthereexcepthorizon} to control the terms with $\tilde{v}$ on the right hand side of~\eqref{okthisisnice}. 
	\end{proof}

	\section{Frequencies where $m - \omega \gtrsim L^{-p}$ }\label{regimeB1.section}
	In this section we will study solutions to~\eqref{eq:mainVrhs} satisfying outgoing boundary conditions~\eqref{outgoingboundary2} under the additional assumptions that

	\begin{equation}\label{mbiggerthanomega}
		m - \omega > 0 \text{ and }\omega \geq 0
	\end{equation}
	\begin{equation}\label{Llarge2}
		L \gg 1,
	\end{equation}
	\begin{equation}\label{thebasiccomparableassump}
		L^{-p} \lesssim (m-\omega),
	\end{equation}
	where $p$ is any large positive integer.
	
	Our main result will be the following.
	\begin{prop}\label{themainpropinthebigthanL2} Let $p > 0$ be a large positive integer and $L$ be sufficiently large depending on $p$. Also, let $R > r_+$ be an arbitrary large constant, and let $\check{\delta} > 0$ be an arbitrarily small constant. Unless noted otherwise, all constants that follow in this proposition may depend on $p$, $R$, and $\check{\delta}$.
		
		There exists a positive integers $N(p)$ and ``critical frequencies'' $\left\{\lambda_n\right\}_{n=1}^{N(p)}$ so that
		\[\left|N\left(p\right)\right| \lesssim L^{p/2},\qquad  -\lambda_n \sim \left(n+L\right)^{-2} \qquad i \in [1,N(p)-1] \Rightarrow \left|\lambda_{i+1}-\lambda_i\right| \gtrsim \left(-\lambda_i\right)^{3/2}.\]
		
		We define the following sets:
		\[I_{\rm good} \doteq \left\{ \omega \in [0,m) : m^2-\omega^2 \geq  L^{-p}\text{ and }{\rm inf}_n \left(m^2-\omega^2 + \lambda_n\right) \geq e^{-\frac{1}{2} \LL \log \LL}\right\},\]
		\[I_{\rm bad} \doteq \left\{ \omega \in [0,m) : m^2-\omega^2 \geq  L^{-p}\text{ and }{\rm inf}_n \left(m^2-\omega^2 + \lambda_n\right) < e^{-\frac{1}{2} \LL \log \LL}\right\}.\]
		
		Suppose that $u$ is an outgoing solution to~\eqref{eq:mainVrhs}, and assume that $H$ is compactly supported in $\{r \leq R\}$. 
		\begin{enumerate}
			\item Suppose that $\omega \in I_{\rm good}$:
			\begin{enumerate}
				\item\label{bigL1} $\sup_{r \leq R}\left|u\right|^2 \lesssim \LL^{2}\int_{-\infty}^R\frac{r}{r-r_+}\left|H\right|^2$.
				\item\label{bigL2} $\sup_{r \leq R}\left|\partial_{\omega}\left(e^{i\omega s}u\right)\right|^2 \lesssim \LL^{4}\int_{-\infty}^R\frac{r}{r-r_+}\left[\left|H\right|^2 + \left|\partial_{\omega}\left(e^{i\omega s}H\right)\right|^2\right]\, ds$.
			\end{enumerate}
			\item Suppose that $\omega \in I_{\rm bad}$: There exist a constant $D > 0$ (independent of $\check{\delta}$)
			\begin{enumerate}
				\item\label{bigL3} $\sup_{r \leq R}\left|u\right|^2 \lesssim e^{D \LL}\int_{-\infty}^R\frac{r}{r-r_+}\left|H\right|^2\, ds$.
				\item\label{bigL4} $\sup_{r \leq R}\left|\partial_{\omega}\left(e^{i\omega s}u\right)\right|^2\lesssim e^{4(1+\check{\delta}) \LL \log \LL}\int_{-\infty}^R\frac{r}{r-r_+}\left[\left|H\right|^2 + \left|\partial_{\omega}\left(e^{i\omega s}H\right)\right|^2\right]\, ds$.
				\item\label{bigL5} 
				\begin{align*}
					&\sup_{r \leq R}\left|\partial_{\omega}\left(e^{i\omega s}u\right)\right|^2 \lesssim \left(\left({\rm inf}_n \left(m^2-\omega^2 + \lambda_n\right)\right)^{-2}e^{\check{\delta}\LL \log\LL}+e^{-4\left(1-\check{\delta}\right)\LL \log\LL}\left({\rm inf}_n \left(m^2-\omega^2 + \lambda_n\right)\right)^{-4}\right)
					\\ \nonumber &\qquad \times\int_{-\infty}^R\frac{r}{r-r_+}\left[\left|H\right|^2 + \left|\partial_{\omega}\left(e^{i\omega s}H\right)\right|^2\right]\, ds.
				\end{align*}
			\end{enumerate}
		\end{enumerate}
	\end{prop}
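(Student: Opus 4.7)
The plan is to exploit a Wronskian/resolvent representation by constructing two fundamental solutions to the homogeneous equation~\eqref{eq:mainV} and bounding them through WKB/Agmon estimates combined with the energy and red-shift identities from Section~\ref{regimeA.section}. Concretely, I would build a horizon-regular solution $u_H$ (satisfying the outgoing condition as $s\to-\infty$) normalized to unit flux, and an infinity-recessive solution $u_\infty$ matching~\eqref{outgoingboundary2} as $s\to+\infty$, the latter constructed by Volterra iteration using Appendix~\ref{volterra.section} starting from the asymptotic profile $r^{Mm^2/\sqrt{m^2-\omega^2}} e^{-\sqrt{m^2-\omega^2}\,s}$. The solution with source $H$ will then be written as
\begin{equation*}
u(s) = \mathcal{W}(\omega)^{-1}\Bigl[u_\infty(s)\int_{-\infty}^{s}u_H\,H\,ds' + u_H(s)\int_s^{\infty}u_\infty\,H\,ds'\Bigr],\qquad \mathcal{W}(\omega) := W(u_H,u_\infty),
\end{equation*}
so that all the quantitative claims reduce to (i) pointwise bounds on $u_H,u_\infty$ in the compact region $\{r\le R\}$ and (ii) a two-sided bound on $\mathcal{W}(\omega)$.

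\textbf{Critical frequencies $\lambda_n$ and the Wronskian.} The $\lambda_n$ are defined as the zeros of $\mathcal{W}(\omega)$ in the interval $m^2-\omega^2\le \bar{c}L^{-2}$; their existence, asymptotic location $-\lambda_n\sim (n+L)^{-2}$, the gap $|\lambda_{i+1}-\lambda_i|\gtrsim(-\lambda_i)^{3/2}$, and the count $N(p)\ls L^{p/2}$ come from a Bohr--Sommerfeld analysis of the classically-allowed annulus $L^2\ls s\ls (m^2-\omega^2)^{-1}$ using \eqref{pot.asymp} and Lemma~\ref{critv}. The central quantitative input is the lower bound on $\mathcal{W}$. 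Agmon-type exponential decay across the forbidden barrier $1\ls s\ls L^2$ controls the amplitude of $u_\infty$ in the compact region by the tunneling factor $\epsilon_L\approx e^{-2L\log L+O(L)}$ (this is the mechanism producing the denominator in~\eqref{resolvent.intro}); combining this with WKB matching across the outer turning points gives
\[|\mathcal{W}(\omega)| \gtrsim \epsilon_L\cdot\mathrm{dist}\bigl(\omega,\{\lambda_n\}\bigr)\cdot(-\lambda_{n(\omega)})^{-3/2}.\]
On $I_{\rm good}$ this is $\gtrsim e^{-(3/2+o(1))L\log L}$, while on $I_{\rm bad}$ it degenerates linearly in $|m^2-\omega^2+\lambda_n|$.

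\textbf{Extracting the estimates.} The unweighted bounds~\ref{bigL1} and~\ref{bigL3} then follow by plugging the Wronskian lower bound and the compact-region sizes of $u_H,u_\infty$ into the variation-of-parameters formula, together with the red-shift weight $\frac{r}{r-r_+}$ absorbed by the horizon flux identity (cf.\ Lemma~\ref{agoodenergyestimateiguess} applied to $u_H$). For the $\omega$-derivative estimates~\ref{bigL2},~\ref{bigL4},~\ref{bigL5} I would differentiate the representation formula. Each derivative falls on either $\mathcal{W}^{-1}$ (producing the extra factor $\mathcal{W}^{-2}\partial_\omega\mathcal{W}$, which is bounded in $I_{\rm bad}$ by $|m^2-\omega^2+\lambda_n|^{-1}$ --- this is exactly what yields the sharper estimate~\ref{bigL5}), one of $u_H,u_\infty$ (which contributes an $L^2$ loss in $I_{\rm good}$ from the $\omega$-dependence of the WKB phase $\kk\approx(m^2-\omega^2)^{-1/2}$ in the allowed annulus, and an $e^{O(L)}$ loss in $I_{\rm bad}$), or $H$ directly, where the combination $\partial_\omega(e^{i\omega s}H)$ arises naturally after stripping the oscillating factor $e^{-i\omega s}$ from $u_H$ near the horizon.

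\textbf{Main obstacle.} The delicate point is the rigorous WKB construction of $u_\infty$ and the two-sided Wronskian estimate uniformly in $(\omega,L)$, because the hydrogen-atom operator in~\eqref{elliptic.intro} approximates the true operator in~\eqref{Schrodinger2} only up to $O(L^4/r^4)$ errors that become comparable to $\epsilon_L$ once $\mathrm{dist}(\omega,\{\lambda_n\})\ll e^{-CL}$. This is exactly why the statement restricts $I_{\rm good}$ to $\mathrm{dist}(\omega,\{\lambda_n\})\ge e^{-\frac{1}{2}L\log L}$ and why the $\omega$-differentiated estimate~\ref{bigL4} is allowed an $e^{4(1+\check{\delta})L\log L}$ loss. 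Matching the three turning-point asymptotics with enough regularity in $\omega$ to make these derivative estimates go through will require choosing suitable WKB coordinates in each of the three subregions and carefully tracking their $\omega$-dependence, analogously to (but technically heavier than) the $\xi$-coordinate analysis in Section~\ref{regimeA.section}; the deep-resonance regime $m-|\omega|\ll L^{-p}$ is then separately treated in Section~\ref{regimeB2.section}.
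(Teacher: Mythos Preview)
Your approach is genuinely different from the paper's, and while the resolvent/Wronskian strategy is natural (indeed the paper uses it in Section~\ref{regimeB2.section} for the complementary regime $m-\omega\lesssim L^{-p}$), there are two concrete problems with applying it here.

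\textbf{First, a factual error.} The $\lambda_n$ cannot be zeros of $\mathcal{W}(\omega)$ on the real axis: by the energy identity (Lemma~\ref{agoodenergyestimateiguess2}), a vanishing Wronskian would produce a nontrivial solution that is simultaneously outgoing at the horizon and exponentially decaying at infinity, i.e.\ a bound state, which does not exist. In the paper the $\lambda_n$ are instead the genuine eigenvalues of a reference self-adjoint operator $Q$ on $(x_0,\infty)$ (Definition~\ref{thisistheoperatorq}), modeled on the hydrogen atom; the asymptotics $-\lambda_n\sim(n+L)^{-2}$, the spacing, and the count $N(p)\lesssim L^{p/2}$ then come from Rayleigh--Ritz comparison with the exact Coulomb spectrum (Lemma~\ref{eigenshavespace}), after Agmon estimates localize the eigenfunctions to the classically allowed region (Lemmas~\ref{expdecayeigeninclassicalforbid}--\ref{expdecayeigeninclassicalforbid2}). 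Your Bohr--Sommerfeld heuristic points in the right direction but is not how the paper proceeds.

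\textbf{Second, and more seriously}, the WKB program you outline must cover the full range $L^{-p}\lesssim m^2-\omega^2\lesssim L^{-2}$, including the coalescence regime $m^2-\omega^2\approx L^{-2}$ where the second and third turning points merge. Uniform two-sided Wronskian bounds through this degeneration are precisely the ``delicate point'' you flag, and the paper deliberately avoids it in this section. Instead, the paper proves Proposition~\ref{themainpropinthebigthanL2} entirely by multiplier methods applied directly to $u$: an exponential-weight $y$-multiplier (Lemma~\ref{thisisgeneralforubutbadconstant}) gives the crude $e^{D\LL}$ bound~\ref{bigL3}; an Agmon estimate through the forbidden barrier combined with the red-shift (Lemma~\ref{agmonplusredest}) transfers global $L^2$ control back to the compact region with gain $e^{-(4-\tilde\delta)\LL\log\LL}$; and the spectral gap of $Q$ furnishes the missing global bound away from the $\lambda_n$ (Lemmas~\ref{estimatesforunottoofarfromspec}, \ref{nottooclosebutkindofclose}). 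For $\partial_\omega u$, rather than differentiating a Green's formula, the paper differentiates the equation itself to obtain a new inhomogeneous ODE for $q=e^{-i\omega s\chi}\partial_\omega(e^{i\omega s\chi}u)$ (Lemma~\ref{qeqnusefultodiffu}) and reapplies the same multipliers. This route never needs pointwise control of $u_\infty$ or a lower bound on $\mathcal{W}$, and is robust across the coalescence regime.
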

	
	It will be convenient to introduce the convention that, throughout this section, we let $\alpha$ denote any suitably large positive integer depending on $p$ which we allow to increase line to line.
	\subsection{Multiplers}\label{multiplier.section}
	In this section we will list various multiplier identities which will be useful in the following sections. 
	
	We start by stating a useful identity which is the key ingredient for the well-known Agmon estimates for solutions to one dimensional inhomogeneous Schr\"{o}dinger equations. The specific identity we cite will be taken from Section 5 of~\cite{quasimodeads} since it is stated in a form which will be convenient for us.
	\begin{lemma}[Lemma 5.1 of~\cite{quasimodeads}]\label{agmonidentity} Let $-\infty < s_0 < s_1 < \infty$, and suppose that $v \in C^{\infty}\left(s_0,s_1\right)$ is a real valued function with $v(s_0) = v(s_1) = 0$ and which satisfies
		\[\frac{d^2v}{ds^2} - Wv = P,\]
		for real valued $W,P \in C^{\infty}\left(s_0,s_1\right)$. Then, for any Lipschitz function $\phi(s) : [s_0,s_1] \to \mathbb{R}$, we have
		\begin{equation}\label{theactualagmonidentity}
			\int_{s_0}^{s_1}\left[\left(\frac{d}{ds}\left(e^{\phi}v\right)\right)^2 + \left(W - \left(\frac{d\phi}{ds}\right)^2\right)\left(e^{\phi}v\right)^2\right]\, ds = -\int_{s_0}^{s_1}e^{2\phi}Pu\, ds.
		\end{equation}
	\end{lemma}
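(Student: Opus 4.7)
The identity is a standard one-dimensional Agmon/weighted energy computation, so the plan is to multiply the ODE by the appropriate weighted test function and integrate by parts, being careful to rewrite mixed terms as a perfect square.

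The plan is to multiply the equation $v'' - Wv = P$ through by $e^{2\phi} v$ and integrate over $[s_0, s_1]$. This gives three terms: $\int e^{2\phi} v \, v''$, $-\int W e^{2\phi} v^2$, and $\int e^{2\phi} v P$. For the first term, I integrate by parts once, using the boundary condition $v(s_0) = v(s_1) = 0$ to kill the boundary contribution, obtaining
\[
\int_{s_0}^{s_1} e^{2\phi} v v''\, ds \;=\; -\int_{s_0}^{s_1} (e^{2\phi} v)' v'\, ds \;=\; -\int_{s_0}^{s_1} e^{2\phi}(v')^2\, ds \;-\; 2\int_{s_0}^{s_1} \phi' e^{2\phi} v v'\, ds.
\]

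The key algebraic observation is that the two terms above combine into a perfect square. Expanding $\bigl((e^{\phi} v)'\bigr)^2 = e^{2\phi}\bigl((\phi')^2 v^2 + 2\phi' v v' + (v')^2\bigr)$, I can solve for $e^{2\phi}(v')^2 + 2\phi' e^{2\phi} v v'$ as $((e^\phi v)')^2 - (\phi')^2 e^{2\phi} v^2$. Substituting this back into the integrated-by-parts expression yields
\[
\int_{s_0}^{s_1} e^{2\phi} v v''\, ds \;=\; -\int_{s_0}^{s_1} \bigl((e^\phi v)'\bigr)^2\, ds \;+\; \int_{s_0}^{s_1} (\phi')^2 e^{2\phi} v^2\, ds.
\]
Plugging this into the integrated ODE and rearranging produces exactly the claimed identity \eqref{theactualagmonidentity}.

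There is no genuine obstacle here: the computation is a direct application of integration by parts together with the completion-of-square identity above. The only technical subtlety is that $\phi$ is only assumed Lipschitz, so $\phi'$ exists only a.e.; however, Lipschitz functions are absolutely continuous, so $(e^\phi v)' = e^\phi(\phi' v + v')$ a.e.\ and the fundamental theorem of calculus applies, making all integration-by-parts steps justified. A brief approximation of $\phi$ by smooth functions (or a direct appeal to the Leibniz rule for Sobolev functions) suffices to dispense with this point.
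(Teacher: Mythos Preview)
Your proposal is correct and is exactly the standard Agmon weighted-energy computation; the paper does not supply its own proof but simply cites Lemma~5.1 of Holzegel--Smulevici, where the same multiply-by-$e^{2\phi}v$ and complete-the-square argument is carried out. Your remark about handling Lipschitz $\phi$ via absolute continuity is the right justification for the only nontrivial technical point.
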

	\begin{rmk}Lemma~\ref{agmonidentity} is useful when the potential $W$ is large and positive so that the left hand side of~\eqref{theactualagmonidentity} is positive definite even when $\phi$ is large.
	\end{rmk}
	
	It will be useful to slightly broaden the scope of solutions to~\eqref{eq:mainVrhs} which we call ``outgoing.''
	\begin{defn}\label{weakoutgoingdef}Assume that $\omega^2 < m^2$. We then say that $u$ is a ``weakly outgoing'' solution to~\eqref{eq:mainVrhs} if 
		\[u \sim e^{-i\omega s}\text{ as }s\to -\infty,\qquad \exists \epsilon > 0\text{ so that }\lim_{s\to \infty} e^{\epsilon s}\left(\left|u\right|+\left|\partial_su\right|\right) = 0.\]
	\end{defn}
	\begin{rmk}
		The reason we have introduced this definition is that when $u$ is an outgoing solution to~\eqref{eq:mainVrhs} it will generally not be the case   that $\partial_{\omega}u$ has the correct boundary behavior as $s\to \infty$ to be considered outgoing. It will, however, be weakly outgoing as $s\to\infty$.
	\end{rmk}

	Next, we present an identity which is the ODE analogue of the usual $\partial_t$-energy identitiy (cf.~Lemma~\ref{agoodenergyestimateiguess}).
	\begin{lemma}\label{agoodenergyestimateiguess2}Suppose that $u$ is a weakly outgoing solution to~\eqref{eq:mainVrhs}. Then 
		\begin{equation} \label{agoodenergyestimateiguess2.eq}
			\omega \left|u(-\infty)\right|^2 = \int_{-\infty}^{\infty} \Im\left(H\overline{v}\right)\, ds.
		\end{equation}
	\end{lemma}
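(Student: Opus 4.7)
The plan is to derive the identity as the one-dimensional analogue of the usual $\partial_t$-energy conservation: multiply the equation~\eqref{eq:mainVrhs} by $\overline{u}$ and take imaginary parts. Since $\omega$, $m$, $L$, and the metric functions are all real, the potential $V(s)$ is real-valued, so $\Im(V|u|^2) = 0$. Combined with the trivial identity $\partial_s\Im(\partial_s u \cdot \overline{u}) = \Im(\partial_s^2 u \cdot \overline{u})$, this yields the pointwise divergence
\[
\partial_s\bigl[\Im(\partial_s u \cdot \overline{u})\bigr] = \Im(H\overline{u}).
\]
Integrating over $s \in \mathbb{R}$ reduces the lemma to computing the two boundary terms $\Im(\partial_s u \cdot \overline{u})\bigr|_{\pm\infty}$ (assuming here that the statement in~\eqref{agoodenergyestimateiguess2.eq} intends $\overline{u}$ in place of $\overline{v}$).

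For the boundary at $+\infty$, the weakly outgoing condition of Definition~\ref{weakoutgoingdef} provides an $\epsilon > 0$ with $e^{\epsilon s}\bigl(|u| + |\partial_s u|\bigr)\to 0$, so $\Im(\partial_s u\cdot\overline{u})(+\infty)=0$. For the boundary at $-\infty$, I would substitute the asymptotic $u(s) \sim A e^{-i\omega s}$ (with $|A|=|u(-\infty)|$), differentiate it to obtain $\partial_s u(s) \sim -i\omega A e^{-i\omega s}$, and conclude $\partial_s u\cdot\overline{u} \to -i\omega|A|^2$, i.e.\ $\Im(\partial_s u \cdot \overline{u})(-\infty) = -\omega|u(-\infty)|^2$. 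Combining the two boundary evaluations with the divergence identity gives precisely $\omega|u(-\infty)|^2 = \int_{\mathbb{R}} \Im(H\overline{u})\,ds$.

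The only point requiring a word of care is the legitimacy of differentiating the asymptotic expansion at $-\infty$. This is standard because $V(s) + \omega^2 \to 0$ exponentially as $s \to -\infty$ by~\eqref{C+.def}; consequently $\partial_s u + i\omega u$ satisfies an ODE whose right-hand side is integrable near $-\infty$, so a Volterra-type argument (compare Appendix~\ref{volterra.section}) yields $\partial_s u + i\omega u \to 0$. I do not anticipate any real obstacle: this is essentially the $1$D Wronskian/energy identity, and compact support (or sufficient decay in $s$) of $H$ justifies the integrability of the right-hand side.
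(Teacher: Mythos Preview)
Your proof is correct and follows exactly the paper's approach: establish the divergence identity $\partial_s\Im(\partial_s u\,\overline{u}) = \Im(H\overline{u})$ and integrate, evaluating the boundary terms via the weakly outgoing conditions. You are also right that the $\overline{v}$ in the statement is a typo for $\overline{u}$; your additional justification of the asymptotic $\partial_s u + i\omega u \to 0$ at $-\infty$ is more detail than the paper provides but is entirely sound.
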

	\begin{proof}We note that
		\[\partial_s\Im\left(\partial_su\overline{u}\right) = \Im\left(H\overline{u}\right).\]
		Then we integrate this identity over $s \in (-\infty,\infty)$ and use the weakly outgoing boundary conditions.
	\end{proof}
	
	Next, we state a family of multipliers which we will use to generate $L^2_s$ estimates for $u$ (cf.~Lemma~\ref{bunchofmultipliers}).
	\begin{lemma}\label{bunchofmultipliers2} Suppose that $u$ is a weakly outgoing solution to ~\eqref{eq:mainVrhs}.
		\begin{enumerate}
			
			\item If $y(s)$ is a $C^0$ function which is piecewise $C^1$ and satisfies that the limits $y(\pm\infty) \doteq \lim_{s\to\pm\infty}y(s)$ exist, then
			\begin{align}\label{ymulteqn2}
				&\int_{-\infty}^{\infty}\left[\partial_sy \left|\partial_su\right|^2 - \partial_s\left(y V\right) \left|u\right|^2\right]\, ds =
				\\ \nonumber &\qquad  - y(-\infty)\omega^2\left|u(-\infty)\right|^2 + \int_{-\infty}^{\infty}2y\Re\left(H\overline{\partial_su}\right)\, d\xi.
			\end{align}
			\item If $z(s) $ is a $C^0$ function which is piecewise $C^1$, vanishes for large $r$, and is such that $lim_{r\to r_+}(r-r_+)z$ exists, then 
			\begin{align}\label{redmicro2}
				&\int_{-\infty}^{\infty}\partial_sz \left|\partial_su + i\omega u\right|^2 - \partial_s\left(z(V+\omega^2)\right)\left|u\right|^2\, ds = 
				\\ \nonumber &\qquad -2\int_{-\infty}^{\infty}z\Re\left(H\left(\overline{\partial_su + i\omega u}\right)\right)\, ds + \left(\lim_{r \to r_+}z\left(V+\omega^2\right)\right)\left|u(-\infty)\right|^2.
			\end{align}
			\item If $h(s)$ a $C^1$ function which is piecewise $C^2$ so that $\limsup_{s\to\infty}|h(s)|  < \infty$ and $\limsup_{s\to\infty}|h'(s)| = 0$, then 
			\begin{align}\label{hcurent}
				\int_{-\infty}^{\infty}\left[h\left|\partial_su\right|^2 + \left(hV - \frac{1}{2}\partial_s^2h\right)\left|u\right|^2\right]\, ds = -\int_{-\infty}^{\infty}h\Re\left(H\overline{u}\right)\, ds.
			\end{align}
		\end{enumerate}
	\end{lemma}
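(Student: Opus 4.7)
These three identities are all multiplier identities for the one-dimensional inhomogeneous Schr\"odinger equation $\partial_s^2 u - V u = H$. For each one I will introduce a suitable current $J$ built bilinearly out of $u$ and $\partial_s u$, compute $\partial_s J$, substitute the equation to eliminate $\partial_s^2 u$ in terms of $u$ and $H$, integrate from $s=-\infty$ to $s=+\infty$, and evaluate the resulting boundary terms using the weakly outgoing conditions of Definition~\ref{weakoutgoingdef}. At $s\to+\infty$, the exponential decay of $u$ and $\partial_s u$ kills the boundary (provided the multiplier and its derivatives grow only polynomially, which is part of the hypotheses); at $s\to-\infty$, the asymptotic $u \sim e^{-i\omega s}$ together with $V(s) \to 0$ gives the explicit boundary contributions that appear in the stated identities.

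\textbf{Identity 1 ($y$-multiplier).} Take $J_y := y|\partial_s u|^2 - yV|u|^2$. Using $\partial_s|\partial_s u|^2 = 2\Re(\partial_s^2 u\,\overline{\partial_s u}) = 2V\Re(u\,\overline{\partial_s u}) + 2\Re(H\,\overline{\partial_s u})$ together with $\partial_s|u|^2 = 2\Re(u\,\overline{\partial_s u})$, the cross terms $\pm 2yV\Re(u\,\overline{\partial_s u})$ cancel exactly, leaving
$$\partial_s J_y \;=\; \partial_s y\,|\partial_s u|^2 \;-\; \partial_s(yV)\,|u|^2 \;+\; 2y\,\Re(H\,\overline{\partial_s u}).$$
Integrating and using $J_y(+\infty)=0$ by weakly outgoing decay, and $J_y(-\infty) = y(-\infty)\,\omega^2|u(-\infty)|^2$ (since $V\to 0$ at the horizon and $|\partial_s u|^2 \to \omega^2|u(-\infty)|^2$), yields~\eqref{ymulteqn2}.

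\textbf{Identity 2 ($z$-multiplier, red-shift type).} Take $J_z := z\,|\partial_s u + i\omega u|^2 - z(V+\omega^2)|u|^2$. The shifted derivative $\partial_s u + i\omega u$ is engineered to vanish in the limit $s\to -\infty$, which is precisely what permits $z$ to be allowed to blow up like $(r-r_+)^{-1}$ at the horizon. A calculation analogous to identity~1 produces
$$\partial_s J_z \;=\; \partial_s z\,|\partial_s u + i\omega u|^2 \;-\; \partial_s\bigl(z(V+\omega^2)\bigr)\,|u|^2 \;+\; 2z\,\Re\!\bigl(H\,\overline{(\partial_s u + i\omega u)}\bigr).$$
At $+\infty$, $J_z$ vanishes since $z$ is compactly supported in $r$; at $-\infty$, $|\partial_s u + i\omega u|^2 \to 0$ by the outgoing condition, while by hypothesis $\lim_{r\to r_+}(r-r_+)z$ exists and $V+\omega^2 = O(r-r_+)$ near $r_+$, so the product $z(V+\omega^2)$ has a finite limit, giving the boundary contribution stated in~\eqref{redmicro2}.

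\textbf{Identity 3 ($h$-multiplier, virial type).} I use a current with different bilinear structure, $J_h := h\,\Re(\partial_s u\,\overline{u}) - \tfrac{1}{2}\partial_s h\,|u|^2$. Direct differentiation and use of $\Re(\partial_s^2 u\,\bar u) = V|u|^2 + \Re(H\bar u)$ give
$$\partial_s J_h \;=\; h\,|\partial_s u|^2 \;+\; \bigl(hV - \tfrac{1}{2}\partial_s^2 h\bigr)|u|^2 \;+\; h\,\Re(H\,\bar u).$$
At $+\infty$ both terms of $J_h$ vanish by the hypotheses $|h|$ bounded, $|h'|\to 0$, and $u,\partial_s u \to 0$; at $-\infty$, the outgoing asymptotic $u\sim e^{-i\omega s}$ forces $\partial_s u\,\bar u \to -i\omega|u(-\infty)|^2$, which is purely imaginary, so $h\,\Re(\partial_s u\,\bar u)\to 0$, and the surviving term $-\tfrac{1}{2}\partial_s h\,|u|^2$ drops out by the requirement that $\partial_s h$ vanishes at the relevant end (an assumption satisfied in all applications). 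This yields~\eqref{hcurent}.

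\textbf{Expected main obstacle.} The step that requires genuine care is the boundary analysis at $s\to -\infty$ in identity~2: one must use the exact cancellation $\partial_s u + i\omega u = o(1)$ coming from the outgoing boundary condition to absorb the $(r-r_+)^{-1}$ singularity that $z$ is permitted to carry. All other boundary computations follow routinely from Definition~\ref{weakoutgoingdef}, the exponential decay $V(s) = O(e^{2\kappa_+ s})$ as $s\to -\infty$, and the polynomial-growth hypotheses on the multipliers.
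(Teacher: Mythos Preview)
Your proposal is correct and takes essentially the same approach as the paper: the same three currents $J_y = y|\partial_s u|^2 - yV|u|^2$, $J_z = z|\partial_s u + i\omega u|^2 - z(V+\omega^2)|u|^2$, and $J_h = h\Re(\partial_s u\,\overline{u}) - \tfrac{1}{2}\partial_s h\,|u|^2$ are differentiated, the equation is substituted, and the boundary terms are read off from the weakly outgoing condition. Your caveat about the $-\tfrac{1}{2}\partial_s h\,|u|^2$ boundary contribution at $s\to -\infty$ is well spotted --- the lemma statement does not literally impose $h'(-\infty)=0$, and the paper's proof glosses over this too; in every application $h$ vanishes identically for $s$ sufficiently negative, so the issue never arises.
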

	\begin{proof}
		The identities~\eqref{ymulteqn2} and~\eqref{redmicro2} are derived just as the identities~\eqref{ymulteqn} and~\eqref{redmicro} were derived in the proof of Lemma~\ref{bunchofmultipliers}.  To prove~\eqref{hcurent} we observe that
		\[\partial_s\left(h\Re\left(\partial_su\overline{u}\right) - \frac{1}{2}\partial_sh\left|u\right|^2\right) = h\left|\partial_su\right|^2  + \left(hV - \frac{1}{2}\partial_s^2h\right)\left|u\right|^2 + h\Re\left(H\overline{u}\right).\]
		The identity~\eqref{hcurent} then follows by integrating this identity and applying the fundamental theorem of calculus.
	\end{proof}
	
	Our use of the identity~\eqref{redmicro2} will be through the following lemma which is an ODE version of the red shift multiplier of Dafermos--Rodnianski~\cite{Red,claylecturenotes}.
	\begin{lemma}\label{redshiftode} Let $u$ be an outgoing solution of~\eqref{eq:mainVrhs} with $\omega^2 < m^2$. Let $s_0 < 0$ be sufficiently negative. Then
		\begin{align}\label{tocomplementhcurr}
			&\int_{-\infty}^{s_0-1}\left[\left(r-r_+\right)^{-1}\left|\partial_su + i\omega u\right|^2 + L^2\left(1+|s|\right)^{-1-\delta}\left|u\right|^2\right]\, ds \lesssim 
			\\ \nonumber &\qquad \qquad \int_{-\infty}^{s_0}\left(r-r_+\right)^{-1}\left|H\right|^2\,ds + \int_{s_0-1}^{s_0}L^2\left|u\right|^2\, ds.
		\end{align}
	\end{lemma}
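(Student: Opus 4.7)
The strategy is to apply the red-shift multiplier identity~\eqref{redmicro2} from Lemma~\ref{bunchofmultipliers2} with the choice
\[z(s) \doteq -L^{-2}\bigl(V+\omega^2\bigr)^{-1}\bigl(1+(-s)^{-\delta}\bigr)\psi(s),\]
where $\psi$ is a smooth cutoff equal to $1$ on $(-\infty,s_0-3/4]$ and vanishing on $[s_0-1/4,\infty)$. The motivation is the near-horizon asymptotics: since $(1-2M/r+e^2/r^2)\sim 2\kappa_+(r-r_+)$ and $L\gg 1$,
\[V+\omega^2 \ \approx\ \tfrac{2\kappa_+}{r_+^2}L^2(r-r_+),\qquad \partial_s\bigl(V+\omega^2\bigr)\ \approx\ 2\kappa_+\bigl(V+\omega^2\bigr).\]
Hence, provided $s_0$ is sufficiently negative, for $s\leq s_0-1$ one verifies
\[\partial_sz \ \gtrsim\ L^{-4}(r-r_+)^{-1},\qquad -\partial_s\bigl(z(V+\omega^2)\bigr) \ =\ L^{-2}\delta(-s)^{-1-\delta},\qquad \lim_{r\to r_+}z(V+\omega^2)\ =\ -L^{-2}.\]

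Substituting into~\eqref{redmicro2}, the left-hand side controls $L^{-4}\int_{-\infty}^{s_0-1}(r-r_+)^{-1}|\partial_su+i\omega u|^2\,ds + L^{-2}\int_{-\infty}^{s_0-1}\delta(-s)^{-1-\delta}|u|^2\,ds$ together with a harmless positive boundary contribution $L^{-2}|u(-\infty)|^2$ which may be discarded. The source term $-2\int z\,\Re(H\overline{\partial_su+i\omega u})\,ds$ is bounded by Cauchy--Schwarz using the pointwise bound $z^2/\partial_sz\lesssim L^{-4}(r-r_+)^{-1}$, which permits absorbing half of the $(r-r_+)^{-1}|\partial_su+i\omega u|^2$-integral back to the left-hand side and produces $L^{-4}\int_{-\infty}^{s_0}(r-r_+)^{-1}|H|^2\,ds$ on the right. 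Multiplying through by $L^4$ and using $(1+|s|)^{-1-\delta}\approx (-s)^{-1-\delta}$ on $(-\infty,s_0-1]$ then reproduces the main portion of~\eqref{tocomplementhcurr}.

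The main obstacle is the ``collar'' $s\in[s_0-3/4,s_0-1/4]$ where $\psi'\neq 0$. The cutoff derivative inside $-\partial_s(z(V+\omega^2))$ contributes at most $L^{-2}\int_{\rm collar}|u|^2$, which is benign and is absorbed by the $L^2\int_{s_0-1}^{s_0}|u|^2$ term on the right-hand side of~\eqref{tocomplementhcurr}. The potentially problematic contribution is the term $\int_{\rm collar}z_0\psi'\,|\partial_su+i\omega u|^2$ inside $\partial_s z$, which, after the $L^4$-rescaling, carries an $O(1)$ coefficient and is therefore not a priori in the allowed right-hand side. To close the argument, I would control it by a local Caccioppoli estimate: pairing the equation $\partial_s^2u=Vu+H$ with $\eta^2\bar u$ for a smooth $\eta$ supported in $[s_0-1,s_0]$ with $\eta\equiv 1$ on $[s_0-3/4,s_0-1/4]$, and integrating by parts, one exploits $\|V\|_{L^\infty[s_0-1,s_0]}\lesssim L^2$ together with $\omega^2\leq m^2$ to conclude
\[\int_{s_0-3/4}^{s_0-1/4}|\partial_su+i\omega u|^2\,ds \ \lesssim\ L^2\!\int_{s_0-1}^{s_0}|u|^2\,ds + \int_{s_0-1}^{s_0}|H|^2\,ds.\]
Since $(r-r_+)^{-1}\gtrsim 1$ on $[s_0-1,s_0]$, the last integral is bounded by $\int_{-\infty}^{s_0}(r-r_+)^{-1}|H|^2\,ds$, and combining everything yields~\eqref{tocomplementhcurr}.
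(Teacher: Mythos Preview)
Your approach is essentially the same as the paper's: apply the red-shift identity~\eqref{redmicro2} with $z$ equal to $-L^{-2}(V+\omega^2)^{-1}(1+(-s)^{-\delta})$ times a cutoff localising to $\{s\leq s_0\}$, and verify the coercivity near the horizon from $\partial_s(V+\omega^2)\sim 2\kappa_+(V+\omega^2)$. (The paper uses the normalisation $-L^{2}(V+\omega^2)^{-1}$ instead of your $-L^{-2}(V+\omega^2)^{-1}$, which merely avoids the final multiplication by $L^4$.)

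One remark: your ``main obstacle'' is not actually an obstacle. In the collar you have $z_0<0$ (since $V+\omega^2>0$ there) and $\psi'\leq 0$ (since $\psi$ decreases from $1$ to $0$), so $z_0\psi'\geq 0$ and the contribution $\int_{\rm collar} z_0\psi'\,|\partial_s u+i\omega u|^2$ sits on the left-hand side with a \emph{favourable} sign and can simply be dropped. The Caccioppoli estimate you prove is correct but unnecessary; the only genuinely bad-sign collar term is the one you already identified inside $-\partial_s(z(V+\omega^2))$, which is indeed $O(L^{-2})\int_{\rm collar}|u|^2$ and is absorbed by the $L^2\int_{s_0-1}^{s_0}|u|^2$ term on the right.
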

	\begin{proof}
		Now we let $z(s)$ be a function which is equal to $\left(-L^{-2}\left(\omega^2+V\right)\right)^{-1}\left(1+(-s)^{-\delta}\right)$ for $s \leq s_0-1$ and vanishes for $s \geq s_0$. We then apply the identity~\eqref{redmicro2} and argue as in the proof of the estimate~\eqref{redshiftyay}.
		
	\end{proof}
	
	The following computation will be useful in what follows.
	\begin{lemma}\label{qeqnusefultodiffu}Let $u$ solve~\eqref{eq:mainVrhs}, and define $q$  by
		\[q \doteq e^{-i\omega s \chi(s)}\partial_{\omega}\left(e^{i\omega s \chi(s)}u\right).\]
		for smooth function $\chi(s)$.  Then a computation shows that $q$ will satisfy the following equation:
		\begin{equation}\label{qeqnfordiff}
			\partial_s^2q - Vq = 2i\left(\chi+s\chi'\right) \left(\partial_su+i\omega u\right)+i\left(2\chi' + s\chi''\right)u + e^{-iws \chi}\partial_{\omega}\left(e^{i\omega s \chi}H\right) + 2\omega \left(\chi+s\chi'-1\right)u.
		\end{equation}
	\end{lemma}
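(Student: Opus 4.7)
The statement is a direct computation, so the plan is simply to differentiate the defining relation for $q$ carefully, substitute the equation for $u$, and regroup terms to match the claimed right-hand side of \eqref{qeqnfordiff}. First I would observe that
\[
q = e^{-i\omega s \chi}\partial_{\omega}\left(e^{i\omega s \chi} u\right) = i s \chi \, u + \partial_{\omega} u,
\]
so that the problem reduces to an identity involving $u$ and $\partial_{\omega}u$, with the phase factor $e^{i\omega s \chi}$ no longer explicitly present.

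Next I would differentiate twice in $s$. Using the Leibniz rule one obtains
\[
\partial_s q = i(\chi + s\chi')\, u + i s \chi \, \partial_s u + \partial_s \partial_{\omega} u,
\]
\[
\partial_s^2 q = i(2\chi' + s\chi'')\, u + 2i(\chi + s\chi')\, \partial_s u + i s \chi \, \partial_s^2 u + \partial_s^2 \partial_{\omega} u.
\]
Then I would eliminate $\partial_s^2 u$ and $\partial_s^2 \partial_{\omega} u$ using \eqref{eq:mainVrhs}: namely $\partial_s^2 u = V u + H$, and, differentiating this relation in $\omega$,
\[
\partial_s^2 \partial_{\omega} u = (\partial_{\omega} V)\, u + V \partial_{\omega} u + \partial_{\omega} H = -2\omega\, u + V \partial_{\omega} u + \partial_{\omega} H,
\]
where the key algebraic input is that $\partial_{\omega} V = -2\omega$, which is immediate from the definition of $V$.

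Substituting these expressions and subtracting $V q = i s \chi \, V u + V \partial_{\omega} u$ yields
\[
\partial_s^2 q - V q = 2i(\chi + s\chi')\, \partial_s u + i(2\chi' + s\chi'')\, u + i s\chi\, H + \partial_{\omega} H - 2\omega\, u.
\]
Finally, I would recognize on the right the combinations $2i(\chi + s\chi')(\partial_s u + i\omega u) = 2i(\chi + s\chi')\partial_s u - 2\omega(\chi + s\chi')u$ and $e^{-i\omega s\chi}\partial_{\omega}(e^{i\omega s\chi} H) = i s\chi\, H + \partial_{\omega} H$, so that the $-2\omega u$ term is absorbed by writing $-2\omega u = -2\omega(\chi+s\chi')u + 2\omega(\chi + s\chi' - 1)u$. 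Collecting everything reproduces \eqref{qeqnfordiff} exactly. There is no real obstacle; the only care needed is to keep track of the bookkeeping of the $\omega$-factors and to use $\partial_\omega V = -2\omega$ at the right moment.
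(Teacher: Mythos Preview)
Your computation is correct and is exactly the direct verification the paper has in mind; the paper in fact omits the proof entirely, stating only that ``a computation shows'' the identity, so your expansion of $q = is\chi\, u + \partial_\omega u$, substitution of $\partial_s^2 u = Vu + H$ and $\partial_\omega V = -2\omega$, and final regrouping is precisely what is intended.
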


	\subsection{The reference elliptic operator at infinity}\label{ellipatinfinity}
	In this section we will analyze an operator which will approximate the left hand side of~\eqref{eq:mainVrhs} well when $s$ is sufficiently large. Recall that for any non-negative integer $k$ and $x_0 \in \mathbb{R}$, the space $H^k_0\left(\left(x_0,\infty)\right)\right)$ denotes the closure of smooth compactly supported functions in $(x_0,\infty)$ under the Sobolev norm-$\left\vert\left\vert\cdot\right\vert\right\vert_{H^k}$.
	\begin{defn}\label{thisistheoperatorq}Let $x_0 \geq 2$ so that $L$ is sufficiently large with respect to $x_0$ and suppose that $\mathscr{E}_1(x),\mathscr{E}_2(x) : (x_0,\infty) \to \mathbb{R}$ satisfy the estimates
		\[\left|\mathscr{E}_1(x)\right| \lesssim \frac{\log(x)}{x^2},\qquad \left|\mathscr{E}_2(x)\right| \leq {\rm min}\left(\frac{1}{10},C\frac{\log(x)}{x}\right),\]
		for some constant $C > 0$.
		
		Then we define a differential operator $Q : H^2_0(\left(x_0,\infty\right)) \to L^2\left(\left(x_0,\infty\right)\right)$ by
		\[Qf \doteq -\frac{d^2f}{dx^2} + \underbrace{\left(-\frac{2Mm^2}{x} + \frac{L(L+1)+\mathbf{e}^2m^2}{x^2}\left(1+\mathscr{E}_2(x)\right) + \mathscr{E}_1(x)\right)}_{\doteq \tilde{V}}f.\]
	\end{defn}
	
	It is straightforward to show that $Q$ is self-adjoint.
	\begin{lemma}The operator $Q$ with domain $H^2_0\left(\left(x_0,\infty\right)\right)$ is self-adjoint on $L^2\left(\left(x_0,\infty\right)\right)$.
	\end{lemma}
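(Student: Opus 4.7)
The plan is to write $Q = Q_0 + M_{\tilde V}$, where $Q_0 \doteq -d^2/dx^2$ acts on $H^2_0((x_0,\infty))$ and $M_{\tilde V}$ denotes multiplication by $\tilde V$. I would first verify that $\tilde V$ is a bounded, real-valued, smooth function on $(x_0,\infty)$: since $x_0 \geq 2$, each of $2Mm^2/x$, $(L(L+1)+\mathbf{e}^2 m^2)/x^2$, and the error terms $\mathscr{E}_1, \mathscr{E}_2$ are uniformly bounded on $(x_0,\infty)$ by the assumptions in Definition~\ref{thisistheoperatorq} (the $L$-dependent constant is harmless at this fixed stage). Thus $M_{\tilde V}$ is a bounded self-adjoint operator on $L^2((x_0,\infty))$.

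The main content then reduces to showing that $Q_0$ is self-adjoint (or at least essentially self-adjoint with self-adjoint closure coinciding with the stated domain). Symmetry of $Q_0$ on $C_c^\infty((x_0,\infty))$ is routine integration by parts, and extends to the $H^2$-closure by continuity. For self-adjointness I would appeal to Weyl's limit-point/limit-circle theory applied to $-d^2/dx^2$ on the half-line $(x_0,\infty)$. At the regular endpoint $x_0$, the Dirichlet boundary condition $u(x_0)=0$ (inherited from the $H^2$-closure of $C_c^\infty$) singles out a self-adjoint extension; at infinity the operator is in the limit-point case, since the two linearly independent solutions of $-u''\pm iu=0$ cannot simultaneously be $L^2$ near $+\infty$ (one grows exponentially). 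Hence the deficiency indices of $Q_0$ are $(0,0)$, i.e., $Q_0$ is self-adjoint.

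Finally, since $Q_0$ is self-adjoint and $M_{\tilde V}$ is a bounded symmetric operator on $L^2((x_0,\infty))$, the Kato--Rellich theorem (in its trivial bounded-perturbation form) gives that $Q = Q_0 + M_{\tilde V}$ is self-adjoint on $\operatorname{Dom}(Q_0) = H^2_0((x_0,\infty))$.

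\textbf{Main obstacle.} The principal subtlety is the correct reading of the notation $H^2_0$. If interpreted literally as the $H^2$-closure of $C_c^\infty$, then trace theorems force \emph{both} $u(x_0)=0$ and $u'(x_0)=0$, in which case $Q_0$ is strictly speaking only essentially self-adjoint, its self-adjoint closure being the usual Dirichlet Laplacian on $\{u\in H^2 : u(x_0)=0\}$. I would clarify this convention at the outset; in either reading the conclusion follows after passing to closures, and the Kato--Rellich step is unaffected.
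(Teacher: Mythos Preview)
Your approach is correct and is essentially the kind of ``standard argument'' the paper alludes to (the paper omits the proof entirely, citing Theorem~VIII.3 of Reed--Simon and noting only that $f\in H^2_0$ vanishes at $x_0$). Decomposing $Q$ as the free half-line Dirichlet Laplacian plus a bounded multiplication operator, then invoking limit-point at infinity and Kato--Rellich, is exactly the right route.

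Your flagged obstacle is a genuine one and is worth stating explicitly. With the paper's literal definition of $H^2_0$ as the $H^2$-closure of $C_c^\infty((x_0,\infty))$, trace theory forces both $u(x_0)=0$ and $u'(x_0)=0$; on that domain $-d^2/dx^2$ is symmetric with deficiency indices $(1,1)$, not self-adjoint. The paper's parenthetical (``any function $f\in H^2_0$ will vanish at $x=x_0$'') strongly suggests the intended domain is $\{u\in H^2:u(x_0)=0\}$, the Dirichlet realization, for which self-adjointness does hold. Your proposed resolution---clarify the convention and, if necessary, pass to the Dirichlet self-adjoint closure---is the right repair, and nothing downstream in the paper is sensitive to this distinction.
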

	\begin{proof}This follows from standard arguments and thus we omit the proof. (See, for example, Theorem VIII.3 from~\cite{reedsimonI}, and note that any function $f \in H_0^2\left(\left(x_0,\infty\right)\right)$ will vanishes at $x = x_0$.)
	\end{proof}
	
	Next, we consider the rough qualitative properties of the spectrum of $Q$
	\begin{lemma}\label{basicspectQ}We have $\sigma_{\rm ess}\left(Q\right) = [0,\infty)$, and there exist an infinite sequence of eigenvalues $\{\lambda_n\}_{n=1}^{\infty}$ with $\lambda_n <0$ which satisfy $\lambda_n \to 0$ as $n\to\infty$.
	\end{lemma}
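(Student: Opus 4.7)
The lemma is the standard spectral picture for an attractive Coulomb-type Schr\"odinger operator on a half-line, so I would follow the classical two-step strategy: identify the essential spectrum by a relatively compact perturbation argument, then produce infinitely many negative eigenvalues by a variational (min-max) construction with trial functions supported at large $x$.

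For the essential spectrum, let $Q_0 = -d^2/dx^2$ with Dirichlet condition at $x = x_0$ on $H^2_0((x_0,\infty))$. Standard one-dimensional scattering theory gives $\sigma(Q_0) = \sigma_{\rm ess}(Q_0) = [0,\infty)$. I would then verify that multiplication by $\tilde V$ is a relatively compact perturbation of $Q_0$: the bounds assumed on $\mathscr{E}_1,\mathscr{E}_2$ together with $\frac{L(L+1)+\mathbf{e}^2m^2}{x^2}$ and $\frac{2Mm^2}{x}$ show $\tilde V(x)\to 0$ as $x\to\infty$ and $\tilde V\in L^\infty((x_0,\infty))$. Approximating $\tilde V$ in $L^\infty$ by compactly supported functions and using that $\chi (Q_0+1)^{-1}$ is compact for any $\chi$ of compact support in $[x_0,\infty)$ (Rellich-Kondrachov on a bounded interval with Dirichlet data), we get $\tilde V (Q_0+1)^{-1}$ compact. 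Weyl's theorem then yields $\sigma_{\rm ess}(Q) = \sigma_{\rm ess}(Q_0) = [0,\infty)$.

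For the negative eigenvalues, I would use the min-max principle. Fix a nonzero $\varphi \in C^\infty_c((1,2))$ and, for $R$ large, set $\varphi_R(x) \doteq R^{-1/2}\varphi(x/R)$, so $\varphi_R\in H^2_0((x_0,\infty))$ once $R$ is sufficiently large, and $\|\varphi_R\|_{L^2}=\|\varphi\|_{L^2}$. A direct computation gives
\[
\langle Q\varphi_R,\varphi_R\rangle = R^{-2}\|\varphi'\|_{L^2}^2 + \int_{R}^{2R}\tilde V(x)|\varphi_R(x)|^2\,dx.
\]
On $[R,2R]$ with $R\gg L^2$, the Coulomb term dominates: $\tilde V(x) = -\tfrac{2Mm^2}{x}(1+o(1))$, so the potential integral is $-c\, R^{-1}\|\varphi\|_{L^2}^2(1+o(1))$ for a fixed $c>0$. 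Thus $\langle Q\varphi_R,\varphi_R\rangle < 0$ for all $R\geq R_0$ large enough. Picking a sequence $R_1 < 2R_1 < R_2 < 2R_2 < \dots$ yields $\varphi_{R_n}$ with pairwise disjoint supports, hence pairwise orthogonal, each with negative Rayleigh quotient. By the min-max principle (applied on the $n$-dimensional subspace they span), $Q$ has at least $n$ eigenvalues strictly below $0$ for every $n$, so there are infinitely many eigenvalues $\lambda_n < 0$.

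Finally, $Q$ is bounded below (since $\tilde V$ is bounded on $(x_0,\infty)$), so the sequence $\{\lambda_n\}$ lies in a bounded interval $[-C,0)$. Because the only possible accumulation point of discrete eigenvalues below $\sigma_{\rm ess}(Q) = [0,\infty)$ is $0$, we conclude $\lambda_n \to 0^-$. The main (mild) obstacle is the first step: one must make sure the error terms $\mathscr{E}_1,\mathscr{E}_2$ and the centrifugal $L(L+1)/x^2$ truly contribute compact, decaying perturbations; but under the stated bounds this is routine. The variational step is completely elementary once one chooses the bump supports beyond $x\gg L^2$ so the attractive Coulomb tail wins.
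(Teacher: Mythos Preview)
Your proof is correct and follows essentially the same approach as the paper: Weyl's theorem via a relatively compact perturbation of the Dirichlet Laplacian for the essential spectrum, and the min-max principle together with lower semiboundedness for the infinite sequence of negative eigenvalues accumulating at $0$. The only difference is cosmetic: where the paper cites Reed--Simon XIII.6 for the existence of infinitely many negative eigenvalues, you carry out the standard trial-function construction explicitly.
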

	\begin{proof}This is straightforward, so we will be brief.
		
		By explicit diagonalization with the ($x_0$-shifted) Fourier sine transform, one has that, with the same domain as $Q$, $\sigma_{\rm ess}\left(-\frac{d^2}{dx^2}\right) = [0,\infty)$. Since $Q- \left(-\frac{d^2}{dx^2}\right)$ is a relatively compact perturbation of $-\frac{d^2}{dx^2}$, we obtain that $\sigma_{\rm ess}\left(Q\right) = [0,\infty)$. (See Corollary 2 of Chapter XIII of~\cite{reedsimonIV}.)
		
		The statement about the eigenvalues follows from the fact that $Q$ is lower semibounded, Theorem XIII.1 (the min-max principle) of~\cite{reedsimonIV}, a mild adaption of Theorems XIII.6 of~\cite{reedsimonIV}, and the fact that elliptic operators cannot have eigenvalues of infinite multiplicity.

	\end{proof}
	
	We will be interested in the studying the eigenvalues of $Q$. For this it will be useful to compare with the eigenvalues of the Hydrogen atom. We review the relevant well-known facts in the following.
	\begin{theo}Let $\beta > 0$ and $q \gg 1$ both be real numbers. Consider the operator 
		\[Q_{\beta,q} \doteq -\frac{d^2}{dx^2} + \left(-\frac{2\beta}{x} + \frac{q(q+1)}{x^2}\right),\]
		acting on $L^2(0,\infty)$ with domain $C_0^{\infty}(0,\infty)$. Then
		\begin{enumerate}
			\item $Q_{\beta,q}$ is essentially self-adjoint. We then continue to denote the closure of $Q_{\beta,q}$ by $Q_{\beta,q}$.
			\item The discrete spectrum of $Q_{\beta,q}$ is given by the sequence
			\[\left\{\frac{-\beta^2}{(n+q)^2}\right\}_{n=1}^{\infty},\]
			where all eigenvalues have multiplicity one.
		\end{enumerate}
	\end{theo}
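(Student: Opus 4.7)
\emph{First step: essential self-adjointness.} I would invoke Weyl's limit point / limit circle criterion for the Sturm--Liouville operator $-\frac{d^2}{dx^2}+V(x)$ on $(0,\infty)$ with $V(x)=-2\beta/x+q(q+1)/x^2$, which is locally bounded on $(0,\infty)$. At $x=\infty$ the potential tends to $0$ and is bounded below, so the limit point case holds there by the standard criterion that $V(x)\ge -Cx^2$ near infinity suffices. At $x=0$ the singular part is the pure centrifugal term of strength $q(q+1)$; since $q\gg 1$ we have $q(q+1)\ge 3/4$, and the classical Weyl--Kato dichotomy for $-\frac{d^2}{dx^2}+c/x^2$ gives the limit point case at the origin. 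Essential self-adjointness of $Q_{\beta,q}$ on $C_0^\infty(0,\infty)$ then follows.

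\emph{Second step: reduction to Kummer's equation.} For the discrete spectrum I fix $\lambda<0$, set $\kappa=\sqrt{-\lambda}>0$ and $\rho=2\kappa x$, and seek solutions in the form $u(x)=\rho^{q+1}e^{-\rho/2}F(\rho)$. The factor $\rho^{q+1}$ matches the Frobenius exponent at $0$ that is $L^2_{\mathrm{loc}}$ (the other exponent is $-q$, which fails to be in $L^2_{\mathrm{loc}}$ because $q\gg 1$), while $e^{-\rho/2}$ isolates the decaying mode at infinity for the bound-state equation $u''=(1/4-\beta/(\kappa\rho)+q(q+1)/\rho^2)u$. A direct substitution reduces the eigenvalue problem to Kummer's confluent hypergeometric equation
\[\rho F''(\rho)+\bigl(2(q+1)-\rho\bigr)F'(\rho)-\Bigl(q+1-\frac{\beta}{\kappa}\Bigr)F(\rho)=0,\]
whose unique solution regular at $\rho=0$ is $F(\rho)={}_1F_1\bigl(q+1-\beta/\kappa;\,2q+2;\,\rho\bigr)$.

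\emph{Third step: termination and eigenvalue formula.} The candidate eigenfunction is in $L^2$ near $0$ by construction, so the remaining constraint is $L^2$ behaviour as $\rho\to\infty$. Using the standard Tricomi asymptotic ${}_1F_1(a;b;\rho)\sim \Gamma(b)\Gamma(a)^{-1}e^{\rho}\rho^{a-b}$ as $\rho\to\infty$, one sees that unless the series terminates the $e^{\rho}$ factor overwhelms the $e^{-\rho/2}$ prefactor and destroys square-integrability. Termination of ${}_1F_1(a;b;\cdot)$ occurs exactly when $a$ is a non-positive integer, i.e.\ $q+1-\beta/\kappa=-(n-1)$ for some positive integer $n$, equivalently $\kappa=\beta/(n+q)$, yielding
\[\lambda_n=-\kappa^2=-\frac{\beta^2}{(n+q)^2},\qquad n\in\mathbb{N}.\]
The corresponding eigenfunctions are generalized-Laguerre-type polynomials in $\rho$ times $\rho^{q+1}e^{-\rho/2}$, manifestly in $L^2(0,\infty)$. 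Multiplicity one is automatic from the second-order nature of the ODE: one Frobenius branch at $0$ is eliminated by the $L^2_{\mathrm{loc}}$ requirement (using $q\gg 1$), and within the surviving one-parameter family only a discrete set of $\lambda$ produces $L^2$ decay at infinity, each giving a one-dimensional eigenspace.

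\emph{Main obstacle.} The delicate point is the rigorous use of the Tricomi asymptotic: one must verify that the coefficient $\Gamma(a)^{-1}$ vanishes precisely at the termination values $a=0,-1,-2,\ldots$ and is non-zero otherwise, so that a non-terminating ${}_1F_1$ genuinely contributes an exponentially growing term that cannot be in $L^2$. This together with confirming that $q\gg 1$ unambiguously selects the regular Frobenius branch at the origin is standard, but needs to be stated carefully; once both are in hand the theorem follows.
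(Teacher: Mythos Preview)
Your proof is correct and follows the classical route. The paper's own proof is just a pair of citations: limit-point type at both endpoints via Reed--Simon~II, and the discrete spectrum by reference to a standard quantum mechanics text (the hydrogen-atom computation). What you have written is precisely the content behind those citations---Weyl's criterion for the endpoints (with the $c\ge 3/4$ threshold for $c/x^2$ at the origin), and the explicit reduction to Kummer's equation with the termination condition $a\in\{0,-1,-2,\dots\}$ giving $\lambda_n=-\beta^2/(n+q)^2$. There is no substantive difference in approach; you have simply unpacked what the paper delegates to references.
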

	\begin{proof}The essential self-adjointness of $Q_{\beta,q}$ follows because $Q_{\beta,q}$ is of limit-point type near $x = 0$ and $x = \infty$. (See the Appendix to X.1 in~\cite{reedsimonII}.) For the calculation of the discrete spectrum, see, for example, Chapter 32 of~\cite{quantformath}.

	\end{proof}
	
	In this next lemma we will use the Agmon multiplier identity from Lemma~\ref{agmonidentity} to show that the eigenfunctions of both $Q$ and $Q_{\beta,q}$ must be exponentially small relative to $L$ in the region $\{x \leq cL^2 \}$ 
	\begin{lemma}\label{expdecayeigeninclassicalforbid} Assuming that $q \gtrsim L$, there exists constants $c_1,c_2 > 0$ and so that for every eigenfunction $\psi$ of $Q$ or $Q_{\beta,q}$, we have
		\[\left\vert\left\vert \psi\right\vert\right\vert_{H^1\left(\left\{x \leq c_1L^2\right\}\right)} \leq e^{-c_2 L}\left\vert\left\vert \psi\right\vert\right\vert_{L^2}.\]
		In the case of the operator $Q$, the constants may be chosen independently of $x_0$. In the case of the operator $Q_{\beta,q}$, the constants $c_1$ and $c_2$ may be chosen to only depend on $\beta$.
	\end{lemma}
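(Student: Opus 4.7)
The plan is to apply the Agmon multiplier identity of Lemma~\ref{agmonidentity} directly to the eigenfunction $\psi$ (which satisfies $\psi'' = W\psi$ with $W \doteq \tilde V - \lambda$ and no source), combined with a cutoff, using a carefully chosen Lipschitz weight $\phi$ that vanishes outside the forbidden region $\{x \leq c_1 L^2\}$ but grows to size of order $L$ deep inside it. Concretely I would define $v \doteq \chi\psi$, where $\chi$ is smooth, equal to $1$ on $(-\infty, c_1 L^2]$ and vanishing on $[c_1 L^2+1,\infty)$, so that
\[\frac{d^2 v}{dx^2} - Wv = \chi''\psi + 2\chi'\psi',\]
a source supported on the single transition interval $[c_1 L^2, c_1 L^2+1]$. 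The function $v$ vanishes at both endpoints of the integration: at the right because $\chi = 0$ there, and at the left because either $\psi(x_0) = 0$ for elements of $H_0^2((x_0,\infty))$, or (in the $Q_{\beta,q}$ case, where $q \gtrsim L \gg 1$) because the only $L^2$-admissible solution of the indicial equation at $x=0$ goes as $x^{q+1}$ and thus vanishes there.

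The key analytic input is a pointwise lower bound on $W$ in the region. Using $L \gg 1$ and the structure of $\tilde V$ in Definition~\ref{thisistheoperatorq}, for $c_1 > 0$ sufficiently small (depending only on $Mm^2$, respectively $\beta$) one has $\tilde V(x) \geq L^2/(4x^2)$ for every $x \in (0, c_1 L^2)$, because the repulsive centrifugal term $(L(L+1) + \mathbf{e}^2 m^2)x^{-2}$ dominates the attractive term $-2Mm^2/x$ (since $2Mm^2 x \leq 2Mm^2 c_1 L^2 \ll L^2$) as well as the errors $\mathscr E_1, \mathscr E_2$. Since $\lambda < 0$, this gives $W \geq L^2/(4x^2)$ throughout. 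I then take $\phi$ to equal $0$ on $[c_1 L^2, \infty)$, to equal $(L/8)\log(c_1 L^2/x)$ on $[c_1 L^2 e^{-8 c_2}, c_1 L^2]$, and to equal the constant cap value $c_2 L$ on the leftmost piece. This $\phi$ is Lipschitz globally (crucial for the applicability of Lemma~\ref{agmonidentity}, especially down to $x = 0$ in the $Q_{\beta,q}$ case) and satisfies $(\phi')^2 \leq L^2/(64 x^2) \leq W/16$, so $W - (\phi')^2 \geq \tfrac{15}{16} W > 0$.

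Feeding $v$ and $\phi$ into~\eqref{theactualagmonidentity} gives on the one hand the lower bound
\[
e^{2c_2 L}\,\|\psi\|_{H^1(\{x \leq c_1 L^2 e^{-8 c_2}\})}^2 \;\leq\; \int\left[\left((e^\phi v)'\right)^2 + (W - (\phi')^2)e^{2\phi}v^2\right]\, dx,
\]
obtained by restricting the Agmon integrand to $\{x \leq c_1 L^2 e^{-8 c_2}\}$, where $\chi = 1$, $\phi = c_2 L$ is constant, and $W \geq 1$ for $L$ large. On the other hand the right-hand side of the Agmon identity, $-\int e^{2\phi}(\chi''\psi + 2\chi'\psi')\chi\psi\, dx$, is supported on $[c_1 L^2, c_1 L^2+1]$, where $\phi \equiv 0$ and $|W| \lesssim L^{-2}$; a standard Caccioppoli-type estimate applied to the equation $\psi'' = W\psi$ on a slightly enlarged interval produces $\|\psi'\|_{L^2([c_1 L^2, c_1 L^2+1])} \lesssim \|\psi\|_{L^2}$, so this right-hand side is bounded by $C\|\psi\|_{L^2}^2$. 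Comparing the two bounds yields the claim after relabeling $c_1 \leftarrow c_1 e^{-8 c_2}$. The main technical obstacle is keeping $\phi$ Lipschitz all the way down to $x = 0$ for $Q_{\beta,q}$, since the naive logarithmic weight blows up there; the cap at $c_2 L$ handles this cleanly at the cost of only a constant-factor shrinking of the region of exponential decay, which is harmless for the stated conclusion.
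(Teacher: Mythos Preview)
Your approach is essentially the paper's: cut off the eigenfunction, apply the Agmon identity of Lemma~\ref{agmonidentity} with a weight $\phi$ growing to size $\sim L$ across the forbidden region where $W \gtrsim L^2/x^2$, and control the source term by an elliptic estimate. The paper differs only in technical choices: it places its cut-off far out at $x \sim C_0 L^p$ (so the source is handled by a global energy bound) rather than at $x \sim c_1 L^2$, and it takes the WKB weight $\phi(x) = \tfrac12\int_x^{\tilde c_1 L^2}\sqrt{-\lambda+\tilde V}\,dx$ instead of your explicit capped logarithm. Your placement of the cut-off is a bit more economical; the Caccioppoli estimate you invoke works as stated (for $Q$ you only know $|\lambda|\lesssim 1$ at this stage, not $|\lambda|\lesssim L^{-2}$, but $|W|\lesssim 1$ on the transition interval is all you need).

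One small gap to flag: your claim ``$W \geq 1$ for $L$ large'' on $\{x \leq c_1 L^2 e^{-8c_2}\}$ is false near the upper endpoint, where $W \geq L^2/(4x^2)$ is only of order $L^{-2}$. What the Agmon integrand actually yields on that region is $e^{2c_2 L}\int[(\psi')^2 + W\psi^2]\,dx$, which controls $\|\psi'\|_{L^2}^2$ directly but gives only $\|\psi\|_{L^2(\text{region})}^2 \lesssim L^2 e^{-2c_2 L}\|\psi\|_{L^2}^2$ for the zeroth-order piece. This polynomial loss is harmless --- absorb it into the exponential by slightly decreasing $c_2$. The paper handles the same point in the same way, writing its conclusion as $\|\psi\|_{L^2(x\leq \tilde c_2 L^2)}\lesssim L^\alpha e^{-L\tilde c_3\log(\tilde c_1/\tilde c_2)}$ before absorbing the $L^\alpha$ and passing to $H^1$ via one more elliptic estimate.
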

	\begin{proof}We will prove the lemma only for the operator $Q$ as it will then be clear how to repeat the argument \emph{mutatis mutandis} for $Q_{\beta,q}$.
		
		Choose a constant $\tilde{c}_1$ so that 
		\begin{equation}\label{lowerboundedtildeV}
			x \leq \tilde{c}_1L^2 \Rightarrow \tilde{V} \geq \frac{1}{2} \frac{L(L+1)}{x^2}.
		\end{equation}
		
		Now let $\psi$ be an eigenfunction of $Q$ with eigenvalue $\lambda< 0$ normalized so that $\left\vert\left\vert \psi\right\vert\right\vert_{L^2} = 1$. We may, moreover, assume that $\psi$ is real valued without loss of generality. Let $C_0 > 0$ be a large constant and choose a cut-off function $\chi(x)$ so that $\chi(x)$ is identically $1$ for $x \leq C_0L^p$ and is identically $0$ for $x \geq C_0L^p+ 1$. Then, if we set $\tilde{\psi} \doteq \chi\psi$, we will have that
		\begin{equation}\label{eigenfucntioneqnpsitoexponendecay}
			-\frac{d^2\tilde{\psi}}{dx^2} +\left(-\lambda + \tilde{V}\right)\tilde{\psi} = \tilde{H},
		\end{equation}
		where we have that ${\rm supp}\left(\tilde{H}\right) \subset \{x \geq C_0L^p\}$ and, by a straightforward elliptic estimate (consisting of multiplying \eqref{eigenfucntioneqnpsitoexponendecay} by  $\tilde{\psi}$ and integrating by parts) and the $L^2$ normalization of $\psi$ (note that $\left|{\rm min}\left(\tilde{V},0\right)\right|$ is uniformly bounded),
		\begin{equation}\label{goodestimatefortildeHinexpdecay}
			\left\vert\left\vert \tilde{H}\right\vert\right\vert_{L^2} \lesssim 1.
		\end{equation}

		Then we apply the identity~\eqref{theactualagmonidentity} to the equation~\eqref{eigenfucntioneqnpsitoexponendecay} with a function $\phi(x)$ which is identically $0$ for $x \geq \tilde{c}_1L^2$ and so that 
		\[x \leq \tilde{c}_1L^2 \Rightarrow \phi(x) = \frac{1}{2}\int_x^{\tilde{c}_1L^2}\sqrt{-\lambda + \tilde{V}}\, dx.\]
		Let $\tilde{c}_2$ be a positive constant so that $\tilde{c}_2 < \tilde{c}_1$. In view of~\eqref{lowerboundedtildeV}, we have that
		\[x \leq \tilde{c}_2L^2 \rightarrow \phi(x) \gtrsim \log\left(\frac{\tilde{c}_1}{\tilde{c}_2}\right)L.\]
		Using then also~\eqref{goodestimatefortildeHinexpdecay} we may immediately conclude from our application of~\eqref{theactualagmonidentity} (note that $\left(-\lambda + \tilde{V}\right) - \left(\phi'\right)^2$ is positive for $\{x \leq \tilde{c}_1L^2\}$) that there is a constant $\tilde{c}_3 > 0$ so that 
		\[\left\vert\left\vert \psi\right\vert\right\vert_{L^2\left(x \leq \tilde{c}_2L^2\right)} \lesssim L^{\alpha}e^{-L\tilde{c}_3\log\left(\frac{\tilde{c}_1}{\tilde{c}_2}\right)}.\]
		Carrying out an additional elliptic estimate and choosing $c_1$ and $c_2$ suitably then concludes the proof.
		
	\end{proof}
	
	Next, we show that eigenfunctions of $Q$ and $Q_{\beta,q}$ must also be exponentially small if $x$ is sufficiently large. However, now the necessary largeness of $x$ depends on the smallness of the eigenvalue.
	\begin{lemma}\label{expdecayeigeninclassicalforbid2}Let $\psi$ be an eigenfunction of $Q$ or $Q_{\beta,q}$ so that the corresponding eigenvalue $\lambda$ satisfies
		\[-\lambda \geq \frac{a}{L^p},\]
		for some $a \gtrsim 1$. In the case of the operator $Q_{\beta,q}$, assume also that $q \gtrsim L$. Then, there exists constants $c_3,c_4 > 0$ depending only a lower bound for $a$, and, in the case of $Q_{\beta,q}$, depending also on $\beta$ and a lower bound for $q L^{-1}$,  so that 
		\[\left\vert\left\vert \psi\right\vert\right\vert_{H^1\left(\left\{x \geq c_3^{-1}L^p\right\}\right)} \leq e^{-c_4 L}\left\vert\left\vert \psi\right\vert\right\vert_{L^2}.\] 
	\end{lemma}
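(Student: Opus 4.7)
I would give the argument for $Q$; the case of $Q_{\beta,q}$ is analogous after replacing $\tilde V$ by $-2\beta/x + q(q+1)/x^2$ and using $q\gtrsim L$. The strategy mirrors that of Lemma~\ref{expdecayeigeninclassicalforbid} but now applies Agmon's identity on the \emph{large-$x$ side} of the classically forbidden region, which is produced by the lower bound $-\lambda \geq aL^{-p}$. First I would fix an eigenfunction $\psi$ normalized so that $\|\psi\|_{L^2}=1$; testing the eigenvalue equation against $\psi$ and integrating by parts gives $\|\psi\|_{H^1}\lesssim 1$ since the negative part of $\tilde V$ is uniformly bounded. Next I would choose $c_3$ so small that for $x \geq x_{**} := \tfrac{1}{2}c_3^{-1}L^p$ one has $|\tilde V(x)| \leq \tfrac{1}{2}(-\lambda)$; this is possible because $|\tilde V(x)|\lesssim x^{-1}$ while $-\lambda \geq aL^{-p}$. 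Thus on $\{x\geq x_{**}\}$ the effective potential $W := \tilde V - \lambda$ obeys $W \geq \tfrac{1}{2}(-\lambda) \gtrsim aL^{-p}$, which is the classically forbidden region we will exploit.

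The key step is to introduce a cutoff $\chi$ equal to $0$ for $x \leq x_{**}/2$ and equal to $1$ for $x \geq x_{**}$, so that $\tilde\psi := \chi\psi$ satisfies $\tilde\psi'' - W\tilde\psi = P$ with $P := 2\chi'\psi' + \chi''\psi$ supported in $[x_{**}/2, x_{**}]$. Applying Lemma~\ref{agmonidentity} with the Agmon weight $\phi$ vanishing on $(-\infty, x_{**}]$ and satisfying $\phi' = \tfrac{1}{2}\sqrt{W}$ on $[x_{**},\infty)$, the coefficient $W - (\phi')^2 = \tfrac{3}{4}W$ is nonnegative on the support of $\tilde\psi$, while $\phi$ vanishes on the support of $P$, so the right-hand side of the Agmon identity is bounded by $\|P\|_{L^2}\|\psi\|_{L^2}\lesssim 1$. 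Moreover,
\[ \phi(c_3^{-1}L^p) = \tfrac{1}{2}\int_{x_{**}}^{2x_{**}}\sqrt{W}\, dx \geq \tfrac{1}{2}\sqrt{\tfrac{-\lambda}{2}}\cdot x_{**} \gtrsim \sqrt{a}\,L^{p/2}/c_3. \]

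Combining these ingredients and using $p\geq 2$ so that $L^{p/2}\geq L$, I would deduce $\|\psi\|_{L^2(\{x\geq c_3^{-1}L^p\})}^2 \lesssim L^p e^{-c L}$ for some $c>0$ (absorbing the polynomial loss into a slightly smaller exponential constant), and a standard elliptic upgrade using the eigenvalue equation on $\{x \geq c_3^{-1} L^p\}$ then promotes this to the claimed $H^1$ bound. The main piece of bookkeeping -- which I expect to be the only mild obstacle -- is verifying the promised independence of $c_3, c_4$ from the cutoff parameter $x_0$ defining $Q$ (which follows because the support of $\chi$ sits well above $x_0$ as soon as $L$ is sufficiently large) and, in the hydrogen case, the dependence only on $\beta$ and a lower bound for $qL^{-1}$, both of which are manifest from the explicit form of $\tilde V$ in that setting.
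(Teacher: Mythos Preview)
Your plan is correct and uses the same Agmon mechanism as the paper, but the paper organizes the cutoff and weight differently in a way worth noting. You cut off only on the left and let $\phi$ grow monotonically with $\phi'=\tfrac12\sqrt{W}$ out to infinity; this means you are applying the Agmon identity (Lemma~\ref{agmonidentity}, stated only on a finite interval with $v$ vanishing at both endpoints) on a half-line, which requires the boundary contributions $e^{\phi}\tilde\psi$ and $(e^{\phi}\tilde\psi)'$ to vanish at $+\infty$. That does hold---an $L^2$ eigenfunction with eigenvalue $\lambda<0$ decays like $e^{-\sqrt{-\lambda}\,x}$, faster than your weight $e^{\phi}\sim e^{\frac12\sqrt{-\lambda}\,x}$ grows---but this a priori decay (or, equivalently, a truncation of $\phi$ at a large height followed by a limit) needs to be said. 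The paper instead keeps everything on a finite interval: it cuts off $\psi$ on \emph{both} sides to a window $[2\tilde c_3^{-1}L^p,\,A_2\tilde c_3^{-1}L^p]$ and designs $\phi$ to rise to height $\gtrsim L^{p/2}$ on the sub-window $[2\tilde c_3^{-1}L^p+1,\,A_1\tilde c_3^{-1}L^p]$ and then descend back to $0$ before the right cutoff. This makes $\tilde\psi$ compactly supported so Lemma~\ref{agmonidentity} applies verbatim, and one then lets $A_1\to\infty$ to cover all of $\{x\ge c_3^{-1}L^p\}$. Your route is a little shorter once the half-line Agmon identity is justified; the paper's up-and-down weight is self-contained with respect to Lemma~\ref{agmonidentity} as stated. (A small side remark: your claim $W-(\phi')^2=\tfrac34 W$ holds only on $[x_{**},\infty)$; on the transition region $[x_{**}/2,x_{**}]$ you have $\phi'=0$ and hence $W-(\phi')^2=W$, which is still nonnegative provided $c_3$ is small enough that $|\tilde V|\le -\lambda$ there as well.)
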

	\begin{proof} The proof of this lemma is very similar to Lemma~\ref{expdecayeigeninclassicalforbid}, so we will just quickly note the essential differences. As in the proof of Lemma~\ref{expdecayeigeninclassicalforbid} we will consider only the case of the operator $Q$. 
		
		The role of~\eqref{lowerboundedtildeV} is now played by the existence of a constant $\tilde{c}_3$ so that 
		\begin{equation}\label{lowerboundedtildeV2}
			x \geq \tilde{c}_3L^p \Rightarrow -\lambda + \tilde{V} \geq \frac{1}{2}\left(-\lambda\right) \gtrsim_a L^{-p}.
		\end{equation}
		
		Let $A_1$ be an arbitrary large constant and then choose another large constant $A_2$ with $A_2$ sufficiently large, depending on  $A_1$. The cut-off $\chi$ is now defined to be identically $1$ for $x \in [2\tilde{c}_3^{-1}L^p + 1,A_2\tilde{c}_3^{-1}L^p-1]$ and is identically $0$ for $x \leq 2\tilde{c}_3^{-1}L^p$ and $x \geq A_2\tilde{c}_3^{-1}L^p$. The function $\phi(x)$ is defined by the following:
		\begin{enumerate}
			\item $\phi(x) = 0$ for $x \leq 2\tilde{c}_3^{-1}L^p + 1$.
			\item $\phi(x) = \frac{1}{2}\int_{2\tilde{c}_3^{-1}L^p + 1}^x\sqrt{-\lambda + \tilde{V}}\, dx$ for $x \in (2\tilde{c}_3^{-1}L^p + 1,A_1\tilde{c}_3^{-1}]$.
			\item $\phi(x) = \phi\left(A_1\tilde{c}_3^{-1}\right) - \frac{1}{2}\int_{A_1\tilde{c}_3^{-1}}^x\sqrt{-\lambda - \tilde{V}}\, dx$ for $x \in (A_1\tilde{c}_3^{-1},\tilde{A})$ where $\tilde{A}$ is chosen so that $\lim_{x \to \tilde{A}}\phi(x) = 0$.
			\item $\phi(x) = 0$ for $x \geq \tilde{A}$.
		\end{enumerate}
		Note that we can pick $A_2$ sufficiently large so that $\tilde{A} \ll A_2$.
		
		Then, since $\phi(x) \gtrsim L^{p/2}$ for $x \in [3\tilde{c}_3^{-1}L^p + 1,A_1\tilde{c}_3^{-1}L^p]$,  we may repeat the steps exactly as in the proof of Lemma~\ref{expdecayeigeninclassicalforbid} to obtain that 
		\[\left\vert\left\vert \psi\right\vert\right\vert_{L^2\left(x \in [3\tilde{c}_3^{-1}L^p + 1,A_1\tilde{c}_3^{-1}L^p]\right)} \lesssim e^{-\tilde{c}_4L},\]
		for a suitable constant $c_4$. Since the constant $A_1$ may be taken to be arbitrarily large, the proof is concluded.
	\end{proof}
	
	Finally, we are ready for the main eigenvalue estimates for $Q$.
	\begin{lemma}\label{eigenshavespace}Let $\{\lambda_n\}_{n=1}^{\infty}$ denote the eigenvalues of $Q$. 
		
		Let $N(p)$ denote the number of eigenvales $\lambda_n$ which satisfy $-\lambda_n \geq L^{-p}$. We have
		\[ N(p) \sim L^{p/2}.\]
		
		Next define a positive real number $P_L$ by 
		\[P_L \doteq -\frac{1}{2} + \frac{1}{2}\sqrt{1+4\left(L(L+1) + \mathbf{e}^2m^2\right)},\]
		and let $A \gg 1$ be a suitable large positive constant and $c > 0$ be a sufficiently small constant (independent of $L$).
		Then, for all $n \in [1,N(p)]$, we have
		\begin{equation}\label{theeigenvaluebounds}	
			\frac{\left(Mm^2\right)^2}{\left(n+P_L + \frac{A\log(P_L)}{P_L}\right)^2}-e^{-cP_L} \leq -\lambda_n \leq \frac{\left(Mm^2\right)^2}{\left(n+P_L - \frac{A\log(P_L)}{P_L}\right)^2}+e^{-cP_L}.
		\end{equation}
	\end{lemma}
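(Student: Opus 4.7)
The strategy is to compare $Q$ to a hydrogen-atom operator via the min--max principle, using the exponential localization of eigenfunctions provided by Lemmas~\ref{expdecayeigeninclassicalforbid} and~\ref{expdecayeigeninclassicalforbid2}. Set $\beta := Mm^2$. By construction, $P_L(P_L+1) = L(L+1)+\mathbf{e}^2m^2$, so one may write
\[ Q = Q_{\beta, P_L} + \mathscr{E}, \quad \mathscr{E}(x) := \frac{P_L(P_L+1)\mathscr{E}_2(x)}{x^2} + \mathscr{E}_1(x), \quad |\mathscr{E}(x)| \lesssim \frac{P_L^2 \log x}{x^3} + \frac{\log x}{x^2}. \]
The exact hydrogen eigenvalues of $Q_{\beta, P_L}$ are $-\beta^2(n+P_L)^{-2}$, so the count $N(p)\sim L^{p/2}$ is immediate from \eqref{theeigenvaluebounds} (solve $\beta^2(n+P_L)^{-2}\geq L^{-p}$ for $n$). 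It therefore suffices to prove \eqref{theeigenvaluebounds}.

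To prove \eqref{theeigenvaluebounds}, I would introduce the comparison operators $Q_\pm := Q_{\beta, P_L \pm \epsilon_L}$ with $\epsilon_L := A\log(P_L)/P_L$ and $A$ a large constant, so that their eigenvalues produce exactly the two bounds in \eqref{theeigenvaluebounds} (up to the $\pm e^{-cP_L}$ tolerance). The algebraic identity
\[ Q_\pm - Q_{\beta, P_L} = \frac{\pm\epsilon_L(2P_L+1)+\epsilon_L^2}{x^2}, \]
combined with the estimate on $|\mathscr{E}|$, yields on the bulk region $\mathcal{B} := [c_1 P_L^2, c_3^{-1} P_L^p]$ the pointwise sign conditions $Q_- \leq Q \leq Q_+$: on $\mathcal{B}$ one has $x^2|\mathscr{E}(x)| \lesssim \log P_L$, whereas the centrifugal correction contributes $2P_L\epsilon_L = 2A\log P_L$ to $x^2(Q_\pm - Q_{\beta,P_L})$, which dominates if $A$ is chosen sufficiently large. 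Outside $\mathcal{B}$ the sign conditions need not hold, but Lemmas~\ref{expdecayeigeninclassicalforbid} and~\ref{expdecayeigeninclassicalforbid2} guarantee that every eigenfunction in sight is $e^{-cP_L}$-small there in $H^1$.

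Next, I would apply the min--max principle. For the upper bound on $-\lambda_n$ (i.e.\ $\lambda_n(Q) \leq \lambda_n(Q_+)+e^{-cP_L}$), take the first $n$ $L^2$-orthonormal eigenfunctions $\{\psi_k^+\}$ of $Q_+$ on $(0,\infty)$; Lemmas~\ref{expdecayeigeninclassicalforbid}, \ref{expdecayeigeninclassicalforbid2} apply since $P_L+\epsilon_L\gtrsim L$ and $-\lambda_n(Q_+)\gtrsim L^{-p}$ for $n\leq N(p)$. Multiplying by a smooth cutoff $\chi$ supported slightly inside $\mathcal{B}$ produces trial functions $\tilde\psi_k^+ \in H^2_0((x_0,\infty))$. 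The tail estimates imply that these are nearly orthonormal, that the $\chi$-commutator terms are $O(e^{-cP_L})$, and that the bulk sign condition $Q \leq Q_+$ holds pointwise on the support of $\tilde\psi_k^+$. Together these give
\[ \langle Q\tilde\psi_k^+, \tilde\psi_j^+\rangle = -\lambda_k(Q_+)\delta_{kj} + O(e^{-cP_L}), \]
and the min--max inequality delivers the upper bound. The lower bound is entirely symmetric, using $Q_-$ and the first $n$ eigenfunctions of $Q$.

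\textbf{Main obstacle.} The most delicate part is the honest bookkeeping of cutoff errors: one must check that (i) the commutators $\chi', \chi''$ are supported in the classically forbidden region, where the eigenfunction tails contribute a factor $e^{-cP_L}$; (ii) the truncated $\tilde\psi_k^+$ remain linearly independent modulo an exponentially small perturbation, so the min--max principle genuinely applies on an $n$-dimensional subspace; and (iii) the pointwise sign condition on $\mathcal{B}$ survives the truncation with only an $e^{-cP_L}$ defect in the quadratic form. The scaling $\epsilon_L = A\log(P_L)/P_L$ is forced by the worst-case size of $|\mathscr{E}|$ at the inner edge $x\sim P_L^2$ of the bulk, where the gaps between hydrogen eigenvalues are tightest.
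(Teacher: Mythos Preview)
Your proposal is correct and follows essentially the same route as the paper: compare $Q$ to the hydrogen operators $Q_{\beta,P_L\pm\epsilon_L}$ via a pointwise potential inequality on the bulk region $[c_1L^2,c_3^{-1}L^p]$, invoke Lemmas~\ref{expdecayeigeninclassicalforbid} and~\ref{expdecayeigeninclassicalforbid2} to control the exponentially small tails of the relevant eigenfunctions outside that region, and apply Rayleigh--Ritz/min--max with cut-off eigenfunctions as trial functions. The only cosmetic difference is that the paper places its cutoff $\chi$ solely near the left endpoint $x_0$ (to land in the domain $H^2_0((x_0,\infty))$) and relies on the exponential-localization lemmas directly to justify the form comparison, whereas you truncate on both sides of $\mathcal{B}$; either bookkeeping works. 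One small slip: your displayed identity should read $\langle Q\tilde\psi_k^+,\tilde\psi_j^+\rangle \leq \lambda_k(Q_+)\delta_{kj}+O(e^{-cP_L})$ (an inequality on the diagonal coming from $Q\leq Q_+$ on the support, not an equality, and without the stray minus sign).
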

	\begin{proof}We will use the Rayleigh--Ritz technique, Theorem XIII.3 of~\cite{reedsimonIV}.
		
		Let $c_1$, $c_2$, $c_3$, and $c_4$ be the constants from Lemmas~\ref{expdecayeigeninclassicalforbid} and~\eqref{expdecayeigeninclassicalforbid2}. with $\beta = Mm^2$. Then we observe that there exists a large constant $A \gg 1$, independent of $L$, so that if we define
		\[P_L^{\pm} \doteq P_L \pm \frac{A\log L}{L},\]
		then, for all real valued $\varphi \in D\left(Q\right) \cap D\left(Q_{\beta,P_L^-}\right) \cap D\left(Q_{\beta,P_L^+}\right)$
		\begin{equation}\label{upperboundforQ}
			\int_{x \in \left[c_1 L^2,c_3^{-1}L^p\right]} \left(Q\varphi\right)\varphi\, dx \leq \int_{x \in \left[c_1 L^2,c_3^{-1}L^p\right]}\left(Q_{\beta,P_L^+}\varphi\right)\varphi, dx,
		\end{equation}
		\begin{equation}\label{lowerboundforQ}
			\int_{x \in \left[c_1 L^2,c_3^{-1}L^p\right]} \left(Q\varphi\right)\varphi\, dx \geq \int_{x \in \left[c_1 L^2,c_3^{-1}L^p\right]}\left(Q_{\beta,P_L^-}\varphi\right)\varphi, dx.
		\end{equation}
		
		Choose $\tilde{N} \in \mathbb{Z}_{\geq 1}$ be arbitrary subject to the constraint that the first $\tilde{N}$ eigenvalues $\{\tilde{\lambda}_n\}_{n=1}^{\tilde{N}}$ of $Q_{Mm^2,P_L^+}$ all satisfy $-\tilde{\lambda}_N \gtrsim L^{-p}$. Then let $\left\{\psi_n\right\}_{n=1}^{\tilde{N}}$ denote the first $\tilde{N}$ eigenfunctions of $Q_{Mm^2,P_L^+}$ all normalized to have $\left\vert\left\vert \psi_n\right\vert\right\vert_{L^2} = 1$ and taken to be real valued. Let $\chi(x)$ be a cut-off function which is identically $0$ for $x \leq x_0 + 1$ and is identically $1$ for $x \geq x_0 + 2$. Then set $\tilde{\psi}_n \doteq \chi \psi_n$. Set $V^{\tilde{N}} \doteq {\rm span}\left(\tilde{\psi}_1,\cdots,\tilde{\psi}_{\tilde{N}}\right)$. Standard elliptic theory implies that $V^N$ is a subset of the domain of both $Q$ and $Q_{\beta,p}$ for any $\beta$ and $p$.
		
		Let $n \in [1,\tilde{N}]$. By the Rayleigh--Ritz technique (Theorem XIII.3 of~\cite{reedsimonIV}) we have that
		\begin{equation}\label{rayritzforlambdan}
			\lambda_n \leq \sup_{\varphi_1,\cdots,\varphi_{n-1} \in V^{\tilde{N}}}\left(\inf_{\varphi \in V^{\tilde{N}} : \left\vert\left\vert \varphi\right\vert\right\vert_{L^2} = 1,\ \varphi \in \{\varphi_1,\cdots,\varphi_{n-1}\}^{\perp}}\int_{x_0}^{\infty}\left(Q\varphi\right)\varphi\, dx\right)
		\end{equation}
		There exists a constant $c > 0$ so that, in view of~\eqref{upperboundforQ}, Lemma~\ref{expdecayeigeninclassicalforbid}, and Lemma~\ref{expdecayeigeninclassicalforbid2}, any $\varphi \in V^{\tilde{N}}$ with $\left\vert\left\vert \varphi\right\vert\right\vert_{L^2} = 1$ will satisfy
		\begin{equation}\label{uppertopluginrayritz}
			\int_{x_0}^{\infty}\left(Q\varphi\right)\varphi\, dx \leq e^{-c P_L} + \int_{x_0}^{\infty}\left(Q_{Mm^2,P_L^+}\varphi\right)\varphi\, dx.
		\end{equation}
		Plugging~\eqref{uppertopluginrayritz} into~\eqref{rayritzforlambdan}, using that the $\psi_n$ are eigenfunctions of $Q_{Mm^2,P_L^+}$ and using Lemmas~\ref{expdecayeigeninclassicalforbid} and~\ref{expdecayeigeninclassicalforbid2}  again, then yields, (after possibly redefining the constant $c > 0$)
		\begin{equation}\label{wegottheupperboundforlambdan}
			\lambda_n \leq e^{-cP_L} -\frac{\left(Mm^2\right)^2}{\left(n+P_L + \frac{A\log(P_L)}{P_L}\right)^2}.
		\end{equation}
		
		It is now clear that in this argument we can replace $Q$ wit $Q_{Mm^2,P_L^-}$ and $Q_{Mm^2,P_L^+}$ with $Q$ to obtain a lower bound for $\lambda_n$. Putting everything together concludes the proof.
	\end{proof}

	\begin{rmk} \label{approx.poor}When $n \gtrsim e^{cP_L}$, the bound~\eqref{theeigenvaluebounds} becomes essentially useless. However, we will only be interested in applying this estimate in regimes where $n \lesssim L^p \sim P_L^p$.
	\end{rmk}

	\subsection{Exponential multipliers for all frequencies}
	In this section we prove estimate for $u$ and $\partial_{\omega}u$ which holds for all frequencies satisfying~\eqref{mbiggerthanomega},~\eqref{Llarge2}, and~\eqref{thebasiccomparableassump}. The key downside of these estimates is that they will have bad dependence on $L$. The estimates of this section will neither use the eigenvalue estimates of Section~\ref{ellipatinfinity} nor the Agmon multiplier identity of Lemma~\ref{agmonidentity}. 
	
	We first consider the (easier) case when $m^2 -\omega^2 \gg L^{-2}$.

	\begin{lemma}\label{easymwaybig}Assume there exists a sufficiently large constant $A > 0$, independent of $L$, so that $m^2-\omega^2 \geq \frac{A}{L^2}$ and that $L$ is sufficiently large. Let $u$ be a weakly outgoing solution to~\eqref{eq:mainVrhs}. Then we have, for any $\delta > 0$,
		\[\int_{-\infty}^{\infty}\left[\frac{r}{r-r_+}\left|\partial_su +i\omega u\right|^2 + \frac{1}{\left({\rm max}\left(0,-s\right)\right)^{1+\delta}+1}\left(\left|\partial_su\right|^2 + \left|u\right|^2\right)\, \right]ds \lesssim_{\delta} L^4\int_{-\infty}^{\infty}\frac{r}{|r-r_+|}\left|H\right|^2\, ds.\]
	\end{lemma}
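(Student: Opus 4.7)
Under the assumption $m^2 - \omega^2 \geq A/L^2$ with $A$ sufficiently large, the potential $V$ satisfies $V \geq cL^{-2}$ on the region $\{s \geq s_0\}$ for a fixed $s_0$ independent of $L$. Indeed, the dominant contribution $(m^2-\omega^2) - 2Mm^2/r + L^2/r^2$ in the far region attains its minimum $\geq (A-M^2m^4)/L^2$ near $r \sim L^2/(Mm^2)$, while for $r$ bounded away from the horizon, $V$ is dominated by the centrifugal term $\sim L^2/r^2$; for $L$ sufficiently large, this coercivity extends to the transition interval $[s_0-1, s_0]$.

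I would apply the $h$-current identity~\eqref{hcurent} with $h = \chi$ a smooth cut-off equal to $1$ on $\{s \geq s_0\}$ and vanishing on $\{s \leq s_0-1\}$. Combined with $\chi V \geq cL^{-2}\chi$ and $|\chi''|$ bounded on the transition, this gives
\[
\int \chi |\partial_s u|^2\, ds + cL^{-2}\int_{s \geq s_0}|u|^2\,ds\ \lesssim\ \int_{[s_0-1, s_0]}|u|^2\,ds + \Bigl|\int \chi \Re(H\bar u)\,ds\Bigr|.
\]
Cauchy--Schwarz using the weight $\frac{r}{r-r_+}$ on the source (noting that $\frac{r-r_+}{r} \lesssim 1$ on $\{s\geq s_0-1\}$), combined with AM--GM at level $cL^{-2}/2$, absorbs the $|u|^2$-mass and produces the factor $L^2\int \frac{r}{r-r_+}|H|^2\,ds$ on the right. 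Multiplying through by $L^2$ yields the target $L^4$ factor in front of the source.

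The complementary near-horizon control comes from the redshift estimate of Lemma~\ref{redshiftode}, which produces weighted $L^2$ control on $\{s \leq s_0-1\}$ at the cost of a bulk term $L^2\int_{[s_0-1, s_0]}|u|^2\,ds$. To close the loop, I would apply the redshift estimate with displaced parameter $s_0+1$: since the redshift weight $L^2(1+|s|)^{-1-\delta}$ is comparable to $L^2$ on bounded intervals, the bulk terms on the transition interval match with the $h$-current estimate through a careful iterative absorption using the energy identity~\eqref{agoodenergyestimateiguess2.eq} to control $\omega|u(-\infty)|^2$. The weighted contribution $\int \frac{r}{r-r_+}|\partial_s u + i\omega u|^2$ on the left-hand side is then controlled by combining redshift on $\{s\leq s_0\}$ (where $\frac{r}{r-r_+} \lesssim \frac{1}{r-r_+}$) with the $h$-current output on $\{s \geq s_0\}$ (where $\frac{r}{r-r_+}$ is bounded).

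\paragraph{Main obstacle.}
The principal difficulty is the clean matching of bulk terms on the transition interval $[s_0-1, s_0]$ without accumulating extra factors of $L$ in the iteration; naively composing the two estimates produces circular dependencies with compounding $L$ powers. A natural alternative that avoids this would be a single global $y$-multiplier~\eqref{ymulteqn2} vanishing at the local maximum $r_{\rm crit}$ of $V$, exploiting the sign structure $-y\,\partial_s V \geq 0$ on both sides of $r_{\rm crit}$ in the spirit of Lemma~\ref{mora2}, combined with an auxiliary $h$-current on the asymptotic region $\{r > r_{\rm min}\}$ where $V$ is again increasing but coercive at level $cL^{-2}$.
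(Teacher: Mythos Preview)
Your approach --- the $h$-current~\eqref{hcurent} with a cut-off, combined with the redshift Lemma~\ref{redshiftode} --- is exactly the paper's. The obstacle you flag is real for the estimate as you wrote it, but its resolution is already contained in your first paragraph, not in the $y$-multiplier alternative you sketch at the end.

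The problem is that in the $h$-current you keep only the global lower bound $V \geq cL^{-2}$, putting merely $cL^{-2}\int_{s\geq s_0}|u|^2$ on the left; absorbing the redshift error $L^2\int_{[s_0,s_0+1]}|u|^2$ into this then costs $L^4$, and the loop does not close. But you yourself observed that for bounded $r$ the centrifugal term dominates. Quantitatively one has $V|_{s\geq s_0-10} \gtrsim L^{-2} + L^2(r-r_+)r^{-3}$, and feeding this full bound into the $h$-current yields $\int\bigl(L^{-2}+L^2r^{-2}\bigr)|u|^2$ on the left --- hence $L^2|u|^2$ control on any bounded $s$-interval, matching the redshift error with an $O(1)$ ratio. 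This is precisely the paper's estimate~\eqref{thehcurrentgavemethis}. The only remaining cross-term, the $h$-current boundary error $\int|u|^2$ on the transition strip (crucially carrying \emph{no} $L^2$ weight), is then absorbed into the redshift left-hand side with a full $L^{-2}$ gain; the paper singles out exactly this point. The energy identity~\eqref{agoodenergyestimateiguess2.eq} plays no role here, and the $y$-multiplier detour is unnecessary. The paper's choice to push the $h$-cutoff back to $[s_0-10,s_0-9]$ (so that the unshifted redshift's error interval $[s_0-1,s_0]$ already sits inside the $h$-current's good region) is cosmetically cleaner than your displacement of the redshift, but either arrangement closes once the sharper $V$-bound is used.
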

	\begin{proof}Let $s_0$ be as in Lemma~\ref{redshiftode}. Then let $h(s)$ be a function so that $h(s)$ vanishes for $s \leq s_0-10$ and $h(s)$ is identically $1$ for $s \geq s_0 -9$. 
		
		We observe that under the assumptions of the lemma, we have that $V|_{s \geq s_0} \gtrsim L^{-2} + L^2\frac{r-r_+}{r^3}$.
		Applying the identity~\eqref{hcurent} and using Cauchy--Schwarz then leads to the estimate
		\begin{equation}\label{thehcurrentgavemethis}
			\int_{s_0-9}^{\infty}\left[\left|\partial_su\right|^2 + \left(L^{-2}+L^2r^{-2}\right)\left|u\right|^2\right]\, ds \lesssim \int_{s_0-10}^{s_0-9}\left|u\right|^2\, ds + L^2\int_{s_0-10}^{\infty}\left|H\right|^2\, ds.
		\end{equation}

		In view of the fact that there is no $L^2$ multiplying the $|u|^2$ on the right hand side of~\eqref{thehcurrentgavemethis}, it is clear that by combining Lemma~\ref{redshiftode} and~\eqref{thehcurrentgavemethis} we may conclude the proof.
	\end{proof}
	
	In the next lemma, we provide an analogue of Lemma~\ref{easymwaybig} for $\partial_{\omega}u$.
	\begin{lemma}\label{easymwaybigderivativeomega}Assume there exists a sufficiently large constant $A > 0$, independent of $L$, so that $m^2-\omega^2 \geq \frac{A}{L^2}$ and that $L$ is sufficiently large. Let $u$ be an outgoing solution to~\eqref{eq:mainVrhs}. Let $\chi(s)$ be a cut-off function which is $1$ for $s \leq -1$ and is identically $0$ for $s \geq 0$. Then define
		\[q \doteq e^{-i\omega s \chi(s)}\partial_{\omega}\left(e^{i\omega s \chi(s)}u\right).\]
		Then we have, for any $\delta > 0$,
		\begin{align*}
			&\int_{-\infty}^{\infty}\left[\frac{r}{r-r_+}\left|\partial_sq +i\omega q\right|^2 + \frac{1}{\left({\rm max}\left(0,-s\right)\right)^{1+\delta}+1}\left(\left|\partial_sq\right|^2 + \left|q\right|^2\right)\, \right]ds \lesssim_{\delta}
			\\ \nonumber &\qquad \qquad \qquad L^8\int_{-\infty}^{\infty}\frac{r}{|r-r_+|}\left[\left|H\right|^2+\left|e^{-iws \chi}\partial_{\omega}\left(e^{i\omega s \chi}H\right)\right|^2\right]\, ds.
		\end{align*}
	\end{lemma}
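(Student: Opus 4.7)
The strategy is to apply Lemma~\ref{easymwaybig} to $q$ itself, treating it as a weakly outgoing solution of its own inhomogeneous equation. By Lemma~\ref{qeqnusefultodiffu}, $q$ satisfies $\p_s^2 q - V q = \tilde H$ with
\[
\tilde H = 2i(\chi + s\chi')(\p_s u + i\omega u) + i(2\chi' + s\chi'')u + e^{-i\omega s\chi}\p_\omega\!\left(e^{i\omega s\chi}H\right) + 2\omega(\chi + s\chi' - 1)u.
\]
If I can verify that $q$ is weakly outgoing in the sense of Definition~\ref{weakoutgoingdef}, and that $\int \frac{r}{|r-r_+|}|\tilde H|^2\,ds$ is controlled by $L^4$ times the source norm appearing on the right-hand side of the claim, then one more application of Lemma~\ref{easymwaybig} to $q$ closes the estimate with the advertised $L^8 = L^4\cdot L^4$ dependence.

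To check weak outgoingness, I would examine $q$ at each end separately. For $s\leq -1$ one has $\chi\equiv 1$ and $q = e^{-i\omega s}\p_\omega(e^{i\omega s}u)$; the outgoing boundary condition~\eqref{outgoingboundary2} makes $e^{i\omega s}u$ tend to $1$ with smooth $\omega$-dependence, so $q\sim e^{-i\omega s}$ as $s\to-\infty$. For $s\geq 0$ one has $\chi\equiv 0$ and $q = \p_\omega u$; differentiating the asymptotic expansion $u\sim r^{Mm^2/\sqrt{m^2-\omega^2}}e^{-\sqrt{m^2-\omega^2}s}$ in $\omega$ produces at worst a polynomial-in-$s$ factor multiplying the same exponential, which still decays faster than $e^{\ep s}$ for any $\ep\in(0,\sqrt{m^2-\omega^2})$. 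Hence $q$ is weakly outgoing.

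Next I would estimate the source $\tilde H$ piece by piece. The $e^{-i\omega s\chi}\p_\omega(e^{i\omega s\chi}H)$ piece is exactly the additional source appearing on the right-hand side of the claim. The three remaining pieces involve $u$ or $\p_s u + i\omega u$ multiplied by cutoff factors which are either supported on $\{s\leq 0\}$, or compactly supported in $[-1,0]$, or equal $-2\omega$ on $\{s\geq 0\}$. Concretely: the $\chi(\p_s u + i\omega u)$ contribution is absorbed by the red-shift term $\int \frac{r}{r-r_+}|\p_s u + i\omega u|^2\,ds$ on the left of Lemma~\ref{easymwaybig}; the compactly-supported factors $s\chi'$ and $2\chi' + s\chi''$ produce integrals of $|u|^2 + |\p_s u|^2$ on $[-1,0]$, where the weight $\frac{1}{(\max(0,-s))^{1+\delta}+1}$ in Lemma~\ref{easymwaybig} is bounded below by a constant; and the $2\omega(\chi + s\chi' -1)u$ piece is controlled on $\{s\geq 0\}$ by $m^2\int_0^\infty |u|^2\,ds$, again delivered by Lemma~\ref{easymwaybig} together with the fact that $\frac{r}{r-r_+}$ is uniformly bounded for $s\geq 0$ and $\omega\leq m$. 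Each such reduction costs one factor of $L^4$ from Lemma~\ref{easymwaybig} applied to $u$, which composes with the outer $L^4$ loss from applying Lemma~\ref{easymwaybig} to $q$ to yield the stated $L^8$.

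The main bookkeeping obstacle is the blow-up of the source weight $\frac{r}{|r-r_+|}$ as $r\to r_+$: every horizon-concentrated piece of $\tilde H$ must arise from a red-shifted combination $\p_s u + i\omega u$ (which the left-hand side of Lemma~\ref{easymwaybig} controls with the correct weight), and not from a bare $u$. This is precisely the structural reason for defining $q$ via the conjugated derivative $e^{-i\omega s\chi}\p_\omega(e^{i\omega s\chi}\cdot)$ with $\chi\equiv 1$ near $-\infty$: this choice guarantees that near the horizon the bare $u$ cancels out of $\tilde H$ in favor of $\p_s u + i\omega u$, so the red-shift estimate is available whenever it is needed.
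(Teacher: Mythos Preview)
Your proposal is correct and follows essentially the same approach as the paper: verify that $q$ is weakly outgoing, invoke Lemma~\ref{qeqnusefultodiffu} for the equation satisfied by $q$, and then apply Lemma~\ref{easymwaybig} twice (once to $u$ to control the source $\tilde H$, once to $q$ itself). The paper's proof is a single sentence to this effect; you have simply unpacked the bookkeeping, including the crucial observation that the only horizon-singular contribution to $\tilde H$ is the red-shifted combination $\partial_s u + i\omega u$, which is exactly what the left-hand side of Lemma~\ref{easymwaybig} controls with the weight $\frac{r}{r-r_+}$.
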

	\begin{proof}The outgoing condition for $u$ implies that $q$ will be weakly outgoing. The result then follows from repeated applications of Lemma~\ref{easymwaybig} to the equation for $q$ from Lemma~\ref{qeqnusefultodiffu}.
	\end{proof}
	
	In view of Lemmas~\ref{easymwaybig} and~\ref{easymwaybigderivativeomega} we will be able to assume, without loss of generality in the rest of the section that $m^2-\omega^2 \lesssim L^{-2}$. In particular, this allows us to use Lemma~\ref{agoodenergyestimateiguess2} to control $\left|u(-\infty)\right|$. We start with an estimate for $u$.
	\begin{lemma}\label{thisisgeneralforubutbadconstant} Assume that $ L^{-p} \leq m^2-\omega^2 \lesssim L^{-2}$, and let $u$ be a weakly outgoing solution to~\eqref{eq:mainVrhs}. Then we have the following two estimates for $u$:
		\begin{enumerate}
			\item Let $R > r_+$. Then there exists a constant $D_0$, depending only on $R$ so that, for every $\delta > 0$,
			\begin{align}\label{Rboundedonlyexplossforu}
				&\int_{-\infty}^{s\left(R\right)}\left(\frac{r}{r-r_+}\left|\partial_su+i\omega u\right|^2 + \frac{1}{1+|s|^{1+\delta}}\left[\left|\partial_su\right|^2 + \left|u\right|^2\right]\right)\, ds \lesssim_{\delta}
				\\ \nonumber &\qquad e^{D_0 L}\int_{-\infty}^{s(R)}\frac{r}{r-r_+}\left|H\right|^2\, ds + e^{D_0L}\int_{s(R)}^{\infty}\left|H\right|\left|u\right|\, ds.
			\end{align}
			\item If $H$ is compactly supported in $\{r \leq R\}$, then there exists a constant $D_0$, depending only on $R$ so that, for every $\delta > 0$,
			\begin{equation}\label{Rboundedonlyexplossforu2}
				\int_{-\infty}^{s\left(R\right)}\left(\frac{r}{r-r_+}\left|\partial_su+i\omega u\right|^2 + \frac{1}{1+|s|^{1+\delta}}\left[\left|\partial_su\right|^2 + \left|u\right|^2\right]\right)\, ds \lesssim_{\delta} e^{D_0 L}\int_{-\infty}^{s(R)}\frac{r}{r-r_+}\left|H\right|^2\, ds.
			\end{equation}
			\item For every $\delta,\tilde{\delta} > 0$,
			\begin{align}\label{nocompactsupportsolosellogl}
				&\int_{-\infty}^{\infty}\left[\frac{r}{r-r_+}\left|\partial_su +i\omega u\right|^2 + \frac{1}{\left({\rm max}\left(0,-s\right)\right)^{1+\delta}+1}\left(\left|\partial_su\right|^2 + \left|u\right|^2\right)\, \right]ds \lesssim_{\delta} 
				\\ \nonumber &\qquad \qquad e^{(4+\tilde{\delta}) \LL \log \LL}\int_{-\infty}^{\infty}\frac{r}{r-r_+}\left|H\right|^2\, ds.
			\end{align}
		\end{enumerate}
	\end{lemma}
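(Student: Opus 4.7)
The plan is to combine three ingredients: the energy identity of Lemma~\ref{agoodenergyestimateiguess2} for the horizon boundary term; the red-shift estimate of Lemma~\ref{redshiftode} for the near-horizon region $s\leq s_0-1$; and the Agmon multiplier identity \eqref{theactualagmonidentity} on the intermediate and large-$s$ regions. In the regime $L^{-p}\leq m^2-\omega^2\ls L^{-2}$, the potential $V$ has an inner forbidden region $[s_0,s_1]$ with $r(s_1)\approx L^2(Mm^2)^{-1}$ on which $V\approx L^2 r^{-2}$, an allowed region $[s_1,s_2]$ with $r(s_2)\approx (m^2-\omega^2)^{-1}$, and an outer forbidden region $\{s\geq s_2\}$ on which $V\approx m^2-\omega^2$. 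The key numerology is $\int_{r_+}^{cL^2} L\,r^{-1}\,dr \approx 2L\log L$, which once squared yields the $e^{(4+\tilde\delta)L\log L}$ loss in the third assertion.

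For assertions (1) and (2), I localize to $s\leq s(R)$ with $R$ a fixed constant. The total Agmon phase is now bounded by $\int_{r_+}^R\sqrt{V_+}\,dr\ls_R L$. I choose a Lipschitz $\phi$ supported on $[s_0-1,s(R)+1]$, equal to $(1-\epsilon)\int_{s_0}^{s}\sqrt{V_+}\,ds'$ on the bulk of the forbidden region (writing $V_+=\max(V,1)$ to regularize near sign changes), so that $|\phi|\leq D_0L/2$ and $V-(\phi')^2\gtrsim 1$ wherever $\phi'\neq 0$. Applying \eqref{theactualagmonidentity} to the real and imaginary parts of $u$ separately, and handling the right-hand side by Cauchy--Schwarz, yields a weighted $H^1_s$ estimate on $[s_0,s(R)]$ with prefactor $e^{D_0L}$. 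The left boundary term at $s_0$ is absorbed using the red-shift estimate \eqref{tocomplementhcurr}, and the horizon contribution $|u(-\infty)|^2$ is controlled by Lemma~\ref{agoodenergyestimateiguess2}. The right boundary term at $s=s(R)$ either vanishes (when $H$ is compactly supported, giving (2)) or is controlled by integrating the standard ODE energy identity $\partial_s(|\partial_s u|^2-V|u|^2)=2\,\mathrm{Re}(H\,\overline{\partial_s u})$ from $s(R)$ outwards, producing the $\int_{s(R)}^\infty |H||u|\,ds$ term in (1).

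For assertion (3), I extend the construction along the whole real line. I take $\phi$ equal to $(1+\tilde\delta/8)\int_{s_0}^{s}\sqrt{V_+}\,ds'$ on the inner forbidden region until it reaches its maximum value $\phi_{\max}\leq(2+\tilde\delta/4)L\log L$, then constant on the allowed region $[s_1,s_2]$, and finally decreasing along the outer forbidden region down to $0$ as $s\to\infty$. On the two forbidden regions, \eqref{theactualagmonidentity} produces a positive definite quadratic form as before. On the plateau $[s_1,s_2]$, where $\phi'\equiv 0$, the identity reduces to the ODE $L^2$-energy identity in which $V$ carries an indefinite sign; its contribution is absorbed by Cauchy--Schwarz against the coercive terms at the cost of a polynomial-in-$L$ factor, which is dwarfed by the exponential prefactor $e^{2\phi_{\max}}\leq e^{(4+\tilde\delta)L\log L}$ multiplying $|H|^2$. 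The boundary contribution at $s\to+\infty$ vanishes by the weakly outgoing decay of Definition~\ref{weakoutgoingdef}.

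The main obstacle is the treatment of the allowed region in (3): Agmon loses its positivity there, and a naive application would cost another exponential factor. The resolution is to flatten $\phi$ on the plateau so that the identity collapses to an ODE energy identity whose indefinite terms are only polynomially large in $L$, while the dominant exponential weight has already been paid for in the inner barrier crossing. A secondary subtlety is that the weakly outgoing condition must suffice to kill the boundary contribution at $+\infty$; this is immediate from the pointwise exponential decay assumed in Definition~\ref{weakoutgoingdef} together with the boundedness of $\phi$ at infinity.
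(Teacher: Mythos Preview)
Your plan for parts (1) and (2) is broadly reasonable, though somewhat sketchy: since $s(R)$ lies well inside the inner forbidden region, $V>0$ on $[s_0,s(R)]$ and an Agmon argument with $\phi\lesssim_R L$ can close. The paper, however, does \emph{not} use the Agmon identity here (see the remark introducing this subsection); it instead builds an exponential $y$-multiplier $y=-\exp(-\int_{-\infty}^s x)$ with $x\sim 2\LL r^{-1}\sqrt{(r-r_+)/r}$ and applies the current~\eqref{ymulteqn2}, absorbing the potential term by a Hardy-type inequality.

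For part (3) your argument has a genuine gap in the allowed region $[s_1,s_2]$. With $\phi$ flat at height $\phi_{\max}\sim 2\LL\log\LL$, the Agmon identity contributes $\int_{s_1}^{s_2} V\,e^{2\phi_{\max}}|u|^2$ with $V<0$; moving this to the right-hand side produces a term $e^{2\phi_{\max}}\int_{s_1}^{s_2}|V||u|^2$. You assert this is ``absorbed by Cauchy--Schwarz against the coercive terms at the cost of a polynomial-in-$L$ factor,'' but no such absorption is available: $|V|\lesssim L^{-2}$ on the plateau, while the interval has length $\sim(m^2-\omega^2)^{-1}\lesssim L^p$, so any Poincar\'e-type control of $\int|u|^2$ by $\int|\partial_s u|^2$ costs $L^{2p}$, and $L^{2p-2}\not\ll 1$. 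The only coercive $|u|^2$-terms live in the forbidden regions and carry no weight on $[s_1,s_2]$. Since the quasimodes of Theorem~\ref{quasimode.thm} are precisely concentrated in this allowed region, there is no hope of a free $L^2$ bound there.

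The paper resolves this by making the weight \emph{small}, not large, past the barrier: $(-y)|_{r\sim C\LL^2}\sim e^{-4(1+\delta_1)\LL\log\LL}$ and $(-y)\lesssim e^{-A^{1/2}\LL}r^{-100}(-y)|_{r=C\LL^2/100}$ for $r\geq C\LL^2$. Thus the residual far-region term $\int_{s(C\LL^2)}^\infty (-y)\tilde V^2 x^{-1}|u|^2$ in~\eqref{expmultbeginningoftheend4} comes with an exponentially small prefactor. Control of $|u|$ in $\{r\geq C\LL^2/100\}$ is then obtained \emph{independently} via two further currents: a $\tilde y$-current exploiting $\partial_s V\gtrsim r^{-2}$ there (identity~\eqref{ymulteqn2}, giving~\eqref{gettinguptowherevispositive}), and an $h$-current exploiting $V\gtrsim L^{-p}$ for $r\geq C\LL^p$ (identity~\eqref{hcurent}, giving~\eqref{fixthelargerpartwithabetterweight}). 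Multiplying the latter estimates by $e^{-4(1+\delta_1)\LL\log\LL}$ and adding closes the loop. This interplay---weight decaying through the barrier rather than peaking on the plateau, plus monotonicity/positivity currents in the far zone---is the missing mechanism in your proposal.
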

	\begin{proof}Let $\delta_1,\delta_2 > 0$  be arbitrary small constants, let $C \gg 1$ be sufficiently large depending only $p$, and $A \gg 1$ be sufficiently large constant, possibly depending on $C$, $\delta_1$, and $\delta_2$. Then let $\delta_1, \delta_2 > 0$, and define
		\[x(s) \doteq \sqrt{\frac{r-r_+}{r}}\left(2\LL\left(1+\delta_1\right) r^{-1} + A \LL r^{-1-\delta_2} + A r^{-1/2}
		\right),\qquad y(s) \doteq -\exp\left(-\int_{-\infty}^s x(s)\, ds\right).\]
		
		Let $\check{s}$ be sufficiently large and let $\chi(s)$ be a cut-off function which is $1$ for $s \leq \check{s}$, is identically $0$ for $s \geq \check{s}$, and satisfies $\partial_s\chi \leq 0$ everywhere. Then, choose and fix a constant $C_0$ (after picking $\check{s}$ big enough and then fixing $\check{s}$) so that 
		\begin{equation}\label{definetildev}
			\left|V+\chi(s)\omega^2\right| \leq \tilde{V}\left(s\right) \doteq C_0\LL^2 \left(\frac{r-r_+}{r}\right)\chi(s) + \left(1-\chi(s)\right)\left(m^2-\omega^2+\frac{C_0}{r} + \frac{\LL^2\left(1+\delta_1^{100}\right)}{r^2}\right).
		\end{equation}
		
		We note that (keeping in mind that the largeness of $A$ may depend on $C$ and that $m^2-\omega^2 \lesssim \LL^{-2}$)
		\begin{equation}\label{usefulcomparisonforpandV}
			r \in [r_+,C\LL^2] \Rightarrow	\left|\tilde{V}\right| \leq {\rm min}\left(\frac{r^{2+2\delta_2}}{A^{2-\delta_1} 
			},1\right)\frac{1}{4}\left(1+\delta_1\right)^{-2}\left(1+\delta_1^{100}\right)x^2,
		\end{equation}
		\begin{equation}\label{derxgainsanr}
			|\partial_s x| \lesssim r^{-1} x,
		\end{equation}
		\begin{equation}\label{useful22}
			\left|\partial_s\tilde{V}\right| \lesssim  r^{-1}\left|\tilde{V}\right|,	\end{equation}
		and also that $\partial_sy =x[-y]> 0$ everywhere, $y(-\infty) = -1$, and $y(\infty) = 0$.
		
		Now we apply~\eqref{ymulteqn2} and obtain (keeping in mind that $y\partial_s\chi \geq 0$)
		\begin{align}\label{expmultbeginningoftheend}
			&\int_{-\infty}^{\infty}\left[\left(-y\right)x\left|\partial_su\right|^2 + \omega^2 \left(-y\right)x \chi(s)\left|u\right|^2\right]\, ds \leq 
			\\ \nonumber &\qquad \omega^2\left|u(-\infty)\right|^2 s +  \left|\int_{-\infty}^{\infty}\partial_s\left(y\left(\chi(s)\omega^2+V\right)\right)\left|u\right|^2\, ds\right| + 2\int_{-\infty}^{\infty}(-y)\left|H\right|\left|\partial_su\right|\, ds.
		\end{align}
		For the second term on the right hand side of~\eqref{expmultbeginningoftheend} we can integrate by parts to move the $\partial_s$ derivative onto the $\left|u\right|^2$ and use Cauchy-Schwarz to obtain that (keeping~\eqref{definetildev} in mind) 
		\begin{align}\label{expmultbeginningoftheend2}
			&\int_{-\infty}^{\infty}\left[\frac{1}{2}\left(-y\right)x\left|\partial_su\right|^2 + \omega^2 \left(-y\right)x \chi\left|u\right|^2\right]\, ds \leq 
			\\ \nonumber &\qquad \omega^2\left|u(-\infty)\right|^2  + 2\int_{-\infty}^{\infty}(-y)\frac{\tilde{V}^2}{x}\left|u\right|^2\, ds + 2\int_{-\infty}^{\infty}(-y)\left|H\right|\left|\partial_su\right|\, ds.
		\end{align}

		Our plan is now to use a Hardy type inequality to combine the first and second term on the left hand side of~\eqref{expmultbeginningoftheend2} to absorb the second term  on the right hand side of~\eqref{expmultbeginningoftheend2} when restricted to $r \in [r_+,C\LL^2]$.

		More specifically:
		\begin{align}\label{aboundarytermdropped}
			\int_{-\infty}^{s\left(C\LL^2\right)} (-y)\frac{\tilde{V}^2}{x}\left|u\right|^2\, ds &= \int_{-\infty}^{s\left(C\LL^2\right)}\left(\partial_sy\right)\frac{\tilde{V}^2}{x^2}\left|u\right|^2\, ds
			\\ \nonumber &\leq \int_{-\infty}^{s\left(C\LL^2\right)}(-y)\partial_s\left(\frac{\tilde{V}^2}{x^2}\right)\left|u\right|^2\, ds + 2\int_{-\infty}^{s\left(C\LL^2\right)}(-y)\frac{\tilde{V}^2}{x^2}\left|\partial_su\right|\left|u\right|\, ds \Rightarrow
		\end{align}
		\begin{align}\label{ahardytypeineq}
			&2\int_{-\infty}^{s\left(C\LL^2\right)} (-y)\frac{\tilde{V}^2}{x}\left|u\right|^2\, ds \leq
			\\ \nonumber &\qquad4\int_{-\infty}^{s\left(C\LL^2\right)}(-y)\partial_s\left(\frac{\tilde{V}^2}{x^2}\right)\left|u\right|^2\, ds + 8\int_{-\infty}^{s\left(C\LL^2\right)}(-y)\frac{\tilde{V}^2}{x^3}\left|\partial_su\right|^2\, ds.
		\end{align}
		Note that the boundary term we dropped in the transition from the first to the second line of~\eqref{aboundarytermdropped} has a favorable sign. We now turn to a further analysis of the terms on the right hand side of~\eqref{ahardytypeineq}.

		We start with the first term on the right hand side of~\eqref{ahardytypeineq}. Choose $\tilde{s}\left(\LL\right) < 0$ be the most negative value of $s$ so that $\left(r\left(\tilde{s}\right)-r_+\right)^{1/2}\LL \geq 1$. In view of~\eqref{derxgainsanr} and~\eqref{useful22} we have that
		\begin{align}\label{ithinkthistermdoesntmatter}
			\left|\partial_s\left(\frac{\tilde{V}^2}{x^2}\right)\right| &\lesssim r^{-1} \frac{\tilde{V}^2}{x^2}
			\\ \nonumber &= r^{-1} \frac{\tilde{V}^2}{x^2} 1|_{\left\{s \leq \tilde{s}\left(\LL\right)\right\}}+r^{-1} \frac{\tilde{V}^2}{x^2} 1|_{\left\{\tilde{s}\left(\LL\right)\leq  s \leq 0\right\}} + r^{-1} \frac{\tilde{V}^2}{x^2} 1|_{ \left\{s\geq 0\right\}}
			\\ \nonumber &\lesssim \frac{\LL^2}{A^{2-\delta_1}}\left(r-r_+\right) 1|_{\left\{s \leq \tilde{s}\right\}} +A^{-1} \frac{\tilde{V}^2}{x} 1|_{\left\{\tilde{s}\left(\LL\right) \leq s \leq 0\right\}}+ r^{-1/2}A^{-1} \frac{\tilde{V}^2}{x} 1|_{ \left\{s\geq 0\right\}}\\ \nonumber &\lesssim A^{-2+\delta_1}x 1|_{\left\{s \leq \tilde{s}\right\}} +A^{-1} \frac{\tilde{V}^2}{x} 1|_{\left\{\tilde{s}\left(\LL\right) \leq s \leq 0\right\}}+ r^{-1/2}A^{-1} \frac{\tilde{V}^2}{x} 1|_{ \left\{s\geq 0\right\}}.
		\end{align}

		Then, using again~\eqref{usefulcomparisonforpandV}, we finally obtain 
		\begin{align}\label{ahardytypeineq2}
			&2\int_{-\infty}^{s\left(C\LL^2\right)} (-y)\frac{\tilde{V}^2}{x}\left|u\right|^2\, ds \leq
			\\ \nonumber &\qquad \frac{1}{2}\left(1+\delta_1\right)^{-1/2}\int_{\infty}^{s\left(C\LL^2\right)}(-y)x\left|\partial_su\right|^2\, ds +  A^{-2+2\delta_1}\int_{-\infty}^{\tilde{s}}(-y)x\left|u\right|^2\, ds.
		\end{align}

		Combining~\eqref{expmultbeginningoftheend2} and~\eqref{ahardytypeineq2} thus leads to
		\begin{align}\label{expmultbeginningoftheend3}
			&\int_{-\infty}^{\infty}\left[\left(-y\right)x\left|\partial_su\right|^2 + \omega^2 \left(-y\right)x\left|u\right|^2\right]\, ds \lesssim_{\delta_1} 
			\\ \nonumber &\qquad \qquad \qquad \left|u(-\infty)\right|^2+\int_{s\left(C\LL^2\right)}^{\infty}(-y)\frac{\tilde{V}^2}{x}\left|u\right|^2\, ds+ \int_{-\infty}^{\infty}(-y)\left|H\right|\left|\partial_su\right|\, ds.
		\end{align}
		Now we add in a suitable multiple of~\eqref{agoodenergyestimateiguess2.eq} to obtain
		\begin{align}\label{expmultbeginningoftheend4}
			&\int_{-\infty}^{\infty}\left[\left(-y\right)x\left|\partial_su\right|^2 + \omega^2 \left(-y\right)x\left|u\right|^2\right]\, ds \lesssim_{\delta_1} 
			\\ \nonumber &\qquad \qquad \qquad 	\int_{s\left(C\LL^2\right)}^{\infty}(-y)\frac{\tilde{V}^2}{x}\left|u\right|^2\, ds+ \int_{-\infty}^{\infty}\left[(-y)\left|H\right|\left|\partial_su\right|+ \left|H\right|\left|u\right|\right]\, ds.
		\end{align}
		
		Next we observe that in view of the largeness of $C$ we have that  $r \geq \frac{C}{100}\LL^2$ implies that
		\[\partial_sV \gtrsim r^{-2}.\]
		Now let $\tilde{y}(s)$ be a function which is $0$ for $r \leq \frac{C}{100}\LL^2-1$ and is equal to $-1$ for $r \geq \frac{C}{100}\LL^2 $. Applying~\eqref{ymulteqn2} with $y = \tilde{y}$ leads to the estimate
		\begin{align}\label{gettinguptowherevispositive}
			\int_{\frac{C}{100}\LL^2}^{\infty}r^{-2}\left|u\right|^2\, ds \lesssim \int_{\frac{C}{100}\LL^2-1}^{\frac{C}{100}\LL^2}\left[\left|\partial_su\right|^2 + \left|u\right|^2\right]\, ds + \int_{\frac{C}{100}\LL^2-1}^{\infty}\left|H\right|\left|\partial_su\right|\, ds.
		\end{align}

		Next, we observe that, after possibly increasing the constant $C$, depending only on $p$, we will have that
		\[r \geq C\LL^p \Rightarrow V \gtrsim  \LL^{-p}.\]
		Furthermore,
		\[r \in [\frac{C}{100}\LL^2,C\LL^p] \Rightarrow V \gtrsim \LL^{-p} - r^{-1} \gtrsim \LL^{-p} - \LL^p r^{-2}.\] 
		Let $h(s)$ be any smooth function so that $h$ is identically $1$ for $r \geq \frac{C}{100}\LL^2$ and vanishes identically for $r \leq \frac{C}{100}\LL^2-1$ and so that $\partial^2_sh$ is uniformly bounded. From~\eqref{hcurent} we then get
		\begin{equation}\label{fixthelargerpartwithabetterweight}
			\int_{\frac{C}{100}L^2}^{\infty}\left[\left|\partial_su\right|^2 + \LL^{-p}\left|u\right|^2\right]\, ds \lesssim \int_{\frac{C}{100}\LL^2-1}^{\frac{C}{100}\LL^2}\left|u\right|^2\, ds + \LL^p \int_{\frac{C}{100}\LL^2}^{C\LL^p} r^{-2}\left|u\right|^2\, ds + \int_{\frac{C}{100}\LL^2-1}^{\infty}\left|H\right|\left|u\right|\, ds.
		\end{equation}
		Next we observe that,
		\[e^{-4 \left(1+\delta_1\right)\LL \log\LL + O\left(\LL\right)} \leq y|_{r = \frac{C}{100}\LL^2} \leq e^{-4\left(1+\delta_1\right) \LL \log\LL + O\left(\LL\right)},\]
		\[  \left|y\right||_{r \geq C\LL^2} \leq \exp\left(-\frac{1}{4}A \left(\sqrt{r} -\sqrt{\frac{C}{100}\LL^2}\right)\right)y|_{r = \frac{C}{100}\LL^2} \lesssim \exp\left(-A^{1/2}\LL\right) y|_{r = \frac{C}{100}\LL^2} r^{-100}.\]
		Thus, we can now multiply~\eqref{fixthelargerpartwithabetterweight} with a suitable constant and combine with~\eqref{expmultbeginningoftheend4} and~\eqref{gettinguptowherevispositive} to obtain
		\begin{align}\label{wowwemadeittotheendoftheproof}
			&\int_{-\infty}^{\frac{C}{100}L^2}\left[\left(-y\right)x\left|\partial_su\right|^2 + \omega^2 \left(-y\right)x\left|u\right|^2\right]\, ds + e^{-4\left(1+\delta_1\right)\LL \log \LL + O\left(\LL\right)}\int_{\frac{C}{100}L^2}^{\infty}\left[\left|\partial_su\right|^2 + \left|u\right|^2\right]\, ds \lesssim
			\\ \nonumber &\qquad \qquad \LL^{\alpha}\int_{-\infty}^{\infty}\left[(-y)\left|H\right|\left|\partial_su\right|+ \left|H\right|\left|u\right|\right]\, ds + \LL^{\alpha}e^{-4\left(1+\delta_1\right)\LL \log \LL + O\left(\LL\right)}\int_{\frac{C}{100}L^2-1}^{\infty}\left|H\right|\left|u\right|\, ds.
		\end{align}
		The proof is then concluded by applying Cauchy--Schwarz, adding a suitable multiple of the estimate from Lemma~\ref{redshiftode}, and noting that for any fixed $R$, we will have that there exists a constant $C_R$, depending only on $R$ with 
		\[y|_{r \leq R} \geq e^{-C_R \LL}.\]
	\end{proof}
	
	We now establish an analogue of Lemma~\ref{thisisgeneralforubutbadconstant} to $\partial_{\omega}u$.
	\begin{lemma}\label{thisisgeneralforderofubutbadconstant}Let $u$ be an outgoing solution to~\eqref{eq:mainVrhs}, let $\chi(s)$ be a cut-off function which is $1$ for $s \leq -1$ and is identically $0$ for $s \geq 0$, and then define
		\[q \doteq e^{-i\omega s \chi(s)}\partial_{\omega}\left(e^{i\omega s \chi(s)}u\right).\]
		
		Then, for every $\delta, \tilde{\delta} > 0$,
		\begin{align}\label{nocompactsupportsolosellogl2}
			&\int_{-\infty}^{\infty}\left[\frac{r}{r-r_+}\left|\partial_sq +i\omega q\right|^2 + \frac{1}{\left({\rm max}\left(0,-s\right)\right)^{1+\delta}+1}\left(\left|\partial_sq\right|^2 + \left|q\right|^2\right)\, \right]ds \lesssim_{\delta} 
			\\ \nonumber &\qquad \qquad e^{\left(8+\tilde{\delta}\right) \LL \log \LL}\int_{-\infty}^{\infty}\frac{r}{r-r_+}\left[\left|H\right|^2+\left|e^{-i\omega s \chi}\partial_{\omega}\left(e^{i\omega s\chi(s)}H\right)\right|^2\right]\, ds.
		\end{align}
	\end{lemma}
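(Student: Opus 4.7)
The plan is to apply Lemma~\ref{thisisgeneralforubutbadconstant}, specifically the estimate~\eqref{nocompactsupportsolosellogl}, to $q$, feeding in the Schr\"odinger-type equation~\eqref{qeqnfordiff} satisfied by $q$ which is the content of Lemma~\ref{qeqnusefultodiffu}. First, I would dispense with the regime $m^2-\omega^2 \gtrsim L^{-2}$, which is already covered by Lemma~\ref{easymwaybigderivativeomega} at a strictly better polynomial-in-$L$ cost; it thus suffices to treat $L^{-p} \leq m^2-\omega^2 \lesssim L^{-2}$, so that Lemma~\ref{thisisgeneralforubutbadconstant} is applicable to both $u$ and $q$.

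Before using the multiplier machinery, I would verify that $q$ is weakly outgoing in the sense of Definition~\ref{weakoutgoingdef}. Writing $u = A(s)e^{-i\omega s}$ with $A\to 1$ as $s\to -\infty$, one has $q = is\,u+\partial_\omega u = (\partial_\omega A)(s)\,e^{-i\omega s}$ in the horizon limit, so $q \sim c \, e^{-i\omega s}$ for the constant $c=(\partial_\omega A)(-\infty)$; this is enough to run the proofs of Lemma~\ref{agoodenergyestimateiguess2} and the multiplier identities in Lemma~\ref{bunchofmultipliers2} (the boundary coefficient is $|c|^2$ in place of $1$). As $s\to\infty$ the cutoff vanishes and $q=\partial_\omega u$; differentiating the expansion~\eqref{outgoingboundary2} in $\omega$ inserts at worst polynomial-in-$s$ prefactors into the exponentially decaying profile, preserving the weakly outgoing condition there.

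With $q$ weakly outgoing, applying~\eqref{nocompactsupportsolosellogl} to the equation~\eqref{qeqnfordiff} (with $L$-loss $e^{(4+\tilde\delta/2)\LL\log \LL}$) yields
\[\text{LHS of~\eqref{nocompactsupportsolosellogl2}} \;\lesssim_\delta\; e^{(4+\tilde\delta/2)\LL\log \LL}\int_{-\infty}^{\infty}\frac{r}{r-r_+}|\mathcal{S}|^2\, ds,\]
where $\mathcal{S}$ denotes the right-hand side of~\eqref{qeqnfordiff}. The third piece of $\mathcal{S}$, namely $e^{-i\omega s\chi}\partial_\omega(e^{i\omega s\chi}H)$, is already of the form appearing in the claimed right-hand side. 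The first piece $2i(\chi+s\chi')(\partial_s u + i\omega u)$, which equals $2i(\partial_s u+i\omega u)$ for $s\leq -1$ and vanishes for $s\geq 0$, contributes $\int_{-\infty}^{0}\frac{r}{r-r_+}|\partial_s u+i\omega u|^2\, ds$, which is precisely the red-shift weighted term on the LHS of~\eqref{nocompactsupportsolosellogl} applied to $u$. The remaining two pieces — $i(2\chi'+s\chi'')u$ supported in $[-1,0]$, and $2\omega(\chi+s\chi'-1)u$ which equals $-2\omega u$ for $s\geq 0$ with a bounded transition on $[-1,0]$ — live in regions where $\frac{r}{r-r_+}$ is uniformly bounded, and are therefore dominated by the bulk term $\int \frac{|u|^2}{\max(0,-s)^{1+\delta}+1}\, ds$ appearing on the LHS of~\eqref{nocompactsupportsolosellogl}.

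Hence, invoking Lemma~\ref{thisisgeneralforubutbadconstant} applied to $u$ itself (with source $H$) gives
\[\int_{-\infty}^{\infty}\frac{r}{r-r_+}|\mathcal{S}|^2\, ds \;\lesssim_\delta\; e^{(4+\tilde\delta/2)\LL\log \LL}\int_{-\infty}^{\infty}\frac{r}{r-r_+}\Bigl[|H|^2+\bigl|e^{-i\omega s\chi}\partial_\omega(e^{i\omega s\chi}H)\bigr|^2\Bigr]\, ds,\]
and composing the two estimates produces the total loss $e^{(8+\tilde\delta)\LL\log \LL}$, matching~\eqref{nocompactsupportsolosellogl2}. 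I expect the most delicate point to be the careful verification of the weakly outgoing property of $q$ at the horizon: the naive $\partial_\omega$ of $u$ grows linearly in $s$ there, so the cancellation with $is\,u$ must be extracted from an actual asymptotic expansion of $u$ (and in particular of $A(s)$) rather than from the leading outgoing boundary condition alone.
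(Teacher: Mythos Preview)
Your proposal is correct and follows essentially the same approach as the paper: verify that $q$ is weakly outgoing, then apply Lemma~\ref{thisisgeneralforubutbadconstant} (estimate~\eqref{nocompactsupportsolosellogl}) first to $q$ using the equation from Lemma~\ref{qeqnusefultodiffu}, and then again to $u$ to control the source terms, composing the two $e^{(4+\tilde\delta/2)\LL\log\LL}$ losses. The paper's proof is in fact a two-line sketch of exactly this argument, so your version simply fills in the details the paper leaves implicit.
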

	\begin{proof}We will have that $q$ is weakly outgoing. Then we may repeatedly apply Lemma~\ref{thisisgeneralforubutbadconstant} to the equation that $q$ satisfies which is computed in Lemma~\ref{qeqnusefultodiffu}. 
	\end{proof}
	\begin{rmk}We emphasize one crucial difference between Lemma~\ref{thisisgeneralforderofubutbadconstant} and Lemma~\ref{thisisgeneralforubutbadconstant}. In Lemma~\ref{thisisgeneralforderofubutbadconstant}, there is no way to exploit an assumed compact support of $H$ to replace the $e^{(4+\tilde{\delta}) \LL \log \LL}$ on the right hand side with a $e^{DL}$. 
	\end{rmk}

	\subsection{Improved estimates away from the critical frequencies}
	In this section we will establish estimates for $u$ and $\partial_{\omega}u$ which, while not useful for all frequencies $\omega$, will, for many frequencies, offer a dramatic improvement over the constants on the right hand side of Lemmas~\ref{thisisgeneralforubutbadconstant} and~\ref{thisisgeneralforderofubutbadconstant}.
	
	The following lemma is proven by combining an Agmon estimate with the red shift effect at the event horizon. The resulting estimate allows us to control the solution $u$ in a compact region if we have relatively weak a priori control of $u$ in $L^2$ globally. 
	\begin{lemma}\label{agmonplusredest}Let $u$ be a weakly outgoing solution to~\eqref{eq:mainVrhs}, and $\delta_1$ and $\tilde{\delta}$ be arbitrarily small positive constants and $S_0$ be a large positive constant. Then, as long as $\LL$ is sufficiently large, we have that 
		\begin{align}
			&\int_{-\infty}^{S_0}\left[\frac{r}{r-r_+}\left|\partial_su+i\omega u\right|^2 + \frac{1}{\left({\rm max}\left(0,-s\right)\right)^{1+\delta_1}+1}\left(\left|\partial_su\right|^2 + \left|u\right|^2\right)\right]\, ds \lesssim_{\delta_1,\tilde{\delta}}
			\\ \nonumber &\qquad e^{-\left(4-\tilde{\delta}\right) \LL \log \LL}\int_{0}^{\infty}\left|u\right|^2\, ds + \LL^2\int_{-\infty}^{\infty}\frac{r}{r-r_+}\left|H\right|^2\, ds.
		\end{align}
	\end{lemma}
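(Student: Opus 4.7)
The plan is to combine the Agmon identity (Lemma~\ref{agmonidentity}), which extracts the exponential smallness factor from the classically forbidden region, with the red shift estimate (Lemma~\ref{redshiftode}) near the horizon where $V < 0$, and the energy identity (Lemma~\ref{agoodenergyestimateiguess2}) to control the asymptotic boundary value $u(-\infty)$. The key quantitative observation is that if $s_0 \sim -\kappa_+^{-1}\log\LL$ lies slightly past the first zero of $V$ beyond the horizon, and $s_* = s(c_0 \LL^2)$ sits safely inside the first forbidden region for a suitably small $c_0 < 1/(2Mm^2)$, then a direct computation using $V \approx \LL^2/r^2$ in this region yields
\[
\int_{s_0}^{s_*} \sqrt{V(s)}\, ds = 2\LL\log\LL + O(\LL),
\]
which is precisely the Agmon phase budget needed to produce the claimed factor $e^{-(4-\tilde{\delta})\LL\log\LL}$.

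Concretely, I would apply Lemma~\ref{agmonidentity} (to real and imaginary parts separately) to $v = \chi u$ on $[s_0-1,\infty)$, where $\chi = \chi_-\chi_+$ is a product of smooth cutoffs, with $\chi_-$ rising from $0$ to $1$ on $[s_0-1,s_0]$ and $\chi_+$ dropping from $1$ to $0$ on $[s_*,s_*+1]$. The weakly outgoing condition (Definition~\ref{weakoutgoingdef}) makes $v,v'$ decay at $+\infty$ fast enough that the Agmon identity closes without boundary contribution. The Agmon phase $\phi$ is built so that $\phi\equiv\phi_{\max}:=(2-\tilde{\delta}/2)\LL\log\LL$ on $[s_0-1,S_0]$, then decreases with rate $\phi'(s)=-(1-\tilde{\delta}/8)\sqrt{V(s)}$ on $[S_0,s_*]$ down to $\phi(s_*)=0$, and vanishes beyond; the integral estimate above guarantees consistency. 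By construction $V-(\phi')^2\geq(\tilde{\delta}/4)V\geq 0$ everywhere on the support of $\chi$.

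Plugging into Lemma~\ref{agmonidentity}, the source $H_{\mathrm{eff}}=\chi H+2\chi'u'+\chi''u$ splits into three contributions. The $\chi H$ piece is absorbed by Cauchy--Schwarz against the Agmon LHS and yields the $\LL^2\int\frac{r}{r-r_+}|H|^2$ term (after pulling out the overall $e^{2\phi}$ factor). The $\chi_+'$ piece is supported on $[s_*,s_*+1]$ where $\phi=0$, and contributes $\int_{s_*}^{s_*+1}(|u|^2+|u'|^2)\leq\int_0^{\infty}|u|^2\,ds$ after using~\eqref{eq:mainV} to trade $|u'|^2$ against $|u|^2+|H|^2$ on a bounded interval. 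The $\chi_-'$ piece is supported on $[s_0-1,s_0]$ where $\phi=\phi_{\max}$; after dividing the whole estimate by $e^{2\phi_{\max}}$, this contribution takes the form $\int_{s_0-1}^{s_0}(|u|^2+|u'|^2)\,ds$, which is of the same form as the error term in Lemma~\ref{redshiftode} applied at $s_0$, and is therefore absorbed by combining with the red shift estimate to get the full control on $(-\infty,S_0]$. Finally, a Poincar\'e-type inequality, together with the energy identity Lemma~\ref{agoodenergyestimateiguess2} to bound $|u(-\infty)|^2$ by source terms, promotes the Agmon $\int V|u|^2+\int|u'|^2$ control to the weighted $|u|^2$ control appearing on the LHS of the lemma.

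The main obstacle I anticipate is the loss of coercivity near $s=s_0$: Agmon delivers only $\int V|u|^2$ rather than $\int|u|^2$, and $V$ is small at the boundary of the forbidden region. To close this cleanly I would use a weighted Poincar\'e inequality exploiting the LHS weight $(1+|s|)^{-1-\delta_1}$, which admits an $O(\log\LL)$ loss for $s$ near $s_0\sim -\kappa_+^{-1}\log\LL$; such a loss is comfortably absorbed by the $e^{-(4-\tilde{\delta})\LL\log\LL}$ factor.
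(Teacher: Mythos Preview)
Your overall plan (Agmon on the forbidden region, redshift near the horizon) matches the paper's, and the phase budget $\int\sqrt{V}\,ds = 2\LL\log\LL + O(\LL)$ is the right one. The gap is in the gluing, caused by your choice $s_0 \sim -\kappa_+^{-1}\log\LL$ near the first turning point. There $V$ is of size $O(1)$ (and negative on part of $[s_0-1,s_0]$, so your claim $V-(\phi')^2\geq 0$ on $\mathrm{supp}\,\chi$ already fails). After dividing by $e^{2\phi_{\max}}$, the Agmon LHS on the overlap $[s_0,s_0+1]$ gives only $\int[|u'|^2+O(1)|u|^2]$, while the error term in Lemma~\ref{redshiftode} is $L^2\int_{s_0-1}^{s_0}|u|^2$: an $L^2$ weight mismatch. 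No choice of relative coefficient closes both absorptions at once---absorbing the Agmon left-cutoff error into the redshift LHS forces the redshift multiple to be $\gtrsim(\log\LL)^{1+\delta}$ (since the redshift $|u|^2$-weight there is only $L^2(1+|s|)^{-1-\delta}\sim L^2(\log\LL)^{-1-\delta}$), while absorbing the redshift error into the Agmon LHS forces the multiple $\lesssim 1$. Your proposed Poincar\'e fix does not help: the $(\log\LL)$ loss sits in this coupling constant, not in a term multiplied by $e^{-(4-\tilde\delta)\LL\log\LL}$, so the exponential room never touches it.

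The paper avoids this by taking $s_0$ to be the \emph{fixed} ($\LL$-independent) constant from Lemma~\ref{redshiftode}, where $V\sim L^2(r-r_+)/r^3\sim L^2$ so the Agmon and redshift weights match. Two further devices make the gluing robust: the Agmon phase is made to \emph{decrease} by $(1-\delta_3)\int_s^{S_0/2}\sqrt{V}\,ds'\sim c\LL$ as $s$ drops from $S_0/2$ to $s_0-2$, buying an $e^{-c\LL}$ suppression on the left cutoff error; and an intermediate $h$-current step (identity~\eqref{hcurent}) is used to propagate control from $[S_0/2,2S_0]$ down to $[s_0-1,\tfrac34 S_0]$ before invoking redshift. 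If you simply move your left cutoff to a fixed $s_0$, most of your outline survives, but you will still need one of these two mechanisms to close the constants.
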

	\begin{proof}Let $\delta_2,\delta_3 > 0$ be  a arbitrary small constants. We can find a sufficiently small constant $\tilde{c} > 0$ and a large constant $C > 0$, both independent of $L$, so that 
		\begin{equation}\label{viscomptosthisinabigrange}
			r \in [s_0-2,\tilde{c}L^2] \Rightarrow C^{-1}\frac{L^2(r-r_+)}{r^3} \leq V \leq C \frac{L^2 (r-r_+)}{r^3},
		\end{equation}
		where $s_0<0$ is as in Lemma~\ref{redshiftode}.
		
		In particular,
		\begin{equation}\label{thephigoesupandgoesdown}
			\int_{2S_0}^{\tilde{c} \LL^2}\sqrt{V}\, ds \geq 2\left(1- \delta_2\right)\LL\log \LL,
		\end{equation}
		
		We define a Lipschitz function $\phi(s)$ by
		\begin{enumerate}
			
			\item $\phi(s) = -\left(1-\delta_3\right)\int_s^{\frac{1}{2}S_0}\sqrt{V}\, ds+\left(1-\delta_3\right)\int_{2S_0}^{\tilde{c}\LL^2}\sqrt{V}\, ds$ for $s \leq \frac{1}{2}S_0$.
			\item $\phi(s) = \left(1-\delta_3\right)\int_{2S_0}^{\tilde{c}\LL}\sqrt{V}\, ds$ for $s \in \left[\frac{1}{2}S_0,2S_0\right]$.
			\item $\phi(s) = \left(1-\delta_3\right)\int_s^{\tilde{c}\LL^2}\sqrt{V}\, ds$ for $s \in \left[2S_0,\tilde{c}\LL^2\right]$.
			\item $\phi(s) = 0$ for $s \geq \tilde{c}\LL^2$.
		\end{enumerate}
		
		Now we let $\chi_0(s)$ be any cut-off function which is $1$ for $s \in [s_0-1,\tilde{c}\LL^2-1]$ and vanishes for $s \leq s_0-2$ and $s \geq \tilde{c}\LL^2$. Then set $\tilde{u} \doteq \chi_0 u$. We will have that
		\begin{equation}\label{uhasbenncutbychi0}
			\partial_s^2\tilde{u}-V\tilde{u}  = \chi_0 H+ 2\chi_0' \partial_su + \chi_0'' u.
		\end{equation}
		We now apply Lemma~\ref{agmonidentity} to~\eqref{uhasbenncutbychi0} with the function $\phi$ defined above. We obtain, for some constant $c > 0$,  (note that $\phi(s)=O(1)$ for $s \in [\tilde{c}\LL^2-1,\tilde{c}\LL^2]$, while $\phi(s)- \underbrace{(1-\delta_3) \int_{2S_0}^{\tilde{c}\LL^2} \sqrt{V} ds}_{2(1-\delta_3)\LL \log(\LL)+O(\LL)}=O(\LL)$ for $s \in [s_0-2,s_0-1]$) the estimate
		\begin{align}\label{afirstconseqofagmonforlarger}
			&\int_{\frac{1}{2}S_0}^{2S_0}\left[\left|u\right|^2+ \left|\rd_s u\right|^2\right]\, ds \\ \nonumber &\qquad \lesssim L^2\int_{s_0-2}^{\tilde{c}\LL^2}\left|H\right|^2 + e^{-4\left(1-\delta_3\right)\left(1-\delta_2\right)L \log L}\int_{\tilde{c}\LL^2-1}^{\tilde{c}\LL^2}\left(\left|\partial_su\right|^2 + \left|u\right|^2\right)\, ds + e^{-c\LL}\int_{s_0-2}^{s_0-1}\left(\left|\partial_su\right|^2 + \left|u\right|^2\right)\, ds
			\\ \nonumber &\qquad \lesssim L^2\int_{s_0-2}^{\tilde{c}\LL^2}\left|H\right|^2 + e^{-4\left(1-\delta_3\right)\left(1-\delta_2\right)L \log L}\int_{\tilde{c}\LL^2-1}^{\tilde{c}\LL^2} \left|u\right|^2\, ds + e^{-c\LL}\int_{s_0-2}^{s_0-1} \left|u\right|^2\, ds,
		\end{align}
		where we used a standard elliptic estimate to move from the first line to the second line.
		
		Now let $h(s)$ be a cut-off function which is identically $1$ for $s \in [s_0-1,\frac{3}{4}S_0]$ and is identically $0$ for $s \leq s_0-2$ and $s \geq \frac{5}{6}S_0$. We then apply~\eqref{hcurent} to obtain
		\[\int_{s_0-1}^{\frac{3}{4}S_0}\left[\frac{r-r_+}{r^3}L^2\left|u\right|^2+ |\rd_s u|^2\right]\, ds \lesssim \int_{s_0-2}^{s_0-1}\left|u\right|^2 + \int_{\frac{3}{4}S_0}^{\frac{5}{6}S_0}\left|u\right|^2\, ds + \int_{s_0-2}^{\frac{5}{6}S_0}\left|H\right|^2\, ds.\]
		Combining with Lemma~\ref{redshiftode} then yields
		\begin{align*}
			&\int_{-\infty}^{\frac{3}{4}S_0}\left[\frac{r}{r-r_+}\left|\partial_su+i\omega u\right|^2 + \frac{1}{\left({\rm max}\left(0,-s\right)\right)^{1+\delta_1}+1}\left(\left|\partial_su\right|^2 + \left|u\right|^2\right)\right]\, ds \lesssim_{\delta_1} 
			\\ \nonumber &\qquad \qquad  \int_{\frac{3}{4} S_0}^{\frac{5}{6} S_0}\left|u\right|^2\, ds + \int_{-\infty}^{\frac{5}{6} S_0}\frac{r}{r-r_+}\left|H\right|^2\, ds.
		\end{align*}
		Combining with~\eqref{afirstconseqofagmonforlarger} finishes the proof.
	\end{proof}
	
	Before proceeding with the main estimates of the section, we note that Lemma~\ref{agmonplusredest} allows to improve the $\LL$-dependence of the estimate from Lemma~\ref{thisisgeneralforderofubutbadconstant} is we restrict the range of integration on the left hand side to a compact range of $r$.
	\begin{cor}
		\label{thisisgeneralforderofubutbadconstant2}Let $u$ be an outgoing solution to~\eqref{eq:mainVrhs}, let $\chi(s)$ be a cut-off function which is $1$ for $s \leq -1$ and is identically $0$ for $s \geq 0$, and then define
		\[q \doteq e^{-i\omega s \chi(s)}\partial_{\omega}\left(e^{i\omega s \chi(s)}u\right).\]
		
		Then, for every large constant $S_0$ and small constants $\delta, \tilde{\delta} > 0$,
		\begin{align}\label{nocompactsupportsolosellogl22}
			&\int_{-\infty}^{2S_0}\left[\frac{r}{r-r_+}\left|\partial_sq +i\omega q\right|^2 + \frac{1}{\left({\rm max}\left(0,-s\right)\right)^{1+\delta}+1}\left(\left|\partial_sq\right|^2 + \left|q\right|^2\right)\, \right]ds \lesssim_{S_0,\delta} 
			\\ \nonumber &\qquad \qquad e^{\left(4+\tilde{\delta}\right) \LL \log \LL}\int_{-\infty}^{\infty}\frac{r}{r-r_+}\left[\left|H\right|^2+\left|\partial_{\omega}\left(e^{i\omega s}H\right)\right|^2\right]\, ds.
		\end{align}
	\end{cor}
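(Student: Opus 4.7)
The plan is to apply the Agmon--redshift estimate of Lemma~\ref{agmonplusredest} to the equation satisfied by $q$, thereby reducing the global cost $e^{(8+\tilde{\delta})\LL \log \LL}$ of Lemma~\ref{thisisgeneralforderofubutbadconstant} to the compact-range cost $e^{(4+\tilde{\delta})\LL \log \LL}$. The underlying mechanism is the exponential suppression factor $e^{-(4-\tilde{\delta})\LL \log \LL}$ supplied by Agmon in the classically forbidden region $r_+ \leq r \leq \tilde{c}\LL^2$, at the expense of tolerating a term proportional to the far-field $L^2$-norm of $q$.

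First I would invoke Lemma~\ref{qeqnusefultodiffu} to write $\partial_s^2 q - V q = \tilde H$, where
\[
\tilde H \;=\; 2i(\chi + s\chi')(\partial_s u + i\omega u) \;+\; i(2\chi'+s\chi'')\,u \;+\; e^{-i\omega s \chi}\partial_\omega(e^{i\omega s\chi} H) \;+\; 2\omega(\chi+s\chi'-1)\,u.
\]
Since $u$ is outgoing and $\chi \equiv 1$ for $s \leq -1$, a short asymptotic computation confirms that $q$ is weakly outgoing in the sense of Definition~\ref{weakoutgoingdef}. Applying Lemma~\ref{agmonplusredest} to $q$ (taking the large constant there to be $2S_0$ in place of $S_0$) then yields
\[
\text{LHS of the corollary} \;\lesssim\; e^{-(4-\tilde{\delta})\LL \log \LL}\int_0^\infty |q|^2\, ds \;+\; \LL^2 \int_{-\infty}^{\infty} \frac{r}{r-r_+}\,|\tilde H|^2\, ds.
\]
The first integral is controlled by Lemma~\ref{thisisgeneralforderofubutbadconstant}, which bounds $\int_0^\infty |q|^2\, ds$ by $e^{(8+\tilde\delta')\LL \log \LL}$ times the desired right-hand side; multiplication by the Agmon gain produces exactly $e^{(4+\tilde\delta'')\LL\log\LL}$ after readjusting constants. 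The pieces of $\tilde H$ supported on $[-1,0]$ (from the $\chi$-derivatives) are absorbed via the global $u$-estimate of Lemma~\ref{thisisgeneralforubutbadconstant}, part~(3), while the source piece $e^{-i\omega s \chi}\partial_\omega(e^{i\omega s\chi}H)$ already appears on the right-hand side of the target estimate.

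The main obstacle is the non-compactly-supported term $2\omega(\chi + s\chi' - 1)\,u$, which for $s \geq 0$ equals $-2\omega u$ and thus contributes $\omega^2 \int_0^\infty \frac{r}{r-r_+}|u|^2\, ds$ to the last integral above. Since $u$ itself only admits a global $L^2$-bound with the large factor $e^{(4+\tilde\delta)\LL \log\LL}$ from Lemma~\ref{thisisgeneralforubutbadconstant}(3), this single piece dictates the size of the final constant. The crucial observation is that $(4+\tilde\delta)\LL \log\LL$ matches exactly the exponent appearing on the right-hand side of the statement, so that absorbing this contribution with the $\LL^2$ prefactor and combining it with the Agmon-gained first term closes the estimate after one final re-adjustment of $\tilde\delta$.
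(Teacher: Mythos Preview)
Your proposal is correct and follows essentially the same route as the paper, which states tersely that the result is ``an immediate consequence of combining an application of Lemma~\ref{agmonplusredest} to the equation from Lemma~\ref{qeqnusefultodiffu} with~\eqref{nocompactsupportsolosellogl} and Lemma~\ref{thisisgeneralforderofubutbadconstant}.'' You have identified the key mechanism: the Agmon gain $e^{-(4-\tilde\delta)\LL\log\LL}$ cancels half of the $e^{(8+\tilde\delta)\LL\log\LL}$ cost from the global $q$-estimate, while the far-field term $-2\omega u$ in $\tilde H$ is controlled at the $e^{(4+\tilde\delta)\LL\log\LL}$ level by~\eqref{nocompactsupportsolosellogl}.

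One minor imprecision: the term $2i(\chi+s\chi')(\partial_s u + i\omega u)$ is not supported on $[-1,0]$ but on $(-\infty,0]$ (since $\chi+s\chi'=1$ for $s\leq -1$). This does not affect the argument, because $\int_{-\infty}^0 \frac{r}{r-r_+}|\partial_s u + i\omega u|^2\,ds$ is precisely the redshift-weighted quantity controlled by~\eqref{nocompactsupportsolosellogl} at the same $e^{(4+\tilde\delta)\LL\log\LL}$ cost.
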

	\begin{proof}This is an immediate consequence of combining an application of Lemma~\ref{agmonplusredest} to the equation from Lemma~\ref{qeqnusefultodiffu} with~\eqref{nocompactsupportsolosellogl} 
		and Lemma~\ref{thisisgeneralforderofubutbadconstant}. 
	\end{proof}
	
	We will now be interested in determining various situations where the constant on the right hand sides of the estimates in Lemma~\ref{easymwaybig} may be improved. A crucial role will be played by the eigenvalues of the operator $Q$. This next lemma provides estimates for $u$ when $\omega^2-m^2$ is sufficiently far away from an eigenvalue of $Q$.
	\begin{lemma}\label{estimatesforunottoofarfromspec}Let $u$ be a weakly outgoing solution to~\eqref{eq:mainVrhs} so that $L^{-p} \leq m^2-\omega^2 \lesssim L^{-2}$. Let $S_0$ be any large positive constant and $\{\lambda_n\}_{n=1}^{N(p)}$ denote the collection of the eigenvalues which are greater than $ L^{-p}$ of the operator $Q$  from Definition~\ref{thisistheoperatorq} with $x_0 = S_0$. Let $\check{\delta} > 0$ be any suitably small positive constant, and assume 
		\begin{equation}\label{protectedfromspectrum}
			{\rm inf_{n \in [1,N(p)]}}\left(m^2-\omega^2 + \lambda_n\right) \geq e^{-\frac{1}{2} \LL \log \LL}.
		\end{equation}
		
		Then, for any small constant $\delta> 0$,  we have the following estimates:
		\begin{enumerate}
			\item
			\begin{align}\label{spectralgoodcompactest}
				&\int_{-\infty}^{2S_0 }\left[\frac{r}{r-r_+}\left|\partial_su+i\omega u\right|^2 + \frac{1}{\left({\rm max}\left(0,-s\right)\right)^{1+\delta}+1}\left(\left|\partial_su\right|^2 + \left|u\right|^2\right)\right]\, ds \lesssim 
				\\ \nonumber &\qquad \LL^{2}\int_{-\infty}^{\infty}\frac{r}{r-r_+}\left|H\right|^2\, ds.
			\end{align}
			
			\item 
			\begin{align}\label{spectralgoodestL2}
				&\int_{-\infty}^{\infty}\left[\frac{r}{r-r_+}\left|\partial_su+i\omega u\right|^2 + \frac{1}{1 + \left({\rm max}\left(-s,0\right)\right)^{1+\delta}}\left(\left|\partial_su\right|^2 + \left|u\right|^2\right)\right]\, ds \lesssim 
				\\ \nonumber &\qquad \qquad e^{\LL \log \LL}\LL^{2}\int_{-\infty}^{\infty}\frac{r}{r-r_+}\left|H\right|^2\, ds.
			\end{align}
			nd{align}

		\end{enumerate}
		
	\end{lemma}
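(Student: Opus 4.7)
The plan is to combine a spectral-theoretic resolvent bound for the reference operator $Q$ of Definition~\ref{thisistheoperatorq} on $\{s\geq S_0\}$ with the Agmon-plus-red-shift transfer estimate of Lemma~\ref{agmonplusredest} in the compact region. The hypothesis~\eqref{protectedfromspectrum} is precisely what will enable an accurate inversion of $Q-E$ with $E \doteq \omega^2-m^2 < 0$.

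First I would verify, using the expansion~\eqref{s.r.diff}, that on $\{s \geq S_0\}$ the true potential $V(s)$ agrees with $\tilde V - E$ up to terms that fit into the admissible error classes $\mathscr{E}_1, \mathscr{E}_2$ of Definition~\ref{thisistheoperatorq}. Next I would choose a smooth cut-off $\tilde\chi(s)$ which vanishes near $s = S_0$ and equals $1$ for $s \geq S_0+1$; the weakly outgoing condition (Definition~\ref{weakoutgoingdef}) will place $\tilde u \doteq \tilde\chi u$ in the domain $H^2_0([S_0,\infty))$ of $Q$, and a direct computation gives
\[
(Q - E)\tilde u = \tilde\chi H + 2(\partial_s\tilde\chi)(\partial_s u) + (\partial_s^2\tilde\chi) u.
\]
By Lemma~\ref{basicspectQ} and Lemma~\ref{eigenshavespace}, $\sigma(Q) = [0,\infty) \cup \{\lambda_n\}_n$, and combining $-E \geq L^{-p}$ with~\eqref{protectedfromspectrum} gives $\mathrm{dist}(E, \sigma(Q)) \geq e^{-\frac{1}{2}L\log L}$. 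The self-adjoint functional calculus then delivers
\[
\|\tilde u\|_{L^2([S_0,\infty))} \leq e^{\frac{1}{2}L\log L}\,\|(Q-E)\tilde u\|_{L^2([S_0,\infty))}.
\]
A standard interior elliptic estimate applied to~\eqref{eq:mainVrhs} will control the commutator terms by $L^\alpha(\|u\|_{L^2([S_0-1,S_0+2])} + \|H\|_{L^2})$, producing
\[
\int_0^\infty |u|^2\, ds \lesssim L^\alpha e^{L\log L}\Big(\int \tfrac{r}{r-r_+}|H|^2\, ds + \int_{S_0-1}^{S_0+2}|u|^2\, ds\Big).
\]

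To close, I would then apply Lemma~\ref{agmonplusredest} with its parameter replaced by $2S_0 + 3$ in place of its $S_0$. Inserting the bound above into its $e^{-(4-\tilde\delta)L\log L}\int_0^\infty |u|^2$ term yields a net coefficient $e^{-(3-\tilde\delta)L\log L}$ in front of the localized $L^2$-norm $\int_{S_0-1}^{S_0+2}|u|^2$, which can be absorbed into the left-hand side once $L$ is sufficiently large. This proves~\eqref{spectralgoodcompactest}. The global bound~\eqref{spectralgoodestL2} then follows by combining~\eqref{spectralgoodcompactest} with the resolvent bound on $[S_0,\infty)$, upgraded from $L^2$ to the weighted $H^1$-norm via direct use of the equation; the overall loss $e^{L\log L}$ arises by squaring the resolvent constant $e^{\frac{1}{2}L\log L}$.

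The hardest part will be the rigorous verification that the Reissner--Nordstr\"om potential, expressed via the $s$-coordinate through~\eqref{s.r.diff}, genuinely fits in the admissible class $(\mathscr{E}_1, \mathscr{E}_2)$ of Definition~\ref{thisistheoperatorq}: the logarithmic corrections $s^{-q-1}\log s$ from the $s$-to-$r$ conversion must respect the tolerances $O(\log x/x^2)$ on $\mathscr{E}_1$ and $O(\log x/x)$ on $\mathscr{E}_2$. A secondary subtlety is that the Dirichlet condition at $s = S_0$ built into the domain of $Q$ perturbs the spectrum of the ``full'' formal operator slightly; one must check that for $S_0$ taken sufficiently large this perturbation is much smaller than $e^{-\frac{1}{2}L\log L}$, so that the spectral gap survives.
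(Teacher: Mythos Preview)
Your proposal is correct and follows essentially the same route as the paper: cut off near $s=S_0$, invoke the resolvent bound $\|(Q-E)^{-1}\|\leq e^{\frac{1}{2}\LL\log\LL}$ from self-adjointness and the spectral gap, then feed the resulting large-$s$ control into Lemma~\ref{agmonplusredest} (with its parameter shifted to $2S_0$) so that the $e^{-(4-\tilde\delta)\LL\log\LL}$ prefactor absorbs the localized term. One small point: your ``secondary subtlety'' about the Dirichlet condition perturbing the spectrum is a non-issue here, since the eigenvalues $\{\lambda_n\}$ in the hypothesis~\eqref{protectedfromspectrum} are \emph{already} those of $Q$ with $x_0=S_0$, so no comparison to a ``full'' operator is needed.
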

	\begin{proof}In view of~\eqref{protectedfromspectrum}, $\omega^2-m^2$ is at least distance $e^{-\frac{1}{2} \LL \log \LL}$ away from the spectrum of $Q$ (see Lemma~\ref{basicspectQ}). In particular, after multiplying $u$'s equation with a suitable cut-off $\chi(s)$ which is vanishes for $s \leq S_0$ and is identically $1$ for $s \geq S_0 + 1$, using this spectral information, and elliptic estimates, we immediately obtain the following bound:
		\begin{equation}\label{fromthegoodspectrum}
			\int_{S_0+1}^{\infty}\left[\left|\partial_su\right|^2 + \left|u\right|^2\right]\, ds \lesssim e^{ \LL \log \LL}\left[\int_{S_0}^{S_0+1}\left|u\right|^2\, ds + \int_{S_0}^{\infty}\left|H\right|^2\, ds\right].
		\end{equation}
		We may choose and fix $\delta > 0$ small and apply Lemma~\ref{agmonplusredest} replacing $S_0$ by $2S_0$ to obtain that 
		\begin{align}\label{theagmonthingfrombefore}
			&\int_{-\infty}^{2S_0}\left[\frac{r}{r-r_+}\left|\partial_su+i\omega u\right|^2 + \frac{1}{\left({\rm max}\left(0,-s\right)\right)^{1+\delta}+1}\left(\left|\partial_su\right|^2 + \left|u\right|^2\right)\right]\, ds \lesssim
			\\ \nonumber &\qquad e^{-\left(4-\tilde{\delta}\right) \LL \log \LL}\int_{0}^{+\infty}\left|u\right|^2\, ds + \LL^{2}\int_{-\infty}^{\infty}\frac{r}{r-r_+}\left|H\right|^2\, ds.
		\end{align}
		We may thus establish~\eqref{spectralgoodcompactest}  by combining~\eqref{theagmonthingfrombefore} and~\eqref{fromthegoodspectrum}.
		
		The estimates~\eqref{spectralgoodestL2} then follow from using these bounds to control the term $\int_{S_0}^{S_0+1}\left|u\right|^2\, ds$ on the right hand side of~\eqref{fromthegoodspectrum}.
	\end{proof}
	
	In the next lemma, we prove an analogue of Lemma~\ref{estimatesforunottoofarfromspec} for $\partial_{\omega}u$.
	\begin{lemma}\label{estimatesforderuomegaunottoofarfromspec}Let $u$ be an outgoing solution to~\eqref{eq:mainVrhs} so that $L^{-p} \leq m^2-\omega^2 \lesssim L^{-2}$. Let $S_0$ be any large positive constant and $\{\lambda_n\}_{n=1}^{N(p)}$ denote the collection of the eigenvalues which are greater than $ L^{-p}$ of the operator $Q$  from Definition~\ref{thisistheoperatorq} with $x_0 = S_0$. 
		
		Let $\delta > 0$ be a small constant.  Assume that  \begin{equation}\label{protectedfromspectrum2}
			{\rm inf_{n \in [1,N(p)]}}\left(m^2-\omega^2 + \lambda_n\right) \geq e^{-\frac{1}{2} \LL \log \LL}.
		\end{equation}

		Furthermore, assume that $H$ is compactly supported within $\{s \leq S_0\}$ for a large constant $S_0$. Define
		\[q \doteq e^{-i\omega s \chi(s)}\partial_{\omega}\left(e^{i\omega s \chi(s)}u\right),\]
		where $\chi(s)$ is a cut-off function which is $1$ for $s \leq 0$ and $0$ for $s \geq 1$.
		
		Then we have 
		\begin{align}\label{derivativeofuestimatewhenthespectralisgood}
			&\int_{-\infty}^{2S_0}\left[\frac{r}{r-r_+}\left|\partial_sq+i\omega q\right|^2 + \frac{1}{\left({\rm max}\left(0,-s\right)\right)^{1+\delta}+1}\left(\left|\partial_sq\right|^2 + \left|q\right|^2\right)\right]\, ds \lesssim  
			\\ \nonumber &\qquad \qquad \LL^4\int_{-\infty}^{S_0}\frac{r}{r-r_+}\left[\left|H\right|^2 + \left|\partial_{\omega}\left(e^{i\omega s \chi}H\right)\right|^2\right]\, ds.
		\end{align}
	\end{lemma}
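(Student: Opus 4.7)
The plan is to apply the strategy of Lemma~\ref{estimatesforunottoofarfromspec} to $q$ itself, using the equation~\eqref{qeqnfordiff} that $q$ satisfies. Three preliminaries are needed. First, $q$ is weakly outgoing in the sense of Definition~\ref{weakoutgoingdef}, since differentiating the outgoing condition~\eqref{outgoingboundary2} in $\omega$ preserves exponential decay at $s \to +\infty$ (with at most an extra polynomial loss in $s$), while the cutoff $\chi$ restores the plain outgoing form at $s \to -\infty$. Second, the source $\mathring{H}_q$ on the right-hand side of~\eqref{qeqnfordiff} decomposes into three contributions: (a) commutator terms from $\chi',\chi''$ supported in $\{0 \leq s \leq 1\}$; (b) the differentiated source $e^{-i\omega s\chi}\partial_\omega(e^{i\omega s\chi}H)$, supported in $\{s \leq S_0\}$ by hypothesis on $H$; and (c) the genuinely non-compact term $2\omega(\chi + s\chi'-1)u$, which equals $-2\omega u$ on $\{s \geq 1\}$. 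Third, Lemma~\ref{estimatesforunottoofarfromspec} already provides $L^2$ control of $u$ and $\partial_s u$ on $\{s \leq 2S_0\}$ with polynomial loss $L^2$.

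With these in hand, I would apply the Agmon plus red-shift package (Lemma~\ref{agmonplusredest}) to $q$, obtaining
\begin{equation*}
\int_{-\infty}^{2S_0}(\text{weighted norms of }q)\, ds \lesssim e^{-(4-\tilde\delta)L\log L}\int_0^\infty |q|^2\, ds + L^2 \int \frac{r}{r-r_+}|\mathring{H}_q|^2\, ds.
\end{equation*}
The contributions (a) and (b) to $\mathring{H}_q$ directly yield terms bounded by a $L^4$ multiple of $\int \frac{r}{r-r_+}(|H|^2+|\partial_\omega(e^{i\omega s \chi} H)|^2)\, ds$ after invoking~\eqref{spectralgoodcompactest} to absorb the $u,\partial_s u$ factors appearing in the commutators. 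For the source (c), on $\{s > S_0\}$ the restriction of $u$ solves the homogeneous equation $(Q-(m^2-\omega^2))u = 0$ and is pinned to the outgoing Jost branch; using the Sobolev pointwise bound $|u(S_0)| \lesssim L\|H\|$ derived from~\eqref{spectralgoodcompactest} and the exponential decay rate $\sqrt{m^2-\omega^2} \gtrsim L^{-p/2}$ of that Jost branch, $\|u\|_{L^2(s > S_0)}$ is bounded by a polynomial in $L$ whose degree depends on $p$.

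To close the argument, I would invoke the spectral gap~\eqref{protectedfromspectrum2}: on $\{s \geq S_0+1\}$ the operator $(Q-(m^2-\omega^2))^{-1}$ is bounded in $L^2$ with norm $\lesssim e^{L\log L/2}$, so the far-region mass of $q$ satisfies
\begin{equation*}
\int_{S_0+1}^\infty |q|^2\, ds \lesssim e^{L \log L}\Bigl[\int_{S_0}^{S_0+1}|q|^2\, ds + \int_{S_0}^\infty |u|^2\, ds\Bigr].
\end{equation*}
Feeding this into the Agmon estimate, the combined factor $e^{-(4-\tilde\delta)L\log L}\cdot e^{L\log L} = e^{-(3-\tilde\delta)L\log L}$ is exponentially small, so the boundary term $\int_{S_0}^{S_0+1}|q|^2$ is absorbed into the left-hand side of the combined estimate, while the polynomial-in-$L$ contribution of $\int_{S_0}^\infty |u|^2$ yields the claimed right-hand side.

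The main obstacle is the handling of the non-compact source $-2\omega u$ in $q$'s equation: a naive estimate via the global $L^2$ bound~\eqref{spectralgoodestL2} for $u$ would cost $e^{L\log L}$ and ruin the final estimate. One must instead exploit the outgoing ODE structure of $u$ on $\{s > S_0\}$ to guarantee only polynomial-in-$L$ growth of $\|u\|_{L^2(s > S_0)}$, so that the Agmon gain genuinely dominates the spectral loss from inverting $Q-(m^2-\omega^2)$ near its discrete spectrum.
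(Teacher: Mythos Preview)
Your overall strategy matches the paper's: apply the compact-region estimate \eqref{spectralgoodcompactest} and the global estimate \eqref{spectralgoodestL2} of Lemma~\ref{estimatesforunottoofarfromspec} to the equation for $q$ from Lemma~\ref{qeqnusefultodiffu}, and use the same lemma applied to $u$ to control the $u$-dependent pieces of the source. You are also right that the term $-2\omega u$ on $\{s \geq 1\}$ is the only genuinely non-compact contribution and that a black-box appeal to \eqref{spectralgoodestL2} for $\int_{S_0}^\infty |u|^2$ costs $e^{\LL\log\LL}$; the paper's one-line proof does not make explicit how this loss is avoided.

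Your proposed fix via the outgoing Jost branch is correct in spirit but the quantitative reasoning has a gap. The rate $\sqrt{m^2-\omega^2}$ you quote is only the \emph{asymptotic} decay beyond the far turning point $s_{III}\sim (m^2-\omega^2)^{-1}$; on $[S_0,s_{II}]$ (the classically forbidden region) the outgoing solution decays at the much faster WKB rate $\sqrt{V}\sim L/s$, while on $[s_{II},s_{III}]$ it oscillates without decay. Using your rate yields $\|u\|_{L^2(s>S_0)}^2 \lesssim L^{2+p/2}\int|H|^2$, hence a final constant $L^{4+p/2}$ depending on $p$, not the claimed $\LL^4$. The correct version of your argument uses $|u(s)| \lesssim |u(S_0)|\,e^{-\int_{S_0}^s\sqrt{V}}$ on $[S_0,s_{II}]$, giving $\int_{S_0}^{s_{II}}|u|^2 \lesssim L^{-1}|u(S_0)|^2 \lesssim L\int|H|^2$, while the oscillatory and far regions contribute only $e^{-c\LL\log\LL}$ once one tracks the size of the constant in $u=c\,u_I$. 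An alternative route that stays entirely within the Section~\ref{regimeB1.section} multiplier framework---and is presumably what the paper's one-liner intends---is to unpack the proof of Lemma~\ref{agmonplusredest}: the Agmon weight $e^{2\phi}$ on the source at $s\in[2S_0,\tilde{c}\LL^2]$ behaves like $e^{2\phi(S_0)}(S_0/s)^{2(1-\delta)L}$, so the $-2\omega u$ contribution to the $q$-Agmon identity pairs directly against the \emph{weighted} integral $\int e^{2\phi}(V-(\phi')^2)|u|^2$, which the $u$-Agmon identity already controls by $e^{2\phi(S_0)}L^{-2}\int|H|^2$; the exponentials then cancel and one recovers $\LL^4$ without any Jost-solution input.
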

	\begin{proof} We apply the estimates~\eqref{spectralgoodcompactest} and~\eqref{spectralgoodestL2} to the equation for $q$ computed in Lemma~\ref{qeqnusefultodiffu} and use the estimates~\eqref{spectralgoodcompactest} and~\eqref{spectralgoodestL2} to control the terms involving $u$.
	\end{proof}
	
	In the next lemma we will consider the case when $\omega^2-m^2$ is not exactly equal to an eigenvalue of $Q$ but is too close to an eigenvalue to apply Lemmas~\ref{estimatesforunottoofarfromspec} and~\ref{estimatesforderuomegaunottoofarfromspec}.
	\begin{lemma}\label{nottooclosebutkindofclose}Let $u$ be a weakly outgoing solution to~\eqref{eq:mainVrhs} so that $ L^{-p} \leq m^2-\omega^2 \lesssim L^{-2}$. Let $S_0$ be any large positive constant and $\{\lambda_n\}_{n=1}^{N(p)}$ denote the collection of the eigenvalues which are greater than $L^{-p}$ of the operator $Q$  from Definition~\ref{thisistheoperatorq} with $x_0 = S_0$. Further assume that $H$ is compactly supported in the region $\left\{r \leq S_0\right\}$.
		
		Let $D_0$ be as in Lemma~\ref{thisisgeneralforubutbadconstant} and assume that there exists an eigenvalue $\lambda_n$ so that 
		\begin{equation}\label{almosteig}
			0 < \underbrace{\left|m^2-\omega^2 + \lambda_n\right|}_{\doteq \mathscr{E}} \leq e^{-\frac{1}{2} \LL \log \LL}.
		\end{equation}
		Then there exists a constant $D > 0$, depending on $S$, so that the following estimates hold:
		\begin{enumerate}
			
			\item We have
			\begin{align}\label{sortofclosetoeigenvalue2}
				&\int_{-\infty}^{\infty}\left[\frac{r}{r-r_+}\left|\partial_su+i\omega u\right|^2 + \frac{1}{1 + \left({\rm max}\left(-s,0\right)\right)^{1+\delta}}\left(\left|\partial_su\right|^2 + \left|u\right|^2\right)\right]\, ds \lesssim 
				\\ \nonumber &\qquad \qquad \mathscr{E}^{-2}e^{DL}\int_{-\infty}^{S_0}\frac{r}{r-r_+}\left|H\right|^2\, ds.
			\end{align}
			
			\item Let $\tilde{\delta} > 0$ be a small constant. Then we have that 
			\begin{align}\label{sortofclosetoeigenvalue4}
				&\int_{-\infty}^{2S_0}\left[\frac{r}{r-r_+}\left|\partial_su+i\omega u\right|^2 + \frac{1}{1 + \left({\rm max}\left(-s,0\right)\right)^{1+\delta}}\left(\left|\partial_su\right|^2 + \left|u\right|^2\right)\right]\, ds \lesssim 
				\\ \nonumber & \left(\LL^{2} + e^{-\left(4-\tilde{\delta}\right) \LL \log \LL}\mathscr{E}^{-2}\right)\int_{-\infty}^{S_0}\frac{r}{r-r_+}\left|H\right|^2\, ds.
			\end{align}
			
		\end{enumerate}
		
	\end{lemma}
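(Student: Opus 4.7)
The plan is to exploit that $-(m^2-\omega^2)$ lies at distance exactly $\mathscr{E}$ from the nearest eigenvalue $\lambda_n$ of the self-adjoint operator $Q$ from Definition~\ref{thisistheoperatorq}, so that the resolvent $(Q+(m^2-\omega^2))^{-1}$ exists with operator norm at most $\mathscr{E}^{-1}$ on $L^2([S_0,\infty))$. Indeed, the distance from $-(m^2-\omega^2)$ to the essential spectrum $[0,\infty)$ equals $m^2-\omega^2 \geq L^{-p} \gg \mathscr{E}$; and for $n > N(p)$ one has $-\lambda_n < L^{-p}$, so such $\lambda_n$ lie strictly farther from $-(m^2-\omega^2) < -L^{-p}$ than the nearest $\lambda_n$ with $n\leq N(p)$. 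Using the flexibility in the error terms $\mathscr{E}_1,\mathscr{E}_2$ in Definition~\ref{thisistheoperatorq} (and the expansion of $V+\omega^2-m^2$ in the $s$-coordinate via~\eqref{s.r.diff}), I may arrange that $\tilde V$ equals $V + \omega^2 - m^2$ identically on $[S_0,\infty)$, so that on $\{r \geq S_0\}$ — where $H$ vanishes by hypothesis — the outgoing solution $u$ to~\eqref{eq:mainVrhs} satisfies $(Q + (m^2-\omega^2))u = 0$.

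For~\eqref{sortofclosetoeigenvalue2}, I would introduce a smooth cutoff $\chi(s)$ vanishing on $s \leq S_0$ and identically one on $s \geq S_0 + 1$, and set $\tilde u \doteq \chi u$. Since $u$ is weakly outgoing (hence $L^2$ at infinity), $\tilde u \in H^2_0((S_0,\infty))$, the domain of $Q$; a direct computation gives
\[(Q+(m^2-\omega^2))\tilde u = [-\partial_s^2,\chi]u = -(\partial_s^2\chi)u - 2(\partial_s\chi)\partial_su,\]
which is supported in $[S_0,S_0+1]$ with $L^2$-norm bounded by $\|u\|_{H^1([S_0,S_0+1])}$. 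The spectral theorem for $Q$ then yields $\|\tilde u\|_{L^2}^2 \lesssim \mathscr{E}^{-2}\|u\|_{H^1([S_0,S_0+1])}^2$, and~\eqref{Rboundedonlyexplossforu2} of Lemma~\ref{thisisgeneralforubutbadconstant} bounds the right hand side by $\mathscr{E}^{-2}e^{D_0 L}\int_{-\infty}^{S_0}\tfrac{r}{r-r_+}|H|^2\, ds$. The derivative $\|\partial_s\tilde u\|_{L^2}$ is recovered by testing the equation against $\bar{\tilde u}$ and integrating by parts, picking up a harmless $L^{\alpha}$ factor which is absorbed into $e^{DL}$. The compact $r \leq S_0+1$ portion of the left-hand side of~\eqref{sortofclosetoeigenvalue2} is handled directly by~\eqref{Rboundedonlyexplossforu2}, and the near-horizon $\frac{r}{r-r_+}$ weight in the $|\partial_s u + i\omega u|^2$ term is absorbed using Lemma~\ref{redshiftode}. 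This yields~\eqref{sortofclosetoeigenvalue2}.

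The estimate~\eqref{sortofclosetoeigenvalue4} then follows by applying Lemma~\ref{agmonplusredest}, whose conclusion bounds the left hand side by $e^{-(4-\tilde\delta)L\log L}\int_0^{\infty}|u|^2\, ds + L^2\int \frac{r}{r-r_+}|H|^2\, ds$. Substituting the already-established bound~\eqref{sortofclosetoeigenvalue2} into the first term produces the contribution $e^{-(4-\tilde\delta)L\log L}\mathscr{E}^{-2}e^{DL}\int|H|^2\, ds$; since $DL \ll L\log L$ for $L$ large, the $e^{DL}$ factor is absorbable by replacing $\tilde\delta$ with a slightly smaller positive constant, which gives~\eqref{sortofclosetoeigenvalue4}.

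The principal obstacle is making the resolvent argument fully rigorous: one must check that the prescribed shape of $\mathscr{E}_1,\mathscr{E}_2$ in Definition~\ref{thisistheoperatorq} indeed admits a choice that identifies $\tilde V$ with $V+\omega^2-m^2$ exactly on $[S_0,\infty)$ — so that no mismatched potential term survives in the commutator $[Q+(m^2-\omega^2),\chi]u$ — and that the distance-to-spectrum calculation truly produces $\mathscr{E}^{-1}$ rather than a worse bound inherited from eigenvalues with $n > N(p)$. Importantly, neither the Agmon-type localization lemmas (Lemmas~\ref{expdecayeigeninclassicalforbid}--\ref{expdecayeigeninclassicalforbid2}) nor a detailed ansatz for the near-resonant component of $u$ is needed at this stage: the soft resolvent bound, combined with the compact-region estimate from Lemma~\ref{thisisgeneralforubutbadconstant}, suffices to produce the somewhat crude $e^{DL}$ and $e^{-(4-\tilde\delta)L\log L}\mathscr{E}^{-2}$ constants claimed.
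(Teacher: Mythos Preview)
Your proposal is correct and follows essentially the same route as the paper: a cutoff of $u$ to $\{s\geq S_0\}$, the resolvent bound $\|(Q+(m^2-\omega^2))^{-1}\|\leq \mathscr{E}^{-1}$ to get the analogue of~\eqref{fromthegoodspectrum} with $\mathscr{E}^{-2}$ in place of $e^{\LL\log\LL}$, then~\eqref{Rboundedonlyexplossforu2} for the boundary term, and finally Lemma~\ref{agmonplusredest} for part~2. Your worries at the end are both resolvable: the expansion~\eqref{s.r.diff} does make $V+\omega^2-m^2$ fit the form of $\tilde V$ in Definition~\ref{thisistheoperatorq} with admissible $\mathscr{E}_1,\mathscr{E}_2$, and the eigenvalue spacing from Lemma~\ref{eigenshavespace} (which gives $|\lambda_{i+1}-\lambda_i|\gtrsim(-\lambda_i)^{3/2}\gg e^{-\frac{1}{2}\LL\log\LL}$) ensures the $\lambda_n$ in~\eqref{almosteig} is the unique nearest point of the spectrum --- the paper invokes Lemmas~\ref{basicspectQ} and~\ref{eigenshavespace} for exactly this, so your remark that those localization lemmas are not needed is slightly too strong.
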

	\begin{proof}In view of~\eqref{almosteig}, it follows from Lemmas~\ref{basicspectQ} and~\ref{eigenshavespace} that
		\[{\rm dist}\left(\omega^2-m^2,{\rm spec}\left(Q\right)\right) \gtrsim \left|m^2-\omega^2 + \lambda_n\right|.\]
		In particular, we may argue as in the derivation of~\eqref{fromthegoodspectrum} to obtain that
		\begin{align}\label{fromthesortofokspectrum}
			&\int_{S_0+1}^{\infty}\left[\left|\partial_su\right|^2 + \left|u\right|^2\right]\, ds \lesssim 
			\mathscr{E}^{-2}\int_{S_0}^{S_0+1}\left|u\right|^2\, ds.
		\end{align}
		We then combine with~\eqref{Rboundedonlyexplossforu2} to obtain~\eqref{sortofclosetoeigenvalue2}.
		
		To obtain~\eqref{sortofclosetoeigenvalue4}, we apply Lemma~\ref{agmonplusredest} and use~\eqref{sortofclosetoeigenvalue2} to control the first term on the right hand side.

	\end{proof}
	
	In the next lemma, we establish an analogue of Lemma~\ref{nottooclosebutkindofclose} for $\partial_{\omega}u$.
	\begin{lemma}\label{nottooclosebutkindofcloseforderu}Let $u$ be an outgoing solution to~\eqref{eq:mainVrhs} so that $L^{-p} \leq m^2-\omega^2 \lesssim L^{-2}$.. Let $S_0$ be any large positive constant and $\{\lambda_n\}_{n=1}^{N(p)}$ denote the collection of the eigenvalues which are greater than $ L^{-p}$ of the operator $Q$  from Definition~\ref{thisistheoperatorq} with $x_0 = S_0$.
		
		Assume that there exists an eigenvalue $\lambda_n$ so that 
		\begin{equation}\label{almosteig2}
			0 < \underbrace{\left|m^2-\omega^2 + \lambda_n\right|}_{\doteq \mathscr{E}} \leq e^{-\frac{1}{4} L \log L}.
		\end{equation}
		
		Furthermore, assume that $H$ is compactly supported within $\{s \leq S_0\}$. Define
		\[q \doteq e^{-i\omega s \chi(s)}\partial_{\omega}\left(e^{i\omega s \chi(s)}u\right),\]
		where $\chi(s)$ is a cut-off function which is $1$ for $s \leq 0$ and $0$ for $s \geq 1$.
		
		Let $\tilde{\delta}, \delta > 0$ be small constants, then  we have 
		\begin{align}\label{nottoclosetoaneigforderu}
			&\int_{-\infty}^{2S_0}\left[\frac{r}{r-r_+}\left|\partial_sq+i\omega q\right|^2 + \frac{1}{1 + \left({\rm max}\left(-s,0\right)\right)^{1+\delta}}\left(\left|\partial_sq\right|^2 + \left|q\right|^2\right)\right]\, ds \lesssim_{\tilde{\delta},\delta, S_0} 
			\\ \nonumber &\left(\mathscr{E}^{-2}e^{\tilde{\delta}\LL \log\LL} + e^{-\left(4-\tilde{\delta}\right) \LL \log \LL}\mathscr{E}^{-4}\right)\int_{-\infty}^{S_0}\frac{r}{r-r_+}\left[\left|H\right|^2 + \left|\partial_{\omega}\left(e^{i\omega s \chi(s)}H\right)\right|^2\right]\, ds. 
		\end{align}
		
	\end{lemma}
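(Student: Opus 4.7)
The plan is to mirror the proof of Lemma~\ref{nottooclosebutkindofclose}, now applied to the equation satisfied by $q$, with its source estimated via the $u$-bounds of that same lemma. As a starting point, invoke Lemma~\ref{qeqnusefultodiffu} to obtain $\partial_s^2 q - Vq = \tilde H$ with $\tilde H$ as in~\eqref{qeqnfordiff}. The outgoing condition~\eqref{outgoingboundary2} on $u$, together with the fact that $\chi \equiv 1$ for $s\le -1$ and $\chi\equiv 0$ for $s\ge 0$, makes $q$ weakly outgoing in the sense of Definition~\ref{weakoutgoingdef}: near $s=-\infty$ the factor $e^{i\omega s\chi}$ cancels the oscillation in $u$ so $\partial_\omega$ is regular, and near $s=+\infty$ the exponential-in-$r$ decay of $u$ is inherited by $q$.

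Next, I would control the weighted $L^2$ norm of $\tilde H$ by the bounds for $u$ from Lemma~\ref{nottooclosebutkindofclose}. The source terms split naturally: for $s\le -1$ only $2i(\partial_s u + i\omega u)$ survives, covered by the red-shift weighted piece of~\eqref{sortofclosetoeigenvalue2}; in the compact region $s\in[-1,0]$ the cut-off terms $i(2\chi'+s\chi'')u$ and $2i(\chi+s\chi')(\partial_s u + i\omega u)$ are bounded by the same estimate; for $s\ge 0$ only $-2\omega u$ survives, controlled since the weight on the left of~\eqref{sortofclosetoeigenvalue2} equals one there, yielding $\|u\|_{L^2(\{s\ge 0\})}^2\lesssim \mathscr{E}^{-2}e^{D'L}\|H\|^2$. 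Overall,
\[ \int \tfrac{r}{r-r_+}|\tilde H|^2\, ds \lesssim \mathscr{E}^{-2}e^{D'L}\int_{-\infty}^{S_0}\tfrac{r}{r-r_+}\bigl[|H|^2+|\partial_\omega(e^{i\omega s\chi}H)|^2\bigr]\, ds. \]

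Finally, I would run on $q$'s equation the three-step strategy of the proof of Lemma~\ref{nottooclosebutkindofclose}. The spectral-gap bound from Lemma~\ref{eigenshavespace} combined with $\mathscr{E}\le e^{-L\log L/4}$ yields an analogue of~\eqref{fromthesortofokspectrum}: $\int_{S_0+1}^\infty[|\partial_s q|^2+|q|^2]\,ds \lesssim \mathscr{E}^{-2}\|q\|_{L^2([S_0,S_0+1])}^2 + \mathscr{E}^{-2}\|\tilde H\|^2$. To close, apply Lemma~\ref{agmonplusredest} to $q$'s equation at scale $2S_0$: the Agmon-weighted exponential produces the crucial suppression $e^{-(4-\tilde\delta)L\log L}$ in front of the interior $L^2$ norm of $q$, which after composing with the $\mathscr{E}^{-2}$ from the step above and the $\mathscr{E}^{-2}e^{D'L}$ controlling $\tilde H$ yields the $e^{-(4-\tilde\delta)L\log L}\mathscr{E}^{-4}$ term of~\eqref{nottoclosetoaneigforderu}. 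The complementary $\mathscr{E}^{-2}e^{\tilde\delta L\log L}$ term is obtained by routing the $u$-source instead through the compactly-supported estimate~\eqref{sortofclosetoeigenvalue4}, whose $L^2$ prefactor is absorbed into the slack $e^{\tilde\delta L\log L}$, so that only one factor of $\mathscr{E}^{-2}$ appears.

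The main obstacle is precisely this constant-tracking: routing each of the three $u$-subterms of $\tilde H$ through the optimal choice among~\eqref{sortofclosetoeigenvalue2} and~\eqref{sortofclosetoeigenvalue4}, and composing carefully with Agmon plus red-shift, so that the worst factor $\mathscr{E}^{-4}$ is paired only with $e^{-(4-\tilde\delta)L\log L}$ rather than with a power of $L$. The weaker hypothesis $\mathscr{E}\le e^{-L\log L/4}$ (versus $e^{-L\log L/2}$ in Lemma~\ref{nottooclosebutkindofclose}) is exactly what leaves enough slack for the bookkeeping to close with the stated constants.
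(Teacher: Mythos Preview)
Your overall architecture matches the paper's: derive the equation for $q$ via Lemma~\ref{qeqnusefultodiffu}, use the spectral-gap estimate to control $q$ at infinity in terms of boundary data and the source, then feed the result into Lemma~\ref{agmonplusredest}. However, there is a genuine gap in how you close the loop. The spectral estimate gives
\[
\int_{S_0+1}^\infty |q|^2\,ds \;\lesssim\; \mathscr{E}^{-2}\|q\|^2_{L^2([S_0,S_0+1])} + \mathscr{E}^{-2}\int_{S_0}^\infty |u|^2\,ds,
\]
and after Agmon the boundary term $\|q\|^2_{L^2([S_0,S_0+1])}$ reappears on the left-hand side with coefficient $e^{-(4-\tilde\delta)\LL\log\LL}\mathscr{E}^{-2}$. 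This can be absorbed only when $\mathscr{E} \gtrsim e^{-2\LL\log\LL}$, but the hypothesis allows $\mathscr{E}$ to be arbitrarily close to $0$ (the \emph{upper} bound $\mathscr{E}\le e^{-\LL\log\LL/4}$ gives no help here; your remark about ``slack'' has the inequality backwards). Routing the $u$-pieces of $\tilde H$ through~\eqref{sortofclosetoeigenvalue2} and~\eqref{sortofclosetoeigenvalue4} is correct bookkeeping for the \emph{source}, but it does not furnish an $\mathscr{E}$-independent bound on the boundary term for $q$ itself.

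The paper breaks this circularity by invoking the independent a priori bound~\eqref{nocompactsupportsolosellogl22} from Corollary~\ref{thisisgeneralforderofubutbadconstant2}, which gives $\|q\|^2_{L^2([S_0,S_0+1])} \lesssim e^{(4+\hat\delta)\LL\log\LL}$ directly in terms of the original compactly supported $H$, with a constant \emph{independent of} $\mathscr{E}$. After multiplying by $\mathscr{E}^{-2}$ (spectral) and by $e^{-(4-\tilde\delta)\LL\log\LL}$ (Agmon), this yields exactly the $\mathscr{E}^{-2}e^{\tilde\delta\LL\log\LL}$ term; the $e^{-(4-\tilde\delta)\LL\log\LL}\mathscr{E}^{-4}$ term then comes from the $\mathscr{E}^{-2}\int_{S_0}^\infty |u|^2$ piece, controlled by~\eqref{sortofclosetoeigenvalue2}. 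Note that Corollary~\ref{thisisgeneralforderofubutbadconstant2} is \emph{not} the ``three-step strategy of Lemma~\ref{nottooclosebutkindofclose}'' applied to $q$'s equation with source $\tilde H$ --- it is a separate input, built from the crude exponential-multiplier bound of Lemma~\ref{thisisgeneralforderofubutbadconstant} plus Agmon, and its role is precisely to supply the missing $\mathscr{E}$-independent control of $q$ on the compact interval.
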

	\begin{proof}We start with the equation for $q$ computed in Lemma~\ref{qeqnusefultodiffu} and then proceed as in the proof of Lemma~\ref{nottooclosebutkindofclose}. The analogue of~\eqref{fromthesortofokspectrum} is 
		\begin{align*}
			\int_{S_0+1}^{\infty}\left[\left|\partial_sq\right|^2 + \left|q\right|^2\right]\, ds \lesssim \mathscr{E}^{-2}\int_{S_0}^{S_0+1}\left|q\right|^2\, ds + \mathscr{E}^{-2}\int_{S_0}^{\infty}\left|u\right|^2\, ds.
		\end{align*}
		
		Then we apply~\eqref{nocompactsupportsolosellogl22} and~\eqref{sortofclosetoeigenvalue2} to estimate the right hand side  
		and obtain, for any $\hat{\delta} > 0$,
		\begin{align*}
			&\int_{S_0+1}^{\infty}\left[\left|\partial_sq\right|^2 + \left|q\right|^2\right]\, ds \lesssim_{\hat{\delta}} \mathscr{E}^{-4}e^{DL}\int_{-\infty}^{S_0}\frac{r}{r-r_+}\left|H\right|^2\, ds
			\\ \nonumber &\qquad + \mathscr{E}^{-2}e^{\left(4+\hat{\delta}\right) \LL \log \LL}\int_{-\infty}^{S_0}\frac{r}{r-r_+}\left[\left|H\right|^2+\left|\partial_{\omega}\left(e^{i\omega s\chi }H\right)\right|^2\right]\, ds.
		\end{align*}
		
		Finally, we combine this estimate with an application of Lemma~\ref{agmonplusredest} to the equation for $q$ computed in Lemma~\ref{qeqnusefultodiffu} and obtain~\eqref{nottoclosetoaneigforderu}.  
		
	\end{proof}
	\subsection{Putting everything together}
	Finally, we are ready to prove Proposition~\ref{themainpropinthebigthanL2}:
	\begin{proof} It is useful to keep in the mind the fact that for any function $f(s)$ and $1 \ll S < \infty$, we have
		\begin{equation}\label{linftyfroml2}
			\sup_{s \leq S}\left|f(s)\right|^2 \lesssim_S \int_{-\infty}^S(r-r_+)^{-1}\left|\partial_sf + i\omega f\right|^2\, ds + \int_0^1\left|f\right|^2\, ds.
		\end{equation}
		
		The estimate~\ref{bigL1} follows from~\eqref{spectralgoodcompactest} and~\eqref{linftyfroml2}. The estimate~\ref{bigL2} follows from~\eqref{derivativeofuestimatewhenthespectralisgood} and~\eqref{linftyfroml2}. The estimate~\ref{bigL3} follows from~\eqref{Rboundedonlyexplossforu2} and~\eqref{linftyfroml2}. The estimate~\ref{bigL4} follows from~\eqref{nocompactsupportsolosellogl22} and~\eqref{linftyfroml2}. Finally,~\ref{bigL5} follows from~\eqref{nottoclosetoaneigforderu} and~\eqref{linftyfroml2}.    
	\end{proof}
	
	\section{The turning point regimes in the low frequencies}\label{regimeB2.section}
	We now address the regime of low frequencies $0<\omega<m$ where $\omega$ is very close to $m$. These set of frequencies are the most singular contributors to the tail and require a precise control of the Fourier transform of the solution $u(\omega,s)$ capturing discrete sets of more singular frequencies. To do so, we  carry out a	 WKB-type analysis of the ODE \eqref{eq:mainV}.	Throughout the proof, we will assume $m-L^{-p}<\omega<m$ for some large $p> 0$  (note in particular $\omega>0$, chosen for convenience). Define		
	\begin{equation}\label{k.def}
		k^{-2} = \frac{m^2-\omega^2}{(Mm^2)^2}.
	\end{equation} and \begin{equation}\label{L.def}
		\LL = \sqrt{L(L+1)}.
	\end{equation}

	We now give further details before stating the main result of this section. Under our notations we have \begin{equation}\label{V.schematic}
		\VV(s) = (Mm^2)^2 k^{-2} - \frac{2Mm^2}{r}+ \frac{\LL^2}{r^2} +  \underbrace{\Big(-\frac{2M}{r}+\frac{\DD^2}{r^2}\Big)\frac{\LL^2}{r^2}}_{=O(\LL^2 r^{-3})}+ \underbrace{\frac{\DD^2 m^2}{r^2}+\Big(1-\frac{2M}{r}+\frac{\DD^2}{r^2}\Big)\Big(\frac{2rM -2\DD^2}{r^4}\Big)}_{=O(r^{-2})}
	\end{equation}
	We also introduce
	
	\begin{equation}
		\alpha= \frac{\LL^2}{k^2}.
	\end{equation}

	Note that the solution to $(Mm^2)^2 k^{-2} - \frac{2Mm^2}{s}+ \frac{\LL^2}{s^2}=0$ are $s=	s^{\pm}(\omega,\LL)$ with \begin{equation}\label{spm.def}
		s^{\pm}(\omega,\LL) = \frac{k^2}{Mm^2}  \left( 1 \pm \sqrt{1-\alpha}\right)=  \frac{\LL^2}{Mm^2}  \frac{ 1 \pm \sqrt{1-\alpha}}{\alpha} .
	\end{equation}
	
	First, let's define $\tilde{s}_I < s_{II} <s_{III}$ to be the three distinct values of the turning points, i.e.\ \begin{align*}
		&	V(\tilde{s}_I)=0\ \&\	V(s) < 0 \text{ for } s< \tilde{s}_I,\\ &	V(s_{II})=0\ \&\	V(s) > 0 \text{ for } s_{I}<s< s_{II},\\ &	V(s_{III})=0\ \&\	V(s) < 0 \text{ for } s_{II}<s<s_{III},\ V(s)>0  \text{ for } s>s_{III}.
	\end{align*}
	Now, as it turns out, what truly matters for the first turning point, is the zero of  the modified potential $V-\frac{\K^2}{4}$, which is explained by the exponential decay of the potential $V$ as $s\rightarrow -\infty$ (see \cite{Schlag_exp} for more details). We thus define $s_I> \tilde{s}_I$ as the smallest $s<0$ for which \begin{equation*}
		V(s_I) - \frac{\K^2}{4}=0.
	\end{equation*}
	
	Defining $\tilde{\omega}:= \sqrt{\omega^2+  \frac{\K^2}{4} } > \omega>0 $, we have, in particular \begin{align}
		& \label{sI}	s_I(\LL,\omega) = \frac{1}{\K(M,e)} \log( \frac{r_+(M,e) \tilde{\omega}}{\LL})  + O (\LL^{-2})<0,\\ & 	s_{II}(\LL,\omega)= s_-(\LL,\omega)+O(1)=\frac{\LL^2}{Mm^2}\  \underbrace{\frac{1-\sqrt{1-\alpha}}{\alpha}}_{\in  (\frac{1}{2},1 )}+ O(1),\label{sII}\\&   	s_{III}(\LL,\omega)= s_+(\LL,\omega)+O(1)=\frac{k^2}{Mm^2} \left(1+\sqrt{1-\alpha}\right)+O(1).  \label{sIII}
	\end{align}

	We will introduce two Jost solutions $u_H$ and $u_I$ of \eqref{eq:mainV}. The first Jost solution 		$u_H$  is ``regular at the event horizon $\{s=-\infty\}$'', and obeys the weakly outgoing condition \eqref{weakoutgoingdef}. More precisely $u_H$ is defined through the following asymptotics as $s\rightarrow -\infty$:
	\begin{equation}\label{uH.def} u_H(\omega,s) = e^{-i \omega s} [1+o(1)]. \end{equation}  The second Jost solution $u_I$ is ``regular at infinity $\{s=+\infty\}$ and obeys decaying-exponential asymptotics.   More precisely $u_I$ is defined through the following asymptotics as $s\rightarrow +\infty$: \begin{equation}\label{uI.asymp}
		u_I(\omega,s) \sim \frac{k^{-2/3}}{2\pi^{1/2} (Mm^2)^{1/2}}\ |V|^{-1/4}(s) \exp( -\int_{s_{III}}^{s} |V|^{1/2}(s') ds').
	\end{equation} (See \cite{KGSchw1}, Section 4 for a justification of the fact that $u_I$ is well-defined).
	
	Our strategy will be to use  Green's formula (see \cite{KGSchw1}) to connect  $u$, the actual solution to \eqref{eq:mainVrhs} to the Jost solutions $u_H$ and $u_I$: \begin{equation}
		u(\omega,s) =\frac{ u_H(\omega,s) \int_s^{+\infty} u_I(s') H(\omega,s') ds'}{W(u_I,u_H)}+ \frac{ u_I(\omega,s) \int_{-\infty}^{s} u_H(s') H(\omega,s') ds'}{W(u_I,u_H)}.
	\end{equation}
	
	We will now state the main result in this section, which  involves $w_{1,\pm}$, solutions of \eqref{eq:mainV} which are well characterized in a compact $s$-region. $w_{1,\pm}$ should be intuitively though of as generalizations of the elementary solutions $\{r^{L+1},\ r^{-L}\}$ of the wave equation on Minkowski space $\frac{d^2 u }{dr^2}= \frac{\LL^2}{r^2} u $. We remind the reader that we will assume compact support of $H$ in $r$, but we allow $H$ to be supported near the event horizon. \begin{thm}\label{TPsection.mainprop}  We assume that $H$ from \eqref{eq:mainVrhs} is supported in $\{s \leq s_H\}$ for some $s_H>1$, and recall also that we have assumed that $ m-\LL^{-p} <\omega< m$, for some large, but fixed $p>2$.
		\begin{enumerate}
			
			\item (Construction of the solutions $w_{1,\pm}$ in the compact $s$ region) 
			Let $A\gg 1$ and $0<\ep\ll1$.
			There exists $w_{1,\pm}(\omega,s)$, solutions of \eqref{eq:mainV} obeys the following estimates for all $s_I(\LL) +A \leq s \leq  \LL^{2-\ep}$:
			
			\begin{equation}\begin{split}
					&	|w_{1,\pm}|(\omega,s),\	\LL^{-3}	|\rd_{\omega} w_{1,\pm}|(\omega,s) \ls \LL^{-1/2} \left( e^{-\frac{\kappa_+ s}{2} }  1_{s\leq 1}+ s^{1/2} 1_{s\geq 1}\right) \exp(\pm \int_{s_I}^{s}|V|^{1/2}(s') ds'),\\ &  \int_{s_I}^{s}|V|^{1/2}(s') ds' \lesssim \LL \cdot [   D_H \cdot e^{\kappa_+ s}  1_{s\leq 1}+ \log(1+s) 1_{s\geq 1}], \end{split}
			\end{equation} for some constant $D_H>0$ independent of $\LL$ and $\omega$.
			
			\item (Construction of $\tilde{K}_{\pm}$ functions near the event horizon and extension of $w_{1,\pm}$) There exists  $\tilde{K}_{\pm}$  solutions of \eqref{eq:mainV} for all $s \leq 0$ and obeying the following estimates in the restricted range $s \leq s_I+A$:
			
			\begin{align*}
				|\tilde{K}_{\pm}|(\omega,s)\ls 1,\ |\rd_{\omega}(e^{-i\omega s}[\tilde{K}_+ + i \tilde{K}_-])|(\omega,s),\ |\rd_{\omega}(e^{i\omega s}[\tilde{K}_+ - i \tilde{K}_-])|(\omega,s)  \ls \log(\LL).
			\end{align*}
			Then, $w_{1,\pm}$  can be connected to $\tilde{K}_{\pm}$,  as such: for any $s_I + A \leq s \leq 0$:
			
			\begin{align}\label{w1pm.connectionH}
				&	w_{1,\pm}(\omega,s)= \beta^{H}_{+,\pm}(\omega) \tilde{K}_+(\omega,s)+ \beta^{H}_{-,\pm}(\omega) \tilde{K}_-(\omega,s),\\ & \beta_{\pm \pm}^H(\omega),\ \rd_{\omega}\beta_{\pm \pm}^H(\omega) = O(1),\ \beta_{-+}^H(\omega),\ \rd_{\omega}\beta_{-+}^H(\omega)  = O(1),\\ & \beta^H_{+ -} (\omega)= O (\exp(-2D_H \LL)),\ \rd_{\omega}\beta^H_{+ -} (\omega)=O(1).
			\end{align}	 	In what follows, we adopt the convention to still denote the RHS of \eqref{w1pm.connectionH} as $w_{1,\pm}$, even in the region $\{s \leq s_I+A\}$ where $w_{1,\pm}(\omega,s)$ did not originally make sense.

			\item (Description of $u_H$ in terms of $w_{1,\pm}$) $u_H$ is connected to $w_{1,\pm}$ as such: for all $s \leq  \LL^{2-\ep}$: \begin{align}\label{uH.mainprop}
				&u_H(\omega,s)=  \alpha_H^{+}(\omega) w_{1,+}(\omega,s)+\alpha_H^{-}(\omega) w_{1,-}(\omega,s),\\ &| \alpha_H^{\pm}|(\omega,s) \approx 1,\ |\rd_{\omega}  \alpha_H^{\pm}|(\omega,s) \lesssim \log(\LL) \label{uH.mainprop2}. \end{align}

			\item (Green's formula) We have a Green's formula  connecting $u$ to $u_H$ and $w_{1,\pm}$ (and combining with \eqref{uH.mainprop}, connecting indirectly $u$ to  $w_{1,\pm}$ as well): we first introduce the key phase quantity \begin{equation}\label{tildek.def}\begin{split} &  \kkq(\omega)  \doteq\pi^{-1}
					\left( \int_{s_{II}}^{s_{III}} |V|^{1/2}(s) ds-\phi_3(\omega)\right),\\ & |\phi_3|(\omega)\ls k^{-1},\  |\rd_{\omega}^n\phi_3|(\omega)\ls k^{2n-1}\log(k), \text{ for any } n\geq 1. \end{split}
			\end{equation} 
			Then, we have the  Green's formula:
			
			\begin{equation}\begin{split}\label{secondGreenFormula}
					&	u(\omega,s)= e^{i\kkq}\left(\frac{ \sum_{i \in \{1,2\},\ j \in \{\pm\}}u_{i,j}^{reg}(\omega,s)+\sum_{i \in \{1,2\},\ j \in \{\pm\}}\ep_{i,j}^{cos}(\omega,s) \cos(\kkq)+ \ep_{i,j}^{sin}(\omega,s) \sin(\kkq)}{1+\Gamma(m)\cdot e^{2i\kkq}}\right),			\end{split}
			\end{equation} where 
			\begin{align*}
				& u_{1,+}^{reg}(\omega,s)\\=&\exp(-4 \LL\log(\LL)+ D_I \LL) \frac{u_H(m,s) }{C_W(m,\LL)}\   \int_s^{+\infty} w_{1,+}(m,s') H(m,s') ds'  \left(C_{cos,+}(m) \cos(\kkq) +C_{sin,+}(m) \sin(\kkq)\right) , \\ &u_{1,-}^{reg}(\omega,s)= \frac{u_H(m,s) }{ C_W(m,\LL)}    \int_s^{+\infty} w_{1,-}(m,s') H(m,s') ds'   \left(C_{cos,-}(m) \cos(\kkq) +C_{sin,-}(m) \sin(\kkq)\right),\end{align*} and \begin{align*}&u_{2,+}^{reg}(\omega,s)\\ =& \exp(-4 \LL\log(\LL)+ D_I \LL) \frac{w_{1,+}(m,s) }{ C_W(m,\LL)}\     \int^{s}_{-\infty}u_H(m,s')  H(m,s') ds' \left(C_{cos,+}(m) \cos(\kkq) +C_{sin,+}(m) \sin(\kkq)\right), \\ &u_{2,-}^{reg}(\omega,s)= \frac{ w_{1,-}(m,s)}{ C_W(m,\LL)}    \int^s_{-\infty} u_H(m,s') H(m,s') ds'   \left(C_{cos,-}(m) \cos(\kkq) +C_{sin,-}(m) \sin(\kkq)\right).
			\end{align*}  There exists  $D_I(\LL)=O(1)$ independent of $\omega$, and a large $A>0$  depending  on $(M,\DD)$ such that \begin{align*}
				&	|C_W|(m,\LL)\approx 1,\ 	|C_{cos,+}(m,\LL)-\frac{1}{2}|,\  |C_{sin,-}(m,\LL)-1|\ls e^{-\kappa_+ A},\\ & |C_{sin,+}(m,\LL)| \ls \exp(O(\LL)),\  	|C_{cos,-}(m,\LL) \ls_A \exp(-4\LL \log\LL+ D_I \LL) ,
			\end{align*} 
			Moreover, we have the following estimate on 
			$\Gamma$: \begin{align} &    |\Gamma|= 1-   \tilde{C}\  \exp( -4\LL \log(\LL)+D_I \LL),\qquad \tilde{C} \sim 1,
				\\ &\bigl|\Gamma+1 \big| \ls  \exp( -4\LL\log\LL + D_I\LL).
			\end{align}
			
			Moreover, $\ep_{i,j}^{\pm}(\omega,s)$ obeys the following estimates: there exists $D_H=O(1)$ such that  for all $s\leq s_{max}$ (with implicit constants depending on $s_{max}$ and $s_H$, recalling that $H$ is supported on $\{s\leq s_H\}$):\begin{equation}\begin{split}
					& k^{1/2}|\ep_{i,+}^{cos}|(\omega,s),\ k^{1/2}|\ep_{i,+}^{sin}|(\omega,s) \ls\exp(-4 \LL\log(\LL) + D_H \LL) \cdot N[H]\\ & k^{-2}\log^{-1}(k)|\rd_{\omega}\ep_{i,+}^{cos}|(\omega,s),\ k^{-2}\log^{-1}(k)|\rd_{\omega}\ep_{i,+}^{sin}|(\omega,s) \ls\exp(-4 \LL\log(\LL) + D_H \LL) \cdot N[H], \\ & k^{1/2}|\ep_{i,-}^{cos}|(\omega,s),\  k^{-2}\log^{-1}(k)|\rd_{\omega}\ep_{i,-}^{cos}|(\omega,s) \ls\exp(-4 \LL\log(\LL) + D_H \LL) \cdot N[H],\\	& k^{1/2}|\ep_{i,-}^{sin}|(\omega,s),\ k^{-2}\log^{-1}(k)|\rd_{\omega}\ep_{i,-}^{sin}|(\omega,s) \lesssim    N[H],\\ & N[H]:=    \int_{-\infty}^{s_H} \left(\sup_{\omega \in (\frac{m}{2},m)}|H|(\omega,s')+ \sup_{\omega \in (\frac{m}{2},m)}|\rd_{\omega}H|(\omega,s')  \right) ds'.\end{split}
			\end{equation}
		\end{enumerate}
	\end{thm}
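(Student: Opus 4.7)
The proof will proceed by a standard WKB-matching scheme across the three turning points $s_I < s_{II} < s_{III}$, corresponding to the near-horizon redshift region, the innermost forbidden region ($s_I < s < s_{II}$), the classically allowed ``well'' ($s_{II} < s < s_{III}$) and the far forbidden region ($s > s_{III}$). In each region I construct a fundamental pair of solutions with explicit asymptotic form, and the various coefficients ($\alpha_H^{\pm}$, $\beta^H_{\pm\pm}$, $C_W$, $C_{cos,\pm}$, $C_{sin,\pm}$ and ultimately $\Gamma$) are obtained by matching Wronskians at the boundaries of each region. Throughout, $\omega$-derivative bounds will be propagated by differentiating the Volterra/WKB representations and using that $\partial_\omega V = -2\omega$ is regular.

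\textbf{Steps (1)--(3).} First I construct $w_{1,\pm}$ on $[s_I+A,\LL^{2-\epsilon}]$ via the WKB ansatz $w_{1,\pm}=|V|^{-1/4}\exp(\pm\int_{s_I}^s|V|^{1/2})(1+\eta_\pm)$, solving for $\eta_\pm$ via a Volterra equation (Appendix~\ref{volterra.section}). The phase estimate $\int_{s_I}^s|V|^{1/2}\lesssim \LL[D_He^{\kappa_+ s}1_{s\le 1}+\log(1+s)1_{s\ge 1}]$ is direct from \eqref{V.schematic}, since $|V|^{1/2}\sim \LL r^{-1}$ on the bulk of the region and $|V|^{1/2}\sim \LL e^{\kappa_+ s}$ near the horizon via \eqref{C+.def}. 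Second, near the horizon I use the exponential decay of $V$ to construct $\tilde K_\pm$ as real/imaginary combinations of Jost-type solutions adapted to the reduced potential $V-\kappa_+^2/4$, following the approach of \cite{Schlag_exp}; outgoing/incoming behavior at $s=-\infty$ gives the claimed $O(1)$ pointwise bounds and the $O(\log\LL)$ $\omega$-derivative bounds (the $\log \LL$ loss reflecting the location $s_I\sim \kappa_+^{-1}\log(\tilde\omega/\LL)$ of the matching point). Third, the extension coefficients $\beta^H_{\pm,\pm}$ are computed by matching $w_{1,\pm}$ and its derivative against $\tilde K_\pm$ at $s=s_I+A$ via Wronskians; the key point is that the ``growing'' $w_{1,+}$ at $s_I$ has magnitude comparable to $e^{D_H \LL}$, so to couple it into $\tilde K_-$ (essentially $\sim e^{-i\omega s}$) requires an exponentially small factor, giving $\beta^H_{+-}=O(e^{-2D_H\LL})$. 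The expansion of $u_H$ in the $w_{1,\pm}$ basis on $[s_I+A,\LL^{2-\epsilon}]$ is then obtained by matching via the $\tilde K_\pm$ basis.

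\textbf{Step (4): WKB through the allowed well and $u_I$ matching.} I propagate $w_{1,\pm}$ through the turning point $s_{II}$ using the standard Airy-function turning point construction (Langer's method, cf.~\cite{olver}), which yields solutions $w_\pm^{II}$ oscillating in $[s_{II}+C,s_{III}-C]$ with phase $\int_{s_{II}}^s|V|^{1/2}$. The connection formula produces the linear combinations of $\cos$ and $\sin$ with coefficients close to $1/2$ and $1$ respectively (the classical $\pi/4$ phase shifts). A second Airy-matching across $s_{III}$ transfers these oscillating solutions to exponentially decaying/growing solutions in the far forbidden region. Matching to $u_I$, whose normalization \eqref{uI.asymp} is fixed at $+\infty$, produces the Wronskian $W(u_I,u_H)\propto C_W\cdot(1+\Gamma e^{2i\pi\tilde k})$: the unit $1$ comes from the straight-through transmitted piece, while $\Gamma e^{2i\pi\tilde k}$ arises from the solution that bounces off the barrier at $s_{II}$ (gaining the exponential amplification factor $e^{-4\LL\log\LL+O(\LL)}$ as it tunnels through and back). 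The phase correction $\phi_3$ in \eqref{tildek.def} absorbs subleading contributions from the large-$s$ part of the allowed region, and its derivative bounds $|\partial_\omega^n\phi_3|\ls k^{2n-1}\log k$ follow by explicit computation using the $r^{-1}$ behavior of $V-(m^2-\omega^2)$ (closely analogous to what is done in \cite{KGSchw1}).

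\textbf{Step (5): Green's formula and energy lower bound.} Inserting the decomposition of $u_H$ in the $w_{1,\pm}$ basis, together with the explicit form of $W(u_I,u_H)$, into the standard representation $u=[u_H\int_s^\infty u_I H+u_I\int_{-\infty}^s u_H H]/W(u_I,u_H)$, yields \eqref{secondGreenFormula} after collecting terms by their $\cos/\sin$ content and isolating the ``regular'' pieces $u_{i,\pm}^{reg}$ evaluated at $\omega=m$ from the $\omega$-dependent error contributions $\ep_{i,j}^{cos/sin}$. The $\omega$-derivative bounds on $\ep_{i,j}$ follow by differentiating the WKB representations; the factor $k^2\log k$ in the derivative losses reflects $\partial_\omega k=O(k^3)$ together with the $\log k$ loss from $\phi_3$. \textbf{Main obstacle.} The most delicate point is the sharp lower bound $|\Gamma|=1-\tilde C\exp(-4\LL\log\LL+D_I\LL)$ with $\tilde C\sim 1$: the upper bound $|\Gamma|\le 1-\tilde C e^{-4\LL\log\LL+O(\LL)}$ comes from the WKB amplitude calculation, but the matching \emph{lower bound} (which prevents cancellation and thus ensures genuine near-resonance behavior) must be extracted from the positivity of the horizon energy flux for $u_H$, i.e.~the analogue of Lemma~\ref{agoodenergyestimateiguess2} applied to $u_H$. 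Carefully propagating this physical-space positivity through the three WKB transitions --- while simultaneously tracking $\omega$-derivatives with the quantitative losses stated --- is where most of the book-keeping will be concentrated.
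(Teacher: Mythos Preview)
Your overall architecture --- WKB/Airy matching across the three turning points, Wronskian bookkeeping, and the energy identity for the sharp lower bound on $1-|\Gamma|$ --- is the same as the paper's. Steps (1)--(3) and the energy-identity part of Step (5) are essentially what the paper does (Sections~\ref{uH.section}--\ref{connection12.section} and~\ref{energy.section}).

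There is, however, a genuine gap in your Step (4). You propose to go directly from the Airy construction at $s_{II}$ to a WKB pair $w_\pm^{II}$ on $[s_{II}+C,s_{III}-C]$, then to the Airy construction at $s_{III}$. The paper explicitly inserts an additional layer between the $s_{II}$-Airy functions and the WKB pair $w_{2,\pm}$ (which is normalized near $s\approx k^2$): namely, solutions $B_\pm(\omega,s)$ obtained by first solving \eqref{eq:mainV} with $\omega=m$ exactly (so the approximant $B_\pm^\ast$ carries \emph{no} $\omega$-dependence) and then perturbing back to general $\omega$ (Section~\ref{bessel.used.section}). This is not cosmetic. The $\omega$-derivative of the $s_{II}$-Airy errors (Corollary~\ref{TP2.cor}) is only $O(\LL^3)$, i.e.\ $k$-independent, because the construction lives entirely in the region $s\lesssim \LL^2$; if you try to extend that Airy construction, or a WKB pair normalized at the $s_{II}$ end, across the whole well up to $s\approx k^2$, the $\partial_\omega$-errors pick up growth from the long phase integral and you lose the clean split of each coefficient into a ``regular'' piece with $|\partial_\omega|\lesssim \LL^3\log\LL$ and a ``singular'' piece of amplitude $O(k^{-1/2})$ with $|\partial_\omega|\lesssim k^2\log k$. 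That split is exactly the content of the $\eta^{reg}/\eta^{sing}$ decomposition in Lemma~\ref{connection2.3.lemma2} and is what drives the stated bounds on $\epsilon_{i,j}^{cos/sin}$. The paper flags this explicitly (see the footnote attached to Step~5 of Section~\ref{strategy.section}): the naive extension keeps the amplitude bounded but lets $\partial_\omega$ blow up like $k^2$ on a term that is \emph{not} $O(k^{-1/2})$-small, which the downstream stationary-phase argument cannot absorb. Concretely, you will need: (i) the $\omega=m$ solutions $B_\pm^\ast$ and their perturbations $B_\pm$ on $[2\LL^2/(Mm^2),k^{4/3}]$; (ii) a connection $B_\pm\leftrightarrow (u_{Ai_2}\mp i u_{Bi_2})$ evaluated at $s\approx \LL^2$, giving $k$-independent coefficients; (iii) the WKB pair $w_{2,\pm}$ normalized at $s=k^2/(Mm^2)$, connected to $B_\pm$ at $s\approx k$ with $O(k^{-1/2})$ errors. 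Without this bridge your derivative bookkeeping in Step~(5) will not close.

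A smaller point: your description of $\Gamma$ as arising from a ``tunnel through and back'' with factor $e^{-4\LL\log\LL+O(\LL)}$ is off. In the paper $\Gamma=(\gamma_B-i\gamma_A)/(\gamma_B+i\gamma_A)$ is an $O(1)$ quantity with $|\Gamma+1|\lesssim e^{-4\LL\log\LL+D_I\LL}$; the exponentially small number is $1-|\Gamma|$, and it is this that is pinned down from below by the horizon flux identity $\Im(\gamma_A\bar\gamma_B)=-\omega$ (Lemma~\ref{energy.lemma}), not $|\Gamma|$ itself.
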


	The  Green's formula \eqref{secondGreenFormula} can be rephrased in terms of $\cos( \kkq)$,  $\sin( \kkq)$  with a more familiar formula:
	\begin{align}\label{Green2}& u(\omega,s) = \sum_{i=1..2, j=\pm}   v_{i,j}(\omega,s)\frac{ C_{cos,j}(\omega)\cos(\kkq)+ C_{sin,j}(\omega)\sin(\kkq)}{\cos(\kkq)\exp(-4\LL\log(\LL)+D_I\LL) \delta_{cos}(\omega)+ \sin(\kk)},\\ & v_{1,\pm}(\omega,s)=  \exp(-\frac{(1\pm 1)}{2}[4\LL \log \LL+ D_I \LL])\frac{  u_H(\omega,s) }{D_W(\omega)}\  \int_s^{+\infty} w_{1,\pm}(m,s')  H(m,s') ds',\\ &v_{2,\pm}(\omega,s)= \exp(-\frac{(1\pm 1)}{2}[4\LL \log \LL+ D_I \LL])  \frac{ w_{1,\pm}(m,s) }{D_W(\omega,\LL)}\  \int^s_{-\infty} u_H(\omega,s') H(m,s') ds'.				\end{align}
	
	In fact, we will first prove a formula like \eqref{Green2} (see Proposition~\ref{together.prop}) to then deduce \eqref{secondGreenFormula}.
	
	We finally note  Theorem~\ref{quasimode.thm} easily follows from Theorem~\ref{TPsection.mainprop}, see the  discussion at the end  of Section~\ref{resolvent.section}.
	\subsection{Strategy of the proof and organization of the section}\label{strategy.section}

	The proof  of Theorem~\ref{TPsection.mainprop} 	relies on a strategy similar to our work \cite{KGSchw1} where we addressed the fixed angular mode $L$ case: we take advantage of the following representation  formula	proven in \cite{KGSchw1}, Section 4.3:
	\begin{equation}\label{resolvent1}
		u(\omega,s) = \underbrace{\frac{ u_H(\omega,s) \int_s^{+\infty} u_I(s') H(\omega,s') ds'}{W(u_I,u_H)}}_{:=u_1(\omega,s)}+ \underbrace{\frac{ u_I(\omega,s) \int_{-\infty}^{s} u_H(s') H(\omega,s') ds'}{W(u_I,u_H)}}_{:=u_2(\omega,s)}.
	\end{equation}
	
	In the analysis of \cite{KGSchw1}, the main term in the potential was of  the form $(Mm^2)^2 k^{-2} - \frac{2Mm^2}{r}$, with a unique turning point when $r\approx k^2$.	The main difference is that, because now the potential $V$ has the schematic form  \eqref{V.schematic}, there are \emph{three} turning points to account for (see Figure~\ref{Potential}), with an important $\LL$-dependency specifically for the second turning point (note the presence of a classically forbidden region in which $\{-\log(\LL) \ls s\ls \LL^2\}$ between the fist and second turning points). We will adopt the following steps:
	
	\begin{figure}\label{Potential2}
		
		\begin{center}
			
			\includegraphics[width=100 mm, height=50 mm]{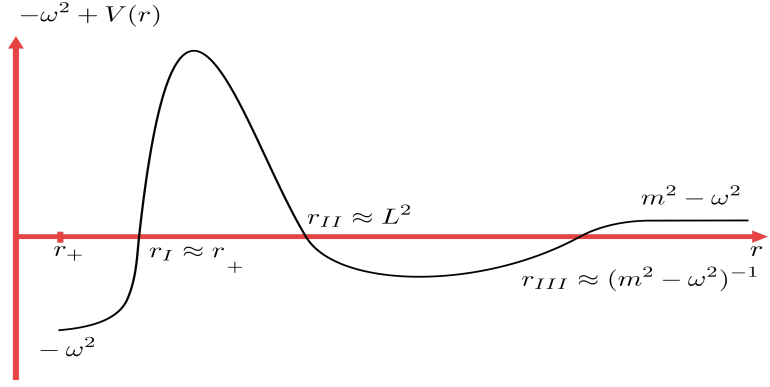}
			
		\end{center}
		\caption{The  effective potential $-\omega^2 + V(s)$, when $\omega \in(-m,m)$, where $V(r)$ is as in \eqref{V.schematic}.}

	\end{figure}

	\paragraph{Step 1: $u_H$ and the first turning point (Section~\ref{uH.section})}	 We  need to control $u_H$, the solution regular at the event horizon, more precisely in terms of $\LL$. In particular, we will need to analyze the first turning point at $s=s_I(\LL,\omega) \approx -\log(\LL^{-1})<0$, which is located near the event horizon $\mathcal{H}^+=\{s=-\infty\}$. While the analysis near the event horizon using multipliers has somewhat become standard (see Section~\ref{multiplier.section} for a discussion in the range $|m-\omega|\gg \LL^{-p}$ and \cite{Red,claylecturenotes}), it cannot be used in the frequency range $(m-L^{-p},m)$ because our analysis relies on the fine structure of the Fourier transform which we capture using WKB-type estimates. Instead we rely on the works   \cite{Schlag2,Schlag_exp}, which exploit the exponentially-decaying character of the potential, and identify the adequate modified-Bessel functions $\tilde{K}_{\pm}$ resolving the ODE near the first turning point $s=s_I$.  The key is a  change of variable $s\rightarrow w(s)$ that reduces to the WKB approximation away from the turning point. Recall that in \cite{KGSchw1}, neglecting the influence of $\LL$, we softly used the smoothness of $u_H(\omega,s)$ in $\omega$ uniformly in any compact region of $s$. As it turns out, our new estimates in Section~\ref{uH.section} show that it is possible to control uniformly $u_H$ in $\LL$ too, up to $\exp(O(\LL))$ factors. Anticipating the discussion of Section~\ref{connection12.section} (Step 4), we remark that the main terms in \eqref{resolvent.intro} involves $\exp( O(\LL \log(\LL)))$, which dominate the  $\exp(O(\LL))$ errors. This observation in the large $\LL$ case is key to considering $u_H$ as ``a regular solution'' and represents a significant improvement over the  argument used in \cite{KGSchw1}.
	
	\paragraph{Step 2:  WKB  estimates above the first turning point and under the second turning point (Section~\ref{WKB.section1})}	 We introduce $w_{1,\pm} \approx \exp(\pm O(\LL))$, two solutions of the ODE constructing using the standard WKB approximation above the first turning point ($s>s_I + A$ for $A>0$) but under   the second turning point ($s< \ep\ s_{II}$). These functions $w_{1,\pm}$ are particularly important in any compact $s$ region $s_0^{-} \leq s \leq s_0^{+}$ where $s_0^{-}< 0 < s_0^{+}$ are independent of $\LL$ and $\omega$. At this stage, we are able to express $u_H$ in terms of a linear combination of  $w_{1,\pm}$, and obtain schematically $u_H \approx w_{1,+}$. Our ultimate objective will be to also express $u_I$ in terms of a linear combination of  $w_{1,\pm}$, which will require further steps below.
	
	\paragraph{Step 3: the second turning point (Section~\ref{TP2.section})} The presence of a second turning point $s_{II} \approx \LL^2$ matters in the large $\LL$ case (unlike the fixed $\LL$ case considered in \cite{KGSchw1} where this turning point was ignored). The resolution of the second turning point is addressed with standard airy functions denoted $\uAii$ and $\uBii$. We will use this Airy approximation in a range $ s_I  \ll s \leq 10 s_{II}$ (note that the range is $\LL$-dependent but $k$-independent). It is important to note that the second turning point delimits the classically forbidden region (for $s<s_{II}$) where the solution displays exponential-type behavior, and the classically allowed region (for $s_{II}<s<s_{III}$) where the solution  displays oscillatory-type behavior (see Figure~\ref{Potential2}).
	
	\paragraph{Step 4: connection between the first and second turning point (Section~\ref{connection12.section})}  Next, we establish connection formula between the $w_{1,\pm}$ functions obtained by WKB-approximations and the $\uAii$, $\uBii$ Airy-type functions near the second turning point: the connection formula will take the following form \begin{align*}
		\uAii&=\underbrace{\alpha_{A_2}(\omega,\LL)}_{\ls \exp(-2\LL \log \LL+O(\LL))} w_{1,-} +  \underbrace{\beta_{A_2}(\omega,\LL)}_{\approx  \exp(-2\LL \log \LL+O(\LL))} w_{1,+},  
		\\ \nonumber  \uBii(\omega,s) &= \underbrace{\alpha_{B_2}(\omega,\LL)}_{\approx  \exp(2\LL \log \LL+O(\LL))} w_{1,-} +  \underbrace{\beta_{B_2}(\omega,\LL)}_{\ls \exp(-2\LL \log \LL+O(\LL))} w_{1,+}.
	\end{align*}
	
	Note that the coefficients $\alpha_{A_2}$, $\alpha_{B_2}$, $\beta_{A_2}$, $\beta_{B_2}$ depend only on $\omega$ and $\LL$ but not on $s$. 
	The above connection formula is a refinement of the standard wave-penetration through a barrier formula, see \cite{olver}, Chapter 11: essentially, $\alpha_{A_2}$, $\alpha_{B_2}$, $\beta_{A_2}$, $\beta_{B_2}$ display exponential-type behavior in $\int_{s_I}^{s_{II}} |V|^{1/2}(s) ds \approx\LL\log(\LL)$.	
	The multiplicative errors generated in these coefficients (compared, say, to the square  potential barrier) decay in $\LL$, but not in $k$. Similarly, the $\partial_{\omega}$ derivative of these errors only  blows up (polynomially) in $\LL$, but not in $k$.

	\paragraph{Step 5: control of the solution slightly above the second turning point (Section~\ref{bessel.used.section})} The above Airy-approximation breaks\footnote{More precisely, the errors remains  small in amplitude, but their $\rd_{\omega}$ derivatives blow-up like $O(k^2)$ as $s \approx k^2$, which is not allowed in our stationary phase arguments. Indeed, we  only handle singular terms whose amplitude is a very small $O(k^{-1/2})$.} down as $s$ gets closer to the third turning point $s_{III}$. To address this issue, we adopt a similar strategy to our previous work \cite{KGSchw1} and set $\omega=m$ in \eqref{eq:mainV}.  In \cite{KGSchw1} (i.e.\ for fixed $\LL$ and neglecting $\LL$-dependence), we obtained Bessel functions. In the present work, we cannot solve \eqref{eq:mainV} as explicitly due to $\LL$-dependence, but we still perform a WKB-approximation to obtain analogous functions $B^{*}_{\pm}(s)$.   As in \cite{KGSchw1}, we then perturb then to construct $B_{\pm}(\omega,s)$ which are actual solutions of \eqref{eq:mainV} (for general $\omega\in [m-L^{-p},m)$). We note that such an approximation is bound to break down as $s$ gets too close to $s_{III}$, but it turns out to be valid for $s\ls k^{4/3}$ as in \cite{KGSchw1}. The solutions $B_{\pm}(\omega,s)$ can be connected with $\uAii$ and $\uBii$ in the region $s\approx \LL^2$, a process  which generates errors obeying $k$-independent bounds: $$ B_{\pm}(\omega,s) = b_{\pm,+}(\omega) \left(\uAii(\omega,s)-i\uBii(\omega,s)\right)+ b_{\pm,-}(\omega) \left(\uAii(\omega,s)+i\uBii(\omega,s)\right). $$
	
	\paragraph{Step 6: WKB estimate above the second turning point and under the third turning point (Section~\ref{WKB.section2})} We obtain two solutions $w_{2,\pm}$ by another WKB-type estimate in the region $s_{II} \ll s \ll s_{III}$ between the second and third turning point. The key point is that, because this is a classically allowed region where the solution has an oscillatory type behavior, we are allowed to set the errors in the WKB-approximation to be $0$ at some value $s\approx k^2$ (as we did in \cite{KGSchw1} in the analogous region).  The advantage is that the error created by the WKB-approximation are $O(k^{-1})$ small when $s\approx k^2$ (and their $\rd_{\omega}$ derivative blows up as $O(k^2)$), but obey $k$-independent bounds when  $s\approx \LL^2$ (which would not be the case if we naively tried to extend the Airy approximation of $\uAii$ and $\uBii$ to the region $s \approx k^2$). The solutions $w_{2,\pm}$ can be  connected to $B_{\pm}$ when $s\approx k$ (recall that the $B_{\pm}$ approximations are not valid for $s \approx k^2$) as such\begin{equation}\begin{split}
			&w_{2,\pm}(\omega,s) =  a_{\pm, +}(\omega) B_+(\omega,s)+  a_{\pm, -}(\omega) B_-(\omega,s),\\ &| a_{\pm \pm}(\omega)-1|,\ | a_{\pm \mp}|(\omega) \ls k^{-1/2},\ | \rd_{\omega}a_{\pm \pm}|(\omega),\ | \rd_{\omega}a_{\pm \mp}|(\omega)  \ls k^2 \log(k) . \end{split}
	\end{equation}
	
	\paragraph{Step 7: Energy identity (Section~\ref{energy.section})} One important component of the Green's formula \eqref{resolvent1} is $W(u_H,u_I)$ the Wronskian of $u_H$ and $u_I$, and specifically its $\LL$ and $k$ dependence. Expressing both $u_H$ and $u_I$ in terms of $w_{1,\pm}$ (as we will do at the conclusion of all the steps), it is possible to obtain upper bounds on $|W(u_H,u_I)|$. However, it is more important to obtain  upper bounds on $|W(u_H,u_I)|^{-1}$. 
	For this, we use an energy identity, as in \cite{KGSchw1}, taking advantage of the fact that $\Im(u'\bar{u})$ is constant for all solutions of \eqref{eq:mainV} with a real-valued potential $V(s)$. In \cite{KGSchw1}, we showed that $$ \frac{\overline{W(u_H,u_I)}}{W(u_H,u_I)}= \frac{1+\bar{\Gamma}(\omega) e^{-2i\pi k}}{1+\Gamma(\omega) e^{2i\pi k}} \text{ with } |\Gamma|(\omega)<1,$$  and we used this condition to expand in geometric series. \emph{Let us emphasize that the  condition  $ |\Gamma|(\omega)<1$ is insufficient  in the present setting where $\LL$ is unbounded}: we will  quantify how close $|\Gamma|$ is to $1$ as such \begin{equation}\label{energy.explanation}
		|\Gamma|(\omega)= 1- \exp(-4\LL \log\LL+ O(\LL))= 1- \exp(-2 \int_{s_I}^{s_{II}}|V|^{1/2}(s) ds +O(1)).
	\end{equation} To prove \eqref{energy.explanation}, the key is to express $u_H$ in terms of the \emph{real-valued} solutions $w_{2,even}$, $w_{2,odd}$ which are simply linear combinations of $w_{2,\pm}$ as such \begin{align}
		& \label{uHstep}u_H(\omega,s) = \gamma_A(\omega) w_{2,even}(\omega,s)+ \gamma_B(\omega) w_{2,odd}(\omega,s) \\  & w_{2,even} = \Re (e^{i\pi/4} w_{2,+}),\  w_{2,odd} = -\Im (e^{i\pi/4} w_{2,+}).
	\end{align} \eqref{energy.explanation} is then obtained as an application of the energy identity $\Im(u'\bar{u})=cst$ to \eqref{uHstep} combined with $$ \Gamma(\omega,s) = \frac{\gamma_B(\omega,s)-i\gamma_A(\omega)}{\gamma_B(\omega)+i\gamma_A(\omega)} + \text{ errors},$$ which connects $W(u_H,u_I)$ to the coefficients $\gamma_A(\omega)$ and $\gamma_B(\omega)$.

	\paragraph{Step 8: $u_I$ and the third turning point (Section~\ref{uI.section})} $u_I$ is defined as the regular solution of \eqref{eq:mainV} as $s\rightarrow+\infty$. Since the region $s>s_{III}$ near infinity is classically forbidden, it means that $u_I$ adopts an exponentially-decaying type behavior modeled by the Airy function (as in \cite{KGSchw1}). Using the Airy approximation near the third turning point, we obtain  estimates on $u_I$ in the region $\ep\ s_{III} < s$, noting that $u_I$ adopts an oscillatory behavior in the sub-region $\ep\ s_{III} < s< s_{III}$.
	
	\paragraph{Step 9: connection between the second and third turning point (Section~\ref{connection23.section})} We then connect $u_I$ to the solutions $w_{2,\pm}$ obtained by WKB-approximation in Section~\ref{WKB.section2} as such $$ u_I(\omega,s) = a_{3,+}(\omega) w_{2,+}(\omega,s)+  a_{3,-}(\omega) w_{2,-}(\omega,s).$$ The value of $a_{3,\pm}(\omega)$ is obtained evaluating at $s\approx \ep \cdot k^2 \ll s_{III}$ and generates $O(k^{-1})$ errors.
	
	\paragraph{Step 10:  Green's formula (Section~\ref{together.section} and Section~\ref{resolvent.section})}
	
	Combining all previous estimates, we connect $u_I$ to $w_{1,\pm}$ (the relevant solutions in a compact region of $s$) as such \begin{equation}\label{step10}
		u_I(\omega,s) = \alpha_{3,-}(\omega) w_{1,-}(\omega,s)+ \alpha_{3,+}(\omega) w_{1,+}(\omega,s).
	\end{equation} 
	
	By	\eqref{resolvent1}, and recalling the principle that $u_H\approx w_{1,+}$ is a reasonably regular solution, it appears that the key quantity is $u_I(\omega,s)/ W(u_H,u_I)$ which by \eqref{step10} involves the following quantities, which we write schematically in two different ways (up to $O(k^{-1/2})$ errors whose $\rd_{\omega}$ derivative blows up like $O(k^2\log(k))$): \begin{align}
		& \label{res+} \frac{\alpha_{3,+}(\omega)}{ W(u_H,u_I)(\omega)}\approx \ep_\LL^2\ \frac{	e^{O(1)}\cdot\cos(\pi k )+  e^{O(\LL)}\cdot \sin(\pi k)}{ 	\ep_{\LL}^2 \cdot \cos(\pi k )+ \sin(\pi k)}\approx \ep_\LL^2\  \frac{	e^{O(1)} + 	e^{O(1)}\cdot e^{2i\pi k}}{ 1+ \Gamma(\omega,s) e^{2i\pi k}},\\ & \label{res-}  \frac{\alpha_{3,-}(\omega)}{ W(u_H,u_I)(\omega)}\approx  \frac{	\ep_{\LL}^2\ e^{O(1)}\cdot\cos(\pi k )+   \sin(\pi k)}{ 	\ep_{\LL}^2 \cdot \cos(\pi k )+ \sin(\pi k)} \approx   \frac{ e^{O(1)}[ e^{2i\pi k}- 1	]+ \ep_{\LL^2}\ e^{O(1)}[1+ 	 e^{2i\pi k}]}{ 1+ \Gamma(\omega,s) e^{2i\pi k}}.
	\end{align}
	where $\Gamma(\omega,\LL)$ is already mentioned in Step 7 (Section~\ref{energy.section})  and \begin{equation}
		\ep_{\LL} \approx  \exp(-2\LL \log\LL+ O(\LL)) \approx \exp( -\int_{s_{I}}^{s_{II}} |V|^{1/2}(s) ds),\ \Gamma(\omega,s) \approx 1-\ep(\LL).
	\end{equation} 
	Note that combining \eqref{resolvent1} with \eqref{res+}, \eqref{res-} provides rigorously a version of the schematic estimate \eqref{resolvent.intro}: note that the $\exp( O(L))$ term are generated by the $u_H$ and $w_{1,+}$ terms in \eqref{resolvent1} and \eqref{step10}. For the future purpose of using a stationary phase argument, we will  also expand \eqref{res+}, \eqref{res-} in  geometric series as: \begin{equation}\label{taylor}
		\frac{1}{1+\Gamma(\omega) e^{2i\pi k}} = \sum_{n=0}^{+\infty}  [-\Gamma(\omega)]^n e^{2i\pi n k}.
	\end{equation}		
	Note that \eqref{taylor} also turns out to be a Fourier-series expansion of the given function.
	
	While $\frac{\alpha_{3,+}(\omega)}{ W(u_H,u_I)(\omega)}$ has an extra $\ep_{\LL}^2$ factor  compared to  $\frac{\alpha_{3,-}(\omega)}{ W(u_H,u_I)(\omega)}$, but the only term without the multiplicative $\ep_{\LL}^2$ factor in  $\frac{\alpha_{3,-}(\omega)}{ W(u_H,u_I)(\omega)}$ involves $(e^{2i\pi k}-1)$. Taking advantage of this fact, we  use  summation by parts  to obtain a telescoping sum and show that in fact, the contribution of  $\frac{\alpha_{3,-}(\omega)}{ W(u_H,u_I)(\omega)}$ is comparable to that of  $\frac{\alpha_{3,+}(\omega)}{ W(u_H,u_I)(\omega)}$.

	\subsection{The Jost solution  $u_H$ and the first turning point}\label{uH.section}
	We start by a basic analysis of the first turning point, only valid asymptotically as $s \rightarrow -\infty$. Our objective will be to connect $u_H$ (defined in \eqref{uH.def})
	to suitable Bessel functions adapted to the first turning point.
	
	First, we introduce the variable $y= (\K)^{-1} e^{\K s}$ and recast  \eqref{eq:mainV} in the following form: \begin{equation*}
		y^2	\frac{d^2 u}{dy^2} +  y \frac{du}{dy}= \left[ -\frac{\omega^2}{\K^2} +  \tilde{C} \LL^2  y^2 + G(y) \right]u,
	\end{equation*} where $\tilde{C}$ is a postive constant and $G$ is a function satisfying\begin{align*}
		\tilde{C} \sim 1,\qquad G(y)= O(\LL^2 y^4),\qquad  G'(y)= O(\LL^2 y^3).
	\end{align*}
	
	Then, we introduce the rescaled variable $z= \sqrt{\tilde{C}}\ \LL y$ and $\nu= \frac{\omega}{\K}$  so that the equation becomes \begin{equation}\label{eq:main.Z}
		z^2	\frac{d^2 u}{dz^2} +  z \frac{du}{dz}- \left[ z^2-\nu^2 + G(z) \right]u=0.
	\end{equation}
	
	We denote $K_{\pm}(\nu,z)$, the modified Bessel functions, the  solutions of \eqref{eq:main.Z} with $G=0$, see Appendix~\ref{bessel.section}. 
	
	\begin{prop} \cite{Schlag2,Schlag_exp}\label{uH.prop} There exist two independent solutions $\uAi$ and $\uBi$ of \eqref{eq:mainV} with the following properties: define $\tilde{\omega}=  \sqrt{\omega^2+\frac{\K^2}{4}}$ (note that $\tilde{\omega}= \K \sqrt{\nu^2+\frac{1}{4}}$). Then for all $s\leq 0$: \begin{align}
			&\tilde{K}_{\pm}(\omega,s)= (\frac{w(s)}{w'(s)})^{1/2}  K_{\pm}(\frac{\omega}{\kappa_+}\color{black},\frac{\tilde{\omega}\ w(s)}{\K})\left( 1+ \LL^{-1} c_{\pm}(\omega,s)\right),
			\\ & \label{w.der}\frac{dw}{ds} = \frac{\K}{\tilde{\omega}}\ \left(\frac{ -V(s)+\frac{\K^2}{4}}{w^{-2}(s)-1} \right)^{1/2},\\ \label{w.1}& 
			\sqrt{w^2(s)-1} - \arctan(\sqrt{w^2(s)-1})=  \frac{\K}{\tilde{\omega}}\int_{s_I}^s \sqrt{\bigl| V(s')-\frac{\K^2}{4} \bigr|}	 ds'	\text{ for } s> s_{I}(\tilde{\omega}),\\ & \text{ with } w(s) = \frac{\K}{\tilde{\omega}}\int_{s_I}^s \sqrt{\bigl| V(s')-\frac{\K^2}{4} \bigr|}	 ds'+\frac{\pi}{2}+O(
			\frac{1}{\int_{s_I}^s \sqrt{\bigl| V(s')-\frac{\K^2}{4} \bigr|}	 ds'}) \text{ as }  \int_{s_I}^s \sqrt{\bigl| V(s')-\frac{\K^2}{4} \bigr|}	 ds' \rightarrow +\infty. \label{w.asymp}\\ \label{w.3} & \sqrt{1-w^2(s)} +\frac{1}{2} \ln(\frac{1-\sqrt{1-w^2(s)}}{1+\sqrt{1-w^2(s)}}) = -\frac{\K}{\tilde{\omega}}\int^{s_I}_s \sqrt{\bigl| V(s')-\frac{\K^2}{4} \bigr|}	 ds'	\text{ for } s< s_{I}(\tilde{\omega}),\\ \label{w.4}& \text{ with } w(s) =e^{\K (s-s_I)}\left[ e^{\ep_{-\infty}(\omega,\LL)}+O( e^{2\K (s-s_I)} )\right] \text{ as } s\rightarrow -\infty,\\ &  \text{ where for any  }  n \in \mathbb{N}, \rd_{\omega}^n\ep_{-\infty}(\omega,\LL)= O(\LL^{-2}),	\\ & |\rd_{\omega}^{q} \rd_{s}^{j}c_{\pm}|(\omega,s) \lesssim   \left(\frac{ -V(s)+\frac{\K^2}{4}}{w^{-2}(s)-1} \right)^{j/2},\ q,j \in \{0,1\}, \text{ and } \lim_{s\to -\infty}c_{\pm}(\omega,s) =0.
		\end{align}
		
	\end{prop}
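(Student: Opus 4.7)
The plan is to construct $\tilde K_\pm$ as Liouville--Green (LG) perturbations of the modified Bessel functions $K_\pm(\nu,\cdot)$ from Appendix~\ref{bessel.section}, in the spirit of~\cite{Schlag2,Schlag_exp}. First I would recast~\eqref{eq:mainV} in the rescaled variable $z = \sqrt{\tilde{C}}\,\LL\, y$ with $y = \K^{-1} e^{\K s}$ and $\nu = \omega/\K$, turning it into~\eqref{eq:main.Z}; the leading part is the modified Bessel operator of order $\nu$ and the perturbation $G$ satisfies $|G(z)| + |zG'(z)| \ls \LL^2 y^4$, which on the relevant scales of $z$ is uniformly $\LL^{-2}$-small after rescaling. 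The ansatz
\[
\tilde K_\pm(\omega,s) = \left(\frac{w}{w'}\right)^{1/2} K_\pm\!\left(\nu,\frac{\tilde\omega w}{\K}\right)\left(1 + \LL^{-1} c_\pm(\omega,s)\right)
\]
is then substituted, and the phase $w(s)$ is chosen so that the principal part of the resulting equation for $c_\pm$ cancels.

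This cancellation dictates precisely~\eqref{w.der}, which I would integrate with the matching condition $w(s_I)=1$ at the turning point; the ODE is separable, and the two implicit formulas~\eqref{w.1} and~\eqref{w.3} are obtained by integrating on the two sides of $s_I$, noting that the left-hand sides are primitives of $\sqrt{w^2-1}$ and $\sqrt{1-w^2}$ respectively. The asymptotic expansions~\eqref{w.asymp} and~\eqref{w.4} then drop out by expanding $\sqrt{|V - \K^2/4|}$ near $s = +\infty$ (where it approaches $\tilde\omega$ with integrable corrections) and as $s\to -\infty$ (where it decays exponentially, leaving only the constant $\K/2$ contribution from $\sqrt{\K^2/4}$); the error $\ep_{-\infty}(\omega,\LL) = O(\LL^{-2})$ precisely captures the subleading Taylor coefficients of $V$ that are sensitive to $\LL$ through $G$. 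With $w$ in hand, the equation for $c_\pm$ reduces to an inhomogeneous Volterra integral equation (Appendix~\ref{volterra.section}) whose kernel is built from the pair $(K_+, K_-)$ and their Wronskian and whose source combines the Schwarzian-type derivative of the LG change of variable with $G$ itself. The uniform bounds on $K_\pm(\nu,\cdot)$ from Appendix~\ref{bessel.section}, together with the $\LL^{-1}$ smallness of the source, yield a contraction in a weighted $L^\infty$ space pinned down by $c_\pm(\omega,-\infty)=0$; differentiating the Volterra equation once in $\omega$ and $s$ and bootstrapping produces the pointwise bounds on $\rd_\omega^q \rd_s^j c_\pm$.

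The hard part will be the transition region $w \approx 1$, i.e.\ near the turning point $s = s_I$, where~\eqref{w.der} becomes singular and the comparison functions $K_\pm(\nu,\tilde\omega w/\K)$ are themselves evaluated near the turning point of their own reduced equation. To close the Volterra estimates uniformly across this region one must combine the LG framework with uniform turning-point asymptotics for $K_\pm(\nu,\cdot)$, ensuring that the source remains integrable against the kernel. Once this is handled for $\LL$-independent potentials as in~\cite{Schlag_exp,Schlag2}, the additional large-$\LL$ dependence is controlled merely by tracking the uniform $\LL^{-2}$ factor in $G$ through the bounds, which also accounts for the $\LL^{-1}$ gain appearing in the ansatz.
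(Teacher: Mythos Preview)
Your overall strategy matches the paper's: the paper simply invokes Lemma~2.4 and Corollary~2.5 of~\cite{Schlag2} for the Langer--Bessel construction and the Volterra control of $c_\pm$, then verifies~\eqref{w.1},~\eqref{w.asymp},~\eqref{w.3} by integrating~\eqref{w.der}, and finally carries out an explicit computation for~\eqref{w.4}. Your proposal essentially unpacks what~\cite{Schlag2,Schlag_exp} do, which is fine.

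However, your derivations of the two asymptotics are incorrect as written. The expansion~\eqref{w.asymp} has nothing to do with the behavior of $\sqrt{|V-\K^2/4|}$ ``near $s=+\infty$'' (recall the proposition only concerns $s\le 0$); it is obtained purely by inverting the implicit relation~\eqref{w.1}: as $w\to\infty$ one has $\sqrt{w^2-1}-\arctan\sqrt{w^2-1}=w-\tfrac{\pi}{2}+O(w^{-1})$, which gives~\eqref{w.asymp} immediately. Similarly, your claim that as $s\to-\infty$ the integrand ``decays exponentially, leaving only the constant $\K/2$ contribution from $\sqrt{\K^2/4}$'' is wrong: since $V(s)\to-\omega^2$, one has $|V-\K^2/4|\to\tilde\omega^2$, so the integrand tends to $\tilde\omega$, not $\K/2$. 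The exponential behavior~\eqref{w.4} instead comes from expanding the left side of~\eqref{w.3} as $w\to 0^+$, where $\sqrt{1-w^2}+\tfrac{1}{2}\ln\frac{1-\sqrt{1-w^2}}{1+\sqrt{1-w^2}}=1-\ln 2+\ln w+O(w^2)$, and then computing the right side using $|V(s')-\K^2/4|/\tilde\omega^2 = 1 - e^{2\K(s'-s_I)}f(s')$ with $f=1+O(\LL^{-2})$; the paper writes out this computation explicitly and this is where the $O(\LL^{-2})$ for $\ep_{-\infty}$ actually comes from. These are easy fixes and do not affect the rest of your outline.
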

	
	\begin{proof} 	
		
		This follows from Lemma 2.4 and Corollary 2.5 in \cite{Schlag2}, recast using our $(\omega,s)$-coordinates. Note, in particular, that \eqref{w.1}, \eqref{w.asymp}, \eqref{w.3} follow from integrating \eqref{w.der}. \eqref{w.4} follows from \eqref{w.3} and the asymptotics of  $\frac{1}{\tilde{\omega}}\int^{s_I}_s \sqrt{\bigl| V(s')-\frac{\K^2}{4} \bigr|}	 ds'$ as $s\rightarrow -\infty$: note indeed  that $w(s) \rightarrow 0$ as  $s\rightarrow -\infty$,  and by \eqref{w.3}: \begin{equation}\begin{split}
				&	\frac{\ln( w(s))+	1-\ln(2) +O(w^2(s))}{\K}= \int_{s_I}^s\frac{ \bigl| V(s')-\frac{\K^2}{4} \bigr|^{\frac{1}{2}}}{\tilde{\omega}}	 ds'=  \int_{s_I}^s |1-  e^{2\K(s'-s_I)}f(s')|^{\frac{1}{2}} ds'\\ &= \frac{1}{2\K}\int^{0}_{2\K (s-s_I)} (1-  e^{u}f(\frac{u}{2\K}+s_I))^{\frac{1}{2}} du	= (s-s_I) + \frac{1}{2\K}\int^{0}_{2\K (s-s_I)}\left[ 1-(1-  e^{u}f(\frac{u}{2\K}+s_I))^{\frac{1}{2}}\right] du\\&	= (s-s_I) + \frac{1}{2\K}\underbrace{\int^{0}_{-\infty}\left[ 1-(1-  e^{u}f(\frac{u}{2\K}+s_I))^{\frac{1}{2}}\right] du}_{=2(1-\ln(2))+O(\LL^{-2})}+O( e^{2\K(s-s_I)}) ,\end{split}		\end{equation} where we have used the notation $\frac{\omega^2+V(s)}{\tilde{\omega}^2} = e^{2\K (s-s_I)}  f(s)$, where $f(s) = 1+O(\LL^{-2})$ for any $s< s_I$. Then, \eqref{w.4} indeed follows.
		
	\end{proof}
	
	\begin{cor}\label{uH.cor}
		$u_H$ can be expressed as \begin{equation}\label{u_H.formula}
			u_H(\omega,s) =  C_1(\omega) \cdot \exp\left(i \frac{\omega}{\K} \log\LL \right)\cdot \left(\uAi + i \uBi\right),
		\end{equation}
		and where $|C_1|(\omega,M,\DD)\approx 1$ and  for any $n\in \mathbb{N}$:
		\begin{equation}
			|\rd^{n}_{\omega}C_1|(\omega) \ls_n 1.
		\end{equation}			
		
	\end{cor}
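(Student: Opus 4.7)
The plan is to read off the $s \to -\infty$ asymptotic behaviour of $\tilde{K}_{\pm}(\omega,s)$ from Proposition~\ref{uH.prop} and match it against the defining boundary condition $u_H(\omega,s) \sim e^{-i\omega s}$ in \eqref{uH.def}, thereby identifying the unique scalar coefficient in front of the combination $\tilde{K}_+ + i\tilde{K}_-$ and tracking its $\LL$-dependence explicitly.

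First, I plug \eqref{w.4} into the formula $\tilde{K}_{\pm}(\omega,s) = (w/w')^{1/2} K_{\pm}(\omega/\kappa_+,\tilde\omega w(s)/\kappa_+)(1+\LL^{-1}c_{\pm})$. Differentiating \eqref{w.4} gives $w'(s)/w(s) \to \kappa_+$ as $s \to -\infty$, so $(w/w')^{1/2} \to \kappa_+^{-1/2}$ with smooth $\omega$-dependence. Meanwhile, the argument of $K_\pm$ satisfies
\[
\frac{\tilde\omega w(s)}{\kappa_+} \;\sim\; \frac{\tilde\omega}{\kappa_+}\, e^{\kappa_+(s-s_I)+\epsilon_{-\infty}(\omega,\LL)} \qquad \text{as }s\to-\infty,
\]
which tends to $0$, so I can invoke the small-argument asymptotics of the modified Bessel functions $K_\pm(\nu,z)$ from Appendix~\ref{bessel.section} (these are of the form $K_\pm(\nu,z) = \gamma_\pm(\nu)\, z^{\pm i\nu}(1+O(z^2))$ for explicit $\gamma_\pm$ with $|\gamma_\pm|\approx 1$ smooth in $\nu$). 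Substituting $z^{\pm i\omega/\kappa_+} = \exp(\pm i(\omega/\kappa_+)[\kappa_+(s-s_I)+\epsilon_{-\infty}+\log(\tilde\omega/\kappa_+)])$ and using $s_I = \kappa_+^{-1}\log(r_+\tilde\omega/\LL)+O(\LL^{-2})$ from \eqref{sI} produces a factor $\LL^{\mp i\omega/\kappa_+}=\exp(\mp i(\omega/\kappa_+)\log\LL)$, i.e.
\[
\tilde{K}_\pm(\omega,s) \;=\; \mathcal{A}_\pm(\omega)\, e^{\mp i(\omega/\kappa_+)\log\LL}\, e^{\pm i\omega s}\bigl(1+o(1)\bigr)\quad \text{as }s\to-\infty,
\]
with $|\mathcal A_\pm(\omega)|\approx 1$ and $\rd_\omega^n \mathcal A_\pm$ bounded, by smoothness of $\gamma_\pm$, $\epsilon_{-\infty}$ (the latter from Proposition~\ref{uH.prop}), and of $s_I$ in $\omega$.

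Next, since both $\tilde{K}_+,\tilde{K}_-$ solve \eqref{eq:mainV} and are linearly independent (as is standard for modified Bessel pairs; their Wronskian is easily read off from the leading asymptotics above), the two-dimensional solution space at fixed $\omega$ is spanned by $\{\tilde{K}_+,\tilde{K}_-\}$. I therefore write $u_H = \alpha_+(\omega)\tilde{K}_+ + \alpha_-(\omega)\tilde{K}_-$ and match to \eqref{uH.def}: the $e^{i\omega s}$ coefficient must vanish and the $e^{-i\omega s}$ coefficient must equal $1$, giving a $2\times 2$ linear system in $(\alpha_+,\alpha_-)$ whose coefficients are (up to the unimodular phase $e^{\mp i(\omega/\kappa_+)\log\LL}$) the $\mathcal{A}_\pm$. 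Because the dominant balance is identical in the two equations, the unique solution takes the form $\alpha_\pm(\omega) = C_1(\omega)\,e^{i(\omega/\kappa_+)\log\LL}\cdot\{1\text{ or }i\}$, which yields exactly \eqref{u_H.formula}.

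Finally I verify the two claims on $C_1$: the bound $|C_1(\omega)|\approx 1$ follows from $|\mathcal A_\pm(\omega)|\approx 1$; and smoothness $|\rd_\omega^n C_1|\ls_n 1$ follows from the smoothness in $\omega$ of each ingredient -- $\gamma_\pm(\omega/\kappa_+)$, $\epsilon_{-\infty}(\omega,\LL)$ (whose $\omega$-derivatives are $O(\LL^{-2})$), $\tilde\omega(\omega)$, $s_I(\omega,\LL)$, and the $c_\pm(\omega,s)$-factors evaluated at $s=-\infty$ where they vanish. The only mildly subtle point is that the $\LL$-dependence appearing in $\epsilon_{-\infty}$ and in $s_I$ must be completely absorbed into the prefactor $\exp(i(\omega/\kappa_+)\log\LL)$; I expect this to be the main (but still routine) bookkeeping step, as any residual $\LL$-dependence in $C_1$ would be a real obstruction given the uniform-in-$\LL$ bounds claimed. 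This matching works because the $\LL^{\pm i\omega/\kappa_+}$ factor and the $\exp(\pm i\omega\epsilon_{-\infty}/\kappa_+)$ factor are precisely the ones read off from \eqref{w.4}, and no other $\LL$-dependent scalar survives to order $O(1)$.
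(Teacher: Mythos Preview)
Your overall strategy---read off the $s\to -\infty$ asymptotics of $\tilde K_\pm$ from Proposition~\ref{uH.prop} together with the Bessel appendix, then match against the defining condition $u_H\sim e^{-i\omega s}$---is exactly what the paper's one-line proof intends. However, there is a concrete error in your intermediate step that, if actually followed, produces the wrong formula.

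You assert that the individual functions satisfy $K_\pm(\nu,z)=\gamma_\pm(\nu)\,z^{\pm i\nu}(1+O(z^2))$ as $z\to 0$. This is false: the small-argument asymptotics recorded in Appendix~\ref{bessel.section} are
\[
K_+(\nu,x)\sim c(\nu)\sin\bigl(\nu\log(x/2)-\theta_\nu\bigr),\qquad
K_-(\nu,x)\sim -c(\nu)\cos\bigl(\nu\log(x/2)-\theta_\nu\bigr),
\]
so each of $K_+,K_-$ is a genuine \emph{mixture} of $x^{i\nu}$ and $x^{-i\nu}$. Consequently your displayed claim $\tilde K_\pm(\omega,s)=\mathcal A_\pm(\omega)\,e^{\mp i(\omega/\kappa_+)\log\LL}\,e^{\pm i\omega s}(1+o(1))$ is incorrect. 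If you actually solved the $2\times 2$ matching system under that assumption, the requirement that the $e^{+i\omega s}$ coefficient vanish would force $\alpha_+=0$ and yield $u_H\propto\tilde K_-$ alone, \emph{not} the combination $\tilde K_- + i\tilde K_+$ in \eqref{u_H.formula}. The sentence ``the unique solution takes the form $\alpha_\pm=C_1 e^{i(\omega/\kappa_+)\log\LL}\cdot\{1\text{ or }i\}$'' does not follow from what you wrote above it.

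The fix is short and is what the paper implicitly uses. From \eqref{K+.def}--\eqref{K-.def} one has $K_+\pm iK_-=\tfrac{\pi}{\sinh(\pi\nu)}I_{\pm i\nu}$, so it is the \emph{combinations} that are pure exponentials near $z=0$:
\[
K_-(\nu,z)+iK_+(\nu,z)=i\bigl(K_+-iK_-\bigr)\ \propto\ I_{-i\nu}(z)\ \sim\ \frac{(z/2)^{-i\nu}}{\Gamma(1-i\nu)}.
\]
Substituting $z=\tilde\omega\,w(s)/\kappa_+$ and using \eqref{w.4} and \eqref{sI} exactly as you did then gives directly
\[
\tilde K_-+i\tilde K_+ \;\sim\; \mathcal B(\omega)\,e^{-i(\omega/\kappa_+)\log\LL}\,e^{-i\omega s}\qquad (s\to -\infty),
\]
with $|\mathcal B|\approx 1$ and $|\partial_\omega^n\mathcal B|\lesssim_n 1$ (smoothness of $\Gamma(1-i\nu)^{-1}$, $\epsilon_{-\infty}$, $\tilde\omega$, $c_\pm(\cdot,-\infty)=0$). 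Matching to \eqref{uH.def} gives \eqref{u_H.formula} with $C_1=\mathcal B^{-1}$. The remainder of your bookkeeping (the prefactor $(w/w')^{1/2}\to\kappa_+^{-1/2}$, the absorption of the $\LL$-dependence into the explicit phase) is correct.
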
\begin{proof}
		The corollary follows immediately from Proposition~\ref{uH.prop} and the asymptotics of $K_{\pm}$ from Section~\ref{bessel.section}. 
	\end{proof}

	\subsection{WKB estimates  below the second turning point}\label{WKB.section1}

	\begin{prop}\label{WKB.TP1.prop} Let $A>0$ (independent of $\omega$ and $\LL$) and $\epsilon > 0$ be sufficiently small. For any $ s_I + A \leq s \leq \ep \LL^2$, the ODE \eqref{eq:mainV} admits the following two solutions\begin{align}& \tilde{w}_{1,\pm}(\omega,s) = |V|^{-1/4}(s) \exp( \pm\int_{s_I}^{s} |V|^{1/2}(s') ds' ) \left(1+ \tilde{\ep}_{1,\pm}^{WKB}(\omega,s)\right),\\ & |\tilde{\ep}_{1,\pm}^{WKB}|(\omega,s),\ |V|^{-1/2}(s)  |\rd_{s}\tilde{\ep}_{1,\pm}^{WKB}|(\omega,s)\lesssim  e^{-\K A}, \label{WKB.error1}\\ &  \\ & |\rd_{\omega}\tilde{\ep}_{1,\pm}^{WKB}|(\omega,s),\ |V|^{-1/2}(s)  |\rd^2_{\omega s}\tilde{\ep}_{1,\pm}^{WKB}|(\omega,s)\lesssim  \LL^3 e^{-\kappa_+ A},
			\label{WKB.error2} 
			\\ &\tilde{\ep}_{1,+}^{WKB}(s=s_I + A)=\rd_{s}\tilde{\ep}_{1,+}^{WKB}(s=s_I + A)=0,\ \tilde{\ep}_{1,-}^{WKB}(s= \ep \LL^2)=\rd_{s}\tilde{\ep}_{1,-}^{WKB}(s= \ep \LL^2)=0 .  \end{align} 
	\end{prop}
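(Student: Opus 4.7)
The plan is a textbook Liouville--Green construction adapted to the exponentially decaying near-horizon geometry on the left end and to the large-$\LL$ regime on the right end. Set
\[
u_0^{\pm}(s) \doteq |V|^{-1/4}(s)\exp\Bigl(\pm\int_{s_I}^{s}|V|^{1/2}(s')\,ds'\Bigr),
\qquad
\tilde{w}_{1,\pm}(\omega,s) = u_0^{\pm}(s)\bigl(1+\tilde{\ep}_{1,\pm}^{WKB}(\omega,s)\bigr).
\]
A direct computation gives $(u_0^{\pm})'' = (V+F)u_0^{\pm}$ with the standard WKB error $F = \tfrac{5(V')^2}{16V^2} - \tfrac{V''}{4V}$. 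Substituting the ansatz into $u''=Vu$ and writing the first-order form via $\sigma_{\pm} = (u_0^{\pm})^2$ yields $(\sigma_{\pm}\tilde{\ep}{}')' = -F\sigma_{\pm}(1+\tilde{\ep})$, which, integrated twice from the prescribed base points (i.e.\ $s_I+A$ with zero data for the $+$ branch, and $\ep\LL^2$ with zero data for the $-$ branch) produces a Volterra integral equation; its solvability by Picard iteration is handled by the machinery of Appendix~\ref{volterra.section}.

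For the pointwise bound~\eqref{WKB.error1} on the $+$ branch, after a Fubini swap the double integral collapses to
\[
|\tilde{\ep}_{1,+}^{WKB}|(\omega,s) \ls \bigl(1+\|\tilde{\ep}_{1,+}^{WKB}\|_{L^{\infty}}\bigr)\int_{s_I+A}^{s}\frac{|F|(s'')}{|V|^{1/2}(s'')}\,ds'',
\]
because the kernel ratio $\sigma_{+}(s'')/\sigma_{+}(s') = V^{1/2}(s')V^{-1/2}(s'')\exp(-2\int_{s''}^{s'}|V|^{1/2})$ is dominated by the exponentially small factor (whose weight integrates to $\tfrac{1}{2}$). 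I would then control $|F|/|V|^{1/2}$ in two regions. For $s''\in[s_I+A,O(1)]$, the near-horizon asymptotics $V(s)\sim C_+ e^{2\K s}\LL^2/r_+^2$ give $|F|=O(1)$ while $|V|^{1/2}(s_I+A)\gtrsim e^{\K A}$, which produces the exponential gain $e^{-\K A}$. For $s''\gg 1$, $V\sim \LL^2 s^{-2}$ gives $|F|/|V|^{1/2}\ls(\LL s)^{-1}$, and the integral up to $\ep\LL^2$ is $O(\LL^{-1}\log\LL)$, harmless compared to $e^{-\K A}$. The derivative estimate on $\rd_s \tilde{\ep}_{1,+}^{WKB}$ comes directly from the first-order form $\sigma_{+}\tilde{\ep}{}' = -\int F\sigma_{+}(1+\tilde{\ep})$ using the same kernel bound. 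The $-$ branch is symmetric, integrating backward from $\ep\LL^2$ with the pairing of exponentials reversed.

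For the $\omega$-derivative bounds~\eqref{WKB.error2}, I would differentiate the Volterra equation in $\omega$ to obtain a Volterra equation of the same type for $\rd_{\omega}\tilde{\ep}_{1,\pm}^{WKB}$, whose forcing comes from $\rd_{\omega}$ hitting $F$, $\sigma_{\pm}$, the exponential weight, and the endpoint $s_I(\omega)$. The delicate piece is the exponential weight:
\[
\rd_{\omega}\exp\Bigl(\pm 2\int_{s_I(\omega)}^{s}|V|^{1/2}\Bigr) \sim \mp\int_{s_I}^{s}\frac{2\omega}{|V|^{1/2}}\,ds' + (\text{bounded endpoint term}),
\]
using $\rd_{\omega}V = -2\omega = O(1)$ and $\rd_{\omega}s_I = O(1)$ (implicit differentiation of $V(s_I)=\K^2/4$ together with $V_s(s_I)=O(1)$). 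Since $|V|^{-1/2}(s)\sim s/\LL$ for $s\gg 1$, the worst such integral over $s\ls\ep\LL^2$ is $O(\LL^3)$. All other $\omega$ factors are milder, and Gronwall on the differentiated Volterra equation absorbs the $\LL^3$ into the prefactor while the kernel continues to supply the $e^{-\K A}$ smallness.

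The main obstacle is exactly this $\omega$-bookkeeping across the very long interval $[s_I+A,\ep\LL^2]$. Naively the exponential factor $\exp(\pm 2\int|V|^{1/2})$ is of size $\exp(O(\LL\log\LL))$, and its $\omega$-derivative looks catastrophic; the cancellation that makes the proof work is that in the Volterra kernel the ratio $\sigma_{\pm}(s'')/\sigma_{\pm}(s')$ pairs the numerator and denominator exponentials, so only the polynomial prefactor $V^{-1/4}$ and the $\rd_{\omega}$-insertion of the integrable weight $|V|^{-1/2}$ survive, yielding the tame $\LL^3$ loss rather than an exponential one. The zero-data initialization at $s_I+A$ (respectively $\ep\LL^2$) is chosen precisely so that the boundary contributions to $\rd_{\omega}$ vanish, which is the last ingredient needed to close the iteration.
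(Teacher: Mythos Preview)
Your approach is essentially the same as the paper's: both set up the Liouville--Green ansatz, reduce to a Volterra equation for $\tilde{\ep}_{1,\pm}^{WKB}$ (the paper writes it in the $\xi=\int|V|^{1/2}ds$ variable, citing Olver, Chapter~6, \S2.4, and then invokes the Volterra theorems of Appendix~\ref{volterra.section}; you write it in $s$ via the $\sigma_{\pm}$-form), and control the total variation $\int|F|/|V|^{1/2}\,ds=\int|\psi|\,d\xi$ by splitting into the near-horizon region $[s_I+A,O(1)]$ (giving $e^{-\K A}$) and the large-$s$ region $[O(1),\ep\LL^2]$ (giving $O(\LL^{-1}\log\LL)$, absorbed for $\LL$ large). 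For $\rd_{\omega}$ both identify the dominant contribution $\int|V|^{-1/2}\,ds'=O(\LL^3)$ coming from $\rd_{\omega}|V|^{1/2}=-\omega|V|^{-1/2}$; the paper records this as $|\rd_{\omega}\xi_{\pm}|\ls\LL^3$.

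One small correction to your closing remark: the base point $s_I(\omega)+A$ for the $+$ branch \emph{does} depend on $\omega$, so differentiating the Volterra equation in $\omega$ does produce a boundary term from the moving lower limit. It is not literally zero, but since $\rd_{\omega}s_I=O(1)$ and the integrand is bounded there, this term is $O(1)$ and is dominated by the $\LL^3$ bulk contribution, so the iteration still closes as you claim.
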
\begin{proof}

		By Section 2.4 of Chapter 6 in~\cite{olver}, we find that $\epsilon^{WKB}_{1,\pm}$ solves the following Volterra equation
		\begin{equation}\label{tosolvethewkbwkb}
			\tilde{\ep}^{WKB}_{1,\pm}\left(\omega,s\right) = \frac{1}{2i}\int^{\xi_{\pm}(s)}_{0}\left(1-e^{2i\left(v-\xi_{\pm}(s)\right)}\right)\psi(v)\left(1+\tilde{\ep}^{WKB}_{1,\pm}\right)\, dv,
		\end{equation}
		where $dv$ refers to an integration with respect to the $\xi_{\pm} = \int^{s}_{S_{\pm}} |V|^{1/2}(s) ds$ variable, where $S_{-}= s_I + A$, $S_{+}= \ep \LL^2$, and where $\psi$ is defined by
		$$\psi(v) \doteq - V^{-3/4}\frac{\partial^2}{\partial s^2}\left(V^{-1/4}\right). $$

		We remark that $\int_{0}^{\xi_{\pm}(s)} |\psi|(v) dv = \int_{S_{\pm}}^{s} |\psi|(\xi(s)) |V|^{1/2}(s) ds$ and $\rd_{\omega} V = -2\omega$. 
		
		In the region, $s_I +A \leq s \leq \ep \LL^2$, observe the following: 
		\begin{align*}
			&	V(s) \approx \LL^2 \left[ [1+s]^{-2} 1_{s\geq 0} + e^{2\K s} 1_{s\leq 0}\right],\\ &  	|\psi| \ls \LL^{-2}\left[ 1+ e^{-2\K s} 1_{s\leq 0}\right],\\ & 	|\rd_{\omega}\psi| \ls \LL^{-4}\left[ [1+s^2]  1_{s\geq 0} + e^{-4\K s} 1_{s\leq 0}\right],\\ &  |\xi_{\pm}|(s) \ls \LL \log \LL ,\ |\rd_{\omega}\xi_{\pm}|(s) \ls \LL^3.
		\end{align*}

		Thus, we obtain \eqref{WKB.error1} after applying Theorem~\ref{thm:volterra} with $K(\xi,v)= (2i)^{-1} (1- e^{2i(v-\xi)})$, $P_0$, $P_1$ and $Q$ to be constants (independent of $\omega$ and $\LL$) and  $\Phi(s)=  e^{-\K A}$. Similarly, after taking a $\rd_{\omega}$ derivative, 
		we obtain  \eqref{WKB.error2} as an application of Theorem~\ref{thm:volterraparam}.

	\end{proof}
	
	We also note that, evaluating the expression of Proposition~\ref{WKB.TP1.prop} at $s=0$, we obtain:  \begin{equation}\begin{split}\label{WKB.wronskian.prelim}
			&	\bigl| W(\tilde{w}_{1,+}, \tilde{w}_{1,-})-2\bigr| \ls e^{-\K A}, \\ &  |\rd_{\omega}W(\tilde{w}_{1,+}, \tilde{w}_{1,-})|\ls  \LL^3 e^{-\kappa_+ A}.
		\end{split}
	\end{equation} Keeping \eqref{WKB.wronskian.prelim} in mind, we define the following solutions of \eqref{eq:mainV} as $w_{1,\pm}(\omega,s)= \frac{\sqrt{2}\ \tilde{w}_{1,\pm}(\omega,s)}{\sqrt {W(\tilde{w}_{1,+}, \tilde{w}_{1,-})}}$ and \begin{equation}
		w_{1,\pm}(\omega,s)=  |V|^{-1/4}(s) \exp( \pm\int_{s_I}^{s} |V|^{1/2}(s') ds' ) \left(1+ \ep_{1,\pm}^{WKB}(\omega,s)\right)
	\end{equation} so that \begin{equation}\label{WKB.wronskian}
		W(w_{1,+}, w_{1,-})=2.
	\end{equation} We will give up on $\tilde{w}_{1,+}(\omega,s)$  and only work with the solutions $w_{1,\pm}(\omega,s)$ defined as such in what follows. Note that $\ep_{1,\pm}^{WKB}(\omega,s)$ also obeys the  same estimates as  $\tilde{\ep}_{1,\pm}^{WKB}(\omega,s)$, i.e.\ \eqref{WKB.error1} and \eqref{WKB.error2}.
	\begin{cor}\label{cor.TP1}
		We write\begin{align}&\uAi(\omega,s) = \alpha_{1,+}(\omega) w_{1,+}(\omega,s)+  \alpha_{1,-}(\omega) w_{1,-}(\omega,s),\\&\uBi(\omega,s) = \beta_{1,+}(\omega) w_{1,+}(\omega,s)+  \beta_{1,-}(\omega) w_{1,-}(\omega,s), \\ & u_H(\omega,s)= C_1(\omega) e^{i \frac{\omega}{\kappa_+} \log( \LL)} \left( [\alpha_{1,+}(\omega)+i\beta_{1,+}(\omega)] w_{1,+}(\omega,s)+[\alpha_{1,-}(\omega)+i\beta_{1,-}(\omega)] w_{1,-}(\omega,s)\right)\end{align} and the following estimates hold, defining $C_I(\omega,\LL):= \LL^{-1}\int_{s_I}^{1} |V|^{1/2}(s) ds$:  \begin{align} &  \alpha_{1,+}= 			\LL
			\exp(-2 C_I \LL) \underbrace{\tilde{\alpha}_{1,+}}_{=O_{A(1)}}, \\ &  \alpha_{1,-}  = E_- \cdot (1+O(e^{-\K A})),\\ &  \beta_{1,+}=  E_+ \cdot (1+O(			e^{-\K A}
			)), \\ &\beta_{1,-}  = O_A(1),\\& E_{\pm}(\omega,\LL) =\exp\left(\mp\frac{\K^2}{4}\int_{s_I}^{0} \frac{ds}{|V-\frac{\K^2}{4}|^{1/2}(s)+|V|^{1/2}(s)}\right) \approx 1,\\ & C_I(\omega,\LL) \approx 1,\ |\rd_{\omega} C_I|(\omega,\LL) \ls \LL^{-1}.  \end{align} Moreover, we have \begin{equation} \label{domega.connection1WKB}
			|\rd_{\omega} {\alpha}_{1,\pm}|(\omega,s),\ |\rd_{\omega} \beta_{1,\pm}|(\omega,s) \lesssim_A \LL^3 .
	\end{equation}\end{cor}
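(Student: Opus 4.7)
The plan is to match the Bessel-type representation of $\uAi$, $\uBi$ from Proposition~\ref{uH.prop} with the WKB representation of $w_{1,\pm}$ from Proposition~\ref{WKB.TP1.prop} in the overlap region $s \in [s_I + A, O(1)]$, where both are accurate. Since $W(w_{1,+}, w_{1,-}) = 2$ by \eqref{WKB.wronskian}, the coefficients $\alpha_{1,\pm}$ and $\beta_{1,\pm}$ are extracted as Wronskians with $w_{1,\mp}$, or equivalently by comparing asymptotic expressions at any convenient matching point $s_\star$ in this overlap (for concreteness, $s_\star = 0$).

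The main step is the dominant matching. In the overlap region $w(s) > 1$, and by taking $A$ sufficiently large we also ensure that $z := \tilde{\omega} w(s)/\K$ is large, so one can apply the large-argument asymptotics of $K_\pm(\omega/\K, z)$ from Appendix~\ref{bessel.section}. Combined with \eqref{w.1}, these yield, up to $O(e^{-\K A}) + O(\LL^{-1})$ errors, an expression of the schematic form
\[
\tilde{K}_\pm(\omega, s) = |V - \tfrac{\K^2}{4}|^{-1/4}(s)\, e^{\pm i \tilde{\omega}\pi/(2\K)}\exp\!\left(\pm \int_{s_I}^s |V - \tfrac{\K^2}{4}|^{1/2}(s') ds'\right)\bigl(1 + O(e^{-\K A}) + O(\LL^{-1})\bigr).
\]
To convert the phase integral from $|V-\K^2/4|^{1/2}$ to $|V|^{1/2}$, I would use the algebraic identity $|V|^{1/2} - |V - \K^2/4|^{1/2} = (\K^2/4)/(|V|^{1/2} + |V-\K^2/4|^{1/2})$; the resulting correction integral from $s_I$ to $0$ converges (its denominator is $\gtrsim \K$ near $s_I$ and $\gtrsim \LL$ in the bulk), producing exactly the factors $E_\pm$ defined in the statement. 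Matching with $w_{1,\pm} = |V|^{-1/4}\exp(\pm\int_{s_I}^s|V|^{1/2})(1 + O(e^{-\K A}))$ and using that $(V/(V - \K^2/4))^{1/4} = 1 + O(\LL^{-2})$ at $s_\star \sim 1$ immediately gives the dominant coefficients $\alpha_{1,-} = E_-(1 + O(e^{-\K A}))$ and $\beta_{1,+} = E_+(1 + O(e^{-\K A}))$.

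The subdominant coefficients $\alpha_{1,+}$ and $\beta_{1,-}$ pick up the small admixture of $w_{1,+}$ (resp.\ $w_{1,-}$) in $\uAi$ (resp.\ $\uBi$), arising from (i) the $O(\LL^{-1})$ remainder $c_\pm$ in Proposition~\ref{uH.prop}, (ii) subdominant terms in the large-argument Bessel expansion, and (iii) the algebraic correction $(V/(V-\K^2/4))^{1/4} - 1$. At the matching point $s_\star \sim 1$ each of these is bounded by a constant times $\exp(-2\int_{s_I}^{s_\star}|V|^{1/2}) = e^{-2C_I\LL}\cdot O(1)$; carefully tracking the $\LL$ prefactors arising from $|V|^{\pm 1/4}(s_\star)$ and from the behaviour of $c_\pm$ (whose $\partial_s$ derivative is $O(\LL)$ in this range, by Proposition~\ref{uH.prop}) yields $|\alpha_{1,+}| \ls_A \LL e^{-2C_I \LL}$ and $|\beta_{1,-}| \ls_A 1$. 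The formula for $u_H$ in the $\{w_{1,+}, w_{1,-}\}$ basis then follows at once from the identity $u_H = C_1 e^{i\omega \log \LL/\K}(\uAi + i \uBi)$ of Corollary~\ref{uH.cor}. For the $\partial_\omega$ bound \eqref{domega.connection1WKB}, I would differentiate the matching relation and track $\partial_\omega s_I = O(1/\omega)$, $\partial_\omega \int_{s_I}^s |V|^{1/2} = O(\LL)$, $\partial_\omega c_\pm = O(\LL^{-2})$, and $\partial_\omega \epsilon_{1,\pm}^{WKB} = O(\LL^3)$ from Proposition~\ref{WKB.TP1.prop}, the last being the dominant contribution.

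The main obstacle is pinning down the precise normalization and phase factors of $K_\pm$ (e.g.\ the $e^{\pm i\tilde{\omega}\pi/(2\K)}$ arising from the asymptotic at $w\gg 1$) so that they combine with the explicit constant $C_1$ from Corollary~\ref{uH.cor} to yield exactly $E_\pm$, rather than $E_\pm$ times some additional bounded multiplicative function. The cleanest way to fix this is to reconcile the matching with the horizon boundary condition $u_H \sim e^{-i\omega s}$ as $s\to -\infty$, which already determines all normalizations globally and thus provides an independent check of the overall constants.
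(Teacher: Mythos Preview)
Your overall strategy---extracting the connection coefficients as Wronskians and evaluating them in the overlap region---is exactly what the paper does, and your treatment of the dominant coefficients $\alpha_{1,-}$, $\beta_{1,+}$ and of the $\partial_\omega$-bound is essentially correct. (One minor slip: the factor you write as $e^{\pm i\tilde\omega\pi/(2\K)}$ should be real. The modified Bessel functions $K_\pm$ are real-valued for real argument, and the $\pi/2$ in the asymptotic \eqref{w.asymp} for $w(s)$ contributes a real shift $e^{\pm\tilde\omega\pi/(2\K)}$, not a complex phase.)

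The genuine gap is in the subdominant coefficient $\beta_{1,-}$. Matching at a single point $s_\star \sim 0$ or $s_\star \sim 1$ cannot yield $\beta_{1,-} = O_A(1)$. Indeed $\beta_{1,-} = -\tfrac12 W(\tilde K_+, w_{1,+})$, and at such an $s_\star$ both $\tilde K_+$ and $w_{1,+}$ are exponentially \emph{large}, of order $\exp(+\int_{s_I}^{s_\star}|V|^{1/2}) = e^{O(\LL)}$. The $O(e^{-\K A})$ and $O(\LL^{-1})$ relative errors in your asymptotic representations of $\tilde K_+$ and $w_{1,+}$, once multiplied by this growing prefactor, produce absolute errors of size $e^{O(\LL)}$ in the Wronskian---far larger than the $O_A(1)$ you are trying to establish. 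Your sentence ``each of these is bounded by a constant times $\exp(-2\int_{s_I}^{s_\star}|V|^{1/2})$'' is correct for $\alpha_{1,+}$ but wrong for $\beta_{1,-}$: the exponential goes the other way. The paper's proof resolves this asymmetry by evaluating the two ``same-sign'' Wronskians at \emph{opposite} ends of the overlap interval: $W(\tilde K_-, w_{1,-})$ is evaluated at $s \sim 1$, where both factors are exponentially small and the bound $O(\LL\, e^{-2C_I\LL})$ is visible, whereas $W(\tilde K_+, w_{1,+})$ is evaluated at $s = s_I + A$, where both factors are $O_A(1)$ and hence so is their Wronskian. The mixed Wronskians $W(\tilde K_\pm, w_{1,\mp})$ can be evaluated anywhere in between, since the exponentials cancel to leading order.
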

	
	\begin{proof} For this computation, we  denote (see Appendix~\ref{bessel.section}) \begin{equation}
			m_{\pm}(x)= \sqrt{x} \exp(\mp x) K_{\pm}(x)= m^{\infty}_{\pm}\cdot (1\pm\frac{\nu^2+\frac{1}{4}}{2 x} + O(x^{-2})),
		\end{equation} with   $m^{\infty}_{-}=\sqrt{\frac{\pi}{2}}$ and $m^{\infty}_{+}=\frac{1}{\sinh(\pi \nu)}\sqrt{\frac{\pi}{2}}$. We assume $ s_I+A\leq  s \leq 0$, for $A>0$ large enough (so that $\ep_{WKB}(s)< \frac{1}{2}$ for any $s_I+A \leq s\leq 0$). 
		Note also that $w(s),\ w'(s)  \approx \LL e^{\K s} \gtrsim e^{\K A} $.\begin{align*}
			&W(	\tilde{K}_{\pm},  w_{1,\pm} )\\ & =   W\left([ (\frac{ \K}{ \tilde{\omega}\ w'(s)})^{1/2}  \exp(\pm\frac{\tilde{\omega}\ w(s)}{\K}) m_{\pm}(\frac{\tilde{\omega}\ w(s)}{\K}) \left( 1+ \LL^{-1} c_{\pm}(\omega,s)\right), |V|^{-1/4}(s) \exp( \pm \int_{s_I}^{s} |V|^{1/2}(s') ds' )[1+\ep^{WKB}_{1,\pm}]\right)\\ & =   W( \frac{\bigl| V- \frac{\K^2}{4} \bigr|^{-1/4}(s)}{|1-w^{-2}(s)|^{-1/4}}  \exp(\pm \frac{\tilde{\omega}\ w(s)}{\K}) m_{\pm}(\frac{\tilde{\omega}\ w(s)}{\K}) \left( 1+ \LL^{-1} c_{\pm}(\omega,s)\right), |V|^{-1/4}(s) \exp(\pm \int_{s_I}^{s} |V|^{1/2}(s') ds' )[1+\ep^{WKB}_{1,\pm}])\\ & =  (1+\LL^{-1} c_{\pm}(\omega,s)) \frac{\bigl| V- \frac{\K^2}{4} \bigr|^{-1/4}(s) }{|1-w^{-2}(s)|^{-1/4}} \exp(\pm \frac{\tilde{\omega}\ w(s)}{\K}\pm \int_{s_I}^{s} |V|^{1/2}(s') ds') m_{\pm}(\frac{\tilde{\omega}\ w(s)}{\K}) |V|^{-1/4}(s)\cdot [1+\ep^{WKB}_{1,\pm}(s)]\\ & \cdot (I+II+III+IV_A+IV_B+IV_{C})(s), \end{align*} with \begin{align*}& I= \frac{\LL^{-1} \rd_{s}c_{\pm}(\omega,s)}{1+\LL^{-1} c_{\pm}(\omega,s)}= O( e^{\K s}),\\ &II =  \frac{\tilde{\omega}}{\K} \frac{m_{\pm}'(\frac{\tilde{\omega} w(s)}{\K})\cdot w'(s)}{m_{\pm}(\frac{\tilde{\omega} w(s)}{\K})} = \mp \frac{\frac{\tilde{\omega}}{\K}w'(s)}{2w^2(s)}+ O (w'(s) \cdot w^{-3}(s))		=O(\LL^{-1} e^{-\K s}),\\ & III=- \frac{w'(s)}{2w^3(s)}[1-w^{-2}(s)]^{-1} =O(\LL^{-2} e^{-2\K s}),\\ & IV_A=\pm \left[ \frac{|\tilde{\omega}| w'(s)}{\K}- |V|^{1/2}(s)\right] =\pm  |V|^{1/2}(s) \left[ (1-\frac{\K^2}{4|V|(s)})^{1/2} [1-w^{-2}(s)]^{-1/2}-1 \right]=O(\LL^{-1}e^{-\K s})
			,\\ & IV_B = \frac{1}{4} \left[ \frac{|V|'(s)}{|V|(s)} -\frac{|V|'(s)}{|V-\frac{\K^2}{4}|(s)}\right]=-\frac{\K^2\ |V|'(s)}{16|V|(s) \cdot|V-\frac{\K^2}{4}|(s)}= O(\LL^{-2} e^{-2\K s}),\\ & IV_{C}= \frac{\rd_s \ep^{WKB}_{1,\pm}(s)}{1+ \ep^{WKB}_{1,\pm}(s)}=O(\LL e^{\K (s-A)}) .
		\end{align*} From the above, we deduce \begin{equation*}
			W(	\tilde{K}_{\pm},  w_{1,\pm} )=  |V|^{-1/2}(s)  \exp(\pm \frac{|\tilde{\omega}|\ w(s)}{\K}\pm \int_{s_I}^{s} |V|^{1/2}(s') ds') O \left( 1+ \LL e^{\K (s-A)} + \LL^{-1}e^{-\K s}\right).
		\end{equation*} Thus, we have \begin{equation}
			W(	\tilde{K}_{+},  w_{1,+} )=   W(	\tilde{K}_{+},  w_{1,+} )(s_I+A)=O_A(1),
		\end{equation}
		\begin{equation}
			W(	\tilde{K}_{-},  w_{1,-} )=   W(	\tilde{K}_{-},  w_{1,-} )(s=1)=O(\LL \exp(-2 \int_{s_I}^1 |V|^{1/2}(s) ds)).
		\end{equation}
		Now, we turn to the other sets of Wronskians:
		\begin{align*}
			&W(	\tilde{K}_{\pm},  w_{1,\mp} )\\ & =   W\left([ (\frac{ \K}{ |\tilde{\omega}|\ w'(s)})^{1/2}  \exp(\pm\frac{|\tilde{\omega}|\ w(s)}{\K}) m_{\pm}(\frac{|\tilde{\omega}|\ w(s)}{\K}) \left( 1+ \LL^{-1} c_{\pm}(\omega,s)\right), |V|^{-1/4}(s) \exp( \mp \int_{s_I}^{s} |V|^{1/2}(s') ds' )[1+\ep^{WKB}_{1,\mp}]\right)\\ & =   W( \frac{\bigl| V- \frac{\K^2}{4} \bigr|^{-1/4}(s)}{|1-w^{-2}(s)|^{-1/4}}  \exp(\pm \frac{|\tilde{\omega}|\ w(s)}{\K}) m_{\pm}(\frac{|\tilde{\omega}|\ w(s)}{\K}) \left( 1+ \LL^{-1} c_{\pm}(\omega,s)\right), |V|^{-1/4}(s) \exp(\mp \int_{s_I}^{s} |V|^{1/2}(s') ds' )[1+\ep^{WKB}_{1,\mp}])\\ & =  (1+\LL^{-1} c_{\pm}(\omega,s)) \frac{\bigl| V- \frac{\K^2}{4} \bigr|^{-1/4}(s) }{|1-w^{-2}(s)|^{-1/4}} \exp(\pm \frac{|\tilde{\omega}|\ w(s)}{\K}\mp \int_{s_I}^{s} |V|^{1/2}(s') ds') m_{\pm}(\frac{|\tilde{\omega}|\ w(s)}{\K}) |V|^{-1/4}(s)\cdot [1+\ep^{WKB}_{1,\mp}]\\ & \cdot (I+II+III+IV_A+IV_B+IV_{C})(s),\end{align*} with \begin{align*}& I= \frac{\LL^{-1} \rd_{s}c_{\pm}(\omega,s)}{1+\LL^{-1} c_{\pm}(\omega,s)}= O( e^{\K s}),\\ &II =  \frac{\tilde{\omega}}{\K} \frac{m_{\pm}'(\frac{|\tilde{\omega}| w(s)}{\K})\cdot w'(s)}{m_{\pm}(\frac{|\tilde{\omega}| w(s)}{\K})} = \mp \frac{\frac{\tilde{\omega}}{\K}w'(s)}{2w^2(s)}+ O (w'(s) \cdot w^{-3}(s))
			=O(\LL^{-1} e^{-\K s}),\\ & III=- \frac{w'(s)}{2w^3(s)}[1-w^{-2}(s)]^{-1} =O(\LL^{-2} e^{-2\K s}),\\ & IV_A=\pm \left[ \frac{|\tilde{\omega}| w'(s)}{\K}+ |V|^{1/2}(s)\right] =\pm  2 |V|^{1/2}(s) +O(\LL^{-1}e^{-\K s})
			,\\ & IV_B = \frac{1}{4} \left[ \frac{|V|'(s)}{|V|(s)} -\frac{|V|'(s)}{|V-\frac{\K^2}{4}|(s)}\right]=-\frac{\K^2\ |V|'(s)}{16|V|(s) \cdot|V-\frac{\K^2}{4}|(s)}= O(\LL^{-2} e^{-2\K s}),\\ & IV_{C}= \frac{\rd_s \ep^{WKB}_{1,\mp}(s)}{1+ \ep^{WKB}_{1,\mp}(s)}=O(\LL e^{\K (s-A) }) .
		\end{align*}
		Note also that \begin{equation}\begin{split}
				\frac{\bigl| V- \frac{\K^2}{4} \bigr|^{-1/4}(s) }{|1-w^{-2}(s)|^{-1/4}} \exp(\pm \frac{|\tilde{\omega}|\ w(s)}{\K}\mp \int_{s_I}^{s} |V|^{1/2}(s') ds') m_{\pm}(\frac{|\tilde{\omega}|\ w(s)}{\K}) |V|^{-1/4}(s)\cdot [1+\ep^{WKB}_{1,\mp}(s)]\\= m_{\pm}^{\infty}|V|^{-1/2}(s)  \exp\left(\pm\int_{s_I}^{0} [|V-\frac{\K^2}{4}|^{1/2}(s)-|V|^{1/2}(s)]ds\right)\left( 1+ O(e^{-\K A})\right)\\ = m_{\pm}^{\infty}|V|^{-1/2}(s)  \exp\left(\mp\frac{\K^2}{4}\int_{s_I}^{0} \frac{ds}{|V-\frac{\K^2}{4}|^{1/2}(s)+|V|^{1/2}(s)}\right)\left( 1+ O(e^{-\K A})\right)
			\end{split}
		\end{equation}

		From the above, we deduce \begin{equation*}
			W(	\tilde{K}_{\pm},  w_{1,\mp} ) =\pm 2  \underbrace{ m_{\pm}^{\infty}  \exp\left(\mp\frac{\K^2}{4}\int_{s_I}^{0} \frac{ds}{|V-\frac{\K^2}{4}|^{1/2}(s)+|V|^{1/2}(s)}\right)}_{:=  E_{\pm}}  \left( 1+ O(e^{-\K A})\right).
		\end{equation*} Thus, evaluating the above at $s=0$, we have \begin{equation}
			W(	\tilde{K}_{\pm},  w_{1,\mp} )=   \pm 2E_{\pm} \cdot \left(1+O(e^{-\K A})\right).
		\end{equation}

		Now, we turn to the $\rd_{\omega}$ estimates. We will get content with sub-optimal bounds, which will turn out to suffice. Note that in the view of Proposition~\ref{WKB.TP1.prop}, Proposition~\ref{uH.prop} and the identity $\rd_{\omega} V = -2\omega$, we have, for $0 \geq s\geq s_{I}+A$ with $A\gg 1$ (recall that in this regime $w(s) \gg 1$): \begin{align*} &  \frac{| w_{1,\pm}|(\omega,s)}{|V|^{-1/4}(s)  \exp(\pm \int_{s_I}^{s} |V|^{1/2}(s')ds')} \ls 1\\
			& \frac{|\rd_{\omega} w_{1,\pm}|(\omega,s)}{|V|^{-1/4}(s)  \exp(\pm \int_{s_I}^{s} |V|^{1/2}(s')ds')} \ls \left(|V|^{-1}(s)+\int_{s_I}^{s} |V|^{-1/2}(s')ds' + |\rd_{\omega} \ep^{WKB}_{1,\pm}|(\omega,s)  \right)\\ 	& \frac{|\rd_{\omega s}^2 w_{1,\pm}|(\omega,s)}{|V|^{-1/4}(s)  \exp(\pm \int_{s_I}^{s} |V|^{1/2}(s')ds')} \ls\\ & \left( |\rd^2_{\omega s} \ep^{WKB}_{1,\pm}|+ [|V|^{1/2}+ |V|^{-1}][|\rd_{\omega} \ep^{WKB}_{1,\pm}|+\int_{s_I}^{s} |V|^{-1/2}(s')ds']+[|V|^{-1}(s)+\int_{s_I}^{s} |V|^{-1/2}(s')ds']|\rd_{s}  \ep^{WKB}_{1,\pm}|  \right),\end{align*} and \begin{align*}  & \frac{| \tilde{K}_{\pm}|(\omega,s)}{|V|^{-1/4}(s)  \exp(\pm \frac{\tilde{\omega}\cdot w(s)}{\kappa_+})} \ls 1,\ \frac{| \rd_s\tilde{K}_{\pm}|(\omega,s)}{|V|^{-1/4}(s)  \exp(\pm \frac{\tilde{\omega}\cdot w(s)}{\kappa_+})} \ls \frac{|V'|(s)}{|V|(s)} + |V(s)-\frac{\kappa_+^2}{4}|^{1/2},\\ & |\rd_{\omega} w|(s)\lesssim \int_{s_I}^{s} |V-\frac{\kappa_+^2}{4}|^{1/2}(s') ds'+\int_{s_I}^{s} |V-\frac{\kappa_+^2}{4}|^{-1/2}(s') ds',\\ & |\rd_{\omega s}^2 w|(s)\lesssim  |V-\frac{\kappa_+^2}{4}|^{-1/2}(s) +  w^{-3}(s)|V-\frac{\kappa_+^2}{4}|^{1/2}(s)\ |\rd_{\omega} w|,\\ &  \frac{| \rd_{\omega}\tilde{K}_{\pm}|(\omega,s)}{|V|^{-1/4}(s)  \exp(\pm \frac{\tilde{\omega}\cdot w(s)}{\kappa_+})} \ls |V|^{-1}(s)+ |\rd_{\omega} w|(s)+ |w|(s),\\  &  \frac{| \rd_{\omega s}^2\tilde{K}_{\pm}|(\omega,s)}{|V|^{-1/4}(s)  \exp(\pm \frac{\tilde{\omega}\cdot w(s)}{\kappa_+})} \ls  \frac{ |V'|(s)}{|V|^2(s)}+ |\rd_{\omega s} w|(s)+ |V-\frac{\kappa_+^2}{4}|^{1/2} \left( 1+ |V|^{-1}(s)+ |\rd_{\omega} w|(s)+ |w|(s) \right).
		\end{align*}
		
		We will evaluate all the above expression at $s=s_I+A$, where $A=O(1)\gg 1$ is independent of $\LL$ and $\omega$, thus $|V|(s) \approx |V-\frac{\kappa_+^2}{4}|(s) \approx 1$, $|w|(s) \approx 1$, and \begin{equation*}
			\begin{split}
				&|\rd_{\omega} w|,\ |\rd^2_{\omega s} w|,\ | w_{1,\pm}|,\ |\tilde{K}_{\pm}|, \sum_{\alpha\in \{\omega,s\}}\left[| \rd_{\alpha} w_{1,\pm}|+| \rd_{\alpha}\tilde{K}_{\pm}|+ |\rd_{\alpha} \ep^{WKB}_{1,\pm}|\right] \lesssim_A \LL^3,\\ & | \rd_{\omega s}^2 w_{1,\pm}|,\ | \rd_{\omega s}^2\tilde{K}_{\pm}|,\  |\rd_{\omega } \ep^{WKB}_{1,\pm}|,\ |\rd_{\omega s}^2 \ep^{WKB}_{1,\pm}| \lesssim_A \LL^3.
			\end{split}
		\end{equation*} 
		Note also the formula $$ \rd_{\omega}[ W(f,g)]= \rd^2_{\omega s } f \cdot g+  \rd_{s } f \cdot  \rd_{\omega}g -  \rd_{\omega}f \cdot  \rd_{s} g-  \rd_{\omega s }^2 g \cdot f,$$ therefore we do obtain, using the above inequalities, $$ |\rd_{\omega}[ W(\tilde{K}_{\pm},w_{1,\pm})]|,\ |\rd_{\omega}[ W(\tilde{K}_{\pm},w_{1,\mp})]| \lesssim_A \LL^3 ,$$ from which \eqref{domega.connection1WKB} follows, using also \eqref{WKB.wronskian}.
	\end{proof}

	\subsection{The second turning point}\label{TP2.section}
	
	First, we recall the definition of $s_{\pm}(\omega,\LL)$ from \eqref{spm.def} and introduce the coordinate: \begin{equation}\label{x.def}
		\xx= (Mm^2)\ \frac{s}{  \LL^2}.
	\end{equation}
	We also define correspondingly $\xx_{\pm}(\LL,\omega):= \xx(s_\pm(\LL,\omega))$,  $\xx_{II}(\LL,\omega):= \xx(s_{II}(\LL,\omega))$, and finally $\xx_{III}(\LL,\omega):= \xx(s_{III}(\LL,\omega))$.	Note that by  \eqref{sII}, \eqref{sIII}, we have \begin{align*}
		& 	X_{II}(\LL,\omega)= X_-(\LL,\omega)+O(\LL^{-2})= \underbrace{\frac{1-\sqrt{1-\alpha}}{\alpha}}_{\in  (\frac{1}{2},1 )}+ O(\LL^{-2}),\\&  	X_{III}(\LL,\omega)= X_+(\LL,\omega)+O(\LL^{-2})= \frac{1+\sqrt{1-\alpha}}{\alpha}+O(\LL^{-2}) .\\
	\end{align*}
	In particular, note that $X_{-}< 1 < \alpha^{-1}< X_{+}$ and $X_{II} \in  \left((1-\ep) X_-,  1\right)$ and $X_{III} \in  \left( \alpha^{-1},  (1+\ep) X_+\right)$. Note also that for large $\LL$, $X_{II},\ X_{-} =O(1)$ and $X_{III},\ X_{+} =O(\alpha^{-1})$.

	In what follows, we will construct Airy-type solutions adapted to the second turning point in the region $X\in [(Mm^2) \LL^{-2},10]$ (i.e.\ up to  $s=  \frac{10 \LL^2}{Mm^2}$, which is slightly above the second turning point). 
	
	We also recall that in the present section, we have assumed that $0 < \alpha \ls \LL^{2-p}\ll 1$ for some large $p>2$.

	Note that in the following proposition, $\{X< X_{II}\}$ (below the second turning point) corresponds to a so-called classically forbidden regime, in which the Airy functions adopt an exponentially growing/decaying bheavior, while the region  $\{X> X_{II}\}$ is contained in the so-called classically allowed region where (in contrast to the forbidden region) the Airy functions adopt an oscillatory behavior.
	\begin{prop}\label{TP2.prop}
		There exists		$\tilde{\uAii}$ and $\tilde{\uBii}$ solutions of \eqref{eq:mainV} which  obey the following estimates.
		\begin{align}
			&\tilde{\uAii}= \hat{f_2}^{-1/4} \left( Ai( \LL^{2/3} \zeta_2) + \tilde{\ep}_{Ai_2}\right),\\&\tilde{\uBii}=  \hat{f_2}^{-1/4} \left( Bi( \LL^{2/3} \zeta_2) + \tilde{\ep_{Bi_2}}\right), \\ & f_2(\xx) = \frac{\LL^2}{(Mm^2)^2} V( r(\xx)),\\  
			& \frac{2}{3}\left[sign(X_{II}-X)\zeta_2\right]^{3/2}(\xx)=\int_{\xx}^{\xx_{II}} |f_2|^{1/2}(y) dy= \LL^{-1} \int^{s_{II}}_{s(X)} |V|^{1/2}(s')ds', \\ & \hat{f_2}(\xx) = \frac{f_2(\xx)}{ \zeta_2(\xx)}.
		\end{align}  with $\tilde{\ep_{Ai_2}}(s=s_I+A)=\rd_s\tilde{\ep_{Ai_2}}(s=s_I+A)=\tilde{\ep_{Bi_2}}(s=\frac{10 \LL^2}{Mm^2})=\rd_s\tilde{\ep_{Bi_2}}(s=\frac{10 \LL^2}{Mm^2})=0$, and satisfying  the following estimates for all $ s_I+A \leq s \leq  \frac{10 \LL^2}{Mm^2}$, recalling that $s_+=\max\{0,s\}$ and $s_-=\max\{0,-s\}$: \begin{align} 
			&\frac{|\tilde{\ep_{Ai_2}}|(\omega,s)}{M(\LL^{2/3} \zeta_2)},\ \frac{ \LL^2 |\rd_{s} \tilde{\ep_{Ai_2}}|(\omega,s)}{ \LL^{2/3} \hat{f}_2^{1/2}(s)\ N(\LL^{2/3} \zeta_2)} \lesssim   E^{-1}(\LL^{2/3} \zeta_2)\cdot e^{-\kappa_+ A}
			, \label{TP2.1}\\ &\frac{|\tilde{\ep_{Bi_2}}|(\omega,s)}{M(\LL^{2/3} \zeta_2)},\ \frac{\LL^2|\rd_{s} \tilde{\ep_{Bi_2}}|(\omega,s)}{ \LL^{2/3} \hat{f}_2^{1/2}(s)\ N(\LL^{2/3} \zeta_2)} \lesssim   E(\LL^{2/3} \zeta_2) \cdot  e^{-\kappa_+ A}.
			\label{TP2.2} \end{align}
		We will also rewrite the following estimates in a simpler way, 	in a large region excluding the second turning point $s=s_{II}$: for $ s\in[s_I+A, \frac{ \ep \LL^2}{Mm^2}] \cup [\frac{2\LL^2}{Mm^2},\frac{10\LL^2}{Mm^2}]$, defining the following quantities:
		\begin{align*}
			&\tilde{\mathcal{E}}_{Ai_2}(\omega,s)= \LL^{1/6} |\zeta_2|^{1/4} \tilde{\ep}_{Ai_2}(s) \cdot \exp(\frac{2}{3}\LL |\zeta_2|^{3/2})
			,\\ &  \tilde{\mathcal{E}}_{Bi_2}(\omega,s)=  \LL^{1/6}|\zeta_2|^{1/4} \tilde{\ep}_{Bi_2}(s) \cdot \exp(-\frac{2}{3}\LL |\zeta_2|^{3/2}),
		\end{align*} 	
		\begin{align} 
			&|\tilde{\mathcal{E}}_{Ai_2}|(\omega,s),\ \frac{|\rd_s\tilde{\mathcal{E}}_{Ai_2}|(\omega,s)}{|V|^{1/2}(s)} \lesssim e^{-\kappa_+ A},
			\label{TP2.1'}\\ &|\tilde{\mathcal{E}}_{Bi_2}|(\omega,s),\ \frac{|\rd_s\tilde{\mathcal{E}}_{Bi_2}|(\omega,s)}{|V|^{1/2}(s)} \lesssim  e^{-\kappa_+ A}.\label{TP2.2'} \end{align}
		
		Finally, note the Wronskian estimate \begin{equation}\label{W.TP2.prelim}
			\big|	 \LL^{4/3}W(\tilde{\uAii},\tilde{\uBii})-    \frac{M m^2}{\pi}\bigr| \ls e^{-\K A}.
		\end{equation}
	\end{prop}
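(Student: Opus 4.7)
The plan is to apply Olver's Liouville--Green analysis for ODEs with a simple turning point (\cite{olver}, Chapter 11), in three steps: (i) Langer's transformation converts \eqref{eq:mainV} to a perturbed Airy equation in a rescaled variable; (ii) set up Volterra integral equations using the Airy Green's function; and (iii) invoke Theorems~\ref{thm:volterra}, \ref{thm:volterraparam} from Appendix~\ref{volterra.section} to close the error bounds.

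In the rescaled coordinate $X=Mm^2 s/\LL^2$ from \eqref{x.def}, \eqref{eq:mainV} reads $d^2u/dX^2=\LL^2 f_2(X)u$, where $f_2$ has a simple zero at $X=X_{II}$; because of the assumption $\alpha\ls\LL^{2-p}\ll 1$, one checks that $df_2/dX(X_{II})$ is bounded away from zero uniformly in $\LL$, so the Langer variable $\zeta_2(X)$ defined by the integral relation in the statement is a smooth, monotone coordinate through the turning point, and $\hat f_2\doteq f_2/\zeta_2>0$ is smooth. The substitution $u=\hat f_2^{-1/4}W$ and rescaling $\tilde\zeta=\LL^{2/3}\zeta_2$ convert the equation to the perturbed Airy form
\begin{equation*}
\frac{d^2W}{d\tilde\zeta^2}=\bigl[\tilde\zeta+\LL^{-4/3}\psi_2(\tilde\zeta)\bigr]W,\qquad \psi_2\doteq\hat f_2^{-3/4}\frac{d^2}{d\zeta_2^2}\bigl(\hat f_2^{-1/4}\bigr).
\end{equation*}
The solutions $\tilde\uAii,\tilde\uBii$ are then defined through the Volterra integral equations associated to the Airy Green's function $G(\tilde\zeta,v)=\pi[Ai(\tilde\zeta)Bi(v)-Bi(\tilde\zeta)Ai(v)]$, with $\tilde\ep_{Ai_2}=\rd_s\tilde\ep_{Ai_2}=0$ imposed at $s=s_I+A$ (Volterra integration outward from $\tilde\zeta_-=\LL^{2/3}\zeta_2(s_I+A)$), and dually for $\tilde\ep_{Bi_2}$ at $s=10\LL^2/(Mm^2)$. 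Applying Theorem~\ref{thm:volterra} with the classical Olver weights $M,N,E$ yields the pointwise bounds \eqref{TP2.1}, \eqref{TP2.2}; the $\rd_\omega$ bounds come from Theorem~\ref{thm:volterraparam} applied to the $\omega$-differentiated Volterra equation, with a mild polynomial-in-$\LL$ loss from differentiating $\zeta_2(\omega,s)$. The simpler expressions \eqref{TP2.1'}, \eqref{TP2.2'} then follow by substituting the WKB-type asymptotics of $M$, $N$, $E$ away from $\tilde\zeta=0$.

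The main obstacle is verifying the prefactor $e^{-\K A}$ in the error bounds. By Theorem~\ref{thm:volterra}, this reduces to bounding a total variation of the schematic form
\begin{equation*}
\LL^{-4/3}\int_{\tilde\zeta_-}^{\tilde\zeta_+}|\psi_2(v)|\,\frac{M^2(v)}{|E(v)N(v)|}\,dv\ \ls\ e^{-\K A}.
\end{equation*}
The $|\tilde\zeta|\ls 1$ contribution is harmlessly $O(\LL^{-4/3})$. The dominant contribution comes from the region where $s$ is near $s_I+A$ (corresponding to large positive $\tilde\zeta$), where $V(s)\sim\LL^2 r_+^{-2}e^{2\K s}$ by \eqref{pot.asymp}; this exponential structure transfers through the Langer composition $X\mapsto\zeta_2$ to exponential decay of $\psi_2$ in $\tilde\zeta$, and combined with the exponential decay of $M/E$ at large positive $\tilde\zeta$, the integral is bounded by $e^{-\K A}$ after truncation at $s=s_I+A$. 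Finally, the Wronskian estimate \eqref{W.TP2.prelim} follows by evaluating $W_s(\tilde\uAii,\tilde\uBii)$ at any convenient point using the chain-rule identity $W_s(u_1,u_2)=\pm(Mm^2/\LL^2)\,W_{\zeta_2}(W_1,W_2)$ (which uses $\hat f_2^{-1/2}|d\zeta_2/dX|=1$), rescaling to $\tilde\zeta$, and invoking $W_{\tilde\zeta}(Ai,Bi)=1/\pi$ together with the already-established error bounds.
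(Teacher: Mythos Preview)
Your approach—Langer transformation to a perturbed Airy equation, then Volterra iteration with the Airy Green's function—is exactly what the paper does; the paper simply packages this by citing Olver's Theorem~3.1 (Chapter~11) directly and verifying the total-variation bound on the Olver error-control function $H$ via a three-region decomposition in $X$ (near the turning point, the intermediate zone, and the near-horizon zone $s\in[s_I+A,1]$).

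Two imprecisions in your sketch are worth correcting. First, the schematic error-control integral you wrote has the wrong weight: it should be $M^2(v)\sim|v|^{-1/2}$, not $M^2/(EN)$. With your weight the integrand would decay super-exponentially at large positive $\tilde\zeta$ and localize near the turning point—the opposite of what you then (correctly) claim. Second, your mechanism for the $e^{-\kappa_+ A}$ factor is off. In the near-horizon zone one has $f=\LL^{-2}V\approx e^{2\kappa_+ s}$, hence $|f|^{-1/4}\tfrac{d^2}{ds^2}|f|^{-1/4}\approx e^{-\kappa_+ s}$, which \emph{grows} as $s\downarrow s_I+A$; the paper computes $\int_{s_I+A}^{1}e^{-\kappa_+ s}\,ds\approx e^{-\kappa_+(s_I+A)}\sim \LL\,e^{-\kappa_+ A}$ (since $s_I\sim-\kappa_+^{-1}\log\LL$), and it is the division by the large parameter $u=\LL$ in Olver's bound that produces the final $e^{-\kappa_+ A}$. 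There is no ``exponential decay of $\psi_2$ in $\tilde\zeta$'' here ($\tilde\zeta$ is nearly constant over this $s$-range), and the ``exponential decay of $M/E$'' is just the bound $|Ai|\le M/E$ already absorbed into the Volterra weight $P_0$. Finally, the $\partial_\omega$ estimates you mention belong to the subsequent corollary, not this proposition.
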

	
	\begin{proof}

		We express \eqref{eq:mainV} using coordinate $X$ from \eqref{x.def} and it becomes \begin{equation}
			\frac{d^2 u}{d\xx^2} = \LL^2\  f_2(\xx) u .
		\end{equation}

		We will look at two regimes of interest, where we recall that  $\ep\in(0,1)$ is a fixed small constant  (independent of $\LL$ and $\omega$). Recall that we assume $ \frac{M m^2}{\LL^2}\leq X\leq 10$, and we will take the total variation of the error control  function $H$ on $[\frac{M m^2}{\LL^2},10]$, using the notations from Theorem 3.1, Chapter 11 of \cite{olver} (with $g=0$), where $H$ is defined as $$ H =  -\int_{0}^{\zeta_2} |v|^{-1/2}	\hat{f}_2^{-3/4} \frac{d^2}{d\xx^2}(\hat{f}_2^{-1/4}) dv= \int_{X_{II}}^X \left[|f_2|^{-1/4} \frac{d^2}{d\xx^2}(|f_2|^{-1/4}) -\frac{5|f_2|^{1/2}}{16|\zeta_2|^3}\right] d\xx.$$  We therefore split into several regions according to the value of $X$. \begin{enumerate}

			\item \label{TP2.step3} $ \ep \xx_{-} \leq \xx \leq 10$.
			Recall that the second turning point is contained in this region, i.e.\ $\xx_{II}\in  (\ep \xx_{-}, 10)$. Note that near the turning point $\xx = \xx_{II}$, we have: \begin{align*}
				&	f_2(\xx) \approx (X_{II}-X),\ |\zeta_2|(\xx) \approx  |X-X_{II}|,\\ &  \hat{f}_2(\xx)  \approx 1,\ |\hat{f}_2'|(\xx)  \lesssim 1,\ | \hat{f}_2''|(\xx)  \lesssim  1,
			\end{align*}  hence \begin{equation}
				|	\hat{f}_2^{-1/4} \frac{d^2}{d\xx^2}(\hat{f}_2^{-1/4})|(\xx) \lesssim  1.
			\end{equation}
			\begin{equation}
				\int_{\ep \xx_{-}}^{10}|\zeta_2|^{-1/4}	\hat{f}_2^{-1/4} \frac{d^2}{d\xx^2}(\hat{f}_2^{-1/4})|(\xx) d\xx \lesssim   1.	
			\end{equation}

			The conclusion is that 	\begin{equation}\label{H.TP2.3}
				TV_{\ep \xx_{-},10 }[H] \lesssim 1.
			\end{equation}

			\item  $\frac{M m^2}{\LL^2} \leq \xx \leq \epsilon  \xx_{-}$.
			
			In this region, we have the following estimates \begin{align*}
				&f_2(\xx) \approx \xx^{-2},\\ & |\zeta_2|^{3/2}(\xx) \approx \log(\xx^{-1}), \\ & |f_2|^{-1/4}|\frac{d^2}{d\xx^2}|f_2|^{-1/4}|(\xx) \lesssim \xx^{-1}
			\end{align*} Therefore, we have, for all $\frac{M m^2}{\LL^2}  \leq \xx \leq \epsilon  \xx_{-}$:  
			\begin{equation}\label{H.TP2.1}
				TV_{\xx, \epsilon  \xx_{-}}[H] \lesssim \log(  \xx^{-1}).
			\end{equation}
			\item  
			
			$s_I +A \leq s \leq 1$ (in particular $X \leq \frac{Mm^2}{\LL^2}$). In this region, we write the error differently: introducing $f(s) = \LL^{-2} V(s)$, we will compute (still following \cite{olver}) as such $$ TV_{s \in [s_I +A,1]}[H] =  Mm^2 \int_{s_I +A}^{Mm^2} \left| |f|^{-1/4} \frac{d^2}{ds^2}(|f|^{-1/4})
			-\frac{5|f|^{1/2}}{16|\zeta_2|^3}\right| ds.$$

			In this region, we have the following estimates \begin{align*}
				&f(s) \approx e^{2\kappa_+ s},\\ & |\zeta_2|^{3/2}(s) \approx \log(\LL), \\ & |f|^{-1/4}|\frac{d^2}{ds^2}|f|^{-1/4}|(\xx) \lesssim e^{-\kappa_+ s}.
			\end{align*} Therefore, we have
			\begin{equation}\label{H.TP2.2}
				TV_{[s_I+A,Mm^2]}[H]\ls  \LL\ e^{-\kappa_+ A}.
			\end{equation}

		\end{enumerate}

		Combining \eqref{H.TP2.1},
		\eqref{H.TP2.3}, the conclusion is that 
		\begin{equation}
			TV_{ [s_I+A,\frac{10\LL^2}{Mm^2}]}(H)  \lesssim \LL\  e^{-\kappa_+ A} .
		\end{equation}
		
		Thus we obtain, from Theorem 3.1, Chapter 11 of \cite{olver} with $u=\LL$: \begin{align*} 
			& \frac{|\tilde{\ep}_{Ai_2}|(\omega,\xx)}{M(\LL^{2/3} \zeta_2)},\ \frac{|\rd_{\xx} \tilde{\ep}_{Ai_2}|(\omega,\xx)}{ \LL^{2/3} [\hat{f}_2]^{1/2}(s)\ N(\LL^{2/3} \zeta_2)} \lesssim   E^{-1}(\LL^{2/3} \zeta_2) \cdot e^{-\kappa_+ A}
			, \\ &\frac{|\tilde{\ep}_{Bi_2}|(\omega,\xx)}{M(\LL^{2/3} \zeta_2)},\ \frac{|\rd_{\xx} \tilde{\ep}_{Bi_2}|(\omega,\xx)}{ \LL^{2/3} [\hat{f}_2]^{1/2}(s)\ N(\LL^{2/3} \zeta_2)} \lesssim   E(\LL^{2/3} \zeta_2) \cdot e^{-\kappa_+ A}
			. \end{align*}
		
		Then, \eqref{TP2.1}, \eqref{TP2.2} follow from translating the above estimates in the $s$-coordinates.
	\end{proof}

	In view of \eqref{W.TP2.prelim}, we will define $\uAii$ and $\uBii$ (and $\epsilon_{Ai_2}$, $\epsilon_{Bi_2}$, $\mathcal{E}_{Ai_2}$, and $\mathcal{E}_{Bi_2}$) by multiplying $\tilde{\uAii}$ and $\tilde{\uBii}$ with a suitable factor (in the same fashion as in \eqref{WKB.wronskian}) so that \begin{equation}\label{W.TP2}
		W(\uAii,\uBii) = \LL^{-4/3} \cdot \frac{Mm^2}{\pi}.
	\end{equation} The estimates we established for the tilded quantities then all continue to hold for the quantities without tildes.

	We will now take care of the $\rd_{\omega}$ derivatives of the errors in Proposition~\ref{TP2.prop} in a separate corollary. In view of the fact that we will evaluate these $s$-independent errors at $s=s_I+A$, $s=\frac{\ep \LL^2}{Mm^2}$ or $s=\frac{10\LL^2}{Mm^2}$, we restrict the estimates to an interval  $ s\in[s_I+A, \frac{ \ep \LL^2}{Mm^2}] \cup [\frac{2\LL^2}{Mm^2},\frac{10\LL^2}{Mm^2}]$ excluding the second turning point for simplicity. \begin{cor}\label{TP2.cor}
		We will consider the error estimates in a large region excluding the second turning point $s=s_{II}$: for $ s\in[s_I+A, \frac{ \ep \LL^2}{Mm^2}] \cup [\frac{2\LL^2}{Mm^2},\frac{10\LL^2}{Mm^2}]$:
		\begin{align}\label{dTP2.low}
			\LL^{1/6} |\zeta_2|^{1/4}\frac{|\rd_{\omega}\ep_{Ai_2}|(\omega,s)}{E^{-1}((2\LL)^{2/3}\zeta_2)},\ \frac{\LL^{1/6} |\zeta_2|^{1/4}}{|V|^{1/2}(s)}\frac{|\rd_{\omega s}^2\ep_{Ai_2}|(\omega,s)}{E^{-1}((2\LL)^{2/3}\zeta_2)} \lesssim \LL^3 \log^{1/3}(\LL),\\ 	\LL^{1/6} |\zeta_2|^{1/4}\frac{|\rd_{\omega}\ep_{Bi_2}|(\omega,s)}{E((2\LL)^{2/3}\zeta_2)},\ \frac{\LL^{1/6} |\zeta_2|^{1/4}}{|V|^{1/2}(s)}\frac{|\rd_{\omega s}^2\ep_{Ai_2}|(\omega,s)}{E((2\LL)^{2/3}\zeta_2)} \lesssim \LL^3 \log^{1/3}(\LL),\nonumber
		\end{align} or, equivalently, in terms of $\mathcal{E}_{Ai_2}$ and  $\mathcal{E}_{Bi_2}$ defined in Proposition~\ref{TP2.prop}: \begin{align}\label{dTP2.low2}
			|\rd_{\omega}\mathcal{E}_{Ai_2}|(\omega,s),\ \frac{|\rd_{\omega s}^2\mathcal{E}_{Ai_2}|(\omega,s)}{|V|^{1/2}(s)}\lesssim  \LL^3 \log^{1/3}(\LL),\\ |\rd_{\omega}\mathcal{E}_{Bi_2}|(\omega,s),\ \frac{|\rd_{\omega s}^2\mathcal{E}_{Bi_2}|(\omega,s)}{|V|^{1/2}(s)}\lesssim \LL^3 \log^{1/3}(\LL).\nonumber
		\end{align}
		
	\end{cor}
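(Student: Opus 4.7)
The plan is to parallel the derivation of the estimate \eqref{WKB.error2} for $\partial_{\omega}\tilde{\epsilon}^{WKB}_{1,\pm}$ in Proposition~\ref{WKB.TP1.prop}, but now applied to the Volterra equations underlying the Airy-approximation in Proposition~\ref{TP2.prop}. Concretely, $\epsilon_{Ai_2}$ and $\epsilon_{Bi_2}$ (after the normalization fixing the Wronskian~\eqref{W.TP2}) arise as solutions of Volterra integral equations of the schematic form
\[\epsilon(\omega,X) \;=\; \int^{X}_{X_0(\omega)} K\!\left(\LL^{2/3}\zeta_2(\omega,X),\LL^{2/3}\zeta_2(\omega,v)\right) \psi(\omega,v)\bigl(1+\epsilon(\omega,v)\bigr) \, dv,\]
where $K$ is the standard Airy kernel (a product of $\mathrm{Ai}$/$\mathrm{Bi}$ weighted by the $E$-function, see Theorem 3.1 of Chapter 11 in~\cite{olver}) and the forcing is $\psi=\hat f_2^{-1/4}\tfrac{d^2}{d\xx^2}(\hat f_2^{-1/4})-\tfrac{5|f_2|^{1/2}}{16|\zeta_2|^3}$, and $X_0$ is either $\frac{Mm^2}{\LL^2}$ or $10$ depending on which error we are considering.

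First, I would differentiate this equation in $\omega$ and use Theorem~\ref{thm:volterraparam} on the resulting Volterra equation for $\partial_{\omega}\epsilon$, exactly as was done for $\tilde{\epsilon}^{WKB}_{1,\pm}$ in Proposition~\ref{WKB.TP1.prop}. The ingredients I need are pointwise bounds on the $\omega$-derivatives of the three building blocks:
\begin{enumerate}
\item $\partial_{\omega}f_2$: since $\partial_{\omega}V=-2\omega$ and $f_2=(\LL/Mm^2)^2 V$, one has $|\partial_{\omega}f_2|(X)\lesssim \LL^2/(Mm^2)^2$, a bound that is $|f_2|/\omega$ in the interior region and comparable to $\LL^2$ near $X=Mm^2/\LL^2$.
\item $\partial_{\omega}\zeta_2$: differentiating $\frac{2}{3}|\zeta_2|^{3/2}=\int_X^{X_{II}}|f_2|^{1/2}\,dy$ under the integral, and using that the contribution at the endpoint $X_{II}$ vanishes because $f_2$ has a simple zero there, one gets $|\partial_{\omega}\zeta_2|\lesssim (|\zeta_2|^{1/2}+ |\zeta_2|^{-1/2})\cdot \LL^{\alpha}$ in the excised regions, where the negative-power term is controlled since we stay bounded away from the turning point $X_{II}$.
\item $\partial_{\omega}\psi$: a direct differentiation of the Schwarzian of $\hat f_2^{-1/4}$ picks up at most an extra $\LL^{\alpha}$ loss, as all the objects involved are algebraic in $f_2$ and its derivatives.
\end{enumerate}
Plugging these into the differentiated Volterra equation and taking the total variation in the same way as the proof of Proposition~\ref{TP2.prop}, one obtains the estimate~\eqref{dTP2.low} with the ``base'' loss of $\LL^3$ coming from the $\partial_{\omega}\zeta_2$ contribution (the same mechanism that produced $\LL^3$ in \eqref{WKB.error2} and \eqref{domega.connection1WKB}).

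The $\log^{1/3}(\LL)$ factor is the only non-trivial part of the bookkeeping: the argument $\LL^{2/3}\zeta_2$ of the Airy weights runs up to order $\LL^{2/3}\log^{2/3}(\LL)$ at the leftmost endpoint $s=s_I+A$ (since $\int_{s_I}^{1}|V|^{1/2}\lesssim \LL\log\LL$), and each application of $\partial_{\omega}$ produces one factor of $(\LL^{2/3}\zeta_2)^{1/2}\lesssim \LL^{1/3}\log^{1/3}(\LL)$ via $\partial_{\omega}$ hitting the Airy argument. Tracking this precisely in the derivative Volterra bound of Theorem~\ref{thm:volterraparam} yields exactly the stated $\LL^{3}\log^{1/3}(\LL)$ loss.

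Finally, the equivalent form \eqref{dTP2.low2} in terms of $\mathcal{E}_{Ai_2},\mathcal{E}_{Bi_2}$ follows by absorbing the $E^{\mp 1}(\LL^{2/3}\zeta_2)\approx \exp(\mp\tfrac{2}{3}\LL|\zeta_2|^{3/2})$ weights and the prefactor $\LL^{1/6}|\zeta_2|^{1/4}$ into the definition, and checking that $\partial_{\omega}$ falling on these prefactors generates only lower-order terms (bounded by $\LL^{\alpha}$ and hence absorbed into the $\LL^3\log^{1/3}(\LL)$). The main obstacle, as usual with this type of turning-point analysis, is not any individual estimate but rather the careful verification that $\partial_{\omega}$-derivatives of the Airy weight functions $M$, $N$, $E$ do not produce additional singular behavior as $\zeta_2\to 0$; this is precisely why the statement is restricted to $s\in [s_I+A,\epsilon\LL^2/(Mm^2)]\cup[2\LL^2/(Mm^2),10\LL^2/(Mm^2)]$, where $|\zeta_2|\gtrsim 1$ in $X$-units and the weights are smooth.
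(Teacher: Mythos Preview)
Your strategy matches the paper's—differentiate the Volterra equation and invoke Theorem~\ref{thm:volterraparam}—but you miss a coordinate simplification that the paper exploits. The paper writes the Volterra equations for $\ep_{Ai_2},\ep_{Bi_2}$ with $\zeta_2$ itself as the integration variable and introduces the $(\zeta_2,\omegac)$ system (with $\omegac=\omega$), in which the Airy kernel
\[K(\zeta_2,v)=(2\LL)^{-2/3}\bigl(Bi((2\LL)^{2/3}\zeta_2)\,Ai((2\LL)^{2/3}v)-Ai((2\LL)^{2/3}\zeta_2)\,Bi((2\LL)^{2/3}v)\bigr)\]
satisfies $\partial_{\omegac}K\equiv 0$. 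All the $\omega$-dependence is then pushed into $\psi_2$ and into the endpoints $\zeta_{2,A}(\omega),\zeta_{2,B}(\omega)$; the paper bounds $|\partial_{\omegac}\psi_2|\lesssim\LL^2\log^{2/3}(\LL)\,e^{-2\kappa_+ s_-}$ via the transfer rule $|\partial_{\omegac}F|\lesssim|\partial_\omega F|+\LL^2|f_2|^{-1/2}|\partial_X F|$, together with $|\partial_{\omegac}\zeta_{2,A}|,|\partial_{\omegac}\zeta_{2,B}|\lesssim\LL^2$. No derivatives of $Ai$ or $Bi$ ever appear. Your $(X,\omega)$ formulation is equivalent in principle but forces $\partial_\omega$ through the kernel argument $\LL^{2/3}\zeta_2(\omega,\cdot)$, which multiplies the number of terms to track. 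Relatedly, your heuristic for the $\log^{1/3}(\LL)$—attributing it to ``$(\LL^{2/3}\zeta_2)^{1/2}$ via $\partial_\omega$ hitting the Airy argument''—is not the mechanism in the paper: that particular combination satisfies $\LL|\zeta_2|^{1/2}\partial_\omega\zeta_2=\partial_\omega\!\int_s^{s_{II}}|V|^{1/2}\lesssim\LL^3$ with no logarithm (the $\log^{\pm1/3}$ factors from $\zeta_2^{1/2}$ and $\partial_\omega\zeta_2$ cancel at the leftmost endpoint). In the paper the $\log^{1/3}(\LL)$ instead enters through the size $\zeta_{2,A}\approx\log^{2/3}(\LL)$ of the $\zeta_2$-integration range combined with the $\partial_{\omegac}\psi_2$ bound, and the $\LL^3$ appears separately via $\bigl|\int_{s_I+A}^{s_{II}}\partial_\omega|V|^{1/2}\bigr|\lesssim\LL^3$ when converting the $\ep_{Ai_2}$-bounds to the $\mathcal{E}_{Ai_2}$-normalization.
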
\begin{proof}
		$\ep_{Ai_2}$ and $\ep_{Bi_2}$ satisfy respectively the following Volterra equations \begin{align*}
			&\ep_{Ai_2}(\zeta_2,\omega) = \int_{\zeta_2}^{\zeta_{2,A}(\omega)}  K(\zeta_2,v,\omega) \psi_2(v,\omega) \left(Ai((2\LL)^{2/3} v)+\ep_{Ai_2}(v,\omega)\right)dv, \\  & \ep_{Bi_2}(\zeta_2,\omega) = \int^{\zeta_2}_{\zeta_{2,B}(\omega)} K(\zeta_2,v,\omega) \psi_2(v,\omega) \left(Bi((2\LL)^{2/3} v)+\ep_{Bi_2}(v,\omega)\right) dv,
		\end{align*} where we have denoted $\zeta_{2,A}(\omega)= \zeta_2(s=s_I+A) \approx \log^{2/3}(\LL)$, $\zeta_{2,B}(\omega)= \zeta_2(s=\frac{10\LL^2}{Mm^2}) \approx -1$ (this estimate indeed follows from the proof of Proposition~\ref{TP2.prop}) corresponding to $\ep_{Ai_2}(\zeta_{2,A})=\ep_{Bi_2}(\zeta_{2,B})=0$, by assumption. To avoid any confusion, we note that $\zeta_{2,B}< \zeta_{2,A}$, consistently with the fact that $\zeta_2(X)>0$ for $X<X_{II}$ and $\zeta_2(X)<0$ for $X>X_{II}$. Moreover: \begin{align*}
			& K(\zeta_2,v,\omega)= (2\LL)^{-2/3} \left(Bi((2\LL)^{2/3} \zeta_2)Ai((2\LL)^{2/3} v)-Ai((2\LL)^{2/3} \zeta_2)Bi((2\LL)^{2/3} v) \right),\\ &  \psi_2(\zeta_2)= -(\hat{f}_2)^{-3/4} \frac{d^2}{d\xx^2} (\hat{f}_2)^{-1/4}.
		\end{align*}

		For clarity (noting that $\zeta_2$ contains an implicit $\omega$-dependence), we introduce the coordinate system $(\zeta_2,\omegac)$ with $\omegac=\omega$, denoting its partial derivatives ($\rd_{\zeta_2}$,$\rd_{\omegac}$) (to be distinguished from the coordinate system $(s,\omega)$ or $(X,\omega)$ and their partial derivatives ($\rd_{s}$,$\rd_{\omega}$) or respectively ($\rd_{\xx}$,$\rd_{\omega}$) ). Note in particular that \begin{align*}
			&|K|(\zeta_2,v,\omega)\ls (2\LL)^{-2/3} E^{-1}((2\LL)^{2/3}\zeta_2) M((2\LL)^{2/3}\zeta_2) E((2\LL)^{2/3}v) M((2\LL)^{2/3}v) \text{ for } v \geq \zeta_2,\\ & |K|(\zeta_2,v,\omega)\ls (2\LL)^{-2/3} E^{-1}((2\LL)^{2/3}v) M((2\LL)^{2/3}\zeta_2) E((2\LL)^{2/3}\zeta_2) M((2\LL)^{2/3}v) \text{ for } v \leq \zeta_2, \\ &  \rd_{\omegac} K \equiv 0,\\ &|\rd_{\omegac}(\zeta_{2,A}(\omega))|,\ |\rd_{\omegac}(\zeta_{2,B}(\omega))| \ls \LL^2.
		\end{align*} 
		We also extend the proof of Proposition~\ref{TP2.prop} and obtain the following estimate: \begin{align*}
			& |\psi_2|(\xx) \lesssim \log^{2/3}(\LL) e^{-2\kappa_+ s_-(\xx)}.
		\end{align*}
		Now to estimate $\rd_{\omegac}\psi_2$, note that  in $(\omega,X)$ coordinates, we have \begin{equation}
			|	\rd_{\omega} f_2(\omega,X)| \ls \LL^2, 
		\end{equation} and moreover for any function $F(\omega,X)$ \begin{equation}
			|	\rd_{\omegac} F|\ls  	 |	\rd_{\omega} F|+ \LL^2  |f_2|^{-1/2} \left(\int^{X_{II}}_{X} |f_2|^{-1/2}(Z) dZ\right)  |	\rd_{X} F| \ls 	 |	\rd_{\omega} F|+ \LL^2  |f_2|^{-1/2} 	|	\rd_{X} F|.
		\end{equation}
		
		Eventually, and in view again of the estimates in the proof of Proposition~\ref{TP2.prop},  we obtain  
		\begin{align*}
			&  |\rd_{\omegac}\psi_2|(\omega,\xx)  \ls \LL^2  \text{ for }  \ep X_- \leq X \leq 10\\ &   |\rd_{\omegac}\psi_2|(\omega,\xx) \ls |\zeta_2|(X) \left( \LL^2 X^2 + \LL^2  \right) \ls  \log^{2/3}(X^{-1})  \LL^2  \text{ for } \frac{Mm^2}{\LL^2} \leq X \leq \ep X_-,\\ &   | \rd_{\omegac}\psi_2|(\omega,\xx) \ls |\zeta_2|(X)  \left( \LL^{-2} e^{-4\K s(X)}+e^{-3\K s(X)}\right) \ls_A \log^{2/3}(\LL) \LL   e^{-2\kappa_+ s(\xx)} \text{ for } \frac{Mm^2(s_I+A)}{\LL^2} \leq X \leq\frac{ Mm^2}{\LL^2},
		\end{align*}
		
		and, therefore:
		
		$$		|\rd_{\omegac}\psi_2|(\omega,\xx)\ls  
		\LL^2  \log^{2/3}(\LL)e^{-2\kappa_+ s_-(\xx)}.$$

		Now, note that \begin{align*}
			& \rd_{\omegac}\left(\int_{\zeta_2}^{\zeta_{2,A}(\omega)} K(\zeta_2,v,\omega) \psi_2(v,\omega) Ai((2\LL)^{2/3} v)dv\right) \\ &=[\rd_{\omegac} \zeta_{2,A}(\omega)] K(\zeta_2,\zeta_{2,A}(\omega),\omega) \psi_2(\zeta_{2,A}(\omega),\omega) Ai((2\LL)^{2/3} \zeta_{2,A}(\omega))+  \int_{\zeta_2}^{\zeta_{2,A}(\omega)} K(\zeta_2,v,\omega) \rd_{\omegac}\psi_2(v,\omega) Ai((2\LL)^{2/3} v)dv, \\ & \rd_{\omegac}\left(\int^{\zeta_2}_{\zt} K(\zeta_2,v,\omega)  \psi_2(v,\omega) Bi((2\LL)^{2/3} v)dv\right)\\ & =[\rd_{\omegac} \zt] K(\zeta_2,\zt,\omega) \psi_2(\zt,\omega) Bi((2\LL)^{2/3} \zt)+  \int^{\zeta_2}_{\zt} K(\zeta_2,v,\omega) \rd_{\omegac}\psi_2(v,\omega) Bi((2\LL)^{2/3} v)dv, 
		\end{align*} thus, using the above estimates, we deduce \begin{align*}
			&\frac{\bigl|\rd_{\omegac}\left(\int_{\zeta_2}^{\zeta_{2,A}(\omega)}   K(\zeta_2,v,\omega)\psi_2(v,\omega) Ai((2\LL)^{2/3} v)dv\right)\bigr|}{ E^{-1}((2\LL)^{2/3}\zeta_2) M((2\LL)^{2/3}\zeta_2)},\ \frac{\bigl|\rd_{\omegac \zeta_2}^2\left(\int_{\zeta_2}^{\zeta_{2,A}(\omega)}   K(\zeta_2,v,\omega)\psi_2(v,\omega) Ai((2\LL)^{2/3} v)dv\right)\bigr|}{ E^{-1}((2\LL)^{2/3}\zeta_2) M((2\LL)^{2/3}\zeta_2)}\\ & \ls \LL^3 \log^{1/3}(\LL),\\ & \frac {\bigl|\rd_{\omegac}\left(\int^{\zeta_2}_{\zt}  K(\zeta_2,v,\omega) \psi_2(v,\omega) Bi((2\LL)^{2/3} v)dv\right)\bigr|}{ E((2\LL)^{2/3}\zeta_2) M((2\LL)^{2/3}\zeta_2)},\ \frac {\bigl|\rd_{\omegac \zeta_2}^2\left(\int^{\zeta_2}_{\zt}  K(\zeta_2,v,\omega) \psi_2(v,\omega) Bi((2\LL)^{2/3} v)dv\right)\bigr|}{ E((2\LL)^{2/3}\zeta_2) M((2\LL)^{2/3}\zeta_2)} \\ &\ls \LL^3 \log^{1/3}(\LL).
		\end{align*} In particular, we obtain  \eqref{dTP2.low} and its equivalents \eqref{dTP2.low2}. 
		To obtain \eqref{dTP2.low2}, we have used the fact that $$\big| \int_{s_I+A}^{s_{II}} \rd_{\omega}\left[|V|^{1/2}(s)\right]  ds \bigr| \ls \LL^3. $$
		
	\end{proof}

	\begin{cor}\label{cor.TP2}
		In the   region $\{  \frac{2\LL^2}{Mm^2} \leq s \leq  \frac{10\LL^2}{Mm^2} \}$, one has
		
		\begin{align}
			&u_{Ai_2}= \frac{\pi^{-1/2} }{2}\LL^{-1/6}  {f}_2^{-1/4} \left( e^{\frac{2i\LL}{3} |\zeta_2|^{3/2}(s) -\frac{i \pi}{4} }+ e^{-\frac{2i\LL}{3} |\zeta_2|^{3/2}(s) +\frac{i \pi}{4} } + \eta_{Ai_2}(s,\omega)\right),\\ 	&u_{Bi_2}= i\  \frac{\pi^{-1/2} }{2}\LL^{-1/6}  {f}_2^{-1/4} \left( e^{\frac{2i\LL}{3} |\zeta_2|^{3/2}(s) -\frac{i \pi}{4} }- e^{-\frac{2i\LL}{3} |\zeta_2|^{3/2}(s) +\frac{i \pi}{4} } + \eta_{Bi_2}(s,\omega)\right),
		\end{align} with the errors $\eta_{Ai_2}(s,\omega)$, $\eta_{Bi_2}(s,\omega)$ satisfying \begin{align}
			& |\eta_{Ai_2}|(s,\omega),\ |\eta_{Bi_2}|(s,\omega),\     \LL|\rd_{s} \eta_{Ai_2}|(\omega,s),\    \LL|\rd_{s} \eta_{Bi_2}|(\omega,s) \lesssim  e^{-\kappa_+ A},\\  	&  |\rd_{\omega}\eta_{Ai_2}|(s,\omega),\ |\rd_{\omega}\eta_{Bi_2}|(s,\omega),\  \LL |\rd_{\omega s}^2 \eta_{Ai_2}|(\omega,s),\  \LL |\rd_{\omega s}^2 \eta_{Bi_2}|(\omega,s) \ls \LL^3 \log^{1/3}(\LL).
		\end{align}
		
		We can also rewrite this in a differently convenient form 
		\begin{align}\label{Airy2+-}
			u_{Ai_2}(\omega,s) \mp i  u_{Bi_2}(s,\omega)=(Mm^2)^{1/2} \pi^{-1/2}\LL^{-2/3}  |V|^{-1/4}(s) \left(e^{\pm i \int_{s_{II}}^{s} |V|^{1/2}(s')ds' \mp \frac{i \pi}{4} }+ \eta_{2}^{\pm}(\omega,s)\right),
		\end{align} where \begin{equation*}
			\eta_{2}^{\pm}(s,\omega):=\eta_{Ai_2}(\omega,s)\pm i  \eta_{Bi_2}(\omega,s).
		\end{equation*}
	\end{cor}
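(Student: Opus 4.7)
The strategy is to insert into the exact identities of Proposition~\ref{TP2.prop} and Corollary~\ref{TP2.cor} the classical oscillatory asymptotics of the Airy functions at large negative argument. On the range $\frac{2\LL^2}{Mm^2}\leq s\leq \frac{10\LL^2}{Mm^2}$ one has $s>s_{II}$, so $\zeta_2(s)<0$, $|\zeta_2(s)|\approx 1$, and $\LL^{2/3}|\zeta_2|\gg 1$; Appendix~\ref{bessel.section} then provides
\[
Ai(-z)=\pi^{-1/2}z^{-1/4}\sin\bigl(\tfrac{2}{3}z^{3/2}+\tfrac{\pi}{4}\bigr)+O(z^{-7/4}),\quad Bi(-z)=\pi^{-1/2}z^{-1/4}\cos\bigl(\tfrac{2}{3}z^{3/2}+\tfrac{\pi}{4}\bigr)+O(z^{-7/4}),
\]
together with analogous bounds for the derivatives. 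Plugging $z=\LL^{2/3}|\zeta_2|$ turns $\tfrac{2}{3}z^{3/2}$ into the natural WKB phase $\tfrac{2\LL}{3}|\zeta_2|^{3/2}=\int_{s_{II}}^{s}|V|^{1/2}(s')\,ds'$, which is precisely the exponent appearing on the right-hand side of the corollary.

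Next, I would rewrite each $\sin$ and $\cos$ in exponential form and use $\hat{f}_2^{-1/4}=|\zeta_2|^{1/4}|f_2|^{-1/4}$ to collapse the Airy prefactor $\hat{f}_2^{-1/4}\pi^{-1/2}\LL^{-1/6}|\zeta_2|^{-1/4}$ into $\tfrac{\pi^{-1/2}}{2}\LL^{-1/6}f_2^{-1/4}$; the $\mp i\pi/4$ phases in the final expressions arise jointly from $\sin(\theta+\pi/4)=\tfrac{1}{2i}(e^{i\theta+i\pi/4}-e^{-i\theta-i\pi/4})$ (and the analogous identity for $\cos$) and from the branch choice for $f_2^{-1/4}$ on $\{f_2<0\}$. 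The remainder $\eta_{Ai_2},\eta_{Bi_2}$ then decomposes as $\eta_\bullet=\eta_\bullet^{\mathrm{Airy}}+\eta_\bullet^{\mathrm{WKB}}$, where $\eta_\bullet^{\mathrm{Airy}}$ collects the sub-leading $O(z^{-3/2})=O(\LL^{-1})$ corrections to $Ai,Bi$, while $\eta_\bullet^{\mathrm{WKB}}$ carries the suitably renormalized $\epsilon_{Ai_2},\epsilon_{Bi_2}$.

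The pointwise bounds on $\eta_\bullet$ and $\rd_s\eta_\bullet$ then follow from two inputs: the Airy remainder $O(\LL^{-1})$ is $\ll e^{-\kappa_+A}$ once $\LL$ is taken large, and the contribution of $\epsilon_{Ai_2},\epsilon_{Bi_2}$ is controlled by \eqref{TP2.1'}, \eqref{TP2.2'}, since on the interval in question $|\zeta_2|\approx 1$, so the weights $\LL^{1/6}|\zeta_2|^{1/4}\exp(\pm\tfrac{2\LL}{3}|\zeta_2|^{3/2})$ defining $\mathcal{E}_{Ai_2},\mathcal{E}_{Bi_2}$ coincide, up to bounded multiplicative factors, with the oscillatory factors $e^{\pm i\int_{s_{II}}^s|V|^{1/2}}$ extracted above. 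The $\rd_\omega\eta_\bullet$ estimate is obtained in the same manner, using Corollary~\ref{TP2.cor} for the $\omega$-derivatives of $\mathcal{E}_{Ai_2},\mathcal{E}_{Bi_2}$ (which already cost $\LL^3\log^{1/3}\LL$) and noting that differentiating in $\omega$ the Airy sub-leading remainder, the prefactor $f_2^{-1/4}$, and the phase $\tfrac{2\LL}{3}|\zeta_2|^{3/2}$ costs at most $O(\LL^3)$; the tightest of these contributions comes from $\rd_\omega\!\int_{s_{II}}^s|V|^{1/2}(s')\,ds'$, bounded by $O(\LL^3)$ since $|V|\sim\LL^{-2}$ and $\rd_\omega V=-2\omega$ on the length-$\LL^2$ interval under consideration.

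The main obstacle is essentially bookkeeping: one must carefully chase every factor of $\hat{f}_2$, $\zeta_2$, $\LL^{2/3}$ through the conversions between $\epsilon_\bullet$, $\mathcal{E}_\bullet$ and $\eta_\bullet$, and verify that the exponential weights appearing in $\mathcal{E}_\bullet$ indeed become harmless (of size $1$) in the region $s\geq\frac{2\LL^2}{Mm^2}$, so that the bounds of Proposition~\ref{TP2.prop} and Corollary~\ref{TP2.cor} transfer cleanly to the new errors. Once this is done, the two displayed identities for $u_{Ai_2}$ and $u_{Bi_2}$ follow immediately, and the alternate combination $u_{Ai_2}\mp iu_{Bi_2}$ is a direct consequence of $e^{\pm i\theta}=\cos\theta\pm i\sin\theta$, absorbing the cross-terms into $\eta_2^{\pm}:=\eta_{Ai_2}\pm i\eta_{Bi_2}$.
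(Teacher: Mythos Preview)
Your approach matches the paper's: substitute the oscillatory Airy asymptotics for large negative argument into Proposition~\ref{TP2.prop} and control the remainder via that proposition and Corollary~\ref{TP2.cor}. Two corrections are in order.

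First, Appendix~\ref{bessel.section} concerns the modified Bessel functions $K_\pm$, not Airy functions; the asymptotics you want are in \cite{olver}, Chapter~11 (or \cite{KGSchw1}, Appendix~A), as the paper cites.

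Second, and more substantively, your transfer of the $\epsilon_{Ai_2},\epsilon_{Bi_2}$ bounds is incorrect as written. You claim the real weights $\exp(\pm\tfrac{2\LL}{3}|\zeta_2|^{3/2})$ in the definition of $\mathcal{E}_{Ai_2},\mathcal{E}_{Bi_2}$ ``coincide, up to bounded multiplicative factors, with the oscillatory factors $e^{\pm i\int_{s_{II}}^s|V|^{1/2}}$''. On the interval $[\tfrac{2\LL^2}{Mm^2},\tfrac{10\LL^2}{Mm^2}]$ one has $|\zeta_2|\sim 1$, so these real exponentials have size $e^{\pm c\LL}$, not unit modulus; they certainly do not match oscillatory phases. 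The reformulation \eqref{TP2.1'}--\eqref{TP2.2'} is tailored to the forbidden region; in the allowed region you should invoke \eqref{TP2.1}--\eqref{TP2.2} directly, using that for $\LL^{2/3}\zeta_2\ll -1$ the Airy envelope functions satisfy $E(\LL^{2/3}\zeta_2)=1$ and $M(\LL^{2/3}\zeta_2)\sim \LL^{-1/6}|\zeta_2|^{-1/4}$. This gives $|\epsilon_{Ai_2}|,|\epsilon_{Bi_2}|\lesssim \LL^{-1/6}|\zeta_2|^{-1/4}e^{-\kappa_+A}$, and after absorbing the factor $\LL^{1/6}|\zeta_2|^{1/4}$ coming from the prefactor conversion $\hat f_2^{-1/4}=|\zeta_2|^{1/4}|f_2|^{-1/4}$ you recover the claimed $e^{-\kappa_+A}$ bound on $\eta_\bullet$. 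The same modification, using Corollary~\ref{TP2.cor} in its original form \eqref{dTP2.low} rather than \eqref{dTP2.low2}, repairs the $\partial_\omega$-estimate. With these fixes the remainder of your outline goes through.
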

	
	\begin{proof}
		The above estimates follow directly from Proposition~\ref{TP2.prop} \& Corollary~\ref{TP2.cor}, together with the asymptotics of the $Ai$ and $Bi$ functions (see for instance in \cite{olver}, Chapter 11, Section 1 or \cite{KGSchw1}, Appendix A).
	\end{proof}

	\subsection{Connection between the first and second turning point}\label{connection12.section}
	
	We now connect $\uAii$ and $\uBii$ to $w_{1,\pm}$: this connection will take place in the forbidden region: in fact, we restrict to $s\in [1,\ep \LL^2]$ which is the common domain of validity of  $\uAii$, $\uBii$ and $w_{1,\pm}$. We note that the situation (and estimates we obtain) is very analogous to the case of a wave penetrating through a barrier (see \cite{olver}, Chapter 13).

	\begin{prop} \label{connection12.prop} For all  $s\in [s_I+A,\ep \LL^2]$, we have, recalling that $A>0$ is a large constant: \begin{align*}
			&	\uAii = \alpha_{A_2}(\omega,\LL) w_{1,-} +  \beta_{A_2}(\omega,\LL) w_{1,+},\  \uBii = \alpha_{B_2}(\omega,\LL) w_{1,-} +  \beta_{B_2}(\omega,\LL) w_{1,+},\\ & \alpha_{A_2}(\omega,\LL)= \ \LL^{-2/3} \cdot O_A\left(\exp(-  \int^{s_{II}}_{s_I} |V|^{1/2}(s) ds ) \right),\\ & \alpha_{B_2}(\omega,\LL)= \pi^{-1/2}  (Mm^2)^{1/2}\LL^{-2/3} \exp\left(
			\int_{s_I}^{s_{II}} |V|^{1/2}(s') ds'\right) \left[1+ O( e^{-\kappa_+ A})\right],\\ &  \beta_{A_2}(\omega,\LL)=\frac{\pi^{-1/2}  (Mm^2)^{1/2}}{2}\LL^{-2/3} \exp\left(-
			\int_{s_I}^{s_{II}} |V|^{1/2}(s') ds'\right) \left[1+ O( e^{-\kappa_+ A})\right],\\ & \beta_{B_2}(\omega,\LL)=O\left(\exp(-  \int^{s_{II}}_{s_I} |V|^{1/2}(s) ds+O(\LL)) \right).
		\end{align*}

	\end{prop}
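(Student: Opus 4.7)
The plan is to prove Proposition~\ref{connection12.prop} by matching the two bases $\{w_{1,-},w_{1,+}\}$ and $\{u_{Ai_2},u_{Bi_2}\}$ of the solution space of $u''=Vu$ on the overlap interval $[s_I+A,\ep\LL^2]$. Since both pairs are bases, the coefficients $\alpha_{A_2},\beta_{A_2},\alpha_{B_2},\beta_{B_2}$ are $s$-independent constants, determinable by evaluation at any convenient point combined with Cramer's rule. I will use the two natural endpoints $s_0:=s_I+A$ (left) and $s_1:=\ep\LL^2$ (right), choosing each depending on which coefficient to extract.

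At $s_0$ the initial conditions built into Propositions~\ref{WKB.TP1.prop} and~\ref{TP2.prop} ensure $\ep^{WKB}_{1,+}(s_0)=\rd_s\ep^{WKB}_{1,+}(s_0)=0$ and $\ep_{Ai_2}(s_0)=\rd_s\ep_{Ai_2}(s_0)=0$, so $w_{1,+}$ and $u_{Ai_2}$ (with derivatives) have exact closed-form expressions there (modulo the normalization $\sqrt{2/W(\tilde w_{1,+},\tilde w_{1,-})}$, differing from $1$ by $O(e^{-\K A})$ via \eqref{WKB.wronskian.prelim}), while $w_{1,-}$ and $u_{Bi_2}$ inherit global errors $O(e^{-\K A})$; symmetrically at $s_1$, $\ep^{WKB}_{1,-}(s_1)=\rd_s\ep^{WKB}_{1,-}(s_1)=0$. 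In both cases $\LL^{2/3}\zeta_2\gg 1$, so I invoke the large-argument Poincar\'e asymptotics $Ai(y)=\tfrac{1}{2\sqrt{\pi}}y^{-1/4}e^{-\frac{2}{3}y^{3/2}}(1+O(y^{-3/2}))$ and $Bi(y)=\tfrac{1}{\sqrt{\pi}}y^{-1/4}e^{+\frac{2}{3}y^{3/2}}(1+O(y^{-3/2}))$, whose remainders are dominated by $e^{-\K A}$ for $\LL$ large enough given $A$. Translating via $\tfrac{2}{3}\LL\zeta_2^{3/2}(s)=\int_s^{s_{II}}|V|^{1/2}$ and $\hat f_2^{-1/4}=(Mm^2)^{1/2}\LL^{-1/2}|V|^{-1/4}\zeta_2^{1/4}$, $u_{Ai_2}$ and $u_{Bi_2}$ become WKB-shaped profiles anchored at $s_{II}$; leading-order matching at $s_0$ (using $\int_{s_0}^{s_{II}}|V|^{1/2}=\int_{s_I}^{s_{II}}|V|^{1/2}-I_A$ with $I_A:=\int_{s_I}^{s_0}|V|^{1/2}=O(1)$) yields the stated dominant coefficients $\beta_{A_2}$ and $\alpha_{B_2}$ with multiplicative relative error $O(e^{-\K A})$.

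The main technical obstacle is the exponentially subdominant coefficients $\alpha_{A_2}$ and $\beta_{B_2}$, which are \emph{invisible} to the leading matching because $u_{Ai_2}$ (respectively $u_{Bi_2}$) is ``pure $w_{1,+}$'' (respectively ``pure $w_{1,-}$'') at leading WKB order. Concretely, in the Cramer expression $\alpha_{A_2}=\tfrac12 e^{I_A}(C_1'(1-R)-C_2')$ (with $R:=V'/(4|V|^{3/2})(s_0)$, and $C_1',C_2'$ the normalized value and derivative of $u_{Ai_2}$ at $s_0$), the leading-Airy terms cancel since $C_2'=C_1'(1-R)$ exactly at leading order; the surviving contribution is bounded by combined error sources $O(e^{-\K A})\cdot C_1'\sim e^{-\K A}\LL^{-2/3}e^{-\int_{s_I}^{s_{II}}|V|^{1/2}}$, producing the stated $|\alpha_{A_2}|\lesssim_A\LL^{-2/3}e^{-\int_{s_I}^{s_{II}}|V|^{1/2}}$. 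For $\beta_{B_2}$, the analogous Cramer formula at $s_0$ fails to deliver the stated exponential smallness because the inputs $D_1',D_2'$ are exponentially large; I therefore extract $\beta_{B_2}$ at $s_1=\ep\LL^2$ instead. There, $\alpha_{B_2}w_{1,-}(s_1)$ reproduces $u_{Bi_2}(s_1)$ to leading order (both of size $\sim\LL^{-1/6}e^{O(\LL)}$), so the defect $\beta_{B_2}w_{1,+}(s_1)=u_{Bi_2}(s_1)-\alpha_{B_2}w_{1,-}(s_1)$ is $O(e^{-\K A})$ times the dominant piece, and dividing by $w_{1,+}(s_1)\sim\LL^{1/2}e^{\int_{s_I}^{s_{II}}|V|^{1/2}-O(\LL)}$ yields $|\beta_{B_2}|\lesssim e^{-\int_{s_I}^{s_{II}}|V|^{1/2}+O(\LL)}$. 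As a consistency cross-check, the determinant identity $\alpha_{A_2}\beta_{B_2}-\alpha_{B_2}\beta_{A_2}=-(Mm^2)(2\pi)^{-1}\LL^{-4/3}$ (obtained from \eqref{W.TP2} and \eqref{WKB.wronskian} applied to $W(u_{Ai_2},u_{Bi_2})$) combined with the leading values of $\alpha_{B_2}\beta_{A_2}$ confirms $\alpha_{A_2}\beta_{B_2}=O(\LL^{-4/3}e^{-\K A})$, compatible with the separate individual bounds.
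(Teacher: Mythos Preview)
Your approach is correct and is essentially the same as the paper's. Both extract the four coefficients via Wronskians (your ``Cramer's rule'' is exactly $W(w_{1,\pm},u_{Ai_2/Bi_2})$ evaluated pointwise), choosing the evaluation point so that certain construction errors vanish: $s_0=s_I+A$ for $W(w_{1,+},u_{Ai_2})$ (where both $\ep^{WKB}_{1,+}$ and $\ep_{Ai_2}$ vanish with their derivatives, so only the $m_{Ai}'$-type residual survives), and $s_1=\ep\LL^2$ for the remaining ones. The paper carries this out by decomposing each Wronskian into labeled pieces $I,II,III,IV$ (the $I$-term being the dominant WKB contribution, absent precisely in the ``off-diagonal'' Wronskians $W(w_{1,+},u_{Ai_2})$ and $W(w_{1,-},u_{Bi_2})$---your observed cancellation), whereas you describe the same mechanism more narratively. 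The only genuine variation is your treatment of $\beta_{B_2}$: the paper bounds it via $W(w_{1,-},u_{Bi_2})(s_1)$, while you use the value-only defect $(u_{Bi_2}-\alpha_{B_2}w_{1,-})/w_{1,+}$ at $s_1$; both give the same $\exp(-\int_{s_I}^{s_{II}}|V|^{1/2}+O(\LL))$ bound since $\int_{s_1}^{s_{II}}|V|^{1/2}=O(\LL)$. One cautionary remark: your phrase ``leading-order matching at $s_0$'' for $\beta_{A_2},\alpha_{B_2}$ should be read as a genuine Wronskian/Cramer computation and not merely a value comparison, since at $s_0$ the two terms $\alpha_{A_2}w_{1,-}$ and $\beta_{A_2}w_{1,+}$ are of comparable size (both $\sim\LL^{-2/3}e^{-\int_{s_I}^{s_{II}}|V|^{1/2}}$).
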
\begin{proof}

		We define the number  $u= \frac{\LL}{M m^2}$, and note that $f_2= u^2 V$ and $\frac{2}{3} \LL |\zeta_2|^{3/2}(s)=  \int_{s}^{s_{II}} |V|^{1/2}(s') ds'$. We write  \begin{align*}
			&	\uAii(s) = |f_2|^{-1/4}(s) \left( |\zeta_2|^{1/4} Ai(\LL^{2/3}\zeta_2) + |\zeta_2|^{1/4} \epsilon_{Ai_2}(s) \right),\\
			&	\uBii(s) = |f_2|^{-1/4}(s) \left( |\zeta_2|^{1/4} Bi(\LL^{2/3}\zeta_2) + |\zeta_2|^{1/4} \epsilon_{Bi_2}(s) \right).
		\end{align*}  
		
		We will work with the following definitions \begin{align*}
			&m_{Bi}(x) = |x|^{1/4} Bi(x)\cdot \exp(-\frac{2}{3} x^{3/2}) ,\\ & m_{Ai}(x) = |x|^{1/4} Ai(x)\cdot \exp(\frac{2}{3} x^{3/2}),
		\end{align*} noting that as $x\rightarrow + \infty$ (see \cite{olver}, chapter 11): \begin{align*}
			&  m_{Bi}(x) = \pi^{-1/2} \left( 1+ O(|x|^{-3/2})\right),\ m_{Bi}'(x) =- \frac{5 \pi^{-1/2}}{72} x^{-5/2} + O(x^{-4}),\ m_{Bi}''(x) =O(x^{-7/2}), \\&   m_{Ai}(x) =  \frac{\pi^{-1/2}}{2} \left( 1+ O(|x|^{-3/2})\right),\  m_{Ai}'(x) = \frac{5 \pi^{-1/2}}{144} x^{-5/2}+O(x^{-4}),\ \ m_{Ai}''(x) =O(x^{-7/2}),\\ & W(m_{Ai},m_{Bi}) = \frac{5\pi^{-1}}{72} x^{-5/2} +O(x^{-4}).
		\end{align*}
		
		We will also recall \begin{align*}
			& \mathcal{E}_{Ai_2}(\omega,s)= \LL^{1/6} |\zeta_2|^{1/4} \ep_{Ai_2}(s) \cdot \exp(\frac{2}{3}\LL |\zeta_2|^{3/2})
			,\\ &  \mathcal{E}_{Bi_2}(\omega,s)=  \LL^{1/6}|\zeta_2|^{1/4} \ep_{Bi_2}(s) \cdot \exp(-\frac{2}{3}\LL |\zeta_2|^{3/2})
			,
		\end{align*} so that \begin{align*}
			&u_{Ai_2}(s)= |f_2|^{-1/4}(s) \exp(-\frac{2}{3} \LL |\zeta_2|^{3/2})  \left(  m_{Ai}( \LL^{2/3}\zeta_2)  +  \mathcal{E}_{Ai_2}(s)\right) \LL^{-1/6},\\ &  u_{Bi_2}(s)= |f_2|^{-1/4}(s)   \exp(\frac{2}{3} \LL |\zeta_2|^{3/2})\left(  m_{Bi}(\LL^{2/3}\zeta_2) +  \mathcal{E}_{Bi_2}(s)\right)  \LL^{-1/6}.
		\end{align*}

		Note the following formulae:
		\begin{equation}	\begin{split}&	W\left( |f_2|^{-1/4} \exp(\pm\frac{Mm^2}{\LL}\int_{s_I}^{s} |f_2|^{1/2}(s') ds') , |f_2|^{-1/4} \exp(\pm\frac{Mm^2}{\LL}\int^{s_{II}}_{s} |f|^{1/2}(s') ds')\right)\\ &= \pm \frac{2Mm^2}{\LL}\ \exp\left(	\pm \frac{Mm^2}{\LL} \int_{s_I}^{s_{II}} |f_2|^{1/2}(s') ds' \right) =\pm \frac{2Mm^2}{\LL} \exp\left(	\pm  \int_{s_I}^{s_{II}} |V|^{1/2}(s') ds' \right)\end{split}		\end{equation}

		\begin{equation}		W\left( |f_2|^{-1/4} \exp(\pm\frac{Mm^2}{\LL}\int_{s_I}^{s} |f_2|^{1/2}(s') ds') , |f_2|^{-1/4} \exp(\mp\frac{Mm^2}{\LL}\int^{s_{II}}_{s} |f_2|^{1/2}(s') ds')\right)=0.		\end{equation}

		Conclusion: using Proposition~\ref{WKB.TP1.prop}, we get
		
		\begin{equation}\begin{split}
				&	W(w_{1,-},\uAii) \\ =\   & \frac{\LL^{1/2} \LL^{-1/6}}{(Mm^2)^{1/2}}\ |f_2|^{-1/2}(s) \exp\left(
				- \int_{s_I}^{s_{II}} |V|^{1/2}(s') ds'\right) \cdot [1+\ep^{WKB}_{1,-}(s)]\cdot \left( I_{AA}+ II_{AA}+III_{AA}+IV_{AA}  \right), \\ & I_{AA}=  -\frac{2 Mm^2}{\LL} |f_2|^{1/2}(s)\ \left(  m_{Ai}( \LL^{2/3}\zeta_2)  +  \mathcal{E}_{Ai_2}(s)\right) ,\\ & II_{AA}=-  \frac{(Mm^2)\LL^{-4/3} |f_2|^{1/2}(s)}{|\zeta_2|^{1/2}(s)}m_{Ai}'( \LL^{2/3}\zeta_2)=  -  \frac{ (Mm^2) \LL^{-3} |f_2|^{1/2}(s)}{|\zeta_2|^{3}(s)}\left[  \LL^{5/3}|\zeta_2|^{5/2}\cdot m_{Ai}'( \LL^{2/3}\zeta_2) \right]  ,\\ & III_{AA}= - \rd_s \mathcal{E}_{Ai_2}(s), \\ & IV_{AA}=  \frac{\rd_s \ep^{WKB}_{1,-}(s)}{1+  \ep^{WKB}_{1,-}(s)}.
			\end{split}
		\end{equation}  
		\begin{equation}\begin{split}
				&	W(w_{1,+},\uAii)\\=\ &  \frac{\LL^{1/2} \LL^{-1/6}}{(Mm^2)^{1/2}}\ |f_2|^{-1/2}(s) \exp\left(	 \int^{s}_{s_{I}}  |V|^{1/2}(s') ds'
				- \int_{s}^{s_{II}} |V|^{1/2}(s') ds'\right) \cdot [1+\ep^{WKB}_{1,+}(s)]\cdot \left( II_{BA}+III_{BA}+IV_{BA}  \right), \\ &  II_{BA}=-  \frac{(Mm^2)\LL^{-4/3} |f_2|^{1/2}(s)}{|\zeta_2|^{1/2}(s)}m_{Ai}'( \LL^{2/3}\zeta_2)=  -  \frac{(Mm^2) \LL^{-3} |f_2|^{1/2}(s)}{|\zeta_2|^{3}(s)}\left[  \LL^{5/3}|\zeta_2|^{5/2}\cdot m_{Ai}'( \LL^{2/3}\zeta_2) \right] \\ & III_{BA}= - \rd_s \mathcal{E}_{Ai_2}(s),\\ & IV_{BA}=  \frac{\rd_s \ep^{WKB}_{1,+}(s)}{1+  \ep^{WKB}_{1,+}(s)}.
			\end{split}
		\end{equation}
		
		\begin{equation}\begin{split}
				&	W(w_{1,+},\uBii)=  \frac{\LL^{1/2} \LL^{-1/6}}{(Mm^2)^{1/2}}\ |f_2|^{-1/2}(s) \exp\left(
				\int_{s_I}^{s_{II}} |V|^{1/2}(s') ds'\right) \cdot [1+\ep^{WKB}_{1,+}(s)]\cdot \left( I_{BB}+ II_{BB}+III_{BB}+IV_{BB}  \right), \\ & I_{BB}=  \frac{2 Mm^2}{\LL} |f_2|^{1/2}(s)\ \left(  m_{Bi}( \LL^{2/3}\zeta_2)  +  \mathcal{E}_{Bi_2}(s)\right) ,\\ & II_{BB}=-  \frac{(Mm^2)\LL^{-4/3} |f_2|^{1/2}(s)}{|\zeta_2|^{1/2}(s)}m_{Bi}'( \LL^{2/3}\zeta_2)=  -  \frac{ \LL^{-3} |f_2|^{1/2}(s)}{|\zeta_2|^{3}(s)}\left[  \LL^{5/3}|\zeta_2|^{5/2}\cdot m_{Bi}'( \LL^{2/3}\zeta_2) \right]  ,\\ & III_{BB}= - \rd_s \mathcal{E}_{Bi_2}(s),\\ & IV_{BB}=  \frac{\rd_s \ep^{WKB}_{1,+}(s)}{1+  \ep^{WKB}_{1,+}(s)}.
			\end{split}
		\end{equation}  
		
		\begin{equation}\begin{split}
				&	W(w_{1,-},\uBii)\\ =\ & \frac{\LL^{1/2} \LL^{-1/6}}{(Mm^2)^{1/2}}\ |f_2|^{-1/2}(s) \exp\left(	- \int^{s}_{s_{I}}  |V|^{1/2}(s') ds'
				+\int_{s}^{s_{II}} |V|^{1/2}(s') ds'\right)  [1+\ep^{WKB}_{1,-}(s)] \left( II_{AB}+III_{AB}+IV_{AB}  \right), \\ &  II_{AB}=-  \frac{(Mm^2)\LL^{-4/3}|f_2|^{1/2}(s)}{|\zeta_2|^{1/2}(s)}m_{Bi}'( \LL^{2/3}\zeta_2)=  -  \frac{ \LL^{-3} |f_2|^{1/2}(s)}{|\zeta_2|^{3}(s)}\left[  \LL^{5/3}|\zeta_2|^{5/2}\cdot m_{Bi}'( \LL^{2/3}\zeta_2) \right] \\ & III_{AB}= - \rd_s \mathcal{E}_{Bi_2}(s),\\ & IV_{AB}=  \frac{\rd_s \ep^{WKB}_{1,-}(s)}{1+  \ep^{WKB}_{1,-}(s)}.
			\end{split}
		\end{equation}

		We get immediately,  applying  Proposition~\ref{TP2.prop} and Proposition~\ref{WKB.TP1.prop} to $1\leq s \leq\ep \LL^2$: \begin{align}
			& \label{W1}W(w_{1,-},\uAii) \nonumber \\ &= -\pi^{-1/2}  (Mm^2)^{1/2}\LL^{-2/3} \left[ 1+  O(e^{-\kappa_+ A})]\right]  \exp\left(-
			\int_{s_I}^{s_{II}} |V|^{1/2}(s') ds'\right),  \\  	& \label{W2} W(w_{1,-},\uBii)= \left( O(\LL^{-2/3}e^{-\kappa_+ A})\right) \exp\left(-	 \int^{s}_{s_{I}}  |V|^{1/2}(s') ds'
			+\int_{s}^{s_{II}} |V|^{1/2}(s') ds'\right), \\ &  \label{W3} W(w_{1,+},\uBii) = 2\pi^{-1/2}  (Mm^2)^{1/2}\LL^{-2/3} \left[1+ O(e^{-\kappa_+ A})\right]  \exp\left(
			\int_{s_I}^{s_{II}} |V|^{1/2}(s') ds'\right),
		\end{align} and   applying  Proposition~\ref{TP2.prop} and Proposition~\ref{WKB.TP1.prop} to $s_I+A\leq s \leq1$ \begin{equation}
			\label{W4} W(w_{1,+},\uAii)= O(\LL^{-2/3} e^{-\K A}) \cdot \exp\left(	\int^{s}_{s_{I}} |V|^{1/2}(s') ds'
			-\int_{s}^{s_{II}} |V|^{1/2}(s') ds'\right), 
		\end{equation}
		
		We evaluate \eqref{W1}, \eqref{W3} at $s=\ep \LL^2$ and find \begin{equation}
			W(w_{1,-},\uAii)= -\pi^{-1/2}  (Mm^2)^{1/2}\LL^{-2/3} \exp\left(-
			\int_{s_I}^{s_{II}} |V|^{1/2}(s') ds'\right) \left[1+ O( e^{-\kappa_+ A})\right],
		\end{equation}
		\begin{equation}
			W(w_{1,+},\uBii)= 2\pi^{-1/2}  (Mm^2)^{1/2}\LL^{-2/3} \exp\left(
			\int_{s_I}^{s_{II}} |V|^{1/2}(s') ds'\right) \left[1+ O( e^{-\kappa_+ A})\right].
		\end{equation}
		
		Now, we want to evaluate the RHS of \eqref{W2} at $s=\ep \LL^2$. Note that $$ -\int_{s_I}^{\ep\LL^2} |V|^{1/2}(s) ds+ \int^{s_{II}}_{\ep\LL^2} |V|^{1/2}(s) ds= -  \int^{s_{II}}_{s_I} |V|^{1/2}(s) ds+ O(\LL),$$ thus \begin{equation*}
			W(w_{1,-},\uBii)=O\left(\exp(-  \int^{s_{II}}_{s_I} |V|^{1/2}(s) ds+O(\LL)) \right).
		\end{equation*}
		
		Finally, we want to evaluate the RHS of \eqref{W4} at $s=s_I+A$. Thus,  since   $\int^{s_{I}+A}_{s_I} |V|^{1/2}(s) ds=O_A(1)$:
		
		\begin{equation*}
			W(w_{1,+},\uAii)	=  O_A\left(\LL^{-2/3}  \cdot\exp(-   \int^{s_{II}}_{s_I} |V|^{1/2}(s) ds) \right).
		\end{equation*} The estimates claimed in the proposition follow, using \eqref{WKB.wronskian}.

	\end{proof} In the next corollary, we estimate the coefficients $\alpha_{A_2}(\omega,\LL)$, $\alpha_{B_2}(\omega,\LL)$, $\beta_{A_2}(\omega,\LL)$, $\beta_{B_2}(\omega,\LL)$ more precisely, together with their $\rd_{\omega}$ derivatives.

	\begin{cor}\label{TP2.connection.cor} Recalling $C_I(\omega,\LL):= \LL^{-1}\int_{s_I}^{1} |V|^{1/2}(s) ds$ and defining $C_{II}(\omega,\LL):= \LL^{-1}\log^{-1}(\LL)\int_{s_I}^{s_{II}} |V|^{1/2}(s) ds$, we have  \begin{equation}\label{C.eq}
			1 \lesssim 	 C_I(\omega,\LL) \lesssim 1,\  |C_{II}(\omega,\LL)-2|\ls \log^{-1}(\LL),\ |\rd_{\omega} C_{II}|(\omega,\LL) \ls \LL^2 \log^{-1}(\LL).
		\end{equation} Thus, the coefficients $\alpha_{A_2}$, $\beta_{A_2}$, $\alpha_{B_2}$ and $\beta_{B_2}$ in   Proposition~\ref{TP2.prop} can be expressed as
		\begin{align*}
			& \alpha_{A_2}(\omega,\LL)=\frac{\pi^{-1/2}  (Mm^2)^{1/2}}{2}\LL^{-2/3}  {\exp(-   C_{II}\LL\log(\LL))} \cdot \delta_{\alpha,A_2}(\omega,\LL),\\ & \alpha_{B_2}(\omega,\LL)= \pi^{-1/2}  (Mm^2)^{1/2}\LL^{-2/3} \exp\left(
			C_{II}\LL\log(\LL)\right) \left[1+\delta_{\alpha,B_2}(\omega,\LL) \right],\\ &  \beta_{A_2}(\omega,\LL)=\frac{\pi^{-1/2}  (Mm^2)^{1/2}}{2}\LL^{-2/3} \exp\left(-
			C_{II}\LL\log(\LL)\right) \left[1+ \delta_{\beta,A_2}(\omega,\LL)\right],\\ & \beta_{B_2}(\omega,\LL)= \pi^{-1/2}  (Mm^2)^{1/2}\LL^{-2/3} \cdot\exp(-  C_{II}\LL\log(\LL)) \cdot \tilde{\beta}_{B_2}(\omega,\LL),\\ & |\delta_{\alpha,A_2}(\omega,\LL)|\ls_A 1,\  |\delta_{\alpha,B_2}|(\omega,\LL),\ |\delta_{\beta,A_2}|(\omega,\LL) \ls e^{-\K A}
			, \\  & |\rd_{\omega}\delta_{\alpha,A_2}(\omega,\LL)|,\ |\rd_{\omega}\delta_{\alpha,B_2}|(\omega,\LL),\ |\rd_{\omega}\delta_{\beta,A_2}|(\omega,\LL)\ls_A  \LL^{3} \log(\LL),\\ & |\tilde{\beta}_{B_2}|(\omega,\LL),\  \LL^{-3}|\rd_{\omega}\tilde{\beta}_{B_2}|(\omega,\LL) \ls_A \exp( O(\LL)) .
		\end{align*}
	\end{cor}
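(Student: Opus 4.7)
The corollary packages two distinct refinements of Proposition~\ref{connection12.prop}: a precise asymptotic for the Agmon phase integral $\Phi(\omega):=\int_{s_I}^{s_{II}} |V|^{1/2}(s)\,ds$ (encoded in \eqref{C.eq}), and an $\rd_\omega$-refinement of the connection coefficients. The first ingredient extracts the leading $2\LL\log\LL$ factor in $\Phi$, and the second propagates the $\rd_\omega$ bounds of the building blocks (Proposition~\ref{WKB.TP1.prop} for $w_{1,\pm}$ and Corollary~\ref{TP2.cor} for $u_{Ai_2}, u_{Bi_2}$) through the Wronskian computations that defined those coefficients.

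\textbf{Step 1: the bounds on $C_I, C_{II}$ in \eqref{C.eq}.} The estimate $C_I\approx 1$ follows by splitting $[s_I,1]$ into the horizon layer $[s_I,s_I+A]$, where $|V|^{1/2}\approx \LL e^{\kappa_+ s}$ contributes $O(\LL)$, and the transition layer $[s_I+A,1]$ where $|V|^{1/2}\approx \LL$. For $C_{II}$, partition $[s_I,s_{II}]$ into: (a) the horizon layer $[s_I,s_I+A]$ (contribution $O(\LL)$); (b) the power-law region $[s_I+A,\tfrac{1}{2}s_{II}]$, where $V=\LL^2 r^{-2}(1+o(1))+O(r^{-1})$ so that $|V|^{1/2}(s)=\LL/s+O(1)$ and $\int|V|^{1/2}=\LL\log s_{II}+O(\LL)=2\LL\log\LL+O(\LL)$ since $s_{II}\approx \LL^2/(Mm^2)$; and (c) the turning-point layer $[\tfrac{1}{2}s_{II},s_{II}]$ where $|V|^{1/2}\lesssim \sqrt{s_{II}-s}$, contributing $O(\LL)$. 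Summing, $\Phi(\omega)=2\LL\log\LL+O(\LL)$, hence $|C_{II}-2|\lesssim\log^{-1}\LL$. For $\rd_\omega C_{II}$, differentiate $\Phi$: the boundary term at $s_{II}$ vanishes since $V(s_{II})=0$, the boundary term at $s_I$ is $O(1)$ using $\rd_\omega s_I=O(1)$ and $|V|^{1/2}(s_I)=\kappa_+/2+O(\LL^{-2})$, and the interior integrand $-\omega/|V|^{1/2}$ has an integrable square-root singularity at $s_{II}$. The dominant contribution comes from region (b) and gives $|\rd_\omega \Phi|\lesssim \omega\cdot s_{II}^2/\LL\approx \LL^3$, so that $|\rd_\omega C_{II}|\lesssim \LL^2/\log\LL$.

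\textbf{Step 2: the factorizations and the ``easy'' $\delta$-corrections.} Substituting $\int_{s_I}^{s_{II}}|V|^{1/2}=C_{II}\LL\log\LL$ into the identities of Proposition~\ref{connection12.prop} immediately produces the stated forms of $\alpha_{A_2},\alpha_{B_2},\beta_{A_2},\beta_{B_2}$, with $\delta_{\alpha,B_2},\delta_{\beta,A_2}$ absorbing the $O(e^{-\kappa_+ A})$ errors recorded there, while $\delta_{\alpha,A_2}$ and $\tilde\beta_{B_2}$ are only forced to satisfy the weaker bounds $O_A(1)$ and $\exp(O(\LL))$ respectively, as stated.

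\textbf{Step 3: the $\rd_\omega$-estimates on the $\delta$'s.} Revisit each Wronskian identity $W(w_{1,\pm},u_{Ai_2/Bi_2})$ in the proof of Proposition~\ref{connection12.prop} and apply $\rd_\omega$. Every term is a product of two factors whose $\rd_\omega$-behavior is controlled: the $\rd_\omega$ (or $\rd_\omega\rd_s$) derivatives of $w_{1,\pm}$ and $u_{Ai_2/Bi_2}$ pick up at most $\LL^3$ beyond their WKB amplitude (Proposition~\ref{WKB.TP1.prop} and Corollary~\ref{TP2.cor}), while $\rd_\omega$ of the explicit exponentials introduces $\rd_\omega\int_{s_I}^{s}|V|^{1/2}$ factors, themselves $\lesssim\LL^3$ by the same computation as Step 1 on shorter intervals. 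Assembling these into the $2\times 2$ linear system that defines $\alpha_{A_2},\alpha_{B_2},\beta_{A_2},\beta_{B_2}$, dividing by the Wronskian normalization \eqref{WKB.wronskian}, and keeping the extra $\log\LL$ from $\rd_\omega(\LL\log\LL\cdot C_{II})$ yields the claimed $\LL^3\log\LL$ bound on $\rd_\omega\delta_{\alpha,A_2},\rd_\omega\delta_{\alpha,B_2},\rd_\omega\delta_{\beta,A_2}$ and the weaker $\LL^3\exp(O(\LL))$ bound on $\rd_\omega\tilde\beta_{B_2}$.

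\textbf{Main obstacle.} The delicate point is preventing the exponential amplitudes $\exp(\pm 2\LL\log\LL)$ from contaminating the $\rd_\omega$-bounds when one differentiates a Wronskian whose left-hand side is the exponentially small $\alpha_{A_2}$ or $\beta_{B_2}$. Any $\rd_\omega$-error carrying an $\exp(2\LL\log\LL)$ factor would immediately destroy the target bound. The proof of Proposition~\ref{connection12.prop} isolated the small coefficients by evaluating the corresponding Wronskians at the endpoint ($s=s_I+A$ for $W(w_{1,+},u_{Ai_2})$ and $s=\ep\LL^2$ for $W(w_{1,-},u_{Bi_2})$) where the exponential factor in the remainder equals $O(1)$; the same choice of evaluation point must be preserved under $\rd_\omega$, so that the amplified $\rd_\omega$ bounds of Proposition~\ref{WKB.TP1.prop} and Corollary~\ref{TP2.cor}, which are themselves only polynomial in $\LL$ once the WKB amplitudes are factored out, do not produce a spurious $\exp(O(\LL\log\LL))$ prefactor. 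Tracking this is what allows the $\LL^3\log\LL$ rather than $\exp(O(\LL\log\LL))$ bound to close.
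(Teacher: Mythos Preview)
Your proposal is correct and follows essentially the same approach as the paper: the paper's proof likewise revisits the Wronskian computations from Proposition~\ref{connection12.prop}, applies $\rd_\omega$ termwise, and evaluates at the same strategic endpoints ($s=\ep\LL^2$ for three of the Wronskians and $s=s_I+A$ for $W(w_{1,+},u_{Ai_2})$), invoking the pointwise $\rd_\omega$ and $\rd_{\omega s}^2$ bounds on $\ep^{WKB}_{1,\pm}$ and $\mathcal{E}_{Ai_2/Bi_2}$ at those points to obtain $|\rd_\omega W|\lesssim \LL^{7/3}\log\LL$ times the appropriate exponential. Your identification of the main obstacle---that one must retain the same evaluation points as in the undifferentiated case to prevent the $\exp(\pm 2\LL\log\LL)$ amplitudes from leaking into the error---is exactly the mechanism the paper exploits.
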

	
	\begin{proof}
		Recovering the computations in the proof of Proposition~\ref{connection12.prop}, we obtain, using the estimate $|\rd_{\omega} V|\ls 1$:
		
		\begin{equation}\begin{split}
				&	|\rd_{\omega}W(w_{1,-},\uAii)| \ls \left(|V|^{-1}+\int_{s_I}^{s_{II}} |V|^{-1/2}(s') ds'+ |\rd_{\omega}\ep^{WKB}_{1,-}| \right)|W(w_{1,-},\uAii)|\\ & + \LL^{1/3} |f_2|^{-1/2}(s) \exp\left(
				- \int_{s_I}^{s_{II}} |V|^{1/2}(s') ds'\right) \cdot  \left(| \rd_{\omega}I_{AA}|+ |\rd_{\omega}II_{AA}|+|\rd_{\omega}III_{AA}|+|\rd_{\omega}IV_{AA}|  \right), \\ & |\rd_{\omega}I_{AA}| \lesssim   \LL^{-1} |f_2|^{1/2}(s)\ \left( |V|^{-1}+ \LL^{2/3}|\rd_{\omega}\zeta_2| |m_{Ai}'|( \LL^{2/3}\zeta_2)  +  |\rd_{\omega}\mathcal{E}_{Ai_2}|\right) ,\\ & |\rd_{\omega}II_{AA|} \ls   \frac{\LL^{-4/3} |f_2|^{1/2}(s)}{|\zeta_2|^{1/2}(s)}\left( [|V|^{-1}+ \frac{|\rd_{\omega}\zeta_2|}{|\zeta_2|}]|m_{Ai}'|( \LL^{2/3}\zeta_2)+ \LL^{2/3} |\rd_{\omega}\zeta_2||m_{Ai}''|( \LL^{2/3}\zeta_2) \right)  ,\\ & |\rd_{\omega}III_{AA}| \ls |\rd^2_{\omega s} \mathcal{E}_{Ai_2}|,	\\ & |\rd_{\omega}IV_{AA}| \ls  |\rd_{\omega s}^2 \ep^{WKB}_{1,-}| +|\rd_s \ep^{WKB}_{1,-}|\ |\rd_{\omega} \ep^{WKB}_{1,-}|.
			\end{split}
		\end{equation}

		Consistently with	the proof of Proposition~\ref{connection12.prop}, we will estimate $\rd_{\omega}W(w_{1,-},\uAii)$ at $s=\ep \LL^2$. Note that \begin{equation}\label{eq1.lemma}\begin{split}
				& |V|(s=\ep \LL^2) \approx \LL^{-2},\ |\zeta_2|(s=\ep \LL^2) \approx 1,\  |\rd_{\omega}\zeta_2|(s=\ep \LL^2) \approx \LL^2, \\ & |\mathcal{E}_{Ai_2}|(s=\ep \LL^2),\ |\mathcal{E}_{Bi_2}|(s=\ep \LL^2) \ls e^{-\kappa_+ A},  \\ & |\rd_{s}\mathcal{E}_{Ai_2}|(s=\ep \LL^2),\ |\rd_{s}\mathcal{E}_{Bi_2}|(s=\ep \LL^2) \ls \LL^{-1} e^{-\kappa_+ A},\\   & |\rd_{\omega}\mathcal{E}_{Ai_2}|(s=\ep \LL^2),\ |\rd_{\omega}\mathcal{E}_{Bi_2}|(s=\ep \LL^2) \ls \LL^3  \log^{1/3}(\LL), \\   & |\rd_{\omega s}^2\mathcal{E}_{Ai_2}|(s=\ep \LL^2),\ |\rd_{\omega s}^2\mathcal{E}_{Bi_2}|(s=\ep \LL^2) \ls  \LL^2 \log^{1/3}(\LL),\\ &   |\ep^{WKB}_{1,\pm}|(s=\ep\LL^2) \ls  e^{-\K A},\ |\rd_{s}\ep^{WKB}_{1,\pm}|(s=\ep\LL^2) \ls \LL^{-1}e^{-\K A},\\  &   |\rd_{\omega}\ep^{WKB}_{1,\pm}|(s=\ep\LL^2) \ls   \LL^3\log\LL,\ |\rd_{\omega s}^2\ep^{WKB}_{1,\pm}|(s=\ep\LL^2) \ls    \LL^2\log\LL.\end{split}
		\end{equation}  Thus we obtain \begin{equation}\label{W--.domega}\begin{split}
				&	|\rd_{\omega}W(w_{1,-},\uAii)| \ls  \LL^3|W(w_{1,-},\uAii)|+ \LL^{\frac{7}{3}}\log(\LL) \exp\left(
				- \int_{s_I}^{s_{II}} |V|^{1/2}(s') ds'\right)\\   &\lesssim  \LL^{\frac{7}{3}}\log(\LL)  \exp\left(
				- \int_{s_I}^{s_{II}} |V|^{1/2}(s') ds'\right).\end{split}
		\end{equation} The other computations (except for $	W(w_{1,+},\uAii)$) are done similarly: making use \eqref{eq1.lemma}, we obtain \begin{align}
			&  \label{W++.domega} |\rd_{\omega} W(w_{1,+},\uBii)|\ls \LL^{\frac{7}{3}}\log(\LL)  \exp\left(
			\int_{s_I}^{s_{II}} |V|^{1/2}(s') ds'\right),   \\ & \label{W-+.domega} |\rd_{\omega} W(w_{1,-},\uBii)|\ls  \exp\left(
			- \int_{s_I}^{s_{II}} |V|^{1/2}(s') ds'+ O(\LL)\right).
		\end{align}
		
		For $	W(w_{1,+},\uAii)$, recall from the proof of Proposition~\ref{TP2.prop} that we evaluate at $s=s_I+A$. Thus, it will be useful to remark that
		
		\begin{equation}\label{eq2.lemma}\begin{split}
				& |V|(s=s_I+A) \approx_A {1},\ |\zeta_2|({s=s_I+A}) \approx \log^{2/3}(\LL),\  |\rd_{\omega}\zeta_2|({s=s_I+A})\approx \LL^{2} \log^{-1/3}(\LL), \\ & |\mathcal{E}_{Ai_2}|({s=s_I+A}),\ |\mathcal{E}_{Bi_2}|({s=s_I+A}),\ |\rd_{s}\mathcal{E}_{Ai_2}|({s=s_I+A}),\ |\rd_{s}\mathcal{E}_{Bi_2}|({s=s_I+A})  \ls_A {1},  \\   & |\rd_{\omega}\mathcal{E}_{Ai_2}|({s=s_I+A}),\ |\rd_{\omega}\mathcal{E}_{Bi_2}|({s=s_I+A}),\  |\rd_{\omega s}^2\mathcal{E}_{Ai_2}|({s=s_I+A}),\ |\rd_{\omega s}^2\mathcal{E}_{Bi_2}|({s=s_I+A})  \ls_A \LL^3 \log^{1/3} \LL, \\ &   |\ep^{WKB}_{1,\pm}|({s=s_I+A}),\ |\rd_{s}\ep^{WKB}_{1,\pm}|({s=s_I+A}),\  |\rd_{\omega}\ep^{WKB}_{1,\pm}|({s=s_I+A}),\ |\rd_{\omega s}^2\ep^{WKB}_{1,\pm}|({s=s_I+A}) \ls_A \LL^3 \log \LL.
			\end{split}
		\end{equation} Thus, we eventually get  \begin{equation}   \label{W+-.domega}
			|\rd_{\omega} W(w_{1,+},\uAii)| \ls_A \LL^{\frac{7}{3}}\log(\LL)  \exp\left(-\int_{s_I}^{s_{II}}|V|^{1/2}(s) ds \right).
		\end{equation}
		
		The estimates claimed in the corollary thus follow.
	\end{proof}

	\subsection{Control of the solution slightly above the second turning point}\label{bessel.used.section}
	This section is  inspired by Section 5.4 in our previous work \cite{KGSchw1}.  Denote $\Vs(s) = V(\omega=m,s)$ so that $V(\omega,s)= (Mm^2)^2 k^{-2} +  \Vs(s)$.

	\begin{lemma}		There exists  $B_\pm^{*}$ solutions of \eqref{eq:mainV}	with $\omega=m$ of the following form 	
		
		\begin{equation*}		B^{*}_\pm(s)= |V|^{-1/4}(s) \exp( \pm i \int_{s_{II}}^{s} |V|^{1/2}(s') ds') \left[1+\ep_{*,\pm}^{WKB}(s)\right],		\end{equation*} with	$B^{*}_-(s)=	\overline{B^{*}_+}(s)$ and satisfying the following estimates for $s \geq \frac{2\LL^2}{Mm^2}$: \begin{equation}|\ep_{*,\pm}^{WKB}|(s),\  s^{1/2}|\frac{d\ep_{*,\pm}^{WKB}}{ds}|(s) \ls s^{-1/2}. 		\end{equation}							\end{lemma}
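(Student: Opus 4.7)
\medskip

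\textbf{Proof proposal.} The plan is to follow the standard Liouville--Green / WKB construction of Olver (Chapter 6), exactly as already used in Proposition~\ref{WKB.TP1.prop} and Proposition~\ref{TP2.prop}, but now in the classically allowed region $s\geq \tfrac{2\LL^2}{Mm^2}$ with $\omega=m$ fixed. In this region one has $\Vs(s)<0$, with
\[
\Vs(s)= -\frac{2Mm^2}{r}+\frac{\LL^2}{r^2}+O\bigl(\LL^2 r^{-3}\bigr)+O(r^{-2}) \quad \text{as }r\to\infty,
\]
so that $|\Vs|(s)\approx \frac{2Mm^2}{s}$ for $s\gtrsim \LL^2$. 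First I would seek solutions of \eqref{eq:mainV} (with $\omega=m$) in the form
\[
B^{*}_\pm(s)=|\Vs|^{-1/4}(s)\exp\Bigl(\pm i\int_{s_{II}}^{s}|\Vs|^{1/2}(s')\,ds'\Bigr)\bigl(1+\ep_{*,\pm}^{WKB}(s)\bigr),
\]
and derive the Volterra equation satisfied by $\ep_{*,\pm}^{WKB}$ in the phase variable $\xi_\pm = \pm\int_{s_{II}}^{s}|\Vs|^{1/2}\,ds'$, namely
\[
\ep_{*,\pm}^{WKB}(\xi)=\frac{1}{2i}\int_{\infty}^{\xi}\bigl(1-e^{2i(v-\xi)}\bigr)\,\psi_*(v)\bigl(1+\ep_{*,\pm}^{WKB}(v)\bigr)\,dv,
\]
with control function $\psi_*=-|\Vs|^{-3/4}\tfrac{d^2}{ds^2}(|\Vs|^{-1/4})$. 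The key point is that, since we are in a classically allowed region where $|\Vs|(s)\to 0$ as $s\to\infty$, the natural initial condition is at $s=+\infty$ (i.e.\ $\ep_{*,\pm}^{WKB}(\infty)=0$); this in turn guarantees the conjugate symmetry $B^{*}_-(s)=\overline{B^{*}_+(s)}$, because $\Vs$ is real-valued and therefore $\overline{B^{*}_+}$ solves the same equation with the same normalization as $s\to\infty$.

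Next, I would estimate the total variation of the control function on $[\tfrac{2\LL^2}{Mm^2},\infty)$. The asymptotic expansion of $\Vs$ gives $|\Vs|^{-1/4}\sim s^{1/4}$ and $\tfrac{d^2}{ds^2}|\Vs|^{-1/4}=O(s^{-7/4})$, hence $|\psi_*|\lesssim s^{-1}$ pointwise. Converting to the phase variable via $d\xi = |\Vs|^{1/2}\,ds\sim s^{-1/2}\,ds$, the effective integrand $|\psi_*(s)|\,|\Vs|^{1/2}(s)$ is $O(s^{-3/2})$, which is integrable at infinity, with
\[
\int_{s}^{\infty}|\psi_*(s')|\,|\Vs|^{1/2}(s')\,ds'\lesssim s^{-1/2}.
\]
An application of Theorem~\ref{thm:volterra} (of the Volterra appendix) then yields $|\ep_{*,\pm}^{WKB}|(s)\lesssim s^{-1/2}$, which is the first desired bound.

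For the derivative bound, differentiating the Volterra equation and using $d\xi/ds=|\Vs|^{1/2}\sim s^{-1/2}$ together with the pointwise bound on $\psi_*$, one obtains $|d\ep_{*,\pm}^{WKB}/ds|\lesssim s^{-3/2}$, hence $s^{1/2}|d\ep_{*,\pm}^{WKB}/ds|\lesssim s^{-1}\leq s^{-1/2}$, as required. I expect the only genuine subtlety to be bookkeeping the subleading contributions to $\Vs$, namely the $\LL^2/r^2$ and $O(r^{-2})$ terms, and checking that they do not destroy the $O(s^{-3/2})$ bound on $|\psi_*(s)|\,|\Vs|^{1/2}(s)$ uniformly in $\LL$ down to $s\sim \LL^2/(Mm^2)$. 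A direct computation shows they contribute at most $O(\LL^2 s^{-3})\cdot s^{-1/2}=O(\LL^2 s^{-7/2})$, which is dominated by $O(s^{-3/2})$ in the region $s\geq 2\LL^2/(Mm^2)$, so the estimate is uniform in $\LL$. This is the only step that requires actual care; the rest is standard Liouville--Green.
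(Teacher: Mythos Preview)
Your proposal is correct and follows essentially the same approach as the paper, which simply invokes the standard WKB approximation (referencing Lemma~5.7 of \cite{KGSchw1}) after noting that for $\omega=m$ there is only one turning point in $\{s\geq 0\}$ and that we are well above it. Your write-up spells out the Volterra argument and the control-function estimate in more detail than the paper bothers to; the only minor slip is that $|d\ep_{*,\pm}^{WKB}/ds|\lesssim s^{-1}$ rather than $s^{-3/2}$ (since $|\partial_\xi\ep|\lesssim s^{-1/2}$ and $d\xi/ds\sim s^{-1/2}$), but this still yields $s^{1/2}|d\ep_{*,\pm}^{WKB}/ds|\lesssim s^{-1/2}$ as required.
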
						\begin{proof}					Note that since $\omega=m$, there is only one turning point in the $\{s\geq 0\}$ region, located at $s\approx \LL^2$.	The proof then follows straightforwardly from the WKB-approximation (away from the turning point) and is the same as in  the first part of the proof of Lemma 5.7 in \cite{KGSchw1}.						\end{proof}
	
	\begin{prop}\label{bessel.prop}			For any $\omega \in (\frac{m}{2},m)$, 	there exists  $B_\pm$ solutions of \eqref{eq:mainV} of the following form 			\begin{equation*}			B_\pm(\omega,s)= |V|^{-1/4}(s) \exp( \pm i \int_{s_{II}}^{s} |V|^{1/2}(s') ds')  + \eta_{\pm}^{B}(\omega,s),			\end{equation*}  with $B_-(s)=	\overline{B_+}(s)$ with the following estimates for  any $\frac{2\LL^2}{Mm^2} \leq s \leq k^{4/3}$:   \begin{equation}\begin{split}		&	|\eta_{\pm}^{B}|(\omega,s),\  s^{1/2}|\rd_s\eta_{\pm}^{B}|(s) \ls k^{-2} s^{7/4}+  s^{-1/4},\\ & |\rd_\omega\eta_{\pm}^{B}|(\omega,s),\  s^{1/2}|\rd^2_{\omega s}\eta_{\pm}^{B}|(\omega,s) \ls s^{7/4}.  \end{split}			\end{equation}			Moreover, we have the Wronskian identity $$ W(B_{+},B_-)= 2i. $$		\end{prop}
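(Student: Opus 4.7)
\textbf{Proof proposal for Proposition~\ref{bessel.prop}.} The strategy is to treat $B_\pm^*$ from the preceding lemma as approximate solutions and perturb them via a Volterra equation. The key observation is that
\[
V(\omega,s) - V^*(s) = (Mm^2)^2 k^{-2},
\]
which, although $s$-independent, is small in the regime $\omega$ close to $m$. Since $B_\pm^*$ exactly solve \eqref{eq:mainV} with $\omega=m$, they satisfy \eqref{eq:mainV} for general $\omega<m$ up to a source of size $(Mm^2)^2 k^{-2} B_\pm^*$, and we plan to correct this error using the standard Green's function construction.

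Concretely, I would set up the Volterra equation
\[
B_\pm(\omega,s) = B_\pm^*(s) - (Mm^2)^2 k^{-2} \int_s^{k^{4/3}} \mathcal{K}(s,s') B_\pm(\omega,s')\, ds',
\qquad \mathcal{K}(s,s') \doteq \frac{B_+^*(s)B_-^*(s') - B_-^*(s)B_+^*(s')}{W(B_+^*,B_-^*)}.
\]
A direct computation (using $\mathcal{K}(s,s)=0$, $\partial_s \mathcal{K}(s,s')|_{s'=s}=-1$, and $(\partial_s^2 - V^*)B_\pm^*=0$) shows that any solution of this integral equation solves \eqref{eq:mainV}. The Wronskian $W(B_+^*,B_-^*) = 2i$ is obtained from the explicit WKB form of $B_\pm^*$ together with $B_-^* = \overline{B_+^*}$.

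The main estimate is then an application of Theorem~\ref{thm:volterra} from Appendix~\ref{volterra.section}. Using the bound $|\mathcal{K}(s,s')| \lesssim |V|^{-1/4}(s)|V|^{-1/4}(s')$ (up to the small $\epsilon_{*,\pm}^{WKB}$ corrections already controlled by the preceding lemma), the source term in the Volterra iteration is bounded by $(Mm^2)^2 k^{-2} |V|^{-1/4}(s)\int_s^{k^{4/3}} |V|^{-1/2}(s')\,ds'$, which one checks yields the $k^{-2} s^{7/4}$ contribution; the $s^{-1/4}$ piece of $\eta_\pm^B$ comes directly from the intrinsic WKB error $\epsilon_{*,\pm}^{WKB}$ of $B_\pm^*$ multiplied by the $|V|^{-1/4}$ prefactor. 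The corresponding estimate on $\partial_s\eta_\pm^B$ is obtained by differentiating the Volterra equation (and exploiting that the boundary term at $s'=s$ vanishes since $\mathcal{K}(s,s)=0$). The $\partial_\omega$ estimates follow analogously from Theorem~\ref{thm:volterraparam}, with the crucial loss that $\partial_\omega((Mm^2)^2k^{-2}) = -2\omega = O(1)$, which explains why the bounds on $\partial_\omega\eta_\pm^B$ and $\partial^2_{\omega s}\eta_\pm^B$ lose the $k^{-2}$ factor and give only $s^{7/4}$ and $s^{1/2}\cdot s^{7/4}$ respectively.

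Finally, for the Wronskian identity $W(B_+,B_-)=2i$: since $B_\pm$ solve a real second-order ODE and $B_- = \overline{B_+}$, the Wronskian is a real constant multiple of $i$, so it suffices to evaluate at any convenient point. I would evaluate at the upper endpoint $s=k^{4/3}$, where both the Volterra integral and its $s$-derivative vanish, so $B_\pm(k^{4/3}) = B_\pm^*(k^{4/3})$ and $\partial_s B_\pm(k^{4/3}) = \partial_s B_\pm^*(k^{4/3})$, reducing the question to $W(B_+^*,B_-^*)=2i$ which we already established. The main technical obstacle in executing this plan is bookkeeping the non-uniform size of $|V|^{-1/4}$ over the wide range $[2L^2/(Mm^2), k^{4/3}]$ (it varies between $L^{1/2}$ and $k^{1/3}$) when plugging into the hypotheses of the Volterra theorems; this is what forces the particular choice of upper cutoff $k^{4/3}$, which sits safely below the turning point $s_{III} \sim k^2$ so that the WKB representation of $B_\pm^*$ stays valid.
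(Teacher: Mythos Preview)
Your overall plan---perturb $B_\pm^*$ by a Volterra iteration with the $(Mm^2)^2k^{-2}$ term treated as an inhomogeneity---is exactly the paper's approach. However, there is a concrete error in the direction of your Volterra integral that invalidates the claimed bounds.

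You anchor the iteration at the \emph{upper} endpoint $s=k^{4/3}$ and integrate $\int_s^{k^{4/3}}$. You then assert that the source
\[
(Mm^2)^2 k^{-2}\, |V|^{-1/4}(s)\int_s^{k^{4/3}} |V|^{-1/2}(s')\,ds'
\]
``yields the $k^{-2}s^{7/4}$ contribution''. It does not: on this range $|V|^{-1/2}(s')\sim s'^{1/2}$, so the integral is dominated by its upper limit and equals $\sim (k^{4/3})^{3/2}=k^{2}$, giving a source of size $\sim s^{1/4}$ rather than $k^{-2}s^{7/4}$. The resulting bound $|\eta_\pm^B|\lesssim s^{1/4}+s^{-1/4}$ is useless at $s\sim\LL^2$: it gives $\LL^{1/2}$, the same size as the leading WKB term $|V|^{-1/4}\sim\LL^{1/2}$, so you would not be able to extract the leading coefficients in Corollary~\ref{cor.bessel.airy} (which requires $|\eta_\pm^B|(s\sim\LL^2)\lesssim\LL^{-1/2}$).

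The fix is to reverse the direction: anchor at the \emph{lower} endpoint $s_0=\tfrac{2\LL^2}{Mm^2}$ and integrate $\int_{s_0}^{s}$. Then $\int_{s_0}^{s}s'^{1/2}\,ds'\sim s^{3/2}$ and the source is $\sim k^{-2}s^{1/4}\cdot s^{3/2}=k^{-2}s^{7/4}$, matching the stated estimate (small at $s\sim\LL^2$, of size $k^{1/3}$ only at $s=k^{4/3}$). Your Wronskian argument survives with one cosmetic change: evaluate at $s_0$ instead of $k^{4/3}$, since that is now where the Volterra correction and its $s$-derivative vanish, giving $W(B_+,B_-)=W(B_+^*,B_-^*)=2i$. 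The $\partial_\omega$ estimates via Theorem~\ref{thm:volterraparam} go through as you describe once the direction is corrected.
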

	
	\begin{proof}				The proof is the same as the second part of the proof of Lemma 5.7 in \cite{KGSchw1}, working in $(\omega,s)$ coordinate instead of introducing the auxiliary variable $x=\frac{s}{k^2}$: the main idea to  introduce $$ B_{\pm}(\omega,s)=B_{\pm}^{*}(s)+\ep_{\pm}(\omega,s)  $$ and write \eqref{eq:mainV} as an integral Volterra equation for $\ep_{\pm}$, treating the $k^{-2}$ term in $V$ as an inhomogeneity (see \cite{KGSchw1} for more details).			\end{proof}
	Now we will connect the Bessel-like functions $B_{\pm}(\omega,s)$ to the Airy functions $\uAii$ and $\uBii$ adapted to the second turning point. The connection will take place in the classically forbidden region, close enough (but above) the second turning point $\{ \frac{2\LL^2}{Mm^2} \leq s \leq   \frac{10\LL^2}{Mm^2} \}$.
	\begin{cor}\label{cor.bessel.airy}We have the following estimates for all $\frac{2\LL^2}{Mm^2} \leq s \leq  \frac{10\LL^2}{Mm^2}$: \begin{equation}\label{bessel.airy}\begin{split}	& B_{\pm}(\omega,s) = b_{\pm,+}(\omega) \left(\uAii(\omega,s)-i\uBii(\omega,s)\right)+ b_{\pm,-}(\omega) \left(\uAii(\omega,s)+i\uBii(\omega,s)\right),\\ & |\frac{b_{\pm \pm}(\omega)}{{\LL^{2/3}}}- (Mm^2)^{-1/2} \pi^{1/2}  e^{\mp i \frac{\pi}{4}}|\ls{e^{-\kappa_+ A}},\ \LL^{-3}|\rd_{\omega}b_{\pm \pm}|(\omega) \ls {\log^{1/3}\LL}, \\ & \frac{|b_{\pm \mp}|(\omega)}{{\LL^{2/3}}}\ls{e^{-\kappa_+ A}},\ \LL^{-3}|\rd_{\omega}b_{\pm \mp}|(\omega) \ls {\log^{1/3}\LL},	\end{split}	 \end{equation}
		where  $b_{\pm \pm}=\bar{b}_{\mp \mp}$, and $b_{\pm \mp}=\bar{b}_{\mp \pm}$.  
	\end{cor}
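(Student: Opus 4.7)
Since $B_{\pm}$, $u_{Ai_2}$, and $u_{Bi_2}$ all solve the same second-order ODE~\eqref{eq:mainV} on the overlap region $\{\frac{2\LL^2}{Mm^2} \leq s \leq \frac{10\LL^2}{Mm^2}\}$, and $\{u_{Ai_2} - iu_{Bi_2},\ u_{Ai_2}+iu_{Bi_2}\}$ forms a basis with Wronskian $W(u_{Ai_2}-iu_{Bi_2},u_{Ai_2}+iu_{Bi_2}) = 2iW(u_{Ai_2},u_{Bi_2}) = 2i\LL^{-4/3}\frac{Mm^2}{\pi}$ by~\eqref{W.TP2}, the coefficients $b_{\pm,\pm}$ are uniquely determined by
\[
b_{\pm,+}(\omega) = \frac{W(B_{\pm},u_{Ai_2}+iu_{Bi_2})}{2i\LL^{-4/3}Mm^2\pi^{-1}}, \qquad b_{\pm,-}(\omega) = -\frac{W(B_{\pm},u_{Ai_2}-iu_{Bi_2})}{2i\LL^{-4/3}Mm^2\pi^{-1}}.
\]
The plan is to evaluate these Wronskians at a fixed point $s_0 = \frac{4\LL^2}{Mm^2}$, which sits safely inside the overlap region and away from the turning point $s_{II}$, and to extract the leading term plus error from the explicit asymptotic expressions.

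First I would substitute the asymptotic forms: from Corollary~\ref{cor.TP2} (rewritten via~\eqref{Airy2+-}),
\[
u_{Ai_2} \mp iu_{Bi_2} = (Mm^2)^{1/2}\pi^{-1/2}\LL^{-2/3}|V|^{-1/4}\Bigl(e^{\pm i\int_{s_{II}}^{s}|V|^{1/2}\mp i\pi/4} + \eta_2^{\pm}\Bigr),
\]
and from Proposition~\ref{bessel.prop},
\[
B_{\pm}(\omega,s) = |V|^{-1/4}(s)e^{\pm i\int_{s_{II}}^{s}|V|^{1/2}} + \eta_{\pm}^{B}.
\]
At $s_0 \approx \LL^2$, the basic size estimates give $|V|(s_0)\approx \LL^{-2}$, so $|V|^{-1/4}(s_0) \approx \LL^{1/2}$, and the leading (exponential) terms pair up so that $B_{\pm}$ matches $(u_{Ai_2}\mp iu_{Bi_2})$ up to the factor $(Mm^2)^{-1/2}\pi^{1/2}\LL^{2/3}e^{\mp i\pi/4}$. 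A direct Wronskian calculation using $W(e^{i\phi},e^{-i\phi}) = -2i\phi'$ with $\phi' = |V|^{1/2}$, together with the Wronskian identity $W(B_+,B_-)=2i$ verified in Proposition~\ref{bessel.prop}, will confirm this matching at the leading order. For the error estimates, I would use that at $s_0\approx \LL^2$ one has $k^{-2}s_0^{7/4}+s_0^{-1/4} \lesssim \LL^{7/2-p} + \LL^{-1/2} \lesssim e^{-\K A}$ (after choosing $p$ large and redefining $A$ if needed), while $|\eta_2^{\pm}|(s_0) \lesssim e^{-\K A}$ by the equivalent form \eqref{TP2.1'}--\eqref{TP2.2'} of Proposition~\ref{TP2.prop}. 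Cross-terms in the Wronskian are controlled using the derivative bounds $|\partial_s \eta_{\pm}^B|\lesssim s^{-1/2}(k^{-2}s^{7/4}+s^{-1/4})$ and the analogous bounds on $\partial_s\eta_2^{\pm}$, yielding $|b_{\pm,\pm}/\LL^{2/3} - (Mm^2)^{-1/2}\pi^{1/2}e^{\mp i\pi/4}| \lesssim e^{-\K A}$ and $|b_{\pm,\mp}|/\LL^{2/3} \lesssim e^{-\K A}$.

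For the $\partial_\omega$ bounds, I would differentiate the Wronskian expressions with respect to $\omega$ at fixed $s_0$. The dominant loss comes from two places: (i) $\partial_\omega$ hitting the phase $\int_{s_{II}}^{s_0}|V|^{1/2}ds'$, which, since $s_{II}\sim \LL^2$ depends on $\omega$ and $\partial_\omega V = O(1)$, contributes a factor of order $\LL^2\cdot\LL = \LL^3$ (after integrating over the range of length $\LL^2$ with $|V|^{-1/2}\sim \LL$); (ii) the error bounds $|\partial_\omega \eta_{\pm}^B|(s_0) \lesssim s_0^{7/4} \sim \LL^{7/2}$ from Proposition~\ref{bessel.prop} and $|\partial_\omega\eta_2^{\pm}|(s_0) \lesssim \LL^3 \log^{1/3}\LL$ from Corollary~\ref{TP2.cor}. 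Combining with the $\LL^{2/3}$ factor in $b_{\pm,\pm}$, the total bound is $\LL^3\log^{1/3}\LL$, matching the claim. The identities $b_{\pm,\pm}=\overline{b_{\mp,\mp}}$ and $b_{\pm,\mp}=\overline{b_{\mp,\pm}}$ then follow from $B_- = \overline{B_+}$ (stated in Proposition~\ref{bessel.prop}) and the fact that $u_{Ai_2}$, $u_{Bi_2}$ are real solutions, so taking complex conjugates exchanges the roles of $(u_{Ai_2}\pm iu_{Bi_2})$.

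The main obstacle is bookkeeping the error terms with their precise $\LL$-weights, especially for $\partial_\omega$, since both $\eta_{\pm}^B$ and $\eta_2^{\pm}$ have $\omega$-derivatives that blow up polynomially in $\LL$, and these must be combined with the $\LL^{4/3}$ from $W(u_{Ai_2},u_{Bi_2})^{-1}$ and the phase losses without exceeding the claimed $\LL^3\log^{1/3}\LL$ bound. The flexibility in choosing where in the overlap region to evaluate the Wronskian (the bounds are cleanest at $s_0\sim \LL^2$, away from both $s_{II}$ and from the region where $\eta_{\pm}^B$ deteriorates) is what makes this work, along with the fact that $m-\omega \lesssim \LL^{-p}$ with $p$ large enough guarantees $k^{-2}s^{7/4}\ll 1$ throughout the overlap.
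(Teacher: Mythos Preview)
Your proposal is correct and follows essentially the same approach as the paper: both substitute the asymptotic forms for $B_\pm$ from Proposition~\ref{bessel.prop} and for $u_{Ai_2}\mp iu_{Bi_2}$ from Corollary~\ref{cor.TP2}, then compute Wronskians in the overlap region $s\sim\LL^2$ to read off the connection coefficients and their errors. The paper organizes the computation slightly differently---it first rewrites $u_{Ai_2}\mp iu_{Bi_2}$ as $(Mm^2)^{1/2}\pi^{-1/2}\LL^{-2/3}\bigl[e^{\mp i\pi/4}(B_\pm-\eta_\pm^B)+|V|^{-1/4}\eta_{2,\pm}\bigr]$ and then takes Wronskians with $B_\pm$ and $B_\mp$ (using $W(B_+,B_-)=2i$), whereas you take Wronskians with the Airy combinations directly and divide by $W(u_{Ai_2}-iu_{Bi_2},u_{Ai_2}+iu_{Bi_2})$---but the substance is identical. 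One small bookkeeping point: your estimate $|\partial_\omega\eta_\pm^B|(s_0)\lesssim\LL^{7/2}$ only leads to the claimed $\LL^3$ bound on $\partial_\omega b_{\pm,\pm}$ after you pair it in the Wronskian with $|\partial_s B_\pm|\sim|V|^{1/4}\sim\LL^{-1/2}$ (and similarly $|\partial^2_{\omega s}\eta_\pm^B|\sim\LL^{5/2}$ pairs with $|B_\pm|\sim\LL^{1/2}$), which the paper records as $|\partial_\omega W(B_\pm,\eta_\pm^B)|\lesssim k^{-2}s^3+s^{3/2}\lesssim\LL^3$.
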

	
	\begin{proof}
		
		By Corollary~\ref{cor.TP2} and Proposition~\ref{bessel.prop}, we have:	\begin{equation}	\uAii(\omega,s)\mp i\uBii(\omega,s) =   (Mm^2)^{1/2}\pi^{-1/2} \LL^{-2/3} \left[ e^{\mp i \frac{\pi}{4}}[ B_{\pm}(\omega,s)- \eta_{\pm}^{B}(\omega,s)]+ |V|^{-1/4} \eta_{2,\pm}(\omega,s)\right].	\end{equation}
		
		Now note from Proposition~\ref{bessel.prop} and Corollary~\ref{cor.TP2} that for all $\frac{2\LL^2}{Mm^2} \leq s \leq  \frac{10\LL^2}{Mm^2}$:  \begin{align}	&|W(B_{\pm}, \eta_{\pm}^{B})|,\ |W(B_{\mp}, \eta_{\pm}^{B})|\ls k^{-2} s^{3/2}+ s^{-1/2} \ls 	\LL^{-1},\\ & |W(B_{\pm},  |V|^{-1/4}\eta_{2,\pm})|,\ |W(B_{\mp}, |V|^{-1/4}\eta_{2,\pm})|\ls 
			e^{-\kappa_+ A},\\  	&|\rd_{\omega}W(B_{\pm}, \eta_{\pm}^{B})|,\ |\rd_{\omega}W(B_{\mp}, \eta_{\pm}^{B})|\ls k^{-2} s^{3}+ s^{3/2}\ls \LL^3\log^{1/3}\LL ,\\ & |\rd_{\omega}W(B_{\pm},  |V|^{-1/4}\eta_{2,\pm})|,\ |\rd_{\omega}W(B_{\mp}, |V|^{-1/4}\eta_{2,\pm})|\ls \LL^3\log^{1/3}\LL.	\end{align} Note that we have used the fact that $k^{-2} \ls \LL^{-p}$ for some large $p>2$ to ignore the terms proportional to $k^{-2}$. \eqref{bessel.airy} thus follows from an elementary computation.\end{proof}

	\subsection{WKB estimates slightly below the third turning point}\label{WKB.section2}
	As in \cite{KGSchw1}, the strategy consisting in setting $\omega=m$ in \eqref{eq:mainV} and treating the rest as an error does not work all the way until the third turning point $s=s_{III} \approx k^2$ (in fact, from Proposition~\ref{bessel.prop}, we see that the estimates stop at $s\ls k^{4/3}$). That is why, as in \cite{KGSchw1}, we perform an additional WKB-approximation, up to a region  $\frac{k^{\delta}}{Mm^2} \leq s \leq   \frac{k^2}{Mm^2} $ slightly below the third turning point.
	
	\begin{prop}\label{WKB.TP2.prop}  Let $\delta>0$ (and assume $p$ is sufficiently large depending on $\delta$). For all  $\frac{k^{\delta}}{Mm^2} \leq s \leq   \frac{k^2}{2Mm^2} $,  the ODE \eqref{eq:mainV} admits the following two  oscillating solutions\begin{align}& w_{2,\pm}(\omega,s) = |V|^{-1/4}(s) \exp( \pm i \int_{s_{II}}^{s} |V|^{1/2}(s') ds' ) \left(1+ \ep_{2,\pm}^{WKB}(\omega,s)\right),\\ & |\ep_{2,\pm}^{WKB}|(\omega,s),\ s^{1/2} |\rd_{s}\ep_{2,\pm}^{WKB}|(\omega,s)\lesssim  s^{-1/2},\label{WKB2.error1}\\  &   |\rd_{\omega}\ep_{2,\pm}^{WKB}|(\omega,s),\ s^{1/2} |\rd_{\omega } \rd_s\ep_{2,\pm}^{WKB}|(\omega,s)\lesssim k^{2} \log(k). \label{WKB2.error2}\\ &\ep_{2,\pm}^{WKB}(s=\frac{k^2}{Mm^2})=\rd_{s}\ep_{2,\pm}^{WKB}(s=\frac{k^2}{Mm^2})=0.  \end{align}  
		Moreover, for $ \frac{k^2}{4Mm^2}\leq s< \frac{k^2}{2Mm^2}$, we have the following higher derivative estimates: for any $n\geq 1$ \begin{align}
			|\rd_{\omega}^n\ep_{2,\pm}^{WKB}|(\omega,s),\ s^{1/2} |\rd^n_{\omega } \rd_s\ep_{2,\pm}^{WKB}|(\omega,s)\lesssim_{n} k^{2n-1}.
			\label{WKB2.error3}
		\end{align}

		Finally, note that 	$w_{2,-}=\overline{w_{2,+}}$, and we  have the Wronskian identity $$ W(w_{2,+},w_{2,-})= 2i.$$
	\end{prop}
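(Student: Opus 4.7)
The plan is to adapt the Volterra-theoretic WKB construction of Proposition~\ref{WKB.TP1.prop} to the oscillatory regime strictly between the second and third turning points. Since $V(\omega,s)<0$ throughout the range $k^{\delta}/(Mm^2) \leq s \leq k^2/(2Mm^2)$, with $|V|(s) \approx (Mm^2)/s - (Mm^2)^2 k^{-2}$, both the amplitude $|V|^{-1/4}$ and the phase $\int_{s_{II}}^s |V|^{1/2}$ are real-valued, and the ansatz $w_{2,\pm}= |V|^{-1/4} \exp(\pm i \int_{s_{II}}^s |V|^{1/2}) (1+ \ep_{2,\pm}^{WKB})$ is natural. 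Substituting into \eqref{eq:mainV} and changing to the WKB variable $\xi_\pm(s)= \int_{k^2/(Mm^2)}^s |V|^{1/2}(s')ds'$, I obtain, as in Section 2.4 of Chapter 6 of \cite{olver}, the Volterra equation
\begin{equation*}
\ep_{2,\pm}^{WKB}(\omega,s) = \frac{1}{2i}\int_0^{\xi_\pm(s)} \bigl(1-e^{\pm 2i(v-\xi_\pm(s))}\bigr)\psi(v)\bigl(1+\ep_{2,\pm}^{WKB}\bigr)\, dv,
\end{equation*}
where $\psi = -|V|^{-3/4} \rd_s^2(|V|^{-1/4})$, with initial conditions $\ep_{2,\pm}^{WKB}=\rd_s\ep_{2,\pm}^{WKB} = 0$ at $s=k^2/(Mm^2)$, i.e.\ at $\xi_\pm=0$. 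The reality of $V$ and the complex-conjugate symmetry of the initial data will yield $w_{2,-}=\overline{w_{2,+}}$ for free, and the Wronskian identity $W(w_{2,+},w_{2,-})=2i$ will follow by direct evaluation at the boundary $s=k^2/(Mm^2)$, where the vanishing of the errors reduces the computation to that of the pure WKB exponentials.

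The next step is to estimate the control function $H(s) = \int_s^{k^2/(Mm^2)} |\psi|(s') |V|^{1/2}(s')\, ds'$. A direct computation gives $|\psi|(s) \ls s^{-1}$ and $|V|^{1/2}(s) \ls s^{-1/2}$ in the interior of the range, hence $|\psi||V|^{1/2} \ls s^{-3/2}$ and $H(s) \ls s^{-1/2}$. Applying Theorem~\ref{thm:volterra} with $\Phi(s) = s^{-1/2}$ will then produce \eqref{WKB2.error1}. For \eqref{WKB2.error2}, I would apply Theorem~\ref{thm:volterraparam}: the $\rd_\omega$-derivative of the Volterra equation generates three contributions, namely (i) a boundary term from the $\omega$-dependent upper limit $k^2/(Mm^2)$, whose $\omega$-derivative is of size $O(k^3)$ but which hits an integrand saturating at $|\psi||V|^{1/2}|_{s=k^2/(Mm^2)}$, (ii) a contribution from $\rd_\omega |V|^{1/2} = -\omega |V|^{-1/2}$ inside the phase, and (iii) a contribution from $\rd_\omega \psi$. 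Tracking these contributions and integrating should give the bound $k^2\log(k)$, the logarithm arising from integrating $s^{-1}$-like integrands over the range $[k^\delta/(Mm^2), k^2/(Mm^2)]$.

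The hard part will be the refined higher-derivative bound \eqref{WKB2.error3} in the narrow sub-region $k^2/(4Mm^2) \leq s < k^2/(2Mm^2)$ near the boundary. A naive iteration of the $\rd_\omega$-estimate would accumulate a factor $k^2 \log(k)$ at each differentiation, thereby spoiling the target $k^{2n-1}$ bound. The improvement relies on the fact that $\ep_{2,\pm}^{WKB}$ and all its $s$-derivatives vanish identically at $s=k^2/(Mm^2)$, so in a thin $O(k^2)$-sized neighborhood of the boundary one can perform an inductive bootstrap directly on the differentiated Volterra equation: at step $n$, each additional $\rd_\omega$ differentiation will contribute at most one factor of $k^2$ (either from the moving boundary, or from differentiating $|V|^{-1/2}$ near its singularity at the third turning point, both of which are saturated at the upper endpoint), and the vanishing of lower iterates at the boundary will prevent the accumulation of logarithmic losses once the integration is restricted to this short interval. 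A careful induction on $n$, carrying along the precise vanishing order of each $\rd_\omega^j \ep_{2,\pm}^{WKB}$ at $s=k^2/(Mm^2)$, will then yield the bound $|\rd_\omega^n \ep_{2,\pm}^{WKB}| \ls_n k^{2n-1}$ claimed in \eqref{WKB2.error3}.
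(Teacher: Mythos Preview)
Your approach to \eqref{WKB2.error1}--\eqref{WKB2.error2} via the standard WKB Volterra framework (Olver, Chapter~6) and to the Wronskian identity coincides with the paper's, which defers those two estimates to an adaptation of Lemma~5.5 of \cite{KGSchw1}.

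For the higher-derivative estimate \eqref{WKB2.error3}, however, the paper takes a genuinely different route. It rescales $Y=(Mm^2)s/k^2\in[1/4,1/2]$ so that \eqref{eq:mainV} reads $d^2u/dY^2=k^2 f(Y)u$ with $f$ smooth and bounded together with all its $Y$-derivatives on a fixed compact interval; this is the large-parameter WKB setting of Olver, Chapter~10, Theorem~3.1. One obtains a finite asymptotic expansion $\ep_{2,\pm}^{WKB}=\sum_{q=1}^{N-1}A_q(\xi)/k^q+\mathcal{E}_{N,\pm}$, where the $A_q$ are explicit $O(1)$ functions with $|\partial_{\tilde\omega}^nA_q|\lesssim k^{2(n-1)}$, so the finite sum already contributes $O(k^{2n-1})$; the remainder $\mathcal{E}_{N,\pm}$ solves a Volterra equation carrying a $k^{-N}$ prefactor, so even the crude bound $|\partial_{\tilde\omega}^n\mathcal{E}_{N,\pm}|\lesssim k^{3n-N}$ is harmless once $N$ is chosen large. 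The point of this decomposition is that it cleanly separates a finite, explicitly differentiable polynomial-in-$k^{-1}$ part from a tail that can absorb arbitrary powers of $k$.

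Your direct-induction approach for \eqref{WKB2.error3} can be made to work, but the mechanism you describe is not the right one. First, the third turning point $s_{III}\approx 2k^2/(Mm^2)$ lies \emph{outside} the interval $[k^2/(4Mm^2),k^2/(2Mm^2)]$; throughout that interval $|V|\sim k^{-2}$ is smooth and bounded away from zero, so there is no ``singularity of $|V|^{-1/2}$'' to contend with. Second, there are no logarithmic losses to suppress in this sub-interval: the $\log k$ in \eqref{WKB2.error2} comes from integrating $s^{-1}$-type integrands over the full range down to $s\sim k^{\delta}$, not from anything local near $s\sim k^2$. The correct reason an induction closes is that in $(\tilde\omega,\xi)$-coordinates the kernel is $\omega$-independent and the boundary $\xi=0$ is fixed (so no boundary term arises at all), and one has $|\partial_{\tilde\omega}^j\psi|\lesssim k^{2j}|\psi|$, driven essentially by $\partial_\omega k\sim k^3$; combined with the base bound $|\ep|\lesssim\int|\psi|\,d\xi\lesssim k^{-1}$, the induction gives $k^{2n-1}$. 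The vanishing of $\ep$ at the endpoint plays no special role beyond eliminating a boundary term that would in any case be of acceptable size.
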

	\begin{proof}
		
		\eqref{WKB2.error1} and\eqref{WKB2.error2} follow immediately from an easy adaption of Lemma 5.5 in \cite{KGSchw1}. 	We extend  and improve the proof to higher derivatives (estimate \eqref{WKB2.error3} in what follows. Introducing the variable $Y=(Mm^2) \frac{s}{k^2}$, and rephrasing \eqref{eq:mainV} in terms of $Y$ (note that $1/4<Y<1/2$), we write \begin{equation}\label{Y.coordinate}\begin{split}		
				&		\frac{d^2 u}{dY^2 }  =
				k^2 f(Y) u
				, \\ &  f(Y)= 1-2Y^{-1}+ \frac{\LL^2}{k^2 Y^2}+ O( k^{-2} Y^{-2} \log(k^2 Y))+ O( k^{-4} \LL^2 Y^{-3} \log(k^2 Y)). 
			\end{split}
		\end{equation}
		
		Then, we invoke Theorem 3.1 in \cite{olver}, Chapter 10 to obtain an asymptotic expansion (in terms of $u=k$) of  	$\ep_{2,\pm}^{WKB}$ as such, for any $N\in \mathbb{N}$, and introducing the variable $\xi(Y)=\int^{1/2}_Y |f|^{1/2}(Y') dY'$: \begin{equation}\begin{split}
				&\ep_{2,\pm}^{WKB}(\omega,s) = \sum_{q=1}^{N-1} \frac{A_q(\xi(s))}{k^q}+ e^{\mp i \int_{s_{II}}^{s} |V|^{1/2}(s') ds' } \mathcal{E}_{N,\pm}(\omega,s),\\ &  |\mathcal{E}_{N,\pm}|(\omega,s),\  \frac{|\rd_s\mathcal{E}_{N,\pm}|(\omega,s)}{|V|^{1/2}(s)} \leq 2 k^{-N} \exp\left( k^{-1} TV_{[\xi(s), \xi(\frac{k^2}{2Mm^2})] }[A_1(\xi)] \right) TV_{[\xi(s), \xi(\frac{k^2}{2Mm^2})] }[A_N(\xi)], 
			\end{split}
		\end{equation} where the functions $A_q(\xi)$ are defined recursively with $A_0(\xi)=1$ and   (in what follows the integration variable $v$ corresponds to $\xi$): \begin{align}\label{A.def}
			&A_{q+1}(\omega,\xi) = -\frac{1}{2} \rd_{\xi}A_q(\omega,\xi) + \frac{1}{2} \int \psi(\omega,v) A_q(\omega,v) dv,  \\ & \psi(\omega,v)=  f^{-3/4} \frac{d^2}{dY^2} (f^{-1/4}).
		\end{align} 
		
		Considering the change of coordinates between $(\tilde{\omega},\xi)$ (with $\tilde{\omega}=\omega$) and $({\omega},Y)$, we have $|\rd_{\tilde{\omega}} h|\ls|\rd_{{\omega}} h|+(\LL^2+\log(k)) |\rd_{Y} h|$ for any function $h$: we will use this fact in what follows. In the coordinate system  $(\tilde{\omega},\xi)$,  we have

		\begin{align} &  |  \partial^m_{\xi}\rd_{\tilde{\omega}}^n f(\omega,\xi)|,\ | \rd_{\tilde{\omega}}^n \psi(\omega,\xi)|\ls_{n,q,m}  [1+k^{2(n-1)} 1_{n\geq 1}+(\LL^2+\log(k))1_{n= 1}] 
			,\\   
			& |\xi|(\omega,Y)  \approx 1
			,\   |\frac{d\xi}{dY}|(\omega,Y)\ls1,
		\end{align} in the region $ 1/4< Y < 1/2$. Using the above equations, we get that for any $n\in \mathbb{N}$, $m\in \mathbb{N}$, $q\in \mathbb{N}$
		
		\begin{equation}	|\partial^m_{\xi}\rd_{\tilde{\omega}}^n	A_q( \xi(Y))|,\ 	|\partial^m_{\xi}\rd_{\tilde{\omega}}^n	\rd_{\xi}	A_q( \xi(Y))|,\  TV_{[\xi(Y), \xi(Y=1/2)]}	[\partial^m_{\xi}\rd_{\tilde{\omega}}^nA_q]\ls_{n,q,p}   [1+k^{2(n-1)} 1_{n\geq 1}+(\LL^2+\log(k))1_{n= 1}] ,		\end{equation}
		in particular $k^{-1} TV_{[\xi(s), \xi(\frac{k^2}{2Mm^2})] }[A_1(\xi)]
		\ls  1$. Now, note (see the discussion in \cite{olver} following  Theorem 3.1 in  Chapter 10) that $\mathcal{E}_{N,\pm}(\omega,s)$ satisfies the following Volterra equation:
		
		\begin{align}
			&\mathcal{E}_{N,\pm}(\omega,\xi) = \int_0^{\xi}  K(\xi,v)\left( 2k^{-N} e^{i k v} \rd_{\xi}A_N(\omega,v)+ k^{-1} \psi(v) 	\mathcal{E}_{N,\pm}(\omega,v) \right) dv,  \\& K(\xi,v) = \frac{1}{2} \{ e^{ik(\xi-v)}- e^{-ik(\xi-v)}\}.
		\end{align}

		Therefore,  we deduce as an application of Theorem~\ref{thm:volterraparam} that for all $n\in \mathbb{N}$, $m\in \mathbb{N}$   (recalling that $k^{-1}\lesssim \LL^{-p}$ for a large $p>0$): \begin{equation}
			|\partial^m_{\xi}\rd_{\tilde{\omega}}^n\mathcal{E}_{N,\pm}(\omega,\xi)|,\ |\partial^m_{\xi}\rd_{\tilde{\omega}}^n \rd_{\xi}\mathcal{E}_{N,\pm}(\omega,\xi)| \lesssim k^{-N+3n}.
		\end{equation} 
		
		Now taking $n=N-1$  results in the following bound, as claimed (after changing the $(\omega,\xi)$ coordinates to $(\omega,s)$ coordinates):
		\begin{equation}
			|\rd_{\omega}^n\ep_{2,\pm}(\omega,s)|,\ \frac{|\rd_{\omega}^n \rd_{s}\ep_{2,\pm}(\omega,s)|}{|V|^{1/2}(s)} \lesssim k^{2n-1}.
		\end{equation}

	\end{proof}

	\begin{cor}\label{cor.bessel.connection}
		\begin{equation}\label{bessel.connection}\begin{split}
				&w_{2,\pm}(\omega,s) =  a_{\pm, +}(\omega) B_+(\omega,s)+  a_{\pm, -}(\omega) B_-(\omega,s),\\ &| a_{\pm \pm}(\omega)-1|,\ | a_{\pm \mp}|(\omega) \ls k^{-1/2}, \\& | \rd_{\omega}a_{\pm \pm}|(\omega),\ | \rd_{\omega}a_{\pm \mp}|(\omega)  \ls k^2 \log(k) . \end{split}
		\end{equation}
	\end{cor}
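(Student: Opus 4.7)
}

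Since $w_{2,\pm}$ and $B_\pm$ all solve the same second order ODE \eqref{eq:mainV} and $\{B_+, B_-\}$ forms a basis with $W(B_+, B_-) = 2i$, Cramer's rule gives
\begin{equation*}
a_{\pm, +}(\omega) = \frac{W(w_{2,\pm}, B_-)}{2i}, \qquad a_{\pm, -}(\omega) = -\frac{W(w_{2,\pm}, B_+)}{2i}.
\end{equation*}
Crucially each of these Wronskians is constant in $s$, so I may evaluate at any single point $s_0$ inside the common domain of validity $[k^\delta/(Mm^2), k^{4/3}]$. The plan is to pick $s_0 \sim k$, which approximately minimizes the $B_\pm$-error $k^{-2} s^{7/4} + s^{-1/4}$ from Proposition~\ref{bessel.prop} while keeping the $w_{2,\pm}$-error $s^{-1/2}$ from Proposition~\ref{WKB.TP2.prop} small.

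Introduce the common WKB prefactor $g_\pm(\omega, s) = |V|^{-1/4}(s) \exp(\pm i \int_{s_{II}}^s |V|^{1/2})$, so that $w_{2,\pm} = g_\pm(1 + \ep_{2,\pm}^{WKB})$ and $B_\pm = g_\pm + \eta_\pm^B$. A direct computation yields $W(g_+, g_-) = 2i$ and $W(g_\pm, g_\pm) = 0$. Expanding by bilinearity,
\begin{equation*}
W(w_{2,+}, B_-) = 2i + W(g_+, \eta_-^B) + W(g_+ \ep_{2,+}^{WKB}, g_-) + W(g_+ \ep_{2,+}^{WKB}, \eta_-^B),
\end{equation*}
and analogously $W(w_{2,+}, B_+)$ has vanishing leading term; the cases for $w_{2,-}$ are obtained by conjugation. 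Using $|V| \sim s^{-1}$ in the regime $\LL^2 \ll s \ll k^2$ (so $|g_\pm| \sim s^{1/4}$, $|g_\pm'| \sim s^{-1/4}$) together with the $s^{-1/2}$ bound on $\ep_{2,\pm}^{WKB}$ and the $k^{-2} s^{7/4} + s^{-1/4}$ bound on $\eta_\pm^B$, each error Wronskian contributes $O(k^{-1/2})$ at $s_0 \sim k$, which gives $|a_{\pm \pm} - 1|, |a_{\pm \mp}| \lesssim k^{-1/2}$.

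For the $\rd_\omega$ bounds, I would differentiate the Wronskian formulas and evaluate $W(\rd_\omega w_{2,\pm}, B_\mp) + W(w_{2,\pm}, \rd_\omega B_\mp)$ (and the analogues with $B_\pm$) at the same $s_0 \sim k$. The $\omega$-derivatives of the error factors are controlled by $|\rd_\omega \ep_{2,\pm}^{WKB}| \lesssim k^2 \log k$ from \eqref{WKB2.error2} and $|\rd_\omega \eta_\pm^B| \lesssim s^{7/4}$ from Proposition~\ref{bessel.prop}; at $s_0 \sim k$ the latter gives $k^{7/4}$, dominated by $k^2 \log k$. The main obstacle in this $\rd_\omega$ step is controlling the $\omega$-derivatives of the WKB phase $\Phi(\omega, s) = \int_{s_{II}(\omega)}^s |V|^{1/2}$ and of the moving turning point $s_{II}(\omega)$, both of which can be polynomially large in $k$; however, because these phase-derivatives only enter multiplied by the small factors $\ep_{2,\pm}^{WKB}, \eta_\pm^B$ or inside Wronskian-killing combinations of the $g_\pm$, a careful bookkeeping analogous to the one performed in Corollary~\ref{TP2.connection.cor} for the first-to-second turning point connection shows that the overall bound $k^2 \log k$ survives, as claimed.
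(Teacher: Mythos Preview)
Your proposal is correct and follows essentially the same route as the paper: compute the connection coefficients as Wronskians via Cramer's rule and evaluate at $s\sim k$, which is exactly the paper's choice. The paper's decomposition differs only cosmetically: instead of your expansion around $g_\pm$, it writes $B_\pm = w_{2,\pm} + \eta_\pm^B - g_\pm\,\ep_{2,\pm}^{WKB}$ and computes $W(w_{2,\pm}, B_\mp) = \pm 2i + W(w_{2,\pm},\eta_\mp^B) - W(w_{2,\pm}, g_\mp\,\ep_{2,\mp}^{WKB})$ (and analogously for $W(w_{2,\pm},B_\pm)$), then lists the individual Wronskian bounds and evaluates at $s=k$. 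One mild advantage of the paper's grouping is that the leading piece $W(w_{2,+},w_{2,-})=2i$ is a genuine Wronskian of solutions and hence manifestly $\omega$-independent, so the phase-derivative issue you flag never even enters the main term; in your version you implicitly use that $W(g_+,g_-)=2i$ is also $\omega$-independent, which is true but worth stating. For the $\rd_\omega$ step your heuristic is exactly right: at $s=k$ one has $|\rd_\omega\Phi|\lesssim k^{3/2}$, and every occurrence of this large factor is paired with a small error $\ep_{2,\pm}^{WKB}$ or $\eta_\pm^B$, so the total stays $\lesssim k^2\log k$; the paper simply records the resulting Wronskian bounds without further comment.
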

	\begin{proof}

		We have \begin{equation}
			B_{\pm}(\omega,s)= w_{2,\pm}(\omega,s) + \eta_{\pm}^B(\omega,s)-  |V|^{-1/4}(s) \exp(\pm i \int_{s_{II}}^{s} |V|^{1/2}(s')ds')\ep_{2,\pm}^{WKB}(\omega,s).
		\end{equation}
		
		Now note that from Proposition~\ref{bessel.prop} and Proposition~\ref{WKB.TP2.prop}   \begin{align}
			&|W(\eta_{\pm}^B, w_{2,\pm})|,\ |W(\eta_{\pm}^B, w_{2,\mp})| \ls[ k^{-2} s^{3/2} + s^{-1/2}],\\ & |\rd_{\omega}W(\eta_{\pm}^B, w_{2,\pm})|,\ |\rd_{\omega}W(\eta_{\pm}^B, w_{2,\mp})| \ls (k^2 + s^2)\log(k),\\ & |W( w_{2,\pm},|V|^{-1/4}(s) \exp(\pm i \int_{s_{II}}^{s} |V|^{1/2}(s')ds')\ep_{2,\pm}^{WKB}(\omega,s))|\ls s^{1/2}|\rd_s \ep_{2,\pm}^{WKB}|(\omega,s)\ls s^{-1/2},\\  & |W( w_{2,\pm},|V|^{-1/4}(s) \exp(\mp i \int_{s_{II}}^{s} |V|^{1/2}(s')ds')\ep_{2,\pm}^{WKB}(\omega,s))|\ls | \ep_{2,\pm}^{WKB}|(\omega,s)+ s^{1/2}|\rd_s \ep_{2,\pm}^{WKB}|(\omega,s)\ls s^{-1/2},\\  & |\rd_{\omega}W( w_{2,\pm},|V|^{-1/4}(s) \exp(\pm i \int_{s_{II}}^{s} |V|^{1/2}(s')ds')\ep_{2,\pm}^{WKB}(\omega,s))|\ls s^{1/2}|\rd_{s \omega}^2 \ep_{2,\pm}^{WKB}|(\omega,s)\ls k^2\log(k),\\  & |\rd_{\omega}W( w_{2,\pm},|V|^{-1/4}(s) \exp(\mp i \int_{s_{II}}^{s} |V|^{1/2}(s')ds')\ep_{2,\pm}^{WKB}(\omega,s))|\ls | \rd_{\omega}\ep_{2,\pm}^{WKB}|(\omega,s)+ s^{1/2}|\rd_{s \omega}^2 \ep_{2,\pm}^{WKB}|(\omega,s)\\ &\ls k^2\log(k).
		\end{align}
		
		Evaluating the above at $s=k$ gives \eqref{bessel.connection}.

	\end{proof}
	
	While $w_{2,\pm}(\omega,s)$ initially only make sense for $s \geq \frac{k^{\delta}}{Mm^2}$, we use Corollary~\ref{cor.bessel.connection} to also extend the definition of $w_{2,\pm}(\omega,s)$ up to $s\geq  \frac{2\LL^2}{Mm^2}$, using the same notation.	We continue with the connection between the Airy functions adapted to the \emph{second} turning point $\uAii$ and $\uBii$ and the WKB-obtained functions $w_{2,\pm}$ slightly under the third turning point.
	\begin{lemma}\label{connection2.3.lemma2} We have the following estimates, where $w_{2,-}=\bar{w}_{2,+}$:
		\begin{equation}\label{estprem}\begin{split}
				&	w_{2,\pm}(\omega,s)=  (Mm^2)^{-1/2} \pi^{1/2} \LL^{2/3}e^{\mp \frac{i \pi}{4}} \left[\left(1+ \eta_{\alpha}^{reg}(\omega)+ \eta_{\alpha}^{sing}(\omega)\right) \uAii\mp i \left(1+ \eta_{\beta}^{reg}(\omega)+ \eta_{\beta}^{sing}(\omega)\right) \uBii\right],\\
								&|\eta_{\alpha}^{reg}|(\omega),\ |\eta_{\beta}^{reg}|(\omega)\ls e^{-\kappa_+ A},\  |\rd_{\omega}\eta_{\alpha}^{reg}|(\omega),\ |\rd_{\omega}\eta_{\beta}^{reg}|(\omega)\ls \LL^3 \log^{1/3}\LL,\\ &   |\eta_{\alpha}^{sing}|(\omega),\ |\eta_{\beta}^{sing}|(\omega)\ls k^{-1/2},\  |\rd_{\omega}\eta_{\alpha}^{sing}|(\omega),\ |\rd_{\omega}\eta_{\beta}^{sing}|(\omega)\ls k^2\log(k).
			\end{split}
		\end{equation}
		
		We can also rewrite the above equation in the more convenient form only involving the  linear combination  $w_{2,even}$ and $w_{2,odd}$ (defined below), which are real-valued.
		\begin{align}
			&	w_{2,even}(\omega,s):=	\Re( e^{ \frac{i \pi}{4}}	w_{2,+})(\omega,s)=  (Mm^2)^{-1/2} \pi^{1/2} \LL^{2/3}\left(1+ \eta_{\alpha}^{reg}(\omega)+ \eta_{\alpha}^{sing}(\omega)\right) \uAii,\\ & 	 w_{2,odd}(\omega,s):=-\Im( e^{ \frac{i \pi}{4}}	w_{2,+})(\omega,s)=  (Mm^2)^{-1/2} \pi^{1/2} \LL^{2/3}\left(1+ \eta_{\beta}^{reg}(\omega)+ \eta_{\beta}^{sing}(\omega)\right) \uBii.\\
		\end{align}
		
		Note, finally, that \begin{equation}\label{W.even.odd}
			W( w_{2,even},w_{2,odd})= 1,
		\end{equation} from which we also deduce that \begin{equation}
			1+ \eta_{\beta}^{reg}(\omega)+ \eta_{\beta}^{sing}(\omega)=\left(1+ \eta_{\alpha}^{reg}(\omega)+ \eta_{\alpha}^{sing}(\omega)\right)^{-1}. 
		\end{equation}
	\end{lemma}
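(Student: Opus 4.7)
The plan is to obtain Lemma~\ref{connection2.3.lemma2} essentially by composing the two connection formulas already established: Corollary~\ref{cor.bessel.connection}, which expresses $w_{2,\pm}$ in terms of the Bessel-type solutions $B_{\pm}$, and Corollary~\ref{cor.bessel.airy}, which expresses $B_{\pm}$ in terms of the Airy-type solutions $u_{Ai_2}\mp i u_{Bi_2}$. Concretely, writing
\[
w_{2,\pm}\;=\;a_{\pm,+}B_{+}+a_{\pm,-}B_{-}\;=\;C^{\pm}_{+}(\omega)\bigl(u_{Ai_2}-iu_{Bi_2}\bigr)+C^{\pm}_{-}(\omega)\bigl(u_{Ai_2}+iu_{Bi_2}\bigr),
\]
with $C^{\pm}_{\sigma}(\omega)=a_{\pm,+}b_{+,\sigma}+a_{\pm,-}b_{-,\sigma}$, the leading-order analysis uses $a_{\pm,\pm}\approx 1$, $a_{\pm,\mp}=O(k^{-1/2})$, $b_{\sigma,\sigma}=(Mm^2)^{-1/2}\pi^{1/2}\LL^{2/3}e^{-i\sigma\pi/4}(1+O(e^{-\kappa_{+}A}))$, and $b_{\sigma,-\sigma}=O(\LL^{2/3}e^{-\kappa_{+}A})$. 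This directly yields
\[
C^{\pm}_{\pm}=(Mm^2)^{-1/2}\pi^{1/2}\LL^{2/3}e^{\mp i\pi/4}\bigl(1+O(e^{-\kappa_{+}A})+O(k^{-1/2})\bigr),\quad C^{\pm}_{\mp}=\LL^{2/3}\bigl(O(e^{-\kappa_{+}A})+O(k^{-1/2})\bigr).
\]

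To recast this into the form stated in the lemma, I will use the elementary identity $\alpha u_{Ai_2}\mp i\beta u_{Bi_2}=\tfrac{\alpha+\beta}{2}(u_{Ai_2}\mp i u_{Bi_2})+\tfrac{\alpha-\beta}{2}(u_{Ai_2}\pm i u_{Bi_2})$. Setting $C^{\pm}_{\pm}=(Mm^2)^{-1/2}\pi^{1/2}\LL^{2/3}e^{\mp i\pi/4}\,\tfrac{\alpha_{\pm}+\beta_{\pm}}{2}$ and $C^{\pm}_{\mp}=(Mm^2)^{-1/2}\pi^{1/2}\LL^{2/3}e^{\mp i\pi/4}\,\tfrac{\alpha_{\pm}-\beta_{\pm}}{2}$ gives one solves uniquely for $\alpha_{\pm}=1+\eta_{\alpha}$ and $\beta_{\pm}=1+\eta_{\beta}$. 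Splitting each contribution according to its source --- the $O(e^{-\kappa_{+}A})$ piece (coming from $b_{\sigma,\sigma}-\text{main}$ and $b_{\sigma,-\sigma}$) going into $\eta^{reg}$, and the $O(k^{-1/2})$ piece (coming from $a_{\pm,\mp}$) into $\eta^{sing}$ --- produces exactly the decomposition in the lemma. The fact that $w_{2,-}=\bar w_{2,+}$ forces the $-$ case to follow from the $+$ case by complex conjugation. The $\partial_{\omega}$ bounds are obtained in the same way: differentiating the composition and inserting the derivative bounds $|\partial_{\omega}a_{\pm,\pm}|,|\partial_{\omega}a_{\pm,\mp}|\lesssim k^{2}\log k$ and $\LL^{-3}|\partial_{\omega}b_{\cdot,\cdot}|\lesssim\log^{1/3}\LL$ from Corollaries~\ref{cor.bessel.connection} and~\ref{cor.bessel.airy}, and keeping track of which source the error came from, yields $|\partial_{\omega}\eta^{reg}|\lesssim \LL^{3}\log^{1/3}\LL$ and $|\partial_{\omega}\eta^{sing}|\lesssim k^{2}\log k$ as claimed.

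For the Wronskian identity $W(w_{2,even},w_{2,odd})=1$, I would observe that by definition $e^{i\pi/4}w_{2,+}=w_{2,even}-iw_{2,odd}$ and hence, by conjugation, $e^{-i\pi/4}w_{2,-}=w_{2,even}+iw_{2,odd}$. Multilinearity of the Wronskian then gives
\[
W(w_{2,+},w_{2,-})\;=\;W\bigl(w_{2,even}-iw_{2,odd},\,w_{2,even}+iw_{2,odd}\bigr)\;=\;2i\,W(w_{2,even},w_{2,odd}),
\]
and combining with $W(w_{2,+},w_{2,-})=2i$ from Proposition~\ref{WKB.TP2.prop} yields the result. The compatibility relation $(1+\eta^{reg}_{\alpha}+\eta^{sing}_{\alpha})(1+\eta^{reg}_{\beta}+\eta^{sing}_{\beta})=1$ stated at the end of the lemma then follows from $W(u_{Ai_2},u_{Bi_2})=\frac{Mm^2}{\pi}\LL^{-4/3}$.

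I do not expect a genuine obstacle: all the hard analysis (Volterra equations, error control, derivative bounds) has been carried out in the preceding subsections. The only real work is careful bookkeeping of the two error scales $(e^{-\kappa_{+}A},\,k^{-1/2})$ and of the corresponding derivative scales $(\LL^{3}\log^{1/3}\LL,\,k^{2}\log k)$, together with the sign conventions that make the coefficients collapse to $e^{\mp i\pi/4}$. The one place where one must be slightly careful is ensuring that the cross-terms $a_{\pm,\mp}b_{\mp,\pm}=O(k^{-1/2}\cdot e^{-\kappa_{+}A}\LL^{2/3})$ are absorbed into the $\eta^{reg}$ (or equivalently $\eta^{sing}$) piece without double-counting, but this is purely a matter of definition.
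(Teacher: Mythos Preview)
Your proposal is correct and follows essentially the same route as the paper: compose the connection formulas from Corollary~\ref{cor.bessel.connection} and Corollary~\ref{cor.bessel.airy}, read off the coefficients of $u_{Ai_2}$ and $u_{Bi_2}$, and split the errors according to whether they originate from the $b$-coefficients (regular, size $e^{-\kappa_+ A}$, derivative $\LL^3\log^{1/3}\LL$) or the $a$-coefficients (singular, size $k^{-1/2}$, derivative $k^2\log k$). The paper does this in one line by expanding directly in the $(u_{Ai_2},u_{Bi_2})$ basis rather than passing through $(u_{Ai_2}\mp i u_{Bi_2})$ first, but this is purely cosmetic. One point the paper makes explicit that you leave implicit: the real-valuedness of $u_{Ai_2}$ and $u_{Bi_2}$ forces the $\eta$'s themselves to be real (not just related by conjugation between the $\pm$ cases), which is what makes $w_{2,even}$ and $w_{2,odd}$ genuinely real-valued.
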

	\begin{proof}
		From the definitions of Corollary~\ref{cor.bessel.airy} and Corollary~\ref{cor.bessel.connection}, we have \begin{equation}\begin{split}
				&w_{2,\pm}(\omega,s)= \left[ a_{\pm + }(\omega) b_{+ +}(\omega)+ a_{\pm + }(\omega) b_{+ -}(\omega)+ a_{\pm - }(\omega) b_{- +}(\omega)+ a_{\pm - }(\omega) b_{- -}(\omega)\right]\uAii(\omega,s)\\ & +i\left[-a_{\pm + }(\omega) b_{+ +}(\omega)+ a_{\pm + }(\omega) b_{+ -}(\omega)- a_{\pm - }(\omega) b_{- +}(\omega)+ a_{\pm - }(\omega) b_{- -}(\omega)\right]\uBii(\omega,s).\end{split}
		\end{equation}\eqref{estprem} then follows immediately from the estimates in  Corollary~\ref{cor.bessel.airy} and Corollary~\ref{cor.bessel.connection}. Note also that  $\eta_{\alpha}^{reg}(\omega)$, $\eta_{\alpha}^{sing}(\omega)$, $\eta_{\beta}^{reg}(\omega)$, $\eta_{\beta}^{sing}(\omega)$ are all real-valued, because $\uAii$ and $\uBii$ are real-valued.
	\end{proof}
	
	\subsection{Energy identity}\label{energy.section}
	
	Recalling that $ w_{2,even(\omega,s)}$, $ w_{2,odd(\omega,s)}$ are real-valued, but $u_H$ is not, we define the (complex-valued) coefficients $\gamma_A$  and $\gamma_B$ as such \begin{equation}\label{u_H.Airy2}
		u_H = \gamma_A(\omega,\LL)   w_{2,even}(\omega,s)+\gamma_B(\omega,\LL)  w_{2,odd}(\omega,s).
	\end{equation}
	
	Note that in view of~\eqref{WKB.wronskian},~\eqref{W.TP2}, and Proposition~\ref{connection12.prop}, we have
	\begin{equation}\label{usefulidentityforscatteringconstatnst}
		2\left(\alpha_{A_2}\beta_{B_2}-\alpha_{B_2}\beta_{A_2}\right) = -\LL^{-4/3}\frac{Mm^2}{\pi}.
	\end{equation}

	Then, using~\eqref{usefulidentityforscatteringconstatnst}, Corollary~\ref{uH.cor}, Proposition~\ref{connection12.prop}, Corollary~\ref{cor.bessel.connection} and equation \eqref{W.TP2} we have
	\begin{align}& \gamma_A(\omega,\LL) = \frac{2\pi^{1/2}\LL^{2/3} C_1(\omega) \exp(i\frac{\omega}{\kappa_+} \log(\LL))}{ (Mm^2)^{1/2}(1+\eta_{\alpha}^{reg}(\omega)+ \eta_{\alpha}^{sing}(\omega))} \left([\alpha_{1,+}(\omega) + i \beta_{1,+}(\omega)] \alpha_{B_2}(\omega)-[\alpha_{1,-}(\omega) + i \beta_{1,-}(\omega)] \beta_{B_2}(\omega)\right),\\ &\gamma_B(\omega,\LL) \\= & -\frac{2 \LL^{2/3} \pi^{1/2}C_1(\omega) \exp(i\frac{\omega}{\kappa_+} \log(\LL))}{(Mm^2)^{1/2}(1+\eta_{\beta}^{reg}(\omega)+ \eta_{\beta}^{sing}(\omega))}  \left([\alpha_{1,+}(\omega,\LL) + i \beta_{1,+}(\omega,\LL)] \alpha_{A_2}(\omega,\LL)-[\alpha_{1,-}(\omega,\LL) + i \beta_{1,-}(\omega,\LL)] \beta_{A_2}(\omega,\LL)\right).\end{align}

	We will actually rewrite the above definitions using the quantities defined in Corollary~\ref{TP2.connection.cor} as: 
	
	\begin{align}& \frac{\gamma_A}{\exp(C_{II}\LL \log(\LL))} \\ &=\frac{2 [\alpha_{1,+} + i \beta_{1,+}]}{(1+\eta_{\alpha}^{reg}(\omega)+ \eta_{\alpha}^{sing}(\omega))} C_1(\omega) \exp(i\frac{\omega}{\kappa_+} \log( \LL)) \left(-\exp(-2C_{II}\LL \log(\LL))\frac{\alpha_{1,-} + i \beta_{1,-}}{\alpha_{1,+} + i \beta_{1,+}} \tilde{\beta}_{B_2}+ [1+\delta_{\alpha,B_2}]\right),\\ &\frac{\gamma_B(\omega,\LL)}{\exp(-C_{II}\LL \log(\LL))} =  -\frac{C_1(\omega) \exp(i\frac{\omega}{\kappa_+} \log( \LL))}{ (1+\eta_{\beta}^{reg}(\omega)+ \eta_{\beta}^{sing}(\omega))}\left([\alpha_{1,+} + i \beta_{1,+}]\delta_{\alpha,A_2} -[\alpha_{1,-} + i \beta_{1,-}] [1+\delta_{\beta,A_2}]\right).\end{align}

	\begin{lemma}\label{energy.lemma}

		We also have,
		\begin{equation}\label{gamma.AB.est}\begin{split}
				&\tilde{\gamma}_A(\omega,\LL):=\frac{\gamma_A(\omega,\LL)}{ \exp( C_{II} \LL \log(\LL))}= 2 C_1(\omega) e^{i\frac{\omega}{\kappa_+} \log( \LL)}  [iE_+] (1+O(e^{-\kappa_+ A})),\\ & \tilde{\gamma}_B(\omega,\LL):= \frac{\gamma_B(\omega,\LL)}{ \exp( -C_{II} \LL \log(\LL))} =O(1),\\ & \tilde{\gamma}_A(\omega,\LL)= \tilde{\gamma}_A^{reg}(\omega)+\tilde{\gamma}_A^{sing}(\omega,\LL),\  \tilde{\gamma}_B(\omega,\LL)= \tilde{\gamma}_B^{reg}(\omega,\LL)+\tilde{\gamma}_B^{sing}(\omega,\LL),  \\&| \rd_{\omega}\tilde{\gamma}_A^{reg}|(\omega,\LL),\  | \rd_{\omega}\tilde{\gamma}_B^{reg}|(\omega,\LL) \ls \LL^{3}\log(\LL),\\ &  |\tilde{\gamma}_A^{sing}|(\omega,\LL),\  | \tilde{\gamma}_B^{sing}|(\omega,\LL) \ls k^{-1/2},\ |\rd_{\omega}\tilde{\gamma}_A^{sing}|(\omega,\LL),\  | \rd_{\omega}\tilde{\gamma}_B^{sing}|(\omega,\LL) \ls k^2 \log(k),\\
			\end{split}
		\end{equation}
		We also have $\Gamma(\omega,\LL)$, $\Gamma^{\rm reg}(\omega,\LL)$, and $\Gamma^{\rm sing}(\omega,\LL)$ which satisfy
		\begin{equation}\begin{split}\label{Gammastuff}
				&  \Gamma(\omega,\LL):=\frac{\gamma_B(\omega,\LL) -i\gamma_A(\omega,\LL)}{\gamma_B(\omega,\LL) + i\gamma_A(\omega,\LL)}= \Gamma^{reg}(\omega,\LL) + \Gamma^{sing}(\omega,\LL) ,\\
				& 	\bigl|\Gamma^{reg}(\omega,\LL)+1 \big|,\ 
				\LL^{-3}\log^{-1}(\LL)	\bigl|\rd_{\omega}\Gamma^{reg} \big|(\omega,\LL) 
				\exp( -2C_{II} \LL \log(\LL)),\\ & k^{1/2}\bigl| \Gamma^{sing} \big|(\omega,\LL),\ k^{-2}\log^{-1}(k)   \bigl|\rd_{\omega} \Gamma^{sing} \big|(\omega,\LL) \ls  \exp( -2C_{II} \LL \log(\LL)).
		\end{split}\end{equation}
		
		Thus, we also have the identity/estimate (recalling that we will use it in the range $\omega \in (m-\LL^{-p},m)$, in particular $\omega>m/2>0$): \begin{equation}\label{energy.identity}
			|\Gamma|^2(\omega,\LL)= 1 -\frac{4 \omega}{|\gamma_B(\omega,\LL)+ i\gamma_A(\omega,\LL)|^2}= 1-   \frac{ \omega (1+O(e^{-\kappa_+ A}))}{|C_1|^2 |E_+|^2}\  \exp( -2C_{II} \LL \log(\LL)), 
		\end{equation} \begin{equation} \label{energy.identity2}
			|\tilde{\gamma}_B|(\omega,s) \geq \frac{\omega}{2|C_1| |E_+|} \cdot (1+O(e^{-\kappa_+ A})).
	\end{equation}\end{lemma}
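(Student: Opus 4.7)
The plan is to first derive the explicit form and size estimates for $\gamma_A, \gamma_B$ by simple bookkeeping, and then to derive the energy identity \eqref{energy.identity} as a consequence of the real-valuedness of the potential $V$ applied to the two different asymptotic representations of $u_H$ (near the horizon and near the second turning point).

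\textbf{Step 1: Estimates on $\gamma_A, \gamma_B$.} I start from the explicit formulas for $\gamma_A, \gamma_B$ written immediately after \eqref{u_H.Airy2}. Using Corollary~\ref{cor.TP1}, I have $\alpha_{1,+}+i\beta_{1,+} = iE_+(1+O(e^{-\kappa_+ A}))$ (the $\alpha_{1,+}$ contribution is dominated by the exponentially small factor $\exp(-2C_I\LL)$), while $\alpha_{1,-}+i\beta_{1,-} = E_-+O_A(1) = O_A(1)$. Combining with Corollary~\ref{TP2.connection.cor} (which gives $\delta_{\alpha,A_2}=O_A(1)$, $\delta_{\alpha,B_2},\delta_{\beta,A_2}=O(e^{-\kappa_+ A})$ and $\tilde\beta_{B_2}=O(\exp(O(\LL)))$) yields the claimed leading behavior $\tilde\gamma_A = 2C_1 e^{i\omega\log\LL/\kappa_+}\cdot iE_+(1+O(e^{-\kappa_+ A}))$ and $\tilde\gamma_B=O(1)$, since the $\tilde\beta_{B_2}$-term in $\gamma_A$ is multiplied by $\exp(-2C_{II}\LL\log\LL)$ and is therefore negligible. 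The decomposition into regular and singular parts is inherited directly from the splitting $\eta_{\alpha/\beta}^{reg}+\eta_{\alpha/\beta}^{sing}$ in Lemma~\ref{connection2.3.lemma2}: writing $(1+\eta^{reg}+\eta^{sing})^{-1} = (1+\eta^{reg})^{-1} - (1+\eta^{reg})^{-1}(1+\eta^{reg}+\eta^{sing})^{-1}\eta^{sing}$, and assigning the first factor to $\tilde\gamma^{reg}$ and the correction involving $\eta^{sing}$ to $\tilde\gamma^{sing}$. The $\partial_\omega$ estimates are then obtained by differentiating and using the $\partial_\omega$ bounds from Corollaries~\ref{cor.TP1} and~\ref{TP2.connection.cor} together with Lemma~\ref{connection2.3.lemma2}.

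\textbf{Step 2: Energy identity.} The key observation is that for any solution $u$ to \eqref{eq:mainV} with real-valued $V$, the quantity $\Im(u'\bar u)$ is independent of $s$. Evaluating at $s\to-\infty$ using the boundary condition \eqref{uH.def} gives $\Im(u_H'\bar u_H) = -\omega$. On the other hand, evaluating at $s$ in the range $[\tfrac{2\LL^2}{Mm^2},\tfrac{10\LL^2}{Mm^2}]$ via \eqref{u_H.Airy2}, and using that $w_{2,even},w_{2,odd}$ are real-valued with $W(w_{2,even},w_{2,odd})=1$ (see \eqref{W.even.odd}), a direct expansion gives
\begin{equation}
\Im(u_H'\bar u_H) = \Im(\gamma_A\bar\gamma_B)\cdot W(w_{2,even},w_{2,odd}) = \Im(\gamma_A\bar\gamma_B).
\end{equation}
Comparing, I obtain $\Im(\gamma_A\bar\gamma_B)=-\omega$.

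\textbf{Step 3: From the energy identity to $|\Gamma|^2$.} A straightforward algebraic computation gives $|\gamma_B\pm i\gamma_A|^2 = |\gamma_A|^2+|\gamma_B|^2\mp 2\Im(\gamma_A\bar\gamma_B)$, so that $|\gamma_B-i\gamma_A|^2 = |\gamma_B+i\gamma_A|^2 - 4\omega$, yielding the identity $|\Gamma|^2 = 1 - 4\omega/|\gamma_B+i\gamma_A|^2$. Using Step~1, $|\gamma_B+i\gamma_A|^2 = |\gamma_A|^2(1+O(\exp(-2C_{II}\LL\log\LL))) = 4|C_1|^2|E_+|^2\exp(2C_{II}\LL\log\LL)(1+O(e^{-\kappa_+ A}))$, which gives \eqref{energy.identity}. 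For \eqref{energy.identity2}, I use the elementary bound $|\gamma_A||\gamma_B|\geq|\Im(\gamma_A\bar\gamma_B)|=\omega$ together with the upper bound on $|\gamma_A|$ from Step~1. Finally, the $\Gamma$ estimates \eqref{Gammastuff} come from the representation $\Gamma+1 = 2\gamma_B/(\gamma_B+i\gamma_A)$, so that $|\Gamma+1|\lesssim|\tilde\gamma_B||\tilde\gamma_A|^{-1}\exp(-2C_{II}\LL\log\LL)$, with reg/sing decomposition of $\Gamma$ inherited from that of $\tilde\gamma_B$, and $\partial_\omega$ bounds obtained by differentiating the expression for $\Gamma+1$ and invoking the derivative estimates in \eqref{gamma.AB.est}.

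The main conceptual step is Step~2 (the energy identity), which cleanly converts global information from the real-valuedness of $V$ into a rigid relation between $\gamma_A$ and $\gamma_B$; once this is in hand, everything else is bookkeeping. The main technical obstacle is the careful propagation of the regular/singular decomposition and of the $\partial_\omega$-derivative bounds across multiple layers of connection formulas (Corollary~\ref{cor.TP1}, Corollary~\ref{TP2.connection.cor}, Lemma~\ref{connection2.3.lemma2}), which is what produces the quantitative form of $\Gamma^{reg}$ and $\Gamma^{sing}$ in \eqref{Gammastuff}.
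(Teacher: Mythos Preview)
Your proposal is correct and follows essentially the same approach as the paper: the paper's proof likewise reduces \eqref{gamma.AB.est} to bookkeeping from Corollaries~\ref{cor.TP1} and~\ref{TP2.connection.cor} (together with Lemma~\ref{connection2.3.lemma2} for the reg/sing splitting), and derives \eqref{Gammastuff}--\eqref{energy.identity2} from the conservation of $\Im(\bar u\,\partial_s u)$ evaluated at $s\to-\infty$ and via the decomposition \eqref{u_H.Airy2}, yielding $\Im(\gamma_A\bar\gamma_B)=-\omega$ and then the algebraic identity $|\Gamma|^2=1-4\omega/|\gamma_B+i\gamma_A|^2$. Your write-up is more explicit about the reg/sing bookkeeping and about deducing \eqref{energy.identity2} from $|\gamma_A||\gamma_B|\geq|\Im(\gamma_A\bar\gamma_B)|$, but the ideas coincide.
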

	\begin{proof}
		\eqref{gamma.AB.est} is merely a rewriting of Corollary~\ref{cor.TP1} and Corollary~\ref{TP2.connection.cor}.		As for~\eqref{Gammastuff}, \eqref{energy.identity},  and 
		\eqref{energy.identity2}, we note that for every solution $u$ of \eqref{eq:mainV},  $\Im(\bar{u} \rd_s u)$ is constant. Hence, applying  $u \rightarrow \Im(\bar{u} \rd_s u)$  to the identity \eqref{u_H.Airy2} (recall that $\uAii\in \R,\ \uBii\in \R$ and the Wronskian identity \eqref{W.TP2}) gives the following energy identity\begin{equation}\label{energy.lemma.eq}
			- \omega  = \Im(\gamma_A \bar{\gamma}_B) = \frac{1}{4} \left[|i\gamma_A-\gamma_B|^2- |i\gamma_A+\gamma_B|^2 \right]
		\end{equation}
		\eqref{Gammastuff}, \eqref{energy.identity}, 	and \eqref{energy.identity2} then follow combining  \eqref{gamma.AB.est} and \eqref{energy.lemma.eq}. 
	\end{proof}

	\subsection{The Jost solution  $u_I$  and the third turning point}\label{uI.section}
	We recall $u_I(\omega,s)$ is the unique solution of \eqref{eq:mainV} satisfying \eqref{uI.asymp}.  Note that, up to a multiplicative constant, $u_I$ is the only bounded solution of \eqref{eq:mainV}.
	
	As in Section~\ref{TP2.section}, we start recalling the definition of $s_{III}$ \eqref{sIII}	 and introduce the new coordinate \begin{equation} \yy=  (M m^2) \frac{s}{  k^2}.
	\end{equation}
	
	Note also that \begin{equation*}
		Y_{III}(\LL,\omega):= Y(s=s_{III})= 1+\sqrt{1-\alpha} + O\left(k^{-2}\right) \in (1,2+O(k^{-2})\color{black}),
	\end{equation*}
	
	First, note that $u_I$ is given as such (the argument, already present in \cite{KGSchw1}, Lemma 5.2, will be briefly repeated).  \begin{lemma}\label{uI.lemma}
		We have $u_I= u_{Ai_3}$, where $u_{Ai_3}$ is defined as \begin{equation}\begin{split}
				&\label{uAiii.def}	u_{Ai_3}= \hat{f}_3^{-1/4} \left( Ai( k^{2/3} \zeta_3) + \ep_{Ai_3}\right),\\ &  f_3(\yy) = \frac{k^2}{(Mm^2)^2} V( r(\yy)),\\ 
				& \frac{2}{3}|\zeta_3|^{3/2}(\yy)=\int_{\yy}^{\yy_{III}} |f_3|^{1/2}(y) dy=k^{-1} \int^{s_{III}}_{s(\yy)} |V|^{1/2}(s')ds' \\ & \hat{f_3}(\yy) = \frac{f_3(\yy)}{ \zeta_3(\yy)}.\end{split}
		\end{equation} with $$\lim_{s \rightarrow +\infty}\ep_{Ai_3}(s)=\lim_{s \rightarrow +\infty}\rd_s \ep_{Ai_3}(s)=0.$$
	\end{lemma}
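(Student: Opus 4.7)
The plan is to prove $u_I=u_{Ai_3}$ by exploiting the fact that the subspace of solutions of~\eqref{eq:mainV} which decay as $s\to+\infty$ is one-dimensional, and then matching the asymptotics of both sides. This is the same strategy as in Lemma~5.2 of~\cite{KGSchw1}, adapted to carry along the dependence on $\LL$ implicit in $s_{III}$ and $\zeta_3$.

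First, I would construct $u_{Ai_3}$ as a genuine solution of~\eqref{eq:mainV} by applying Theorem~3.1 of Chapter~11 in~\cite{olver}, after rewriting~\eqref{eq:mainV} in the $Y$-coordinate as $u_{YY}=k^2 f_3(Y)u$ exactly as in Proposition~\ref{TP2.prop} for the second turning point. Near the simple non-degenerate zero $Y_{III}$ of $f_3$, the Liouville variable $\zeta_3$ defined in~\eqref{uAiii.def} furnishes two candidate Airy-type solutions $\hat f_3^{-1/4}(Ai(k^{2/3}\zeta_3)+\epsilon_{Ai_3})$ and $\hat f_3^{-1/4}(Bi(k^{2/3}\zeta_3)+\epsilon_{Bi_3})$. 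I would pick the $Ai$-branch and impose the boundary condition for the Volterra equation satisfied by $\epsilon_{Ai_3}$ at $Y=+\infty$, so that both $\epsilon_{Ai_3}(s)$ and $\partial_s\epsilon_{Ai_3}(s)$ vanish as $s\to+\infty$, yielding a well-defined solution $u_{Ai_3}$ of~\eqref{eq:mainV}.

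Second, I would verify that this boundary-at-infinity prescription is legal by showing that Olver's error-control function has finite total variation on $[Y_{III},+\infty)$. Using the asymptotics $V(s) = (m^2-\omega^2) - \tfrac{2Mm^2}{r}+O(r^{-2})$ from~\eqref{V.schematic}, one obtains $f_3(Y)=1+O(1/Y)+O(1/Y^2)$ as $Y\to+\infty$, so $f_3^{-1/4}(d^2/dY^2)(f_3^{-1/4})=O(Y^{-3})$ and $|f_3|^{1/2}|\zeta_3|^{-3}=O(Y^{-2})$, both integrable at $+\infty$. This produces both the existence of $\epsilon_{Ai_3}$ on the unbounded interval and the prescribed vanishing at infinity.

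Third, I would extract the asymptotic behaviour of $u_{Ai_3}$ using the classical expansion $Ai(x)\sim\tfrac{1}{2\pi^{1/2}}x^{-1/4}e^{-\frac{2}{3}x^{3/2}}$ as $x\to+\infty$, together with $\hat f_3^{-1/4}=f_3^{-1/4}|\zeta_3|^{1/4}$, $f_3=k^2(Mm^2)^{-2}V$, and $\tfrac{2}{3}k\zeta_3^{3/2}=\int_{s_{III}}^{s}|V|^{1/2}(s')\,ds'$. Unfolding these factors produces precisely the normalization of~\eqref{uI.asymp} defining $u_I$. Since $V$ stays strictly positive and tends to $m^2-\omega^2>0$ in the classically forbidden region $\{s>s_{III}\}$, decaying solutions of~\eqref{eq:mainV} at $s=+\infty$ form a one-dimensional subspace, and matching the leading coefficient forces $u_I=u_{Ai_3}$. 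The main technical point (the one step that goes beyond a black-box citation of~\cite{olver}) is the integrability of the error-control function at $+\infty$, where one must use the precise Coulomb-like rate at which $V$ approaches $m^2-\omega^2$; in the large-$\LL$ regime this is in fact comfortable because $Y_{III}\sim\alpha^{-1}$ lies deep in the zone where $f_3$ is already close to $1$.
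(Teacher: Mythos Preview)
Your proposal is correct and follows essentially the same approach as the paper: both invoke Olver's Theorem~3.1 (Chapter~11) with the Airy-branch boundary condition at $Y=+\infty$, reduce the existence of $\epsilon_{Ai_3}$ to the finiteness of the total variation of the error-control function on $[Y_{III},+\infty)$, and identify $u_{Ai_3}$ with $u_I$ via the one-dimensionality of decaying solutions and the normalization~\eqref{uI.asymp}. Your write-up is more explicit about the asymptotics $f_3(Y)=1+O(Y^{-1})$ and the resulting integrability, while the paper simply records $TV_{[Y_{III},+\infty)}[H]<\infty$ and defers details to~\cite{KGSchw1}, Lemma~5.2.
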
 \begin{proof}
		This is immediate from the fact that $TV_{[Y_{III},+\infty)}[H]<+\infty$, where $$ H(\yy):= \int_{\yy_{III}}^{\yy}  \left(|f_3|^{-1/4}(\yy') \frac{d^2}{d\yy^2}(|f_3|^{-1/4})(\yy') - \frac{5 |f_3|^{1/2}(\yy')}{16|\zeta_3|^3(\yy')} \right) d\yy' $$ in the notations of \cite{olver}, Theorem 3.1, Chapter 11. For more details, see also \cite{KGSchw1}, Lemma 5.2.
	\end{proof}
	Now, we turn to the error estimates in the Airy approximation. For simplicity, we restrict our estimates to a region   $   \frac{k^2}{4Mm^2} \leq  s \leq \frac{k^2}{2Mm^2} $ strictly under the third turning point  $s=s_{III}$.
	\begin{prop}\label{usefulformofuI}
		In the   region $\{  \frac{k^2}{4Mm^2} \leq  s \leq \frac{k^2}{2Mm^2}\}$, $u_I=u_{Ai_3} $ defined in \eqref{uAiii.def}  adopts the following oscillating behavior:
		
		\begin{equation}\begin{split}
				&u_{I}= \frac{\pi^{-1/2} }{2}k^{-1/6}  {f}_3^{-1/4} \left( e^{i\int_{s}^{s_{III}}|V|^{1/2}(s') ds' -\frac{i \pi}{4} } [1+\eta_{Ai_3,+}(s,\omega)]+ e^{-i\int_{s}^{s_{III}}|V|^{1/2}(s') ds' +\frac{i \pi}{4} } [1+\eta_{Ai_3,-}(s,\omega)] \right)\\=&(Mm^2)^{1/2} \frac{\pi^{-1/2} }{2}k^{-2/3}  |V|^{-1/4}(s)\\ & \left( e^{i\int_{s}^{s_{III}}|V|^{1/2}(s') ds' -\frac{i \pi}{4} } [1+\eta_{Ai_3,+}(s,\omega)]+ e^{-i\int_{s}^{s_{III}}|V|^{1/2}(s') ds' +\frac{i \pi}{4} }[1+\eta_{Ai_3,-}(s,\omega)]\right),\end{split}
		\end{equation} with the error $ \eta_{Ai_3,\pm}(s,\omega)$ satisfying, for all $ \frac{k^2}{4Mm^2} \leq  s \leq \frac{k^2}{2\color{black}Mm^2} $: \begin{align}
			&	|\eta_{Ai_3,\pm}|(s,\omega),\ k  |\rd_{s} \eta_{Ai_3,\pm}|(\omega,s) 
			\lesssim    k^{-1},\\ &\label{many.deriv}	|\rd_{\omega}^n\eta_{Ai_3,\pm}|(s,\omega),\ k  |\rd^n_{\omega} \rd_s \eta_{Ai_3,\pm}|(\omega,s) 
			\lesssim_{n}    k^{2n-1}\log(k), 
		\end{align}  for all $n\geq 1$.

	\end{prop}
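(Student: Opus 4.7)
The proof follows the strategy already deployed in Proposition~\ref{TP2.prop} and Corollary~\ref{cor.TP2} for the second turning point: starting from the Airy representation $u_I = u_{Ai_3} = \hat{f}_3^{-1/4}(Ai(k^{2/3}\zeta_3) + \ep_{Ai_3})$ provided by Lemma~\ref{uI.lemma}, we exploit the Poincar\'e large-argument asymptotics of the Airy function in the oscillatory region deep below the third turning point, and control the Volterra error $\ep_{Ai_3}$ directly. In our range $\frac{k^2}{4Mm^2}\leq s\leq\frac{k^2}{2Mm^2}$ we are strictly inside the classically allowed region with $\zeta_3<0$ and $k^{2/3}|\zeta_3|\gtrsim k^{2/3}\gg 1$, so the expansion
\[
Ai(-z) = \frac{\pi^{-1/2}}{2}z^{-1/4}\Bigl[e^{i(\frac{2}{3}z^{3/2}-\pi/4)}(1+r_+(z)) + e^{-i(\frac{2}{3}z^{3/2}-\pi/4)}(1+r_-(z))\Bigr],
\]
with $|\rd_z^m r_\pm(z)|\ls z^{-3/2-m}$, applies with $z=k^{2/3}|\zeta_3|$. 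Combined with the Liouville--Green identity $\hat{f}_3^{-1/4}|\zeta_3|^{-1/4}=|f_3|^{-1/4}=(Mm^2)^{1/2}k^{-1/2}|V|^{-1/4}$ and the defining identity $\frac{2}{3}k|\zeta_3|^{3/2}=\int_s^{s_{III}}|V|^{1/2}(s')ds'$, this gives the main terms in the statement, with the amplitude corrections $r_\pm$ contributing an $O(k^{-1})$ piece to $\eta_{Ai_3,\pm}$.

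The remaining contribution to $\eta_{Ai_3,\pm}$ comes from $\hat{f}_3^{-1/4}\ep_{Ai_3}$. As in the proof of Lemma~\ref{uI.lemma}, $\ep_{Ai_3}$ solves a Volterra equation with Airy kernel, to which we apply  Theorem 3.1 of Chapter 11 in~\cite{olver}, yielding $|\ep_{Ai_3}|\ls |Ai(k^{2/3}\zeta_3)|\cdot TV_{[\yy,+\infty)}[H_3]$ with $H_3$ the Schwarzian-type error control function. A routine computation  — analogous to the three-regime splitting of the proof of Proposition~\ref{TP2.prop} but now using that $|f_3|\approx 1$ and $|\zeta_3|\approx 1$ in our interval while the integrand decays like $\yy^{-3/2}$ further out — gives $TV\ls k^{-1}$, and thus $|\hat{f}_3^{-1/4}\ep_{Ai_3}|$ provides the remaining $O(k^{-1})$ amplitude correction, separated into its two oscillatory components by the explicit form of the Airy kernel. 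The $\rd_s$ estimate follows identically from the companion derivative bound in the same Volterra scheme.

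The main technical point is the higher-order $\omega$-derivative estimate~\eqref{many.deriv}. The key conceptual observation is that $\eta_{Ai_3,\pm}$ captures \emph{only} amplitude corrections relative to the exact factored-out phase $\pm(\int_s^{s_{III}}|V|^{1/2}-\pi/4)$, so $\rd_\omega$ never releases the large $O(k^3)$ derivative of the phase integral. Instead, $\rd_\omega$ acts on $r_\pm(k^{2/3}\zeta_3)$ through $|\rd_\omega(k^{2/3}\zeta_3)|\ls k^{8/3}$ (balancing $\rd_\omega k\ls k^3$ and $\rd_\omega\zeta_3\ls k^2$), which inductively yields $|\rd_\omega^n r_\pm|\ls k^{2n-1}$ from $|\rd_z^m r_\pm|\ls z^{-3/2-m}$. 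For the Volterra error $\ep_{Ai_3}$, the parameter estimate is obtained by differentiating its Volterra equation with respect to $\omega$ and applying Theorem~\ref{thm:volterraparam} from Appendix~\ref{volterra.section} in $(\omega,\zeta_3)$ coordinates (in which the Airy kernel is $\omega$-independent); the bounds on $\rd_\omega^n\psi_3$ and on the boundary $\yy_{III}(\omega)$ are established exactly as in the proof of Corollary~\ref{TP2.cor}, producing the single $\log(k)$ factor and the claimed $k^{2n-1}\log(k)$ bound. Finally, the mixed $\rd_\omega^n\rd_s$ estimates follow from the above by using the ODE~\eqref{eq:mainV} to trade a factor of $\rd_s$ for a bounded multiple of $|V|^{1/2}\ls k^{-1}$, which precisely yields the extra $k^{-1}$ gain displayed in~\eqref{many.deriv}. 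The main obstacle throughout is the bookkeeping of large $k$-powers in Step 3; it is routine but requires separating the amplitude and phase dependencies carefully, as the naive estimate differentiating both would produce the unacceptable $O(k^3)$ per derivative one sees if the oscillating phase were not factored out.
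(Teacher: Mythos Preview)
Your treatment of the leading Airy asymptotics and the $n=0$ Volterra bound for $\ep_{Ai_3}$ is fine, and the $r_\pm$ computation giving $|\rd_\omega^n r_\pm|\ls k^{2n-1}$ is correct. The gap is in the $\omega$-derivative estimate for the Volterra error $\ep_{Ai_3}$.

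Your claim that ``the Airy kernel is $\omega$-independent in $(\omega,\zeta_3)$ coordinates'' is false here: the kernel is built from $Ai(k^{2/3}\zeta_3)$, $Bi(k^{2/3}\zeta_3)$, and the large parameter is $k=k(\omega)$ with $\rd_\omega k\sim k^3$. This is precisely the difference with Corollary~\ref{TP2.cor}, where the large parameter $\LL$ is $\omega$-independent and the kernel genuinely satisfies $\rd_{\omegac}K\equiv 0$. With the basic (single-term) Volterra representation, each $\rd_{\tilde\omega}$ applied to the kernel therefore costs $O(k^3)$, and since the undifferentiated error is only $O(k^{-1})$, the scheme yields $|\rd_\omega^n\ep_{Ai_3}|\ls k^{3n-1}$, not the claimed $k^{2n-1}\log(k)$. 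Already for $n=1$ this is off by a full factor of $k$.

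The paper closes this gap by invoking the \emph{higher-order} Airy expansion (Theorem~7.1 in Chapter~11 of \cite{olver}), writing
\[
\ep_{Ai_3}=Ai(k^{2/3}\zeta)\sum_{q=1}^{N}\frac{\AC_q(\zeta)}{k^{2q}}+k^{-4/3}Ai'(k^{2/3}\zeta)\sum_{q=0}^{N-1}\frac{B_q(\zeta)}{k^{2q}}+\mathcal{E}_{2N+1,Ai},
\]
with $|\mathcal{E}_{2N+1,Ai}|\ls k^{-2N-1}$. The explicit terms, once combined with the large-argument expansion of $Ai$, $Ai'$ and the factored-out phase, depend on $\omega$ only through $k$ and $\zeta$ and lose $O(k^2)$ per $\rd_\omega$, giving $k^{2n-1}$. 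The remainder $\mathcal{E}_{2N+1,Ai}$ still loses $O(k^3)$ per $\rd_{\tilde\omega}$ via Theorem~\ref{thm:volterraparam}, but this is harmless since one may take $N$ as large as needed (e.g.\ $2N\geq n$), making $k^{-2N-1+3n}$ acceptable. This is exactly the device already used in the proof of Proposition~\ref{WKB.TP2.prop} for the same reason; your sketch needs this extra layer of expansion for the Volterra error, not merely the reference to Corollary~\ref{TP2.cor}.
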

	
	\begin{proof}  The proof for $n=1$ is essentially identical to that of Lemma 5.2 and Lemma 5.3 in \cite{KGSchw1} (note indeed that the $\frac{\LL^2}{r^2}$ term in \eqref{eq:mainV} is sub-leading in the range $ \frac{k^2}{4Mm^2} \leq  s \leq \frac{k^2}{2Mm^2} $, and thus does not impact the  error estimates). 	For the case $n\geq 1$, we come back to the equation written as \eqref{Y.coordinate}  	and we invoke Theorem 7.1 in \cite{olver}, Chapter 11 to obtain an asymptotic expansion (in terms of $u=k$) of  	$\ep_{Ai_3}$ as such, for any $N\in \mathbb{N}$, and we will set $\zeta=\zeta_3$ for a lighter notation in the proof. We introduce the variable $\zeta$ by  $\frac{2}{3}|\zeta|^{3/2}(Y)={\rm sign}\left(Y-Y_{III}\right)\int^{Y_{III}}_{Y} |f|^{1/2}(Y') dY'$:
		\begin{equation}\begin{split}
				&\ep_{Ai_3}(\omega,s) = Ai(k^{2/3}\zeta)\sum_{q=1}^{N} \frac{\AC_q(\zeta(s))}{k^{2q}}+ k^{-4/3}Ai'(k^{2/3}\zeta)\sum_{q=0}^{N-1} \frac{B_q(\zeta(s))}{k^{2q}}+  \mathcal{E}_{2N+1,Ai}(\omega,s),\\ &  
				\frac{|\mathcal{E}_{2N+1,Ai}|(\omega,s)}{M(k^{2/3} \zeta)},\  \frac{|\rd_s\mathcal{E}_{2N+1,Ai}|(\omega,s)}{k^{2/3}|V|^{1/2}(s) N(k^{2/3} \zeta)} \leq 2 k^{-2N-1} \exp\left( 4 k^{-1} TV_{[\zeta(s),+\infty] }[|\zeta|^{1/2}B_0(\zeta)] \right) TV_{[\zeta(s), +\infty] }[|\zeta|^{1/2}B_N(\zeta)], 
			\end{split}
		\end{equation}   (see \cite{KGSchw1}, Appendix A, or \cite{olver}, chapter 11 for the definition of the $M$, $N$ functions), where the functions $\AC_q(\xi)$, $B_q(\xi)$ are defined recursively with $\AC_0(\xi)=1$ and   (the integration variable $v$ corresponds to $\zeta$): \begin{align}\label{A.def2}
			&	\AC_{q+1}(\zeta)= -\frac{1}{2}\rd_{\zeta} B_q(\zeta) + \frac{1}{2} \int_0^{\zeta} \psi(v) B_q(v) dv,\\ & B_q(\zeta)= \frac{{\rm sign}\left(\zeta\right)}{2|\zeta|^{1/2}} \int_0^{\zeta} \{ \psi(v) \AC_q(v)- \rd_{\zeta}^2\AC_q(v)  \}\frac{dv}{|\color{black}v|\color{black}^{1/2}},\\ & \psi(v)=- \hat{f}^{-3/4} \frac{d^2}{dY^2} (\hat{f}^{-1/4}).
		\end{align}

		As we have done before, we use $\partial_{\tilde{\omega}}$ to denote differentiation with respect to $\omega$ in the $\left(\omega,\zeta\right)$ coordinates and use $\partial_{\omega}$ for the $\left(\omega,Y\right)$ coordinates. We note that in the region $\frac{1}{4} \leq Y \leq3\color{black}$, we have the following estimates for any $N,q,n\in \mathbb{N}$: \begin{align*}
			&		|\hat{f}(\omega,Y)|  \approx 1,\ 	|\rd_{\omega}^n\hat{f}(\omega,Y)|\ls  [1+k^{2(n-1)} 1_{n\geq 1}+(\LL^2+\log(k))1_{n= 1}]  , \\ &  |\zeta|(Y) \approx |Y-Y_{III}|,\  |\rd_Y\zeta(Y)| \approx 1,\ |\rd_{\omega}^n\zeta(Y) |,\ |\rd_Y\rd_{\omega}^n\zeta(Y) |\lesssim  [1+k^{2(n-1)} 1_{n\geq 1}+(\LL^2+\log(k))1_{n= 1}] , \\   & |\rd_Y^q \rd_{\omega}^n\psi|,\  | \rd_{\tilde{\omega}}^n\psi| \ls  [1+k^{2(n-1)} 1_{n\geq 1}+(\LL^2+\log(k))1_{n= 1}]  , \\ &  |\rd_{\zeta}^{q}\rd_{\tilde{\omega}}^n \AC_{N}(\omega,\zeta)|\ls [1+k^{2(n-1)} 1_{n\geq 1}+(\LL^2+\log(k))1_{n= 1}] , \\& 
			|\rd_{\zeta}^{q}\rd_{\tilde{\omega}}^n B_N(\omega,\zeta)|\ls  [1+k^{2(n-1)} 1_{n\geq 1}+(\LL^2+\log(k))1_{n= 1}]  .
		\end{align*}
		
		While for $Y\geq 3$, we have: 
		\begin{align*}
			&	 \zeta(Y) \approx Y^{2/3},\  \rd_Y\zeta(Y) \approx Y^{-1/3},\ |\rd_{\omega}\zeta(Y) |\lesssim[ \LL^2+\log(k)] Y^{-1/3},\	\hat{f}(Y) \approx  Y^{-2/3}, \\   & |\rd_Y^q \rd_{\omega}^n\psi| \ls  Y^{-4/3-q}{ [1+k^{2(n-1)} 1_{n\geq 1}+(\LL^2+\log(k))1_{n= 1}] },\ | \rd_{\tilde{\omega}}^n\psi| \ls  Y^{-4/3}{ [1+k^{2(n-1)} 1_{n\geq 1}+(\LL^2+\log(k))1_{n= 1}] }, \\ &  |\rd_{\tilde{\omega}}^n \AC_{N+1}(\omega,\zeta)|\ls { [1+k^{2(n-1)} 1_{n\geq 1}+(\LL^2+\log(k))1_{n= 1}] } ,
			\\ & |\rd_{\zeta}^{1+q}\rd_{\tilde{\omega}}^n \AC_{N+1}(\omega,\zeta)|\ls { [1+k^{2(n-1)} 1_{n\geq 1}+(\LL^2+\log(k))1_{n= 1}] }|\zeta|^{-3/2-q} , 
			\\& 
			|\rd_{\tilde{\omega}}^n \left(|\zeta|^{1/2}B_N(\omega,\zeta)\right)|\ls { [1+k^{2(n-1)} 1_{n\geq 1}+(\LL^2+\log(k))1_{n= 1}] } ,
			\\ & |\rd_{\zeta}^{1+q}\rd_{\tilde{\omega}}^n \left(|\zeta|^{1/2}B_N(\omega,\zeta)\right)|\ls { [1+k^{2(n-1)} 1_{n\geq 1}+(\LL^2+\log(k))1_{n= 1}] } |\zeta|^{-5/2-q},
		\end{align*} where to obtain this, we  have  also taken advantage of the estimate 
		$  |\rd_{\tilde{\omega}} F|\ls  |\rd_{\omega} F|+\frac{|\rd_{\omega}\zeta|}{|\rd_Y\zeta|} |\rd_{Y} F|\ls|\rd_{\omega} F|+[\LL^2+\log(k) ]|\rd_{Y} F|$ 
		for any function $F$, and its iterated applications. In particular, we have \begin{align*}
			&   k^{-1} TV_{[\zeta(s),+\infty] }[|\zeta|^{1/2}B_0(\zeta)]\ls 1,\\ & TV_{[\zeta(s), +\infty] }[|\zeta|^{1/2}B_N(\zeta)] \ls 1.
		\end{align*}

		Now, note (see the discussion in \cite{olver} following  Theorem 7.1 in  Chapter 11) that $\mathcal{E}_{2N+1,Ai}(\omega,\zeta)$ satisfies the following Volterra equation:
		
		\begin{align}\label{Volterra2}
			&\mathcal{E}_{2N+1,Ai}(\omega,\zeta) = \pi k^{-2/3} \int^{+\infty}_{\zeta}  K(\zeta,v)\left(k^{-2N} Ai( k^{2/3} v) [2v\rd_{\zeta}B_N(\omega,v)+B_N(\omega,v)]+ \psi(v) 	\mathcal{E}_{2N+1,Ai}(\omega,v) \right) dv,  \\& 	K(\zeta,v,\omega)= (2k)^{-2/3}\left(Bi( [2k]^{2/3} \zeta)Ai( [2k]^{2/3} v)-Ai( [2k]^{2/3} \zeta)Bi( [2k]^{2/3} v)\right).
		\end{align}

		Applying $\rd_{\tilde{\omega}}^n$ derivatives to \eqref{Volterra2}, we see that the  contribution of each $\rd_{\omega}$ derivative is of order $O(k^3)$: so from Theorem~\ref{thm:volterraparam}, we get 
		$$ 	\frac{| \rd_{\tilde{\omega}}^n\mathcal{E}_{2N+1,Ai}|(\omega,s)}{M(k^{2/3} \zeta)},\  \frac{| \rd_{\tilde{\omega}}^n\rd_\zeta\mathcal{E}_{2N+1,Ai}|(\omega,s)}{k^{2/3} N(k^{2/3} \zeta)} \ls k^{-2N-1+3n}
		.$$
		
		Now, using the asymptotic expansion of $Ai(x)$ and $Bi(x)$ (see \cite{olver}, Chapter 11) when $x<0$: for any $M\geq 1$, there exist universal constants $( a_{q,\pm}, a_{q,+}')$ [independent of $\omega$] such that for all $n \geq 0$:  \begin{align*}
			&\rd_{\tilde{\omega}}^n\left[ Ai( k^{\frac{2}{3}} \zeta)- k^{-\frac{1}{6}} |\zeta|^{-\frac{1}{4}}  ( e^{\frac{2ik |\zeta|^{\frac{3}{2}}}{3}-i\frac{\pi}{4}} \sum_{q=0}^{M-1} a_{q,+} k^{-q} |\zeta|^{-\frac{3q}{2}}+    e^{\frac{-2ik |\zeta|^{\frac{3}{2}}}{3}+i\frac{\pi}{4}} \sum_{q=0}^{M-1} a_{q,-} k^{-q} |\zeta|^{-\frac{3q}{2}})\right]\\= & O\left(k^{-M+3n} |\zeta|^{-\frac{3M}{2}-\frac{1}{4}}\right),\\ & \rd_{\tilde{\omega}}^n\left[Ai'( k^{2/3} \zeta)- k^{1/6} |\zeta|^{\frac{1}{4}}   e^{\frac{2ik |\zeta|^{\frac{3}{2}}}{3}-i\frac{\pi}{4}} \sum_{q=0}^{M-1} a_{q,+}' k^{-q} |\zeta|^{-3q/2}+    e^{\frac{-2ik |\zeta|^{\frac{3}{2}}}{3}+i\frac{\pi}{4}} \sum_{q=0}^{M-1} a_{q,-}' k^{-q} |\zeta|^{-\frac{3q}{2}}\right]\\ =& O\left(k^{-M+3n} |\zeta|^{-\frac{3M}{2}-\frac{1}{4}}\right).
		\end{align*}
		\eqref{many.deriv} then follows, taking $M$ large enough, e.g.\ $M=n+1$ and $N$ large enough, e.g.\ $2N \geq n$.

	\end{proof}

	\subsection{Connection between the second and third turning point}\label{connection23.section}
	We start with the connection between the WKB estimates slightly under the third turning point, and $u_I$.
	\begin{lemma}\label{connection2.3.lemma} $u_I$ satisfies the following connection formula to $\uAii$ and $\uBii$, defining $\kk:=\int_{s_{II}}^{s_{III}} |V|^{1/2}(s) ds$:
		\begin{equation}\label{connection2.3.eq}
			\begin{split}
				&u_I(\omega,s)= a_{3,+}(\omega) w_{2,+}(\omega,s)+ a_{3,-}(\omega) w_{2,-}(\omega,s),\\  & a_{3,\pm}(\omega) = \frac{1}{2}  (Mm^2)^{1/2} \pi^{-1/2} k^{-2/3}  e^{\mp i\kk \pm \frac{i\pi}{4}}[ 1+ \delta_{3,\pm}(\omega)],\\ &  |\delta_{3,\pm}|(\omega)\ls k^{-1},\  |\rd_{\omega}^{n}\delta_{3,\pm}|(\omega)\ls k^{2n-1}\log(k),
			\end{split}
		\end{equation}for any $n\geq 1$, where $\delta_{3,-}(\omega)= \overline{\delta_{3,+}}(\omega)$. In particular, note that defining $\phi_3(\omega) \in \R$  as $e^{i\phi_3(\omega)} := \frac{1+\delta_{3,+}(\omega)}{|1+\delta_{3,+}(\omega)|}$, we have for any $n\geq 1$ \begin{equation}
			|\phi_3|(\omega)\ls k^{-1},\ |\rd_{\omega}^n\phi_3|(\omega)\ls k^{2n-1}\log(k).
		\end{equation} We can also rewrite the above relation in terms of $w_{2,even}$ and $w_{2,odd}$ as \begin{equation}\label{useful}\begin{split}&u_I(\omega,s)= (Mm^2)^{1/2} \pi^{-1/2} k^{-2/3} (1+\delta_I(\omega))\left[  \cos(\kk-\phi_3(\omega)) w_{2,even}(\omega,s)- \sin(\kk-\phi_3(\omega)) w_{2,odd}(\omega,s)\right],\end{split} \end{equation}
		where $\delta_I(\omega):= |1+\delta_{3,+}(\omega)|-1$ satisfies also
		\[|\delta_I|(\omega)\ls k^{-1},\  |\rd_{\omega}^{n}\delta_I|(\omega)\ls k^{2n-1}\log(k).\]
	\end{lemma}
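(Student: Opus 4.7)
The strategy is to use the fact that $u_I$ and $w_{2,\pm}$ are all solutions of the same ODE \eqref{eq:mainV} on an overlap region where \emph{both} Proposition~\ref{usefulformofuI} and Proposition~\ref{WKB.TP2.prop} are valid, namely $s \in [\frac{k^2}{4Mm^2},\frac{k^2}{2Mm^2}]$. Since Proposition~\ref{WKB.TP2.prop} gives $W(w_{2,+},w_{2,-}) = 2i$, the pair $\{w_{2,+},w_{2,-}\}$ is a basis of solutions, and therefore the decomposition $u_I = a_{3,+}w_{2,+} + a_{3,-}w_{2,-}$ exists and is unique, with
\[
a_{3,\pm}(\omega) = \pm\frac{W(u_I,w_{2,\mp})}{2i},
\]
a quantity that is independent of $s$. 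The proof reduces to evaluating this Wronskian at a convenient point in the overlap region.

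To perform the matching, first rewrite the phase of $u_I$ using the decomposition $\int_s^{s_{III}}|V|^{1/2} = \pi\breve{k} - \int_{s_{II}}^s|V|^{1/2}$. Substituting the asymptotic form of $u_I$ from Proposition~\ref{usefulformofuI} into the Wronskian formula and inserting the WKB form of $w_{2,\mp}$, the leading exponentials collapse and the ``same phase'' terms (those proportional to $w_{2,\mp}$) drop out of $W(u_I,w_{2,\mp})$. Keeping track only of the dominant Wronskian pairing (between $e^{\pm i\int_{s_{II}}^s|V|^{1/2}}|V|^{-1/4}$ and $e^{\mp i\int_{s_{II}}^s|V|^{1/2}}|V|^{-1/4}$, which yields the Wronskian constant $\pm 2i$ up to lower-order $|V|'/|V|$ terms that cancel in the ratio) then produces
\[
a_{3,\pm}(\omega) = \tfrac{1}{2}(Mm^2)^{1/2}\pi^{-1/2}k^{-2/3}\,e^{\mp i\pi\breve{k}\pm i\pi/4}\bigl[1+\delta_{3,\pm}(\omega)\bigr],
\]
where $\delta_{3,\pm}$ absorbs the contributions of $\eta_{Ai_3,\pm}$, $\epsilon_{2,\mp}^{WKB}$, and their $s$-derivatives evaluated at $s_* \sim k^2/(Mm^2)$.

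For the pointwise bound $|\delta_{3,\pm}|(\omega) \lesssim k^{-1}$, combine $|\eta_{Ai_3,\pm}|\lesssim k^{-1}$ and $k|\partial_s\eta_{Ai_3,\pm}|\lesssim k^{-1}$ from Proposition~\ref{usefulformofuI} with $|\epsilon_{2,\pm}^{WKB}|\lesssim s^{-1/2}\sim k^{-1}$ and $s^{1/2}|\partial_s\epsilon_{2,\pm}^{WKB}|\lesssim k^{-1}$ at $s_*$ from Proposition~\ref{WKB.TP2.prop}. The $\omega$-derivative bounds $|\partial_\omega^n\delta_{3,\pm}|\lesssim k^{2n-1}\log k$ follow in the same way from \eqref{many.deriv} and \eqref{WKB2.error2}--\eqref{WKB2.error3}. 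A key bookkeeping point: although $\partial_\omega(\pi\breve{k})\sim k^2$ because $\pi\breve{k}(\omega) \sim (m^2-\omega^2)^{-1/2}$ is singular, this singular phase factor $e^{\mp i\pi\breve{k}}$ has been factored out of $a_{3,\pm}$ \emph{before} defining $\delta_{3,\pm}$, so it does not contaminate the derivative estimates of $\delta_{3,\pm}$ itself. The expected main obstacle is tracking that the repeated $\partial_\omega$-derivatives land on the ``error'' pieces (the $\eta$'s and $\epsilon^{WKB}$'s) rather than on the main exponential phases, and that the $\log k$ loss from the $n=1$ cases in \eqref{many.deriv} and \eqref{WKB2.error2} propagates correctly to the sharper $k^{2n-1}$ scaling at higher $n$ via \eqref{WKB2.error3}.

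Finally, the reformulation in terms of $w_{2,even}$ and $w_{2,odd}$ follows from the reality of $u_I$. Indeed, from the boundary condition \eqref{uI.asymp}, $u_I$ is real-valued (when $m>\omega$), hence $u_I = \overline{u_I}$, which combined with $w_{2,-}=\overline{w_{2,+}}$ forces $a_{3,-} = \overline{a_{3,+}}$, i.e.\ $\delta_{3,-}=\overline{\delta_{3,+}}$. Writing $1+\delta_{3,+}=(1+\delta_I(\omega))e^{i\phi_3(\omega)}$ with $\delta_I,\phi_3\in\mathbb{R}$, and substituting $w_{2,+}=e^{-i\pi/4}(w_{2,even}-iw_{2,odd})$, $w_{2,-}=e^{i\pi/4}(w_{2,even}+iw_{2,odd})$ into $u_I = a_{3,+}w_{2,+}+a_{3,-}w_{2,-}$, the $w_{2,even}$-coefficient becomes $(Mm^2)^{1/2}\pi^{-1/2}k^{-2/3}(1+\delta_I)\cos(\pi\breve{k}-\phi_3)$ and the $w_{2,odd}$-coefficient becomes $-(Mm^2)^{1/2}\pi^{-1/2}k^{-2/3}(1+\delta_I)\sin(\pi\breve{k}-\phi_3)$, yielding \eqref{useful}. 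The bounds on $\phi_3$ and $\delta_I$ follow from the corresponding bounds on $\delta_{3,+}$ and the smoothness of $z\mapsto(|1+z|,\arg(1+z))$ near $z=0$.
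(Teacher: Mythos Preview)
Your proposal is correct and follows essentially the same approach as the paper: both match $u_I$ to $w_{2,\pm}$ on the overlap region $s\in[\tfrac{k^2}{4Mm^2},\tfrac{k^2}{2Mm^2}]$ using Propositions~\ref{usefulformofuI} and~\ref{WKB.TP2.prop}, and both evaluate the errors at $s\sim k^2/(Mm^2)$. The paper does this by directly substituting $|V|^{-1/4}e^{\pm i\int_{s_{II}}^s|V|^{1/2}}=w_{2,\pm}/(1+\ep^{WKB}_{2,\pm})$ into the asymptotic form of $u_I$ and then evaluating at $s=\tfrac{k^2}{4Mm^2}$, while you phrase the same computation through the Wronskian formula $a_{3,\pm}=\pm W(u_I,w_{2,\mp})/(2i)$; the content and the resulting error bookkeeping are identical.
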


	\begin{proof}
		We write, using  Proposition~\ref{WKB.TP2.prop}, Proposition~\ref{usefulformofuI}, and  Lemma~\ref{uI.lemma}\begin{align}
			u_I &= (Mm^2)^{1/2} \frac{\pi^{-1/2}}{2} k^{-2/3} ( e^{i\int_{s_{II}}^{s_{III}} |V|^{1/2}(s) ds-\frac{i\pi}{4}} \frac{w_{2,-}(\omega,s)}{1+\ep_{2,-}^{WKB}(\omega,s)}+ e^{-i\int_{s_{II}}^{s_{III}} |V|^{1/2}(s) ds+\frac{i\pi}{4}} \frac{w_{2,+}(\omega,s)}{1+\ep_{2,+}^{WKB}(\omega,s)}
			\\ \nonumber &+  e^{i\int_{s_{II}}^{s_{III}} |V|^{1/2}(s) ds-\frac{i\pi}{4}} \frac{w_{2,-}(\omega,s)}{1+\ep_{2,-}^{WKB}(\omega,s)}\eta_{Ai_3,+}(\omega,s)+e^{-i\int_{s_{II}}^{s_{III}} |V|^{1/2}(s) ds+\frac{i\pi}{4}} \frac{w_{2,+}(\omega,s)}{1+\ep_{2,+}^{WKB}(\omega,s)}\eta_{Ai_3,-}(\omega,s)).
		\end{align} Thus \eqref{connection2.3.eq} follows from the estimates of Proposition~\ref{usefulformofuI} and Lemma~\ref{uI.lemma} evaluated at $s=\frac{k^2}{4Mm^2}$. Finally, \eqref{useful} follows from the identities: \begin{equation*}\begin{split}&u_I(\omega,s)= (Mm^2)^{1/2} \pi^{-1/2} k^{-2/3}  \left[  (\cos(\kk) + \Re(\delta_{3,+} e^{-i\kk}))w_{2,even}(\omega,s)+(-\sin(\kk) + \Im(\delta_{3,+} e^{-i\kk})) w_{2,odd}(\omega,s)\right] \\& =  (Mm^2)^{1/2} \pi^{-1/2} k^{-2/3}  \left[   \Re([1+\delta_{3,+}] e^{-i\kk})w_{2,even}(\omega,s)+ \Im([1+\delta_{3,+}] e^{-i\kk}) w_{2,odd}(\omega,s)\right].\end{split} \end{equation*}
	\end{proof}
	
	In the next proposition, we combine Lemma~\ref{connection2.3.lemma}	and Lemma~\ref{connection2.3.lemma2}.	
	\begin{prop}\label{connection23.prop}
		$u_I$ obeys the following estimates, recalling  $\eta_{\alpha}(\omega)=\eta_{\alpha}^{reg}(\omega)+\eta_{\alpha}^{sing}(\omega)$ and $\eta_{\beta}(\omega)=\eta_{\beta}^{reg}(\omega)+\eta_{\beta}^{sing}(\omega)$ defined in Lemma~\ref{connection2.3.lemma2}:
		\begin{align}
			&u_I = \alpha_{A_3}(\omega) \uAii+ \beta_{A_3}(\omega) \uBii,\label{uI1}\\
			& \alpha_{A_3}(\omega)= k^{-2/3}\LL^{2/3}\left(1+\delta_I(\omega)\right) (1+\eta_{\alpha}(\omega)) \cos(\kk-\phi_3(\omega)),\label{uI2} \\  & \beta_{A_3}(\omega)= -k^{-2/3}\LL^{2/3}\left(1+\delta_I(\omega)\right)(1+\eta_{\beta}(\omega)) \sin(\kk-\phi_3(\omega)),\label{uI3}.
		\end{align}
	\end{prop}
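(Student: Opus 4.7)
\medskip

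\noindent\textbf{Proof proposal.} The plan is to obtain Proposition~\ref{connection23.prop} by a direct substitution combining the two preceding results: the formula for $u_I$ in terms of the real-valued basis $\{w_{2,even},w_{2,odd}\}$ from Lemma~\ref{connection2.3.lemma}, and the expression of $w_{2,even}$ and $w_{2,odd}$ in terms of $\{\uAii,\uBii\}$ provided by Lemma~\ref{connection2.3.lemma2}. Specifically, I would start from the identity \eqref{useful},
\[
u_I(\omega,s)=(Mm^2)^{1/2}\pi^{-1/2}k^{-2/3}(1+\delta_I(\omega))\bigl[\cos(\kk-\phi_3(\omega))\,w_{2,even}(\omega,s)-\sin(\kk-\phi_3(\omega))\,w_{2,odd}(\omega,s)\bigr],
\]
and substitute
\[
w_{2,even}=(Mm^2)^{-1/2}\pi^{1/2}\LL^{2/3}(1+\eta_\alpha^{reg}+\eta_\alpha^{sing})\uAii,\qquad w_{2,odd}=(Mm^2)^{-1/2}\pi^{1/2}\LL^{2/3}(1+\eta_\beta^{reg}+\eta_\beta^{sing})\uBii.
\]

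The constants $(Mm^2)^{\pm 1/2}\pi^{\mp 1/2}$ cancel, and the remaining $\LL^{2/3}k^{-2/3}$ prefactor matches exactly the one in \eqref{uI2}--\eqref{uI3}. Reading off the coefficients of $\uAii$ and $\uBii$ then yields
\[
\alpha_{A_3}(\omega)=k^{-2/3}\LL^{2/3}(1+\delta_I(\omega))(1+\eta_\alpha(\omega))\cos(\kk-\phi_3(\omega)),
\]
\[
\beta_{A_3}(\omega)=-k^{-2/3}\LL^{2/3}(1+\delta_I(\omega))(1+\eta_\beta(\omega))\sin(\kk-\phi_3(\omega)),
\]
with $\eta_\alpha=\eta_\alpha^{reg}+\eta_\alpha^{sing}$ and $\eta_\beta=\eta_\beta^{reg}+\eta_\beta^{sing}$ as declared.

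There is no substantive obstacle here: the statement is, by design, merely the composition of the two connection formulas \eqref{useful} and the expressions in Lemma~\ref{connection2.3.lemma2}, both established in the $\{\tfrac{2\LL^2}{Mm^2}\leq s \leq \tfrac{10\LL^2}{Mm^2}\}$ region where $w_{2,\pm}$ have been extended via Corollary~\ref{cor.bessel.connection} so that both lemmas apply simultaneously. The quantitative estimates on $\delta_I$, $\eta_\alpha$, $\eta_\beta$ and their $\omega$-derivatives are inherited directly from the already-established bounds in Lemma~\ref{connection2.3.lemma} and Lemma~\ref{connection2.3.lemma2}, and require no further argument.
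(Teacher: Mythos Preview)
Your proposal is correct and follows essentially the same approach as the paper, which states only that the result is ``an immediate computation combining Lemma~\ref{connection2.3.lemma} and Lemma~\ref{connection2.3.lemma2}.'' You have simply made the substitution explicit.
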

	
	\begin{proof} This is an immediate computation combining Lemma~\ref{connection2.3.lemma}	and Lemma~\ref{connection2.3.lemma2}.

	\end{proof}
	
	\subsection{Putting everything together: $u_H$ and $u_I$ formulae}\label{together.section}
	
	Writing $u_H$ (see Section~\ref{energy.section}) and $u_I$ (see Section~\ref{connection23.section}) in terms of $w_{2,\pm}$, we obtain the following formula, recalling that $e^{i\phi_3(\omega)} := \frac{1+\delta_{3,+}(\omega)}{|1+\delta_{3,+}(\omega)|}$: \begin{equation*}\begin{split}
			&W(u_I,u_H)=  (Mm^2)^{1/2} \pi^{-1/2} k^{-2/3} \left( \gamma_B(\omega) (\cos(\kk)+ \Re(\delta_{3,+}e^{-i\kk})) + \gamma_A(\omega) (\sin(\kk) -\Im(\delta_{3,+}e^{-i\kk})) \right) \\ & =  (Mm^2)^{1/2} \pi^{-1/2} k^{-2/3}  \tilde{\gamma}_A(\omega) \exp( C_{II} \LL\log\LL)\left( \frac{\tilde{\gamma}_B(\omega)}{\tilde{\gamma}_A(\omega)}  \frac{\cos(\kk)+ \Re(\delta_{3,+}e^{-i\kk})}{\exp(2 C_{II} \LL\log\LL) }+  (\sin(\kk) -\Im(\delta_{3,+}e^{-i\kk})) \right) \\ & =   (Mm^2)^{1/2} \pi^{-1/2} k^{-2/3}  \tilde{\gamma}_A(\omega) \exp( C_{II} \LL\log\LL)\left( \frac{\tilde{\gamma}_B(\omega)}{\tilde{\gamma}_A(\omega)}  \frac{  \Re([1+\delta_{3,+}]e^{-i\kk})}{\exp(2 C_{II} \LL\log\LL) }- \Im([1+{\delta}_{3,+}]e^{-i\kk}) \right)\\ &  =     (Mm^2)^{1/2} \pi^{-1/2} k^{-2/3}  \tilde{\gamma}_A(\omega) \exp( C_{II} \LL\log\LL) |1+\delta_{3,+}(\omega)|\left( \frac{\tilde{\gamma}_B(\omega)}{\tilde{\gamma}_A(\omega)}  \frac{  \cos(\kk - \phi_3(\omega))}{\exp(2 C_{II} \LL\log\LL) }+  \sin(\kk - \phi_3(\omega)))\right).\end{split} \end{equation*} Moreover, we can also write \begin{align*} & W(u_I,u_H) = \frac{1}{2}(Mm^2)^{1/2} \pi^{-1/2} k^{-2/3} \left( [\gamma_B(\omega)-i\gamma_A(\omega)][1+
		\overline{\delta_{3,+}}]e^{i\kk} +[\gamma_B(\omega)+i\gamma_A(\omega)][1+
		{\delta_{3,+}}]e^{-i\kk} \right) \\ & =  \frac{1}{2}(Mm^2)^{1/2} \pi^{-1/2} k^{-2/3} e^{-i(\kk-\phi_3(\omega))}  [\gamma_B(\omega)+i\gamma_A(\omega)]|1+\delta_{3,+}(\omega)|\left( 1 +\underbrace{\frac{\gamma_B(\omega)-i\gamma_A(\omega)}{\gamma_B(\omega)+i\gamma_A(\omega) }}_{:=\Gamma(\omega,s)}\ e^{2i(\kk-\phi_3(\omega))} \right).
	\end{align*}

	Now we turn to splitting $u_I$ in terms of $w_{1,\pm}$:

	$$ u_I =  \left[\underbrace{  \alpha_{A_3}\alpha_{A_2}+\beta_{A_3}\alpha_{B_2}}_{:=\alpha_{3,-}}\right] w_{1,-}+\left[ \underbrace{ \alpha_{A_3}\beta_{A_2}+\beta_{A_3}\beta_{B_2}}_{:= \alpha_{3,+}}\right] w_{1,+},$$
	
	and the connection coefficients $\alpha_{3,\pm}$ are as follows:

	\begin{align}	 \label{alpha3-1}&\alpha_{3,-}=  - \frac{k^{-2/3} (Mm^2)^{1/2}}{ \pi^{1/2}} \exp\left(   C_{II}\LL\log(\LL) \right) [  \left(1+\eta_{\beta}(\omega)\right) \left(1+\delta_{\alpha,B_2}(\omega)\right) (1+\delta_I)\sin(\kkp)  \\ &-\exp( -2  C_{II}\LL\log(\LL) )  (1+\eta_{\alpha}(\omega)) (1+\delta_I(\omega))\cos(\kkp)\frac{ \delta_{\alpha,A_2}(\omega)}{2}],   \end{align} and  \begin{align}
		&\alpha_{3,+} = \frac{k^{-2/3} (Mm^2)^{1/2}}{ \pi^{1/2}} \exp\left(  - C_{II}\LL\log(\LL) \right)(1+\delta_I(\omega))\\ & \left[- (1+\eta_{\beta}(\omega))  \sin(\kkp) \tilde{\beta}_{B_2}(\omega) +\frac{1}{2}(1+\eta_{\alpha}(\omega)) \cos(\kkp)[1+\delta_{\beta,A_2}(\omega)]\right].\label{alpha3+1}
	\end{align} 
	
	Using the above computations, we obtain estimates in the following proposition.

	\begin{prop}\label{together.prop} Defining the new phase $\kkq:= \kkp$, we introduce constants $C_{cos,\pm}(\omega)$, $C_{sin,\pm}(\omega)$, $\delta_{cos}(\omega)$, $\delta_{sin}(\omega)$ and write:
		\begin{align}
			&  u_I = \alpha_{3,-}(\omega,\LL)  w_{1,-}+  \alpha_{3,+}(\omega,\LL)  w_{1,+},\\	 
			& \label{alpha3+-2} \alpha_{3,\pm}(\omega,\LL)= k^{-2/3}  \pi^{-1/2}  (Mm^2)^{1/2} \exp\left(  \mp C_{II}\LL\log(\LL) \right) \left( C_{cos,\pm}(\omega) \cos(\kkq)+C_{sin,\pm}(\omega) \sin(\kkq)\right) ,\end{align}  
		
		Moreover, the following estimates hold, defining $C_{cos,\pm}=C_{cos,\pm}^{reg}+C_{cos,\pm}^{sing} $, $C_{sin,\pm}=C_{sin,\pm}^{reg}+C_{sin,\pm}^{sing} $: \begin{align}
			&	|C_{sin,+}^{reg}|(\omega),\ |\rd_{\omega}C_{sin,+}^{reg}|(\omega) \ls_{A} \exp( O(\LL)),\  k^{1/2}	|C_{sin,+}^{sing}|(\omega),\ k^{-2} \log^{-1}(k) |\rd_{\omega}C_{sin,+}^{sing}|(\omega) \ls_{A} \exp( O(\LL)),\\  & 	|C_{sin,-}^{reg}(\omega)+1|\ls e^{-\kappa_+ A},\ |\rd_{\omega}C_{sin -}^{reg}|(\omega) \ls_A  \LL^{3}\log(\LL),\ 	|C_{sin -}^{sing}|(\omega)\ls  k^{-1/2},\  |\rd_{\omega}C_{sin -}^{sing}|(\omega) \ls k^{2} \log(k),\\ &|C_{cos,+}^{reg}(\omega)-\frac{1}{2}|\ls e^{-\kappa_+ A},\ |\rd_{\omega}C_{cos,+}^{reg}|(\omega)\ls_A \LL^{3}\log(\LL),\  	|C_{cos,+}^{sing}|(\omega)\ls k^{-1/2},\  |\rd_{\omega}C_{cos,+}^{sing}|(\omega) \ls k^{2} \log(k),\\  & 	|C_{cos,-}^{reg}(\omega)|,\ \LL^{-3} {\log^{-1}(\LL)} |\rd_{\omega}C_{cos,-}^{reg}|(\omega)\ls_{A}  \exp( -2  C_{II}\LL\log(\LL))  ,\\ & 	k^{1/2}|C_{cos,-}^{sing}|(\omega),\  k^{-2} \log^{-1}(k)  |\rd_{\omega}C_{cos, -}^{sing}|(\omega)  \ls_{A} \exp( -2  C_{II}\LL\log(\LL)) . \end{align}
		Now, for the Wronskian estimates, introducing the non-zero constants $D_W(\omega), C_W(\omega)$ and $\delta_{cos}(\omega)$:  \begin{align}		& W(u_I,u_H)=2k^{-2/3}  \pi^{-1/2}  (Mm^2)^{1/2} D_W(\omega) \exp( C_{II} \LL\log\LL)\left(  \exp(-2C_{II} \LL \log\LL) \cos(\kkq) \delta_{cos}(\omega) +\sin(\kkq)) \right) \\ &=k^{-2/3}  \pi^{-1/2}  (Mm^2)^{1/2}C_W(\omega) \exp( C_{II} \LL \log(\LL))  e^{-i\kkq} \left(    1+ 		\Gamma(\omega)\cdot e^{2i\kkq}\right),
		\end{align}			w
		with $\Gamma(\omega)$  defined in Proposition~\ref{energy.lemma}. 	The following estimates hold, with  $\delta_{cos}=\delta_{cos}^{reg}+\delta_{cos}^{sing} $,\\ $D_W= D_W^{reg}+ D_W^{sing}$,  $C_W= C_W^{reg}+ C_W^{sing}$:	 \begin{align}\\ &  |C_W^{reg}|(\omega)\approx 1,\ |D_W^{reg}|(\omega) \approx 1,\  |\rd_{\omega}C_W^{reg}|(\omega),\ |\rd_{\omega}D_W^{reg}|(\omega)\ls \LL^3,\\ & |C_W^{sing}|(\omega),\  |D_W^{sing}|(\omega)\ls k^{-1/2},\  |\rd_{\omega}C_W^{sing}|(\omega),\  |\rd_{\omega}D_W^{sing}|(\omega)\ls k^2\log(k),\\ &  1 \lesssim |\delta_{cos}^{reg}|(\omega)\ls 1,\  |\rd_{\omega}\delta_{cos}^{reg}|(\omega)\ls    \LL^{3} \log\LL,\\ & |\delta_{cos}^{sing}|(\omega)\ls k^{-1/2} ,\ |\delta_{cos}^{sing}|(\omega) \lesssim  k^{2}\log(k).   \end{align} and recalling 
		
	\end{prop}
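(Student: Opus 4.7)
The plan is to assemble the proposition directly from the formulas obtained just above its statement for $\alpha_{3,\pm}$ (equations \eqref{alpha3-1}--\eqref{alpha3+1}) and for $W(u_I,u_H)$, then read off the coefficients $C_{cos,\pm}, C_{sin,\pm}, D_W, C_W, \delta_{cos}$ by distributing the factors involving $\eta_\alpha,\eta_\beta,\delta_I,\delta_{\alpha,A_2},\delta_{\alpha,B_2},\delta_{\beta,A_2},\tilde{\beta}_{B_2}$ over $\cos(\kkq)$ and $\sin(\kkq)$. Since $\kkq$ is defined to equal $\kkp$, the trigonometric arguments in the formulas above already match $\kkq$, so no phase adjustment is needed at this stage; the task is purely bookkeeping.

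First I would define
\[ C_{sin,-}(\omega) := -(1+\eta_\beta(\omega))(1+\delta_{\alpha,B_2}(\omega))(1+\delta_I(\omega)),\quad C_{cos,-}(\omega) := \tfrac{1}{2}(1+\eta_\alpha(\omega))(1+\delta_I(\omega))\delta_{\alpha,A_2}(\omega)\exp(-2C_{II}\LL\log\LL), \]
\[ C_{sin,+}(\omega) := -(1+\eta_\beta(\omega))(1+\delta_I(\omega))\tilde{\beta}_{B_2}(\omega),\quad C_{cos,+}(\omega) := \tfrac{1}{2}(1+\eta_\alpha(\omega))(1+\delta_I(\omega))(1+\delta_{\beta,A_2}(\omega)), \]
matching \eqref{alpha3-1}--\eqref{alpha3+1} against the target form \eqref{alpha3+-2}. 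Then I would split each factor into a ``regular'' and ``singular'' piece using the decompositions already provided: $\eta_\alpha = \eta_\alpha^{reg}+\eta_\alpha^{sing}$, $\eta_\beta = \eta_\beta^{reg}+\eta_\beta^{sing}$ (Lemma~\ref{connection2.3.lemma2}), $\delta_I = \delta_I^{reg}+\delta_I^{sing}$ with $\delta_I^{reg}\equiv 0$ (Lemma~\ref{connection2.3.lemma}, noting $|\delta_I|\ls k^{-1/2}$ is already ``singular''), and the estimates of Corollary~\ref{TP2.connection.cor} for the $\delta$'s and $\tilde{\beta}_{B_2}$. Declaring $C_\bullet^{reg}$ to be the product of the regular parts and $C_\bullet^{sing}$ the remainder, the stated bounds on $|C_\bullet^{reg}|$ and $|C_\bullet^{sing}|$ follow by multiplying the pointwise bounds, while the bounds on $|\partial_\omega C_\bullet^{reg}|$ and $|\partial_\omega C_\bullet^{sing}|$ follow from Leibniz: each $\partial_\omega$ hitting a regular factor contributes at worst $\LL^3\log\LL$, each hitting a singular factor contributes at worst $k^2\log k$, and the dominant terms give the announced estimates. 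The extra $\exp(-2C_{II}\LL\log\LL)$ suppression of $C_{cos,-}$ follows from the corresponding factor in $\delta_{\alpha,A_2}$ from Corollary~\ref{TP2.connection.cor}; the $\exp(O(\LL))$ growth of $C_{sin,+}$ likewise tracks the bound $|\tilde{\beta}_{B_2}|\ls \exp(O(\LL))$.

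For the Wronskian, I would start from the two expressions displayed immediately before the proposition. The first expression directly gives $D_W(\omega) = \tilde{\gamma}_A(\omega)|1+\delta_{3,+}(\omega)|\tfrac{\tilde\gamma_B}{\tilde\gamma_A}$ as coefficient of $\cos(\kkq)\exp(-2C_{II}\LL\log\LL)\delta_{cos}(\omega)$, while the coefficient of $\sin(\kkq)$ is $\tilde\gamma_A(\omega)|1+\delta_{3,+}(\omega)|$; setting $\delta_{cos}(\omega) = \tilde\gamma_B(\omega)/\tilde\gamma_A(\omega)$ and $D_W(\omega)=\tfrac{1}{2}\tilde\gamma_A(\omega)|1+\delta_{3,+}(\omega)|$ (say), and splitting using Lemma~\ref{energy.lemma} (which gives $\tilde\gamma_{A,B} = \tilde\gamma_{A,B}^{reg}+\tilde\gamma_{A,B}^{sing}$ with the required bounds and the key non-vanishing $|\tilde\gamma_A^{reg}|\approx 1$, $|\tilde\gamma_B^{reg}|\gtrsim 1$ from \eqref{energy.identity2}) and Lemma~\ref{connection2.3.lemma} for $\delta_{3,+}$ produces the claimed regular/singular decomposition of $D_W$ and $\delta_{cos}$. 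For $C_W$, I would read off $C_W(\omega) = \tilde\gamma_A(\omega)|1+\delta_{3,+}(\omega)|[\gamma_B+i\gamma_A]/[2\tilde\gamma_A\exp(C_{II}\LL\log\LL)]$ from the second expression of the Wronskian, and then use Lemma~\ref{energy.lemma} and the identification of $\Gamma(\omega)$ with $(\gamma_B-i\gamma_A)/(\gamma_B+i\gamma_A)$ to verify both $|C_W^{reg}|\approx 1$ and the derivative bound. The non-vanishing of $C_W^{reg}$ and $D_W^{reg}$ is the crucial non-degeneracy, and it is exactly encoded by the energy identity \eqref{energy.identity}, which ensures $|\gamma_B+i\gamma_A|^2\gtrsim \omega$.

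The only conceptual obstacle is ensuring that the ``regular'' and ``singular'' splittings are compatible across all products: a priori a product of a regular and singular term could be placed in either class, and one must choose the assignment so that both the $L^\infty$ bound \emph{and} the derivative bound are simultaneously consistent. I would resolve this by the uniform rule: declare a product regular if and only if all its factors are regular, and singular otherwise; the singular parts are then small in absolute value with controlled derivative blow-up rate $k^2\log k$, since each singular factor already satisfies the two-scale bound $|\cdot|\ls k^{-1/2}$ and $|\partial_\omega\cdot|\ls k^2\log k$, and this two-scale structure is stable under multiplication by $O(1)$ regular factors with $\LL^3\log\LL$-bounded derivatives (because $\LL^3\log\LL\cdot k^{-1/2}\ls k^2\log k$ in the regime $k\gtrsim \LL^{p/2}$ with $p>2$ large). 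The remaining verifications are straightforward arithmetic in these estimates.
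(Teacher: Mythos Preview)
Your proposal is correct and follows the same approach as the paper, which simply notes that the proposition is obtained by combining the previous estimates (Corollary~\ref{TP2.connection.cor}, Proposition~\ref{connection23.prop}, Lemma~\ref{energy.lemma}); you have written out the bookkeeping that the paper leaves implicit. One small attribution slip: the $\exp(-2C_{II}\LL\log\LL)$ suppression in $C_{cos,-}$ is not a property of $\delta_{\alpha,A_2}$ itself (which is $O_A(1)$ by Corollary~\ref{TP2.connection.cor}) but comes from the explicit prefactor in \eqref{alpha3-1}, arising when one pulls out $\exp(+C_{II}\LL\log\LL)$ from $\alpha_{A_3}\alpha_{A_2}$; this does not affect the argument.
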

	
	\begin{proof}
		These estimates  follow from combining all the previous estimates, notably from Corollary~\ref{TP2.connection.cor}, Proposition~\ref{connection23.prop}, Lemma~\ref{energy.lemma}. Note that we have used the trivial inequality $\log^{1/3}(\LL) \ls \log(
		\LL)$.
	\end{proof}

	\subsection{Green's formula and conclusion of the proof of Theorem~\ref{TPsection.mainprop}}\label{resolvent.section}
	
	The computations in the above sections now allow us to derive a precise representation formula that we will be able to exploit in the next stationary phase lemmata arguments. We now conclude with the proof of Theorem~\ref{TPsection.mainprop}.

	\begin{proof}

		By \cite{KGSchw1}, Section 4.3, we have the following representation formula:
		$$u(\omega,s) = \underbrace{\frac{ u_H(\omega,s) \int_s^{+\infty} u_I(s') H(\omega,s') ds'}{W(u_I,u_H)}}_{:=u_1(\omega,s)}+ \underbrace{\frac{ u_I(\omega,s) \int_{-\infty}^{s} u_H(s') H(\omega,s') ds'}{W(u_I,u_H)}}_{:=u_2(\omega,s)}$$
		
		Then, we use the splitting of $u_I$ in terms of $w_{1,\pm}$ invoked in Section~\ref{together.section}
		\begin{align*}
			&	u_1(\omega,s)=  \underbrace{\frac{u_H(\omega,s) \alpha_{3,+}(\omega)}{W(u_H,u_I)} \int_s^{+\infty} w_{1,+}(\omega,s') H(\omega,s') ds' }_{:=u_{1,+}(\omega,s)}+\underbrace{\frac{u_H(\omega,s) \alpha_{3,-}(\omega)}{W(u_H,u_I)} \int_s^{+\infty} w_{1,-}(\omega,s') H(\omega,s') ds'}_{:=u_{1,-}(\omega,s)}\\ & u_2(\omega,s)=  \underbrace{\frac{ w_{1,+}(\omega,s) \alpha_{3,+}(\omega)}{W(u_H,u_I)} \int^s_{-\infty}u_H(\omega,s') H(\omega,s') ds'}_{:=u_{2,+}(\omega,s)}+\underbrace{\frac{ w_{1,-}(\omega,s) \alpha_{3,-}(\omega)}{W(u_H,u_I)} \int^s_{-\infty}u_H(\omega,s') H(\omega,s') ds'}_{:=u_{2,-}(\omega,s)}
		\end{align*}

		Now, for the  Green's formula, noting that from Proposition~\ref{together.prop}, one has the following formula: \begin{equation*}\begin{split}
				&\frac{\alpha_{3,\pm}(\omega)}{W(u_I,u_H)(\omega)}= e^{i\kkq} \left(  \frac{ \exp\left(-[1\pm 1] C_{II} \LL\log(\LL)\right)}{C_W(\omega)}  \frac{ C_{cos,\pm}(\omega) \cos(\kkq) +C_{sin,\pm}(\omega) \sin(\kkq) }{1+ \Gamma(\omega) e^{2i\kkq}}\right),
			\end{split}
		\end{equation*}  from which we obtain \begin{align*}
			& u_{1,+}(\omega,s)\\=  & e^{i\kkq} \left(\exp(-2 C_{II} \LL\log(\LL)) \frac{u_H(\omega,s) }{C_W(\omega,\LL)}\int_s^{+\infty} w_{1,+}(\omega,s') H(s') ds' \frac{ C_{cos,+}(\omega) \cos(\kkq) +C_{sin,+}(\omega) \sin(\kkq) }{1+ \Gamma(\omega) e^{2i\kkq}}\right),
		\end{align*}  \begin{align*} &u_{1,-}(\omega,s)=  e^{i\kkq} \left(  \frac{u_H(\omega,s) }{C_W(\omega,\LL)} \int_s^{+\infty} w_{1,-}(\omega,s') H(s') ds' \frac{ C_{cos,-}(\omega) \cos(\kkq) +C_{sin,-}(\omega) \sin(\kkq) }{1+ \Gamma(\omega) e^{2i\kkq}}\right),\end{align*} 
		\begin{align*}
			&u_{2,+}(\omega,s)\\= & e^{i\kkq} \left(\exp(-2 C_{II} \LL\log(\LL) \frac{w_{1,+}(\omega,s) }{C_W(\omega,\LL)}\int^s_{-\infty} u_H(\omega,s') H(s') ds' \frac{ C_{cos,+}(\omega) \cos(\kkq) +C_{sin,+}(\omega) \sin(\kkq) }{1+ \Gamma(\omega) e^{2i\kkq}}\right),
		\end{align*}
		\begin{align*}
			u_{2,-}(\omega,s)=  e^{i\kkq} \left(\frac{ w_{1,-}(\omega,s)}{ C_W(\omega,\LL)}    \int^s_{-\infty} u_H(\omega,s') H(s') ds' \frac{ C_{cos,-}(\omega) \cos(\kkq) +C_{sin,-}(\omega) \sin(\kkq) }{1+ \Gamma(\omega) e^{2i\kkq}}\right).
		\end{align*}

		Finally, we define $u_{i,j}^{reg}$ evaluating all the regular coefficients (i.e.\ all coefficients except $e^{\pm i\kkq}$) at $\omega=m$ and $\ep_{i,j}(\omega,s):= (1+\Gamma(m) e^{2i\kkq}) (u_{i,j}-u_{i,j}^{reg})$. Then, we write  $\ep_{i,j}(\omega,s)= \ep_{i,j}^{cos}(\omega,s) \cos(\kkq)+\ep_{i,j}^{sin}(\omega,s)\sin(\kkq)$.
		The estimates on $\ep_{i,j}^{cos}, \ep_{i,j}^{sin}$ follow immediately from Proposition~\ref{together.prop} (noting that, for any function $f$, $|f(\omega)- f(m)| \ls k^{-2} \sup| \rd_{\omega} f|$). Note  that we have also used \eqref{C.eq}, introducing $D_I$  so that $-2C_{II}(\omega=m)\LL \log \LL= -4 \LL\log \LL+ D_I \LL$.

	\end{proof}
	
	\paragraph{Proof of Theorem~\ref{quasimode.thm}} The proof of Theorem~\ref{quasimode.thm}  follows from Theorem~\ref{TPsection.mainprop}. For Statement~\ref{blowup}, we start to obtain \eqref{fourier.blow-up.intro} for the equation \eqref{KG.inh.intro} with a source $F$ and zero initial data. For the solution $\phi$ of  \eqref{KG.inh.intro}, and fixed $R>r_+$, it is straightforward to construct a source $F$ such that its corresponding Fourier transform $H_L(\omega=m,r)$ (projected on a fixed spherical harmonic $Y_L$) is supported on $[r_+,R]$ and positive  so that the $v_{1,+}$, $v_{2,+}$ terms in  Proposition~\ref{together.prop} dominate the other terms and give $e^{C \LL}$ point-wise lower bounds, namely \eqref{fourier.blow-up.intro} holds. Now, returning to \eqref{KG.intro} with smooth, compactly supported initial data: a straightforward contradiction argument, based on the Duhamel formula and the fact that \eqref{fourier.blow-up.intro} holds for \eqref{KG.inh.intro}, allows one to also establish \eqref{fourier.blow-up.intro} for \eqref{KG.intro}.

	Similarly, Statement~\ref{failure} can be obtained constructing a source $F$ concentrating around the bad frequencies where $\kkq= N \in \mathbb{N}$ for large $N$; we will omit the details. For Statement~\ref{quasimode.statement}, we mimic the standard quasimode constructions: we first take $\omega_L= m -  \LL^p$ for some large $p>2$. Then, we construct the quasimode sequence $\Phi_L$ to be  a cut-off of $u_I(\omega_L,s)$ in the region $\eta \LL^2 \leq s \leq \eta^{-1}\LL^p$ for a sufficiently small $L$-independent constant $\eta$, namely:  $$\Phi_L(t^{*},s,\theta,\varphi)= e^{-i \omega_L t^{*}}u_I(\omega_L,s) Y_L(\theta,\varphi) [1-\chi(\frac{s}{\eta \LL^2})] \chi(\frac{s}{\eta^{-1} \LL^p})$$  where $\chi$ is a smooth, compactly supported function with $\chi(x)=0$ for $x>2$ and $\chi(x)=1$ for $x<1/2$. It is straightforward to see that $\Phi_L$ solves \eqref{KG.intro} up to $e^{-O(\LL)}$ errors. Subsequently, the logarithmic lower bounds \eqref{log.lower} follow from a standard argument, see  Section 7 of~\cite{quasimodeads}.
	
	\section{Inverse Fourier transform and stationary phase}\label{fourier.section}
	\subsection{Analysis of $\tilde{k}$ and the key Phase function $\Phi$}
	
	\begin{lemma}\label{estimatesfortildek}Suppose that $m - \omega > 0$ and $|m-\omega| \leq \LL^{-p}$ for some large integer $p$. Recall that $\tilde{k}$ is then defined so that
		\[\pi \tilde{k} = \int_{s_{II}}^{s_{III}}\left|V\right|^{1/2}\, ds,\]
		where $s_{II}$ and $s_{III}$ are the two largest zeros of $V$.
		We have the following asymptotics for $\tilde{k}$ as $\omega \to m$:
		\begin{enumerate}
			\item There exists a constant $D\left(\LL\right)$ which satisfies $\left|D(\LL)\right| \lesssim \LL^3$ so that
			\begin{equation}\label{firstexpansionfortildek}
				\tilde{k} = k + D\left(\LL\right) + O\left(\LL^2 k^{-1}\right).
			\end{equation}
			\item For any integer $n \geq 1$, we have we have
			\[\partial^{(n)}_{\omega}\tilde{k} = \partial_{\omega}^{(n)}k + O_n\left(k^{2n-1+\delta(p)}\right),\]
			where $\delta \to 0$ as $p \to \infty$.
		\end{enumerate}
		
	\end{lemma}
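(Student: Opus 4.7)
The plan is to identify $\pi\tilde k = \int_{s_{II}}^{s_{III}}|V|^{1/2}\,ds$ as a small perturbation of the classical Kepler action integral and exploit the explicit Bohr--Sommerfeld formula. First I would change variables from $s$ to $r$ using $ds = (1-2M/r+e^2/r^2)\,dr$ and split
\[
V = V_{\rm Coul} + V_{\rm err}, \qquad V_{\rm Coul}(r) := \frac{(Mm^2)^2}{k^2} - \frac{2Mm^2}{r} + \frac{\LL^2}{r^2}, \qquad V_{\rm err}(r) = O(r^{-2}) + O(\LL^2 r^{-3}).
\]
Non-dimensionalizing by $\rho = Mm^2 r/k^2$, the Coulomb turning points become $\rho_{\pm} = 1 \pm \sqrt{1-\alpha}$ with $\alpha := \LL^2/k^2$, and the classical identity
\[
\int_{r_{-}^{\rm Coul}}^{r_{+}^{\rm Coul}}|V_{\rm Coul}|^{1/2}\,dr \;=\; k\int_{\rho_-}^{\rho_+}\frac{\sqrt{(\rho-\rho_-)(\rho_+-\rho)}}{\rho}\,d\rho \;=\; \pi k(1-\sqrt{\alpha}) \;=\; \pi(k-\LL)
\]
provides the leading contribution.

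For (1) I would bound three discrepancies: (a) the shift between the turning points of $V$ in $s$ and those of $V_{\rm Coul}$ in $r$, which is $O(1)$ near $r_{-}$ and $O(\LL^2/k^2)$ near $r_{+}$ by the ratio $|V_{\rm err}|/|V_{\rm Coul}'|$, contributing negligibly since $|V|^{1/2}$ vanishes at the endpoints; (b) the Jacobian $-2M/r + e^2/r^2$, which by $\int_{\rho_-}^{\rho_+}\sqrt{(\rho-\rho_-)(\rho_+-\rho)}/\rho^2\,d\rho = \pi(\alpha^{-1/2}-1)$ contributes $-2\pi M^2 m^2/\LL + O(k^{-1})$; and (c) the perturbation $V_{\rm err}$, whose leading contribution $\int V_{\rm err}(2|V_{\rm Coul}|^{1/2})^{-1}\,dr$ is computed via $\int_{\rho_-}^{\rho_+}\rho^{-2}[(\rho-\rho_-)(\rho_+-\rho)]^{-1/2}\,d\rho = \pi/\alpha^{3/2}$ to give $O(\LL^{-1})$, with higher-order terms in the Taylor expansion of $|V|^{1/2}$ around $|V_{\rm Coul}|^{1/2}$ controlled by splitting into a turning-point layer (using the local linearization of $V$) and the bulk (where the geometric series in $V_{\rm err}/V_{\rm Coul}$ converges). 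Summing yields $\pi\tilde k = \pi(k-\LL) + O(\LL^{-1}) + O(\LL^2/k)$, which proves (1) with $D(\LL) = -\LL + O(\LL^{-1})$, so in particular $|D(\LL)| \lesssim \LL \lesssim \LL^3$.

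For the $n=1$ case of (2), the identity $\partial_\omega V = -2\omega$ together with the vanishing of $|V|^{1/2}$ at the endpoints kills the Leibniz boundary terms, giving $\pi\partial_\omega\tilde k = \omega\int_{s_{II}}^{s_{III}}|V|^{-1/2}\,ds$. The right-hand side is the classical half-period, and the same Coulomb comparison, now using $\int_{\rho_-}^{\rho_+}\rho/\sqrt{(\rho-\rho_-)(\rho_+-\rho)}\,d\rho = \pi$, yields $\pi k^3/(Mm^2)^2 + O(k)$. Combined with $\partial_\omega k = \omega k^3/(Mm^2)^2$ and the hypothesis $m - \omega \leq \LL^{-p}$ (which forces $k \gtrsim \LL^{p/2}$, hence $\LL^2 \lesssim k^{4/p}$), this gives $\partial_\omega\tilde k - \partial_\omega k = O(k^{1+\delta(p)})$ with $\delta(p) = 4/p$.

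The hard part will be the higher derivatives $n \geq 2$: naively iterating fails because $\partial_\omega|V|^{-1/2} = \omega|V|^{-3/2}$ has non-integrable turning-point singularities, so a second differentiation under the integral is illegal. The standard remedy is a Kepler-type substitution $\rho = 1 - \sqrt{1-\alpha}\cos\psi$ (the eccentric anomaly), under which $|V_{\rm Coul}|^{-1/2}\,dr$ reparametrizes as a smooth $2\pi$-periodic function of $\psi$ whose iterated $\partial_\omega$-derivatives can be passed freely inside the integral. The perturbation $V_{\rm err}$, being uniformly smooth on compact $\rho$-intervals and satisfying $|V_{\rm err}| \ll |V_{\rm Coul}'|$ near both turning points, is then absorbed via an implicit-function-theorem deformation $\psi \mapsto \psi + O(V_{\rm err}/V_{\rm Coul}')$ into a smooth angle adapted to $V$. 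In these coordinates each $\partial_\omega$ produces at most $O(k^2)$ on the leading Coulomb piece while the error retains its relative suppression $O(k^{-2})$, yielding by power counting $\partial_\omega^{(n)}\tilde k - \partial_\omega^{(n)} k = O(k^{2n-1+\delta(p)})$ as claimed.
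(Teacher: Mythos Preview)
Your approach is correct and follows a genuinely different route from the paper. The paper does not use the full Coulomb/Kepler reference potential $V_{\rm Coul}$; instead it takes the simpler $V_0 = (Mm^2)^2/k^2 - 2Mm^2/r$ (dropping the centrifugal term entirely) and partitions the $s$-interval into five pieces $P_0,\ldots,P_5$ (a layer near $s_{II}$, a layer near $s_{III}$, and three bulk pieces), bounding each separately via elementary Taylor expansions and the single explicit integral $\int_0^1\sqrt{x^{-1}-1}\,dx = \pi/2$. Your choice of reference keeps the centrifugal barrier and exploits the exact Bohr--Sommerfeld identity $\int|V_{\rm Coul}|^{1/2}\,dr = \pi(k-\LL)$, which immediately yields the sharper $D(\LL) = -\LL + O(\LL^{-1})$; the paper's decomposition only extracts $|D| \lesssim \LL^2$ from its constants $D_0,D_3,D_4$. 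For part (2), the paper merely writes ``the derivative estimates are obtained by applying $\partial_\omega^{(n)}$ and using a similar decomposition of the integral; we omit the details,'' whereas your eccentric-anomaly regularization is a concrete and cleaner mechanism for passing $\partial_\omega^n$ through without turning-point singularities.

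Two minor slips to fix, neither of which affects the bounds. First, you wrote $ds = (1-2M/r+e^2/r^2)\,dr$, but in fact $ds/dr = (1-2M/r+e^2/r^2)^{-1}$ (the paper's display~\eqref{s.definition} contains the same typo; compare the explicit formula~\eqref{s.def} and the horizon asymptotics~\eqref{C+.def}); this only flips the sign of your $O(\LL^{-1})$ Jacobian correction. Second, the integral $\int_{\rho_-}^{\rho_+}\rho^{-2}[(\rho-\rho_-)(\rho_+-\rho)]^{-1/2}\,d\rho = \pi\alpha^{-3/2}$ you quote actually corresponds to the $\LL^2 r^{-3}$ piece of $V_{\rm err}$; the $r^{-2}$ piece produces $\rho^{-1}$ in the integrand and uses $\int_{\rho_-}^{\rho_+}\rho^{-1}[(\rho-\rho_-)(\rho_+-\rho)]^{-1/2}\,d\rho = \pi\alpha^{-1/2}$ instead. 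Both contributions are $O(\LL^{-1})$, so your conclusion is unaffected.
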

	\begin{proof}It is useful throughout the proof to keep in mind the estimates~\eqref{sII} and~\eqref{sIII}. In fact by applying the implicit function theorem in the variable $x = r \LL^{-2}$, one also obtains that $s_{II}$ is a smooth function of $\omega$ with $\left|\partial_{\omega}^{(n)}s_{II}\right| \lesssim_n \LL^2$. Similarly, after an application of the implicit function theorem in the variable $x = r\left(m^2-\omega^2\right)$, we obtain that $\left(m^2-\omega^2\right)s_{III}$ is smooth and satisfies $\left|\partial_{\omega}^{(n)}\left(\left(m^2-\omega^2\right)s_{III}\right)\right| \lesssim \LL^2.$
		
		Let $A > 0$ be a suitably large constant and $\delta > 0$ be an arbitrarily small constant. We start by decomposing $\pi\tilde{k}$ into three different integrals:
		\[\pi \tilde{k} = \int_{s_{II}}^{As_{II}}\left|V\right|^{1/2}\, ds + \int_{s_{III}-k^{2/3}}^{s_{III}}\left|V\right|^{1/2}\, ds + \int_{As_{II} }^{s_{III}-k^{2/3}}\left|V\right|^{1/2} \doteq P_0 + P_1 + P_2.\]
		
		We start with the estimate of $P_0$. Trivially, we have that
		\[\int_{s_{II}}^{s_{II} + k^{-1}}\left|V\right|^{1/2} \lesssim k^{-1}.\]
		On the other hand, a Taylor expansion of $V|_{\omega =m}$ near $s = s_{II}$ and the monotonicty of $V$ in $[s_{II},As_{II}]$ imply that when $ s \in [s_{II}+k^{-1},As_{II}]$, we have
		\begin{equation}\label{approxwithsqaureroot1}
			\frac{\left|V-V_{\omega=m}\right|}{\left|V_{\omega=m}\right|} \ll 1,\qquad \frac{\left|V - V|_{\omega = m}\right|}{\left|V|_{\omega = m}\right|^{1/2}} \lesssim \LL^2 k^{-3/2} \Rightarrow 
		\end{equation}
		\begin{equation}\label{sothenpoisgoodiguess}
			\int_{s_{II}+k^{-1}}^{As_{II}}\left|V\right|^{1/2}\, ds = \int_{s_{II}+k^{-1}}^{As_{II}}\left|V|_{\omega = m}\right|^{1/2}\, ds + \int_{s_{II}+k^{-1}}^{As_{II}}O\left(\LL^2k^{-3/2}\right)\, ds.
		\end{equation}
		In particular, we then may easily obtain the existence of a constant $D_0\left(\LL\right)$ satisfying $\left|D_0\right| \lesssim_A \LL^2$ so that
		\[P_0\left(\omega,\LL\right) = D_0\left(\LL\right) + O\left(k^{-1}\right).\]

		Now we turn to $P_1$. We have $V\left(s_{III}\right) = 0$. Furthermore, $s \in [s_{III}-k^{2/3},s_{III}]$ implies that $|V'(s_{III})| \sim k^{-4} +O\left(k^{-5}\right)$ and $\left|V''(s)\right| \lesssim k^{-6}$. Thus, $s \in [s_{III}-k^{2/3},s_{III}]$ implies that $|V(s)| \sim k^{-4}\left(s_{III}-s\right)$ and we may easily check that
		\[\left|P_1\left(\omega,\LL\right)\right| \lesssim k^{-1}.\]
		
		In order to investigate $P_2$, we split the potential into $V = V_0 + V_1$ where
		\[V_0 \doteq \frac{\left(Mm^2\right)^2}{k^2} - \frac{2Mm^2}{r},\qquad V_1 \doteq V - V_0 = O\left(\frac{\LL^2}{r^2}\right).\]
		For $s \in [A s_{II},s_{III}-k^{2/3}]$ one may check that
		\[\left|\frac{V_1}{V_0}\right| \lesssim \frac{\LL^2}{r(2Mm^2) - r^2 \frac{(Mm^2)^2}{k^2}} \lesssim A^{-1}.\]
		We now fix a choice of $A$ so that $s \in [A s_{II},s_{III}-k^{2/3}]$ implies
		\begin{equation*}
			\left|\frac{V_1}{V_0}\right| \ll 1.
		\end{equation*}
		In particular, we then have that $s \in [A s_{II},s_{III}-k^{2/3}]$ implies
		\begin{equation}\label{goodesimtatefordifference}
			\left|V\right|^{1/2} = |V_0|^{1/2} + O\left(\frac{\left|V_1\right|}{|V_0|^{1/2}}\right).
		\end{equation}
		We then further split
		\begin{align*}
			P_2 &= \int_{As_{II}}^{s_{III}-k^{2/3}}\left|V_0\right|^{1/2}\, ds +\int_{As_{II}}^{\left(1-\delta\right)s_{III}}\left(\left|V\right|^{1/2}-\left|V_0\right|^{1/2}\right)\, ds +  \int_{\left(1-\delta\right)s_{III}}^{s_{III}-k^{2/3}}\left(\left|V\right|^{1/2}-\left|V_0\right|^{1/2}\right)\, ds 
			\\ \nonumber &\doteq P_3 + P_4+P_5.
		\end{align*}
		
		For $P_3$, we have
		\[P_3 = \int_{r\left(As_{II}\right)}^{r\left(s_{III}-k^{2/3}\right)}\sqrt{\frac{2Mm^2}{r} - \frac{(Mm^2)^2}{k^2}}\left(1-\frac{2M}{r} + \frac{\mathbf{e}^2}{r^2}\right)\, dr.\]
		We then introduce a new variable $x = \frac{Mm^2 r}{2k^2}$ and obtain 
		\begin{equation}\label{forP3changevar}
			P_3 = 2k \int_{r(As_{II})\left(\frac{Mm^2}{2k^2}\right)}^{\left(r(s_{III}-k^{2/3})\right)\left(\frac{Mm^2}{2k^2}\right)}\sqrt{\frac{1}{x} - 1}\left(1 - \frac{M^2m^2}{xk^2} + \frac{\mathbf{e}^2M^2m^4}{4x^2k^4}\right)\, dx.
		\end{equation}
		In view of the fact that
		\[\int_0^1\sqrt{\frac{1}{x}-1}\, dx = \left(x\sqrt{\frac{1}{x}-1} - \arctan\left(\sqrt{\frac{1}{x}-1}\right)\right)\Bigg|_0^1 = \pi/2,\]
		we obtain the existence of a constant $D_3\left(\LL\right)$ with $|D_3| \lesssim 1$ so that
		\[P_3 = \pi k + D_3 + O\left(k^{-1}\right).\]
		For $P_4$ we note in view of~\eqref{goodesimtatefordifference} we have that $s \in [As_{II},\left(1-\delta\right)s_{III}]$ implies
		\[\left|\left|V_0\right|^{1/2} - \left|V\right|^{1/2}\right| \lesssim \LL^2 r^{-3/2},\]
		from which we easily may obtain the existence of a constant $D_4(\LL)$  with $|D_4| \lesssim \LL$ so that 
		\[P_4 = D_4 + O\left(k^{-1}\LL^2\right).\]
		Finally, for $P_5$ we note that $s \in [\left(1-\delta\right)s_{III},s_{III} - k^{2/3}]$ implies that
		\[\left|V_1\right| \lesssim \LL^2 k^{-4},\qquad \left|V_0\right| \sim k^{-4}\left(s_{III}-s\right),\]
		from which we obtain
		\[P_5 = O\left(\LL^2k^{-1}\right).\]
		Adding up all of the estimates for $P_0$ through $P_5$ then establishes~\eqref{firstexpansionfortildek}. 
		
		The derivative estimates are obtained by applying $\partial^{(n)}_{\omega}$ and using a similar decomposition of the integral. We omit the details.

	\end{proof}

	\begin{lemma}\label{foroscest}Let $p \gg 1$ be suitably large. For any $A > 0$ and $t > 0$ define a function $\Phi\left(t,A,\omega\right):(0,\infty) \times (0,\infty) \times (m-\LL^{-p},m) \to \mathbb{R}$ by
		\[\Phi\left(t,A,\omega\right) \doteq \omega - \frac{A}{t} \kkq.\]
		
		Then, either there is no critical point of $\Phi$, or there exists a unique critical $\omega_c\left(t,A\right) \in (m-\LL^{-p},m)$ with
		\[\partial_{\omega}\Phi\left(t,A,\omega_c\left(t,A\right)\right) = 0,\]
		For any $\tilde{\delta} > 0$, we can take $p$ sufficiently large so that we have the following estimate for $\omega_c\left(t,A\right):$
		\[\omega_c(t,A) = m - \left(\frac{t}{A}\right)^{-2/3}\frac{(\pi M)^{2/3}m}{2} + O_{\tilde{\delta}}\left(\left(\frac{t}{A}\right)^{-\frac{4}{3}+\tilde{\delta}/3}\right).\]

		Let $0 < d < \LL^{-p}$. There is no critical point of $\Phi$ in $(m-r,m)$ if and only if
		\[\partial_{\omega}\Phi\left(t,A,m-d\right) \leq 0.\]

		Let $\delta > 0$ be arbitrarily small, and $R$ be an arbitrarily large positive integer. For suitable $D \gg 1$ (depending only on $m$, $M$, $\delta$, and $R$), set \[\mathscr{D} \doteq \left\{ (t,A) \in (1,\infty) \times (1,\infty) : \Phi(t,A,\omega) \text{ has a critical point in }\omega \in (m-\LL^{-p},m)\text{ and }\frac{t}{A} \geq D\right\}.\] Define a function $G\left(t,A\right) : \mathscr{D} \to \mathbb{R}$ by $G\left(t,A\right) \doteq t\Phi\left(t,A,\omega_c\left(t,A\right)\right)$. Then there exists a constant $c \sim -1$ so that for every $0 \leq r \leq R$,
		\begin{align*}
			&\left|\left(\frac{\partial}{\partial A}\right)^{r+1} G - c(-1)^r(1/3)(1/3+1)\cdots(1/3+r)t^{1/3}A^{-1/3-r}\right| \leq
			\\ \nonumber &\qquad \delta (1/3)(1/3+1)\cdots(1/3+r)t^{1/3}A^{-1/3-r}. 
		\end{align*}
		For a suitable constant $\tilde{c} \sim -1$, we also have that
		\[t\Phi\left(t,A,\omega_c\left(t,A\right)\right) = tm + \tilde{c} t^{1/3}A^{2/3} + O_{\LL}(1).\]
		
	\end{lemma}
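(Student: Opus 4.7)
My approach is to reduce the problem to explicit computations with $k(\omega) = Mm^2(m^2-\omega^2)^{-1/2}$ via Lemma~\ref{estimatesfortildek}, together with the estimates on $\phi_3$ from~\eqref{tildek.def}. Writing $\kkq'(\omega) = \pi\tilde{k}'(\omega) - \phi_3'(\omega)$, one has $\kkq'(\omega) = k'(\omega) + O(k^{1+\delta(p)})$, and $k'(\omega) = Mm^2\omega(m^2-\omega^2)^{-3/2}$ is strictly increasing on $(m-\LL^{-p},m)$ with $\lim_{\omega\to m^-} k'(\omega) = +\infty$. Hence $\partial_\omega\Phi = 1 - (A/t)\kkq'(\omega)$ is strictly decreasing in $\omega$ with limit $-\infty$ at $\omega = m^-$, which simultaneously yields the non-existence criterion and uniqueness of $\omega_c$ whenever it exists.

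To locate $\omega_c$, I would set $u = m - \omega_c$ and invert $t = A\kkq'(m-u)$. Using the leading expansion $k'(m-u) = \frac{Mm^{3/2}}{2^{3/2}}u^{-3/2}(1+O(u))$ combined with the error bounds from Lemma~\ref{estimatesfortildek} and~\eqref{tildek.def}, direct inversion gives $u = \frac{(\pi M)^{2/3}m}{2}(A/t)^{2/3}\bigl(1 + O_{\tilde{\delta}}((t/A)^{-2/3+\tilde{\delta}/3})\bigr)$, which is the claimed asymptotic for $m-\omega_c$. The error exponent $\tilde{\delta}/3$ is inherited from $\delta(p)$ and can be made arbitrarily small by enlarging $p$.

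The pivotal observation for $G(t,A) = t\omega_c - A\kkq(\omega_c)$ is that the critical point relation $t = A\kkq'(\omega_c)$ cancels the implicit $\partial_A\omega_c$ contributions:
\[\frac{\partial G}{\partial A} = \bigl(t - A\kkq'(\omega_c)\bigr)\frac{\partial\omega_c}{\partial A} - \kkq(\omega_c) = -\kkq(\omega_c(t,A)).\]
Substituting $\kkq(\omega_c) = \pi k(\omega_c) + O_\LL(1) = c_0\,t^{1/3}A^{-1/3}(1+\text{err})$ for an explicit nonzero constant $c_0 \sim 1$, and integrating in $A$ with $t$ fixed, gives $G(t,A) = tm + \tilde{c}\,t^{1/3}A^{2/3} + O_\LL(1)$ with $\tilde{c} \sim -1$. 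For the higher derivatives I would write $(\partial_A)^{r+1}G = -(\partial_A)^r \kkq(\omega_c(t,A))$ and iterate the chain rule using $\partial_A\omega_c = -\kkq'(\omega_c)/(A\kkq''(\omega_c))$ and the sharp asymptotics $\kkq^{(n)}(\omega_c) \sim c_n u^{-n-1/2}$ supplied by Lemma~\ref{estimatesfortildek}; each application of $\partial_A$ produces one factor of $A^{-1}$ and a rational numerical constant, reproducing the factorial-type coefficient $(1/3)(1/3+1)\cdots(1/3+r)$ in the claimed estimate.

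The main obstacle is propagating the error terms through $r+1$ chain-rule applications without destroying the sharp $A^{-1/3-r}$ decay. Each $\omega$-derivative of $\tilde{k}$ or $\phi_3$ amplifies the remainder by up to a factor of $k^{2+\delta(p)} \sim (t/A)^{2/3+\text{small}}$, and composition with $\omega_c(t,A)$ injects further powers of $k$ via $\partial_A\omega_c \sim A^{-1}(t/A)^{2/3}$. To keep the accumulated error below the tolerance $\delta$ uniformly in $0\leq r \leq R$, I would take $t/A \geq D(R,\delta)$ sufficiently large, and $p$ large enough (depending on $R$) so that $R\,\delta(p) \ll 1/3$; with this quantitative bookkeeping, the remaining work is direct term-by-term differentiation of the leading explicit profile.
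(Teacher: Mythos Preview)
Your proposal is correct and is precisely the detailed expansion that the paper's one-line proof (``This is a straightforward consequence of Lemma~\ref{estimatesfortildek} and~\eqref{tildek.def}'') leaves to the reader. The monotonicity of $\partial_\omega\Phi$ via $k' \sim k^3$, the inversion to locate $\omega_c$, and especially the critical-value identity $\partial_A G = -\kkq(\omega_c(t,A))$ (which kills the implicit differentiation) are exactly the right ingredients; the paper's later Lemma~\ref{precisestatphase} in fact repeats the same $\partial_\omega\Phi$ and $\partial_{\omega\omega}^2\Phi$ computations you outline here. One minor slip: in your final paragraph you write $\partial_A\omega_c \sim A^{-1}(t/A)^{2/3}$, but the correct scaling is $\partial_A\omega_c \sim A^{-1}u \sim A^{-1}(t/A)^{-2/3}$; this does not affect your error-propagation argument, since what matters is that $\kkq'(\omega_c)\cdot\partial_A\omega_c \sim (t/A)\cdot A^{-1}(t/A)^{-2/3} = A^{-1}(t/A)^{1/3}$, which has the claimed $A^{-1}$ gain per differentiation.
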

	\begin{proof}This is a straightforward consequence of Lemma~\ref{estimatesfortildek} and~\eqref{tildek.def}.
	\end{proof}
	\subsection{Stationary Phase Estimates}
	The following estimate is a standard ``brute-force'' oscillatory integral estimate.
	\begin{lemma}\label{bruteforcestatphase}Let $\beta > 0$ and $F(\omega) : (m-\beta,m) \to \mathbb{C}$ be a $C^1$ function. Assume that there exists non-negative constants $B$, $p_0$, and $p_1$ such that for each $\alpha \in (0,\beta)$:
		\[ \left\vert\left\vert F \right\vert\right\vert_{L^{\infty}\left(m-\alpha,m\right)} \leq B \left({\rm min}\left(\alpha,1\right)\right)^{p_0},\qquad \left\vert\left\vert \partial_{\omega}F\right\vert\right\vert_{L^1\left(m-\beta,m-\alpha\right)} \leq B \alpha^{-p_1}.\]
		Then we have the following two estimates:
		\begin{enumerate}
			\item For all $t$ sufficiently large depending on $m$, $p_0$, and $p_1$: 
			\[\left|\int_{\beta}^m e^{-it\omega}F(\omega)\, d\omega\right| \lesssim B t^{-\frac{p_0+1}{p_0+p_1+1}}.\]
			\item For every $d > 0$ and $t$ sufficiently large depending on $m$, $p_0$, and $p_1$:
			\[\left|\int_{\beta}^{m-d}e^{-it\omega}F(\omega)\, d\omega\right| \lesssim B t^{-1}d^{-p_1}.\]
		\end{enumerate}
	\end{lemma}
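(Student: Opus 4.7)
The plan is to use the classical brute-force oscillatory-integral strategy: split the domain at some threshold $\omega = m-\alpha$, control the piece near the endpoint $\omega=m$ by the $L^\infty$ hypothesis, control the piece away from $m$ by a single integration by parts, and then optimize in $\alpha$.

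For part (1), I would first fix an $\alpha \in (0,\beta)$ to be chosen and decompose
\[
\int_{\beta}^{m} e^{-it\omega} F(\omega)\, d\omega \;=\; \underbrace{\int_{m-\alpha}^{m} e^{-it\omega}F(\omega)\, d\omega}_{I_1(\alpha)} \;+\; \underbrace{\int_{\beta}^{m-\alpha} e^{-it\omega} F(\omega)\, d\omega}_{I_2(\alpha)}.
\]
The piece $I_1(\alpha)$ is estimated trivially by the hypothesis on $\|F\|_{L^\infty(m-\alpha,m)}$, yielding $|I_1(\alpha)| \lesssim B\, \alpha^{p_0+1}$. For $I_2(\alpha)$ I would write $e^{-it\omega} = \frac{i}{t}\partial_\omega(e^{-it\omega})$ and integrate by parts once:
\[
I_2(\alpha) \;=\; \frac{i}{t}\Big[e^{-it\omega}F(\omega)\Big]_{\omega=\beta}^{\omega=m-\alpha} \;-\; \frac{i}{t}\int_{\beta}^{m-\alpha} e^{-it\omega}\partial_\omega F(\omega)\, d\omega.
\]
The boundary terms are bounded by $t^{-1}\|F\|_{L^\infty(m-\alpha,m)} \lesssim t^{-1}B\alpha^{p_0}$ (the lower endpoint $\beta$ being controlled by taking $\alpha=\beta$ in the $L^\infty$ hypothesis, up to harmless constants), while the remaining integral is bounded by $t^{-1}\|\partial_\omega F\|_{L^1(\beta,m-\alpha)} \lesssim t^{-1}B\alpha^{-p_1}$. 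Combining, $|I_2(\alpha)| \lesssim B t^{-1}(\alpha^{p_0} + \alpha^{-p_1}) \lesssim Bt^{-1}\alpha^{-p_1}$, where the last step uses that the optimal $\alpha$ will be small.

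The key step is then to optimize by balancing $I_1$ and $I_2$: setting $\alpha^{p_0+1} = t^{-1}\alpha^{-p_1}$ gives $\alpha = t^{-1/(p_0+p_1+1)}$, at which value both pieces are of size $Bt^{-(p_0+1)/(p_0+p_1+1)}$, as claimed. (For $t$ large enough depending on $m, p_0, p_1$, this optimal $\alpha$ lies in $(0,\beta)$, justifying the split.) For part (2) the situation is simpler: one does not optimize but instead just takes the partition at $\alpha = d$ and discards $I_1$ entirely, since the integral only runs up to $m-d$. The single integration by parts on $[\beta,m-d]$ gives directly $\lesssim Bt^{-1}(d^{p_0} + d^{-p_1}) \lesssim Bt^{-1}d^{-p_1}$.

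The main obstacle is essentially bookkeeping rather than conceptual: one must make sure that the lower endpoint contribution $e^{-it\beta}F(\beta)/t$ from the integration by parts is absorbed correctly (since the $L^\infty$ hypothesis only yields $\|F\|_{L^\infty(m-\beta,m)} \lesssim B(\min(\beta,1))^{p_0}$, which is tame, the contribution is harmless provided we verify the final bound is at least as large as $Bt^{-1}$; this is automatic since $(p_0+1)/(p_0+p_1+1)\leq 1$). No stationary phase in the usual sense appears here --- the phase $e^{-it\omega}$ has no critical point on the real axis --- so the entire argument is elementary once the split and integration by parts are set up correctly.
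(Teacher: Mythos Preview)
Your proposal is correct and follows essentially the same approach as the paper: split at $\omega = m-\alpha$, bound the near-$m$ piece trivially using the $L^\infty$ hypothesis, integrate by parts once on the far piece to gain $t^{-1}$ at the cost of the $L^1$ bound on $\partial_\omega F$, and optimize $\alpha = t^{-1/(p_0+p_1+1)}$. The paper's proof is terser (it records only the two resulting inequalities and the optimal choice of $\alpha$), but the underlying argument is identical, including the handling of the lower-endpoint boundary term as an $O(Bt^{-1})$ contribution absorbed by the main bound.
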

	\begin{proof}
		For any $\alpha > 0$ we have the following straightforward estimates:
		\[\left|\int_{m-\alpha}^me^{-it\omega} F(\omega)\, dw\right| \leq B \alpha^{p_0+1},\qquad \left|\int_{m-\beta}^{m-\alpha}e^{-it\omega} F(\omega)\, d\omega\right| \leq \frac{2B}{t} + B t^{-1}\alpha^{-p_1}.\]
		One then writes
		\[\left|\int_{m-\beta}^me^{-it\omega}F(\omega)\, d\omega\right| \leq\left|\int_{m-\alpha}^me^{-it\omega}F(\omega)\, d\omega\right|+\left|\int_{m-\beta}^{m-\alpha}e^{-it\omega}F(\omega)\, d\omega\right|,\]
		for $\alpha = t^{-\frac{1}{p_1+p_0+1}}$.
	\end{proof}

	Next we recall some consequences of the Van de Corput lemma.
	\begin{lemma}\label{vancorput}Let $\phi(x) : (a,b) \to \mathbb{R}$ be smooth and suppose that there exists $k \in \mathbb{Z}_{\geq 1}$ and $\lambda > 0$ so that
		\begin{enumerate} 
			\item If $k = 1$, then $\phi'(x)$ is monotonic and ${\rm inf}_{(a,b)}\left|\phi'(x)\right| \geq \lambda$.
			\item If $k > 1$, then ${\rm inf}_{(a,b)}\left|\phi^{(k)}(x)\right| \geq \lambda$.
		\end{enumerate}
		
		Then, for every $C^1$ function $\psi(x) : (a,b) \to \mathbb{C}$, we have
		\[\left|\int_a^be^{i\phi(x)}\psi(x)\, dx\right| \lesssim \lambda^{-1/k}\left(\sup_{(a,b)}\left|\psi(x)\right| + \int_a^b\left|\psi'(x)\right|\, dx\right).\]
	\end{lemma}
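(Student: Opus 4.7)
The plan is to prove this as the classical Van der Corput lemma, proceeding by induction on $k$ for the special amplitude $\psi\equiv 1$, and then upgrading to general $C^1$ amplitudes via a Fubini-type telescoping identity. Denote by $C_k$ the sharp constant for which $\bigl|\int_a^b e^{i\phi(x)}dx\bigr|\leq C_k\lambda^{-1/k}$ under the hypotheses of the lemma.

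For the base case $k=1$, the approach is integration by parts:
\[\int_a^b e^{i\phi(x)}dx = \int_a^b \frac{1}{i\phi'(x)}\frac{d}{dx}\bigl(e^{i\phi(x)}\bigr)dx = \left[\frac{e^{i\phi(x)}}{i\phi'(x)}\right]_a^b + \int_a^b e^{i\phi(x)}\frac{d}{dx}\left(\frac{1}{i\phi'(x)}\right)dx.\]
The boundary term is bounded by $2/\lambda$. Since $\phi'$ is monotonic, $1/\phi'$ is also monotonic, so the total variation of $1/\phi'$ on $(a,b)$ equals $|1/\phi'(b)-1/\phi'(a)|\leq 2/\lambda$, which controls the remaining integral. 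This gives $C_1\leq 4$.

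For the inductive step $k-1\mapsto k$ with $\psi\equiv 1$, fix $\delta>0$ to be chosen later. Since $|\phi^{(k)}|\geq\lambda$, the function $\phi^{(k-1)}$ is monotonic, hence the set $E_\delta=\{x\in(a,b):|\phi^{(k-1)}(x)|<\delta\}$ is an interval of length at most $2\delta/\lambda$. Its complement in $(a,b)$ consists of at most two subintervals on each of which the inductive hypothesis applies with parameter $\delta$ in place of $\lambda$ (for the $(k-1)$-th derivative), giving a contribution of at most $2C_{k-1}\delta^{-1/(k-1)}$. Summing with the trivial bound $2\delta/\lambda$ on $E_\delta$ and optimizing with $\delta=\lambda^{1/k}$ yields the desired bound with $C_k\leq 2C_{k-1}+2$.

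To pass from $\psi\equiv 1$ to general $C^1$ amplitude $\psi$, write $\psi(x)=\psi(b)-\int_x^b\psi'(y)\,dy$ and apply Fubini:
\[\int_a^b e^{i\phi(x)}\psi(x)\,dx = \psi(b)\int_a^b e^{i\phi(x)}\,dx - \int_a^b\psi'(y)\left(\int_a^y e^{i\phi(x)}\,dx\right)dy.\]
Since the hypotheses on $\phi$ are clearly preserved under restriction to the subintervals $(a,y)\subset(a,b)$ (monotonicity of $\phi'$ for $k=1$, lower bounds on $|\phi^{(k)}|$ for $k\geq 2$), applying the $\psi\equiv 1$ estimate to each inner integral yields
\[\left|\int_a^b e^{i\phi(x)}\psi(x)\,dx\right|\leq C_k\lambda^{-1/k}\left(|\psi(b)|+\int_a^b|\psi'(y)|\,dy\right),\]
which is stronger than the stated conclusion. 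There is no genuine obstacle here; the only point to be mindful of is the preservation of monotonicity of $\phi'$ under the restriction to subintervals in the $k=1$ case, which is automatic, and the fact that the inductive constants $C_k$ remain finite (they satisfy $C_k\leq 2^{k+1}$).
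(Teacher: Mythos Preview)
Your proof is the standard Van der Corput argument, and the paper does not give its own proof at all: it simply cites Chapter~VIII of Stein's \emph{Harmonic Analysis}, which contains precisely the argument you have written. So there is nothing substantive to compare.

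One small slip: in the inductive step the optimizing choice is $\delta=\lambda^{(k-1)/k}$, not $\delta=\lambda^{1/k}$. With $\delta=\lambda^{(k-1)/k}$ one gets $\delta^{-1/(k-1)}=\lambda^{-1/k}$ and $\delta/\lambda=\lambda^{-1/k}$, which balance to give $C_k\le 2C_{k-1}+2$ as you claim; with your stated $\delta$ the two terms do not balance. This is purely a typo and does not affect the structure of the argument.
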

	\begin{proof}See Chapter VIII of~\cite{BigStein}.
	\end{proof}

	This lemma is a variant of Lemma 6.6 from~\cite{KGSchw1}
	\begin{lemma}\label{precisestatphase}Let $A \gtrsim 1$, $t \gtrsim 1$, and $0 < d \ll 1$. Then choose any small constant $\delta > 0$ and any large constant $D > 0$. 
		
		We define the phase function
		\begin{equation}\label{thisisaphasefunction}		
			\Phi(t,A,\omega) \doteq \omega - \frac{A}{t}\kkq.
		\end{equation}
		
		Let $p \gg 1$ be suitably large. Suppose that $\frac{t}{A}$ are such that there exists a critical point $\omega_c(t,A) \in (m-\LL^{-p},m)$ where
		\[\partial_{\omega}\Phi(t,A,\omega_c(t,A)) = 0.\]
		We have the following 
		\[\omega_c(t,A) = m - \left(\frac{t}{A}\right)^{-2/3}\frac{(\pi M)^{2/3}m}{2} + O_{\delta}\left(\left(\frac{t}{A}\right)^{-\frac{4}{3}+\delta/3}\right),\]
		\[\partial_{\omega\omega}^2\Phi\left(t,A,\omega_c(t,A)\right) = -\left(\frac{t}{A}\right)^{2/3}3\cdot 2^{3/2}\frac{m^{3/2}}{\pi^{5/3}M^{2/3}} + O_{\delta}\left(\left(\frac{t}{A}\right)^{\delta/3}\right).\]
		
		Define
		$I \doteq \left[\omega_c - \left(\frac{t}{A}\right)^{-2/3 -\delta},\omega_c + \left(\frac{t}{A}\right)^{-2/3-\delta}\right]$ and let $0 < d \leq \LL^{-p}$. Then we have the following:
		\begin{enumerate}
			\item Suppose that  $\frac{t}{A}$ is sufficiently large, depending only on $m$ and $M$, and $I \subset (m-d,m)$. Then there exists a constant $c$ depending only on $m$ and $M$ so that  
			\begin{align*}			
				&  \int_{m-d}^me^{-it\omega} e^{iA \kkq}\, d\omega  
				\\ \nonumber &\qquad = c e^{-it\Phi\left(t,A,\omega_c\right)}t^{-1/2}\left(\partial_{\omega\omega}^2\Phi\left(t,A,\omega_c(t,A)\right)\right)^{-1/2}+O_{\delta,D}\left(t^{-1}\left(\frac{t}{A}\right)^{\delta}\right) + O_{D,\delta}\left(\left(\frac{t}{A}\right)^{-D}\right).
			\end{align*}
			
			\item Suppose that  $\partial_{\omega}\Phi\left(t,A,m-d\right) < 0$ (with no assumption made on the largeness of $\frac{t}{A}$). Then
			\begin{equation}\label{Iisnotin2}
				\left|\int_{m-d}^me^{-it\omega} e^{iA \kkq}\, d\omega\right| \lesssim \left|t\partial_{\omega}\Phi\left(t,A,m-d\right)\right|^{-1}. 
			\end{equation}
			
		\end{enumerate}
		
	\end{lemma}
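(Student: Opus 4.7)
\textbf{Proof proposal for Lemma~\ref{precisestatphase}.} The plan is to reduce the problem to a standard stationary phase analysis by first extracting the leading behavior of $\pi\tilde{k}$ from Lemma~\ref{estimatesfortildek} and then combining Gaussian asymptotics near the critical point with Van der Corput bounds away from it.

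First I would analyze the critical point. By Lemma~\ref{estimatesfortildek}, $\tilde{k}(\omega) = k(\omega) + D(\LL) + O(\LL^2 k^{-1})$ where $k = Mm^2(m^2-\omega^2)^{-1/2}$, with $n$-th derivative corrections of order $k^{2n-1+\delta(p)}$; combined with the $\phi_3$ bounds in~\eqref{tildek.def}, this reduces the critical point equation $\partial_\omega\Phi = 0$ to the leading-order equation $\pi A \partial_\omega k = t$. Using the explicit formula $\partial_\omega k = Mm^2\omega(m^2-\omega^2)^{-3/2}$, one solves $m^2-\omega_c^2 = \bigl(\pi A Mm^2\omega_c/t\bigr)^{2/3}$. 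Expanding around $\omega_c \approx m$ and correcting via the implicit function theorem (picking up the $O_\delta\bigl((t/A)^{-4/3+\delta/3}\bigr)$ error from the $p$-dependent corrections) gives the stated asymptotics for $\omega_c$. Differentiating once more, $\partial^2_\omega k = Mm^2(m^2-\omega^2)^{-3/2}\bigl[1 + 3\omega^2(m^2-\omega^2)^{-1}\bigr] \sim 3Mm^4(m^2-\omega^2)^{-5/2}$ near $\omega = m$; substituting the value of $m^2-\omega_c^2$ then yields $\partial^2_{\omega\omega}\Phi(\omega_c) \sim -(t/A)^{2/3}$ at the claimed rate, with the corrections from $\tilde{k}-k$ and $\phi_3$ collected into the $O_\delta((t/A)^{\delta/3})$ error. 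Higher derivatives of $\Phi$ at $\omega_c$ scale like $(t/A)^{2(n-1)/3}$, a bound I will use below.

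For claim (1), I split $\int_{m-d}^m e^{-it\Phi}d\omega = \int_I e^{-it\Phi}d\omega + \int_{(m-d,m)\setminus I} e^{-it\Phi}d\omega$. On $I = [\omega_c - (t/A)^{-2/3-\delta}, \omega_c + (t/A)^{-2/3-\delta}]$, I Taylor-expand $\Phi$ around $\omega_c$ to second order. The remainder in the phase is bounded by $|\omega-\omega_c|^3\sup|\Phi'''| \lesssim (t/A)^{-2-3\delta}(t/A)^{4/3} = (t/A)^{-2/3-3\delta}$, so the cubic correction in the exponent contributes at most $t(t/A)^{-2/3-3\delta} = t^{1/3}A^{2/3}(t/A)^{-3\delta}$, which can be factored out of the Gaussian and bounded provided $D$ is chosen so that this quantity is under control relative to the main term. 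After the quadratic approximation, a change of variables $\sigma = (\omega-\omega_c)\sqrt{t|\Phi''(\omega_c)|}$ converts the inner integral to the standard Fresnel form, producing the claimed main term $c\,e^{-it\Phi(\omega_c)}t^{-1/2}\Phi''(\omega_c)^{-1/2}$ with $c = \sqrt{2\pi/i}$; the Gaussian tails outside $I$ are $O((t/A)^{-D'})$ for arbitrarily large $D'$, feeding the last error term. On the complement $(m-d,m)\setminus I$, monotonicity of $\partial_\omega\Phi$ (from $\Phi'' < 0$) gives $|\partial_\omega\Phi| \gtrsim |\Phi''(\omega_c)|\cdot(t/A)^{-2/3-\delta} \gtrsim (t/A)^{-\delta}$, and Lemma~\ref{vancorput} with $k=1$ (after splitting the complement into the two half-intervals on which $\partial_\omega\Phi$ is monotonic) produces the $O_{\delta,D}(t^{-1}(t/A)^\delta)$ term.

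For claim (2), I observe that $\partial^2_{\omega\omega}\Phi < 0$ forces $\partial_\omega\Phi$ to be strictly decreasing on $(m-\LL^{-p},m)$; if $\partial_\omega\Phi(t,A,m-d) < 0$, then $\partial_\omega\Phi(\omega) \leq \partial_\omega\Phi(t,A,m-d) < 0$ for every $\omega \in (m-d,m)$, so $|\partial_\omega\Phi(\omega)| \geq |\partial_\omega\Phi(t,A,m-d)|$ throughout. A single application of Lemma~\ref{vancorput} with $k=1$ and $\lambda = |\partial_\omega\Phi(t,A,m-d)|$ then gives~\eqref{Iisnotin2}. The main obstacle will be in claim (1): tracking the corrections $\tilde{k}-k$, $\phi_3$, and the cubic remainder through the phase with enough precision that the final error is genuinely $O_{\delta,D}(t^{-1}(t/A)^\delta) + O_{D,\delta}((t/A)^{-D})$, rather than just polynomial in $(t/A)$. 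This forces a concrete choice of how large $p$ must be in terms of $\delta$ and $D$, and a careful bookkeeping that the $\LL^2 k^{-1}$-type errors in $\tilde{k}$ (and their derivatives) remain subcritical compared to the Gaussian width $t^{-1/2}(t/A)^{-1/3}$ throughout $I$; the mechanism is entirely analogous to Lemma 6.6 of~\cite{KGSchw1}, and I would import its bookkeeping structure directly.
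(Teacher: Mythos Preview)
Your treatment of claim (2) and the overall decomposition for claim (1) (split into $I$ and its complement, Van der Corput on the complement via $|\partial_\omega\Phi|\gtrsim (t/A)^{-\delta}$) are correct and match the paper.

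The genuine gap is in your handling of $\int_I e^{-it\Phi}\,d\omega$. You propose to Taylor expand $\Phi$ to second order and bound the cubic remainder in the phase. But your own computation shows this remainder, multiplied by $t$, is of order $t(t/A)^{-2/3-3\delta}=t^{1/3}A^{2/3}(t/A)^{-3\delta}$. This is \emph{not} small: it is comparable to the oscillatory part $\tilde c\,t^{1/3}A^{2/3}$ of $t\Phi(\omega_c)$ itself (cf.\ Lemma~\ref{foroscest}), so $e^{it\cdot(\text{cubic remainder})}$ is a genuine oscillation that cannot be ``factored out and bounded.'' No choice of $p$ or $D$ fixes this, since $p$ only governs the $\omega$-range and does not shrink the cubic phase correction.

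The paper (following Lemma~6.6 of~\cite{KGSchw1}, which you cite but whose mechanism you misdescribe) avoids this by a Morse-lemma change of variables $x=\sqrt{(\Phi-\Phi(\omega_c))/\Phi''(\omega_c)}$, which makes the phase \emph{exactly} $tx^2\Phi''(\omega_c)$ and pushes all higher-order corrections into the amplitude $\tilde F(t,x)$. One then writes $\tilde F=\tilde F(0)+(\tilde F-\tilde F(0))$: the first piece gives the Fresnel main term plus $O(t^{-1}(t/A)^\delta)$; the second is Taylor expanded to order $N-2$ with $N\delta\gg D$, each term $Q_n$ is handled via the further substitution $u=tx^2\Phi''$ and yields $|Q_n|\lesssim t^{-1}$, and the order-$N$ remainder produces the $O((t/A)^{-D})$ error. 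This amplitude-expansion step, not a phase-remainder bound, is what makes the argument close.
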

	\begin{proof}We follow closely the proof of Lemma 6.6 from~\cite{KGSchw1}.
		
		We define the phase function
		\[\Phi \doteq \omega - \frac{A}{t}\kkq,\]
		so that we have
		\[\int_{m-d}^me^{-it\omega} e^{iA \kkq}\, d\omega = \int_{m-d}^me^{-it\Phi} \, d\omega.\]
		
		We have that
		\[\partial_{\omega}\Phi = 1 - \frac{A}{t}\left(\frac{Mm^2\omega}{\left(m^2-\omega^2\right)^{3/2}} + O\left(\left(m-\omega\right)^{-1/2-\delta/2}\right)\right),\]
		and also that
		\[\partial_{\omega\omega}^2\Phi = -\frac{A}{t}\left(\frac{3Mm^2\omega^2}{(m^2-\omega^2)^{5/2}} +O\left(\left(m-\omega\right)^{-3/2-\delta/2}\right)\right).\]
		Since $\partial_{\omega\omega}^2\Phi < 0$, we have that $\partial_{\omega}\Phi$ is monotonically decreasing. Thus~\eqref{Iisnotin2} follows from Lemma~\ref{vancorput}. From now on we assume that $\frac{t}{A}$ is sufficiently large, and consider the situation  when $I \subset (m-d,m)$. 
		
		We have that
		\[\partial_{\omega\omega}^2\Phi\left(\omega_c\right) = -\left(\frac{t}{A}\right)^{2/3}\underbrace{3 \cdot 2^{3/2} \frac{m^{3/2}}{\pi^{5/3}M^{2/3}}}_{\doteq c_0} + O\left(\left(\frac{t}{A}\right)^{\delta/3}\right).\]
		We also have that
		\[\sup_{\omega \in I}\left|\left(\frac{\partial}{\partial \omega}\right)^n\Phi\right| \lesssim \left(\frac{t}{A}\right)^{\frac{2(n-1)}{3}}.\]

		We note also that 
		\[\omega \in (m-d,m) \Rightarrow \partial_{\omega\omega}^2\Phi < 0.\]

		Then we have that
		\[\omega \in \left((m-d,m)\setminus I\right) \Rightarrow \left|\partial_{\omega}\Phi\right| \gtrsim \left(\frac{t}{A}\right)^{-\delta}.\]
		Thus Lemma~\ref{vancorput} implies that 
		\[\left|\int_{[m-d,m] \cap I^c}e^{-it\omega} e^{iA \kkq}\, d\omega \right| \lesssim t^{-1}\left(\frac{t}{A}\right)^{-\delta}.\]
		
		We now focus on the integral over $I$. Following the strategy from  Lemma 6.6 from~\cite{KGSchw1}, we choose $N$ so that $N\delta \gg M$ and observe that a Taylor expansion yields
		\[\Phi\left(t,A,\omega\right) = \Phi\left(t,A,\omega_c\right) + \frac{1}{2}\partial_{\omega\omega}^2\Phi\left(t,A,\omega_c\right)\left(\omega-\omega_c\right)^2 + \sum_{n=3}^N a_n(t)\left(\omega-\omega_c\right)^n + e_N\left(\omega,t\right),\]
		where
		\[\left|a_n\right| \lesssim \left(\frac{t}{A}\right)^{\frac{2(n-1)}{3}},\qquad \left|e_N\right| \lesssim \left(\frac{t}{A}\right)^{\frac{2N}{3}}\left(\omega-\omega_c\right)^{N+1}.\]
		We then introduce a new variable $x$ by
		\[x \doteq \sqrt{\frac{\Phi(t,A,\omega) - \Phi\left(t,A,\omega_c\right)}{\partial^2_{\omega\omega}\Phi\left(t,A,\omega_c\right)}}.\]
		Letting $x_{\pm} \doteq x\left(\omega_c \pm \left(\frac{t}{A}\right)^{-2/3-\delta}\right)$, we obtain that
		\begin{align*}
			&\int_I e^{-it\Phi}\, d\omega = 2e^{-it \Phi\left(t,A,\omega_c\right)}\int_{x_-}^{x_+}e^{-itx^2\partial_{\omega\omega}^2\Phi\left(t,A,\omega_c\right)}\underbrace{\frac{\left|\Phi\left(t,A,\omega\right) - \Phi\left(t,A,\omega_c\right)\right|^{1/2}\partial_{\omega\omega}^2\Phi\left(t,A,\omega_c\right)}{\partial_{\omega}\Phi\left(t,A,\omega\right)\left|\partial_{\omega\omega}^2\Phi\left(t,A,\omega_c\right)\right|^{1/2}}}_{\doteq \tilde{F}(t,x)}\, dx
			\\ \nonumber &=2\tilde{F}\left(t,0\right)e^{-it\Phi\left(t,A,\omega_c\right)}\int_{x_-}^{x_+}e^{-itx^2\partial_{\omega\omega}^2\Phi\left(t,A,\omega_c\right)}\, dx 
			\\ \nonumber &\qquad \qquad + 2e^{-it\Phi\left(t,A,\omega_c\right)}\int_{x_-}^{x_+}e^{-itx^2\partial_{\omega\omega}^2\Phi\left(t,A,\omega_c\right)}\left(\tilde{F}\left(t,x\right) - \tilde{F}\left(t,0\right)\right)\, dx
			\\ \nonumber &\doteq J_1 + J_2.
		\end{align*}
		
		We start with $J_1$. Noting that $x_{\pm} \sim \pm \left(\frac{t}{A}\right)^{-2/3-\delta}$, we introduce a variable $y = x \left(\frac{t}{A}\right)^{2/3+\delta}$, set \\$y_{\pm} \doteq \left(\frac{t}{A}\right)^{2/3+\delta}x_{\pm}$, argue as in Chapter VIII of~\cite{BigStein}, and obtain that
		\begin{align*}
			J_1 &= 2\tilde{F}\left(t,0\right) e^{-it \Phi\left(t,A,\omega_c\right)}\left(\frac{t}{A}\right)^{-2/3-\delta}\int_{y_-}^{y_+}e^{-it \left(\frac{t}{A}\right)^{-4/3-2\delta}y^2\partial_{\omega\omega}^2\Phi\left(t,A,\omega_c\right)}\, dy
			\\ \nonumber &= c e^{-it\Phi\left(t,A,\omega_c\right)}t^{-1/2}\left(\partial_{\omega\omega}^2\Phi\left(t,A,\omega_c\right)\right)^{-1/2} + O\left(t^{-1}\left(\frac{t}{A}\right)^{\delta}\right),
		\end{align*}
		for a constant $c$ depending only on $m$ and $M$.
		
		For $J_2$ (following still the proof of Lemma 6.6 from~\cite{KGSchw1}) we first note that
		\[\tilde{F}\left(t,x\right) - \tilde{F}\left(t,0\right) = \sum_{n=1}^{N-2}\tilde{a}_n(t)\left(\omega(x)-\omega_c\right)^n + \tilde{e}_{N-2}\left(t,\omega(x)\right),\]
		where
		\[\left|\tilde{a}_n\right| \lesssim_n \left(\frac{t}{A}\right)^{\frac{2n}{3}},\qquad \left|\tilde{e}_{N-2}\right| \lesssim_N \left(\frac{t}{A}\right)^{\frac{2(N-1)}{3}}\left(\omega(x)-\omega_c\right)^{N-1}.\]
		Thus,
		\[J_2 \lesssim_N \left(\sum_{n=1}^{N-1} Q_n\right) + K,\]
		where
		\[Q_n \doteq \left(\frac{t}{A}\right)^{\frac{2n}{3}}\left|\int_{x_-}^{x_+}e^{-itx^2 \partial_{\omega\omega}^2\Phi\left(t,A,\omega_c\right)}\left(\omega(x)-\omega_c\right)^n\, dx\right|,\qquad K \doteq \left|\int_{x_-}^{x_+}e^{-it x^2\partial^2_{\omega\omega}\Phi\left(t,A,\omega_c\right)}\tilde{e}_{N-2}\, dx\right|.\]
		Just as in the proof of Lemma 6.6 from~\cite{KGSchw1} we then introduce a new variable $u = tx^2\partial_{\omega\omega}^2\Phi\left(t,\omega_c\right)$, observe that $u(x_{\pm}) \sim \pm t\left(\frac{t}{A}\right)^{-2/3-2\delta}$, and then obtain
		\begin{align*}
			&\left|Q_n\right| \lesssim 
			\\ \nonumber &t^{-1/2 - n/2}\left(\frac{t}{A}\right)^{-1/3 + n/3}\left[\left|\int_{u(x_-)}^0e^{-it u}\frac{(\omega(x)-\omega_c)^n}{x^n}u^{\frac{n-1}{2}}\, du\right| + \left|\int_0^{u(x_+)}e^{-it u}\frac{(\omega(x)-\omega_c)^n}{x^n}u^{\frac{n-1}{2}}\, du\right|\right] \\ \nonumber &\lesssim \left(t^{-1/2 - n/2}\left(\frac{t}{A}\right)^{-1/3 + n/3}\right)\left(t^{\frac{n-1}{2}}\left(\frac{t}{A}\right)^{\left(-1/3-\delta\right)(n-1)}\right) \leq t^{-1}.
		\end{align*}

		Finally, for $K$ we have, using again the change of variables $u = tx^2\partial_{\omega\omega}^2\Phi$ and that $u(x_{\pm}) \sim \pm t\left(\frac{t}{A}\right)^{-2/3-2\delta}$:
		\begin{align*}
			\left|K\right| &\lesssim t^{-1/2-(N-1)/2}\left(\frac{t}{A}\right)^{-1/3+(N-1)/3}\int_{u(x_-)}^{u(x_+)}\left|u\right|^{\frac{N-2}{2}}\, du 
			\\ \nonumber &\lesssim  t^{-1/2-(N-1)/2}\left(\frac{t}{A}\right)^{-1/3+(N-1)/3}\left(t^{N/2}\left(\frac{t}{A}\right)^{(-1/3-\delta)N}\right)
			\\ \nonumber &\lesssim \left(\frac{t}{A}\right)^{-2/3 -\delta N}.
		\end{align*}
		The proof of the lemma follows taking $N\gg 1$ large enough relatively to $\delta$.
	\end{proof}
	
	\subsection{Estimates for the Fourier transform of $u$}
	The goal of this section is to prove the following proposition.
	\begin{prop}\label{estfourtransofumainprop}Let $S_0$ be a large positive constant and $p \geq 1$ be a sufficiently large constant. Suppose that $H\left(\omega,L,s\right) : \left(\mathbb{R}_{>0}\setminus\{m\}\right)\times \mathbb{Z}_{\geq 0} \times (-\infty,\infty) \to \mathbb{C}$ vanishes for $s \geq S_0$ and satisfies 
		\[\mathscr{D} \doteq \sup_{(\omega,L) \in \left(\mathbb{R}_{>0}\setminus \{m\} \right)\times \mathbb{Z}_{\geq 0}} \left(1+\omega^{10}\right)\left(1+L^{10+p}\right)\int_{-\infty}^{S_0}\left[\left|\partial_{\omega}\left(e^{i\omega s}H\right)\right|^2 +\left|\partial_sH\right|^2+ \left|H\right|^2\right]\left(r-r_+\right)^{-1}\, ds < \infty.\]
		Then suppose that $u\left(\omega,L,s\right) :  \left(\mathbb{R}_{>0}\setminus\{m\}\right) \times \mathbb{Z}_{\geq 0} \times (-\infty,\infty) \to \mathbb{C}$ solves~\eqref{eq:mainVrhs}, for $\omega > m$ satisfies the outgoing boundary conditions~\eqref{outgoingboundary}, and for $\omega < m$ satisfies the outgoing boundary conditions~\eqref{outgoingboundary2}. Then, for any $t \geq 1$, $R > r_+$, and $\check{\delta} > 0$, we have that 
		\begin{align*}		
			&\sup_{r \in [r_+,R]}\left|\int_0^{\infty}e^{-it^*\omega} \left(e^{i\omega p(s)}u\right)\, d\omega\right| \lesssim_{R,\check{\delta}} 
			\\ \nonumber &\qquad \left(1+L\right)^{-4}\mathscr{D} (t^*)^{-5/6+\check{\delta}}+1|_{\left\{t \geq e^{\check{\delta}\LL \log\LL}\right\}}\left(1+\LL\right)^{-4}\mathscr{D} \exp\left(-4\LL\log\LL + D_I\LL\right)(t^*)^{-5/6}\times 
			\\ \nonumber &\qquad \qquad \qquad \qquad\qquad \left|\sum_{q=1}^{\lceil (t^*)^{1/2}\rceil}e^{-it\Phi\left(t^*,2q,\omega_c\left(t^*,2q\right)\right)}\left((t^*)^{-2/3}\partial^2_{\omega\omega}\Phi\left(t^*,2q,\omega_c\left(t^*,2q\right)\right)\right)^{-1/2}\left(-\Gamma(m)\right)^q\right|,
		\end{align*}
		where $\Phi$ is the phase function from Lemma~\ref{precisestatphase}, $\Gamma(m)$ is as in Theorem~\ref{TPsection.mainprop}, and $1|_{\left\{t \geq e^{\check{\delta}\LL \log\LL}\right\}}$ is the indicator function for the set $\left\{t \geq e^{\check{\delta}\LL \log\LL}\right\}$. We also recall that $p(s)=t^{*}(s)-t$.
	\end{prop}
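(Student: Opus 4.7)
The plan is to split the $\omega$-integration into three frequency windows, exactly the regimes analysed in Sections~\ref{regimeA.section}, \ref{regimeB1.section}, \ref{regimeB2.section}, and to convert the pointwise-in-$\omega$ estimates proved there into time decay via the oscillatory-integral estimates of Lemmas~\ref{bruteforcestatphase} and~\ref{precisestatphase}. The exponential sum on the right-hand side of the proposition will emerge only from the critical regime $0 < m-\omega \leq \LL^{-p}$ handled by Theorem~\ref{TPsection.mainprop}; all other contributions will be absorbed by the $(t^*)^{-5/6+\check\delta}$ term. The factor $(1+L)^{-4}$ is obtained by trading off against the $(1+L^{10+p})$-normalisation built into $\mathscr{D}$.

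\textbf{The easier regimes.} For $\omega > m$ I feed the bounds on $e^{i\omega s}u$ and $\partial_\omega(e^{i\omega s}u)$ from Proposition~\ref{highfreqest} into Lemma~\ref{bruteforcestatphase}. The singular factor $(\omega^2-m^2)^{-1/2-5\delta}$ is integrable near $\omega = m^+$ and yields an $(t^*)^{-1+\check\delta}$ contribution with at worst $L^6$ polynomial loss, absorbed by the regularity assumption on $\mathscr{D}$. For $m-\omega \geq \LL^{-p}$ I split into $I_{\rm good}$ and $I_{\rm bad}$ from Proposition~\ref{themainpropinthebigthanL2}: items~\ref{bigL1}--\ref{bigL2} give $L^4$ bounds on $I_{\rm good}$, again yielding $(t^*)^{-1+\check\delta}$ decay via Lemma~\ref{bruteforcestatphase}; on $I_{\rm bad}$ I combine~\ref{bigL5} with the measure bound $|I_{\rm bad}| \lesssim L^{p/2}e^{-\frac{1}{2}\LL\log\LL}$ coming from the location of the critical frequencies, and use Cauchy--Schwarz in $\omega$ to absorb the exponentially large prefactors into an overall bound consistent with the target.

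\textbf{The critical regime.} In the range $0 < m-\omega \leq \LL^{-p}$ I invoke the Green's formula~\eqref{secondGreenFormula}, which writes $u$ as $\tfrac{e^{i\kkq}}{1+\Gamma(m) e^{2i\kkq}}$ times a sum of ``regular'' terms $u^{reg}_{i,\pm}$ (depending on $\omega$ only through the factors $\cos(\kkq), \sin(\kkq)$) plus error terms $\ep^{cos,sin}_{i,j}$. The next step is to expand the denominator as the geometric series
\begin{equation*}
\frac{1}{1 + \Gamma(m) e^{2i\kkq}} \;=\; \sum_{q=0}^{\infty} (-\Gamma(m))^q e^{2iq\kkq},
\end{equation*}
truncating at $q = \lceil (t^*)^{1/2}\rceil$: the tail contributes to the $(t^*)^{-5/6+\check\delta}$ remainder since for such $q$ the phase $\Phi(t^*,2q,\omega)$ has no critical point in the allowed frequency window and Lemma~\ref{precisestatphase}(2) applies. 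Writing the trigonometric factors in $u^{reg}_{i,\pm}$ and $\ep^{cos,sin}_{i,j}$ as exponentials then produces a sum of oscillatory integrals of the form $\int e^{-it^*\omega + iA\kkq} F(\omega)\, d\omega$ with $A\in\{2q,2q\pm 2\}$. For the regular parts $F$ is essentially constant in $\omega$ (evaluated at $\omega = m$), so Lemma~\ref{precisestatphase}(1) extracts exactly the stationary-phase value $e^{-it\Phi(t^*,A,\omega_c)}\bigl((t^*)^{-2/3}\partial_{\omega\omega}^2\Phi\bigr)^{-1/2}$, which when combined with the prefactor $\exp(-4\LL\log\LL + D_I\LL)$ from $u^{reg}_{i,+}$ assembles exactly the claimed exponential sum. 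For the error terms $\ep^{cos,sin}_{i,j}$ I apply Lemma~\ref{bruteforcestatphase} with the $L^\infty$- and $\partial_\omega$-bounds of Theorem~\ref{TPsection.mainprop}, which produces the fallback $(t^*)^{-5/6+\check\delta}$.

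\textbf{Main obstacle.} The delicate step is the handling of the contributions $u^{reg}_{i,-}$, whose coefficient $C_{sin,-}$ is of order $1$ rather than $\exp(-4\LL\log\LL)$, so the naive bound from the corresponding term in the geometric series would overshoot the claimed rate. The key is that after expanding $\sin(\kkq) = \tfrac{1}{2i}(e^{i\kkq}-e^{-i\kkq})$ and multiplying by the geometric series, one obtains a telescoping combination
\begin{equation*}
\sum_{q\geq 0} (-\Gamma(m))^q \bigl(e^{2i(q+1)\kkq} - e^{2iq\kkq}\bigr),
\end{equation*}
and Abel summation in $q$ transfers the $O(1)$ factor onto $1+\Gamma(m)$, which by Theorem~\ref{TPsection.mainprop} is of size $\exp(-4\LL\log\LL + D_I\LL)$; this reduces the size of these contributions to that of the $u^{reg}_{i,+}$ terms. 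Implementing this summation by parts carefully, while simultaneously keeping track of the $\exp(O(\LL))$ $L^\infty_r$ bounds on $u_H$ and $w_{1,\pm}$ in the compact range $r\in[r_+,R]$, is the principal bookkeeping task; the $(1+L)^{-4}$ factor then emerges by comparing against the $(1+L^{10+p})$-normalisation built into $\mathscr{D}$.
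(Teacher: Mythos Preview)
Your overall architecture---the three-regime split, the geometric-series expansion of $(1+\Gamma(m)e^{2i\kkq})^{-1}$, and especially the Abel-summation observation that converts the $O(1)$ coefficient $C_{sin,-}$ into a factor $1+\Gamma(m)=O(\exp(-4\LL\log\LL))$---is correct and matches the paper. Two steps need repair.

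First, your treatment of the tail $q>\lceil(t^*)^{1/2}\rceil$ is wrong: for $t^*\geq e^{\check\delta\LL\log\LL}$ the critical point $\omega_c(t^*,2q)\approx m-c(t^*/2q)^{-2/3}$ \emph{does} lie in $(m-\LL^{-p},m)$ whenever $q\lesssim t^*/\LL^{3p/2}$, which far exceeds $(t^*)^{1/2}$, so Lemma~\ref{precisestatphase}(2) does not apply on the full window. The paper's fix is to split the $\omega$-integral at $m-\hat c(t^*)^{-1/3}$: on $(m-\LL^{-p},m-\hat c(t^*)^{-1/3})$ one \emph{re-sums} the tail to a single function $F$ with $|F|\lesssim e^{D\LL}$, $|\partial_\omega F|\lesssim(m-\omega)^{-3/2}e^{D\LL}$, and applies Lemma~\ref{bruteforcestatphase}; on $(m-\hat c(t^*)^{-1/3},m)$ the critical points of all tail phases genuinely lie outside and Lemma~\ref{precisestatphase}(2) gives $q$-summable bounds.

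Second, your handling of the error terms is too vague to succeed. The estimates in Theorem~\ref{TPsection.mainprop} bound the \emph{numerators} $\epsilon^{cos,sin}_{i,j}$; the denominator $1+\Gamma(m)e^{2i\kkq}$ is of size $\exp(-4\LL\log\LL)$ near integer $\check k$ and its $\omega$-derivative is $O(k^2)$, so a naive $L^1_\omega$ bound on $\partial_\omega u_{\rm error}$ fails. The paper distinguishes: for $u_{\rm error2}$ the factor $\sin(\kkq)$ in the numerator cancels one power of the denominator, and a change of variables to $\check k$ yields $\int_{n-\delta}^{n+\delta}|\sin(\kkq)|\,|1+\Gamma e^{2i\kkq}|^{-2}\,d\check k\lesssim\LL^2$; for $u_{\rm error1}$ there is no such cancellation, but every $\epsilon^{cos}$ coefficient carries an extra $\exp(-4\LL\log\LL)$ prefactor that compensates. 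You also omit the dichotomy $t^*\lessgtr e^{\check\delta\LL\log\LL}$: for small $t^*$ the paper bypasses stationary phase entirely (truncating the series at $\hat Ct^*$ rather than $(t^*)^{1/2}$ and bounding the head trivially by $|A_2|\cdot t^*$), which is why the exponential sum in the statement appears only behind the indicator function.
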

	
	We start with the (easier) region $\omega > m$.
	\begin{lemma}Let $u$ satisfy the hypothesis of Proposition~\ref{estfourtransofumainprop}. Then we have
		\[\sup_{r \in [r_+,R]}\left|\int_m^{\infty}e^{-it^*\omega}\left(e^{i\omega p(s)}u\right)\, d\omega\right| \lesssim_R \left(1+L\right)^{-4}t^{-1}\mathscr{D}.\]
	\end{lemma}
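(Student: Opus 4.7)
The first reduction is that the phase collapses under $t^{*}=t+p(s)$: since $t^{*}-p(s)=t$, we have $e^{-it^{*}\omega}\,e^{i\omega p(s)}=e^{-it\omega}$, so
\[
\int_m^{\infty} e^{-it^{*}\omega}\bigl(e^{i\omega p(s)}u\bigr)\,d\omega \;=\; \int_m^{\infty} e^{-it\omega}\,u(\omega,s)\,d\omega,
\]
with $t$ comparable to $t^{*}$ uniformly for $r\in[r_+,R]$ (they differ only by the bounded function $p(s)$). The strategy is to extract the factor $t^{-1}$ via a single integration by parts in $\omega$, exploiting the fact that the phase $-t\omega$ has no critical point in the integration range.

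The pointwise controls on $u$ and $\partial_\omega u$ come from Propositions~\ref{almosthereexcepthorizon} (for $r$ in any compact range away from $r_+$) and~\ref{highfreqest} (uniformly up to $r=r_+$). After absorbing the $(1+\omega^{10})(1+L^{10+p})$ weight from the definition of $\mathscr{D}$ against the $L^{6}(\omega^{8}+(\omega^{2}-m^{2})^{-1/2-5\delta})$ and $L^{4}\omega^{8}$ prefactors on the right-hand sides of those propositions, we obtain, uniformly for $r\in[r_+,R]$,
\[
|u(\omega,s)|,\ |\partial_\omega u(\omega,s)| \;\lesssim\; (1+L)^{-2-p/2}\,\mathscr{D}^{1/2}\,\bigl(\omega^{-1} + (\omega-m)^{-1/4-5\delta/2}\bigr).
\]
These bounds are $L^{1}_\omega$-integrable on $(m,\infty)$: near the threshold the singular exponent is strictly less than $1$, and at infinity the polynomial $\omega^{-1}$ decay is integrable after a further IBP on the tail $\{\omega\geq 2\}$ using $\int_2^\infty e^{-it\omega}\omega^{-1}d\omega = O(t^{-1})$. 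The factor $(1+L)^{-4}$ in the conclusion is extracted by choosing $p$ large enough that $(1+L)^{-2-p/2}\leq (1+L)^{-4}$.

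The integration by parts produces
\[
\int_m^{\infty} e^{-it\omega}u\,d\omega \;=\; \frac{1}{it}\,\bigl[e^{-it\omega}u\bigr]_m^{\infty} \;+\; \frac{1}{it}\int_m^{\infty} e^{-it\omega}\,\partial_\omega u\,d\omega;
\]
the boundary at $+\infty$ vanishes by the $\omega^{-1}$ decay, and the interior integral is controlled by the $L^{1}$-bound on $\partial_\omega u$ (together with a second IBP on the $\omega\to\infty$ tail, to deal with the logarithmic non-integrability of $\omega^{-1}$ itself). The main obstacle is the boundary contribution at $\omega=m^{+}$: since the pointwise bound only forbids blowups of rate $(\omega-m)^{-1/4-5\delta/2}$, the literal evaluation $[e^{-it\omega}u]_m$ is ill-defined and one must interpret the IBP identity as $\lim_{\alpha\to 0^{+}}\int_{m+\alpha}^{\infty}$. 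One then combines the direct estimate on $[m,m+\alpha]$ (of size $\alpha^{3/4-5\delta/2}\mathscr{D}^{1/2}$) with the IBP estimate on $[m+\alpha,\infty)$ (whose formally divergent boundary contribution at $m+\alpha$ is absorbed by the near-threshold $L^{1}$-tail), optimizing in $\alpha$ to produce the claimed $t^{-1}$ rate modulo harmless $t^{\delta}$ losses that can be folded into constants for $\delta$ sufficiently small.
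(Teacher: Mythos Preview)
Your overall strategy---a single integration by parts in $\omega$, fed by Propositions~\ref{almosthereexcepthorizon} and~\ref{highfreqest}---is exactly the paper's. The gap is in your handling of the horizon. You assert that after the phase collapse $t$ and $t^*$ are uniformly comparable on $r\in[r_+,R]$ because ``they differ only by the bounded function $p(s)$''; this is false. The whole purpose of the $t^*$ coordinate is regularity across $\mathcal{H}^+$, which forces $p(s)\sim s\to-\infty$ as $r\to r_+$. Correspondingly, Proposition~\ref{highfreqest} bounds $\partial_\omega(e^{i\omega s}u)$ uniformly for $s\le 0$, \emph{not} $\partial_\omega u$; the two differ by $is\,u$, which diverges as $s\to-\infty$. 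After your collapse and integration by parts the problematic contribution is $t^{-1}|s|\int_m^\infty|u|\,d\omega$ with $t=t^*-p(s)$, and since $\sup_{s\le 0}|s|/(t^*+|s|)=1$ this does not decay in $t^*$. The fix is precisely what the paper does: do not collapse the phase, integrate by parts against $e^{-it^*\omega}$ to extract $(t^*)^{-1}$, and then bound $\int_m^\infty|\partial_\omega(e^{i\omega p(s)}u)|\,d\omega$ --- near the horizon this is exactly the quantity Proposition~\ref{highfreqest} controls, and away from the horizon $p(s)$ is bounded so Proposition~\ref{almosthereexcepthorizon} suffices.

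A minor secondary remark: your concern about the boundary term at $\omega=m^+$ is unnecessary, and the splitting in $\alpha$ is not needed. You already observe that $|\partial_\omega u|\lesssim(\omega-m)^{-1/4-5\delta/2}$ is integrable near $\omega=m$; but $L^1$-integrability of the derivative on $(m,m+1)$ already forces $u$ to extend continuously to $\omega=m$, so the boundary evaluation is well-defined and bounded outright.
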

	\begin{proof}In view of Propositions~\ref{almosthereexcepthorizon} and~\ref{highfreqest} we have that
		\[\sup_{r \in [r_+,R]}\left(\int_m^{\infty}\left|\partial_{\omega}\left(e^{i\omega p(s)}u\right)\right|\, d\omega + \sup_{\omega \in (m,\infty)}\left(1+\omega^2\right)\left|u\right|\right) \lesssim \left(1+L\right)^{-4}\mathscr{D}.\]
		Thus the proof is concluded by a straightforward integration by parts.
	\end{proof}
	
	Now we turn to the region $\omega < m$ and $m-\omega \geq L^{-p}$ for $p \gg 1$.
	\begin{lemma}Let $u$ satisfy the hypothesis of Proposition~\ref{estfourtransofumainprop}. Then, for every small constant $\delta > 0$ and large constant $p \gg 1$, we have that
		\[\sup_{r \in [r_+,R]}\left|\int_0^{m-L^{-p}}e^{-it^*\omega}\left(e^{i\omega p(s)}u\right)\, d\omega\right| \lesssim_{\delta,p,R} \left(1+L\right)^{-4}\mathscr{D} (t^*)^{-1+\delta}.\]
	\end{lemma}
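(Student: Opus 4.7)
The plan is to combine a single integration by parts in $\omega$ with the pointwise and derivative estimates of Proposition~\ref{themainpropinthebigthanL2}, handling the bad frequency set by exploiting its exponentially small measure. Since $e^{-it^*\omega}e^{i\omega p(s)} = e^{-it\omega}$ and $|t^*-t| \lesssim 1$ uniformly for $r \in [r_+, R]$, it suffices to estimate $\int_0^{m-L^{-p}}e^{-it\omega}u(\omega,s)\, d\omega$, and I would split the domain as $[0, m/2] \cup [m/2, m-L^{-p}]$.

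On $[0, m/2]$, the hypothesis $m^2 - \omega^2 \gtrsim 1$ places us in the regime of Lemmas~\ref{easymwaybig} and~\ref{easymwaybigderivativeomega}, which give polynomial-in-$L$ pointwise bounds for $u$ and the twisted derivative $\partial_\omega(e^{i\omega s}u)$. A single integration by parts then produces a $t^{-1}$-decaying contribution whose $L$ factors are absorbed by the $(1+L^{10+p})^{-1}$ weight implicit in $\mathscr{D}$. On $[m/2, m-L^{-p}]$, I would further decompose the domain into its intersections with $I_{\rm good}$ and $I_{\rm bad}$ from Proposition~\ref{themainpropinthebigthanL2}. The $I_{\rm good}$ piece is treated exactly as before using item~\ref{bigL2}; the boundary term at $\omega = m - L^{-p}$ generated by this integration by parts is bounded by $(t^*)^{-1}e^{D_0 L/2}\mathscr{D}^{1/2}$ via item~\ref{bigL3}, and a dichotomy in $t^*$ (either $t^* \geq e^{2D_0 L/\delta}$, in which case the exponential is absorbed by $(t^*)^{\delta/2}$; or $t^* \leq e^{2D_0 L/\delta}$, in which case one instead uses the trivial $L^1$ bound directly) keeps this within the desired $(t^*)^{-1+\delta}$ decay.

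The main obstacle is the $I_{\rm bad}$ contribution, where item~\ref{bigL5} shows that $\partial_\omega(e^{i\omega s}u)$ can blow up like $\mathscr{E}^{-2}e^{-2(1-\check\delta)L\log L}$ as $\omega$ approaches a critical frequency (with $\mathscr{E} = \inf_n|m^2-\omega^2+\lambda_n|$), which is not integrable in $\omega$ and hence rules out a direct integration by parts. Here I would abandon the integration-by-parts strategy and simply use the trivial bound $\bigl|\int_{I_{\rm bad}}e^{-it\omega}u\,d\omega\bigr| \leq |I_{\rm bad}|\cdot \sup_{I_{\rm bad}}|u|$. Lemma~\ref{eigenshavespace} provides at most $\sim L^{p/2}$ critical frequencies in the relevant range, and by the definition of $I_{\rm bad}$ each contributes an $\omega$-interval of length $\lesssim e^{-\frac{1}{2}L\log L}/m$; combined with the pointwise bound $|u| \lesssim e^{D_0 L/2}\mathscr{D}^{1/2}$ from item~\ref{bigL3}, the total $I_{\rm bad}$ contribution is $\lesssim L^{p/2}e^{-\frac{1}{4}L\log L}\mathscr{D}^{1/2}$, which is super-polynomially small in $L$ and easily within $(1+L)^{-4}\mathscr{D}(t^*)^{-1+\delta}$ for any $t^*\geq 1$ once the polynomial $L$-weights in $\mathscr{D}$ are used. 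Assembling the three pieces completes the proof.
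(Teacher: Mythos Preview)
Your decomposition into $I_{\rm good}$ and $I_{\rm bad}$ and the integration-by-parts on $I_{\rm good}$ match the paper's approach, but your treatment of $I_{\rm bad}$ has a genuine gap. The trivial bound you obtain there, namely
\[
\Bigl|\int_{I_{\rm bad}}e^{-it\omega}u\,d\omega\Bigr| \;\lesssim\; |I_{\rm bad}|\cdot\sup_{I_{\rm bad}}|u|\;\lesssim\; e^{-\frac14 \LL\log\LL}\mathscr{D}^{1/2},
\]
carries \emph{no} decay in $t^*$. Thus it cannot be absorbed into $(1+L)^{-4}\mathscr{D}(t^*)^{-1+\delta}$ once $t^* \gtrsim e^{\frac{1}{4(1-\delta)}\LL\log\LL}$, and your final sentence (``easily within $(1+L)^{-4}\mathscr{D}(t^*)^{-1+\delta}$ for any $t^*\geq 1$'') is simply false in that regime. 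Your dichotomy at $t^*=e^{2D_0 L/\delta}$ rescues the boundary term because that term already carries a $(t^*)^{-1}$ prefactor, but it does nothing for the $I_{\rm bad}$ integral.

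The paper fixes this by a case split at $t^*=e^{\tilde\delta \LL\log\LL}$. For $t^*\leq e^{\tilde\delta \LL\log\LL}$ your trivial bound does work (this is exactly what the paper does). For $t^*\geq e^{\tilde\delta \LL\log\LL}$, one \emph{does} integrate by parts on each component $\tilde I$ of $I_{\rm bad}$ --- your claim that item~\ref{bigL5} ``rules out a direct integration by parts'' is too pessimistic. The key is a further splitting of $\tilde I$ at the scale $\mathscr{E}=e^{-2\LL\log\LL}$: on $\{\mathscr{E}\geq e^{-2\LL\log\LL}\}$ one applies~\ref{bigL5} and integrates the resulting $\mathscr{E}^{-1}$ (after taking square roots), which together with the prefactor $e^{-(2-\check\delta)\LL\log\LL}$ yields $e^{O(\check\delta)\LL\log\LL}$; on the complementary set $\{\mathscr{E}\leq e^{-2\LL\log\LL}\}$ one uses the cruder bound~\ref{bigL4} times the tiny measure. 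The outcome is $(t^*)^{-1}e^{O(\check\delta)\LL\log\LL}$, and since $t^*\geq e^{\tilde\delta \LL\log\LL}$ this is $\lesssim (t^*)^{-1+\delta}$ for $\check\delta\ll\tilde\delta$. A minor additional point: integration by parts on $I_{\rm good}$ produces $\sim L^{p/2}$ boundary terms (one per interface with $I_{\rm bad}$), not just the single endpoint you mention; these are handled by item~\ref{bigL1}.
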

	\begin{proof}This proof will rely heavily on Proposition~\ref{themainpropinthebigthanL2}.
		
		Let $\tilde{\delta} > 0$ be an arbitrarily small constant. We consider separately the cases $t \leq e^{\tilde{\delta}\LL\log\LL}$ and $t \geq e^{\tilde{\delta}\LL\log \LL}$.
		
		Suppose that $t \leq e^{\tilde{\delta}\LL \log\LL}$. Let $d_0$ denote the number of intervals in the set $I_{\rm good}$ and $d_1$ denote the number of intervals in the set $I_{\rm bad}$ from Proposition~\ref{themainpropinthebigthanL2}. It follows from Proposition~\ref{themainpropinthebigthanL2} that
		\begin{equation}\label{boundintervalnumber}
			\left|d_0\right| + \left|d_1\right| \lesssim L^{p/2}.
		\end{equation}
		In view of the estimates~\eqref{bigL1},~\eqref{bigL2}, and~\eqref{bigL3} from Proposition~\ref{themainpropinthebigthanL2}, and~\eqref{boundintervalnumber}, we have that there exists a constant $D$ depending only on $m$ and $M$ so that 
		\begin{align}\label{tissmallandthisisnothard}
			\left|\int_0^{m-L^{-p}}e^{-it^*\omega}\left(e^{i\omega p(s)}u\right)\, d\omega\right| &\leq \left|\int_{[0,m-L^{-p}]\cap I_{\rm good}}e^{-it^*\omega}\left(e^{i\omega p(s)}u\right)\, d\omega\right|
			\\ \nonumber &\qquad + \left|\int_{[0,m-L^{-p}]\cap I_{\rm bad}}e^{-it^*\omega}\left(e^{i\omega p(s)}u\right)\, d\omega\right|
			\\ \nonumber &\lesssim \left(1+L\right)^{-4}\mathscr{D}\left( (t^*)^{-1} +e^{D\LL}e^{-\frac{1}{2}\LL\log\LL}\right)
			\\ \nonumber &\lesssim \left(1+L\right)^{-4}\mathscr{D} (t^*)^{-1}
		\end{align}
		where in the passage from the first to the second line we integrated by parts on $I_{\rm good}$ and simply used an $L^{\infty}$ bound for $u$ on $I_{\rm bad}$ (as well as the definition of $I_{\rm bad}$).
		
		Next we consider the case when $t \geq e^{\tilde{\delta} \LL \log \LL}$. In this case it does not suffice to estimate the integral over $I_{\rm bad}$ using only the $L^{\infty}$ bound~\eqref{bigL3} from Proposition~\ref{themainpropinthebigthanL2}. Let $\tilde{I}$ denote one of the intervals in $I_{\rm bad}$. In view of~\eqref{bigL4} and~\eqref{bigL5} from Proposition~\ref{themainpropinthebigthanL2} (and still keeping~\eqref{boundintervalnumber} in mind) we have that, for any sufficiently small constant $\check{\delta} > 0$, 
		\begin{align*}
			&d_1\left|\int_{\tilde{I}}e^{-it^*\omega}\left(e^{i\omega p(s)}u\right)\, d\omega\right| 
			\\ \nonumber &\qquad \lesssim d_1\left(t^*\right)^{-1}\int_{\tilde{I}}\left|\partial_{\omega}\left(e^{i\omega s}u\right)\right|\, d\omega + \left(1+L\right)^{-4}\mathscr{D} \left(t^*\right)^{-1}
			\\ \nonumber &\qquad\lesssim 
			d_1 \left(t^*\right)^{-1}\left(\int_{\tilde{I}\cap\left\{ \left|m^2-\omega^2+\lambda_n\right| \geq e^{-2\LL \log\LL}\right\}}\left|\partial_{\omega}\left(e^{i\omega s}u\right)\right|\, d\omega
			+\int_{\tilde{I}\cap \left\{\left|m^2-\omega^2+\lambda_n\right| \leq e^{-2\LL \log\LL}\right\}}\left|\partial_{\omega}\left(e^{i\omega s}u\right)\right|\, d\omega\right)
			\\ \nonumber & \qquad \qquad + \left(1+L\right)^{-4}\mathscr{D} \left(t^*\right)^{-1} 
			\\ \nonumber & \qquad  \lesssim_{\check{\delta}} \left(1+L\right)^{-4}\mathscr{D}\left(t^*\right)^{-1}\Bigg(1
			+ e^{-2(1-\check{\delta})\LL \log\LL}\int_{\tilde{I}\cap\left\{ \left|m^2-\omega^2+\lambda_n\right| \geq e^{2\LL \log\LL}\right\}}\left|m^2-\omega^2+\lambda_n\right|^{-2}\, d\omega 
			\\  \nonumber &  \qquad \qquad \qquad +e^{2\left(1+\check{\delta}\right)\LL\log\LL}\int_{\tilde{I}\cap \left\{\left|m^2-\omega^2+\lambda_n\right| \leq e^{-2\LL \log\LL}\right\}}\, d\omega\Bigg)
			\\ \nonumber &\qquad \lesssim \left(1+L\right)^{-4} \mathscr{D} \left(t^*\right)^{-1} e^{-2\check{\delta}\LL \log\LL}.
		\end{align*}
		Thus, as long as $\check{\delta}$ is sufficiently small compared to $\tilde{\delta}$, we have that 
		\[d_1\left|\int_{\tilde{I}}e^{-it^*\omega}\left(e^{i\omega p(s)}u\right)\, d\omega\right| \lesssim \left(1+L\right)^{-4} \mathscr{D} \left(t^*\right)^{-1+\tilde{\delta}},\]
		which in turn implies that
		\[\left|\int_{[0,m-L^{-p}]\cap I_{\rm bad}}e^{-it^*\omega}\left(e^{i\omega p(s)}u\right)\, d\omega\right| \lesssim \left(1+L\right)^{-4} \mathscr{D} \left(t^*\right)^{-1+\tilde{\delta}}.\]
		Since $\left|\int_{[0,m-L^{-p}]\cap I_{\rm good}}e^{-it^*\omega}\left(e^{i\omega p(s})u\right)\, d\omega\right|$ may be estimated just as we did in~\eqref{tissmallandthisisnothard}, the proof is concluded. 
	\end{proof}
	
	Finally, we come to the region $\omega \in [m-L^{-p},m]$. We will consider separately the cases when $t \leq e^{\tilde{\delta}\LL \log\LL}$ for a small constant $\tilde{\delta} > 0$ and when $t \geq e^{\tilde{\delta}\LL \log\LL}$. An important role will be played by the Green's formula~\eqref{secondGreenFormula} from Theorem~\ref{TPsection.mainprop}. Setting
	\begin{equation}\label{thisisumain}
		u_{\rm main} \doteq e^{i\omega p(s)}\frac{ e^{i\kkq}\sum_{i \in \{1,2\},\ j \in \{\pm\}}u_{i,j}^{reg}(\omega,s)}{1+\Gamma(m) e^{2i \kkq}},
	\end{equation}
	\begin{equation}\label{thisisuerror1}
		u_{\rm error1} \doteq e^{i\omega p(s)}\frac{ e^{i\kkq}\sum_{i \in \{1,2\},\ j \in \{\pm\}}\ep_{i,j}^{cos}(\omega,s) \cos(\kkq)}{1+\Gamma(m)\cdot e^{2i\kkq}}, 
	\end{equation}
	\begin{equation}\label{thisisuerror2}
		u_{\rm error2} \doteq e^{i\omega p(s)}\frac{ e^{i\kkq} \sum_{i \in \{1,2\},\ j \in \{\pm\}}\ep_{i,j}^{sin}(\omega,s) \sin(\kkq)}{1+\Gamma(m)\cdot e^{2i\kkq}}, 
	\end{equation}
	we have that
	\[ e^{i\omega p(s)}u = u_{\rm main} + u_{\rm error1}+u_{\rm error2}.\]
	(Recall that $p(s)$ is defined by~\eqref{thisisps}.)
	
	In the following lemma we gather some estimates for the expressions $\Gamma(m)$ and  $1+\Gamma(m)e^{2i \kkq}$ which show up in the denominators of~\eqref{thisisumain},~\eqref{thisisuerror1}, and~\eqref{thisisuerror2}.
	\begin{lemma}\label{someestimatesforgamma}Let $\Gamma(m)$ be as in Theorem~\ref{TPsection.mainprop}. Then we may write 
		\[\Gamma(m) = r_{\Gamma}e^{i\theta_{\Gamma}},\]
		for $r_{\Gamma} \in [0,\infty)$ and $\theta_{\Gamma} \in [0,2\pi)$ with
		\[r_{\Gamma} = 1 - c\left(\LL\right)\exp\left(-4\LL \log\LL + D_I\LL\right),\]
		\[\left|\theta_{\Gamma}-\pi\right| = \exp\left(-4\LL \log\LL + D_I\LL\right),\]
		where $c\left(\LL\right) \sim 1$, $\hat{D}\left(\LL\right) = O(1)$ and $D_I$ is as in Theorem~\ref{TPsection.mainprop}.
		
		Next assume that for some integer $n$, we have that $\left|\check{k} - n\right| \ll 1$, where the implied constant depends only on $m$ and $M$. Then, we have
		\[\Re\left(1+\Gamma e^{2i\kkq}\right) \geq c\exp\left(-4 \LL\log\LL + D_I\LL\right),\]
		\[\left|\Im\left(1+\Gamma e^{2i \kkq}\right)\right| \geq  \left| \check{k}-n\right| +  O\left(\exp\left(-4\LL \log\LL + D_I\LL\right)\right).\]
	\end{lemma}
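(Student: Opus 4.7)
The proof splits into two largely independent parts: first, a polar decomposition of $\Gamma(m)$ that repackages the two estimates for $\Gamma$ provided at the end of Theorem~\ref{TPsection.mainprop}; and second, a Taylor expansion of $1+\Gamma(m)e^{2i\kkq}$ around integer values of $\breve{k}$, combined with the first part, to extract the desired lower bounds on the real and imaginary parts.

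For the polar decomposition, I would set $r_\Gamma \doteq |\Gamma(m)|$ and take $\theta_\Gamma \in [0,2\pi)$ to be the argument, so that the formula for $r_\Gamma$ becomes exactly the first estimate of Theorem~\ref{TPsection.mainprop} with $c(\LL) \doteq \tilde{C}$. For the bound on $|\theta_\Gamma-\pi|$, I would invoke the triangle inequality
\[
r_\Gamma |e^{i\theta_\Gamma}+1| = |\Gamma(m)+r_\Gamma e^{i\pi} e^{-i\pi}\cdot(-1)| \leq |\Gamma(m)+1| + |1-r_\Gamma|,
\]
whose right-hand side is bounded by $\exp(-4\LL\log\LL + D_I\LL)$ by the two estimates of Theorem~\ref{TPsection.mainprop}, and then use the elementary identity $|e^{i\theta_\Gamma}+1| = 2|\sin(\theta_\Gamma/2)| \geq \tfrac{2}{\pi}|\theta_\Gamma-\pi|$, valid for $\theta_\Gamma$ in a neighborhood of $\pi$. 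Since $r_\Gamma \to 1$, this yields the claimed estimate on $|\theta_\Gamma-\pi|$.

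For the second part, I would write $\breve{k} = n + \eta$ with $\eta = \breve{k}-n$ (signed, with $|\eta|\ll 1$) and use the polar form just established to obtain
\[
1+\Gamma(m)e^{2i\kkq} = 1 - r_\Gamma\cos\bigl((\theta_\Gamma-\pi)+2\pi\eta\bigr) - ir_\Gamma\sin\bigl((\theta_\Gamma-\pi)+2\pi\eta\bigr).
\]
Since both $|\theta_\Gamma-\pi|$ and $|\eta|$ are small, the argument of the trigonometric functions is small and we may Taylor expand. For the real part, the bound $\cos \leq 1$ gives immediately
\[
\Re(1+\Gamma(m)e^{2i\kkq}) \geq 1-r_\Gamma = c(\LL)\exp(-4\LL\log\LL + D_I\LL).
\]
For the imaginary part, the estimate $|\sin(x)| \geq \tfrac{2}{\pi}|x|$ for $|x|\leq \pi/2$ combined with the reverse triangle inequality gives
\[
|\Im(1+\Gamma(m)e^{2i\kkq})| \geq \tfrac{2r_\Gamma}{\pi}\bigl(2\pi|\eta| - |\theta_\Gamma-\pi|\bigr) \geq C|\breve{k}-n| - O\bigl(\exp(-4\LL\log\LL + D_I\LL)\bigr),
\]
which is the claimed bound (with the constant $C$ absorbed into the $O$-notation on the right-hand side).

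The proof is essentially bookkeeping and there is no substantive obstacle; both estimates follow by elementary manipulations from the two $\Gamma$-estimates already proved in Theorem~\ref{TPsection.mainprop}. The only minor care required is ensuring the Taylor expansion is uniformly controlled in $\LL$, which is automatic because the argument $(\theta_\Gamma-\pi)+2\pi\eta$ is bounded by a small absolute constant under the hypothesis $|\breve{k}-n|\ll 1$ (the implied constant depending only on $m$ and $M$).
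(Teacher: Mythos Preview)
Your proposal is correct and matches the paper's approach: the paper's own proof is the single line ``These are immediate consequences of Theorem~\ref{TPsection.mainprop},'' and your argument is precisely the elementary unpacking one would do to verify those consequences. One trivial slip: the identity should read $|e^{i\theta_\Gamma}+1| = 2|\cos(\theta_\Gamma/2)|$ rather than $2|\sin(\theta_\Gamma/2)|$, but since $|\cos(\theta_\Gamma/2)| = |\sin((\pi-\theta_\Gamma)/2)|$ your conclusion $\geq \tfrac{2}{\pi}|\theta_\Gamma-\pi|$ is unaffected.
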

	\begin{proof}These are immediate consequences of Theorem~\ref{TPsection.mainprop}.
	\end{proof}
	
	Now we consider $u_{\rm error1}$ in the case when $t \leq e^{\tilde{\delta} \LL \log\LL}$.
	\begin{lemma}\label{uerror1smalltime}Let $u$ satisfy the hypothesis of Proposition~\ref{estfourtransofumainprop} define $u_{\rm error1}$ by~\eqref{thisisuerror1}, and let $\tilde{\delta} > 0$ be any small constant. Then, $t^* \leq e^{\tilde{\delta} \LL \log\LL}$ implies that
		\[\sup_{r \in [r_+,R]}\left|\int_{m-L^{-p}}^me^{-it^*\omega}u_{\rm error1}\, d\omega\right| \lesssim_R (t^*)^{-1}\left(1+\LL\right)^{-4}\mathscr{D}.\]
	\end{lemma}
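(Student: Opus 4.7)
The strategy is direct integration by parts in $\omega$, using the oscillation $e^{-it^*\omega}$ to extract the $(t^*)^{-1}$ factor. I would begin by writing $u_{\rm error1} = G/D$ where $G(\omega,s) = e^{i\omega p(s)}e^{i\kkq}\sum_{i,j}\ep^{cos}_{i,j}\cos(\kkq)$ and $D(\omega) = 1+\Gamma(m)e^{2i\kkq}$. The relevant pointwise bounds, from Theorem~\ref{TPsection.mainprop} and Lemma~\ref{someestimatesforgamma}, are $|\ep^{cos}_{i,j}| \lesssim \exp(-4\LL\log\LL + D_H\LL)k^{-1/2}N[H]$, $|\partial_\omega \ep^{cos}_{i,j}| \lesssim \exp(-4\LL\log\LL + D_H\LL)k^2\log(k)N[H]$, and $|D| \geq c\exp(-4\LL\log\LL + D_I\LL)$ uniformly in $\omega \in (m-L^{-p},m)$, so that pointwise $|u_{\rm error1}| \lesssim \exp((D_H-D_I)\LL)k^{-1/2}N[H]$ vanishes as $\omega \to m$.

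To handle the near-resonance behavior of $1/D$ cleanly before integrating by parts, I would use the identity $e^{i\kkq}\cos(\kkq) = \tfrac{1}{2}(1+e^{2i\kkq})$ together with the elementary partial-fraction decomposition
\[
\frac{1+e^{2i\kkq}}{1+\Gamma(m)e^{2i\kkq}} = \frac{1}{\Gamma(m)} + \frac{1-\Gamma(m)^{-1}}{1+\Gamma(m)e^{2i\kkq}},
\]
splitting $u_{\rm error1} = u^{\rm I}_{\rm error1} + u^{\rm II}_{\rm error1}$ where $u^{\rm I}_{\rm error1} = (2\Gamma(m))^{-1}e^{i\omega p(s)}\sum_{i,j}\ep^{cos}_{i,j}$ has no resonant denominator. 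On this main piece, a single integration by parts in $\omega$ yields the desired $(t^*)^{-1}$ decay: the boundary term at $\omega = m$ vanishes (thanks to $k^{-1/2} \to 0$, interpreting the limit via truncation at $\omega = m-\epsilon$ and sending $\epsilon \to 0$), the boundary term at $\omega = m-L^{-p}$ is bounded by the $L^\infty$ estimate, and the bulk integral $\int|\partial_\omega u^{\rm I}_{\rm error1}|\,d\omega$ is controlled (with at most polylogarithmic growth in $L$) after the change of variable $d\omega \sim dk/k^3$ from Lemma~\ref{estimatesfortildek}.

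For the error piece $u^{\rm II}_{\rm error1}$, the main obstacle is the $|D|^{-2}$ term arising from $\partial_\omega(1/D)$, which pointwise can reach $\exp(8\LL\log\LL - 2D_I\LL)$ near resonances. The resolution is to change variables to $\tilde k$ (with Jacobian $|\partial_\omega\tilde k| \sim k^3$) and exploit the explicit structure $|D|^2 = (1-r_\Gamma)^2 + 4r_\Gamma \sin^2(\kkq + (\theta_\Gamma-\pi)/2)$ from Lemma~\ref{someestimatesforgamma}: over each unit interval in $\tilde k$ one has $\int d\tilde k/|D|^2 \lesssim (1-r_\Gamma)^{-1} \sim \exp(4\LL\log\LL - D_I\LL)$, which precisely cancels against the $\exp(-4\LL\log\LL + D_H\LL)$ prefactor carried by $\ep^{cos}$. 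This trades the pointwise singularity for a controlled sum $\sum_n n^{-s}$ over the resonant integer values of $\tilde k$, producing at worst a tame factor $\exp((D_H-D_I)\LL)$ with polynomial-in-$L$ corrections. In the small-time regime $t^* \leq e^{\tilde\delta\LL\log\LL}$, these sub-exponential losses are absorbed into the large polynomial weight $(1+L^{10+p})$ in the definition of $\mathscr{D}$, which controls $N[H]$ with smallness of order $L^{-(5+p/2)}$ by Cauchy--Schwarz in $s$, provided $p$ is taken sufficiently large depending on $D_H, D_I$, and $\tilde\delta$. Summing the contributions from $u^{\rm I}_{\rm error1}$ and $u^{\rm II}_{\rm error1}$ yields the target bound $(t^*)^{-1}(1+\LL)^{-4}\mathscr{D}$.
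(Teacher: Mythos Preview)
Your approach differs substantially from the paper's, and it contains gaps.

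\textbf{What the paper does.} The paper does \emph{not} integrate by parts at all. It simply bounds $\int_{m-\LL^{-p}}^m |u_{\rm error1}|\,d\omega$ directly: splitting into $X_{\rm bad}=\{\omega: \inf_n|\check{k}-n|\leq e^{-\frac12\LL\log\LL}\}$ (measure $\lesssim e^{-\frac12\LL\log\LL}$) and its complement $X_{\rm good}$, one has $|u_{\rm error1}|\lesssim (1+\LL)^{-4}\mathscr{D}\,e^{-\LL\log\LL}$ on $X_{\rm good}$ and $|u_{\rm error1}|\lesssim (1+\LL)^{-4}\mathscr{D}\,e^{D\LL}$ on $X_{\rm bad}$. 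Either way the $L^1_\omega$ norm is $\lesssim (1+\LL)^{-4}\mathscr{D}\,e^{-\frac14\LL\log\LL}$, and the time constraint $t^*\leq e^{\tilde\delta\LL\log\LL}$ converts this exponential smallness into $(t^*)^{-1}$. That is the whole proof.

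\textbf{Where your argument fails.} First, for $u^{\rm I}_{\rm error1}$: the claim that $\int|\partial_\omega u^{\rm I}_{\rm error1}|\,d\omega$ is controlled ``with at most polylogarithmic growth'' is false with the available bounds. From Theorem~\ref{TPsection.mainprop} you only have $|\partial_\omega\ep^{cos}_{i,j}|\lesssim e^{-4\LL\log\LL+D_H\LL}\,k^2\log k\,N[H]$, and after $d\omega\sim k^{-3}\,dk$ this gives $\int_{\LL^{p/2}}^\infty k^{-1}\log k\,dk=\infty$. Your single integration by parts therefore does not close; you would need a truncation-and-optimization argument (as in Lemma~\ref{bruteforcestatphase}) that you have not carried out, and which in any case does not obviously yield the sharp $(t^*)^{-1}$ rate claimed here.

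Second, and more fundamentally: your final bookkeeping claims that residual factors of size $\exp((D_H-D_I)\LL)$ are ``sub-exponential losses absorbed into the large polynomial weight $(1+L^{10+p})$.'' This is incorrect. The constants $D_H$ and $D_I$ are unrelated $O(1)$ quantities, so $D_H-D_I$ may well be positive, and no polynomial-in-$\LL$ factor (however large $p$) can dominate $e^{C\LL}$ for $C>0$. The paper never needs to confront this issue because it does not integrate by parts: the numerator's $e^{-4\LL\log\LL}$ smallness, combined with the small measure of the resonant set, already beats the required $(t^*)^{-1}$ directly.
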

	\begin{proof} We let all constants in this section depend on $R$. Define a set $X_{\rm bad}$ by
		\[X_{\rm bad} \doteq \left\{\omega \in (m-\LL^{-p},m) : {\rm inf}_{n \in \mathbb{Z}}\left|\check{k} - n\right| \leq \exp\left(-\frac{1}{2}\LL \log\LL\right) \right\}.\]
		Then we set $X_{\rm good} \doteq (m-\LL^{-p},m)\setminus X_{\rm bad}$. In view of Theorem~\ref{TPsection.mainprop}, we have that
		\[\omega \in X_{\rm good} \Rightarrow \left|u_{\rm error1}\right| \lesssim \left(1+\LL\right)^{-4}\mathscr{D} \exp\left(-\LL\log\LL\right).\]
		For $X_{\rm bad}$ we have 
		\[\omega \in X_{\rm bad} \Rightarrow \left|u_{\rm error1}\right| \lesssim \left(1+\LL\right)^{-4}\mathscr{D} e^{D\LL},\]
		for an $\LL$ independent constant $D$. Since the total $\omega$-volume of the set $X_{\rm bad}$ may be estimated as follows:
		\[\int_{X_{\rm bad}}\, d\omega \lesssim \exp\left(-\frac{1}{2}\LL\log\LL\right),\]
		we easily obtain that
		\[\left|\int_{m-\LL^{-p}}^me^{-it^*\omega}u_{\rm error1}\, d\omega\right| \lesssim \mathscr{D} e^{-\frac{1}{4}\LL \log\LL} \lesssim_{\tilde{\delta}} \left(1+\LL\right)^{-4}\mathscr{D} (t^*)^{-1}.\]
	\end{proof}

	Now we consider $u_{\rm error2}$. Here we can treat all ranges of $t^*$ at once. 
	\begin{lemma}\label{uerror2isokwithme}Let $u$ satisfy the hypothesis of Proposition~\ref{estfourtransofumainprop}, and define $u_{\rm error2}$ by~\eqref{thisisuerror2}. Then
		\[\sup_{r \in [r_+,R]}\left|\int_{m-L^{-p}}^me^{-it^*\omega}u_{\rm error2}\, d\omega\right| \lesssim_R \left(1+\LL\right)^{-4}(t^*)^{-5/6}\mathscr{D}.\]
	\end{lemma}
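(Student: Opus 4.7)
The proof rests on an algebraic cancellation that exploits the near-equality $\Gamma(m)\approx -1$ from Lemma~\ref{someestimatesforgamma}, which neutralizes the quasi-resonant amplification of $(1+\Gamma(m)e^{2i\kkq})^{-1}$ near the points where $\sin(\kkq) = 0$. Setting $\eta := 1+\Gamma(m) = O(\exp(-4\LL\log\LL+D_I\LL))$, the identity $e^{2i\kkq}-1 = \Gamma(m)^{-1}[(1+\Gamma(m)e^{2i\kkq})-\eta]$ gives
\begin{equation*}
\frac{e^{i\kkq}\sin(\kkq)}{1+\Gamma(m)e^{2i\kkq}} = \frac{1}{2i\Gamma(m)} - \frac{\eta}{2i\Gamma(m)(1+\Gamma(m)e^{2i\kkq})},
\end{equation*}
and hence the splitting $u_{\rm error2} = u_{\rm sm} + u_{\rm res}$. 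The smooth piece $u_{\rm sm}$ contains no resonant denominator, while $u_{\rm res}$ retains it but carries the prefactor $\eta$, whose smallness exactly offsets the worst-case blow-up of the inverse denominator near integer values of $\breve{k}$.

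The plan is to apply Lemma~\ref{bruteforcestatphase} to each piece with $p_0 = 1/4$, arising from $|\epsilon^{\sin}_{i,-}| \lesssim k^{-1/2}N[H] \sim (m-\omega)^{1/4}N[H]$ in Theorem~\ref{TPsection.mainprop}. For $u_{\rm sm}$, the derivative is controlled by $|\partial_\omega \epsilon^{\sin}_{i,-}|\lesssim k^2\log(k)$, whose $L^1$-norm over $(m-\LL^{-p},m-\alpha)$ is only logarithmic in $\alpha^{-1}$, so any $p_1 = \delta > 0$ is admissible and the brute-force estimate returns decay strictly faster than $(t^*)^{-5/6}$. For $u_{\rm res}$, using the factorization $1+\Gamma(m)e^{2i\kkq} = -2i\sin(\kkq)e^{i\kkq}+\eta e^{2i\kkq}$ (which follows from $1-e^{2i\kkq}=-2ie^{i\kkq}\sin(\kkq)$), one has
\begin{equation*}
\left|\partial_\omega\frac{\eta}{1+\Gamma(m)e^{2i\kkq}}\right| \lesssim \frac{|\eta|\,\kkq'}{(|\sin(\kkq)|+|\eta|)^2} \lesssim \frac{|\eta|\,k^3}{(|\sin(\kkq)|+|\eta|)^2}.
\end{equation*}
The contribution of the neighbourhood of each integer $n$ of $\breve{k}$ is handled by the change of variables $x=\breve{k}-n$, with Jacobian $dx\sim k^3\,d\omega$: the local integral collapses to $k^{-1/2}(\omega_n) \int |\eta|(|x|+|\eta|)^{-2}\,dx \lesssim k^{-1/2}(\omega_n)$, independently of $\LL$. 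Summing over $n\in[1,\breve{k}(m-\alpha)]$ with $k(\omega_n)\sim n$ gives $\sum_{n=1}^{\lfloor\alpha^{-1/2}\rfloor} n^{-1/2}\lesssim \alpha^{-1/4}$, so $p_1 = 1/4$. Lemma~\ref{bruteforcestatphase} then returns $(p_0+1)/(p_0+p_1+1) = 5/6$, matching the claimed rate.

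The main obstacle will be the derivative bookkeeping for $u_{\rm res}$: the balance $p_0 = p_1 = 1/4$ is sharp, since the naive bound $|1/(1+\Gamma(m)e^{2i\kkq})|\lesssim |\eta|^{-1}$ would force $p_1\geq 1/2$ and produce only $(t^*)^{-5/7}$, insufficient for the claim. Contributions of the $\epsilon^{\sin}_{i,+}$ terms are handled identically, the extra factor $\exp(-4\LL\log\LL+D_H\LL)$ absorbing any polynomial-in-$\LL$ loss. Finally, $N[H]$ is controlled in terms of $\mathscr{D}$ via a one-dimensional Sobolev embedding in $\omega$ (using the weights $(1+\omega^{10})(1+L^{10+p})$ in the definition of $\mathscr{D}$) together with Cauchy--Schwarz in $s$ over the compact $s$-support of $H$, producing the remaining $(1+\LL)^{-4}$ factor in the final estimate.
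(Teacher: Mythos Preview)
Your proof is correct and reaches the same endpoint as the paper—an application of Lemma~\ref{bruteforcestatphase} with $p_0=p_1=\tfrac14$—but the route is genuinely different. The paper does not split $u_{\rm error2}$ algebraically; instead it keeps the expression intact, bounds
\[
|\partial_\omega u_{\rm error2}| \lesssim (m-\omega)^{-5/4}\Bigl(1+\tfrac{1}{|1+\Gamma(m)e^{2i\kkq}|}+\tfrac{|\sin(\kkq)|}{|1+\Gamma(m)e^{2i\kkq}|^2}\Bigr)(1+\LL)^{-6}\mathscr{D},
\]
and then shows directly that each of the singular factors integrates to $O(\LL^2)$ in $\breve{k}$ over a $\delta$-neighbourhood of an integer. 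The mechanism there is that $|\sin(\kkq)|\sim|\breve{k}-n|$ cancels one power of the denominator, leaving a logarithmically divergent integral $\int |\breve{k}-n|^{-1}\,d\breve{k}\sim\log(|\eta|^{-1})\lesssim\LL\log\LL\lesssim\LL^2$; this $\LL^2$ loss is absorbed by starting from a $(1+\LL)^{-6}$ prefactor. Your algebraic identity rewrites $e^{i\kkq}\sin(\kkq)/(1+\Gamma(m)e^{2i\kkq})$ so that the ``bad'' part carries the explicit factor $\eta$, and the integral $\int|\eta|(|x|+|\eta|)^{-2}\,dx=O(1)$ has no logarithmic loss at all. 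This is the same cancellation phenomenon seen from a different angle—indeed, your identity is the closed-form version of the telescoping identity $(e^{2i\kkq}-1)\sum(-\Gamma)^q e^{2iq\kkq}=((-\Gamma)^{-1}-1)\sum(-\Gamma)^q e^{2iq\kkq}-1$ that the paper uses later for $u_{\rm main}$. What you gain is a cleaner $\LL$-bookkeeping (no need for the auxiliary $(1+\LL)^{-6}\to(1+\LL)^{-4}$ trade) and a sharper decay for the smooth piece $u_{\rm sm}$; what the paper's direct approach buys is that one never has to justify the algebraic manipulation and can work entirely with the single quantity $u_{\rm error2}$. Your final remark on controlling $N[H]$ by $\mathscr{D}$ is a point the paper also leaves implicit.
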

	\begin{proof}We let all constants in this section depend on $R$.

		Now, let $\hat{\delta} > 0$ be a sufficiently small constant (independent of $\LL$) and consider the sets $X_{\rm good}$ and $X_{\rm bad}$ defined by
		\[X_{\rm bad} \doteq \left\{\omega \in (m-\LL^{-p},m) : {\rm inf}_{n \in \mathbb{Z}}\left|\check{k} - n\right| \leq \delta \right\},\qquad  X_{\rm good} \doteq (m-\LL^{-p},m)\setminus X_{\rm bad}.\]

		We start by noting that in view of Theorem~\ref{TPsection.mainprop} and Lemma~\ref{someestimatesforgamma} we have that
		\begin{equation}\label{linftyuerror2bound}		
			\left|u_{\rm error2}\right| \lesssim \left(1+\LL\right)^{-4}\left(m-\omega\right)^{1/4}\mathscr{D} .
		\end{equation}
		For the derivative of $u_{\rm error2}$ we have 
		\[\left|\partial_{\omega}u_{\rm error2}\right| \lesssim \left(m-\omega\right)^{-5/4}\left(1+\frac{1}{\left|1+\Gamma(m) e^{2i \kkq}\right|}+\frac{|\sin(\kkq)|}{\left|1+\Gamma(m) e^{2i \kkq}\right|^2} \right)\left(1+\LL\right)^{-6}\mathscr{D}.\]

		We have that
		\[\omega \in X_{\rm good} \Rightarrow \left|\partial_{\omega}u_{\rm error2}\right| \lesssim \left(m-\omega\right)^{-5/4}\left(1+\LL\right)^{-4}\mathscr{D}.\]
		In particular, for every $d \in (0,\LL^{-p})$ we have that
		\begin{equation}\label{goodestimateuerror2der}		
			\int_{(m-\LL^{-p},m-d) \cap X_{\rm good}}\left|\partial_{\omega}u_{\rm error2}\right|\, d\omega \lesssim \left(1+\LL\right)^{-4}d^{-1/4}\mathscr{D}.
		\end{equation}
		
		For $\omega \in X_{\rm bad}$ we may change variables from $\omega $ to $\check{k}$ and obtain, for suitable constants $c_1$ and $c_2$ which may be uniformly bounded above and below depending only $M$ and $m$:
		\begin{align}\label{changethevariableforl1}
			&\int_{(m-\LL^{-p},m-d)\cap X_{\rm bad}}\left|\partial_{\omega}u_{\rm error2}\right|\, d\omega \lesssim 
			\\ \nonumber &\qquad \left(1+\LL\right)^{-6}\mathscr{D} \sum_{n=c_1 \LL^{p/2}}^{c d^{-1/2}}n^{-1/2}\int_{n-\delta}^{n+\delta}\left(1+\frac{1}{\left|1+\Gamma(m) e^{2i \kkq}\right|}+\frac{|\sin(\kkq)|}{\left|1+\Gamma(m) e^{2i \kkq}\right|^2} \right)\, d\check{k}.
		\end{align}
		
		We have, keeping Lemma~\ref{someestimatesforgamma} in mind,
		\begin{align}
			&\int_{n-\delta}^{n+\delta}\frac{d\check{k}}{\left|1+\Gamma(m) e^{2\pi i \check{k}}\right|} =
			\\ \nonumber &\left(\int_{n-\LL^2e^{-4\LL \log\LL + D_I\LL}}^{n+\LL^2e^{-4\LL \log\LL + D_I\LL}}+\int_{n-\delta}^{n-\LL^2e^{-4\LL \log\LL + D_I\LL}} + \int_{n+\LL^2e^{-4\LL \log\LL + D_I\LL}}^{n+\delta}\right)\frac{d\check{k}}{\left|1+\Gamma(m) e^{2\pi i \check{k}}\right|} \lesssim 
			\\ \nonumber &c\exp\left(4\LL\log\LL + D_I\LL\right)\int_{n-\LL^2e^{-4\LL \log\LL + D_I\LL}}^{n+\LL^2e^{-4\LL \log\LL + D_I\LL}}\, d\check{k}
			\\ \nonumber &\qquad + \left(\int_{n-\delta}^{n-\color{red}\LL^2\color{black}e^{-4\LL \log\LL + D_I\LL}} + \int_{n+\color{red}\LL^2\color{black}e^{-4\LL \log\LL + D_I\LL}}^{n+\delta}\right)\frac{d\check{k}}{\left|\check{k}-n\right|} \lesssim \LL^2.
		\end{align}
		Similarly,
		\[\int_{n-\delta}^{n+\delta}\frac{|\sin(\kkq)| d\check{k}}{\left|1+\Gamma(m)e^{2i\kkq}\right|^2} \lesssim \LL^2.\]
		Plugging in these bounds into~\eqref{changethevariableforl1} leads to
		\begin{align}\label{changethevariableforl1more}
			&\int_{(m-\LL^{-p},m-d)\cap X_{\rm bad}}\left|\partial_{\omega}u_{\rm error2}\right|\, d\omega \lesssim \left(1+\LL\right)^{-4}\mathscr{D} \sum_{n = c_1\LL^{p/2}}^{cd^{-1/2}}n^{-1/2} \lesssim \left(1+\LL\right)^{-4}d^{-1/4}\mathscr{D}.
		\end{align}
		
		Putting together~\eqref{goodestimateuerror2der} and~\eqref{changethevariableforl1more} yields
		\begin{align}\label{changethevariableforl1more223}
			&\int_{(m-\LL^{-p},m-d)}\left|\partial_{\omega}u_{\rm error2}\right|\, d\omega \lesssim  \left(1+\LL\right)^{-4}d^{-1/4}\mathscr{D}.
		\end{align}
		
		In view of~\eqref{linftyuerror2bound},~\eqref{changethevariableforl1more223}, and applying Lemma~\ref{bruteforcestatphase} with $p_0=p_1=\frac{1}{4}$, the proof is concluded.

	\end{proof}
	
	Now we consider $u_{\rm error1}$ in the case when $t^* \geq e^{\tilde{\delta} \LL \log\LL}$.
	\begin{lemma}Let $u$ satisfy the hypothesis of Proposition~\ref{estfourtransofumainprop} define $u_{\rm error1}$ by~\eqref{thisisuerror1}, and let $\tilde{\delta} > 0$ be any small constant. Suppose that $t^* \geq e^{\tilde{\delta} \LL \log\LL}$ and $\delta > 0$ is an arbitrary small constant. Assuming $\LL$ is large enough, depending on $\delta$, then
		\[\sup_{r \in [r_+,R]}\left|\int_{m-L^{-p}}^me^{-it^*\omega}u_{\rm error1}\, d\omega\right| \lesssim_R \left(1+\LL\right)^{-4}(t^*)^{-5/6+\delta}\mathscr{D}.\]
	\end{lemma}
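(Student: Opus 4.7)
The strategy is parallel to that of Lemma~\ref{uerror2isokwithme}, with the essential simplification that every coefficient $\epsilon_{i,j}^{cos}$ in $u_{\rm error1}$ enjoys the exponentially small prefactor $e^{-4\LL\log\LL+D_H\LL}$ from Theorem~\ref{TPsection.mainprop}. This (almost) compensates for the worst-case amplification $|1+\Gamma(m)e^{2i\kkq}|^{-1}\lesssim e^{4\LL\log\LL-D_I\LL}$ coming from Lemma~\ref{someestimatesforgamma}, leaving only an $e^{C\LL}$ prefactor overall, which will ultimately be absorbed into a $(t^*)^\delta$ loss using the hypothesis $t^*\geq e^{\tilde{\delta}\LL\log\LL}$ and the freedom to take $\LL$ large.

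First I would establish an $L^\infty$ bound. Using $|\epsilon_{i,j}^{cos}|\lesssim k^{-1/2}e^{-4\LL\log\LL+D_H\LL}\mathscr{D}(1+\LL)^{-4}$, the relation $k^{-1/2}\lesssim(m-\omega)^{1/4}$, and the lower bound $|1+\Gamma(m)e^{2i\kkq}|\geq\Re(1+\Gamma(m)e^{2i\kkq})\gtrsim e^{-4\LL\log\LL+D_I\LL}$ from Lemma~\ref{someestimatesforgamma}, one obtains
\[\|u_{\rm error1}\|_{L^\infty(m-\alpha,m)}\lesssim(1+\LL)^{-4}\mathscr{D}\,e^{C\LL}\,\alpha^{1/4}\]
for every $\alpha\in(0,\LL^{-p})$, where $C$ depends only on $D_H$ and $D_I$.

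Next I would establish the derivative bound $\|\partial_\omega u_{\rm error1}\|_{L^1(m-\LL^{-p},m-\alpha)}\lesssim(1+\LL)^{-4}\mathscr{D}\,e^{C\LL}\,\alpha^{-1/4}$. Upon differentiation, the three potentially singular terms are $\partial_\omega\epsilon^{cos}$, $\cos(\kkq)\,\partial_\omega\kkq/(1+\Gamma(m)e^{2i\kkq})$, and $\cos(\kkq)\,\partial_\omega\kkq/(1+\Gamma(m)e^{2i\kkq})^2$. Following exactly the change-of-variables argument in the proof of Lemma~\ref{uerror2isokwithme}, I would partition the domain according to the distance $|\breve{k}-n|$ to the nearest integer $n$, change variables from $\omega$ to $\breve{k}$, and use the bounds $\int_{n-\delta_0}^{n+\delta_0}|1+\Gamma(m)e^{2i\kkq}|^{-1}\,d\breve{k}\lesssim\LL^2$ and $\int_{n-\delta_0}^{n+\delta_0}|\sin(\kkq)||1+\Gamma(m)e^{2i\kkq}|^{-2}\,d\breve{k}\lesssim\LL^2$ derived there. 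Combined with $|\partial_\omega\epsilon^{cos}|\lesssim k^2\log(k)e^{-4\LL\log\LL+D_H\LL}\mathscr{D}(1+\LL)^{-4}$, this yields the claimed $L^1$ derivative estimate.

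Applying Lemma~\ref{bruteforcestatphase} with $p_0=p_1=1/4$ and $B=(1+\LL)^{-4}\mathscr{D}\,e^{C\LL}$ then produces
\[\left|\int_{m-\LL^{-p}}^m e^{-it^*\omega}u_{\rm error1}\,d\omega\right|\lesssim(1+\LL)^{-4}\mathscr{D}\,e^{C\LL}(t^*)^{-5/6},\]
and the $e^{C\LL}$ prefactor is absorbed into $(t^*)^\delta$ since $t^*\geq e^{\tilde{\delta}\LL\log\LL}$ implies the bound $e^{C\LL}\leq(t^*)^\delta$ as soon as $C\leq\delta\tilde{\delta}\log\LL$, which holds once $\LL$ is large enough depending on $\delta$ (and $\tilde{\delta}$). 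The main obstacle is the careful derivative bound of Step~2: one must verify that the cancellation between the exponential smallness of $\partial_\omega\epsilon^{cos}$ and the exponential growth of $|1+\Gamma(m)e^{2i\kkq}|^{-2}$ near the integers of $\breve{k}$ proceeds exactly as in the parallel computation in the proof of Lemma~\ref{uerror2isokwithme}, so that only a polynomial loss in $\LL$ arises from the singularity structure.
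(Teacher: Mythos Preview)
Your proposal is correct and follows essentially the same route as the paper: both obtain an $L^\infty$ bound and an $L^1$ derivative bound carrying an overall $e^{C\LL}$ prefactor, apply Lemma~\ref{bruteforcestatphase} with $p_0=p_1=\tfrac14$, and then absorb $e^{C\LL}$ into $(t^*)^\delta$ via the hypothesis $t^*\geq e^{\tilde\delta\LL\log\LL}$.

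One small clarification on your Step~2: the bound $\int_{n-\delta_0}^{n+\delta_0}|\sin(\kkq)|\,|1+\Gamma(m)e^{2i\kkq}|^{-2}\,d\breve{k}\lesssim\LL^2$ from Lemma~\ref{uerror2isokwithme} does not directly apply here, since $u_{\rm error1}$ carries $\cos(\kkq)$, which is $\approx 1$ (not small) near $\breve{k}\in\mathbb{Z}$. The relevant integral is $\int_{n-\delta_0}^{n+\delta_0}|1+\Gamma(m)e^{2i\kkq}|^{-2}\,d\breve{k}\lesssim e^{4\LL\log\LL-D_I\LL}$, and it is the prefactor $e^{-4\LL\log\LL+D_H\LL}$ on $\epsilon^{cos}_{i,j}$ that cancels this, leaving an $e^{O(\LL)}$ (not merely polynomial) loss---consistent with the $B=e^{C\LL}$ you feed into Lemma~\ref{bruteforcestatphase}. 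The paper records this in the compressed pointwise bound $|\partial_\omega u_{\rm error1}|\lesssim (m-\omega)^{-5/4}e^{D\LL-4\LL\log\LL+D_H\LL}|1+\Gamma(m)e^{2i\kkq}|^{-2}\mathscr{D}$.
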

	\begin{proof}We let all constants in this section depend on $R$.
		
		In view of Theorem~\ref{TPsection.mainprop} and Lemma~\ref{someestimatesforgamma} we have that
		\begin{equation}\label{uerror2hasanupperbound}
			\left|u_{\rm error1}\right| \lesssim \left(m-\omega\right)^{1/4}\mathscr{D} e^{D\LL},
		\end{equation}
		for some constant $D = D(\LL) = O(1)$.
		
		For $\partial_{\omega}u_{\rm error1}$ we have 
		\[\left|\partial_{\omega}u_{\rm error1}\right| \lesssim \left(m-\omega\right)^{-5/4}\frac{e^{D \LL-4\LL \log\LL + D_H \LL}}{\left|1+\Gamma(m) e^{2i \kkq}\right|^2}\mathscr{D}.\]
		
		In particular, we may argue just as we did to establish~\eqref{changethevariableforl1more223} in the proof of Lemma~\ref{uerror2isokwithme} to establish that  for some constant $D'=D'(\LL)=O(1)>D$:
		\begin{align}\label{changethevariableforl1more223uerror2}
			&\int_{(m-\LL^{-p},m-d)}\left|\partial_{\omega}u_{\rm error1}\right|\, d\omega \lesssim  d^{-1/4}e^{D'\LL}\mathscr{D}.
		\end{align}
		Then we may combine~\eqref{uerror2hasanupperbound}, ~\eqref{changethevariableforl1more223uerror2}, and Lemma~\ref{bruteforcestatphase} to establish that 
		\[\left|\int_{m-L^{-p}}^me^{-it\omega}u_{\rm error1}\, d\omega\right| \lesssim (t^*)^{-5/6}\mathscr{D} e^{D'\LL} \lesssim \left(1+\LL\right)^{-4}(t^*)^{-5/6+\delta}\mathscr{D}.\]
	\end{proof}
	
	Next we consider $u_{\rm main}$ in the case when $t^* \leq e^{\tilde{\delta}\LL\log\LL}$.
	\begin{lemma}\label{mainsmalltfourier}Let $u$ satisfy the hypothesis of Proposition~\ref{estfourtransofumainprop} define $u_{\rm main}$ by~\eqref{thisisuerror1}, and let $\tilde{\delta} > 0$ be any small constant. Then, $t^* \leq e^{\tilde{\delta} \LL \log\LL}$ implies that
		\[\sup_{r \in [r_+,R]}\left|\int_{m-\LL^{-p}}^me^{-it^*\omega}u_{\rm main}\, d\omega\right| \lesssim_{\tilde{\delta},R} (t^*)^{-1}\left(1+\LL\right)^{-4}\mathscr{D}. \]
	\end{lemma}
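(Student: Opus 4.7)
The aim is to decompose $u_{\rm main}$ into a ``smooth'' piece whose $\omega$-dependence is only through $e^{i\omega p(s)}$ (and which can therefore be integrated by parts to produce $(t^*)^{-1}$ decay trivially), plus a ``residual'' piece whose coefficient is exponentially small in $\LL$. First, I would rewrite $\cos(\kkq)$ and $\sin(\kkq)$ as complex exponentials and use the algebraic identity
\[
\frac{e^{2i\kkq}}{1+\Gamma(m) e^{2i\kkq}} \;=\; \frac{1}{\Gamma(m)}\left(1 - \frac{1}{1+\Gamma(m) e^{2i\kkq}}\right),
\]
applied to the contributions of $u_{1,-}^{reg}$ and $u_{2,-}^{reg}$ in \eqref{thisisumain}. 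Since each $u_{i,-}^{reg}$ factors as an $\omega$-independent prefactor times $\bigl(C_{cos,-}(m)\cos(\kkq)+C_{sin,-}(m)\sin(\kkq)\bigr)$, this identity produces a smooth term of the form (coefficient independent of $\omega$)$\cdot e^{i\omega p(s)}$, plus a residual term whose numerator is proportional to $C_{cos,-}(m)(\Gamma(m)-1)+iC_{sin,-}(m)(\Gamma(m)+1)$. By Lemma~\ref{someestimatesforgamma} and the bound $|C_{cos,-}(m)|\lesssim \exp(-4\LL\log\LL+D_I\LL)$ from Proposition~\ref{together.prop}, this residual coefficient is $O(\exp(-4\LL\log\LL+D_I\LL))$. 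The $u_{i,+}^{reg}$ contributions to $u_{\rm main}$ are already of residual type since $u_{i,+}^{reg}$ has a built-in factor $\exp(-4\LL\log\LL+D_I\LL)$.

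For the smooth piece, I would compute $\int_{m-\LL^{-p}}^m e^{-it^*\omega} e^{i\omega p(s)}\, d\omega$ explicitly as a boundary expression, yielding $(t^*-p(s))^{-1} \approx (t^*)^{-1}$ (with $p(s) = O_R(1)$) times the constant prefactor. That prefactor involves the values at $\omega=m$ of $u_H(m,s)/C_W(m,\LL)$, $\int u_H(m,s')H(m,s')\,ds'$, $\int w_{1,-}(m,s')H(m,s')\,ds'$, and the $C_{cos,-}(m)/\Gamma(m), C_{sin,-}(m)/\Gamma(m)$ factors. Using Corollary~\ref{uH.cor}, Corollary~\ref{cor.TP1}, Proposition~\ref{WKB.TP1.prop}, Cauchy--Schwarz against the weight $(r-r_+)^{-1}$ built into $\mathscr{D}$, and the polynomial-in-$\LL$ slack from the $L^{10+p}$ weight in the definition of $\mathscr{D}$, this prefactor is controlled by $(1+\LL)^{-4}\mathscr{D}$.

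For the residual piece, I would use the trivial estimate (length of integration domain) $\times$ ($L^{\infty}$ bound). The integration domain has length $\LL^{-p}$, and the $L^{\infty}$ bound of the residual is at most $\exp(-4\LL\log\LL+D_I\LL)/|1+\Gamma(m)e^{2i\kkq}|$ times polynomial-in-$\LL$ factors times $\mathscr{D}$; combined with the worst-case lower bound $|1+\Gamma(m)e^{2i\kkq}|\gtrsim \exp(-4\LL\log\LL+D_I\LL)$ from Lemma~\ref{someestimatesforgamma}, the quotient is bounded by $\exp(O(\LL))\mathscr{D}$. Taking $p$ sufficiently large and using the hypothesis $t^*\leq e^{\tilde\delta \LL\log\LL}$ (so that $(t^*)^{-1}\geq e^{-\tilde\delta\LL\log\LL}$), for any fixed small $\tilde\delta$ the resulting bound $\exp(O(\LL))\LL^{-p}\mathscr{D}$ is easily dominated by $(t^*)^{-1}(1+\LL)^{-4}\mathscr{D}$.

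The main obstacle will be the algebraic bookkeeping in the first step: cleanly extracting the smooth piece requires tracking cancellations between the $\cos/\sin$ factors and the denominator $(1+\Gamma(m)e^{2i\kkq})^{-1}$, and crucially exploiting both $|1+\Gamma(m)|\lesssim \exp(-4\LL\log\LL+D_I\LL)$ and the fact that $C_{cos,-}(m)$ carries the same small factor. A secondary nuisance will be carefully propagating the polynomial-in-$\LL$ losses coming from $u_H(m,s)$, $w_{1,-}(m,s)$ (which involve $\exp(O(\LL))$-type growth) into the bound for the prefactor of the smooth piece, and checking that the $L^{10+p}$ weight inside $\mathscr{D}$ absorbs them to yield the claimed $(1+\LL)^{-4}$ decay.
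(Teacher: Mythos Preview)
Your smooth/residual decomposition is on the right track and corresponds closely to the paper's extraction of the constant $A_1$ in~\eqref{intothegeometricseries}. The gap is in your treatment of the residual. You claim that the residual integral is bounded by $e^{O(\LL)}\LL^{-p}\mathscr{D}$ and that this is ``easily dominated'' by $(t^*)^{-1}(1+\LL)^{-4}\mathscr{D}$ under the hypothesis $t^*\le e^{\tilde\delta\LL\log\LL}$. This last implication is false: when $t^*$ is close to $e^{\tilde\delta\LL\log\LL}$ we have $(t^*)^{-1}\approx e^{-\tilde\delta\LL\log\LL}$, which decays super-polynomially in $\LL$, whereas $\LL^{-p}$ decays only polynomially. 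No fixed choice of $p$ can make $e^{O(\LL)}\LL^{-p}\lesssim e^{-\tilde\delta\LL\log\LL}(1+\LL)^{-4}$ hold for all large $\LL$.

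The paper avoids this by expanding $(1+\Gamma(m)e^{2i\kkq})^{-1}$ as a geometric series and, after the telescoping identity $(e^{2i\kkq}-1)\sum_{q\ge 0}(-\Gamma)^qe^{2i\kkq q}=((-\Gamma)^{-1}-1)\sum_{q\ge 1}(-\Gamma)^qe^{2i\kkq q}-1$, writing $u_{\rm main}=A_1+A_2\sum_{q\ge 1}(-\Gamma(m))^q e^{2i\kkq q}$ with $|A_2|\lesssim e^{-4\LL\log\LL+O(\LL)}\mathscr{D}$. The sum is then split at $q=\hat C t^*$: for $q\le \hat C t^*$ one bounds by $|A_2|\cdot\hat C t^*\lesssim e^{-(4-\tilde\delta)\LL\log\LL+O(\LL)}\mathscr{D}$, which \emph{is} dominated by $(t^*)^{-1}(1+\LL)^{-4}\mathscr{D}$ since the exponent $-(4-\tilde\delta)\LL\log\LL$ beats $-\tilde\delta\LL\log\LL$; for $q>\hat C t^*$ the phase $\Phi(t^*,2q,\cdot)$ has no critical point in $(m-\LL^{-p},m)$ and Lemma~\ref{precisestatphase} gives per-term decay $\lesssim q^{-1}\LL^{-3p/2}$, which sums. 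Your approach could alternatively be repaired by replacing the crude $L^\infty\times(\text{length})$ bound on the residual with a good/bad decomposition of the integration domain as in the proof of Lemma~\ref{uerror1smalltime}, exploiting that the set where $\inf_n|\breve k-n|\le e^{-\frac12\LL\log\LL}$ has $\omega$-measure $\lesssim e^{-\frac12\LL\log\LL}$.
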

	\begin{proof}We let all constants in this section depend on $R$.
		
		Set
		\[H_{1,\pm} \doteq \frac{e^{i\omega s}u_H\left(m,s\right)}{C_W\left(m,\LL\right)}\int_s^{+\infty} w_{1,\pm}\left(m,s'\right)H\left(m,s'\right)\, ds',\qquad H_{2,\pm} \doteq \frac{e^{i\omega s}w_{1,\pm}\left(m,s\right)}{C_W\left(m,\LL\right)}\int_{-\infty}^su_H\left(m,s'\right) H\left(m,s'\right)\, ds'.\]
		In view of Theorem~\ref{TPsection.mainprop}, there exists a constant $D$ so that we have the estimates
		\begin{equation}\label{estiamtesh1plus}
			\left|H_{1,+}\right| + \left|H_{2,+}\right| \lesssim e^{D\LL}\mathscr{D},
		\end{equation}
		\begin{equation}\label{estimateh1minus}
			\left|H_{1,-}\right| + \left|H_{2,-}\right| \lesssim \left(1+\LL\right)^{-4}\mathscr{D}.
		\end{equation}
		
		We now proceed to re-write~\eqref{thisisumain} as a suitable sum of exponentials. We start with the identity
		\[\frac{1}{1+ \Gamma(m) e^{2i\kkq}} = \sum_{q=0}^{\infty} \left(-\Gamma(m)\right)^qe^{2i\kkq q},\]
		and also write
		\[\cos\left(\kkq\right) = \frac{1}{2}\left(e^{i\kkq}+e^{-i\kkq}\right),\qquad \sin\left(\kkq\right) = \frac{1}{2i}\left(e^{i\kkq} - e^{-i\kkq}\right).\]
		We also note the identity
		\[\left(e^{2i \kkq} - 1\right)\sum_{q=0}^{\infty}\left(-\Gamma(m)\right)^qe^{2i\kkq q} = \left(\left(-\Gamma(m)\right)^{-1}-1\right)\sum_{q=1}^{\infty}\left(-\Gamma(m)\right)^qe^{2i \kkq q} - 1.\]
		After some simplification, and keeping~\eqref{estiamtesh1plus} and~\eqref{estimateh1minus} in mind, we find that we can write 
		\begin{equation}\label{intothegeometricseries}
			u_{\rm main} = A_1 + A_2\sum_{q=1}^{\infty}\left(-\Gamma(m)\right)^q e^{2i \kkq q},
		\end{equation}
		where $A_1$ and $A_2$ are constants satisfying
		\begin{equation}\label{boundA1yay}
			\left|A_1\right| \lesssim \left(1+\LL\right)^{-4}\mathscr{D},
		\end{equation}
		\begin{equation}\label{boundA2yay}
			\left|A_2\right| \lesssim \exp\left(-4\LL \log\LL + D_I\LL\right)e^{\tilde{D}\LL}\mathscr{D},
		\end{equation}
		for some constant $\tilde{D}$.
		
		Of course we have that
		\begin{equation}\label{trivialA1}
			\left|\int_{m-\LL^{-p}}^me^{-it^*\omega} A_1\, d\omega\right| \lesssim (t^*)^{-1}\left(1+\LL\right)^{-4}\mathscr{D}.
		\end{equation}
		Thus we focus on the Fourier transform of the term $A_2\sum_{q=1}^{\infty}\left(-\Gamma(m)\right)^q e^{2i \kkq q}$. Let $\hat{C}$ be a sufficiently large constant and write
		\[A_2\sum_{q=1}^{\infty}\left(-\Gamma(m)\right)^qe^{2i\kkq q} = A_2 \sum_{q=1}^{\lceil\hat{C}t\rceil}\left(-\Gamma(m)\right)^qe^{2i \kkq q} + A_2\sum_{q=\lceil\hat{C}t\rceil+1}^{\infty}\left(-\Gamma(m)\right)^qe^{2i \kkq q}.\]
		Keeping in mind that $t^* \leq e^{\tilde{\delta} \LL \log\LL}$ we have that
		\begin{align*}
			\left|A_2\int_{m-\LL^{-p}}^me^{-it^*\omega} \sum_{q=1}^{\lceil \hat{C} t^*\rceil}\left(-\Gamma(m)\right)^qe^{2i \kkq q}\, d\omega\right| &\lesssim \left|A_2\right|\left(\hat{C}t^*+1\right)
			\\ \nonumber &\lesssim \hat{C}e^{-4\LL\log\LL + D_I\LL}e^{\tilde{D}\LL}e^{\tilde{\delta}\LL\log \LL}\mathscr{D}
			\\ \nonumber &\lesssim_{\tilde{\delta}} \left(1+\LL\right)^{-4}(t^*)^{-1}\mathscr{D}.
		\end{align*}
		For the second part of the sum we recall the definition of the phase function $\Phi$ from~\eqref{thisisaphasefunction} and note that $q \geq \lceil \hat{C} t^*\rceil + 1$ implies that 
		\[t^*\partial_{\omega}\Phi\left(t^*,2q,m-\LL^{-p}\right) \lesssim -q\LL^{\frac{3p}{2}}.\]
		In particular, Lemma~\ref{precisestatphase} implies that 
		\begin{align*}
			\left|A_2\int_{m-\LL^{-p}}^me^{-it^*\omega}\sum_{q = \lceil \hat{C}t^*\rceil + 1}^{\infty}\left(-\Gamma(m)\right)^qe^{2i\kkq}\, d\omega\right| &\lesssim \exp\left(-4\LL \log\LL + D_I\LL\right)e^{\tilde{D}\LL}\mathscr{D}\sum_{q=\lceil \hat{C} t^*\rceil + 1}^{\infty}q^{-1}\left(-\Gamma(m)\right)^q
			\\ \nonumber &\lesssim \mathscr{D} \exp\left(-\LL\log \LL\right)
			\\ \nonumber &\lesssim_{\tilde{\delta}} \left(1+\LL\right)^{-4}(t^*)^{-1}\mathscr{D}.
		\end{align*}

	\end{proof}
	
	Finally, we come to $u_{\rm main}$ in the case when $t^* \geq e^{\tilde{\delta}\LL \log \LL}$.
	\begin{lemma}Let $u$ satisfy the hypothesis of Proposition~\ref{estfourtransofumainprop} define $u_{\rm main}$ by~\eqref{thisisuerror1}, and let $\delta, \tilde{\delta} > 0$ be any small constant. Then, there exists a constant $D > 0$ so that $t^* \geq e^{\tilde{\delta} \LL \log\LL}$ implies that
		\begin{align*}		
			&\sup_{r \in [r_+,R]}\left|\int_0^{\infty}e^{-it^*\omega} u_{\rm main}\, d\omega\right| \lesssim_{R,\tilde{\delta}} 
			\\ \nonumber &\qquad \left(1+L\right)^{-4}\mathscr{D} (t^*)^{-5/6+\delta}+\left(1+\LL\right)^{-4}\mathscr{D} \exp\left(-4\LL\log\LL + D_I\LL\right)e^{D\LL}(t^*)^{-5/6}\times 
			\\ \nonumber &\qquad \qquad \qquad \qquad\qquad \qquad \left|\sum_{q=1}^{\lceil (t^*)^{1/2}\rceil}e^{-it\Phi\left(t^*,2q,\omega_c\left(t^*,2q\right)\right)}\left((t^*)^{-2/3}\partial^2_{\omega\omega}\Phi\left(t^*,2q,\omega_c\left(t^*,2q\right)\right)\right)^{-1/2}\left(-\Gamma(m)\right)^q\right|,
		\end{align*}
		where $\Phi$ is the phase function from Lemma~\ref{precisestatphase}
	\end{lemma}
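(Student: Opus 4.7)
The plan is to follow the geometric-series strategy used in the proof of Lemma~\ref{mainsmalltfourier}, but to replace the crude $L^\infty$ estimate used there (which suffices only when $t^* \leq e^{\tilde\delta \LL\log\LL}$) by the sharp stationary-phase estimate of Lemma~\ref{precisestatphase}, which becomes applicable precisely because $t^*$ is now large enough for the critical points of the phases $\Phi(t^*, 2q, \cdot)$ to appear inside the integration range $(m-\LL^{-p}, m)$. Repeating the algebraic manipulation from the proof of Lemma~\ref{mainsmalltfourier}, I first decompose
\[u_{\rm main} = A_1 + A_2\sum_{q=1}^{\infty}(-\Gamma(m))^q e^{2iq\kkq}, \qquad |A_1| \lesssim (1+\LL)^{-4}\mathscr{D}, \qquad |A_2| \lesssim e^{-4\LL\log\LL + D_I\LL + \tilde D\LL}\mathscr{D},\]
so that the $A_1$ contribution is $\lesssim (1+\LL)^{-4}\mathscr{D}(t^*)^{-1}$ by integration by parts (after using $e^{-it^*\omega}e^{i\omega p(s)} = e^{-it\omega}$ with $t = t^*-p(s)$), and I split the remaining sum at $Q \doteq \lceil(t^*)^{1/2}\rceil$.

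For the tail $q > Q$, I use the geometric-tail identity $\sum_{q>Q}(-\Gamma(m))^q e^{2iq\kkq} = (-\Gamma(m))^{Q+1}e^{2i(Q+1)\kkq}/(1+\Gamma(m)e^{2i\kkq})$ combined with the $L^1$ estimate
\[\int_{m-\LL^{-p}}^m \frac{d\omega}{|1+\Gamma(m)e^{2i\kkq}|} \lesssim \LL^{1-p}\log\LL,\]
which follows from the change of variable $\omega \mapsto \tilde k$, whose Jacobian $|d\omega/d\tilde k| \sim \tilde k^{-3}$ together with the lower bound $\tilde k \gtrsim \LL^{p/2}$ throughout the integration range produces the $\LL^{1-p}$ prefactor, in combination with the integer-interval bound $\int_{n-1/2}^{n+1/2} d\tilde k/|1+\Gamma(m)e^{2i\pi\tilde k}| \lesssim \log(1/\eta) \lesssim \LL\log\LL$ (with $\eta = 1-|\Gamma(m)|$) used in the proof of Lemma~\ref{uerror2isokwithme}. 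The tail contribution is thus $\lesssim |A_2|\LL^{1-p}\log\LL\cdot\mathscr{D}$, which is absorbed into the error $(1+L)^{-4}\mathscr{D}(t^*)^{-5/6+\delta}$ once $\LL$ is sufficiently large depending on $p$.

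For each $q$ with $1 \leq q \leq Q$, I apply Lemma~\ref{precisestatphase} with $A = 2q$ to the integral $\int_{m-\LL^{-p}}^m e^{-it\omega + 2iq\kkq}d\omega$: when the critical point $\omega_c(t, 2q)$ lies inside the range, the main stationary-phase term reproduces the $q$-th summand of the claim (the substitution of $t^*$ for $t$ in the arguments of $\Phi$ and $\omega_c$ introduces only $O_R(1)$ multiplicative corrections since $|t-t^*| = |p(s)| = O_R(1)$, and a direct computation shows $t[\Phi(t,2q,\omega_c(t,2q)) - \Phi(t^*,2q,\omega_c(t^*,2q))] = O((t^*)^{-1/3})$ for $q \leq Q$); for the small values of $q$ where no critical point lies in the range, the alternative bound $|t^*\partial_\omega\Phi(t^*,2q, m-\LL^{-p})|^{-1}$ of Lemma~\ref{precisestatphase} gives a small contribution that sums to $O(\log\LL/\LL^{3p/2})$. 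The chief obstacle is controlling the residual stationary-phase errors $(t^*)^{-1}(t^*/q)^{\delta/2} + (t^*/q)^{-D}$ summed over $q \leq Q$ with weight $|\Gamma(m)|^q \leq e^{-\eta q}$: exploiting the exponential decay to cap the effective range at $q \lesssim \min(Q, 1/\eta)$, the total error is bounded by $|A_2|(t^*)^{-1+\delta/2}\min(Q^{1-\delta/2}, \eta^{\delta/2-1})\mathscr{D}$, and in either regime ($Q\eta \lesssim 1$ giving $|A_2|(t^*)^{-1/2+\delta/2}\mathscr{D}$, versus $Q\eta \gtrsim 1$ giving $e^{D'\LL}(t^*)^{-1+\delta}\mathscr{D}$), a direct comparison with $(1+L)^{-4}\mathscr{D}(t^*)^{-5/6+\delta}$ using $t^* \geq e^{\tilde\delta\LL\log\LL}$ yields the claim provided $\LL$ is taken large enough depending on $\delta$ and $\tilde\delta$.
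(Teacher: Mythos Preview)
Your treatment of the main sum $1\leq q\leq Q$ via Lemma~\ref{precisestatphase} is correct and matches the paper's approach. The gap is in your treatment of the tail $q>Q$.

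Your tail bound
\[
\left|A_2\int_{m-\LL^{-p}}^m e^{-it^*\omega}\,\frac{(-\Gamma(m))^{Q+1}e^{2i(Q+1)\kkq}}{1+\Gamma(m)e^{2i\kkq}}\,d\omega\right|\;\leq\;|A_2|\int_{m-\LL^{-p}}^m\frac{d\omega}{|1+\Gamma(m)e^{2i\kkq}|}\;\lesssim\;|A_2|\,\LL^{1-p}\log\LL
\]
discards the oscillation $e^{-it^*\omega}$ entirely, so the right–hand side is \emph{independent of $t^*$}. You claim this is absorbed into $(1+L)^{-4}\mathscr{D}(t^*)^{-5/6+\delta}$, but the lemma must hold for every $t^*\geq e^{\tilde\delta\LL\log\LL}$ with $\LL$ fixed; as $t^*\to\infty$ the target goes to zero while your bound stays $\sim e^{-4\LL\log\LL+O(\LL)}$. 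Concretely, for $t^*$ near the top of the range $e^{\tilde\delta\LL\log\LL}\leq t^*\lesssim e^{8\LL\log\LL}$ (where $|\Gamma|^{Q}\sim 1$ gives no help either), one has $(t^*)^{-5/6+\delta}\sim e^{-(20/3-8\delta)\LL\log\LL}$, which is much smaller than $e^{-4\LL\log\LL}$. So the absorption fails.

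The paper does not re-sum the tail. It sets $F(\omega)=A_2\sum_{q>Q}(-\Gamma(m))^q e^{2iq\kkq}$, splits the $\omega$--integral at $\omega=m-\hat c\,(t^*)^{-1/3}$, and extracts $t^*$--decay on each piece: on $(m-\LL^{-p},\,m-\hat c(t^*)^{-1/3})$ it applies the brute--force Lemma~\ref{bruteforcestatphase} (using $|F|\lesssim e^{\tilde D\LL}\mathscr{D}$ and an $L^1$ bound on $\partial_\omega F$) to get $\lesssim e^{\tilde D\LL}\mathscr{D}(t^*)^{-5/6}\lesssim (1+\LL)^{-4}\mathscr{D}(t^*)^{-5/6+\delta}$; on $(m-\hat c(t^*)^{-1/3},\,m)$ it observes that for $q>Q$ the phase satisfies $t^*\partial_\omega\Phi(t^*,2q,m-\hat c(t^*)^{-1/3})\lesssim -q(t^*)^{1/2}$, so Lemma~\ref{precisestatphase}~(2) applied term--by--term gives $\lesssim |A_2|(t^*)^{-1/2}\sum_{q>Q}q^{-1}|\Gamma(m)|^q$, which again is $\lesssim(1+\LL)^{-4}\mathscr{D}(t^*)^{-5/6+\delta}$. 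The oscillation of $e^{-it^*\omega}$ is essential in both parts; your $L^1$ bound cannot replace it.
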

	\begin{proof}We let all constants in this section depend on $R$.
		
		We fist note that the steps in the proof of Lemma~\ref{mainsmalltfourier} which lead to~\eqref{intothegeometricseries},~\eqref{boundA1yay},~\eqref{boundA2yay}, and~\eqref{trivialA1} go through also in the setting of this lemma.
		
		We thus focus on estimating
		\[\left|A_2\int_{m-\LL^{-p}}^m e^{-it\omega} \sum_{q=1}^{\infty}\left(-\Gamma(m)\right)^qe^{2i\kkq}\, d\omega\right|,\]
		where
		\[\left|A_2\right| \lesssim \exp\left(-4\LL\log\LL + D_1\LL\right)e^{\tilde{D}\LL}.\]
		
		We write 
		\begin{align*}
			&\left|A_2\int_{m-\LL^{-p}}^m e^{-it\omega} \sum_{q=1}^{\infty}\left(-\Gamma(m)\right)^qe^{2i\kkq}\, d\omega\right| \lesssim 
			\underbrace{\left|A_2\int_{m-\LL^{-p}}^m e^{-it\omega} \sum_{q=1}^{\lceil (t^*)^{1/2} \rceil}\left(-\Gamma(m)\right)^qe^{2i\kkq}\, d\omega\right|}_{\doteq I}
			\\ \nonumber &\qquad +\underbrace{\left|A_2\int_{m-\LL^{-p}}^m e^{-it\omega} \sum_{q=\lceil  (t^*)^{1/2}\rceil + 1}^{\infty}\left(-\Gamma(m)\right)^qe^{2i\kkq}\, d\omega\right|}_{\doteq II}.
		\end{align*}
		
		We will now estimate $I$ and $II$ separately. We start with $I$. Directly applying Lemma~\ref{precisestatphase} leads to 
		\begin{align}\label{thisisthestimateofrIfromstatphase}
			&\left|I\right| \lesssim \left(1+\LL\right)^{-4}\mathscr{D} \exp\left(-4\LL\log\LL + D_I\LL\right)e^{D\LL}(t^*)^{-5/6}\times 
			\\ \nonumber &\qquad \qquad \qquad \qquad\qquad \left|\sum_{q=1}^{\lceil (t^*)^{1/2}\rceil}e^{-it\Phi\left(t^*,2q,\omega_c\left(t^*,2q\right)\right)}\left((t^*)^{-2/3}\partial^2_{\omega\omega}\Phi\left(t^*,2q,\omega_c\left(t^*,2q\right)\right)\right)^{-1/2}\left(-\Gamma(m)\right)^q\right|.
		\end{align}

		We now turn to the term $II$. Set
		\[F\left(\omega\right) \doteq A_2\sum_{q=\lceil (t^*)^{1/2}\rceil+1}^{\infty}\left(-\Gamma(m)\right)^qe^{2i\kkq}.\]
		We have the following estimates for $F$ and its derivatives:
		\begin{equation}\label{estimatesforFanditsderivative}
			\left|F\right| \lesssim e^{\tilde{D}\LL}\mathscr{D},\qquad \left|\partial_{\omega}F\right| \lesssim \left(m-\omega\right)^{-3/2}e^{\tilde{D}\LL}\mathscr{D}.
		\end{equation}
		Let $\hat{c} > 0$ be a small constant. Lemma~\ref{bruteforcestatphase} and the bounds~\eqref{estimatesforFanditsderivative} imply that
		\begin{equation}\label{startwithbruteforceforII}
			\left|\int_{m-\LL^{-p}}^{m-\hat{c} t^{-1/3}}e^{-it^*\omega}F(\omega)\, d\omega\right| \lesssim_{\hat{c}} \mathscr{D} (t^*)^{-5/6} e^{\tilde{D}\LL} \lesssim_{\hat{c}, \tilde{\delta},\delta} \left(1+\LL\right)^{-4}\mathscr{D} (t^*)^{-5/6+\delta}.
		\end{equation}
		
		On the other hand, if $\hat{c} > 0$ is sufficiently small, then we have that 
		\begin{equation}\label{thephaseisnottoosmallwow}
			q \geq \lceil (t^{*})^{1/2}\rceil \Rightarrow t^{*}\partial_{\omega}\Phi\left(t^*,2q,m-\hat{c}(t^{*})^{-1/3}\right) \lesssim -q (t^{*})^{1/2}.
		\end{equation}
		We then fix the value of $\hat{c}$ so that~\eqref{thephaseisnottoosmallwow} holds.
		
		In view of~\eqref{startwithbruteforceforII},~\eqref{thephaseisnottoosmallwow}, and Lemma~\ref{precisestatphase} we thus obtain that
		\begin{align*}
			\left|II\right| &\lesssim \left(1+\LL\right)^{-4}\mathscr{D} (t^*)^{-5/6+\delta} + \mathscr{D} e^{\tilde{D}\LL}\exp\left(-4\LL\log\LL + D_I\LL\right) (t^{*})^{-1/2}\sum_{q=\lceil (t^{*})^{1/2}\rceil+1}^{\infty}q^{-1}\left(-\Gamma(m)\right)^q
			\\ \nonumber &\lesssim \left(1+\LL\right)^{-4}\mathscr{D} (t^*)^{-5/6+\delta}, 
		\end{align*}
		which completes the proof.
	\end{proof}
	
	We note that by combining all of the Lemmas of this section, we have  proven Proposition~\ref{estfourtransofumainprop}.
	\subsection{Proof of the main theorem}
	In this section we will finally establish our main results.
	
	We start by estimating the exponential sum which appears in Proposition~\ref{estfourtransofumainprop}.
	\begin{lemma}\label{weusethistocontrolthesum}Let $\Phi$ be the phase function from Lemma~\ref{precisestatphase}. Define
		\begin{align*}
			Y\left(t^*,\LL\right) &\doteq \exp\left(-4\LL\log\LL + D_1\LL\right)\times 
			\\ \nonumber &\left|\sum_{q=1}^{\lceil (t^*)^{1/2}\rceil}e^{-it\Phi\left(t^*,2q,\omega_c\left(t^*,2q\right)\right)}\left((t^*)^{-2/3}\partial^2_{\omega\omega}\Phi\left(t^*,2q,\omega_c\left(t^*,2q\right)\right)\right)^{-1/2}\left(-\Gamma(m)\right)^q\right|.
		\end{align*}
		Then we have two different estimates for $Y\left(t^*,\LL\right)$, the first of which is unconditional, and the second of which depends on the assumption that $\left(3\epsilon,\frac{2}{3}+\epsilon\right)$ is an exponent pair (see Appendix~\ref{appendix.conj}).
		\begin{enumerate}
			\item There exists $\delta \in \left(0,\frac{1}{23}\right)$ so that $\left|Y\left(t^*,\LL\right)\right| \lesssim (t^*)^{\delta}$.
			\item Let $\epsilon > 0$ and assume that $\left(3\epsilon,\frac{2}{3}+\epsilon\right)$ is an exponent pair. Then, $\left|Y\left(t^*,\LL\right)\right| \lesssim (t^*)^{2\epsilon}$.
		\end{enumerate}
	\end{lemma}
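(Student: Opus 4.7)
The plan is to combine dyadic decomposition, exponent pair bounds, Abel summation, and the trivial absolute bound, and to optimize over the parameter measuring how close $|\Gamma(m)|$ is to $1$. By Lemma~\ref{foroscest} and Lemma~\ref{someestimatesforgamma}, the sum appearing in $Y$ can be rewritten as $e^{-it^*m}\widetilde{S}$ with
\[
\widetilde{S} = \sum_{q=1}^{N_0} a_q\, e^{i\psi(q)},\qquad N_0 = \lceil (t^*)^{1/2}\rceil,
\]
where $|a_q| \asymp q^{1/3} r_\Gamma^q$, $\psi(q) = -\tilde{c}(t^*)^{1/3}(2q)^{2/3} + q(\theta_\Gamma-\pi) + O_\LL(1)$, and $\eta := \exp(-4\LL\log\LL + D_I\LL) \asymp 1-r_\Gamma$. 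Since $|\theta_\Gamma-\pi| \lesssim \eta$, the linear-in-$q$ perturbation of the phase is negligible for the derivative estimates, and one has $|\psi^{(r)}(q)| \asymp F/q^r$ for $r \geq 1$ with $F = (t^*)^{1/3} q^{2/3}$.

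I split $[1,N_0]$ into dyadic blocks $[N_j, 2N_j]$ and, on each one, apply the bound associated to an exponent pair $(k,l)$:
\[
\left|\sum_{N_j \leq q < 2 N_j} e^{i\psi(q)}\right| \lesssim F^k N_j^{l-k} = (t^*)^{k/3} N_j^{l-k/3}.
\]
Abel summation (noting that $|a_q|$ is unimodal, with maximum at $q \sim \eta^{-1}$ and bounded variation on each side of the peak) then yields, per block,
\[
\left|\sum_{N_j \leq q < 2 N_j} a_q e^{i\psi(q)}\right| \lesssim N_j^{1/3} r_\Gamma^{N_j}\, (t^*)^{k/3} N_j^{l-k/3}.
\]
Summing over $j$ and using $r_\Gamma^{N_j} \leq e^{-c\eta N_j}$, the dyadic sum is dominated by the terms with $N_j$ near $M := \min(N_0, \eta^{-1})$, so that $|\widetilde{S}| \lesssim (t^*)^{k/3} M^{l+1/3-k/3}$ and hence $Y \lesssim \eta (t^*)^{k/3} M^{l+1/3-k/3}$. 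In parallel, the trivial bound (i.e., the exponent pair $(0,1)$ applied with the same Abel argument) gives $Y \lesssim \eta M^{4/3}$.

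For the conditional statement~(2), the assumed exponent pair $(k,l)=(3\epsilon, 2/3+\epsilon)$ gives $l+1/3-k/3 = 1$ and $k/3 = \epsilon$, so that $Y \lesssim (t^*)^\epsilon \eta M \leq (t^*)^\epsilon \leq (t^*)^{2\epsilon}$, using that $\eta M \leq 1$ in both regimes $M = N_0$ (requiring $\eta \leq N_0^{-1}$) and $M = \eta^{-1}$. For the unconditional statement~(1), I invoke Bourgain's result~\cite{NT2}, which yields the exponent pair $(13/84, 55/84)$; for this pair one has $l+1/3-k/3 = 59/63$ and $k/3 = 13/252$. Setting $\eta = (t^*)^{-\gamma}$, the exponent pair bound reads $Y \lesssim (t^*)^{13/252 - 4\gamma/63}$ for $\gamma \in [0,1/2]$ (with a strictly better bound for $\gamma \geq 1/2$), whereas the trivial bound reads $Y \lesssim (t^*)^{\gamma/3}$. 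Taking the minimum of the two, the two bounds cross at $\gamma = 13/100$, where both equal $(t^*)^{13/300}$; since $13 \cdot 23 = 299 < 300$, one has $13/300 < 1/23$, as required.

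The main obstacle is that the unconditional exponent beats $1/23$ only by the narrow margin $299 < 300$, which forces the use of the specific Bourgain exponent pair and the precise optimization across all values of $\gamma$. A secondary technical point is justifying the Abel summation in the presence of the amplitude $|a_q| = q^{1/3} r_\Gamma^q$ with its competing polynomial growth and geometric decay; this is handled by noting the unimodality of $|a_q|$ and the resulting factor of $M^{1/3}$ at the dyadic peak.
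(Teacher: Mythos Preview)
Your argument is correct and follows the same overall strategy as the paper: use Abel summation/summation by parts to strip off the amplitude $q^{1/3}(-\Gamma(m))^q$, then invoke exponent pair bounds on the pure exponential sum (with Lemma~\ref{foroscest} furnishing the phase condition~\eqref{f.condition}). The organization differs slightly: the paper performs a single global summation by parts to reduce directly to $\sup_{1\le k\le N_0} P_k k^{-2/3}$ (a quantity independent of $\eta$), whereas you decompose dyadically first and carry $\eta$ through the optimization.

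The one substantive difference is in the unconditional bound. You pair Bourgain's $(13/84,55/84)$ against the \emph{trivial} bound and optimize over $\gamma=-\log_{t^*}\eta$, obtaining $13/300<1/23$. The paper instead pairs Bourgain against the near-trivial exponent pair $(1/915+\epsilon,\,1-10/915+\epsilon)$ of~\cite{NT1} and optimizes a split point in $k$, obtaining the slightly sharper $959/22224<1/23$. Both suffice for the stated conclusion; your version is more self-contained (it needs only~\cite{NT2}), while the paper's buys a marginally smaller exponent at the cost of an additional input from the exponent-pair literature.
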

	\begin{proof}For all $1 \leq k \leq \lceil (t^*)^{1/2}\rceil$, set
		\[P_k \doteq \left|\sum_{q=1}^ke^{-it\Phi\left(t^*,2q,\omega_c\left(t^*,2q\right)\right)}\right|.\]
		It will also be convenient to define
		\[b_{t^*,q} \doteq \left((t^*)^{-2/3}\partial^2_{\omega\omega}\Phi\left(t^*,2q,\omega_c\left(t^*,2q\right)\right)\right)^{-1/2}.\]
		We note that it follows from (a slight extension) of the analysis of the phase function $\Phi$ in Lemma~\ref{precisestatphase} that $1 \leq q \leq \lceil (t^*)^{1/2}\rceil$ implies that 
		\begin{equation}\label{phasecancellation}
			\left|b_{t^*,q}\right| \lesssim q^{1/3},\qquad \left|b_{t^*,q+1} - b_{t^*,q}\right| \lesssim q^{-2/3}.
		\end{equation}

		Summation by parts yields
		\begin{align}\label{thefirstsummatingbypartsfory}
			\left|Y\right| &\leq \exp\left(-4\LL\log\LL + D_1\LL\right)\sum_{q=1}^{\lceil (t^*)^{1/2}\rceil - 1}P_q\left(\left(-\Gamma(m)\right)^qb_{t^*,q}-\left(-\Gamma(m)\right)^{q+1}b_{t^*,q+1}\right)
			\\ \nonumber &\qquad + \exp\left(-4\LL\log\LL + D_1\LL\right)P_{\lceil (t^*)^{1/2}\rceil}\left(-\Gamma(m)\right)^{\lceil (t^*)^{1/2}\rceil}(t^*)^{1/6}
		\end{align}
		In view of the bound 
		\[x \geq 0 \Rightarrow xe^{-Nx} \lesssim N^{-1},\]
		and the fact that $1+\Gamma \sim \exp\left(-4\LL\log\LL + D_1\LL\right)$,
		we may thus conclude from~\eqref{thefirstsummatingbypartsfory} that
		\begin{align*}
			\left|Y\right| &\lesssim   \sup_{1 \leq k \leq \lceil (t^*)^{1/2}\rceil}\left(P_k k^{-2/3}\right). 
		\end{align*}
		
		In view of Lemma~\ref{foroscest}, the proof is then concluded by an application of Lemma~\ref{usenumbertheory} from Appendix~\ref{appendix.conj}.

	\end{proof}
	
	Finally, we are ready to prove our main Theorem~\ref{thm.intro}.
	
	\begin{proof}Let $\chi(t^*)$ be a cut-off function which is identically $0$ for $t^* \leq 0$ and identically $1$ for $t^* \geq 0$. We set $\phi_{\rm cut} \doteq \chi \phi$ and find that
		\[\left(\Box_g-m^2\right)\phi_{\rm cut} = F \doteq 2\nabla\chi\cdot\nabla\phi + \left(\Box\chi\right)\phi.\]
		We may expand $\phi_{\rm cut} = \sum_{\MM,L}\left(\phi_{\rm cut}\right)_{\MM,L}e^{i\MM\varphi}Y_{\MM,L}(\theta)$ and $F = \sum_{\MM,L}F_{\MM,L}e^{i\MM\varphi}Y_{\MM,L}(\theta)$ into their corresponding spherical harmonics.
		
		We then define
		\[u \doteq r\int_{-\infty}^{\infty}e^{it\omega}\left(\phi_{\rm cut}\right)_{\MM,L}\, dt,\qquad H \doteq r\left(1-\frac{2M}{r} + \frac{e^2}{r^2}\right)\int_{-\infty}^{\infty}e^{it\omega}F_{\MM,L}\, dt\]
		By Proposition 4.1 of~\cite{KGSchw1} we will have that $u$ will be an outgoing solution of~\eqref{eq:mainVrhs}. Moreover, we have that $u\left(s,-\omega,m,L\right) = \overline{u(\omega)}$. By Fourier inversion we thus have that
		\begin{align*}
			\left|\phi\left(t^*,r,\theta,\varphi\right)\right| &\lesssim \sup_{\MM,L}\left(1+L\right)^4\left|\phi_{\MM,l}\left(t^*,r\right)\right|
			\\ \nonumber &\lesssim \sup_{\MM,L}\left(1+L\right)^4\left|\int_{-\infty}^{\infty}e^{-it^*\omega}\left(e^{i\omega p(s)}u\right)\, d\omega\right|
			\\ \nonumber &= \sup_{\MM,L}\left(1+L\right)^4\left|\int_0^{\infty}e^{-it^*\omega}\left(e^{i\omega p(s)}u\right)\, d\omega+\overline{\int_0^{\infty}e^{-it^*\omega}\left(e^{i\omega p(s)}u\right)\, d\omega}\right|
			\\ \nonumber &\leq 2\sup_{\MM,L}\left(1+L\right)^4\left|\int_0^{\infty}e^{-it^*\omega}\left(e^{i\omega p(s)}u\right)\, d\omega\right|.
		\end{align*}
		Thus the proof follows from a combination of Proposition~\ref{estfourtransofumainprop}, Lemma~\ref{weusethistocontrolthesum}, and the fact, easily established with finite in time energy estimates, that the norm $\mathscr{D}$ from Proposition~\ref{estfourtransofumainprop} may be controlled by  $\left\vert\left\vert \phi_0\right\vert\right\vert_{H^N\left(\{t^* = 0\}\right)}+\left\vert\left\vert \phi_0\right\vert\right\vert_{H^{N-1}\left(\{t^* = 0\}\right)} $ for suitably large $N$.

	\end{proof}
	
	\appendix

	\section{Modified Bessel functions}\label{bessel.section} This section is based on classical knowledge on Bessel functions, see e.g.\ \cite{handbook,olver}.	
	First, we recall the Bessel equation  for any $\alpha \in \mathbb{C}$:
	\begin{equation}\label{bessel.eq}
		x^2 \frac{d^2 u }{dx^2}+  x \frac{du}{dx} +(x^2-\alpha^2) u =0.
	\end{equation}  A basis of solutions of \eqref{bessel.eq} is given by $J_{\alpha}(x)$,  $Y_{\alpha}(x)$  the Bessel functions of the first and second kind respectively. In what follows, we will take $\alpha = i \nu$, where $\nu \in \mathbb{R}$.
	
	For any constant $\nu \in \mathbb{R}-\{0\}$, we introduce $K_{\pm}(\nu,x)$ to be solutions of the following Bessel equation \begin{equation}\label{modified.bessel.eq}
		x^2 \frac{d^2 u }{dx^2}+  x \frac{du}{dx} -(x^2-\nu^2) u =0,
	\end{equation}  where \begin{align}
		&  K_{+}(\nu,x)=\frac{\pi}{2 \sinh(\nu \pi)} \left[ I_{i\nu}(x)+I_{-i\nu}(x)\right],  \label{K+.def}\\ &  K_{-}(\nu,x)= K_{i\nu}(x)=
		\frac{\pi}{2\sinh(\pi \nu)} \frac{I_{i\nu}(x)-I_{-i\nu}(x)}{i}\label{K-.def}
	\end{align} where    $I_{\nu}$,    $K_{\nu}(x)$ are  the modified Bessel function of the  first and second kind, respectively. Note the following asymptotics as $x\rightarrow 0$: \begin{align}
		&  K_{+}(\nu,x)=\sqrt{\frac{\pi}{\nu \sinh(\pi \nu)}} \sin \left( \nu \log(\frac{x}{2}) - \theta_{\nu} \right) +O(x^2),\\  &  K_{-}(\nu,x) = -\sqrt{\frac{\pi}{\nu \sinh(\pi \nu)}} \cos \left( \nu \log(\frac{x}{2}) - \theta_{\nu} \right) +O(x^2),
	\end{align} where $\theta_{\nu} \in [0,2\pi]= arg(\Gamma(i\nu))$; and the following asymptotics as $x\rightarrow +\infty$: \begin{align}
		&  K_{+}(\nu,x)  = \frac{1}{ \sinh(\pi \nu )} \sqrt{\frac{\pi}{2x}} \exp(x)  \left[1 \mathbf{+} \frac{\nu^2+\frac{1}{4}}{2x} + O(x^{-2}) \right],\\  &  K_{-}(\nu,x) = \sqrt{\frac{\pi}{2x}} \exp(-x) \left[1- \frac{\nu^2+\frac{1}{4}}{2x} + O(x^{-2}) \right].
	\end{align}

	\section{Volterra equations}\label{volterra.section}
	The following theorem is a mild generalization of Theorem~10.1 of Chapter 6 in~\cite{olver} to allow for a general inhomogeneous term on the right hand side. The presentation follows the Appendix of \cite{KGSchw1} and is repeated for the reader's convenience.
	\begin{thm}\label{thm:volterra}
		Let $\beta \in \R \cup\{ +\infty\}$ and $\alpha \in \R \cup \{-\infty\}$ with $\alpha < \beta$. 
		We consider the following Volterra integral equation for an unkown function $h(\xi)$:
		\begin{equation}\label{volt}
			h(\xi) = \int_{\xi}^{\beta} K(\xi, v) \psi_0(v) h(v) dv + R(\xi),
		\end{equation}
		Here, the complex-valued functions $\psi_0(v)$, $\psi_1(v)$ are continuous on  $(\alpha, \beta)$, the complex-valued function $R(\xi)$  and kernel $K(\xi, v)$ and their derivatives $\rd_\xi K$, $\rd^2_\xi K$, $\rd_{\xi}R$, $\rd^2_{\xi}R$ are continuous on $(\alpha, \beta)$. We also suppose that
		\begin{enumerate}
			\item $K(\xi, \xi) = 0$, 
			\item  When $\xi \in (\alpha, \beta)$ and $v \in [\xi, \beta)$, we have, for $j \in \{0,1,2\}$:
			\begin{equation}
				|\rd^j_\xi K(\xi, v)| \leq P_j(\xi) Q(v), 
			\end{equation}
			Here, $P_j(\xi)$ and $Q(v)$ are continuous, positive real-valued functions.
			\item There exists a positive decreasing function $\Phi(\xi)$ so that for $j \in \{0,1\}$:
			\[\left|\partial_{\xi}^jR\left(\xi\right)\right| \leq P_j(\xi)\Phi(\xi).\]
			\item When $\xi \in (\alpha, \beta)$, the following expressions converge:
			\begin{equation}
				\Psi_0(\xi) = \int_{\xi}^{\beta} |\psi_0(v)| dv, 
			\end{equation}
			and the following supremum is finite:
			\begin{equation}
				\kappa_0 =  \sup_{\xi \in (\alpha, \beta)} P_0(\xi) Q(\xi).
			\end{equation}
			
		\end{enumerate}
		Then, there exists a unique $C^2$ solution $h(\xi)$ to~\eqref{volt}, and we have, for $\xi \in (\alpha, \beta)$,
		\begin{equation}\label{finalvolterraestimate}
			\frac{\left|h(\xi)\right|}{ P_0(\xi)}, \frac{|\partial_{\xi}h(\xi)|}{P_1(\xi)} \leq  \Phi(\xi)\exp \Big(\kappa_0 \Psi_0(\xi)  \Big).
		\end{equation}
		
	\end{thm}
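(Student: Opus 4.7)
The plan is a standard Picard (successive approximation) argument for Volterra equations, tailored to keep track of the weight functions $P_0, P_1, Q, \Phi, \Psi_0$ that appear in the statement. The key structural feature to exploit is that the integral operator $T[h](\xi) := \int_\xi^{\beta} K(\xi,v)\psi_0(v)h(v)\,dv$ is of Volterra type with kernel pointwise controlled by $P_0(\xi)Q(v)|\psi_0(v)|$, and that $\Psi_0(\xi) = \int_\xi^\beta |\psi_0(v)|\,dv$ decreases to $0$ as $\xi \to \beta$, so iterated applications produce factors of $(\kappa_0 \Psi_0(\xi))^n/n!$.

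First, I would set up the iteration scheme by defining $h_0(\xi) := R(\xi)$ and $h_{n+1}(\xi) := T[h_n](\xi) + R(\xi)$ for $n \geq 0$, and then prove by induction the pointwise estimate
\begin{equation}\label{eq:volterra-ind}
\left| h_{n+1}(\xi) - h_n(\xi) \right| \leq P_0(\xi)\,\Phi(\xi)\, \frac{(\kappa_0 \Psi_0(\xi))^{n+1}}{(n+1)!}
\end{equation}
for all $\xi \in (\alpha,\beta)$. The base case $n = -1$ (with the convention $h_{-1} \equiv 0$) is exactly hypothesis (3) applied to $R$. For the inductive step, substituting \eqref{eq:volterra-ind} at level $n$ into the definition of $h_{n+1} - h_n = T[h_n - h_{n-1}]$ and using the kernel bound from hypothesis (2) yields
\[
|h_{n+2}(\xi) - h_{n+1}(\xi)| \leq P_0(\xi) \int_\xi^\beta Q(v)|\psi_0(v)|\, P_0(v)\Phi(v)\,\frac{(\kappa_0 \Psi_0(v))^{n+1}}{(n+1)!}\, dv;
\]
since $P_0(v)Q(v) \leq \kappa_0$, $\Phi$ is decreasing (hence $\Phi(v) \leq \Phi(\xi)$ for $v \geq \xi$), and $-\partial_v \Psi_0 = |\psi_0(v)|$, the $v$-integral evaluates to $\Phi(\xi)\,(\kappa_0\Psi_0(\xi))^{n+2}/(n+2)!$, closing the induction. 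Summing \eqref{eq:volterra-ind} over $n$ produces the absolutely convergent telescoping series defining $h$, and gives the bound on $|h|/P_0$ in~\eqref{finalvolterraestimate}.

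For the derivative estimate, I would differentiate the equation $h = T[h] + R$ under the integral sign, which is legitimate because $K(\xi,\xi) = 0$ kills the boundary contribution from the lower limit, yielding
\[
\partial_\xi h(\xi) = \int_\xi^{\beta}\partial_\xi K(\xi,v)\psi_0(v)h(v)\,dv + \partial_\xi R(\xi).
\]
Plugging in the already-established bound on $h$ and invoking the $j=1$ case of hypothesis (2) together with $\kappa_0 \Psi_0 \leq \exp(\kappa_0\Psi_0)$ gives $|\partial_\xi h(\xi)| \leq P_1(\xi)\Phi(\xi)\exp(\kappa_0 \Psi_0(\xi))$, as claimed. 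The $C^2$ regularity is obtained by differentiating once more and invoking the hypothesis that $\partial_\xi^2 K$ and $\partial_\xi^2 R$ are continuous. Uniqueness follows from applying the same iteration estimate to the difference $\tilde h$ of any two solutions: $\tilde h$ satisfies $\tilde h = T[\tilde h]$ with zero inhomogeneity, so the analogue of \eqref{eq:volterra-ind} forces $\tilde h \equiv 0$.

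There is no real obstacle here beyond careful bookkeeping; the main point requiring attention is the monotonicity argument that lets the factor $P_0(v)$ inside the integral combine with $Q(v)$ to produce a uniform $\kappa_0$, and the use of $\Phi$ decreasing so that $\Phi(v) \leq \Phi(\xi)$ can be pulled outside. The only subtlety worth flagging is that the iterates $h_n$ must be shown to be $C^2$ at each step so that the limit inherits $C^2$ regularity — this is where the second-derivative hypotheses on $K$ and $R$ enter, by an entirely analogous induction for $\partial_\xi^2 h_n$.
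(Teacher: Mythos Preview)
Your proposal is correct and is precisely the standard Picard iteration argument for Volterra equations; the paper does not actually supply a proof of this theorem but instead refers to Theorem~10.1 of Chapter~6 in Olver and to the Appendix of \cite{KGSchw1}, where exactly this successive-approximation scheme is carried out. One small imprecision: in your derivative step, the invocation of ``$\kappa_0\Psi_0 \leq \exp(\kappa_0\Psi_0)$'' is not quite the right bookkeeping --- rather, after bounding $|h(v)| \leq P_0(v)\Phi(v)\exp(\kappa_0\Psi_0(v))$ inside the integral and using $P_0(v)Q(v)\leq \kappa_0$, $\Phi(v)\leq \Phi(\xi)$, one integrates $\kappa_0\int_\xi^\beta |\psi_0(v)|\exp(\kappa_0\Psi_0(v))\,dv = \exp(\kappa_0\Psi_0(\xi)) - 1$ exactly, which combines with the $P_1(\xi)\Phi(\xi)$ from $\partial_\xi R$ to give the claimed bound on the nose.
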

	In the next theorem, we consider the case when our  equation depends on an auxiliary parameter $\Par$.
	\begin{thm}\label{thm:volterraparam}We consider again the setting of Theorem~\ref{thm:volterra} and assume $P_0$, $P_1$, $\Psi_0$, $\Psi_1$, and $\Phi$ have already been fixed in that context. We then furthermore suppose that $K$, $R$, $\psi_0$, and $\psi_1$ depend additionally on a real parameter $\Par \in (\Par_0,\Par_1)$:
		\begin{equation}\label{withtheparamsvolt}
			h(\xi,\Par) = \int_{\xi}^{\beta} K(\xi, v,\Par)\psi_0(v,\Par) h(v,\Par) dv + R(\xi,\Par).
		\end{equation}
		We now assume that the assumptions of Theorem~\ref{thm:volterra} hold uniformly for $\Par \in (\Par_0,\Par_1)$ and make the additional assumptions that $K$, $\psi_0$,  and $R$ are all $C^1$ functions of $\Par$ and that there exists functions $\tilde{P}_j$ for $j \in \{0,1\}$ and $\tilde{Q}$ so that 
		\begin{enumerate}
			\item  When $\xi \in (\alpha, \beta)$ and $v \in [\xi, \beta)$, we have, for $j \in \{0,1\}$:
			\begin{equation}
				\sup_{\Par \in (\Par_0,\Par_1)}|\rd^j_\xi K(\xi, v,\Par)| \leq \tilde{P}_j(\xi) \tilde{Q}(v), 
			\end{equation}
			Here, $ \tilde{P}_j(\xi)$ and $ \tilde{Q}(v)$ are continuous, positive real-valued functions.
			\item 
			
			There exists a decreasing function $\tilde{\Phi}(\xi)$ so that for $j \in \{0,1\}$:
			\[\sup_{\Par \in (\Par_0,\Par_1)}\left[\left|\partial_\Par \partial^j_{\xi}R\right| + \left|\tilde{R}_j\left(\xi,\Par\right)\right|\right] \leq \tilde{P}_j(\xi)\tilde{\Phi}(\xi),\]
			where we define $\tilde{R}_j$ by
			\begin{align}\label{tiderj1}
				\tilde{R}_1 \doteq \exp\left(\kappa_0\Psi_0 \right)\int_{\xi}^{\beta}\Phi\left[\left(\left|\partial_{\lambda}K\right|\left|\psi_0\right| + \left|K\right|\left|\partial_{\lambda}\psi_0\right|\right)P_0 \right]\, dv,
			\end{align}
			\begin{align}\label{tilderj2}
				\tilde{R}_2 \doteq \exp\left(\kappa_0\Psi_0 \right)\int_{\xi}^{\beta}\Phi\left[\left(\left|\partial_{\lambda}\partial_{\xi}K\right|\left|\psi_0\right| + \left|\partial_{\xi}K\right|\left|\partial_{\lambda}\psi_0\right|\right)P_0 \right]\, dv
				+ \exp\left(\kappa_0\Psi_0 \right)\Phi\left(\left|\partial_{\lambda}K\right|\left(\left|\psi_0\right|P_0 \right)\right).
			\end{align}
		\end{enumerate}
		Then $h$ is continuously differentiable with respect to $\Par$ and satisfies the following estimates:
		\begin{equation}\label{finalvolterraestimatewithaparam}
			\sup_{\Par\in (\Par_0,\Par_1)}  \frac{\left|\partial_\Par h(\xi,\Par)\right|}{ \tilde{P}_0(\xi)},  \sup_{\Par\in (\Par_0,\Par_1)}\frac{|\partial^2_{\Par \xi}h(\xi,\lambda)|}{\tilde{P}_1(\xi)} \leq  \tilde{\Phi}(\xi)\exp \Big(\kappa_0 \Psi_0(\xi)\Big).
		\end{equation}
	\end{thm}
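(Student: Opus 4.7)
The plan is to reduce Theorem~\ref{thm:volterraparam} to Theorem~\ref{thm:volterra} by differentiating the Volterra equation \eqref{withtheparamsvolt} with respect to $\Par$. Set $g(\xi,\Par) \doteq \partial_\Par h(\xi,\Par)$. Formally differentiating \eqref{withtheparamsvolt} in $\Par$ (the justification is addressed below) one obtains
\begin{equation*}
g(\xi,\Par) = \int_\xi^\beta K(\xi,v,\Par)\psi_0(v,\Par)\, g(v,\Par)\,dv + \check{R}(\xi,\Par),
\end{equation*}
where the new inhomogeneity is
\begin{equation*}
\check{R}(\xi,\Par) \doteq \partial_\Par R(\xi,\Par) + \int_\xi^\beta \bigl[\partial_\Par K(\xi,v,\Par)\,\psi_0(v,\Par) + K(\xi,v,\Par)\,\partial_\Par\psi_0(v,\Par)\bigr] h(v,\Par)\,dv.
\end{equation*}
This is again a Volterra equation for $g$, with exactly the same kernel $K\psi_0$ as before, so $\kappa_0 \Psi_0$ is unchanged and only the new right-hand side and the new bounds $\tilde{P}_j,\tilde{Q}$ come into play.

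Next I would bound $\check{R}$ and $\partial_\xi \check{R}$. The hypothesis \eqref{tiderj1} is designed precisely to match the convolution term: using the estimate \eqref{finalvolterraestimate} from Theorem~\ref{thm:volterra} (i.e., $|h(v,\Par)| \leq P_0(v)\Phi(v)\exp(\kappa_0\Psi_0(v))$) inside the integral defining $\check{R}$ gives an upper bound by exactly $\tilde{R}_1(\xi,\Par)$ plus the $\partial_\Par\partial_\xi^0 R$ piece, which by assumption is controlled by $\tilde{P}_0(\xi)\tilde{\Phi}(\xi)$. Differentiating $\check{R}$ once in $\xi$ and using $K(\xi,\xi,\Par)=0$ together with the monotonicity of $\Phi$ to pull the boundary contribution through gives the analogous bound by $\tilde{R}_2(\xi,\Par)$, again controlled by $\tilde{P}_1(\xi)\tilde{\Phi}(\xi)$; this is exactly why the second term in \eqref{tilderj2} is present. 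At this point the new system satisfies all the hypotheses of Theorem~\ref{thm:volterra} with $(P_0,P_1,\Phi)$ replaced by $(\tilde{P}_0,\tilde{P}_1,\tilde{\Phi})$, and the desired estimate \eqref{finalvolterraestimatewithaparam} follows immediately by applying that theorem to the equation for $g$.

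The main obstacle, and the only non-routine point, is justifying that $h$ is genuinely $C^1$ in $\Par$, so that $g=\partial_\Par h$ exists and actually solves the differentiated equation. I would handle this by repeating the standard Picard iteration used to construct $h$ in the proof of Theorem~\ref{thm:volterra}: define $h_0(\xi,\Par) \doteq R(\xi,\Par)$ and $h_{n+1}(\xi,\Par)\doteq R(\xi,\Par) + \int_\xi^\beta K(\xi,v,\Par)\psi_0(v,\Par) h_n(v,\Par)\,dv$. Each $h_n$ is $C^1$ in $\Par$ under the stated regularity on $K,\psi_0,R$, and $\partial_\Par h_n$ satisfies a Volterra recursion with the same kernel, so the uniform estimate \eqref{finalvolterraestimatewithaparam} (applied at each step of the iteration) produces a uniformly convergent sequence $\partial_\Par h_n \to g$. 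Standard arguments then identify $g$ with $\partial_\Par h$, and continuous differentiability of $\partial_\Par h$ with respect to $\xi$ follows from the fact that $\partial_\xi g$ satisfies its own Volterra estimate with bound $\tilde{P}_1\tilde{\Phi}\exp(\kappa_0\Psi_0)$. This completes the reduction and establishes \eqref{finalvolterraestimatewithaparam}.
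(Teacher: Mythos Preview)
Your approach is correct and is the standard one: differentiate the Volterra equation in $\lambda$, observe that $g=\partial_\lambda h$ satisfies a Volterra equation with the \emph{same} kernel $K\psi_0$ and a new inhomogeneity $\check R$ which, via the bound on $h$ from Theorem~\ref{thm:volterra}, is controlled precisely by the quantities $\tilde R_1,\tilde R_2$, then reapply Theorem~\ref{thm:volterra}; your Picard-iteration justification of the $C^1$ dependence is also the right way to make this rigorous. The paper does not give its own argument here but defers to Appendix~B of \cite{KGSchw1}, where exactly this differentiate-and-reapply strategy is carried out. One small remark: since $K(\xi,\xi,\lambda)\equiv 0$ forces $\partial_\lambda K(\xi,\xi,\lambda)=0$ as well, the boundary term in $\partial_\xi\check R$ actually vanishes, so the second summand in \eqref{tilderj2} is a harmless over-estimate rather than a genuine boundary contribution.
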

	\begin{proof}
		See \cite{KGSchw1}, Appendix B.
	\end{proof}

	\section{Bounds on sum of oscillating exponentials}\label{appendix.conj}
	\begin{defn} Let $\sigma > 0$, $T > 0$, $\delta > 0$, $N \in \mathbb{N}$, and $J \in \mathbb{N}$. We say that a smooth function $f:[N,2N]\rightarrow \R$ is in the class $\mathcal{F}\left(\sigma,T,\delta,N,J\right)$ if for all $0\leq j\leq J$: \begin{equation}\label{f.condition}
			\bigl| f^{(j+1)}(x)-  T  \frac{d^j}{dx^j}[ x^{-\sigma}] \bigr|  \leq  \delta \cdot T \bigl| \frac{d^j}{dx^j}[ x^{-\sigma}] \bigr| .
		\end{equation}
		We say the numbers $(p,q)$ with $p, q \geq 0$ are exponent pairs if for any $\sigma > 0$ there exists $\delta > 0$ and $J \in \mathbb{N}$ such that every $N,N' \in \mathbb{N}$ with $N' \in [N,2N]$ and $T \geq N^{\sigma}$, $f \in \mathcal{F}\left(\sigma,T,\delta,N,J\right)$ implies \begin{equation}\label{exp.eq}
			\bigl|	\sum_{n=N}^{N'} \exp( i f(n)) \bigr| \ls (\frac{T}{N^{\sigma}})^p N^{q}. 
		\end{equation} 
	\end{defn}
	We recall the celebrated conjecture Exponent Pair Conjecture:

	\begin{conjecture}[Exponent Pair Conjecture, \cite{NT3,Montgomery}]\label{exp.conj}
		For any $\epsilon>0$, $(\ep, \frac{1}{2}+\ep)$ is an exponent pair.
	\end{conjecture}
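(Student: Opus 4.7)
The Exponent Pair Conjecture is a celebrated open problem in analytic number theory, and a full proof is beyond the reach of current techniques; what follows is therefore an outline of the natural strategy, together with the places where it breaks down. The plan is to construct exponent pairs by iterating the two classical van der Corput processes: the $A$-process $(p,q) \mapsto \bigl(\tfrac{p}{2p+2},\,\tfrac{p+q+1}{2p+2}\bigr)$, which comes from Weyl differencing followed by Cauchy--Schwarz; and the $B$-process $(p,q) \mapsto \bigl(q-\tfrac{1}{2},\,p+\tfrac{1}{2}\bigr)$, obtained by applying Poisson summation and a stationary phase reduction to the dual sum. Starting from the trivial exponent pair $(0,1)$ and composing $A$'s and $B$'s, one generates a countable subset of the closed convex hull of admissible pairs; the conjectured pair $(\varepsilon,\tfrac12+\varepsilon)$ lies at a vertex that is the limit point of this orbit.

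First I would check that any $f \in \mathcal{F}(\sigma,T,\delta,N,J)$ satisfies the hypotheses needed to apply the $B$-process in the quantitative form of Phillips--Titchmarsh and Graham--Kolesnik: namely, that $f''$ has constant sign and is well approximated in a $C^J$ sense by the monomial $\sigma(\sigma+1)T x^{-\sigma-2}$. This lets us write the sum $\sum_{n=N}^{N'} e^{if(n)}$ as a dual sum of length $M \asymp T N^{-\sigma-1}$ with a new phase function $g$ itself in a class $\mathcal{F}(\tilde\sigma,\tilde T,\tilde\delta,M,J')$ for $\tilde\sigma = \tfrac{\sigma+2}{\sigma+1}$ (adjusted by the stationary phase computation), plus error terms that are square-summable in the relevant regime. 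Iterating this $B$-step, interleaved with $A$-steps that replace $f$ by $f(x+h)-f(x)$ and gain a factor $H^{-1/2}$ through Cauchy--Schwarz, produces a sequence of exponent pairs accumulating at $(\varepsilon,\tfrac12+\varepsilon)$. The bookkeeping would need to be carried out with the parameter $\delta$ and the order $J$ propagating carefully through each step; the class $\mathcal{F}$ is preserved under both processes provided one allows $\delta$ to grow slowly and $J$ to decrease slowly with each iteration, which explains the quantifier structure in the definition.

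The main obstacle --- and this is the reason Conjecture \ref{exp.conj} remains open --- is that no finite composition of $A$ and $B$ processes reaches the vertex $(\varepsilon, \tfrac12+\varepsilon)$: each finite word in $\{A,B\}$ only produces an interior point of the convex hull. A genuinely new ingredient is required to close the gap, and uniform control of error terms across infinitely many iterations fails by the geometry of the orbit. The state of the art is Bourgain's bound \cite{NT2}, which replaces a block of Weyl differencings by the sharp $\ell^2$-decoupling theorem for the moment curve in $\mathbb{R}^3$; this yields exponent pairs with $p = \tfrac{13}{84}+\varepsilon$ (and thus $q = \tfrac12 + \tfrac{13}{84}+\varepsilon$), which is the input we invoke to obtain Statement \ref{thm.intro.I} of Theorem \ref{thm.intro}. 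A full proof of Conjecture \ref{exp.conj} would presumably require either a decoupling inequality for higher-dimensional moment curves that is strong enough to kill \emph{all} higher moments up to $\varepsilon$-loss, or equivalently a direct proof of the Lindel\"{o}f Hypothesis $\zeta(\tfrac12+it) \ll_\varepsilon t^\varepsilon$, to which \eqref{exp.eq} with $(p,q)=(\varepsilon,\tfrac12+\varepsilon)$ is essentially equivalent via the approximate functional equation. For the purposes of the present paper we therefore do not attempt to prove Conjecture \ref{exp.conj}, but only assume it in the conditional Statement \ref{thm.intro.II}.
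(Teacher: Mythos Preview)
Your proposal is appropriate: the statement is a \emph{conjecture}, and the paper does not prove it either---it is simply stated in Appendix~\ref{appendix.conj} as an open problem and invoked only as a hypothesis for the conditional Statement~\ref{thm.intro.II}. Your recognition that no proof is possible with current techniques, and your decision to treat it as an assumption, matches exactly what the paper does.

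Your surrounding discussion of the $A$- and $B$-processes, the van der Corput orbit, and Bourgain's decoupling bound is accurate and informative background, but note that it goes well beyond what the paper itself provides: the paper merely states the conjecture and cites the literature, without any heuristic or strategic commentary on why it is hard. So there is nothing to compare here---your write-up is strictly more expository than the paper's treatment, which is fine, but you should be aware that the paper contains no ``proof'' of this statement against which yours could be measured.
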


	The next two lemmas contain our key exponential sum estimate.
	
	\begin{lemma}\label{usenumbertheory}

		Let $0\leq \ep< \frac{1}{3}$.	Assume that $(3\ep, \frac{2}{3}+\ep)$ is an exponent pair (this is in particular satisfied if  the exponent pair conjecture is true). For some $D\neq 0$, let $\psi_k(t)=D \cdot k^{\frac{2}{3}}\cdot  t^{\frac{1}{3}}  + O(1)$ which satisfies \eqref{f.condition} for $T= t^{\frac{1}{3}}$ and $\sigma=1/3$. Then for any $ N \ls t^{1/2}$: \begin{equation}
			\bigl|	\sum_{k=0}^{N}  \exp( i \psi_k(t))\bigr| \ls \log(t)t^{\epsilon}\cdot N^{2/3}.
		\end{equation}
	\end{lemma}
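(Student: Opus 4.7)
The plan is to decompose the sum dyadically in $k$ and apply the hypothesized exponent pair $(3\epsilon, 2/3 + \epsilon)$ on each dyadic block. First I would split off the initial segment $k \in [0, k_0]$ for a fixed constant $k_0 = O(1)$ (depending on $D$), which contributes at most $O(1)$ by the triangle inequality. On the remainder, I would set $K_j = k_0 \cdot 2^j$ for $j = 0, 1, \ldots, J_*$ with $J_* = \lceil \log_2(N/k_0) \rceil$, and exploit the hypothesis that $\psi_k(t) = D\, k^{2/3} t^{1/3} + O(1)$, as a function of $k$, satisfies \eqref{f.condition} with $\sigma = 1/3$ and $T \sim t^{1/3}$ on each dyadic block $[K_j, 2K_j]$.

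Then on each block the exponent-pair estimate \eqref{exp.pair.est} applies: its compatibility condition $T \gtrsim K_j^{\sigma}$ amounts to $K_j \lesssim t$, which is ensured by $K_j \leq 2N \lesssim t^{1/2} \leq t$. This yields
\[
\Big|\sum_{k = K_j}^{\min(2K_j - 1,\, N)} \exp(i \psi_k(t))\Big|
\;\lesssim\; \left(\frac{t^{1/3}}{K_j^{1/3}}\right)^{3\epsilon} K_j^{2/3+\epsilon}
\;=\; t^{\epsilon}\, K_j^{2/3}.
\]
Summing over $j$ gives a convergent geometric series,
\[
\sum_{j = 0}^{J_*} t^{\epsilon} K_j^{2/3}
\;\lesssim\; t^{\epsilon} N^{2/3} \sum_{j \geq 0} 2^{-2j/3}
\;\lesssim\; t^{\epsilon} N^{2/3},
\]
so the dyadic summation actually produces a bound slightly stronger than the claim; the factor $\log(t)$ in the statement provides cushion to absorb absolute constants and the $O(1)$ contribution from the initial segment.

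The main obstacle I anticipate is the small-$k$ regime: the regularity class $\mathcal{F}(\sigma, T, \delta, K, J)$ demands control of all derivatives up to order $J$ uniformly relative to the model monomial $x^{-1/3}$, and the derivatives $d^j x^{-1/3-j}/dx^j$ blow up as $x \to 0$, so the $O(1)$ correction in $\psi_k$ can potentially violate the relative bound $\delta T |\tfrac{d^j}{dx^j}x^{-\sigma}|$ in \eqref{f.condition} on the smallest dyadic block. The fix is to choose the initial cutoff $k_0$ large enough, depending only on $D$ and on the implicit constants in the $O(1)$ term, so that the required derivative inequalities hold on every block $[K_j, 2K_j]$ with $K_j \geq k_0$. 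Once this bookkeeping is in place, the rest of the argument reduces to a routine application of the exponent-pair hypothesis and the geometric summation above.
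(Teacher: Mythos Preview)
Your proposal is correct and matches the paper's approach exactly: the paper's proof is the single sentence ``This is an immediate application of [the exponent pair hypothesis] and a dyadic decomposition of the interval $[0,N]$,'' and you have supplied precisely those details. Your observation that the geometric sum $\sum_j K_j^{2/3}$ is dominated by its largest term (so no $\log(t)$ is actually needed) is correct and slightly sharper than the stated bound.
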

	\begin{proof}
		This is an immediate application of Conjecture~\ref{exp.conj}  and a dyadic decomposition of the interval $[0,N]$.
	\end{proof}
	Now we will be using some of the known exponent pairs  to get some moderate growth in $t$. While we have not tried to optimize the choice of exponents, we refer to the survey \cite{NT5} for more details on known exponent pairs.
	\begin{prop}  For some $D\neq 0$, let $\psi_k(t)=D \cdot k^{\frac{2}{3}}\cdot  t^{\frac{1}{3}}  + O(1)$ which satisfies \eqref{f.condition} for $T= t^{\frac{1}{3}}$ and $\sigma=1/3$. For any $ N \ls t^{1/2}$, we have for any $t\geq 1$, and any $\ep>0$
		\begin{equation}
			N^{-2/3}		\bigl|	 \sum_{k=1}^{N} \exp( i \psi_k(t)) \bigr| \ls t^{\frac{959}{22224} +\ep} \ls t^{1/23}.
		\end{equation}
	\end{prop}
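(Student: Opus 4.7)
Given the hypothesis that $\psi_k(t)$ lies in the class $\mathcal{F}(1/3, t^{1/3}, \delta, N, J)$ for suitable $\delta$ and $J$, the plan is to apply the exponent-pair method of analytic number theory combined with a dyadic decomposition in $k$. Concretely, I would select a fixed known exponent pair $(p,q)$ — a suitable iterate of the van der Corput $A$- and $B$-processes applied to Bourgain's pair $(13/84 + \epsilon', 55/84 + \epsilon')$ (see \cite{NT2,NT5}) — chosen so that
\[
\frac{p}{3(3 - 3q + p)} \;\leq\; \frac{959}{22224}.
\]
By definition of an exponent pair, this yields on any dyadic interval $[N_i, 2 N_i] \subset [1, N]$ (the range condition $T = t^{1/3} \geq N_i^{1/3}$ being automatic since $N_i \leq N \lesssim t^{1/2}$)
\[
\Bigl| \sum_{k = N_i}^{2 N_i} \exp(i \psi_k(t)) \Bigr| \;\lesssim\; (T / N_i^{1/3})^{p}\, N_i^{q} \;=\; t^{p/3}\, N_i^{q - p/3}.
\]

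Next I would decompose $[1,N]$ into $O(\log N)$ dyadic blocks. On each block I take the minimum of the exponent-pair bound above and the trivial bound $|\sum| \leq 2 N_i$. These two quantities agree at the crossover value $N_i^{*} := t^{p/(3 - 3q + p)}$; summing over blocks (two geometric series joining at $N_i^{*}$, which converge since $q > p/3$ for any reasonable pair and since trivial bound dominates for $N_i < N_i^*$) yields
\[
\Bigl| \sum_{k=1}^{N} \exp(i \psi_k(t)) \Bigr| \;\lesssim\; (\log t) \cdot \min\bigl( N, \, t^{p/3} N^{q - p/3} \bigr).
\]
Dividing through by $N^{2/3}$ and maximizing over $N \in [1, t^{1/2}]$, the worst case is attained near $N = N_i^{*}$ and equals $(N_i^{*})^{1/3} = t^{p/(3(3 - 3q + p))}$. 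For the chosen exponent pair this gives at most $(\log t) \cdot t^{959/22224}$, and absorbing the logarithm into an arbitrarily small $\epsilon$-loss yields the stated bound $t^{959/22224 + \epsilon}$; the final inequality $\lesssim t^{1/23}$ is the elementary numerical fact $959/22224 < 1/23$.

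The hard part is not conceptual but bookkeeping: one must exhibit an explicit $A$–$B$-process iterate of Bourgain's pair realizing the target value $p/(3(3-3q+p)) \leq 959/22224$, and track the loss of the $\epsilon'$-error in Bourgain's theorem into the final $\epsilon$-loss of the proposition. Conceptually, the two key inputs are the geometric summation of the dyadic estimates (which relies on $q - p/3 > 0$) and the crossover balancing between the trivial bound and the exponent-pair bound, both of which are standard. I note also that invoking the full Exponent Pair Conjecture~\ref{exp.conj} would allow $(p,q) = (\epsilon, 1/2 + \epsilon)$, giving $p/(3(3-3q+p)) \to 0$, so that conditionally $|\sum|/N^{2/3} \lesssim t^{2\epsilon}$ would follow for any $\epsilon > 0$, recovering the stronger statement already given in Lemma~\ref{usenumbertheory}.
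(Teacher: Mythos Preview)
Your dyadic-plus-exponent-pair strategy is sound, but there is a gap in reaching the precise exponent $959/22224$. With a single pair $(p,q)$ combined with the trivial bound, your crossover analysis (which also tacitly requires $q < p/3 + 2/3$ so that the maximum over $N$ sits at $N_i^{*}$ rather than at $N=t^{1/2}$) yields the exponent $p/(3(3-3q+p))$. For Bourgain's pair $(13/84,55/84)$ this equals $13/300$, which is strictly larger than $959/22224$. Applying the $A$-process drives $q$ past $p/3+2/3$ (already $A(13/84,55/84)=(13/194,76/97)$ has $q-p/3-2/3=55/582>0$) and worsens the bound, while $B$ fixes Bourgain's pair. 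So the single iterate you posit does not obviously exist, and you have not exhibited one.

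The paper obtains $959/22224$ by using \emph{two} nontrivial pairs rather than one-plus-trivial: it splits at a threshold $k=t^{\alpha}$, applies Bourgain's pair on $[t^{\alpha},N]$, and on $[1,t^{\alpha}]$ replaces the trivial bound by the near-trivial pair $(1/915+\epsilon,\,1-10/915+\epsilon)$ of \cite{NT1}; balancing the two contributions fixes $\alpha=3937/29632$ and yields the exponent $959/22224$. Your scheme is exactly the degenerate case where this second pair is taken to be $(0,1)$; it already gives $13/300<1/23$, so your argument does establish the final inequality of the proposition---and hence suffices for Statement~\ref{thm.intro.I} of Theorem~\ref{thm.intro}---but not the sharper intermediate bound as stated.
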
  
	\begin{proof}
		First, for some $0< \alpha <1/2$,  we split into two cases: \begin{enumerate}
			\item $N> \lceil t^{\alpha}\rceil$. We split the sum as such $$  \sum_{k=1}^{N} \exp(i \psi_k(t) )= \sum_{k=1}^{\lceil t^{\alpha}\rceil} \exp( i \psi_k(t)) + \sum_{k=\lceil t^{\alpha}\rceil+1}^{N} \exp( i \psi_k(t)) .$$
			
			For the first sum, a trivial bound is given by $|\sum_{k=1}^{\lceil t^{\alpha}\rceil} \exp( i \psi_k(t))| \ls t^{\alpha}$, but one can actually improve over this bound noting that $(p,q)= (\frac{1}{915}+ \ep, 1-\frac{10}{915}+\ep) $ is a exponent pair for any $\ep>0$ (see \cite{NT1}: note that strictly speaking, \cite{NT1} only proves \eqref{exp.eq} up to an extra $N/T$ term in the right-hand-side. However, in view of the fact that $N\ls T^{3/2}$ in our case,  the $N/T \lesssim N^{1/2}$ term is controlled by the main term), from which we get, for any $\ep>0$: \begin{equation}\label{osc.exp1}
				N^{-2/3}|\sum_{k=1}^{\lceil t^{\alpha}\rceil} \exp( i \psi_k(t))|\ls t^{ \frac{1+ 884\alpha}{2745}  + \ep}.
			\end{equation}
			
			Now, for the second sum, we use that is an exponent pair $(p,q)= (13/84+\ep,13/84+\frac{1}{2}+\ep)$ for any $\ep>0$  (see \cite{NT2}) from which we get
			\begin{equation}\label{osc.exp2}
				N^{-2/3}	|\sum^{N}_{k=\lceil t^{\alpha}\rceil+1} \exp( i \psi_k(t))|\ls t^{ \frac{13}{252}- \frac{4\alpha}{63} + \ep}.
			\end{equation}
			
			To optimize over the value of $\alpha$, we equate the two expressions from \eqref{osc.exp1} and  \eqref{osc.exp2} to obtain  $$t^{ \frac{13}{252}- \frac{4\alpha}{63} + \ep}=  t^{ \frac{1+ 884 \alpha}{2745}  + \ep},$$ the result of which is \begin{equation}\label{alpha.app}
				\alpha= \frac{3937}{29632}.
			\end{equation} The conclusion is that 
			\begin{equation}\label{osc.exp.F}
				N^{-2/3}	|\sum^{N}_{k=\lceil t^{\alpha}\rceil} \exp( i k^{2/3} t^{1/3})|\ls t^{ \frac{959}{22224} + \ep}< t^{1/23}.
			\end{equation}
			
			\item  $N\leq \lceil t^{\alpha}\rceil$: we can use the same  technique as for the first sum above, exploiting the fact that  $(p,q)= (\frac{1}{915}+ \ep, 1-\frac{10}{915}+\ep) $ is a exponent pair for any $\ep>0$  get, for any $\ep>0$: \begin{equation}\label{osc.exp3}
				N^{-2/3}	|\sum_{k=1}^{N} \exp( i \psi_k(t))|\ls t^{ \frac{1+ 884 \alpha}{2745}  + \ep},
			\end{equation} with $\alpha=\frac{3937}{29408}$ as above.
		\end{enumerate} 
	\end{proof}
	
	\bibliographystyle{plain}
	\bibliography{bibliography.bib}
	
\end{document}